 \newcommand{\mydriver}{pdflatex} 
\newcommand{\stkout}[1]{\ifmmode\text{\sout{\ensuremath{#1}}}\else\sout{#1}\fi}
\newcommand*{\defeq}{\mathrel{\vcenter{\baselineskip0.6ex \lineskiplimit0pt
                     \hbox{\scriptsize.}\hbox{\scriptsize.}}}%
                     =}
\newcommand{\gke}{\texttt{Gkeyll}}
\newcommand{\eqr}[1]{Eq.\thinspace(#1)}
\newcommand{\pfrac}[2]{\frac{\partial #1}{\partial #2}}
\newcommand{\pfraca}[1]{\frac{\partial}{\partial #1}}
\newcommand{\mvec}[1]{\mathbf{#1}}
\newcommand{\gvec}[1]{\boldsymbol{#1}}
\newcommand{\gz}{\nabla_{\mvec{z}}}
\newcommand{\gx}{\nabla_{\mvec{x}}}
\newcommand{\gv}{\nabla_{\mvec{v}}}
\newcommand{\dz}{\thinspace d\mvec{z}}
\newcommand{\dv}{\thinspace d\mvec{v}}
\newcommand{\dx}{\thinspace d\mvec{x}}
\newtheorem{proposition}{Proposition}
\newtheorem{corollary}{Corollary}
\newtheorem{lemma}{Lemma}
\DeclareMathOperator{\sech}{sech}
\DeclareMathOperator{\sign}{sign}
\DeclareMathOperator*{\spn}{span}
\renewcommand{\baselinestretch}{2}
\begin{document}
\pagestyle{empty}

\hbox{\ }

\renewcommand{\baselinestretch}{1}
\small \normalsize

\begin{center}
\large{{ABSTRACT}}

\vspace{3em}

\end{center}
\hspace{-.15in}
\begin{tabular}{ll}
Title of dissertation:    & {\large A Deep Dive into the Distribution Function: }\\
&                     {\large Understanding Phase Space Dynamics} \\ 
&                     {\large with Continuum Vlasov--Maxwell Simulations} \\
\ \\
&                          {\large  James Juno} \\
&                           {\large Doctor of Philosophy, 2020} \\
\ \\
Dissertation directed by: & {\large  Professor William Dorland} \\
&               {\large  Department of Physics } \\
\end{tabular}

\vspace{3em}

\renewcommand{\baselinestretch}{2}
\large \normalsize

In collisionless and weakly collisional plasmas, the particle distribution function is a rich tapestry of the underlying physics. However, actually leveraging the particle distribution function to understand the dynamics of a weakly collisional plasma is challenging. The equation system of relevance, the Vlasov--Maxwell--Fokker--Planck (VM-FP) system of equations, is difficult to numerically integrate, and traditional methods such as the particle-in-cell method introduce counting noise into the distribution function. 

In this thesis, we present a new algorithm for the discretization of VM-FP system of equations for the study of plasmas in the kinetic regime. Using the discontinuous Galerkin (DG) finite element method for the spatial discretization and a third order strong-stability preserving Runge--Kutta for the time discretization, we obtain an accurate solution for the plasma's distribution function in space and time. 

We both prove the numerical method retains key physical properties of the VM-FP system, such as the conservation of energy and the second law of thermodynamics, and demonstrate these properties numerically. These results are contextualized in the history of the DG method. We discuss the importance of the algorithm being \emph{alias-free}, a necessary condition for deriving stable DG schemes of kinetic equations so as to retain the implicit conservation relations embedded in the particle distribution function, and the computational favorable implementation using a \emph{modal, orthonormal} basis in comparison to traditional DG methods applied in computational fluid dynamics. 

A diverse array of simulations are performed which exploit the advantages of our approach over competing numerical methods. We demonstrate how the high fidelity representation of the distribution function, combined with novel diagnostics, permits detailed analysis of the energization mechanisms in fundamental plasma processes such as collisionless shocks. Likewise, we show the undesirable effect particle noise can have on both solution quality, and ease of analysis, with a study of kinetic instabilities with both our continuum VM-FP method and a particle-in-cell method. 

Our VM-FP solver is implemented in the \texttt{Gkyell} framework\footnote{https://github.com/ammarhakim/gkyl}, a modular framework for the solution to a variety of equation systems in plasma physics and fluid dynamics.  

\thispagestyle{empty}
\hbox{\ }
\vspace{1in}
\renewcommand{\baselinestretch}{1}
\small\normalsize
\begin{center}

\large{{A Deep Dive into the Distribution Function: Understanding Phase Space Dynamics using Continuum Vlasov--Maxwell Simulations}}\\
\ \\
\ \\
\large{by} \\
\ \\
\large{James Juno}
\ \\
\ \\
\ \\
\ \\
\normalsize
Dissertation submitted to the Faculty of the Graduate School of the \\
University of Maryland, College Park in partial fulfillment \\
of the requirements for the degree of \\
Doctor of Philosophy \\
2020
\end{center}

\vspace{7.5em}

\noindent Advisory Committee: \\
Professor William Dorland, Chair/Advisor \\
Dr. Jason TenBarge, Co-Advisor \\
Professor James Drake \\
Professor Adil Hassam \\
Professor Jacob Bedrossian

\thispagestyle{empty}
\hbox{\ }

\vfill
\renewcommand{\baselinestretch}{1}
\small\normalsize

\vspace{-.65in}

\begin{center}
\large{\copyright \hbox{ }Copyright by\\
James Juno  
\\
2020}
\end{center}

\vfill

\newpage

\hbox{\ }
\newpage

\pagestyle{plain} \pagenumbering{roman} \setcounter{page}{2}
\addcontentsline{toc}{chapter}{Dedication}

\renewcommand{\baselinestretch}{2}
\small\normalsize
\hbox{\ }
 
\vspace{-.65in}

\begin{center}
\large{Dedication}
\end{center} 

To the memory of my father, Jim Juno, and to my wife, Anna
\addcontentsline{toc}{chapter}{Preface}

\renewcommand{\baselinestretch}{2}
\small\normalsize
\hbox{\ }
 
\vspace{-.65in}

\begin{center}
\large{Preface} 
\end{center} 

\vspace{1ex}

This thesis was an enormous labor of love, and if you are reading it now with the intention of learning about what I, and the \gke~project, accomplished, from the bottom of my heart: thank you. 
The length of this thesis requires a preface about my goals and what I hope a reader comes away with after reading it.

At every turn, we in the \gke~project have attempted to make the code accessible and user-friendly, and I think we have broadly accomplished this goal. 
I feel blessed to have had numerous conversations with fellow graduate students, post doctoral scientists, and more senior members of our community that have found \gke~to be an excellent tool, not just in the breadth of plasma physics that can be studied, but in the ease with which they have found downloading the code, building it, and running simulations everywhere from their laptops to supercomputers.

But, there is more that can be done in making a tool accessible, especially to those just entering the field of plasma physics. 
While the equation system of interest in this thesis, the Vlasov--Maxwell--Fokker--Planck system of equations, is one of the most fundamental equation systems in all of plasma physics, it is not always the case that a budding new plasma physicist has immediate exposure to the equation system, its derivation, and the wealth of physics content within the equation system.
The few universities that offer rigorous courses in kinetic theory often break up the discussions of this equation system over the course of a full year. 
In addition, some beloved textbooks that offer clear explanations of plasma kinetic theory are out of print, such as \citet{Nicholson:1983}, and may only become harder to find with time.

I do not claim to have rigorously derived the foundations of plasma physics in this thesis.
But it is my wish to impart physical intuition about plasma kinetic theory, thinking about a many-body system like a plasma in a statistical sense, and the rich physics buried in the Vlasov--Maxwell--Fokker--Planck system of equations that ultimately made the derivation and implementation of novel numerical methods such a rewarding project. 
In this vein, I hope to proceed pedagogically through the intuition that forces us to develop kinetic theory, what kinetic theory means, and how we obtain workable equations for the physics of a plasma so that when we ultimately work to discretize the equation system and numerically integrate the discrete system to model plasma phenomena, we have a sense of what properties of the continuous system of equations we would like our discretization to respect.

This thesis is not intended as a user manual for the code, at least not if a reader's goal is to find installation instructions and assistance in building the \gke~simulation framework.
I refer an interested reader in this regard to our \texttt{GitHub}\footnote{https://github.com/ammarhakim/gkyl} and documentation website\footnote{https://gkyl.readthedocs.io/en/latest/}. 
It is the goal of this thesis to explain every aspect of our numerical method, how it works, and how we can leverage this particular algorithm to perform simulations of kinetic plasmas. 
In this way, this thesis is intended as much to be an introduction to the algorithms in the \gke~simulation framework as it is to kinetic theory, especially the difference between the mathematical formulation of an algorithm, and the translation of this algorithm to code.

I have attempted to organize this thesis in a logical fashion for an aspiring plasma physicist interested in diving into the details of plasma dynamics. 
Chapter~\ref{ch:Introduction} provides an introduction to plasma physics and kinetic theory and attempts to motivate both why we need the Vlasov--Maxwell--Fokker--Planck system of equations, and from where this equation system ultimately comes. 
Importantly, while the discussion of the Vlasov--Maxwell--Fokker--Planck system of equations may not be wholly rigorous, we will in detail work through many of the properties of the continuous system in anticipation of what properties we desire a numerical method to respect in the process of discretizing the equation system of interest. 

Chapter~\ref{ch:DGFEM} will introduce our numerical method, the discontinuous Galerkin finite element method, and attempt to build intuition for how the method works and how we can apply the method generally to partial differential equations. We will then in detail discretize the Vlasov--Maxwell--Fokker--Planck system of equations and mathematically determine the properties our discrete scheme retains from the continuous system. 
Chapter~\ref{ch:DGFEM} will form a mathematically complete description of our method, before we turn to Chapter~\ref{ch:ImplementationDGFEM}, where we will translate this mathematics into an algorithm which can be implemented in code. 
This conversion to code is equally nontrivial to the mathematical formulation of the algorithm, but it is my goal that after reading Chapter~\ref{ch:ImplementationDGFEM}, a reader may dive into \gke~with newfound understanding of how to put all the pieces together into a discrete scheme that can be used for performing numerical experiments. 

Chapter~\ref{ch:Benchmarks} will involve taxing testing of the implemented numerical method for the Vlasov--Maxwell--Fokker--Planck system of equations, and attempt to demonstrate to the reader that the scheme outlined in this thesis is on firm foundation; you may trust both that the scheme discussed in this thesis is a valid one, and that the code will work for whatever you envision doing with it. 
We will conclude in Chapter~\ref{ch:Leverage} with a number of applications of my implementation of the DG discretization of the  Vlasov--Maxwell--Fokker--Planck system of equations to demonstrate the full utility of this approach, leveraging the code to understand the details of energization processes and nonlinear plasma instabilities.

Because this thesis is intended to live beyond my graduate career, I would ask future readers that find typos or issues to contact me at my personal email: junoravin@gmail.com. 
At every stage of my career, I will attempt to keep this thesis in a state of maximum utility by updating it as necessary on the \gke~documentation website.
Readers interested in reproducing the simulations presented in this thesis can do so by running the input files available through a \texttt{GitHub} repository\footnote{https://github.com/ammarhakim/gkyl-paper-inp}.
The changesets used to produce the data are documented in the input file, and where appropriate the scripts used to produce the figures in this thesis can be found alongside the input files. 
In addition, if any readers are interested in the publications which formed the basis for this thesis, I refer them to \citet{Juno:2018}, \citet*{Hakim:2019}, \citet{HakimJuno:2020}, \citet{Juno:2020}, and \citet*{Skoutnev:2019}.

Without further ado, let us begin. I hope you ultimately find this thesis as much fun to read as I had writing it. To quote Robert Louis Stevenson, ``It is one thing to mortify curiosity, another to conquer it.''
\addcontentsline{toc}{chapter}{Acknowledgements}

\renewcommand{\baselinestretch}{2}
\small\normalsize
\hbox{\ }
 
\vspace{-.65in}

\begin{center}
\large{Acknowledgments} 
\end{center} 

\vspace{1ex}

A long thesis requires a long acknowledgements, as this thesis would not exist without the support I have received from many people over the years.

I have been blessed to have received a great deal of mentorship throughout my graduate career.
It is not possible to thank just one person for serving as an advisor to my graduate career, so let me instead thank them all. 
I am grateful to Prof. William Dorland, who took me on as his student before I even officially started at the University of Maryland and who graciously accepted my ambition for myself by allowing me to pitch him the project that this thesis became. 
Bill has been a constant source of guidance, and his own experiences in developing one of the leading computational plasma physics tools provided invaluable perspective that led to some of the most strenuous tests of \gke. 
The confidence I have in the code is borne of many discussions with him on software development and verification and validation, and I think we got some exciting science out of this project to boot. 

I would like to thank Dr. Jason TenBarge, my co-advisor, for being a constant source of inspiration and a valuable scientific confidant. 
Jason probably did not know what he was getting into having an open door policy and me as a student, but I learned so much from him throughout my graduate career. 
If I have gained a reputation for asking questions at conferences, it is only because of the sheer volume of knowledge Jason has imparted to me. 
I hope for a long career of collaborations between the two of us as I move to the next stage of my career. 
I am especially grateful to Jason and his wife Helen for their friendship through the challenges of graduate school.
I have treasured immensely the good company, good food, and board games more than I can say.

Finally, I must thank Dr. Ammar Hakim for his tutelage and guidance from the very beginning of me joining the \gke~project.
Ammar has followed and mentored my growth since I was an undergraduate summer student at the Princeton Plasma Physics Lab, and he must have known he made a strong impression when I pitched Bill on continuing to work on \gke~as a University of Maryland graduate student.
I am proud of all the work I, and the \gke~team, have accomplished. 
Serving as a coding apprentice to a computational physicist as outstanding as Ammar has been more valuable than any single course I have taken in my entire education.
I hope to only grow more as a programmer through continued collaboration with him as a member of the \gke~team.

Speaking of the \gke~team, this thesis owes a tremendous debt of gratitude to all of you. 
To the members past and present, Petr Cagas, Manaure Francisquez, Tess Bernard, Valentin Skoutnev, Noah Mandell, Eric Shi, Liang Wang, Jonathan Ng, and Rupak Mukherjee, thank you. 
I am blessed to have had such an amazing team to code alongside.
Thank you especially to Petr Cagas for all his work writing \texttt{postgkyl}, which this thesis leverages extensively.
We have accomplished so much, and I look forward to many more years of working together with you all.

Thank you to the other members of my committee, Prof. Jim Drake, Prof. Adil Hassam,  and Prof. Jacob Bedrossian.
Most especially thank you to Jim, who was generous enough to offer me a post doctoral position that I happily accepted. 
I look forward to the work we will do together.

Thank you to all of my plasma compatriots at the University of Maryland, George Wilkie, Joel Dahlin, Wrick Sengupta, Lora Price, Elizabeth Paul, Michael Martin, Gareth Roberg-Clark, Harry Arnold, Qile Zhang, Rahul Gaur, Michael Nastac, Rog\'erio Jorge, and Alessandro Geraldini. 
I would like to especially thank Elizabeth and Wrick for being such incredible and hospitable friends. 
This thesis would not exist without your support.

Every scientist needs a break from science, and I have been fortunate enough to have had the very best of friends in Sandy and Clarissa Craddock, Jonathan Vannucci, Zachary Eldredge, Katie Goff, Rodney Snyder, Molly Carpenter, Steffi Rathe, Humberto Gilmer, Carl Mitchell, Sheehan Ahmed, and Jesse Rivera. 
I remain eternally grateful for the drinks and laughs we have shared.
A special thank you to Sandy, Jon, Zach, and Rodney for your companionship through the trials of qualifying exams.

Thank you to all members of The Shed, past and present, who have come along with me on adventures at the table. 
I have never played with a better dungeon master than Zak Schooley, who rekindled my love for tabletop games.
Recording Adventures in Hyperborea with James Upton, James Wiley, and Jonathan Hill has been a singular joy for me over these past few years.
I cannot wait to see what the dice have in store for us next.

There are many others to thank.
Thank you to Dr. Gregory Hammett for assisting in supervising me with Ammar when I was but an undergraduate on the \gke~team, for continuing to support the whole \gke~project, and for being a constant source of physics insight. 
Thank you to Prof. Matt Kunz and Prof. Anatoly Spitkovsky for all of our physics discussions in my home away from home in Princeton. 
A special thank you to Matt for allowing me to sit in on his Irreversible Processes in Plasmas course so I could refresh my knowledge of plasma kinetic theory in anticipation of writing this thesis. 
Thank you to Dr. Ian Abel, who is as much a plasma compatriot as my fellow graduate students and post docs. 
I have Ian to thank for developing the intuition I have on asymptotic methods, and I am so grateful for his thoughtful responses to my long emails in the early days of my graduate career when I was still learning gyrokinetics. 
Thank you to Dr. Marc Swisdak, who graciously agreed to collaborate on work that appears in this thesis. 
I am thrilled that the post doctoral position with Jim will allow us continue collaborating more in the coming years. 
Thank you to Dr. Matt Landreman and Prof. Tom Antonsen for an uncountable number of enlightening physics discussions and for the deep, engaging questions you both asked of me during group meetings. 

Thank you to the entire Chalmer's group for being such early adopters of \gke~and seeing so much potential in us, especially T\"unde F\"ul\"op, Istv\'an Pusztai, and Andr\'eas Sundstr\"om. 
It has been an absolute pleasure to collaborate with you all. 
I recall being worried that you had caught a subtle bug for the electron Landau damping and dynamo paper, and I was ready to send you an email discouraging the use of \gke~for this project, only for Istv\'an to email within 24 hours that he had figured out a physics explanation for the code'€™s behavior.
It is no exaggeration that I am overwhelmed with excitement for our future research endeavors.

Thank you to the colleagues I have made through the Solar, Heliospheric, and INterplanetary Environment (SHINE) conference, Prof. Gregory Howes and Prof. Kristopher Klein.
You both have been so encouraging of the development of \gke~, and I am elated that your encouragement has led to multiple ongoing projects, the results of which partially appear in this thesis.
Amongst my fellow graduate students at SHINE, I am grateful to have served alongside Do\v{g}a Can Su \"Ozt\"urk, Samaiyah Farid, and Emily Lichko as SHINE Student Representatives.
I feel strongly that we made the SHINE conference a better experience for students, and I hope our initiatives are a fixture in the conference for years to come.
Special thanks to Emily for taking the time to discuss her work on magnetic pumping with me, informing a rigorous test of \gke.

Thank you to those that supported me from the very beginning.
Thank you to my family, most especially my mom, Constance Lynn, who has been an endless wellspring of support.
Thank you to the teachers who lit the spark of curiosity and nurtured the flame, Kristy Elam, Richard McGowan, and Jeff Peden in high school, and Prof. Frank Toffoletto and Prof. Anthony Chan at Rice University.

And thank you most of all to my wife, Anna Wright, who this thesis is dedicated too alongside the memory of my father. 
Anna, you are the light of my life, and words do no exist to describe the extent of your support for me, and the way you inspire me each and every day. 
This thesis has been a highly collaborative effort, and your name is right alongside all of my \gke~teammates. 
It was the greatest of privileges to have you by my side through this journey, and I look forward to what we do next together.

This work was supported by a NASA Earth and Space Science Fellowship, grant no. 80NSSC17K0428. 

\renewcommand{\baselinestretch}{1}
\small\normalsize
\tableofcontents 
\newpage
\listoftables 
\newpage
\listoffigures 
\newpage
\addcontentsline{toc}{chapter}{List of Abbreviations}

\renewcommand{\baselinestretch}{1}
\small\normalsize
\hbox{\ }

\vspace{-4em}

\begin{center}
\large{List of Abbreviations}
\end{center} 

\vspace{3pt}

\begin{supertabular}{ll}
VM-FP & Vlasov--Maxwell--Fokker--Planck (system of equations)\\
DG & discontinuous Galerkin (finite element method) \\
\end{supertabular}

\newpage
\setlength{\parskip}{0em}
\renewcommand{\baselinestretch}{2}
\small\normalsize

\setlength{\abovedisplayskip}{6pt} 
\setlength{\belowdisplayskip}{6pt}
\setcounter{page}{1}
\pagenumbering{arabic}

\renewcommand{\thechapter}{1}
\epigraph{Some of the material in this chapter has been adapted from \citet{Juno:2018}, \citet*{Hakim:2019}, and \citet{HakimJuno:2020}.}{}
\chapter{Introduction}\label{ch:Introduction}

Plasmas are ubiquitous in nature, and the study of plasmas has application to a wide variety of problems, from the development of nuclear fusion, to understanding the dynamic interaction between the solar wind and the Earth's magnetosphere, to elucidating the mysteries of astrophysical phenomena such as binary star collisions or the accretion disks of black holes. 
Unfortunately, many plasmas of interest are only weakly collisional and far from equilibrium, making the system best described by kinetic theory. 
The use of kinetic theory significantly complicates the theoretical analysis and simulation of the plasma's dynamics due to the increased dimensionality of the corresponding equations, which are solved in a combined position and velocity phase space, along with the large collection of waves and instabilities that the kinetic system supports. 

While there are many avenues for tackling the numerical solution of the kinetic equation, popular approaches such as the particle-in-cell method have deficiencies due to the counting noise inherent to the algorithm.
This noise can significantly degrade the quality of the solution, in addition to making the ultimate analysis of simulations more challenging, especially for problems requiring high signal-to-noise ratio. 
In this thesis, we outline and demonstrate the utility of an approach that directly discretizes the kinetic equation on a phase space grid. 

This approach requires care, as we must consider both the cost, since the partial differential equation is defined in a six dimensional phase space, alongside the challenges which arise from the wealth of physics buried within the equation system of interest.
For example, important conservation relations, such as the conservation of energy, are implicit to the kinetic equation, leading to additional difficulties in ensuring a discrete scheme satisfies these properties.
But this same wealth of physics contained in the kinetic equation motivates a direct discretization of the kinetic equation.
We can leverage the uncontaminated phase space from a continuum discretization to diagnose energization processes directly in phase space and carefully ascertain the nonlinear saturation mechanisms of unstable plasmas.

Some readers may be left wondering right from the beginning why the numerical solution of a plasma system is at all challenging. 
Before diving deeper into the details of the algorithm and the verification of this approach, let us take a moment to address the paradoxically simple yet subtle question of what makes plasmas so rich in their underlying physics.
We will then define some of the terminology used in this brief introduction, most importantly \emph{kinetic theory}, and how we use kinetic theory to derive a useful equation system for modeling a plasma.
This brief overview will serve as the foundation from which we will build intuition for what we want from a numerical model of a kinetic plasma, most especially the fundamental physics properties of a plasma we would like our discretization to respect.

It is the goal of this introduction to proceed in a pedagogical fashion.
We will assume no prior plasma physics knowledge, much less knowledge about the subtleties described so far concerning particle versus continuum methods.
We will connect this holistic introduction to plasma physics to these questions regarding our choice of numerical method in the final section of this introduction, Section~\ref{sec:introObjectives}, when we outline the objectives of this thesis.

\section{What is a plasma?}

Formally, a plasma is a collection of mobile, or ``free,'' charged particles. 
Collection in this case refers to the fact that a plasma is an $N$-body system, where $N \gg 1$. By mobile, or ``free,'' we mean that the particles in a plasma are not confined by inter-particle forces and the individual particles in a plasma behave similarly to a gas, as opposed to a solid or crystalline structure, albeit with the added complication of the particles being charged. 
And in this case, the fact that the particles are charged means that the particles are subject to the Lorentz force and can interact with each other via microscopic electromagnetic fields governed by Maxwell's equations.

This definition of a plasma is somewhat restrictive. 
In this case, we limit ourselves to what are commonly referred to as weakly coupled plasmas, as the mobile component of our definition implies the kinetic energy of the particles is much, much greater than the potential energy of the particles.
Likewise, we restrict our attention to plasmas which are fully ionized.

Let us be a bit more concrete about our definition of a plasma, so that we can gain more intuition for what it means to limit ourselves to this subset of so-called weakly coupled plasmas. 
Consider a gas of some number of charged species, such as a gas of protons and electrons, where each charged species has density $n_0$. 
Since $n_0$, the density, is the number of particles in a given volume, the average distance between two charged particles is roughly $n_0^{-1/3}$. 
This rough estimate for the average distance between two particles can be used to approximate the average potential energy per particle in this sample plasma,
\begin{align}
    \Phi \sim \frac{1}{4\pi \epsilon_0} \frac{e^2}{r} \sim \frac{1}{4\pi \epsilon_0} n_0^{\frac{1}{3}} e^2,
\end{align}
where $e$ is the elementary charge, i.e., the charge carried by a proton.
Likewise, we can estimate the average kinetic energy of a particle using the equi-partition theorem,
\begin{align}
    \frac{1}{2} m_s \langle v^2 \rangle \sim \frac{3}{2} k_B T_s,
\end{align}
where $m_s$ and $T_s$ are the mass and temperature of the particles of species $s$, respectively, and $\langle \cdot \rangle$ denotes an average over all particle velocities at a given point in space. Here, $k_B$ is Boltzmann's constant.

Thus, our definition of a plasma, that the average kinetic energy of the particles is much larger than the average potential energy of the particles, implies
\begin{align}
    \frac{3}{2} k_B T_s \gg \frac{1}{4\pi \epsilon_0} n_0^{\frac{1}{3}} e^2,
\end{align}
or
\begin{align}
    6 \pi n_0^{\frac{2}{3}} \left (\frac{\epsilon_0 k_B T_s}{n_0 e^2} \right ) \gg 1.
\end{align}
This expression at first glance looks somewhat unremarkable, but upon raising both sides to the $3/2$ power, and rewriting the expression in terms of a characteristic length scale of a plasma, the Debye length,
\begin{align}\label{eq:DebyeLengthDef}
    \lambda_{D_s} = \sqrt{\frac{\epsilon_0 k_B T_s}{n_0 e^2}},
\end{align}
we obtain
\begin{align} \label{eq:PlasmaDefinition}
    (6 \pi)^{\frac{3}{2}} n_0 \lambda_D^3 \gg 1.
\end{align}
Note that in the definition of the Debye length we could have a species dependent density, but since we have assumed that both the protons and electrons have the same density we have set $n_p = n_e = n_0$.

Ignoring the constant for a moment, we may gain a bit of intuition for what we have just found. $n_0 \lambda_D^3$ is the number of particles in a cube with side lengths equal to the Debye length. 
We will gain a deeper understanding of the physical significance of this expression which follows from our definition of the a plasma in the following section.

\section{The Debye length and the Plasma Parameter}\label{sec:DebyeLengthPlasmaParameter}

Because plasmas are a large collection of charged particles, inevitably, the particles will rearrange themselves in response to each other's charges. 
Consider one particular particle with a positive charge. 
Since the particle's charge is positive, the electrons in the plasma will be attracted to the particle, while the positively charged ions will be repelled, creating a local area where the density of the electrons has increased, while the density of the positively charged ions has decreased.

Without loss of generality, let us take the positively charged ions to be protons. 
Then, if the electrons have density $n_e$, and the protons have density $n_p$, Poisson's equation tells us that the electric potential for the plasma is
\begin{align}
    \nabla^2 \phi = -\frac{\rho_c}{\epsilon_0} = \frac{e}{\epsilon_0} (n_e - n_p) - q_T \delta(\mvec{r}),
\end{align}
where we denoted the charge of the particular particle as $q_P$ and used the Dirac delta function, $\delta(\mathbf{r})$, to denote the position of the particle in space.

We need to determine how the density of electrons and protons has been modified by the presence of this particular charge.
If we assume that we have waited long enough for electrons and protons to come into thermodynamic equilibrium with the particular charge, i.e., that we wait long enough that the temperature becomes a well-defined quantity, we can use equilibrium statistical mechanics.
Without insisting that the electrons and protons have the same temperature, only that we can define temperatures, the densities of the electrons and protons are given by the Boltzmann distribution,
\begin{align}
    n_e & = n_0 \exp\left( \frac{e \phi}{k_B T_e} \right), \\
    n_p & = n_0 \exp\left( \frac{-e \phi}{k_B T_p} \right),    
\end{align}
where $n_0$ is the density of the electrons and protons far away from the particular charged particle of interest, i.e., far enough away so that electric potential from the particular charged particle of interest is zero.

But, recall what we have continually reiterated from our definition of a plasma: the average potential energy of the particles is much less than the average kinetic energy. Therefore, $e \phi \ll k_B T_s$, and the exponential function can be Taylor expanded far away from $\mvec{r} = 0$, the location of the particular charged particle, so that
\begin{align}
    \nabla^2 \phi = \frac{1}{r^2}\frac{d}{dr}\left(r^2 \frac{d\phi}{dr} \right ) = \frac{e^2 n_0}{\epsilon_0 k_B} \left (\frac{1}{T_e} + \frac{1}{T_p} \right ) \phi.
\end{align}
Using \eqr{\ref{eq:DebyeLengthDef}}, we can see that the above equation simplifies to
\begin{align}
    \frac{1}{r^2}\frac{d}{dr}\left(r^2 \frac{d\phi}{dr} \right ) = \left(\frac{1}{\lambda_{D_e}^2} + \frac{1}{\lambda_{D_p}^2} \right ) \phi.
\end{align}
If one waits longer for the protons and electrons to come in to thermodynamic equilibrium with each other so the temperatures of the two species are equal, $T_e = T_p = T$, then
\begin{align}
    \left(\frac{1}{\lambda_{D_e}^2} + \frac{1}{\lambda_{D_p}^2} \right ) = \frac{2}{\lambda_D^2}.
\end{align}
The solution to this differential equation then follows from trying functions of the form $\phi = \tilde{\phi}/r$, so that
\begin{align}
    \frac{d^2 \tilde{\phi}}{dr^2} = \frac{2}{\lambda_D^2} \tilde{\phi}.
\end{align}
The only solution which respects the boundary condition that the electric potential, $\phi$, not blow up as $r \rightarrow \infty$ is a solution of the form
\begin{align}
    \phi(r) = A \exp\left( -\frac{\sqrt{2}}{\lambda_D} r \right),
\end{align}
where $A$ is a constant of integration. This constant of integration can be found by considering the boundary condition at $r=0$, where the electric potential will be dominated by the particular charged particle of interest. We know from Gauss' law that the electric potential of an individual charged particle is simply $1/(4 \pi \epsilon_0) \thinspace q_P/r$ so that the complete solution for the electric potential of an individual charged particle in a plasma is
\begin{align}
    \phi(r) = \frac{1}{4 \pi \epsilon_0} \frac{q_P}{r} \exp\left( -\frac{\sqrt{2}}{\lambda_D} r \right).\label{eq:DebyeShielding}
\end{align}

This functional form for the potential implies that the electric potential, and thus the charge, of a particle falls off much faster than just the inverse of the distance. 
It thus follows from this solution that the charged particles in a plasma rearrange themselves to cancel the charges of their neighbors, and that the characteristic length scale on which a plasma's charged particles are screened is the Debye length.

We return now to \eqr{\ref{eq:PlasmaDefinition}} with newfound understanding of the physical significance of the Debye length.
If the number of particles in a Debye cube is very large, then it becomes a bit more apparent why these plasmas are often referred to as weakly coupled. 
When the number of particles in a Debye cube is large, no individual electrostatic interaction between particles is of dynamical importance.
Because a single particle is feeling the electrostatic potential of a large number of particles in its immediate vicinity, the individual electrostatic interactions between particles are dwarfed by the accumulation of all of the electrostatic interactions.
We need not discuss the electric field one particle exerts on another; rather, what we require is the net electric field of all of the particles in a Debye cube, so that we may obtain the aggregated response of the particles in the plasma.

In this regard, we should avoid being dismissive of the individual electrostatic interactions occurring within a Debye cube in a plasma. 
It is true that the sum is greater than the individual parts in a weakly coupled plasma where there are many particles in a Debye cube.
But in this vein, we must distinguish between \emph{individual} and \emph{collective} effects.
\eqr{\ref{eq:DebyeShielding}} shows us that the electrostatic potential of an \emph{individual} particle falls off exponentially on scales larger than the Debye length, but the \emph{collective} effects of all the individual particles within the Debye cube can be of critical importance for the plasma's dynamics. 
The collective response of the plasma, on scales above and below the Debye length, is a crucial consideration in the derivation of the resulting equations of interest in the forthcoming sections. 

\section{The challenge in modeling plasmas}\label{sec:challengeModeling}
For now, let us use this discussion as a segue into the original question which galvanized defining a plasma: ``Why is modeling a plasma hard?'' 
Regardless of whether individual particle-particle interactions are important or irrelevant in a many-body plasma, it does not change the fact that the equations of motion for particles in electromagnetic fields are well-known and easy enough to solve numerically. 
So why not model all the particle-particle interactions in the many-body system?

The answer may be obvious just from the description of a plasma as a many-body system.
Since the challenge inherent in modeling a plasma can be seen even without considering the magnetic field, for readability, we will ignore the magnetic field for now and only consider the particle-particle interactions from the plasma's self-consistent electric field. 
In this case, one could evolve a single particle under the equations of motion,
\begin{align}
    \frac{d \mvec{x}_k}{dt} & = \mvec{v}_k, \label{eq:numericalCharcteristicX} \\
    \frac{d \mvec{v}_k}{dt} & = \frac{q_k}{m_k} \sum_{i=1, i \neq k}^N \mvec{E}_i, \label{eq:numericalCharcteristicV}
\end{align}
where $k$ is the label for the particle being evolved. 
One would then proceed to solve these equations for each $k = 1,\dots, N$. The electric field in this system of equations is given by
\begin{align}
    \mvec{E}_i = \frac{1}{4 \pi \epsilon_0} \frac{q_i}{(\mvec{x}_i - \mvec{x}_k)^2} \hat{\mvec{x}}_{ik}.
\end{align}
Here, $\mvec{E}_i$ is the electric field particle $i$ exerts on particle $k$, so that in this notation, $\mvec{x}_i$ is the $i^\textrm{th}$ particle's position, $\mvec{x}_k$ is the position of the particle currently being evolved, i.e. the particle the electric field $\mvec{E}_i$ is acting on, and $\hat{\mvec{x}}_{ik}$ is the unit vector pointing from $\mvec{x}_i$ to $\mvec{x}_k$.

What is the computational cost to solve these two coupled sets of ordinary differential equations? 
For each of the $N$ particles, we require at least $N-1$ operations to compute the total electric field since each particle's electric field depends on all of the other particles. 
Even if one stores the electric field of each particle so as not to recompute any particle's contribution, the computational work only reduces to a sum of the form
\begin{align}
    \sum_{i=1}^N (N-i) = \frac{N^2}{2} - \frac{N}{2}.
\end{align}

Thus, the computational complexity of such an algorithm is $\mathcal{O}(N^2)$, meaning if we double the number of particles we are evolving numerically, we quadruple the cost to compute the solution. 
In addition, this argument implies that, at minimum, this method requires on the order of $N^2$ operations to perform a single time step. 

But, modern supercomputers are already fast and will only continue to speed up with time. 
As of the completion of this thesis, we have achieved exascale computing\footnote{With the caveat that this is only for reduced precision, i.e., the supercomputers at the writing of this thesis could achieve an exaflop, $10^{18}$ floating point operations per second, if one only required single precision.}. 
Is this enough to make this approach feasible? 

We require a concrete example. 
Let us consider the ITER (International Thermonuclear Experimental Reactor) Tokamak currently being built to demonstrate the feasibility of nuclear fusion as a power source. 
According to the website for ITER (ITER \citeyear{iter}), the vacuum vessel is $840 m^3$ in volume, and the average density of the electrons in the plasma will be $\sim 10^{20} m^{-3}$. The plasma will be quasi-neutral, so a conservative estimate of the number of particles, protons, electrons, and alpha particles, inside ITER is $\sim 10^{23}$ particles. 
A single ``shot,'' or run of the experiment, is expected to last anywhere from 100 to 1000 seconds. 
So, could we model ITER through a full experimental shot, tracking every particle in the experiment?

To answer this question, we require one final piece of information: the fastest time scale in the system, so that we know how many time-steps we would require to evolve all the particles for 100 to 1000 seconds. 
The fastest time scale in a plasma can be found by considering how particles jostle about. 
We have found a characteristic length scale, the Debye length, \eqr{\ref{eq:DebyeLengthDef}}, and it is simple enough to define a characteristic velocity from the equipartition theorem
\begin{align}
    \frac{1}{2} m v_{rms}^2 \sim \frac{N}{2} k_B T,
\end{align}
where $N$ is the number of degrees of freedom. Each degree of freedom thus has root-mean-square velocity
\begin{align}
    v_{th_s} = \sqrt{\frac{k_B T_s}{m_s}}. \label{eq:thermalVelocity}
\end{align}
This speed is called the thermal velocity, and approximates the average speed of particles with temperature $T_s$ (or energy $k_B T_s$).
Note that \eqr{\ref{eq:thermalVelocity}} is typically referred to as the thermal velocity even though it is not a vector quantity, and though we will maintain this nomenclature throughout our discussion, we will attempt to minimize confusion by emphasizing \eqr{\ref{eq:thermalVelocity}} is a speed where appropriate.
The ratio of these two quantities for the electrons,
\begin{align}
    \frac{v_{th_e}}{\lambda_{D_e}} = \sqrt{\frac{e^2 n_e}{\epsilon_0 m_e}},
\end{align}
has units of inverse seconds and defines a frequency,
\begin{align}
    \omega_{pe} = \sqrt{\frac{e^2 n_e}{\epsilon_0 m_e}}. \label{eq:plasmaFrequency}
\end{align}

Although we have not demonstrated this here, this frequency is roughly the highest frequency in the system. With the Debye length as our length scale, and this frequency, $\omega_{pe}$, called the plasma frequency, setting a time scale, we have a rough estimate of the cost of numerical integration of Eqns.\thinspace\ref{eq:numericalCharcteristicX}--\ref{eq:numericalCharcteristicV}.
For the purposes of numerical integration, we wish to avoid particles moving distances greater than the Debye length on time scales shorter than the inverse plasma frequency, as this may introduce numerical instabilities into the integration of the particle orbits.

In ITER, the plasma frequency for the electrons is $\sim 10^{11}$ -- $10^{12}$ Hz, so even if our numerical method is very robust and requires only one time-step per inverse plasma frequency, we require a large number of time steps per second.
In total, assuming $N^2$ operations per time-step and $10^{14}$ time steps to model a single run of the experiment, a computer simulation of all the particle dynamics would need to do approximately $10^{60}$ floating point operations. 
If current supercomputers, even with the trade-offs in terms of floating point precision, can only perform $10^{18}$ floating point operations per second, an exa-flop, we would still require $10^{42}$ seconds of simulation time.
For reference, the universe has only existed for just over $10^{17}$ seconds, so a simulation like this requires quite a few universe lifetimes with modern computer architecture.

It is worth taking a moment to go further and try to improve this algorithm before we give up on tracking every single particle-particle interaction. 
For example, there are algorithms which would reduce the cost of computing the electric field, and thus the algorithm, from $\mathcal{O}(N^2)$ to $\mathcal{O}(N)$, by using a multipole expansion of the electrostatic potential \citep{GreengardRokhlin:1987}.
Such algorithms are commonly employed in computational cosmology for solving for the gravitational potential of a large number of dark matter particles and simulating galactic dynamics and evolution \citep{Stadel:2001}. 
Even with a multipole expansion of the electric field, the total number of operations would reduce from $10^{60}$ to only $\sim 10^{37}$, and the total time to $10^{19}$ seconds.
Unfortunately, this is not a large enough reduction, and one would have to track a lot fewer particles for a lot less time, to say nothing of the added complexity of the magnetic field acting on individual particles.
For example, it would be quite a large simulation to run on a modern supercomputer for 4 continuous months, $\sim 10^7$ seconds, so one would have to eliminate 12 orders of magnitude in some combination of the amount of time being simulated and number of particles being evolved, again to say nothing of the assumptions which made this back-of-the-envelope calculation remotely reasonable.

So, what is one to do? 
All hope is not lost for the reason we have emphasized throughout these introduction sections. 
That is, the individual particle-particle dynamics are of minimal importance in a weakly coupled plasma, and in fact what is principally important to the plasma's dynamics is its collective response to electromagnetic fields.
In other words, a weakly coupled plasma is an ideal system for which a mean-field theory may arise, one which allows for the study of the plasma of interest in a statistical sense.
We will be careful to define both what we mean by a mean-field theory and what we mean by thinking about the particle dynamics in a statistical sense in the next section.

\section{An introduction to kinetic theory}

Up until now, we have concerned ourselves with the microscopic properties of the plasma, and, as demonstrated in the previous section, this limits our ability to model the plasma. 
We would like to still respect the fact that the plasma is made of discrete particles though, and so we turn to kinetic theory.
``Kinetic'' in this case means, ``pertaining to motion,'' and kinetic theory provides the foundation to consider the motion of all of the particles in the plasma, but without the stringent requirement to track individual particle dynamics and interactions.

Consider the density of particles of species $s$, $N_s$, in a combined position and velocity space.
This density is simply a sum of Dirac delta functions denoting the individual positions and velocities of every particle in the plasma,
\begin{align}
    N_s(\mvec{x}, \mvec{v}, t) = \sum_{i=1}^{N_0} \delta(\mvec{x} - \mvec{X}_i)\delta(\mvec{v}-\mvec{V}_i),
\end{align}
where we have used capital $\mvec{X}_i$ and $\mvec{V}_i$ to specify the individual particle positions and velocities in the $\mvec{x}$--$\mvec{v}$ phase space.
The motions of the particles in this plasma in space and time are governed by the particle characteristics\footnote{Assuming the particles are traveling at velocities much less than the speed of light, $|\mvec{v}| \ll c$, so we can ignore the Lorentz boost factors, and further that the self-force due to radiation is of minimal dynamical importance.},
\begin{align}
    \frac{d \mvec{X}_i}{dt} & = \mvec{V}_i, \label{eq:physicalCharacteristicX} \\
    \frac{d \mvec{V}_i}{dt} & = \frac{q_s}{m_s} \left [\mvec{E}^m(\mvec{X}_i, t) + \mvec{V}_i \times \mvec{B}^m(\mvec{X}_i, t) \right ], \label{eq:physicalCharacteristicV}
\end{align}
similar to Eqns.\thinspace\ref{eq:numericalCharcteristicX}--\ref{eq:numericalCharcteristicV}, but with a simplified notation for the microscopic electromagnetic fields using the superscript $m$. 
We can see that change in velocity does not couple to the acceleration in such a way as to require a third set of equations for the time derivative of the acceleration of the particles and thus the two equations Eqns.\thinspace(\ref{eq:physicalCharacteristicX}) and (\ref{eq:physicalCharacteristicV}) are closed once we specify evolution equations for the electromagnetic fields.
In this case, the evolution of the electromagnetic fields is given by Maxwell's equations,
\begin{align}
  \frac{\partial \mvec{B}^m(\mvec{x}, t)}{\partial t} + \gx\times\mvec{E}^m(\mvec{x}, t) &= 0, \label{eq:micro-dbdt} \\
  \epsilon_0\mu_0\frac{\partial \mvec{E}^m(\mvec{x}, t)}{\partial t} - \gx\times\mvec{B}^m(\mvec{x}, t) &= -\mu_0\mvec{J}^m(\mvec{x}, t), \label{eq:micro-dedt} \\
  \gx\cdot\mvec{E}^m(\mvec{x}, t) &= \frac{\varrho^m_c(\mvec{x}, t)}{\epsilon_0}, \label{eq:micro-divE} \\
  \gx\cdot\mvec{B}^m(\mvec{x}, t) &= 0, \label{eq:micro-divB}
\end{align}
where the microscopic charge density and current density are given by
\begin{align}
  \varrho^m_c(\mvec{x}, t) = \sum_s  q_s \int N_s(\mvec{x}, \mvec{v}, t)\dv,
\end{align}
and
\begin{align}
  \mvec{J}^m(\mvec{x}, t) = \sum_s q_s \int \mvec{v} N_s(\mvec{x}, \mvec{v}, t)\dv,
\end{align}
respectively.

This density of particles of species $s$, $N_s$, can neither be created nor destroyed because the number of particles cannot change in time, assuming the system is closed. 
This attribute implies $N_s$ obeys a continuity equation. For a reader unfamiliar with the concept of a conservation equation, consider a quantity $f(\mvec{r}, t)$, a function of some space $\mvec{r}$ and time, which in the process of its motion in space and time can neither be created nor destroyed. Then, this quantity $f(\mvec{r}, t)$ obeys
\begin{align}
    \int_{\Omega} \pfrac{f(\mvec{r}, t)}{t} \thinspace d\mvec{r} = 0, \label{eq:simpleConservationEq}
\end{align}
where $\Omega$ is the domain the function $f(\mvec{r}, t)$ is defined in.
But this quantity $f(\mvec{r}, t)$ can still be transported throughout the domain $\Omega$.
Let us define the flux function for the function $f(\mvec{r}, t)$ as $\mvec{G}$, where $\mvec{G}$ could be as simple as a constant, or as complex as a nonlinear function of the quantity of interest, $\mvec{G} = \mvec{G}(f)$.
Then, the flux of $f(\mvec{r}, t)$ is $\mvec{G} f(\mvec{r}, t)$.
We have argued in \eqr{\ref{eq:simpleConservationEq}} that the time derivative of the integral over the whole domain of the function $f(\mvec{r}, t)$ is zero, which means that the flux of the function $f(\mvec{r}, t)$ out the boundary of the domain must also be zero,
\begin{align}
    \oint_{\partial \Omega} f(\mvec{r}, t) \mvec{G} \cdot d\mvec{S} = 0,
\end{align}
so that we can say
\begin{align}
    \int_{\Omega} \pfrac{f(\mvec{r}, t)}{t} = -\oint_{\partial \Omega} f(\mvec{r}, t) \mvec{G} \cdot d\mvec{S}.
\end{align}
But using the divergence theorem,
\begin{align}
    \oint_{\partial \Omega} f(\mvec{r}, t) \mvec{G} \cdot d\mvec{S} = \int_{\Omega} \nabla_{\mvec{r}} \cdot [ \mvec{G} f(\mvec{r}, t)] \thinspace d\mvec{r},
\end{align}
so that we can argue
\begin{align}
    \int_{\Omega} \pfrac{f(\mvec{r}, t)}{t} + \nabla_{\mvec{r}} \cdot \left [ \mvec{G} f(\mvec{r}, t) \right ] \thinspace d\mvec{r} = 0,
\end{align}
allows us to attain an evolution equation for the function $f(\mvec{r}, t)$ using the fact that the integrand itself must also be equal to zero,
\begin{align}
    \pfrac{f(\mvec{r}, t)}{t} + \nabla_{\mvec{r}} \cdot \left [ \mvec{G} f(\mvec{r}, t) \right ] = 0. \label{eq:simpleConservationDerivative}
\end{align}
Because the time rate of change of the quantity, $f(\mvec{r}, t)$, integrated over the whole domain, is zero, i.e., $f(\mvec{r}, t)$ is not appearing or disappearing over time, $f(\mvec{r}, t)$ will inevitably obey an equation of the form \eqr{\ref{eq:simpleConservationDerivative}}.

If the density of particles of species $s$, $N_s$, obeys a similar equation, what is the flux function to advect $N_s$ in the combined position-velocity phase space? 
It is simply the characteristics defined in Eqns.\thinspace\ref{eq:physicalCharacteristicX}--\ref{eq:physicalCharacteristicV}, but importantly, with a change of variables from the individual particles' physical locations and velocities to the phase space coordinates.
This change follows from the fact that $N_s$ is a function of the phase space variables $\mvec{x}$ and $\mvec{v}$, not a function of the individual particle positions.
Thus, the conservation equation governing the evolution of the density of particles of species $s$ is
\begin{align}
    \pfrac{N_s (\mvec{x}, \mvec{v}, t)}{t} + \gx & \cdot \left [ \mvec{v} N_s(\mvec{x}, \mvec{v}, t) \right ] \notag \\
    & + \gv \cdot \left \{ \frac{q_s}{m_s} \left [ \mvec{E}^m(\mvec{x}, t) + \mvec{v} \times \mvec{B}^m(\mvec{x}, t) \right ] N_s(\mvec{x}, \mvec{v}, t) \right \} = 0. \label{eq:conservativeKlimontovich}
\end{align}
This equation is more commonly known as the Klimontovich equation, or Klimontovich's equation~\citep{Klimontovich:1967,Nicholson:1983}.
Oftentimes, \eqr{\ref{eq:conservativeKlimontovich}} is rearranged to emphasize the connection between the particle characteristics, and how $N_s$ advects in phase space,
\begin{align}
     \pfrac{N_s (\mvec{x}, \mvec{v}, t)}{t} + \mvec{v} & \cdot \gx N_s(\mvec{x}, \mvec{v}, t) \notag \\
    & + \frac{q_s}{m_s} \left [ \mvec{E}^m(\mvec{x}, t) + \mvec{v} \times \mvec{B}^m(\mvec{x}, t) \right ] \cdot \gv  N_s(\mvec{x}, \mvec{v}, t) = 0, \label{eq:characteristicKlimontovich}   
\end{align}
where we have exploited the fact that
\begin{align}
    & \gx \cdot \mvec{v} = 0, \label{eq:divV} \\
    & \gv \cdot \left [ \mvec{E}^m(\mvec{x}, t) + \mvec{v} \times \mvec{B}^m(\mvec{x}, t) \right ] = 0, \label{eq:divA}
\end{align}
in the rearrangement of \eqr{\ref{eq:conservativeKlimontovich}} to \eqr{\ref{eq:characteristicKlimontovich}}. 
\eqr{\ref{eq:divV}} likely seems intuitive, the velocity coordinate $\mvec{v}$ of course does not depend on the configuration space coordinate $\mvec{x}$, and \eqr{\ref{eq:divA}} follows from properties of the cross product, in addition to the fact that the electromagnetic fields themselves do not depend on velocity.
Just as \eqr{\ref{eq:conservativeKlimontovich}} follows from the fact that the density of particles of species $s$ cannot be created or destroyed, \eqr{\ref{eq:characteristicKlimontovich}} shows that $N_s$ is constant along characteristics, i.e.,
\begin{align}
    \frac{D N_s(\mvec{x}, \mvec{v}, t)}{Dt} = 0,
\end{align}
where $D/Dt$ is a convective derivative,
\begin{align}
    \frac{D}{Dt} & = \pfrac{}{t} + \mvec{v} \cdot \gx + \frac{q_s}{m_s} \left [ \mvec{E}^m(\mvec{x}, t) + \mvec{v} \times \mvec{B}^m(\mvec{x}, t) \right ] \cdot \gv,
\end{align}
a time derivative with respect to a moving coordinate system.

The Klimontovich equation is essentially an alternative way of expressing the motion of every particle in phase space, and it suffers from the same issues discussed in Section~\ref{sec:challengeModeling}. 
We do not want to track the motion of every particle in phase space, especially if we can prioritize collective effects over individual particle-particle interactions and microscopic fields in a weakly coupled plasma.
But how does one go from the Klimontovich equation to a more suitable representation of a weakly coupled plasma's dynamics? 
How does one obtain an equation which contains the accumulated physics of the many individual particle interactions in our many-body system?

We now leverage a mathematical technique known as an ensemble average.
An ensemble average is an average over realizations of the solution, i.e., an average of the results of different initial conditions.
Imagine, if one could, solving the Klimontovich equation many times and finding with different initial conditions the collective motion of the plasma was similar while the details of the individual particle interactions varied.
A concrete example: imagine solving the Klimontovich equation repeatedly for the plasma system considered in Section~\ref{sec:DebyeLengthPlasmaParameter}. 
While the details of the relaxation to a Debye-shielded charged particle may vary from realization to realization depending on how exactly we initialize the electrons around the particular positively charged particle, we still end up at the same place: a distribution of electrons moving around a positively charged particle, shielding its charge strongly beyond this characteristic length scale of the Debye length.

So what would this mean for the collective behavior, a Debye shielded charged particle for example, to be roughly similar between different realizations of the plasma's dynamics? 
We turn now to the language of statistics to lay a solid foundation for the next derivation.
This roughly similar collective behavior is an example of the average response of the plasma to its internal, individual particle-particle, dynamics, likely with some standard deviation or variance across different realizations. 
While every realization of the Klimontovich equation is deterministic, there is also some stochasticity between different realizations. 
We now argue that a more appropriate, and ultimately more useful, way to characterize the plasma's dynamics is by focusing on this stochasticity, so as to obtain a probabilistic description of the plasma's dynamics.

We define the \emph{particle distribution function} for species $s$ as
\begin{align}
    f_s(\mvec{x}, \mvec{v}, t) = \langle N_s(\mvec{x}, \mvec{v}, t) \rangle, 
\end{align}
where $\langle \cdot \rangle$ defines the ensemble average, the average over many (formally an infinite number) realizations of the plasma. 
The particle distribution function tells us how many particles are \emph{likely} to be found in a small volume $\Delta \mvec{x} \Delta \mvec{v}$. 
Before, the density of particles of species $s$ could only take the value of 0 or 1---it was a simply a sum of Dirac delta functions for the exact location in configuration and velocity space of each particle.
We have now shifted perspective to focusing on the probability of finding a particle at a particular location in position-velocity phase space.

To obtain an equation for the evolution of the particle distribution function we ensemble average \eqr{\ref{eq:conservativeKlimontovich}}, the Klimontovich equation,
\begin{align}
    \pfrac{f_s(\mvec{x}, \mvec{v}, t) }{t} + \gx & \cdot \left [ \mvec{v} f_s(\mvec{x}, \mvec{v}, t) \right ] + \gv \cdot \left \{ \frac{q_s}{m_s} \left [ \mvec{E}(\mvec{x}, t) + \mvec{v} \times \mvec{B}(\mvec{x}, t) \right ] f_s(\mvec{x}, \mvec{v}, t) \right \} \notag \\
    & = - \left \langle \frac{q_s}{m_s} \gv \cdot \left \{ \left [\delta \mvec{E}(\mvec{x}, t) + \mvec{v} \times \delta \mvec{B}(\mvec{x}, t) \right] \delta N_s(\mvec{x}, \mvec{v}, t) \right \}  \right \rangle, \label{eq:plasmaKinetic}
\end{align}
where
\begin{align}
    \delta N_s(\mvec{x}, \mvec{v}, t) & = N_s(\mvec{x}, \mvec{v}, t) - f_s(\mvec{x}, \mvec{v}, t), \\
    \delta \mvec{E}(\mvec{x}, t) & = \mvec{E}^m(\mvec{x}, t) - \mvec{E}(\mvec{x}, t), \\
    \delta \mvec{B}(\mvec{x}, t) & = \mvec{B}^m(\mvec{x}, t) - \mvec{B}(\mvec{x}, t),
\end{align}
and we have used the shorthand $\mvec{E} = \langle \mvec{E}^m \rangle$ and $\mvec{B} = \langle \mvec{B}^m \rangle$ for the ensemble-averaged fields.
By definition, the ensemble average of the fluctuating quantities $\langle \delta N_s \rangle = \langle \delta \mvec{E} \rangle = \langle \delta \mvec{B} \rangle = 0$.
Thus, in the process of ensemble averaging the Klimontovich equation, terms proportional to $\langle N_s \delta \mvec{E} \rangle = N_s \langle \delta \mvec{E} \rangle$ and their permutations will vanish, leaving only the term which is quadratic in the fluctuating quantities.

\eqr{\ref{eq:plasmaKinetic}} is the plasma kinetic equation.
We are close to a more useful equation, as we have replaced a deterministic equation with a probabilistic equation, which will allow us to understand the plasma's collective behavior irrespective of the details of the discrete particle dynamics.
Importantly, in the process of ensemble averaging, we now have on the left hand side of \eqr{\ref{eq:plasmaKinetic}} how the plasma responds to ensemble-averaged electromagnetic fields, i.e., effective electromagnetic fields from the collective motions of the entire plasma instead of individual particle-particle electromagnetic interactions. 
But we have retained the effects of the discrete particle interactions on the right-hand side, or at least the accumulation of many discrete particle interactions.
We need one final simplification, to complete the derivation of the equation, and equation system, which is of principal interest in this thesis.

\section{Bogoliubov's Timescale Hierarchy and the\\ Vlasov--Maxwell--Fokker--Planck System of Equations}\label{sec:introKineticTheory}

To complete the probabilistic picture of a plasma, we need to know the physics of the right hand side of the plasma kinetic equation, \eqr{\ref{eq:plasmaKinetic}}. 
We have already shown in Section~\ref{sec:DebyeLengthPlasmaParameter} that the electric field from an individual particle in the plasma falls off exponentially at length scales larger than the Debye length, so we might expect the physics of these fluctuating fields to be at scales smaller than the Debye length.
Indeed, that must be the case, as the fluctuating electromagnetic fields become vanishingly small on scales larger than the Debye length, i.e., the ``microscopic'' electromagnetic fields and ensemble-averaged electromagnetic fields are indistinguishable when one is no longer considering ``microscopic'' scales. 
This justification may seem like a tautology, that once we consider length and time scales in the plasma on which collective effects arise, we no longer have to concern ourselves with these fluctuating quantities.
In fact, it can be shown that the term on the right hand side of \eqr{\ref{eq:plasmaKinetic}} scales like $\Lambda^{-1}$, the inverse of the plasma parameter,\footnote{One can see this scaling with a thought experiment. Imagine breaking an electron into an infinite number of pieces, so that $n_e \rightarrow \infty, m_e \rightarrow 0, e \rightarrow 0$ while the charge density, charge to mass ratio, and thermal velocity $n_e e, e/m_e, v_{th_e}$ all remain constant. Note that in this thought experiment, the electron temperature $T_e \rightarrow 0$ for the thermal velocity to be constant, while the electron plasma frequency and Debye length $\omega_{pe}, \lambda_D$ are both constant through the break up of the electron. Importantly, this means the plasma parameter $\Lambda = n \lambda_D^3 \rightarrow \infty$. Now, any volume, no matter how small contains an infinite number of point particles with an infinitesimal charge. Statistical mechanics tells us that the fluctuations in the density will scale like the square root of the density, $\delta N_s \sim N^{1/2} \sim \Lambda^{1/2}$, but the electromagnetic fields, for example the electric field from Poisson's equation, scales like $\delta \mvec{E} \sim e \delta N \sim N^{-1} N^{1/2} \sim N^{-1/2}$, because the charge density is constant, meaning $e \sim N^{-1}$. Thus, the right hand side of the plasma kinetic equation, \eqr{\ref{eq:plasmaKinetic}}, is constant in this thought experiment. But on the left hand side of \eqr{\ref{eq:plasmaKinetic}}, the distribution function becomes infinite in this thought experiment, $f_e \rightarrow \infty$, so the right hand side vanishes with the scaling of the left hand side, $N \sim \Lambda$. The contribution of the fluctuating fields is thus $\Lambda^{-1}$ smaller in scaling for the evolution of the particle distribution function.} 
so it is tempting to argue the the fluctuating fields contribute negligibly to the dynamics of a weakly coupled plasma where $\Lambda \gg 1$---there are many particles in a Debye cube that their individual electromagnetic interactions cannot possibly be of consequence.

But the physics of the $\sim \Lambda$ Coulomb collisions the particles are experiencing within a Debye cube is slightly more subtle.
While each individual Coulomb collision a particle experiences is a small effect, a small deviation to its trajectory, the cumulative effect of many Coulomb collisions can significantly perturb the path of the particle.
One may have to wait an exceedingly long time for the cumulative effect of many Coulomb collisions to noticeably affect the plasma's dynamics compared to the collective motion of the plasma contained in the left hand side of \eqr{\ref{eq:plasmaKinetic}}, especially given the scaling of the right hand side compared to the left hand side of $\Lambda^{-1}$.
But, wait long enough, and small deviations will accumulate to make an impact on the dynamics of these plasma particles.

How long is long enough to wait for Coloumb collisions to be of dynamical importance? 
Bogoliubov's timescale hierarchy \citep{Nicholson:1983} tells us that a plasma's dynamical evolution consists of the following stages:
\begin{enumerate}
    \item Pair correlations are established, leading to shielded Coloumb potentials on Debye scales. These correlations are established on the time scale of the inverse electron plasma frequency, \eqr{\ref{eq:plasmaFrequency}}, and once these correlations are established, for $t \omega_{pe} \gtrsim 1$, collective behavior dominates over individual particle interactions.
    \item The plasma relaxes to local thermodynamic equilibrium. We will show in the next section, Section~\ref{sec:PropertiesKineticEquation}, that this relaxation is contained in the physics of collisions, the right hand side of \eqr{\ref{eq:plasmaKinetic}}. If we define a collision frequency $\nu$, we expect the plasma to relax to local thermodynamic equilibrium on time scales $\nu t \gtrsim 1$, a much longer time scale than the plasma frequency $\nu/\omega_{pe} \sim \Lambda^{-1}$, given the scaling of the terms in \eqr{\ref{eq:plasmaKinetic}}.\footnote{We can also argue for the difference in the time scale of collisions versus the plasma frequency by estimating the size of the mean free path, the average distance a particle travels before it experiences a significant deflection due to a binary inter-particle Coulomb collision, compared to the Debye length. Here significant deflection could mean the accumulation of many small angle Coulomb collisions, i.e., small deviations due to individual electrostatic interactions, or by one large angle collision due to a close fly-by of one plasma particle of another. The mean free path can be estimated from the collisional cross section $\sigma$,
    \begin{align}
        \lambda_{mfp} \sim \frac{1}{n \sigma} \sim \frac{T^2}{n e^4}, 
    \end{align}
    where we have estimated the collisional cross section $\sigma \sim d^2$ by balancing the potential energy at a distance $d$ with the average kinetic energy of the particle, $e^2/d \sim T$. Comparing the mean free path and the Debye length, we have,
    \begin{align}
        \frac{\lambda_{mfp}}{\lambda_D} \sim  \frac{T^2}{n e^4} \sqrt{\frac{e^2 n}{T}} \sim n \lambda_D^3, 
    \end{align}
    which is the plasma parameter $\Lambda \sim n \lambda_D^3 \gg 1$. But if the mean free path is much larger than the Debye length, than considering a thermal particle moving with velocity $v_{th}$,
    \begin{align}
        \frac{v_{th}}{v_{th}} \frac{\lambda_{mfp}}{\lambda_D} = \frac{\nu}{\omega_{pe}} \sim \Lambda.
    \end{align}
    }
    \item On time scales $\nu t \gg 1$, the plasma attempts to relax to \emph{global} thermodynamic equilibrium. The plasma's boundary conditions or sources may prevent this global relaxation from occurring, but on these time scales, we would seek alternative means of describing the plasma so as to capture its transport.
\end{enumerate}
We have engaged in a small amount of circumlocution as we attempted to not get too far ahead of ourselves in a heuristic derivation of the equation system of interest.
A detailed derivation of the collisional response of the plasma, valid for all the timescales in Bogoliubov's hierarchy, is a longer calculation.
For now, we state that because collisions are the accumulation of many small effects, that the right hand side of \eqr{\ref{eq:plasmaKinetic}} must inevitably be a Fokker-Planck operator~\citep{Landau:1936,Helander:2005},
\begin{align}
    \left \langle \frac{q_s}{m_s} \gv \cdot \left ( [\delta \mvec{E} + \mvec{v} \times \delta \mvec{B} \right] \delta N_s)  \right \rangle \sim \gv \cdot  \left [ -\left (\mvec{A} f_s \right ) + \frac{1}{2} \gv \cdot \left (\overleftrightarrow{\mvec{D}} f_s \right ) \right ],\label{eq:roughFP}
\end{align}
where we have dropped the spatial dependence temporarily for notational convenience.
We note that the details of the derivation of \eqr{\ref{eq:roughFP}} can be found in Chapter 3 and Appendix A of \citet{Nicholson:1983}, where the author performs the full BBGKY (Bogoliubov-Born-Green-Kirkwood-Yvon) hierarchy to derive the equation system of interest, including the Fokker--Planck equation.

Each individual Coulomb collision has a small effect on the trajectory of a particle in a plasma, so in analogy with Brownian motion in a gas, the cumulative effect of many Coulomb collisions is a diffusive process in velocity space. 
The exact expressions for the drag coefficient, $\mvec{A}$, and the diffusion tensor, $\overleftrightarrow{\mvec{D}}$, in \eqr{\ref{eq:roughFP}} require more careful treatment, and a more in depth discussion and derivation \citep{Rosenbluth:1957}.
We choose, in this thesis, a simplified form for the drag and diffusion coefficients,
\begin{align}
    \mvec{A} & = \mvec{u} - \mvec{v}, \label{eq:DoughertyDrag} \\
    \overleftrightarrow{\mvec{D}} & = \frac{2 T}{m} \overleftrightarrow{\mvec{I}}, \label{eq:DoughertyDiffusion}
\end{align}
where $\overleftrightarrow{\mvec{I}}$ is the identity tensor.
These simplified drag and diffusion coefficients are related to the velocity moments of the particle distribution function,
\begin{align}
    \mvec{u}(\mvec{x}, t) & = \frac{\int \mvec{v} f(\mvec{x}, \mvec{v}, t) \dv}{\int f(\mvec{x}, \mvec{v}, t) \dv}, \label{eq:flowDefinition} \\
    \frac{T(\mvec{x}, t)}{m} & = \frac{1}{3} \frac{\int |\mvec{v} - \mvec{u}(\mvec{x}, t)|^2 f(\mvec{x}, \mvec{v}, t) \dv}{\int f(\mvec{x}, \mvec{v}, t) \dv}, \label{eq:temperatureDefinition}
\end{align}
where the factor of $1/3$ in the second equation follows from the fact that there are three velocity dimensions. 

You probably recognize \eqr{\ref{eq:temperatureDefinition}} as the thermal velocity squared from \eqr{\ref{eq:thermalVelocity}}.
We could use the thermal velocity squared\footnote{Note that in this definition, Boltzmann's constant has been absorbed into the temperature, $k_B T_s \rightarrow T_s$, so that the units of temperature are an energy, e.g. electron-volts or Joules.} as the diffusion coefficient, $v_{th}^2 = T/m$, but we want to emphasize the connection between the drag and diffusion coefficients and the velocity moments of the particle distribution function, so we will switch notation and define the diffusion coefficient with respect to the temperature.
While these expressions for the drag and diffusion coefficients may appear unintuitive at first glance, there is a rich history for their use as a lowest order approximation to the Fokker-Planck behavior we expect the plasma to have due to Coloumb collisions.

The ``full'' Fokker--Planck operator \citep{Rosenbluth:1957} includes additional physics, most importantly that collisions should be velocity dependent and that faster particles experience fewer collisions.
However, the solution to the complete Fokker--Planck operator is more computationally demanding.
For the purposes of the algorithms and physics presented in this thesis, the most important component of modeling collisions is that collisions are a Fokker--Planck operator modeling the drag and diffusion in velocity space particles should experience from the accumulation of many small-angle collisions.
We will continue to call this operator the Fokker--Planck operator throughout this thesis, but with these drag and diffusion coefficients, one can commonly find the names Lenard-Bernstein and Dougherty attached~\citep{Lenard:1958,Dougherty:1964}.
These particular drag and diffusion coefficients will also appear particularly inspired after we explore the properties of the kinetic equation in the next section, Section~\ref{sec:PropertiesKineticEquation}.

Let us now bring all the pieces together to describe the equation system in totality. We begin with the Vlasov--Fokker--Planck equation for the evolution of the particle distribution function for each species $s$ in phase space,
\begin{align}
    \pfrac{f_s(\mvec{x}, \mvec{v}, t) }{t} +&  \gx \cdot \left [ \mvec{v} f_s(\mvec{x}, \mvec{v}, t) \right ] + \gv \cdot \left \{ \frac{q_s}{m_s} \left [ \mvec{E}(\mvec{x}, t) + \mvec{v} \times \mvec{B}(\mvec{x}, t) \right ] f_s(\mvec{x}, \mvec{v}, t) \right \} \notag \\
    & = \nu_s \gv \cdot \left \{ [\mvec{v} - \mvec{u}_s(\mvec{x}, t)] f_s(\mvec{x}, \mvec{v}, t) + \frac{T_s(\mvec{x}, t)}{m_s} \gv f_s(\mvec{x}, \mvec{v}, t) \right \},
\end{align}
where we have added the collision frequency $\nu_s$, which will allow us to accurately characterize the contribution of the collision operator to the dynamics in comparison to the collisionless evolution from the macroscopic electromagnetic fields. 
In other words, we can pick the collision frequency $\nu_s$ to be $\Lambda^{-1}$ smaller than the electron plasma frequency, $\omega_{pe}$, as it should be.
This equation is coupled to the ensemble-averaged Maxwell's equations for the evolution of the macroscopic electromagnetic fields,
\begin{align}
  \frac{\partial \mvec{B}(\mvec{x}, t)}{\partial t} + \gx\times\mvec{E}(\mvec{x}, t) &= 0, \label{eq:dbdt} \\
  \epsilon_0\mu_0\frac{\partial \mvec{E}(\mvec{x}, t)}{\partial t} - \gx\times\mvec{B}(\mvec{x}, t) &= -\mu_0\mvec{J}(\mvec{x}, t), \label{eq:dedt} \\
  \gx\cdot\mvec{E}(\mvec{x}, t) &= \frac{\varrho_c(\mvec{x}, t)}{\epsilon_0}, \label{eq:divE} \\
  \gx\cdot\mvec{B}(\mvec{x}, t) &= 0, \label{eq:divB}
\end{align}
where the current density $\mvec{J}$ and charge density $\rho_c$ are related to velocity moments of the particle distribution function,
\begin{align}
    \rho_c(\mvec{x}, t) & = \sum_s q_s \int f_s(\mvec{x}, \mvec{v}, t)\dv, \label{eq:chargeDensity} \\
    \mvec{J}(\mvec{x}, t) & = \sum_s q_s \int \mvec{v} f_s(\mvec{x}, \mvec{v}, t)\dv. \label{eq:currentDensity}
\end{align}

Having closed the equation system with the coupling between the electromagnetic fields and the particle distribution function, the \emph{Vlasov--Maxwell--Fokker--Planck} system of equations is complete. 
This equation system forms the foundation for the theory of weakly coupled plasmas and will be the principal focus for the remainder of this thesis.
The particle distribution function contains a wealth of data, and we are strongly motivated by the veritable treasure trove of information the particle distribution function holds.
We thus want to make sure however we choose to numerically integrate the Vlasov--Maxwell--Fokker--Planck system of equations, we can still leverage the particle distribution function to understand the plasma's dynamics.

While we now have an equation system we can actually use the computer to solve, having simplified the N-body dynamics of the plasma to a probabilistic equation system in a six dimensional phase space, we must still be careful in our next steps for how we discretize the Vlasov--Maxwell--Fokker-Planck system of equations.
In this next section, we will review many of the most important properties of the Vlasov--Maxwell--Fokker-Planck system of equations.
These properties of the \emph{continuous} system of equations will help us ultimately make an informed decision in both our choice, and implementation, of the numerical method for constructing the \emph{discrete} Vlasov--Maxwell--Fokker-Planck system of equations.

\section{Properties of the Vlasov--Maxwell--Fokker--Planck \\System of Equations}\label{sec:PropertiesKineticEquation}

Before we begin this discussion of the properties of the continuous Vlasov--Maxwell--Fokker--Planck (VM-FP) system of equations, we want to simplify some of our notation for readability.
Firstly, we will separate the collisionless and collisional components of the system of equations,
\begin{align}
    \pfrac{f^{collisionless}_s}{t} & = -\gx \cdot (\mvec{v} f_s) - \gv \cdot \left [ \frac{q_s}{m_s} \left ( \mvec{E}+ \mvec{v} \times \mvec{B} \right ) f_s \right ], \label{eq:collisionlessComponent} \\
    \pfrac{f^{c}_s}{t} & = \nu_s \gv \cdot \left [ (\mvec{v} - \mvec{u}_s) f_s + \frac{T_s}{m_s} \gv f_s \right ], \label{eq:collisionalComponent} \\
    \pfrac{f_s}{t} & = \pfrac{f^{collisionless}_s}{t} + \pfrac{f^{c}_s}{t},
\end{align}
and we will drop the notation for the explicit dependence on configuration space and phase space. 
Importantly, this separation is for readability of the coming discussion of the properties of the VM-FP system of equations, and is not due to any explicit need to separate the collisionless and collisional components of the plasma's evolution.
These contributions to the plasma's dynamics are on equal footing, and we could just as easily demonstrate the properties of VM-FP system of equations holistically, but we feel this makes the subsequent discussion unnecessarily dense.
Suffice to say, if, for example both the collisionless and collisional components of the VM-FP system of equations conserve the total energy in the system, we know that together the whole system conserves the energy.


For brevity of notation, we will introduce the full phase space variable $\mvec{z} = (\mvec{x}, \mvec{v})$ so that the collisionless component of the VM-FP system of equations can be written as,
\begin{align}
    \pfrac{f^{collisionless}_s}{t} = -\gz \cdot (\gvec{\alpha}_s f_s), \label{eq:vlasovConservationEq}
\end{align}
a conservation equation in the full phase space with phase space flux,
\begin{align}
    \gvec{\alpha}_s = \left (\mvec{v}, \frac{q_s}{m_s} [\mvec{E} + \mvec{v} \times \mvec{B}] \right ).
\end{align}
We will also use the notation $K$ to define the phase space domain that the distribution function is defined on and $\Omega$ to define the configuration space domain that velocity space moments and electromagnetic fields are defined on.

We have hinted at the connection between the bulk properties of the plasma and the velocity moments of the particle distribution function. 
Both the components of the drag and diffusion coefficients, Eqns.\thinspace\ref{eq:flowDefinition}--\ref{eq:temperatureDefinition}, and the charge density and current density that couple the particle dynamics to the electromagnetic fields, Eqns.\thinspace\ref{eq:chargeDensity}--\ref{eq:currentDensity}, are defined with integrals over velocity space of the particle distribution function.
We should solidify this connection with a number of definitions which will prove critical to our discussion of the properties of the continuous VM-FP system of equations.
We will focus on the first few velocity moments,
\begin{align}
    \rho_s = m_s n_s & = m_s \int f_s \dv, \label{eq:massDefinition} \\
    \gvec{\mathcal{M}}_s = m_s n_s \mvec{u}_s & = m_s \int \mvec{v} f_s \dv, \label{eq:momentumDefinition} \\
    \mathcal{E}_s = \frac{3}{2} n_s T_s + \frac{1}{2} m_s n_s |\mvec{u}_s|^2 & = \frac{1}{2} m_s \int |\mvec{v}|^2 f_s \dv, \label{eq:energyDefinition}
\end{align}
i.e., the mass density, momentum density, and energy density of the plasma species with label $s$.

To gain intuition for why these quantities can be defined this way, recall what the particle distribution function is: the probability of finding a particle in a given volume $\Delta \mvec{x} \Delta \mvec{v}$.
Thus, if we integrate the particle distribution in velocity space, \eqr{\ref{eq:massDefinition}}, we are computing the number density of the particles (the number of particles per unit volume) at a given configuration space location, or the mass density at a given configuration space location.
For the higher velocity moments, we can make similar connections.
The velocity weighted moment, which includes a factor of the particle mass, \eqr{\ref{eq:momentumDefinition}}, tells us the amount of momentum per unit volume, the momentum density, at a particular configuration space location.
We might have also guessed this physical interpretation for \eqr{\ref{eq:momentumDefinition}} by considering what statistics tells us the first velocity moment is: the average velocity of the particles.
This same logic can be applied to \eqr{\ref{eq:energyDefinition}}; the second velocity moment, weighted by $m_s/2$, gives us the total energy density---internal, $3/2 \thinspace n_s T_s$, plus kinetic, $1/2 \thinspace m_s n_s |\mvec{u}_s|^2$---of the particles at a given configuration space location.
And, in the language of statistics, the second velocity moment is related to the spread, or standard deviation, of the particle velocities.

In both velocity space moment cases, we should be careful not to make the connection between our physical intuition and our knowledge of statistics superficial.
The average velocity is only $\mvec{u}$, not the full definition of the momentum density in \eqr{\ref{eq:momentumDefinition}}, and we have to account for this average velocity when computing the real standard deviation, i.e.,
\begin{align}
    \sigma \propto \sqrt{\int |\mvec{v} - \mvec{u}_s|^2 f_s \dv},
\end{align}
where we have used the standard notation of the variable $\sigma$ for the standard deviation.
It is not the energy density, \eqr{\ref{eq:energyDefinition}}, that is the variance of the particle distribution function.
Only the square root of the internal energy, $3/2 \thinspace n_s T_s$, will enter into the definition of the variance, because in subtracting off the average velocity we are eliminating the kinetic energy, $1/2 \thinspace m_s n_s |\mvec{u}_s|^2$, component.
If we recall our definitions for the various components of the drag and diffusion coefficients, \eqr{\ref{eq:flowDefinition}} and \eqr{\ref{eq:temperatureDefinition}}, we can make the parallels concrete, and drive home some of the intuition for our choice of simplified drag and diffusion coefficients.
The particle distribution function is centered around some velocity $\mvec{u}$, the average velocity of the particles, with some variance in velocity space quantifying the thermal spread of the particles\footnote{In the diffusion coefficient \eqr{\ref{eq:temperatureDefinition}}, $T_s/m_s$ is the standard deviation squared. The variance must have the same units as velocity and thus the actual spread in velocity space of the distribution function is $\sqrt{T_s/m_s}$, the thermal velocity, again noting that we have absorbed Boltzmann's constant into our definition of the temperature, $k_B T_s \rightarrow T_s$.}, $\sqrt{T_s/m_s}$.
Thus, we naturally have links between our physical intuition for how much of the plasma's mass, momentum, and energy is at a single physical location in configuration space and the statistical nature of the particle distribution function quantifying the probability of particles being located in a given volume $\Delta \mvec{x} \Delta \mvec{v}$.

Let us now move on to properties of the continuous VM-FP system of equations. 
Since one set of properties we wish to quantify are the conservation relations inherent to the system of equations, we will need to assume specific boundary conditions for the distribution function and electromagnetic fields.
In particular, we will assume the distribution function $f(\mvec{x},\mvec{v}\rightarrow\pm\infty, t)\rightarrow 0$ faster than the logarithmic singularity $\ln(f_s)$.
Note that in this assumption, it naturally follows that $f(\mvec{x},\mvec{v}\rightarrow\pm\infty, t)\rightarrow 0$ faster than $\mvec{v}^n$ for finite $n$.
Likewise, we will take configuration space to be either periodic or some similar self-contained boundary condition, such as a reflecting wall for $\mvec{E}, \mvec{B}$, and the distribution function at the edge of configuration space.

We wish to be rigorous at this point and prove many of these properties, but to avoid the discussion becoming overly cumbersome here in the introduction, we prove all of the forthcoming properties of the VM-FP system of equations in Appendix~\ref{app:proofsContinuous}.
Here, we will only state the properties to foreshadow the work we will do in the upcoming chapters on retaining properties of the continuous VM-FP system of equations when we discretize and numerically integrate the equation system.
We will first focus on the collisionless component, \eqr{\ref{eq:collisionlessComponent}}, of the VM-FP system of equations, often referred to as the Vlasov--Maxwell part.
The Vlasov--Maxwell system of equations has the following properties:
\begin{proposition} \label{prop:collisionlessMassConservation}
The Vlasov--Maxwell system conserves mass,
   \begin{align}
      \frac{d}{dt}\left ( m_s \int_K f_s \dz \right ) = 0.
   \end{align}
\end{proposition}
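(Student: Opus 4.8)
The plan is to work directly from the conservative form of the collisionless dynamics, \eqr{\ref{eq:vlasovConservationEq}}. Since the mass $m_s$ is a constant and the phase space domain $K$ does not depend on time, I would pull the time derivative inside the integral, substitute $\partial_t f_s = -\gz\cdot(\gvec{\alpha}_s f_s)$, and then apply the divergence theorem in the full phase space $\mvec{z} = (\mvec{x},\mvec{v})$:
\begin{align}
   \frac{d}{dt}\left( m_s \int_K f_s \dz \right)
   = m_s \int_K \pfrac{f_s}{t} \dz
   = -\, m_s \int_K \gz\cdot(\gvec{\alpha}_s f_s) \dz
   = -\, m_s \oint_{\partial K} f_s \, \gvec{\alpha}_s \cdot d\mvec{S}.
\end{align}
The entire claim then reduces to showing that this phase-space surface integral vanishes.

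The next step is to split $\partial K$ into its configuration-space portion and its velocity-space portion and handle each separately. On the configuration-space boundary, the assumed periodic (or otherwise self-contained, e.g.\ reflecting-wall) boundary conditions make the contributions cancel pairwise (periodic case) or force the net normal particle flux to be zero (reflecting case), so that piece contributes nothing. On the velocity-space boundary, i.e.\ $|\mvec{v}|\to\pm\infty$, the relevant components of the flux $\gvec{\alpha}_s = (\mvec{v}, \tfrac{q_s}{m_s}[\mvec{E}+\mvec{v}\times\mvec{B}])$ grow at most linearly in $|\mvec{v}|$, while by hypothesis $f_s$ decays faster than any finite power $\mvec{v}^n$; hence $f_s\,\gvec{\alpha}_s \to 0$ there and that piece of the surface integral also vanishes. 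Combining the two pieces gives $\frac{d}{dt}(m_s \int_K f_s \dz) = 0$.

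I expect the only real obstacle to be careful bookkeeping of the two parts of $\partial K$ — in particular being explicit that the $\mvec{v}\times\mvec{B}$ term in the velocity-space flux, being linear in $|\mvec{v}|$, is still dominated by the super-algebraic decay of $f_s$ assumed in the boundary conditions — since the configuration-space part follows immediately from the stated periodicity. No manipulation of the collision operator is needed here, as only the collisionless component \eqr{\ref{eq:collisionlessComponent}} is at issue; the collisional contribution to mass conservation is treated separately.
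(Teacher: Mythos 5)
Your proposal is correct and follows exactly the paper's own route: multiply the conservative form \eqr{\ref{eq:vlasovConservationEq}} by $m_s$, integrate over $K$, apply the divergence theorem, and kill the boundary term with the assumed boundary conditions. Your extra care in splitting $\partial K$ into its configuration- and velocity-space pieces and noting that the at-most-linear growth of $\gvec{\alpha}_s$ is dominated by the super-algebraic decay of $f_s$ is just a more explicit version of the paper's one-line appeal to the boundary conditions.
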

\begin{proposition}\label{prop:collisionlessL2}
The collisionless Vlasov--Maxwell system conserves the $L^2$ norm of the distribution function, i.e.,
   \begin{align}
      \frac{d}{dt} \left ( \frac{1}{2} \int_K f_s^2 \dz \right ) = 0.
   \end{align}
\end{proposition}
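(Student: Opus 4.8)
The plan is to exploit the conservative (phase-space divergence) form of the collisionless equation, \eqr{\ref{eq:vlasovConservationEq}}, together with the incompressibility of the phase-space flow. First I would multiply \eqr{\ref{eq:vlasovConservationEq}} by $f_s$ and integrate over the fixed phase space domain $K$, so that the left side becomes $\frac{d}{dt}\left(\frac{1}{2}\int_K f_s^2\dz\right)$ (the time derivative may be pulled outside since $K$ does not depend on $t$), while the right side is $-\int_K f_s\,\gz\cdot(\gvec{\alpha}_s f_s)\dz$.

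Next I would rewrite the integrand with the product rule,
\[
f_s\,\gz\cdot(\gvec{\alpha}_s f_s) = \gz\cdot\left(\tfrac{1}{2}\gvec{\alpha}_s f_s^2\right) + \tfrac{1}{2}f_s^2\,\gz\cdot\gvec{\alpha}_s .
\]
The key structural fact is that the phase-space flow is divergence-free: $\gz\cdot\gvec{\alpha}_s = \gx\cdot\mvec{v} + \gv\cdot\left[\frac{q_s}{m_s}(\mvec{E}+\mvec{v}\times\mvec{B})\right] = 0$, which is precisely the content of \eqr{\ref{eq:divV}} and \eqr{\ref{eq:divA}} ($\mvec{x}$ and $\mvec{v}$ are independent coordinates, and the divergence of $\mvec{v}\times\mvec{B}$ in velocity space vanishes by antisymmetry of the cross product). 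Hence the second term drops and we are left with $\frac{d}{dt}\left(\frac{1}{2}\int_K f_s^2\dz\right) = -\int_K \gz\cdot\left(\tfrac{1}{2}\gvec{\alpha}_s f_s^2\right)\dz$.

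Finally I would apply the divergence theorem to turn this into a surface integral over $\partial K$ and argue it vanishes. The boundary splits into a configuration-space piece and a velocity-space piece. On the configuration-space boundary the flux cancels because configuration space is taken periodic (opposite faces cancel) or is a self-contained/reflecting wall for which no net $f_s^2$ is transported out. On the velocity-space boundary, $|\mvec{v}|\to\infty$, the flux $\tfrac{1}{2}\gvec{\alpha}_s f_s^2$ involves the acceleration $\frac{q_s}{m_s}(\mvec{E}+\mvec{v}\times\mvec{B})$, which grows at most linearly in $\mvec{v}$, whereas the assumed boundary condition forces $f_s$, and so $f_s^2$, to decay faster than any power of $\mvec{v}$; the product therefore tends to zero. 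This yields $\frac{d}{dt}\left(\frac{1}{2}\int_K f_s^2\dz\right) = 0$.

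The main obstacle is purely the boundary-term bookkeeping: one must check that the unbounded $\mvec{v}\times\mvec{B}$ contribution to the flux is genuinely dominated by the decay of $f_s^2$, and that the reflecting-wall condition in configuration space is treated consistently with the assumed behavior of $\mvec{E}$, $\mvec{B}$ and $f_s$ at that wall. Everything else is a one-line application of the product rule and the divergence-free property of the Hamiltonian phase-space flow; indeed the identical manipulation shows $\int_K g(f_s)\dz$ is conserved for any smooth $g$, of which $g(f)=f^2/2$ here — and, later, $g(f)=f\ln f$ for the entropy — are the relevant cases.
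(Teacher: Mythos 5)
Your proposal is correct and follows essentially the same route as the paper's proof: both rest on rewriting $f_s\,\gz\cdot(\gvec{\alpha}_s f_s)$ as a total phase-space divergence via the incompressibility $\gz\cdot\gvec{\alpha}_s=0$, then killing the resulting boundary flux with the assumed decay of $f_s$ in velocity space and the periodic/self-contained configuration-space boundary. The only cosmetic difference is that the paper first integrates by parts and then recombines the volume term, whereas you apply the product rule directly to the integrand; your closing remark that the argument conserves $\int_K g(f_s)\dz$ for any smooth $g$ is exactly the mechanism the paper reuses for the entropy in Proposition~\ref{prop:collisionlessEntropy}.
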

\begin{proposition}\label{prop:collisionlessEntropy}
The collisionless Vlasov--Maxwell system conserves the entropy density $S = -f \ln(f)$ of the system\footnote{Note that it is the physicists' convention to include a minus sign in the definition of the entropy, thus making the entropy a non-decreasing quantity and the Maxwellian the maximum entropy state. The minus sign could be dropped, as is often done in the theory of hyperbolic conservation laws, and then the entropy would be a non-increasing quantity and the Maxwellian would minimize the entropy. For a discussion of the Maxwellian velocity distribution as the entropy maximizing particle distribution function, see Proposition~\ref{prop:collisionEntropy} and Corollary~\ref{coro:HTheorem}.},
\begin{align}
    \frac{d}{dt} \left [\int_K -f_s \ln(f_s) \dz \right ] = 0.
\end{align}
\end{proposition}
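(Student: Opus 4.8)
The plan is to exploit the fact that the collisionless dynamics is a pure advection in phase space along the divergence-free flow $\gvec{\alpha}_s$. First I would recall from \eqr{\ref{eq:divV}}--\eqr{\ref{eq:divA}} that $\gz\cdot\gvec{\alpha}_s = 0$, so that the conservative form \eqr{\ref{eq:vlasovConservationEq}} is equivalent to the non-conservative (characteristic) form $\spfrac{t}{f_s} + \gvec{\alpha}_s\cdot\gz f_s = 0$, which is just the statement $Df_s/Dt = 0$ established in the excerpt.

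The key observation is that the entropy density $S = -f_s\ln(f_s)$ is a pointwise function of $f_s$ alone, hence it too is constant along characteristics: $\spfrac{t}{S} + \gvec{\alpha}_s\cdot\gz S = -(\ln f_s + 1)\left(\spfrac{t}{f_s} + \gvec{\alpha}_s\cdot\gz f_s\right) = 0$. Adding the identically zero term $S\,\gz\cdot\gvec{\alpha}_s$ puts this back into conservative form, $\spfrac{t}{S} + \gz\cdot(\gvec{\alpha}_s S) = 0$, i.e.\ exactly the structure of \eqr{\ref{eq:vlasovConservationEq}} with $f_s$ replaced by $S$. Integrating over the phase space domain $K$ and applying the divergence theorem then gives $\frac{d}{dt}\int_K S\dz = -\oint_{\partial K} S\,\gvec{\alpha}_s\cdot d\mvec{S}$, and it only remains to show the surface term vanishes.

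For the surface term, on the configuration-space portion of $\partial K$ it vanishes by the periodic (or reflecting-wall) boundary conditions assumed for $f_s$, $\mvec{E}$, $\mvec{B}$; on the velocity-space portion one uses the assumed decay as $|\mvec{v}|\to\infty$. Here the velocity component of $\gvec{\alpha}_s$, namely $(q_s/m_s)(\mvec{E} + \mvec{v}\times\mvec{B})$, grows only linearly in $\mvec{v}$, whereas $S = -f_s\ln(f_s)\to 0$ precisely because $f_s$ is assumed to vanish faster than the logarithmic singularity $\ln(f_s)$, hence faster than any polynomial in $\mvec{v}$; the product therefore goes to zero and the flux through the velocity boundary is null.

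I expect the one genuine subtlety to be exactly this boundary estimate: one must verify that the \emph{entropy density}, not merely $f_s$ itself, decays rapidly enough at large $|\mvec{v}|$ to dominate the linearly growing acceleration — which is the whole reason the decay hypothesis was stated in terms of $\ln(f_s)$. Everything else (the chain-rule manipulation and the single divergence-theorem step) is routine once $\gz\cdot\gvec{\alpha}_s = 0$ is in hand. An equivalent route, if one prefers to stay in conservative form throughout, is to differentiate $\int_K -f_s\ln(f_s)\dz$ directly, substitute $\spfrac{t}{f_s} = -\gz\cdot(\gvec{\alpha}_s f_s)$, integrate by parts to obtain $-\int_K \gvec{\alpha}_s\cdot\gz f_s\dz$, and then use $\gz\cdot\gvec{\alpha}_s = 0$ to recognize this as $-\oint_{\partial K} f_s\,\gvec{\alpha}_s\cdot d\mvec{S} = 0$ — the same computation, merely reorganized.
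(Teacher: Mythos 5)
Your proposal is correct and follows essentially the same route as the paper: the chain rule turns the entropy evolution into the same divergence-free advection equation satisfied by $f_s$ (via $\gz\cdot\gvec{\alpha}_s=0$), and the divergence theorem plus the stated boundary assumptions kill the surface term. Your explicit check that $-f_s\ln(f_s)$ decays fast enough at large $|\mvec{v}|$ to beat the linearly growing acceleration is a point the paper leaves implicit (it is exactly why the hypothesis is phrased as decay faster than the logarithmic singularity), but it is the same argument.
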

\begin{proposition}\label{prop:collisionlessMomentum}
The Vlasov-Maxwell system conserves the total, particles plus fields, momentum,
   \begin{align}
      \frac{d}{dt} \left (\int_\Omega \sum_s \gvec{\mathcal{M}}_s  +  \epsilon_0 \mvec{E}\times\mvec{B} \dx \right )= 0.
   \end{align}
The first term is the total particle momentum, and the second term is the momentum carried by the electromagnetic fields.
\end{proposition}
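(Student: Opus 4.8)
The plan is to differentiate the two pieces of the total momentum separately: the particle momentum $\int_\Omega \sum_s \gvec{\mathcal{M}}_s \dx$ picks up exactly the volume-integrated Lorentz force density $\int_\Omega(\rho_c\mvec{E} + \mvec{J}\times\mvec{B})\dx$, while the electromagnetic field momentum $\int_\Omega \epsilon_0\mvec{E}\times\mvec{B}\dx$ sheds exactly that same quantity, so that the sum is constant in time. Throughout I would lean on the conservative form \eqr{\ref{eq:vlasovConservationEq}} of the Vlasov equation, on Maxwell's equations \eqr{\ref{eq:dbdt}}--\eqr{\ref{eq:divB}}, and on the stated boundary conditions ($f_s$ decaying in $\mvec{v}$ faster than any polynomial, configuration space periodic or reflecting) to discard surface terms.

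For the particle piece, I would start from
\begin{align}
\frac{d}{dt}\int_\Omega \gvec{\mathcal{M}}_s \dx = -m_s\int_K \mvec{v}\,\bigl[\gz\cdot(\gvec{\alpha}_s f_s)\bigr]\dz ,
\end{align}
and integrate by parts over $\mvec{z}=(\mvec{x},\mvec{v})$, which moves the gradient onto $\mvec{v}$ via $\gz\mvec{v}$. Since $\mvec{v}$ is independent of $\mvec{x}$, only the velocity-space block of $\gz\mvec{v}$ survives, collapsing the integrand to the acceleration $\tfrac{q_s}{m_s}(\mvec{E}+\mvec{v}\times\mvec{B})$; the accompanying boundary terms vanish, on $\partial\Omega$ by the configuration-space boundary conditions and at $|\mvec{v}|\to\infty$ by the decay of $f_s$. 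Integrating the surviving term over velocity and invoking \eqr{\ref{eq:massDefinition}}--\eqr{\ref{eq:momentumDefinition}} yields $q_s\int_\Omega(n_s\mvec{E} + n_s\mvec{u}_s\times\mvec{B})\dx$; summing over species and using \eqr{\ref{eq:chargeDensity}}--\eqr{\ref{eq:currentDensity}} gives the Lorentz force density $\int_\Omega(\rho_c\mvec{E} + \mvec{J}\times\mvec{B})\dx$.

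For the field piece, I would differentiate $\epsilon_0\mvec{E}\times\mvec{B}$, substitute $\partial_t\mvec{B}$ and $\partial_t\mvec{E}$ from \eqr{\ref{eq:dbdt}} and \eqr{\ref{eq:dedt}}, and collect a $-\mvec{J}\times\mvec{B}$ term together with $-\tfrac{1}{\mu_0}\mvec{B}\times(\gx\times\mvec{B}) - \epsilon_0\mvec{E}\times(\gx\times\mvec{E})$. Applying the identity $\mvec{A}\times(\gx\times\mvec{A}) = \tfrac12\gx|\mvec{A}|^2 - (\mvec{A}\cdot\gx)\mvec{A}$ to each of $\mvec{E}$ and $\mvec{B}$, rewriting $(\mvec{A}\cdot\gx)\mvec{A} = \gx\cdot(\mvec{A}\otimes\mvec{A}) - \mvec{A}(\gx\cdot\mvec{A})$, and using $\gx\cdot\mvec{B}=0$ from \eqr{\ref{eq:divB}} and $\gx\cdot\mvec{E}=\rho_c/\epsilon_0$ from \eqr{\ref{eq:divE}}, everything regroups into $-(\rho_c\mvec{E}+\mvec{J}\times\mvec{B}) + \gx\cdot\overleftrightarrow{\mvec{T}}$, where $\overleftrightarrow{\mvec{T}} = \epsilon_0\mvec{E}\otimes\mvec{E} + \tfrac{1}{\mu_0}\mvec{B}\otimes\mvec{B} - \tfrac12\bigl(\epsilon_0|\mvec{E}|^2 + \tfrac{1}{\mu_0}|\mvec{B}|^2\bigr)\overleftrightarrow{\mvec{I}}$ is the Maxwell stress tensor. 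Integrating over $\Omega$, the stress-tensor divergence becomes a $\partial\Omega$ surface integral that vanishes, so $\tfrac{d}{dt}\int_\Omega \epsilon_0\mvec{E}\times\mvec{B}\dx = -\int_\Omega(\rho_c\mvec{E}+\mvec{J}\times\mvec{B})\dx$. Adding the two contributions cancels the Lorentz force density and gives the claim.

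The routine but error-prone heart of the argument is the field-momentum step: the sign bookkeeping in turning $\mvec{E}\times(\gx\times\mvec{E})$ and $\mvec{B}\times(\gx\times\mvec{B})$ into divergence-of-stress-tensor form, and the care needed to confirm that \emph{all} the discarded terms --- the velocity-space boundary terms on the particle side and the Maxwell-stress surface integral on the field side --- genuinely vanish under periodic or reflecting boundary conditions. Once these are pinned down, the cancellation between the two pieces is automatic.
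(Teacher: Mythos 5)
Your proposal is correct and follows essentially the same route as the paper's proof: the particle momentum's time derivative is reduced to the volume-integrated Lorentz force density $\int_\Omega(\rho_c\mvec{E}+\mvec{J}\times\mvec{B})\dx$ via integration by parts in phase space, and Maxwell's equations together with the divergence constraints recast the field-momentum evolution as minus that same Lorentz force density plus a divergence of the Maxwell stress tensor, which integrates to a vanishing surface term. The only cosmetic difference is that you track the two time derivatives separately and cancel them at the end, while the paper substitutes the field-momentum identity directly into the particle-momentum balance; the vector-identity bookkeeping is identical.
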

\begin{proposition}\label{prop:collisionlessEnergyConservation}
The Vlasov-Maxwell system conserves the total, particles plus fields, energy,
   \begin{align}
      \frac{d}{dt} \left (\int_\Omega \sum_s \mathcal{E}_s+ \frac{\epsilon_0}{2} |\mvec{E}|^2 + \frac{1}{2 \mu_0}|\mvec{B}|^2 \dx \right )= 0.
   \end{align}
The first term is the total particle energy, and the second two terms are the energy contained in the electromagnetic fields.
\end{proposition}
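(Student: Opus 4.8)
The plan is to differentiate the two energy contributions separately, show that the rate of change of the total particle energy equals the Joule heating $\int_\Omega \mvec{E}\cdot\mvec{J}\dx$, show that the rate of change of the electromagnetic field energy equals $-\int_\Omega \mvec{E}\cdot\mvec{J}\dx$ up to a surface term that vanishes, and conclude that the two cancel. This is essentially the classical Poynting theorem fused with the moment structure of the Vlasov equation.

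First I would handle the particle energy. Writing $\mathcal{E}_s = \tfrac12 m_s \int |\mvec{v}|^2 f_s \dv$ as in \eqr{\ref{eq:energyDefinition}}, I bring the time derivative inside the integrals and substitute the collisionless evolution equation \eqr{\ref{eq:vlasovConservationEq}}, $\pfrac{f_s}{t} = -\gz\cdot(\gvec{\alpha}_s f_s)$. Integrating by parts over the phase space domain $K$ and discarding the boundary term -- which vanishes since $f_s\to 0$ faster than any power of $\mvec{v}$ and configuration space is periodic or bounded by a reflecting wall -- leaves $\sum_s \tfrac12 m_s \int_K f_s\, \gvec{\alpha}_s \cdot \gz |\mvec{v}|^2 \dz$. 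Since $\gz|\mvec{v}|^2$ has no configuration-space component and its velocity component is $2\mvec{v}$, and since the magnetic part of $\gvec{\alpha}_s$ is orthogonal to $\mvec{v}$ because $(\mvec{v}\times\mvec{B})\cdot\mvec{v}=0$, only the electric piece survives. This collapses to $\sum_s q_s \int_K \mvec{v}\cdot\mvec{E}\, f_s \dz = \int_\Omega \mvec{E}\cdot\mvec{J}\dx$ once I recognize the current density \eqr{\ref{eq:currentDensity}}.

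Next I would handle the field energy. Differentiating $\tfrac{\epsilon_0}{2}|\mvec{E}|^2 + \tfrac{1}{2\mu_0}|\mvec{B}|^2$ and substituting Faraday's and Ampere's laws \eqr{\ref{eq:dbdt}}--\eqr{\ref{eq:dedt}} yields, besides a $-\mvec{E}\cdot\mvec{J}$ term, the combination $\tfrac{1}{\mu_0}\left[\mvec{E}\cdot(\gx\times\mvec{B}) - \mvec{B}\cdot(\gx\times\mvec{E})\right]$. Using the identity $\gx\cdot(\mvec{E}\times\mvec{B}) = \mvec{B}\cdot(\gx\times\mvec{E}) - \mvec{E}\cdot(\gx\times\mvec{B})$, this is a total divergence, so its integral over $\Omega$ is the Poynting flux through $\partial\Omega$, which vanishes for periodic or reflecting boundaries. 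What remains is exactly $-\int_\Omega \mvec{E}\cdot\mvec{J}\dx$, and adding this to the particle contribution gives zero.

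The bookkeeping is routine; the only places that need care are (i) justifying that every surface term -- the phase-space flux of $|\mvec{v}|^2 f_s$ out of $K$ and the Poynting flux out of $\Omega$ -- genuinely vanishes under the boundary conditions stated just before the proposition, and (ii) tracking the two algebraic facts $(\mvec{v}\times\mvec{B})\cdot\mvec{v}=0$ and $\gz|\mvec{v}|^2 = (\mvec{0},2\mvec{v})$, which are what funnel all the work into the single term $\mvec{E}\cdot\mvec{J}$. I expect the main conceptual point -- and the one worth emphasizing -- to be the exact cancellation of $\int_\Omega \mvec{E}\cdot\mvec{J}\dx$ between particles and fields, so I would structure the argument to make that cancellation the punchline.
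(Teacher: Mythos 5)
Your proposal is correct and follows essentially the same route as the paper's proof: weight the Vlasov equation by $\tfrac{1}{2}m_s|\mvec{v}|^2$, integrate by parts in phase space using $(\mvec{v}\times\mvec{B})\cdot\mvec{v}=0$ to reduce the particle term to $\int_\Omega\mvec{J}\cdot\mvec{E}\dx$, then derive Poynting's theorem from Faraday's and Ampere's laws and cancel. The only cosmetic difference is that you integrate the full phase-space divergence by parts at once while the paper treats the configuration-space and velocity-space pieces separately, which changes nothing of substance.
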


So, the collisionless Vlasov--Maxwell system of equations conserves mass, total momentum, and total energy, and additionally the entropy of the particles is unchanged by the collisionless component of the VM-FP system of equations.
The latter property of entropy conservation in the collisionless system naturally leads us to a discussion of collisions.
We alluded to the effect collisions would have on the thermodynamics of the plasma with Bogoliubov's timescale hierarchy in Section~\ref{sec:introKineticTheory}.
We will now make the connection concrete with a discussion of the properties of the Fokker-Planck collision operator, \eqr{\ref{eq:collisionalComponent}}.
We first focus on the conservation properties of the Fokker-Planck collision operator, and then we will discuss the effect of the collision operator on the thermodynamics of the system.
As with our discussion of the collisionless Vlasov-Maxwell system of equations, the proofs for the properties of the continuous Fokker--Planck collision operator can be found in Appendix~\ref{app:proofsContinuous}.
\begin{proposition}\label{prop:collisionMassConservation}
The Fokker--Planck equation conserves mass,
   \begin{align}
      \frac{d}{dt}\left ( m_s \int_K f^c_s \dz \right ) = 0.
   \end{align}
\end{proposition}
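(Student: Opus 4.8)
The plan is to exploit the fact that the collisional update, \eqr{\ref{eq:collisionalComponent}}, is a pure divergence in velocity space. First I would differentiate under the integral sign (the domain $K$ is fixed in time, and the assumed decay of $f_s$ gives enough regularity to exchange $d/dt$ with $\int_K$) and substitute the Fokker--Planck operator:
\begin{align}
\frac{d}{dt}\left( m_s \int_K f^c_s \dz \right) = m_s \nu_s \int_K \gv \cdot \left[ (\mvec{v} - \mvec{u}_s) f_s + \frac{T_s}{m_s} \gv f_s \right] \dz .
\end{align}
The entire integrand on the right is a velocity-space divergence, so the heart of the argument is simply that the integral of a divergence over all of velocity space is a boundary term.

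Next I would write the phase-space integral as an iterated integral over configuration space $\Omega$ and over all of velocity space, and apply the divergence theorem to the inner velocity integral on the ball $|\mvec{v}| \le R$:
\begin{align}
\int_{|\mvec{v}| \le R} \gv \cdot \left[ (\mvec{v} - \mvec{u}_s) f_s + \frac{T_s}{m_s} \gv f_s \right] \dv = \oint_{|\mvec{v}| = R} \left[ (\mvec{v} - \mvec{u}_s) f_s + \frac{T_s}{m_s} \gv f_s \right] \cdot \hat{\mvec{n}} \, dS .
\end{align}
The fluid quantities $\mvec{u}_s$ and $T_s/m_s$ are velocity-space moments of $f_s$, hence functions of $\mvec{x}$ and $t$ only and finite by the assumed moment integrability, so the surface integrand grows at most polynomially in $|\mvec{v}|$ times $f_s$ (or $\gv f_s$). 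Since the sphere's area grows like $R^2$ while $f_s$ and $\gv f_s$ decay faster than any power of $|\mvec{v}|$ by the stated boundary conditions, the surface term vanishes as $R \to \infty$. Thus the velocity integral is zero for every $\mvec{x} \in \Omega$, and integrating this identically zero quantity over $\Omega$ yields $\frac{d}{dt}\left( m_s \int_K f^c_s \dz \right) = 0$.

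The only real subtlety — hardly an obstacle given what has already been assumed — is making the decay estimate at $|\mvec{v}| \to \infty$ precise: one needs not just $f_s$ but also $\gv f_s$ to decay fast enough to kill the diffusive flux contribution, which follows from the assumption that $f_s(\mvec{x},\mvec{v}\to\pm\infty,t)\to 0$ faster than $\mvec{v}^n$ for finite $n$. Once that is granted, the proof is a one-line application of the divergence theorem, and the same template (test against $1$, against $\mvec{v}$, against $|\mvec{v}|^2$, against $\ln f_s$) will recur for the momentum, energy, and entropy properties that follow, so I would set up the boundary bookkeeping carefully here.
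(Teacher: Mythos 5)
Your proposal is correct and follows essentially the same route as the paper: multiply the Fokker--Planck operator by $m_s$, integrate over phase space, recognize the integrand as a pure velocity-space divergence, and kill the resulting surface term using the assumed decay of $f_s$ (and $\gv f_s$) at large $|\mvec{v}|$. The only difference is cosmetic --- you make the limiting argument on a ball $|\mvec{v}|\le R$ explicit where the paper simply invokes the velocity-space boundary conditions on $\oint_{\partial K}$, and you both note the conservation is local in $\mvec{x}$ since the operator contains only velocity derivatives.
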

\begin{proposition}\label{prop:collisionMomentumConservation}
The Fokker--Planck equation conserves the particle momentum,
   \begin{align}
      \frac{d}{dt} \left (\int_K m_s \mvec{v} f_s^c \dz \right )= 0.
   \end{align}
\end{proposition}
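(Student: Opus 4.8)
The plan is to differentiate the momentum density under the integral sign, substitute the collisional evolution equation \eqr{\ref{eq:collisionalComponent}} for the time derivative of $f_s$ (the $f_s^c$ of the statement), and then integrate by parts once in velocity space so the $\gv$ acting through $\gv\cdot(\cdots)$ peels off onto the polynomial weight $\mvec{v}$ rather than onto $f_s$. The surface term at $|\mvec{v}|\to\infty$ drops by the assumed decay of $f_s$ and $\gv f_s$ faster than any power of $\mvec{v}$, and because $\partial_{v_i}v_j=\delta_{ij}$ the remaining bulk term just contracts the collisional flux against the identity, leaving
\begin{align}
   \frac{d}{dt}\int_K m_s \mvec{v}\, f_s \dz = -\int_K \nu_s\left[\, m_s\,(\mvec{v}-\mvec{u}_s)\, f_s + T_s\,\gv f_s\,\right]\dz .
\end{align}

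It then remains to show the right-hand side is zero. The diffusive piece goes away immediately: $T_s$ and $\nu_s$ do not depend on $\mvec{v}$, so $\int\gv f_s\,\dv=0$ as the velocity integral of a pure gradient of a decaying function, and integrating over configuration space kills it. The drag piece is where the specific form of the operator matters --- I would invoke the definition of the flow moment, \eqr{\ref{eq:flowDefinition}}: pointwise in $\mvec{x}$,
\begin{align}
   \int (\mvec{v}-\mvec{u}_s)\, f_s \dv = \int \mvec{v}\, f_s \dv - \mvec{u}_s\int f_s \dv = n_s\mvec{u}_s - n_s\mvec{u}_s = 0 ,
\end{align}
and integrating over $\Omega$ (with $\nu_s$, which may depend on $\mvec{x}$ but not $\mvec{v}$, carried along) completes the proof. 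This is structurally the same argument as for Proposition~\ref{prop:collisionMassConservation}, with one extra integration by parts to absorb the velocity weight.

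I do not expect a genuine obstacle; the computation is short once the integration by parts is set up. The only technical point that deserves to be stated explicitly is the vanishing of the surface terms, which now requires $v_j f_s$ and $v_i v_j\,\partial_{v_i}f_s$ to decay at infinity --- consistent with the boundary conditions assumed just before the propositions, but worth flagging. The more important, conceptual point I would emphasize is that momentum conservation is \emph{not} automatic for a generic Lenard--Bernstein/Dougherty-type operator: it holds precisely because the drag is centered on the \emph{self-consistent} flow $\mvec{u}_s$ of \eqr{\ref{eq:flowDefinition}}. Were $\mvec{u}_s$ a prescribed background flow, the drag term would exert a net friction force and the right-hand side would not vanish; it is the self-consistency of the moments that makes it integrate to zero, and this is the feature that will have to be mirrored carefully in the discrete scheme.
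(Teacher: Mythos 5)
Your proposal is correct and follows essentially the same route as the paper's proof: one integration by parts in velocity space, the surface term killed by the decay/boundary assumptions, $\gv\mvec{v}=\overleftrightarrow{\mvec{I}}$ contracting the flux, the drag contribution vanishing by the definition of $\mvec{u}_s$ in \eqr{\ref{eq:flowDefinition}}, and the diffusion contribution vanishing as the velocity integral of a total derivative. Your closing remark that conservation hinges on the drag being centered on the self-consistent flow is exactly the point the paper later exploits when constructing the discrete constraint equations, so nothing is missing.
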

\begin{proposition}\label{prop:collisionEnergyConservation}
The Fokker--Planck equation conserves the particle energy,
   \begin{align}
      \frac{d}{dt} \left (\int_K \frac{1}{2} m_s |\mvec{v}|^2 f_s^c \dz \right )= 0.
   \end{align}
\end{proposition}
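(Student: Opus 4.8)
The plan is to run the standard energy-moment argument for the Lenard--Bernstein/Dougherty operator: multiply \eqr{\ref{eq:collisionalComponent}} by $\tfrac12 m_s|\mvec{v}|^2$, integrate over the phase space $K$, commute the time derivative with the integral, and integrate by parts in $\mvec{v}$, using the assumed decay of $f_s$ as $|\mvec{v}|\to\infty$ so that every velocity-space surface term vanishes. This reduces the claim to
\[
\nu_s \int_K \tfrac12 m_s |\mvec{v}|^2\, \gv\cdot\!\left[(\mvec{v}-\mvec{u}_s) f_s + \frac{T_s}{m_s}\,\gv f_s\right]\dz = 0 ,
\]
and integrating by parts once in velocity with $\gv|\mvec{v}|^2 = 2\mvec{v}$ turns the left-hand side into $-\,\nu_s\int_K m_s\,\mvec{v}\cdot\!\left[(\mvec{v}-\mvec{u}_s) f_s + \tfrac{T_s}{m_s}\,\gv f_s\right]\dz$.

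Next I would split this into a diffusion contribution and a drag contribution. For the diffusion contribution I integrate by parts a second time in $\mvec{v}$, using $\gv\cdot\mvec{v} = 3$; since $\mvec{u}_s$, $T_s$ and $\nu_s$ depend only on $\mvec{x}$ and $t$ they pull out of the velocity integral, and $-\nu_s\int_K T_s\,\mvec{v}\cdot\gv f_s\dz$ becomes $+3\nu_s\int_\Omega n_s T_s\dx$ with $n_s=\int f_s\dv$. For the drag contribution, $-\nu_s\int_K m_s\big(|\mvec{v}|^2 - \mvec{v}\cdot\mvec{u}_s\big) f_s\dz$, I evaluate the velocity integrals with the moment definitions from Section~\ref{sec:PropertiesKineticEquation}: $\int\mvec{v} f_s\dv = n_s\mvec{u}_s$ from \eqr{\ref{eq:momentumDefinition}} so that $\int\mvec{v}\cdot\mvec{u}_s f_s\dv = n_s|\mvec{u}_s|^2$, and $\tfrac12 m_s\int|\mvec{v}|^2 f_s\dv = \mathcal{E}_s = \tfrac32 n_s T_s + \tfrac12 m_s n_s|\mvec{u}_s|^2$ from \eqr{\ref{eq:energyDefinition}}. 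The $n_s|\mvec{u}_s|^2$ pieces cancel inside the drag term, leaving precisely $-3\nu_s\int_\Omega n_s T_s\dx$, which cancels the diffusion contribution and gives zero.

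The algebra is elementary, so the only real care needed is bookkeeping: keeping the configuration-space fields $n_s$, $\mvec{u}_s$, $T_s$, $\nu_s$ outside the velocity integrals, tracking the factor of $3$ from the velocity-space dimension, and noting that the drag/diffusion cancellation is pointwise in $\mvec{x}$ so it survives the outer $\dx$ integration even if $\nu_s$ varies in space. The one conceptual point---the ``main obstacle'' only in the sense that the proof collapses without it---is that $\mvec{u}_s$ and $T_s$ are themselves moments of $f_s$: the exact cancellation is not generic but is engineered into the particular drag and diffusion coefficients \eqr{\ref{eq:DoughertyDrag}}--\eqr{\ref{eq:DoughertyDiffusion}}, just as the surrounding text foreshadows.
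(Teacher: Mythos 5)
Your proof is correct and follows essentially the same route as the paper's: multiply by $\tfrac{1}{2}m_s|\mvec{v}|^2$, integrate by parts in velocity using the zero-flux decay of $f_s$, and use the moment definitions of $\mvec{u}_s$ and $T_s$ (Eqns.~\ref{eq:flowDefinition}--\ref{eq:temperatureDefinition}) so that the drag contribution $-3\nu_s\int_\Omega n_s T_s\,\dx$ exactly cancels the diffusion contribution $+3\nu_s\int_\Omega n_s T_s\,\dx$. Your closing remark that the cancellation is pointwise in $\mvec{x}$ also matches the paper's observation that the conservation law is local to each configuration-space point.
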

\begin{proposition}\label{prop:collisionEntropy}
The Fokker--Planck equation leads to a non-decreasing entropy density, $S = -f \ln(f)$, of the system,
   \begin{align}
      \frac{d}{dt} \left [\int_K -f_s^c \ln(f_s^c) \dz \right ] \geq 0.
   \end{align}
Thus, the Vlasov--Maxwell--Fokker--Planck system of equations satisfies the Second Law of Thermodynamics, $\Delta S \geq 0$.
\end{proposition}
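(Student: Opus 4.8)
The plan is to show that the collisional contribution to $\frac{d}{dt}\int_K -f_s\ln(f_s)\dz$ is a manifestly non-negative quantity by rewriting the Lenard--Bernstein/Dougherty operator \eqr{\ref{eq:collisionalComponent}} in a ``Maxwellian-relative'' form. First I would introduce the \emph{local} Maxwellian $f_{Ms}(\mvec{x},\mvec{v},t)$ constructed from the instantaneous moments of $f_s$, i.e.\ with the same density $n_s$, flow $\mvec{u}_s$, and temperature $T_s$ as defined in Eqns.\thinspace\ref{eq:massDefinition}, \ref{eq:flowDefinition}, and \ref{eq:temperatureDefinition}. Using $\gv f_{Ms} = -f_{Ms}\,m_s(\mvec{v}-\mvec{u}_s)/T_s$, a one-line identity shows that the flux inside \eqr{\ref{eq:collisionalComponent}} is exactly $(\mvec{v}-\mvec{u}_s)f_s + (T_s/m_s)\,\gv f_s = f_{Ms}\,(T_s/m_s)\,\gv(f_s/f_{Ms})$, so that $\pfrac{f^{c}_s}{t} = \nu_s\,\gv\cdot\!\left[f_{Ms}\,(T_s/m_s)\,\gv(f_s/f_{Ms})\right]$.

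Next I would differentiate: $\frac{d}{dt}\int_K -f_s\ln(f_s)\dz = -\int_K (\ln f_s + 1)\,\pfrac{f^{c}_s}{t}\dz$, where the ``$+1$'' term vanishes by mass conservation (Proposition~\ref{prop:collisionMassConservation}). For the remaining term I would split $\ln f_s = \ln(f_s/f_{Ms}) + \ln f_{Ms}$ and set $g_s = f_s/f_{Ms}$. Integrating by parts in $\mvec{v}$ (the velocity-space boundary terms dropping by the stated decay of $f_s$), the $\ln g_s$ piece collapses to $\nu_s\int_K (T_s/m_s)\,(f_{Ms}/g_s)\,|\gv g_s|^2\dz = \nu_s\int_K (T_s/m_s)\,(f_{Ms}^2/f_s)\,|\gv(f_s/f_{Ms})|^2\dz \ge 0$, since $f_s>0$ and $T_s>0$. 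The $\ln f_{Ms}$ piece, $-\int_K \ln f_{Ms}\,\pfrac{f^{c}_s}{t}\dz$, is handled by noting that $\ln f_{Ms}$ is a $\mvec{v}$-independent term plus the quadratic $-m_s|\mvec{v}-\mvec{u}_s|^2/2T_s$ whose coefficients do not depend on $\mvec{v}$; hence this integral is a linear combination of the collisional rates of change of the density, momentum density, and kinetic-energy density, each of which vanishes \emph{pointwise} in $\mvec{x}$ because the operator is a pure divergence in $\mvec{v}$ (the local forms of Propositions~\ref{prop:collisionMassConservation}--\ref{prop:collisionEnergyConservation}). Only the non-negative term survives, giving $\frac{d}{dt}\int_K -f^c_s\ln(f^c_s)\dz \ge 0$, with equality exactly when $\gv(f_s/f_{Ms})=0$, i.e.\ when $f_s$ is a local Maxwellian; combined with Proposition~\ref{prop:collisionlessEntropy} this yields $\Delta S \ge 0$ for the full VM--FP system.

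The main obstacle is purely organizational: the bookkeeping around the $\mvec{x}$-dependence of $\mvec{u}_s$ and $T_s$. One must not treat $f_{Ms}$ as a fixed reference state (it co-evolves with the moments of $f_s$), and one must verify that the cross term from $\ln f_{Ms}$ cancels \emph{exactly} --- which is precisely why $f_{Ms}$ is defined to share the first three velocity moments of $f_s$, so that this calculation also retroactively motivates the Dougherty choice of drag and diffusion coefficients. A secondary point is justifying that each velocity-space surface term vanishes after integration by parts, which follows from the assumption that $f_s \to 0$ faster than any power of $\mvec{v}$ (and faster than $\ln f_s$) as $|\mvec{v}|\to\infty$. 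I would push the elementary identities (the flux rewriting, the Gaussian derivative, the explicit moment cancellations) into Appendix~\ref{app:proofsContinuous}, as the excerpt indicates.
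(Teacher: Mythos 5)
Your proof is correct and is in substance the same as the paper's: the paper writes the collisional flux as $\mvec{F} = (\mvec{v}-\mvec{u}_s)f_s + (T_s/m_s)\gv f_s$, integrates the entropy production by parts, substitutes $\gv f_s = (m_s/T_s)\left[\mvec{F}-(\mvec{v}-\mvec{u}_s)f_s\right]$, and shows the cross term $\int_K(\mvec{v}-\mvec{u}_s)\cdot\mvec{F}\dz$ vanishes by exactly the moment identities you invoke, leaving $\int_K (m_s/T_s)\,|\mvec{F}|^2/f_s\dz\geq 0$. Since $\mvec{F} = f_{Ms}\,(T_s/m_s)\,\gv(f_s/f_{Ms})$ --- an identity the paper itself uses in the adjacent self-adjointness discussion --- your non-negative remainder $(T_s/m_s)(f_{Ms}^2/f_s)|\gv(f_s/f_{Ms})|^2$ is literally the paper's final integrand, so your relative-entropy decomposition only reorganizes the same bookkeeping.
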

\begin{corollary}\label{coro:HTheorem}
The maximum entropy solution to the Fokker--Planck collision operator is attained by the Maxwellian velocity distribution,
\begin{align}
    f_s = n_s \left (\frac{m_s}{2 \pi T_s} \right )^{\frac{3}{2}} \exp \left (-m_s \frac{|\mvec{v} - \mvec{u}_s|^2}{2 T_s} \right ).
\end{align}
Thus, the Vlasov--Maxwell--Fokker--Planck system of equations satisfies Boltzmann's H-theorem, and a plasma in local thermodynamic equilibrium is described by the Maxwellian velocity distribution.
\end{corollary}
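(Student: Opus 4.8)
The plan is to read off Corollary~\ref{coro:HTheorem} from the entropy-production structure that proves Proposition~\ref{prop:collisionEntropy}. The first step is to recast the Fokker--Planck operator, \eqr{\ref{eq:collisionalComponent}}, in manifestly dissipative form: writing $g_s \defeq \ln f_s + \tfrac{m_s}{2T_s}|\mvec{v}-\mvec{u}_s|^2$ and using $\gv \ln f_s = \gv f_s/f_s$ together with $\gv\big(\tfrac{m_s}{2T_s}|\mvec{v}-\mvec{u}_s|^2\big) = \tfrac{m_s}{T_s}(\mvec{v}-\mvec{u}_s)$ (with $\mvec{u}_s,T_s$ independent of $\mvec{v}$), the operator becomes
\begin{align}
\pfrac{f^c_s}{t} = \nu_s \frac{T_s}{m_s}\,\gv\cdot\!\left( f_s\, \gv g_s \right).
\end{align}
Multiplying by $-(1+\ln f_s)$, integrating over the phase-space domain $K$, integrating by parts in $\mvec{v}$ (boundary terms vanish by the assumed decay of $f_s$), and using the definition of $T_s$, \eqr{\ref{eq:temperatureDefinition}}, to cancel the cross term, one recovers the entropy-production identity behind Proposition~\ref{prop:collisionEntropy},
\begin{align}
\frac{d}{dt}\!\left[\int_K -f^c_s \ln f^c_s \dz\right] = \nu_s \int_K \frac{T_s}{m_s}\, f_s\, \big|\gv g_s\big|^2 \dz \ \ge\ 0 .
\end{align}

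The second step is the characterization of equality. The integrand above is pointwise non-negative, so the entropy stops increasing — and, since $dS/dt\ge 0$ always, the maximum attainable entropy under the dynamics is reached — exactly when $\gv g_s = \mvec{0}$ wherever $f_s>0$. Integrating this condition in $\mvec{v}$ shows that $g_s$ is a function of $(\mvec{x},t)$ alone, i.e. $\ln f_s = C(\mvec{x},t) - \tfrac{m_s}{2T_s}|\mvec{v}-\mvec{u}_s|^2$, so that $f_s = e^{C}\exp\big(-\tfrac{m_s}{2T_s}|\mvec{v}-\mvec{u}_s|^2\big)$; equivalently, this is precisely the condition that the collision operator annihilates $f_s$, so the maximum-entropy state and the equilibrium of \eqr{\ref{eq:collisionalComponent}} coincide.

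The third step fixes the normalization. Imposing the zeroth-moment constraint $\int f_s \dv = n_s$ from \eqr{\ref{eq:massDefinition}} and performing the Gaussian integral gives $e^{C} = n_s\,(m_s/2\pi T_s)^{3/2}$, which is exactly the Maxwellian in the statement. To finish I would verify self-consistency: since $\mvec{u}_s$ and $T_s$ are themselves the first moment and (a third of) the trace of the second central moment of $f_s$ through \eqr{\ref{eq:temperatureDefinition}}, the standard Gaussian moment identities show the candidate distribution reproduces these parameters — as it must by construction — so the stationary point is consistent; the H-theorem and Second Law readings are then just Proposition~\ref{prop:collisionEntropy} applied at and away from this state.

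The main obstacle I anticipate is not a hard computation but a rigor point: $\gv g_s = \mvec{0}$ only constrains $\ln f_s$ on the support of $f_s$, so one must either restrict to strictly positive $f_s$ (well motivated by the smoothing, diffusive character of the operator) or argue separately that in equilibrium the support is all of velocity space. A clean alternative that sidesteps this is the constrained-variational route: maximize $\int_K -f_s\ln f_s\dz$ at fixed mass, momentum, and energy densities, whereupon the Euler--Lagrange equation reads $\ln f_s = \lambda_0 + \gvec{\lambda}_1\!\cdot\!\mvec{v} + \lambda_2|\mvec{v}|^2$; completing the square and matching the three multipliers to the prescribed moments again yields the Maxwellian. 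I would remark on this equivalence but carry out the entropy-production version, since it ties the corollary directly to the collision operator of \eqr{\ref{eq:collisionalComponent}}.
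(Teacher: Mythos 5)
Your proposal is correct and is essentially the paper's own argument: the paper proves $d\mathcal{S}_s/dt=\int_K (m_s/T_s)|\mvec{F}|^2/f_s\dz\geq 0$ with $\mvec{F}=(\mvec{v}-\mvec{u}_s)f_s+(T_s/m_s)\gv f_s$, characterizes the maximum by $\mvec{F}=0$, solves that first-order ODE to get a Gaussian, and fixes the constant from the density — and since $f_s\gv g_s=(m_s/T_s)\mvec{F}$, your gradient-flow form $\gv\cdot(f_s\gv g_s)$ and equality condition $\gv g_s=0$ are the identical computation in logarithmic variables. Your added remarks (the support/positivity caveat and the constrained-variational alternative) go slightly beyond what the paper states but do not change the route.
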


So, the Fokker--Planck component of the VM-FP system of equations also conserves mass, momentum, and energy, so that the complete equation system possesses these properties.
And the Fokker--Planck component is a critical piece of the evolution of the thermodynamics of the plasma, governing both entropy production and providing us the form of the distribution function which maximizes the entropy and describes local thermodynamic equilibrium---see Appendix~\ref{app:proofsContinuous} for further discussions of the connection between the Maxwellian velocity distribution and local thermodynamic equilibrium.
We should reiterate that our discussion of the collision operator in the VM-FP system of equations utilizes simplified drag and diffusion coefficients \citep{Lenard:1958, Dougherty:1964}, Eqns.\thinspace (\ref{eq:flowDefinition}) and (\ref{eq:temperatureDefinition}), and that, while collisions in a plasma are well approximated by a Fokker--Planck operator, the real drag and diffusion coefficients are more complex \citep{Rosenbluth:1957}.
Nonetheless, this equation system contains all the ingredients required to characterize a weakly coupled plasma, a plasma whose collective motions dominate over individual particle-particle interactions.
This equation system is simultaneously more computationally tractable than integrating all the particle trajectories, while also still containing the properties our physical intuition tells us the plasma should have despite this perspective shift to a probabilistic picture from the deterministic picture of individual particle motions.

This discussion naturally leads us into the next section.
We have presented an equation system for modeling a myriad of plasma systems, relevant everywhere from laboratories, to the heliosphere, to astrophysical systems such as the interstellar and intracluster medium.
We want to now utilize the computer to understand the dynamics of weakly coupled plasmas.
But just because we have made the problem of simulating plasma dynamics computationally tractable, shifting our perspective from integrating every single particle's equations of motion to focusing on the collective behavior we know to be of critical importance, does not imply we have made the problem easy.
There is a rich history in tackling the numerical integration of the Vlasov--Maxwell--Fokker--Planck system of equations, and it is worth reviewing this history to motivate the novel approach derived and implemented in this thesis.

\section{A Brief History of Kinetic Numerical Methods\\ and the Objectives of This Thesis}\label{sec:introObjectives}
We restate here, in its entirety, the VM-FP, or  Vlasov--Maxwell--Fokker--Planck, system of equations,
\begin{align}
    \pfrac{f_s}{t} = -\gz \cdot (\gvec{\alpha}_s f_s) & + \nu_s \gv \cdot \left [ (\mvec{v} - \mvec{u}_s) f_s + \frac{T_s}{m_s} \gv f_s \right ], \notag \\
    \frac{\partial \mvec{B}}{\partial t} + \gx\times\mvec{E} = 0, & \quad \epsilon_0\mu_0\frac{\partial \mvec{E}}{\partial t} - \gx\times\mvec{B} = -\mu_0\mvec{J}, \notag \\
    \gx\cdot\mvec{E} = \frac{\varrho_c}{\epsilon_0}, & \quad \gx\cdot\mvec{B} = 0, \notag
\end{align}
where,
\begin{align}
    \gvec{\alpha}_s = & \left ( \mvec{v}, \frac{q_s}{m_s} [\mvec{E} + \mvec{v} \times \mvec{B}] \right ), \notag \\
    \mvec{u_s} = \frac{\int \mvec{v} f_s \dv}{\int f_s \dv}, & \quad \frac{T_s}{m_s} = \frac{1}{3} \frac{\int |\mvec{v} - \mvec{u}_s|^2 f_s \dv}{\int f_s \dv}, \notag \\
    \rho_c = \sum_s q_s \int f_s\dv, & \quad \mvec{J} = \sum_s q_s \int \mvec{v} f_s\dv, \notag 
\end{align}
define the phase space flux, flow and temperature per mass, and charge density and current density, which close the system of equations and couple the electromagnetic fields to the motion of the particles.
This equation system provides an alternative, ultimately more useful, perspective on the evolution of the plasma by shifting from a purely deterministic picture to a probabilistic picture; we track the evolution of the particle distribution function for the probability of finding particles in a phase space volume $\Delta \mvec{x} \Delta \mvec{v}$ instead of every individual particle in the plasma.

Given the discussion in Section~\ref{sec:PropertiesKineticEquation}, we would like however we ultimately discretize the VM-FP system of equations to retain some of these properties of the continuous system of equations.
But, we also want to weigh the computational feasibility of our approach.
The VM-FP system of equations involves the solution of a high dimensional, up to six dimensions plus time, partial differential equation, and this presents its own challenges numerically.

Because of the high dimensionality of the Vlasov--Fokker--Planck equation for the dynamics of the particle distribution function, the most common numerical techniques historically have been Monte Carlo methods, principally the particle-in-cell (PIC) method \citep{Dawson:1962, Langdon:1970, Dawson:1983, birdsallbook}.
This approach attempts to alleviate the computational challenge in integrating the Vlasov--Fokker--Planck equation in the six dimensional phase space by discretizing the particle distribution function as a collection of ``macroparticles,'' i.e., particles of finite size \citep[see, e.g,][and references therein]{Lapenta:2012}.
Maxwell's equations are then discretized on a grid, and the charge and current density of the ``macroparticles'' are deposited on the grid for the coupling.
By making the particles have finite size, the scheme essentially smooths over the spatial scales of the particle size, eliminating discrete particle effects.
Thus, despite the numerical method involving the integration of particle trajectories, the PIC method really is a discretization of the VM-FP system of equations.
There are additional subtleties for the Fokker--Planck component of the equation system since the collisional component of the dynamics occurs inside the macroparticle's finite size; thus, numerical, unphysical, collisions can arise \citep{Hockney:1968, Okuda:1970, Okuda:1972, Hockney:1971, Langdon:1979, Krommes:2007}, and the implementation of a physical collision operator requires modifications to the underlying particle-in-cell algorithm \citep{Lemons:2009}.

As a consequence of discretizing the particle distribution function as a collection of macroparticles, the numerical method only requires a configuration space grid---the velocity space discretization is implicit in the sampling of the particles to compute quantities such as the charge and current density.
Thus, the dimensionality of the problem is reduced from six to three, with the freedom to use as many, or as few, particles per configuration space grid cell as deemed necessary to resolve the kinetic plasma physics encompassed in the VM-FP system of equations.
This reduction in dimensionality, combined with modern algorithms for particle-sorting and sampling, allows one to construct efficient schemes for the complete particle-in-cell algorithm which perform well on the largest supercomputers in the world \citep[e.g.,][]{Fonseca:2008,Bowers:2009,Germaschewski:2016}.

Discretizing the particle distribution function as a collection of macroparticles has its disadvantages though, chief among them the particle noise that is introduced via the particle's finite size.
This pollution of the solution of the VM-FP system of equations is a real travesty, as the particle distribution function is such a rich tapestry of the underlying physics of the weakly coupled plasma.
One can always mollify this concern by increasing the number of particles in the simulation, but the counting noise decreases like $1/\sqrt{N}$, where $N$ is the number of particles per grid cell. 

In addition to degrading the quality of the solution and potentially making the ultimate analysis more challenging, the particle noise inherent to the PIC algorithm can have more severe consequences, potentially giving incorrect or deceptive answers in situations requiring high signal to noise ratios.
For example,~\citet{Camporeale:2016} have demonstrated that a large number of particles-per-cell is required to correctly identify wave-particle resonances and compare well with linear theory. 
There are means of reducing noise in PIC methods, such as the delta-f PIC method \citep{Parker:1993, Hu:1994, Denton:1995, Belova:1997, ChengJianhua:2013, Kunz:2014b}, but noise mitigation techniques like the delta-f PIC method can break down if the distribution function deviates significantly from its initial value.
Further noise mitigation techniques, such as very high order particle shapes, e.g., particle-in-wavelets \citep{vanyenNguyen:2010, vanyenNguyen:2011} and von Mises distributions based on Kernel Density Estimation theory \citep{Wu:2018}, and time-dependent deformable shape functions for the particles \citep{Coppa:1996,Abel:2012,Hahn:2015,KatesHarbeck:2016} are active areas of research.
However, these more sophisticated particle shape functions add significant computational complexity to the algorithm.
Thus, preliminary application of some of these techniques is done in post-processing to assist in analysis \citep{Totorica:2018}, and not \emph{in situ} during a simulation, so any issues due to noise that arise during the course of a simulation are not mitigated.

We thus have strong motivation, both from a desire to eliminate noise and a desire to fully leverage the particle distribution function in our analyses, to directly discretize the VM-FP system of equations on a phase space grid.
But as we have said before, direct discretization of a six dimensional, plus time, partial differential equation, presents its own challenges.
To mitigate the cost, much of the current body of research on direct discretization of the VM-FP system of equations has focused on the hybrid approximation \citep{Valentini:2007, Valentini:2010, Greco:2012, Perrone:2013, Servidio:2014, Valentini:2016, Kempf:2012, Kempf:2013, Pokhotelov:2013, Palmroth:2018}.
In this approximation, proton species are treated with the Vlasov--Maxwell system of equations, with potentially a Fokker--Planck equation for the ion-ion collisions \citep{Pezzi:2015, Pezzi:2019}, while the electrons are taken to be a massless, isothermal background.
This approximation still requires the solution of the VM-FP system of equations on a high dimensional phase space grid, but the challenges in multi-scale modeling of a plasma, from the electron to the proton scales to the macroscopic dynamics, are alleviated.
There are exceptions in recent years \citep{Vencels:2016, Wettervik:2017, Roytershteyn:2018, Roytershteyn:2019}, but the direct discretization approach for the full VM-FP system of equations for the solution of a multi-species weakly coupled plasma, including the effects of collisions, is not common.

It is the objective of this thesis to outline, derive, and implement a novel scheme for the numerical integration of the multi-species VM-FP system of equations. 
Such a scheme should, as much as possible, respect the properties derived in Section~\ref{sec:PropertiesKineticEquation}.
But, in order for our scheme to accomplish this goal, we must be careful to respect the fact that many of these properties, most especially the conservation properties, are \emph{implicit} to the equation system being evolved.
In other words, we must, for example, encode the fact that the second velocity space moment is a conserved quantity in our evolution of the particle distribution function.
Especially for Fokker--Planck collision operators, such schemes are an active area of research \citep{Taitano:2015, Hirvijoki:2017, Hirvijoki:2018}, but the task of a robust, accurate, conservative, and cost effective numerical method for the full VM-FP system of equations is a tall task.
We have tackled this task in this thesis, and applied the resulting algorithm to a wide variety of plasma systems to solve outstanding questions about the energization mechanisms in fundamental plasma processes and the nonlinear dynamics of saturated plasma instabilities using the pristine, noise-free, distribution function granted to us by a continuum discretization of the VM-FP system of equations.

As an example of the power of this approach of direct discretization, we show in Figure~\ref{fig:proton-dist-proof-of-concept} the results of a simulation we will discuss in Chapter~\ref{ch:Leverage}.
\begin{figure}
    \centering
    \vspace{-0.8in}
    \includegraphics[width=\textwidth]{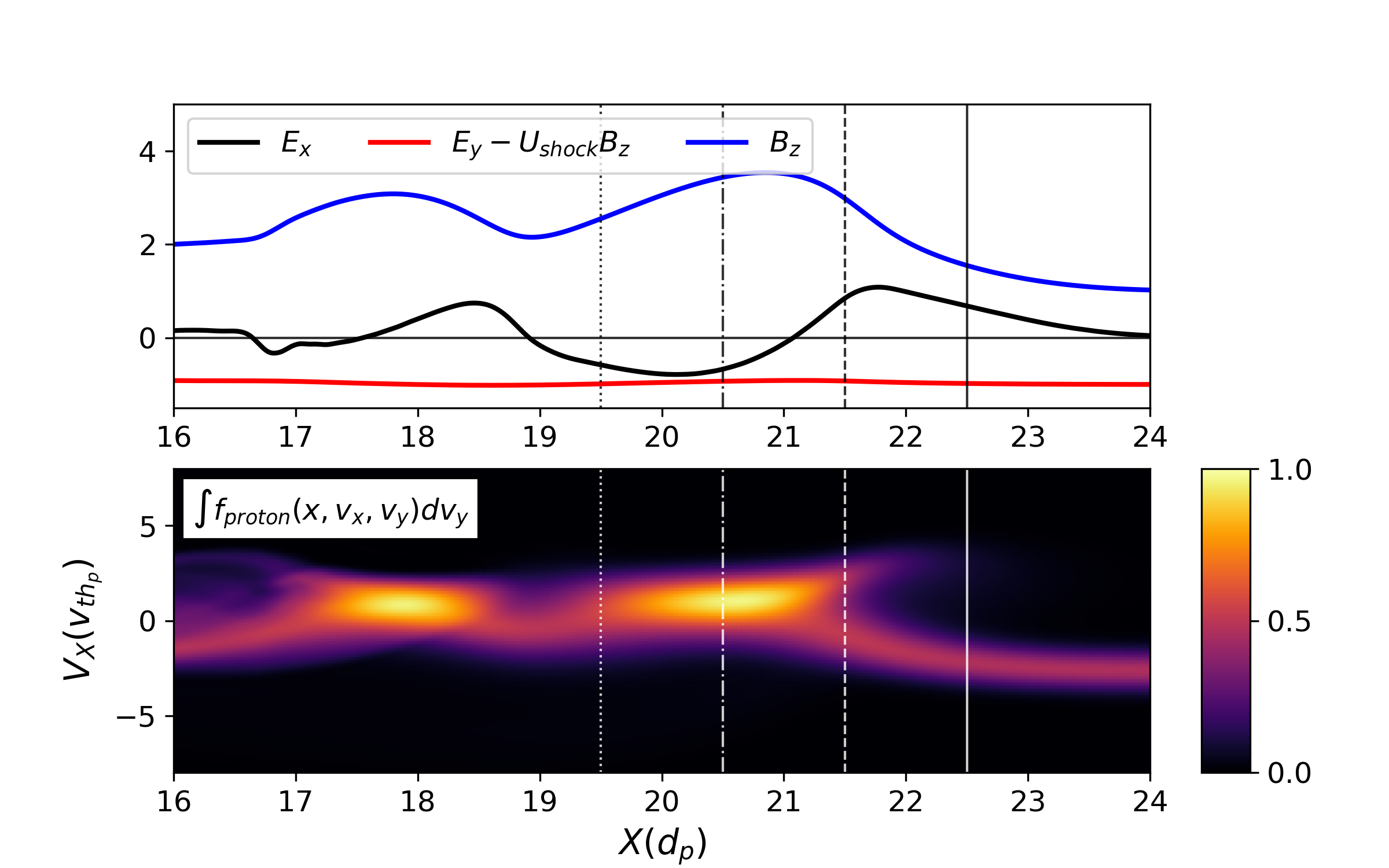}
    \includegraphics[width=0.7\textwidth]{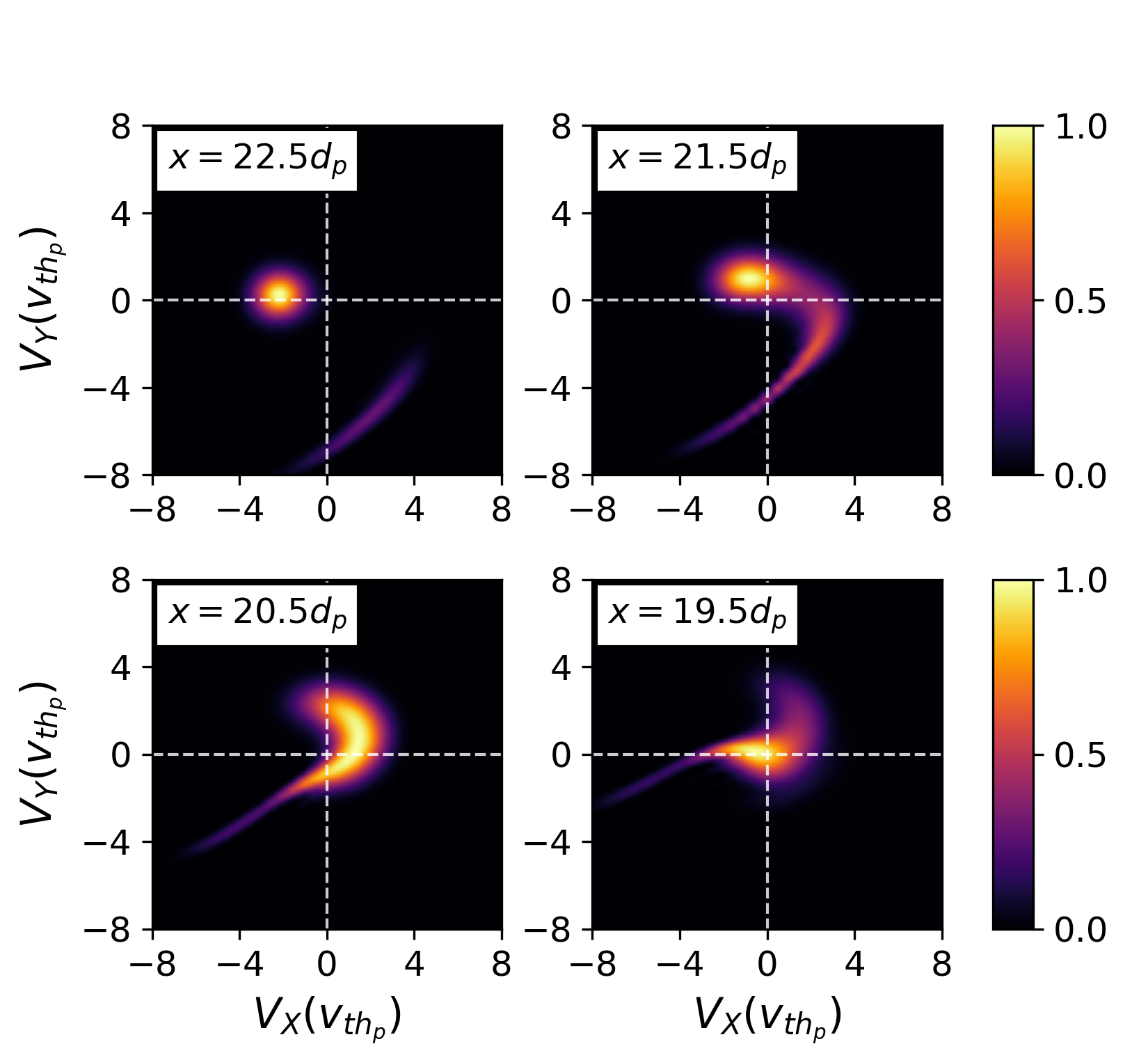}
    \vspace{-0.17in}
    \caption{The electromagnetic fields (top plot) and proton distribution function due to a collisionless shock, where the kinetic energy of an incoming supersonic flow is dissipated and converted into other forms of energy, e.g., thermal energy, on scales smaller than the particle mean-free path, such as the proton inertial length $d_p = c/\omega_{pp}$. We plot the reduced proton distribution function in $x-v_x$ (second from top plot) and slices of the proton distribution function in $v_x-v_y$ (bottom plots) at the specified lines in the $x-v_x$ plots, $x = 19.5, 20.5, 21.5,$ and $22.5\thinspace  d_p$. We will discuss this structure and the specific energization mechanisms of this collisionless shock in Chapter~\ref{ch:Leverage}, but for now we draw attention to the quality of the solution from a continuum representation of the distribution function using a phase space grid. By directly discretizing the VM-FP system of equations in phase space, we can represent fine-scale structure in velocity space which we can leverage to dive in to the details of the energization of the protons.}
    \label{fig:proton-dist-proof-of-concept}
\end{figure}
Figure~\ref{fig:proton-dist-proof-of-concept} shows the proton distribution function undergoing energization due to a collisionless shock, a shock wave which forms on scales smaller than the particle's mean-free path.
The conversion of energy in collisionless shocks, from the kinetic energy of the incoming supersonic flow to other forms of energy, e.g., thermal energy, thus occurs due to kinetic processes such as wave-particle interactions and small-scale instabilities rather than inter-particle collisions.
We will study this system in greater detail when we discuss analysis techniques for extracting data from such a pristine representation of the distribution function.
Suffice to say, the quality of the distribution function from the continuum approach discussed in this thesis is made manifest by inspection of the structure the algorithm can resolve on a phase space grid.

Having motivated our wish to directly discretize the VM-FP system of equations, and briefly demonstrated the capability to resolve detailed particle distribution function structure in kinetic plasma processes like collisionless shocks with this approach, we now discuss the organization of the rest of the thesis.
We will describe the numerical method, the discontinuous Galerkin finite element method, in Chapter~\ref{ch:DGFEM}.
Chapter~\ref{ch:DGFEM} will form a complete mathematical description of our discrete system, including what properties the discrete VM-FP system of equations retains compared to the continuous VM-FP system of equations, and the stability properties of the algorithm.
We will then move to a discussion of the implementation of the algorithm in Chapter~\ref{ch:ImplementationDGFEM}. 
This discussion will detail two of the major breakthroughs in this thesis: the requirement that the algorithm be \emph{alias-free} so it retains the properties of the discrete scheme, most especially the stability and conservation properties, and the specific choice of an \emph{orthonormal, modal} basis expansion in the discontinuous Galerkin method to optimize the computational complexity of the algorithm.

Chapter~\ref{ch:Benchmarks} will numerically demonstrate the accuracy and robustness of the implemented scheme. 
We will show via a variety of numerical tests the proven properties of the discrete scheme, and compare a number of numerical experiements to known analytic solutions.
Chapter~\ref{ch:Leverage} will be a tour-de-force showcase of the power of the implemented scheme.
With access to a high fidelity representation of the particle distribution function from our direct discretization, we will examine energization mechanisms in fundamental plasma processes directly in phase space, such as the collisionless shock shown in Figure~\ref{fig:proton-dist-proof-of-concept}, and conclude with an application comparison between the particle-in-cell method and our continuum approach that shows explicitly where particle noise can pollute the simulation of plasma kinetic systems.

The scheme is implemented within the \gke~framework. 
\gke~is a general purpose, open-source, simulation framework with support for five- \citep{Hakim:2006} and ten-moment multi-fluid \citep{Hakim:2008,Wang:2015,Ng:2015,Wang:2019}, full-f gyrokinetic \citep{Shi:2015,Shi2017thesis,Mandell:2020}, and Vlasov--Maxwell--Fokker--Planck systems \citep{Juno:2018,Hakim:2019,HakimJuno:2020}.
For the purposes of reproducibility, the source code for \gke~is available through \texttt{GitHub}\footnote{https://github.com/ammarhakim/gkyl}, and all input files for the simulations run in this thesis are available through a \texttt{GitHub} repository\footnote{https://github.com/ammarhakim/gkyl-paper-inp}, with the changesets used to produce the data documented in the input file. Additional documentation can be found through the \gke~documentation website\footnote{https://gkyl.readthedocs.io/en/latest/}.

\renewcommand{\thechapter}{2}
\epigraph{Some of the material in this chapter has been adapted from \citet{Juno:2018}, \citet*{Hakim:2019}, and \citet{HakimJuno:2020}.}{}
\chapter{The Discontinuous Galerkin Finite Element Method}\label{ch:DGFEM}

The method we will employ to discretize the Vlasov--Maxwell--Fokker--Planck system of equations is called the discontinuous Galerkin finite element method, or DG for short.
DG was first introduced to study neutron transport \citep{ReedHill:1973} and became an active area of study in numerical methods after the general formulation of the algorithm by \citet{Cockburn:1998b,Cockburn:2001}.
DG has become an enticing method for a variety of problems, from computational fluid dynamics to seismology and wave equations\citep[see, e.g.,][and references therein]{Hesthaven:2007}, because DG methods are constructed to combine advantages of both finite element methods and finite volume methods.
By combining the power of the finite element method, principally the high order accuracy and flexibility in the chosen basis expansion, with the benefits of a finite volume method, such as locality of data and the ability to construct conservative discretizations, one can design robust, physically-motivated, numerical methods for the chosen equation or equation system of interest.
In fact, DG has become a particularly active area of research in recent years for kinetic equations such as the Vlasov--Maxwell--Fokker--Planck system of equations, and its subsidiaries Vlasov--Poisson and Vlasov--Ampere \citep{Cheng:2011, Cheng:2013, Cheng:2014a, Cheng:2014b}

It is worth taking a moment to give some intuition for the construction of the DG method in a more general context before diving in to our discretization of the VM-FP system of equations.
We will define what we mean by a ``Galerkin'' method, and then apply DG to a simple hyperbolic partial differential equation.
In doing so, we will be able to connect with our knowledge of other numerical methods, and see why DG is often discussed as a hybrid finite volume-finite element method, combining the strengths of both numerical methods into a singular, powerful, means of discretizing a partial differential equation.

\section{$L^2$ Minimization of the Error}\label{sec:L2ErrorMinimization}

The two essential ingredients of a Galerkin method are the definition of some finite dimensional space of functions and a definition of errors.
The former allows us to connect the function space the continuous equation, or equation system, lives in, to a discrete representation of the solution to our equation or equation system.
The latter gives us a unique way of finding the discrete representation, as we would like to minimize the errors of our discrete representation of our solution.

Consider an interval $[-1, 1]$ and the function space of polynomials of order $p$, $\mathbb{P}^p$.
The particular space of polynomials will form a complete basis on our interval\footnote{A good example of such a complete basis would be the Legendre polynomials up to some order $n$, $P_n(x)$.}.
On this interval, we will employ the inner product,
\begin{align}
    \langle f, g \rangle_{L^2} = \int_{-1}^1 f(x) g(x) \thinspace dx, \label{eq:L2innerproduct}
\end{align}
with the following norm,
\begin{align}
    \langle f, f \rangle_{L^2} = \int_{-1}^1 f^2(x) \thinspace dx, \label{eq:L2norm}
\end{align}
the $L^2$ norm.

In general, we want to solve problems of the form
\begin{align}
    \pfrac{f(x,t)}{t} = G[f], \label{eq:generalEquationCont}
\end{align}
where $G[f]$ is some operator for $f$.
$G[f]$ may be a very general operator, such as in the VM-FP system of equations wherein we have first order terms, e.g., the collisionless advection in phase space, and second order terms, e.g., the collision operator.
In seeking an approximation of our solution $f(x,t)$, we will expand $f(x,t)$ in our basis set,
\begin{align}
    f(x,t) \approx f_h(x,t) \defeq \sum_{k=1}^N f_k(t) \phi_k(x), \label{eq:approxSolH}
\end{align}
where $\phi_k(x) \in \mathbb{P}^p$, for $k=1,\dots, N$.
Thus, the problem of interest is approximated as
\begin{align}
    \sum_{k=1}^N \frac{d f_k(t)}{dt} \phi_k(x) = G[f_h],
\end{align}
and we need to determine the time evolution of the coefficients $f_k(t)$.
Note that we have changed notation from $\partial/\partial_t$ to $d/dt$ to emphasize that the coefficients $f_k$ are only a function of time.

We defined a norm in \eqr{\ref{eq:L2norm}}, so let us minimize the error with respect to this norm,
\begin{align}
    E_{L^2} = \int_{-1}^1 \left (\sum_{k=1}^N \frac{d f_k(t)}{dt} \phi_k(x) - G[f_h] \right )^2 \thinspace dx,
\end{align}
by taking the derivative of the error with respect to each time-dependent coefficient,
\begin{align}
    \pfrac{E_{L^2}}{f'_\ell} = 2 \int_{-1}^1 \phi_\ell(x) \left (\sum_{k=1}^N \frac{d f_k(t)}{dt} \phi_k(x) - G[f_h] \right ) \thinspace dx. \label{eq:derivativeError}
\end{align}
Here, we have used the shorthand $f'_\ell = df_\ell/dt$.
To minimize the error with respect to the time derivative of the coefficients, we set \eqr{\ref{eq:derivativeError}} equal to 0,
\begin{align}
    \int_{-1}^1 \sum_k \frac{df_k(t)}{dt} \phi_k(x) \phi_\ell(x) \thinspace dx = \int_{-1}^1 G[f_h] \phi_\ell(x) \thinspace dx. \label{eq:errorMin1D}
\end{align}
To give a bit more insight into how one could then evaluate this expression to find each of the time dependent coefficients, consider what this expression reduces to if the polynomials $\phi_k(x) \in \mathbb{P}^p$ for $k=1,\dots, N$ are an orthonormal basis set such that
\begin{align}
    \int_{-1}^1 \phi_k(x) \phi_\ell(x) \thinspace dx = \delta_{k\ell},
\end{align}
where $\delta_{k\ell} = 1$ if $k=\ell$ and zero otherwise.
Then our equation for the time evolution of the coefficients would reduce to
\begin{align}
    \frac{d f_\ell}{dt} = \int_{-1}^1 G[f_h] \phi_\ell(x) \thinspace dx, \label{eq:orthonormal1D}
\end{align}
for $\ell = 1,\dots, N$, and we would then have a system of ordinary differential equations to solve for each of $df_\ell/dt$.

The discussion up to this point has been somewhat abstract, so we would like to make this concrete in two ways.
First, let us perform the $L^2$ minimization of the error on a non-polynomial function.
In doing so, we would like to show what it means to take a function in some infinite dimensional space, since it would take an infinite number of polynomials to represent this function normally, and project it to a finite dimensional subspace.

We plot in Figure~\ref{fig:L2ErrorMinimization} the projection of the function $f(x) = x^4 + \sin(5x)$ onto a number of different basis expansions.
Here, we have a further generalization of the previous discussion for the Galerkin method, where the domain of $[-1, 1]$ is further subdivided into non-overlapping cells, and the projection is done within each cell.
As we move to higher and higher polynomial order, we can see the reduction, even just visually, of the error between the exact solution and our discrete representation of the solution.
This reduction in the error with higher polynomial order is our first evidence of the connection between the discontinuous Galerkin method and finite element methods, where higher order basis sets correspond to higher accuracy.

\begin{figure}[ht]
    \centering
    \includegraphics[width=0.325\textwidth]{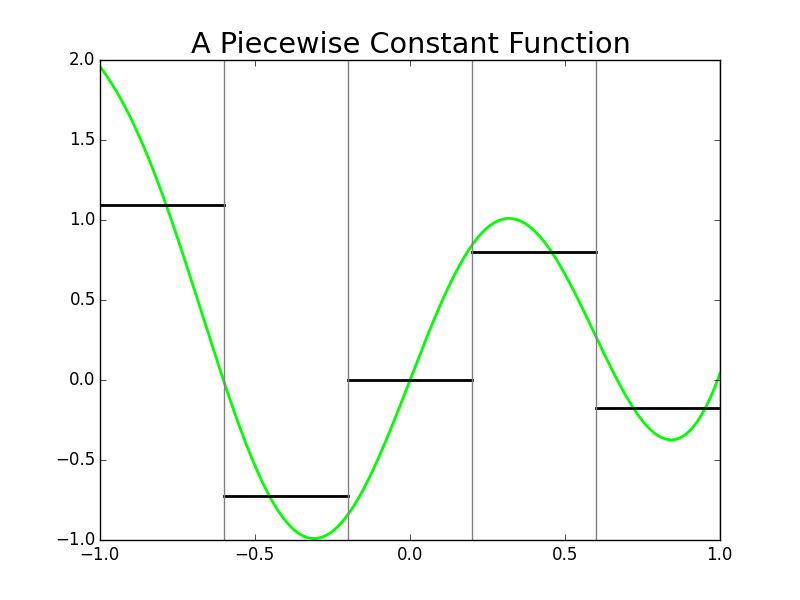}
    \includegraphics[width=0.325\textwidth]{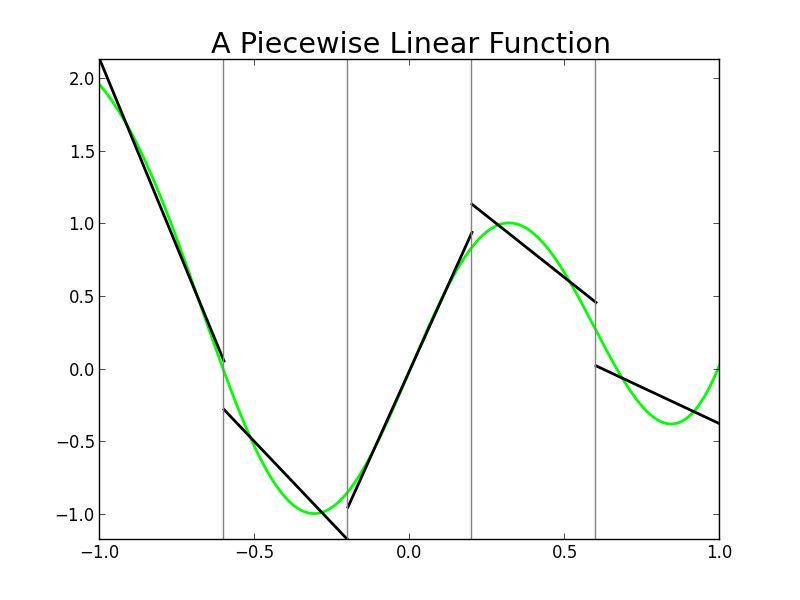}
    \includegraphics[width=0.325\textwidth]{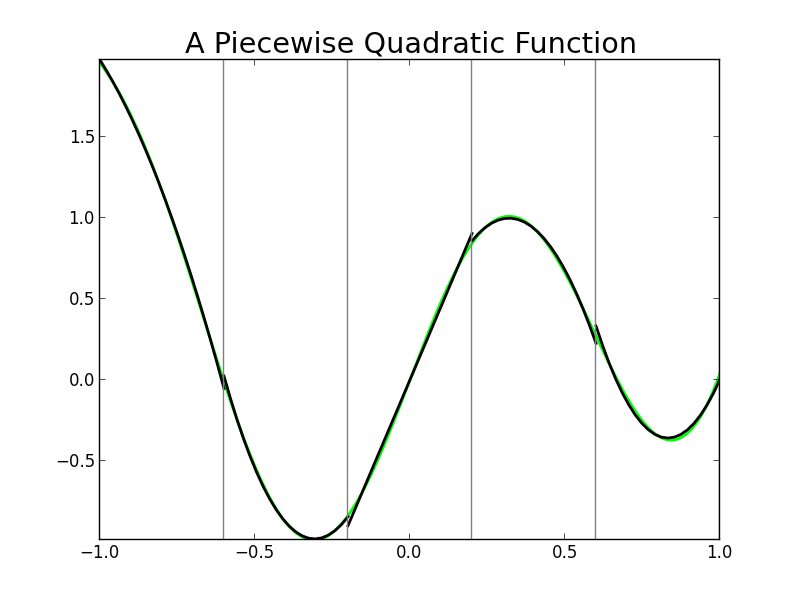}
    \caption{The projection of $f(x) = x^4 + \sin(5x)$ onto piecewise constant (left), piecewise linear (middle), and piecewise quadratice (right) functions. The domain from $[-1, 1]$ is divided into non-overlapping cells and the projection is done within each cell to minimize the $L^2$ error. We begin to see some of the connection between the discontinuous Galerkin method and finite element methods, as moving to higher polynomial order manifestly reduces the $L^2$ error between the exact solution and projected solution.}
    \label{fig:L2ErrorMinimization}
\end{figure}

The second way we will make our discussion of the Galerkin minimization of the $L^2$ error less abstract is by considering the full discretization of the constant advection equation in one dimension,
\begin{align}
    \pfrac{f(x,t)}{t} + \lambda \pfrac{f(x,t)}{x} = 0.
\end{align}
Define the domain of the advection equation as $\Omega$, which we will divide into non-overlapping cells $I_j \in \Omega_j$, for $j = 1,...,N_j$. Plugging $-\lambda \partial f/\partial x$ into \eqr{\ref{eq:errorMin1D}} for the operator $G[f_h]$, and integrating by parts we obtain
\begin{align}
    \int_{I_j} \frac{df_{h,j}}{dt} \phi_\ell \thinspace dx = -\lambda \phi_{\ell,j+1/2} \hat{F}_{j+1/2} + \lambda \phi_{\ell,j-1/2} \hat{F}_{j-1/2} + \lambda \int_{I_j} \frac{d \phi_\ell}{dx} f_{h,j} \thinspace dx, \label{eq:1DAvectionDG}
\end{align}
where the subscripts $j \pm 1/2$ define the right, $+$, and left, $-$, sides of the cell respectively, and $f_{h,j}$ is the projection of the solution in each cell $I_j$ as defined by \eqr{\ref{eq:approxSolH}}. Note that the solution in each cell requires a minimization of the error for every $\phi_\ell, \ell=1,\dots,N$, for however many basis functions in each cell one has, and further that the full solution is a direct sum over all cells $I_j \in \Omega_j$,
\begin{align}
    f_h(x,t) = \bigoplus_{j=1}^{N_j} f_{h,j} (x,t).
\end{align}
Since we have a solution in each cell $I_j$, the integration by parts gives us a means to connect the solution within each cell to its neighbors, but we need to prescribe the numerical flux function, $\hat{F}_{j\pm1/2}$.
A natural choice for the constant advection equation is known as upwind fluxes,
\begin{align}
    \hat{F}(f_h^+, f_h^-) =
    \begin{cases}
    f_h^- \textrm{ if } \lambda>0 \\
    f_h^+ \textrm{ if } \lambda<0, 
    \end{cases} \label{eq:1DUpwindFlux}
\end{align}
where the superscript plus-minus is the solution evaluated just inside, $-$, or just outside $+$, the cell interface---see Figure~\ref{fig:plusminusexp} for a visualization of this notation.

\begin{figure}[ht]
    \centering
    \includegraphics[width=0.9\textwidth]{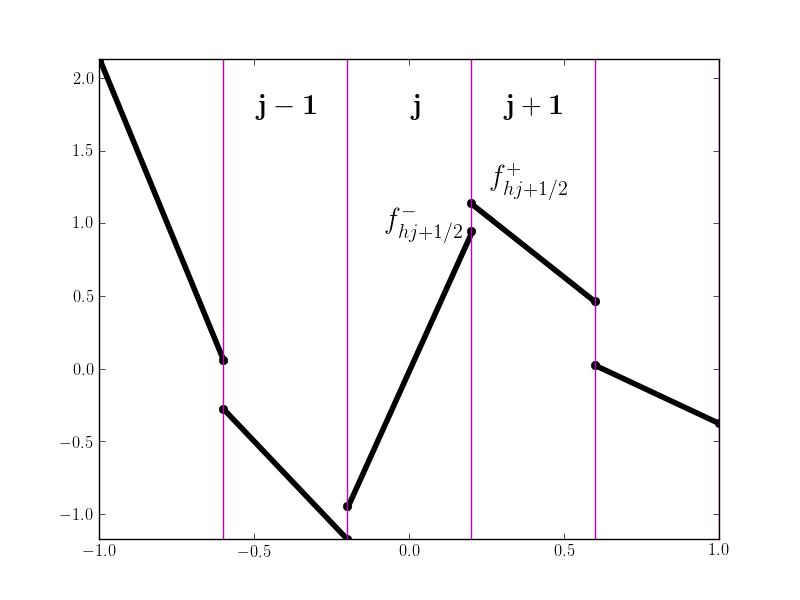}
    \caption{Annotated piecewise linear representation to make our notation more clear, most especially superscript plus-minus, where the solution is evaluated just inside, $-$, or just outside $+$, the cell interface.}
    \label{fig:plusminusexp}
\end{figure}

To make further progress, let us consider two cases. The first case is one in which our basis expansion is just the set of piecewise constant basis functions,
\begin{align}
    \phi = \{ 1 \}.
\end{align}
Substituting the piecewise constant basis function into \eqr{\ref{eq:1DAvectionDG}}, we obtain,
\begin{align}
    \frac{df_j}{dt} \Delta x = -\lambda(f_j - f_{j-1}),
\end{align}
since the derivative of a constant function is 0, and the integral of the left hand side in \eqr{\ref{eq:1DAvectionDG}} when the basis function is a constant is the volume of the cell, $\Delta x$.
We can immediately recognize this formula as a first order finite volume method, or an upwind finite difference method, if you prefer.
We can then discretize the time derivative with a forward Euler method to obtain
\begin{align}
    f^{n+1}_j = f^n_j - \frac{\lambda \Delta t}{\Delta x} (f_j - f_{j-1}),
\end{align}
and should we choose, we could combine multiple forward Euler steps into a multi-stage method, such as a Runge--Kutta method.

The second case is one in which our basis functions are a piecewise linear expansion,
\begin{align}
    \phi_{1,2} = \{ 1, 2 (x - x_j)/\Delta x \}, \label{eq:1DPolynomialDGExample}
\end{align}
where $x_j$ is the cell center value of cell $I_j$.
We can obtain update formulas for a forward Euler step for the constant and linear coefficients when employing the piecewise linear basis,
\begin{align}
    f^{n+1}_{1,j} & = f^n_{1,j} - \frac{\lambda \Delta t}{\Delta x} \left ( \hat{F}_{j+1/2} - \hat{F}_{j-1/2} \right ), \\
    f^{n+1}_{2,j} & = f^n_{2,j} - 3\frac{\lambda \Delta t}{\Delta x} \left ( \hat{F}_{j+1/2} + \hat{F}_{j-1/2} \right ) + 6 \frac{\lambda \Delta t}{\Delta x} f^n_{1,j},
\end{align}
which again, can be combined into a general multi-stage time-stepping method.
Note that the numerical flux function $\hat{F}_{j\pm1/2}$ is still given by \eqr{\ref{eq:1DUpwindFlux}}, but due to the piecewise linear representation within a cell, we will need to evaluate the numerical flux function at the corresponding cell interfaces when implementing the method.

So the switch from piecewise constant basis functions, which produced a standard first order finite volume method, to piecewise linear basis functions, led to more general update formulas.
As we might expect, the accuracy of the method has also improved as a result of switching to a higher order set of basis functions.
To see this, we plot in Figure~\ref{fig:1DAdvectionGauss} the result of advecting a Gaussian pulse on a domain $[0,1]$ with $N_j = 32$ (32 cells) and periodic boundary conditions one full period.
The size of the time-step is chosen to satisfy stability constraints for a forward Euler time-step.
We expect that after one period, the initial condition and the final solution should be identical, since the exact solution of the linear advection equation is simply $f_0(x - \lambda t, t)$, where $f_0$ is the initial condition at $t=0$.
However, the first order finite volume method has significant numerical diffusion, leading to a less accurate representation of the solution than the piecewise linear basis function solution.

\begin{figure}[ht]
    \centering
    \includegraphics[width=0.49\textwidth]{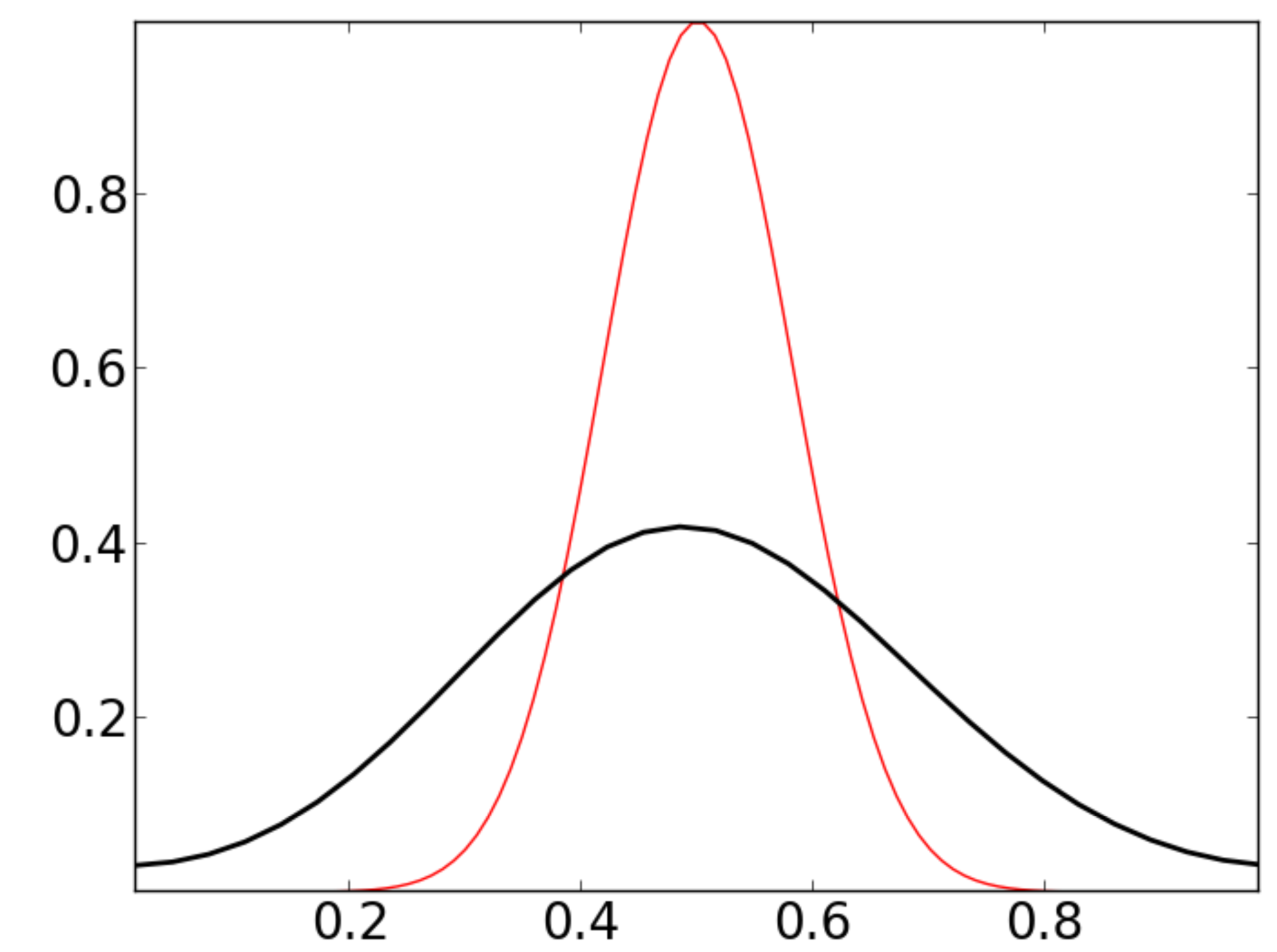}
    \includegraphics[width=0.49\textwidth]{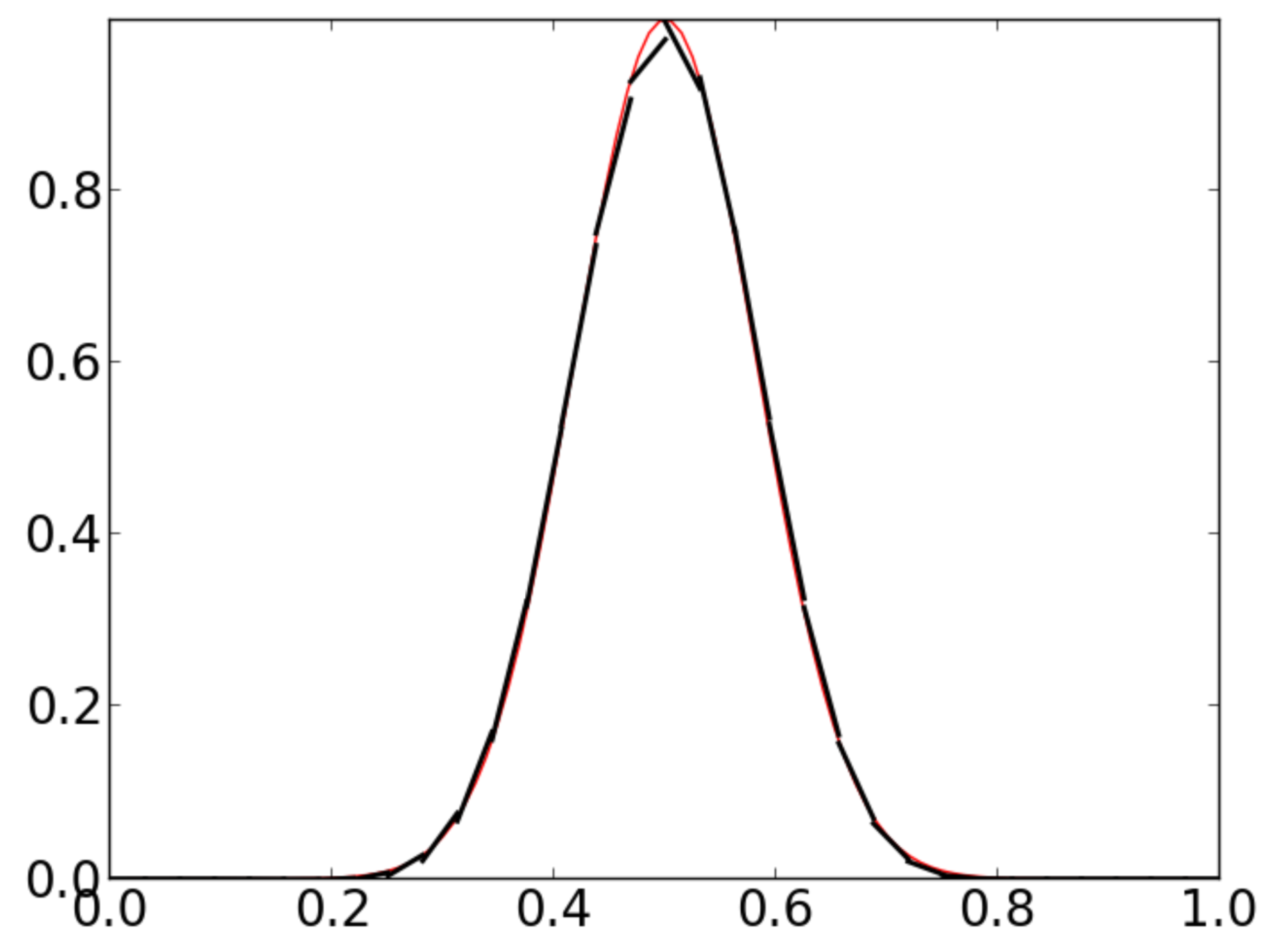}
    \caption{Comparison of advection of a Gaussian pulse one period with a piecewise constant (left) and piecewise linear (right) basis function expansion and upwind fluxes. While the piecewise constant solution suffers from numerical diffusion which leads to poor agreement between the analytic solution (red) and the numerical solution (black), the piecewise linear solution agrees to a reasonably high degree with the expected result.}
    \label{fig:1DAdvectionGauss}
\end{figure}

Based on the results of this numerical experiment, we now want to more strongly connect the discontinuous Galerkin method to finite volume methods.
It is natural to think of DG as a generalization of finite volume methods.
In finite volume methods, one only tracks the evolution of a single quantity in each cell, the cell average, just like with our piecewise constant representation.
But, we now see there is no reason to restrict ourselves.
We can evolve higher ``moments,'' coefficients corresponding to a higher order representation of our solution, within a cell, and in doing so, obtain a higher accuracy numerical method.

A useful analogy is to connect DG with higher order finite volume methods such as MUSCL schemes \citep{VanLeer:1979} or the piecewise parabolic method \citep{Colella:1984}.
In these higher order finite volume methods, one is still only tracking the evolution of the cell average, but a reconstruction of the solution is done at every time-step to increase the order of accuracy of the scheme, e.g., a linear or quadratic reconstruction of the solution.
In the DG method, instead of generating a reconstruction, we are explicitly evolving something like a reconstruction---we are evolving the higher order representation of the solution inside the cell!
With newfound intuition about how the DG method works, let us now turn to the equation system of interest in this thesis, the Vlasov--Maxwell--Fokker--Planck system of equations.
We will proceed in stages just as with the properties of the VM-FP system of equations in Chapter~\ref{ch:Introduction}, first focusing on the collisionless component of the equation system, the Vlasov--Maxwell system of equations.

\section{The Semi-Discrete Vlasov--Maxwell System of Equations}\label{sec:semi-discrete-Vlasov}

We seek a discretization of the Vlasov--Maxwell system of equations using the discontinuous Galerkin method in all of phase space.
To discretize the Vlasov equation, we introduce a phase space mesh $\mathcal{T}$ with cells $K_j \in \mathcal{T}$, $j=1,\ldots,N$, and a piecewise polynomial approximation space for the distribution function, $f_s(\mvec{z}, t)$,
\begin{align}
  \mathcal{V}_h^p = \{ w : w|_{K_j} \in \mathbb{P}^p, \forall K_j \in
  \mathcal{T} \}, \label{eq:phaseSpaceSolutionSpace}
\end{align}
where $\mathbb{P}^p$ is some space of polynomials of order $p$. 
We then seek $f_h\in \mathcal{V}_h^p$ such that, for all $K_j\in \mathcal{T}$,
\begin{align}
  \int_{K_j} w\pfrac{f_h}{t} \dz + 
  \oint_{\partial K_j}w^- \mvec{n}\cdot\hat{\mvec{F}}  \thinspace dS 
  - \int_{K_j} \gz w \cdot \gvec{\alpha}_h f_h \dz = 0, \label{eq:dis-weak-form}
\end{align}
for all test functions $w\in \mathcal{V}_h^p$.
\eqr{\ref{eq:dis-weak-form}} is commonly referred to as the \emph{discrete-weak form} of the Vlasov equation.
In the derivation of the discrete-weak form of the Vlasov equation, we have used integration by parts on the operator for the flux in phase space, thus producing the surface and volume integrals in \eqr{\ref{eq:dis-weak-form}}.

The pieces of the discrete-weak form of the Vlasov equation again evoke the comparison to finite element and finite volume methods.
The third term, the volume integral, calls to mind the integrals over a cell one performs in a finite element method, while the second term, the surface integral, involves the prescription of a numerical flux function, $\mvec{\hat{F}}$, exactly as in a finite volume method.
The subscript $h$ indicates the discrete solution, the notation $w^-$ ($w^+$) indicates that the function is evaluated just inside (outside) the location on the surface $\partial K_j$, and $\mvec{n}$ is an outward unit vector on the surface of the cell $K_j$.

The discrete distribution function is represented as
\begin{align}
f_h(t, \mvec{z}) = \sum_i f_i(t) w_i(\mvec{z}), \label{eq:distf-expansion}
\end{align}
where $w_i(\mvec{z})$ are a set of polynomials chosen such that they lie in the aforementioned space of polynomials $\mathbb{P}^p$, i.e., we are employing a Galerkin method where the test functions and basis functions are one and the same.
We will avoid specifying the exact polynomial space $\mathbb{P}^p$ for now, as the specific form of the polynomials is not a necessary component of the mathematical formulation of the algorithm.
All that we will require in our mathematical formulation is that the basis set is made up of polynomials.

There are many choices for the numerical flux function, $\mvec{\hat{F}}$, which can be employed for the Vlasov equation.
We will pick the numerical flux function most importantly to be a Godunov flux,
\begin{align}
    \oint_{\partial K_j}w^- \mvec{n}\cdot\hat{\mvec{F}}  \thinspace dS  = -\oint_{\partial K_j}w^+ \mvec{n}\cdot\hat{\mvec{F}}  \thinspace dS. \label{eq:GodunovFlux}
\end{align}
In other words, the flux into the cell $K_j$ along some surface $\partial K_j$ is equal and opposite in sign to the flux out of its neighbor cell along the shared interface. This property likely reads like a sensible and obvious property one would desire of a numerical flux function, as it means that the flux is conserved across the interface, i.e., there is no creation or destruction of the distribution function as it advects in phase space.
Example Godunov fluxes include central fluxes,
\begin{align}
    \mvec{n}\cdot\hat{\mvec{F}}(\gvec{\alpha}^+_h f^+_h, \gvec{\alpha}^-_h f^-_h ) = \frac{1}{2} \mvec{n}\cdot \left (\gvec{\alpha}^+_h f^+_h + \gvec{\alpha}^-_h f^-_h \right ),
\end{align}
the local Lax-Friedrichs flux,
\begin{align}
    \mvec{n}\cdot\hat{\mvec{F}}(\gvec{\alpha}^+_h f^+_h, \gvec{\alpha}^-_h f^-_h ) = \frac{1}{2} \mvec{n}\cdot \left (\gvec{\alpha}^+_h f^+_h + \gvec{\alpha}^-_h f^-_h \right ) - \frac{c}{2} (f^+ - f^-), \label{eq:localLF}
\end{align}
where $c = \max_{\partial K_j}(|\mvec{n} \cdot \gvec{\alpha}^+_h|, |\mvec{n} \cdot \gvec{\alpha}^-_h|)$, and the global Lax-Friedrichs flux\footnote{Note that global Lax-Friedrichs flux applies to a general class of numerical flux functions in which the parameter, $\tau$, is a globally calculated quantity.},
\begin{align}
    \mvec{n}\cdot\hat{\mvec{F}}(\gvec{\alpha}^+_h f^+_h, \gvec{\alpha}^-_h f^-_h ) = \frac{1}{2} \mvec{n}\cdot \left (\gvec{\alpha}^+_h f^+_h + \gvec{\alpha}^-_h f^-_h \right ) - \frac{\tau}{2} (f^+ - f^-), \label{eq:globalLF}
\end{align}
where $\tau = \max_{\mathcal{T}} |\mvec{n} \cdot \gvec{\alpha}_h|$.
Note the difference between the local and global Lax-Friedrichs fluxes, where in the local Lax-Friedrichs flux, \eqr{\ref{eq:localLF}}, the max of the phase space flux is taken along the specific surface $\partial K_j$, while for the global Lax-Friedrichs flux, \eqr{\ref{eq:globalLF}}, the max of the phase space flux is taken over the entire domain $\mathcal{T}$.
Both Eqns.\thinspace (\ref{eq:localLF}) and (\ref{eq:globalLF}) are defined with the motivation to penalize the size of the jumps in the flux so that the discontinuities can be controlled in some fashion.
We will see in Proposition~\ref{prop:discrete-L2-norm} that this penalization naturally leads to some numerical diffusion, thus why we refer to the penalty term as controlling the size of the jumps in the flux.

For the DG discretization of Maxwell's equations, we require the \emph{restriction} of the phase-space mesh, $\mathcal{T}$, to configuration space by $\mathcal{T}_\Omega$. 
The cells in configuration space are denoted by $\Omega_j\in\mathcal{T}_\Omega$, for $i=1,\ldots,N_\Omega$, where $N_\Omega$ are the number of configuration space cells, and we introduce the solution space
\begin{align}
  \mathcal{X}^p_{h} = \{ \varphi : \varphi|_{\Omega_j} \in \mvec{P}^p, \forall \Omega_j \in
  \mathcal{T}_{\Omega} \}.
\end{align}
These basis, and test, functions are defined only on the configuration space domain $\Omega$ and thus contain only dependence on the configuration space variable $\mvec{x}$. 
As with the discrete distribution function, we seek, $\mvec{E}_h, \mvec{B}_h \in \mathcal{X}^p_{h}$ such that, for all $\Omega_j \in \mathcal{T}_\Omega$,
\begin{align}
  \int_{\Omega_j}\varphi \pfrac{\mvec{B}_h}{t}\dx
  &+
  \oint_{\partial\Omega_j} d\mvec{s}\times(\varphi^-\hat{\mvec{E}}_h)
  -
  \int_{\Omega_j} \gx\varphi\times\mvec{E}_h \dx
  =
  0, \label{eq:dis-weak-B} \\
  \epsilon_0\mu_0\int_{\Omega_j}\varphi \pfrac{\mvec{E}_h}{t}\dx
  &-
  \oint_{\partial\Omega_j} d\mvec{s}\times(\varphi^-\hat{\mvec{B}}_h)
  +
  \int_{\Omega_j} \gx\varphi\times\mvec{B}_h \dx
  =
  -\mu_0\int_{\Omega_j} \varphi \mvec{J}_h\dx. \label{eq:dis-weak-E}
\end{align}
Note in the derivation of Eqns.\thinspace(\ref{eq:dis-weak-B}--\ref{eq:dis-weak-E}), we needed to evaluate volume integrals which include terms of the form $\varphi\gx\times\mvec{E}_h$, for $\varphi\in\mathcal{X}^p_{h}$ and likewise for the magnetic field, $\mvec{B}_h$. We have made use of the fact that
\begin{align}
  \int_{\Omega_j} 
  \underbrace{
    \varphi\gx\times\mvec{E}_h
  }_{
    \gx\times(\varphi\mvec{E}_h) - \gx\varphi\times\mvec{E}_h
  }
  \dx.
\end{align}
Gauss' law can then be used to convert one volume integral into a surface integral
\begin{align}
  \int_{\Omega_j} \gx\times(\varphi\mvec{E}_h) \dx
  =
  \oint_{\partial\Omega_j} d\mvec{s}\times(\varphi\mvec{E}_h),
\end{align}
where $d\mvec{s}$ is the (vector) area-element that points in the direction of the outward normal to the configuration space cell  $\Omega_j$. 

As with the discrete-weak form for the Vlasov equation, \eqr{\ref{eq:dis-weak-form}}, we require a prescription for the numerical flux functions $\mvec{\hat{E}}_h, \mvec{\hat{B}}_h$.
We consider two methods of obtaining the cell interface fields needed in the discrete weak-form of Maxwell's equations: central fluxes and upwind fluxes. 
As we will see later, both numerical flux functions have advantages and disadvantages, particularly in terms of the conservation properties the discrete system retains from the continuous system.
For central fluxes, we use averages of values just across the interface, i.e.,
\begin{align}
  \hat{\mvec{E}}_h & = \llbracket \mvec{E}\rrbracket, \label{eq:centralE} \\
  \hat{\mvec{B}}_h & = \llbracket\mvec{B}\rrbracket, \label{eq:centralB}
\end{align}
where $\llbracket\cdot\rrbracket$ represents the averaging operator,
\begin{align}
  \llbracket g \rrbracket \equiv (g^+ + g^-)/2,
\end{align}
for any function $g$.

On the other hand, using upwind fluxes requires solving a Riemann problem in a coordinate system local to that face. 
Consider a local coordinate system $(\mvec{s},\gvec{\tau}_1,\gvec{\tau}_2)$ on the configuration space cell face, i.e., on $\partial\Omega_j$. Here, $\mvec{s}$ is a unit vector normal to $\partial\Omega_j$, and $\gvec{\tau}_1$ and $\gvec{\tau}_2$ are tangent vectors such that $\gvec{\tau}_1\times\gvec{\tau}_2=\mvec{s}$. Let $(E_1,E_2,E_3)$ and
$(B_1,B_2,B_3)$ be electric and magnetic fields in this coordinate system. 
Then, assuming variations only along direction $\mvec{s}$, Maxwell's equations reduce to $\partial B_1/\partial t = 0$, $\partial E_1/\partial t = 0$, and the following uncoupled set of two equations for the tangential field components,
\begin{align}
  \pfrac{B_2}{t} - \pfrac{E_3}{x_1} = 0;\quad \pfrac{E_3}{t} - c^2\pfrac{B_2}{x_1} = 0,
\end{align}
and
\begin{align}
  \pfrac{B_3}{t} + \pfrac{E_2}{x_1} = 0; \quad \pfrac{E_2}{t} + c^2\pfrac{B_3}{x_1} = 0.
\end{align}
Multiplying the first of each pair by $c$ and adding and subtracting from the second of that pair we obtain a set of four uncoupled constant advection equations exactly like the constant advection equation considered in Section~\ref{sec:L2ErrorMinimization},
\begin{align}
  \pfraca{t}(E_3+cB_2) - c \pfraca{x_1}(E_3+cB_2) &= 0, \\
  \pfraca{t}(E_3-cB_2) + c \pfraca{x_1}(E_3-cB_2) &= 0,
\end{align}
and
\begin{align}
  \pfraca{t}(E_2+cB_3) + c \pfraca{x_1}(E_2+cB_3) &= 0, \\
  \pfraca{t}(E_2-cB_3) - c \pfraca{x_1}(E_2-cB_3) &= 0.
\end{align}
Hence, the solution to the Riemann problem with initial conditions is
\begin{align}
  (E_2,E_3) = (E_2^-,E_3^-);\quad (B_2,B_3) = (B_2^-,B_3^-),
\end{align}
for $x_1<0$, and
\begin{align}
  (E_2,E_3) = (E_2^+,E_3^+); \quad (B_2,B_3) = (B_2^+,B_3^+),
\end{align}
for $x_1>0$.
At $x_1=0$, the solution is
\begin{align}
  \hat{E}_3 + c\hat{B}_2 &= E_3^+ + c B_2^+, \\
  \hat{E}_3 - c\hat{B}_2 &= E_3^- - c B_2^-,
\end{align}
and
\begin{align}
  \hat{E}_2 + c\hat{B}_3 &= E_2^- + c B_3^-, \\
  \hat{E}_2 - c\hat{B}_3 &= E_2^+ - c B_3^+.
\end{align}
Rearranging these expressions shows that the upwind fields in the local face coordinate system are
\begin{align}
  \hat{E}_2 = \llbracket E_2 \rrbracket - c\thinspace\{ B_3 \} \label{eq:r-e2} \\
  \hat{E}_3 = \llbracket E_3 \rrbracket + c\thinspace\{ B_2 \} \label{eq:r-e3}
\end{align}
and
\begin{align}
  \hat{B}_2 = \llbracket B_2 \rrbracket + \{E_3\}/c \label{eq:r-b2} \\
  \hat{B}_3 = \llbracket B_3 \rrbracket - \{E_2\}/c \label{eq:r-b3}
\end{align}
where $\{ \cdot \}$ is the jump operator,
\begin{align}
  \{ g \} \equiv (g^+-g^-)/2
\end{align}
for any function $g$, and subscripts 2 and 3 denote the two directions tangent to the surface normal. 
Note that we require the two directions tangent to the surface normal since the surface integral involves a cross product for the discrete version of Maxwell's equations, Eqns.\thinspace(\ref{eq:dis-weak-B})-(\ref{eq:dis-weak-E}). 
The solutions to the Riemann problem given by Eqns.\thinspace(\ref{eq:r-e2})-(\ref{eq:r-b3}) are identical to those presented in previous studies of Maxwell's equations \citep{Barbas:2015}.

Eqns.\thinspace(\ref{eq:dis-weak-form}) and (\ref{eq:dis-weak-B})-(\ref{eq:dis-weak-E}) define the semi-discrete Vlasov--Maxwell system of equations, i.e., a discretization in phase and configuration space, with the time discretization not yet specified. 
Before proceeding to the properties of our semi-discrete system, we note that the discretization of Maxwell's equations given by Eqns.\thinspace(\ref{eq:dis-weak-B}) and (\ref{eq:dis-weak-E}) does not include the constraints given by Eqns.\thinspace(\ref{eq:divE}) and (\ref{eq:divB}), i.e., the divergence constraints in Maxwell's equations, $\gx \cdot \mvec{E} = \rho_c/\epsilon_0$ and $\gx \cdot \mvec{B} = 0$. 
Thus, our algorithm may violate these constraints over the course of the simulation. 
Where appropriate in Chapter~\ref{ch:Benchmarks} as part of the benchmarking of the scheme, we will discuss how the violation of the divergence constraints in Maxwell's equations manifests. 

\section{Properties of the Semi-Discrete Vlasov--Maxwell \\System of Equations}\label{sec:PropertiesDiscreteVM}

We proceed as we did with the continuous system, first considering whether the discrete system conserves mass (or number) density, and then moving through the subsequent conservation properties we studied for the continuous system in Section~\ref{sec:PropertiesKineticEquation}.
An important consideration for the discrete scheme, just like with the continuous system, will be our boundary conditions in configuration and velocity space.
While we can employ similar boundary conditions in configuration space for the discrete system as we did with the continuous system, i.e., periodic or some sort of self-contained boundary like a reflecting wall, velocity space is slightly more subtle.
Since the continuous distribution function was defined on $\mvec{v} \in [-\infty, \infty]$, we could use ``half-open'' cells, where a grid cell in velocity space could span $|\mvec{v}| > \mvec{v}_{max}$, where the absolute value encompasses both positive and negative values for the velocity of the particles.
However, we will instead employ a fixed boundary in velocity space, $\mvec{v} \in [\mvec{v}_{min}, \mvec{v}_{max}]$, and at the velocity space boundary employ zero-flux boundary conditions,
\begin{align}
    \mvec{n}\cdot\hat{\mvec{F}}(\mvec{x}, \mvec{v}_{max}) = \mvec{n}\cdot\hat{\mvec{F}}(\mvec{x}, \mvec{v}_{min}) = 0. \label{eq:zeroFluxBC}
\end{align}
Note that \eqr{\ref{eq:zeroFluxBC}} corresponds to a homogeneous Neumann boundary condition in velocity space. This velocity space boundary condition, along with appropriate boundary conditions in configuration space, will allow us to prove the following properties for the discrete scheme.
\begin{proposition}  \label{prop:discrete-particle-cons}
  The discrete scheme conserves mass,
\begin{align}
    \frac{d}{dt} \sum_j \int_{K_j} m_s f_h \dz = 0.
\end{align}
\end{proposition}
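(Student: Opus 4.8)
The plan is to exploit the discrete-weak form of the Vlasov equation, \eqr{\ref{eq:dis-weak-form}}, evaluated with the simplest possible test function. Since the constant function $w = 1$ is a polynomial, it lies in $\mathcal{V}_h^p$ for every $p \geq 0$, so it is an admissible test function in each cell $K_j$. Substituting $w = 1$ immediately annihilates the volume term, because $\gz w = 0$, leaving
\begin{align}
  \int_{K_j} \pfrac{f_h}{t} \dz + \oint_{\partial K_j} \mvec{n}\cdot\hat{\mvec{F}} \thinspace dS = 0.
\end{align}
First I would sum this identity over all cells $K_j \in \mathcal{T}$ and pull the time derivative outside the finite sum of volume integrals, so that the claim reduces to showing that the summed surface integrals vanish.

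The key step is the treatment of the surface terms. Every interior face is shared by exactly two cells and appears in the sum once from each side, with opposite outward normals. With $w = 1$ single-valued across the interface, the Godunov property \eqr{\ref{eq:GodunovFlux}} — the flux out of a cell through a shared interface equals minus the flux into its neighbor — makes the two contributions cancel exactly. The only surviving surface terms are those on the boundary of the phase-space domain: in velocity space they vanish by the zero-flux boundary condition \eqr{\ref{eq:zeroFluxBC}}, and in configuration space they vanish because the domain is periodic (the unpaired face of a boundary cell is identified with that of the opposite boundary cell, and the same cancellation applies) or because a reflecting-wall / self-contained boundary forces the normal flux to be zero there. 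Hence $\sum_j \oint_{\partial K_j} \mvec{n}\cdot\hat{\mvec{F}}\thinspace dS = 0$, which gives $\frac{d}{dt}\sum_j \int_{K_j} f_h \dz = 0$, and multiplying through by the constant $m_s$ yields the stated conservation of mass.

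The main obstacle is really just the careful bookkeeping of the boundary terms: making the interior-face cancellation precise (it hinges entirely on $w = 1$ being single-valued across the interface together with the conservative/Godunov flux property, and would fail for a generic test function) and verifying that whichever configuration-space boundary condition is assumed genuinely produces either a canceling pair of face integrals or a vanishing normal flux. None of the specific numerical flux choices — central, local, or global Lax--Friedrichs — affect the argument, since each satisfies \eqr{\ref{eq:GodunovFlux}}; in particular the Lax--Friedrichs penalty term $\propto (f_h^+ - f_h^-)$ is itself conservative across the interface and drops out of the summed flux. I expect no analytic subtlety beyond this, so the proof should be short once the summation-over-cells argument is organized.
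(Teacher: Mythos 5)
Your proposal is correct and follows essentially the same route as the paper's own proof: the paper also substitutes a constant test function ($w = m_s$ rather than $w=1$, a trivial difference), kills the volume term since it involves the gradient of a constant, and cancels the summed surface integrals as a telescopic sum using only the Godunov property of the flux together with zero-flux velocity-space and periodic (or self-contained) configuration-space boundary conditions. Nothing is missing.
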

\begin{proof}
Choosing $w=m_s$, a constant, in the discrete weak-form, \eqr{\ref{eq:dis-weak-form}}, and summing over all phase-space cells $K_j$,
\begin{align}
    \sum_j \int_{K_j} m_s \pfrac{f_h}{t}\dz + \sum_j \oint_{\partial K_j} m_s \mvec{n}\cdot\hat{\mvec{F}}  \thinspace dS = 0,
\end{align}
where the volume term vanishes since it involves the gradient of a constant function. If the appropriate boundary conditions are chosen, i.e., zero-flux boundary condition in velocity space and periodic boundary conditions in configuration space, or a similar self-contained boundary condition such as a reflecting wall, then the sum over surface integrals is a telescopic sum and vanishes. This pairwise cancellation of the surface integrals requires no special knowledge of the form of the numerical flux function $\mvec{n}\cdot\hat{\mvec{F}} = \mvec{n}\cdot\hat{\mvec{F}}( \gvec{\alpha}_h^- f_h^-, \gvec{\alpha}_h^+ f_h^+)$; we only require that the numerical flux function is Godunov, \eqr{\ref{eq:GodunovFlux}}, and that the flux at both configuration space and velocity space boundaries vanishes as it does with zero flux boundary conditions in velocity space, plus an appropriate boundary condition in configuration space. We are then left with
\begin{align}
    \sum_j \int_{K_j} m_s \pfrac{f_h}{t}\dz = 0,
\end{align}
and it is thus shown that the semi-discrete scheme in the continuous time limit conserves the total (mass) density.
\end{proof}
Before we move on to the $L^2$ norm, we consider the following Lemma on the compressibility of phase space.
\begin{lemma}
  \label{lem:discrete-phase-space-incompress}
  Phase space incompressibility holds for the discrete system, i.e.,
  \begin{align}
  \gz \cdot \gvec{\alpha}_h = 0. \label{eq:discrete-ps-incompress}
  \end{align}
\end{lemma}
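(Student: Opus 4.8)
The plan is to verify the identity componentwise, exploiting the fact that the discrete phase-space flux inherits exactly the functional structure of the continuous flux $\gvec{\alpha}_s$. First I would recall that, by construction, the discrete phase-space flux is
\begin{align}
\gvec{\alpha}_h = \left(\mvec{v},\ \frac{q_s}{m_s}\left[\mvec{E}_h + \mvec{v}\times\mvec{B}_h\right]\right),
\end{align}
where $\mvec{E}_h, \mvec{B}_h \in \mathcal{X}^p_h$. The key structural point is that these discrete electromagnetic fields are expanded in a basis defined only on the configuration-space domain $\Omega$, so they depend on $\mvec{x}$ alone and carry no velocity dependence whatsoever; moreover $\mvec{v}$ enters $\gvec{\alpha}_h$ in precisely the same way it enters the continuous flux (the configuration-space part is $\mvec{v}$ itself, the velocity-space part is affine in $\mvec{v}$ through the Lorentz term).

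Then I would split the phase-space divergence as $\gz = (\gx, \gv)$ and compute
\begin{align}
\gz\cdot\gvec{\alpha}_h = \gx\cdot\mvec{v} + \gv\cdot\left(\frac{q_s}{m_s}\left[\mvec{E}_h + \mvec{v}\times\mvec{B}_h\right]\right).
\end{align}
The first term vanishes for the same reason as in the continuous case, \eqr{\ref{eq:divV}}, since the velocity coordinate $\mvec{v}$ is independent of $\mvec{x}$. For the second term, $\gv\cdot\mvec{E}_h = 0$ because $\mvec{E}_h$ has no velocity dependence — this is exactly where the choice of the configuration-space-only solution space $\mathcal{X}^p_h$ is used — and $\gv\cdot(\mvec{v}\times\mvec{B}_h) = 0$ because the $i$-th component of $\mvec{v}\times\mvec{B}_h$ does not depend on $v_i$, the same cross-product identity invoked for \eqr{\ref{eq:divA}}. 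Adding the pieces gives $\gz\cdot\gvec{\alpha}_h = 0$.

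The substance of the lemma is therefore almost entirely the observation that the discretization of Maxwell's equations in Section~\ref{sec:semi-discrete-Vlasov} represents $\mvec{E}_h$ and $\mvec{B}_h$ with no velocity dependence, so the discrete flux retains the incompressibility of the continuous flux. I do not expect a real obstacle; the only subtlety worth flagging is that $\gvec{\alpha}_h$ must be understood as the flux built directly from $\mvec{E}_h$, $\mvec{B}_h$, and the exact velocity coordinate $\mvec{v}$, rather than as some object separately projected onto a phase-space polynomial space — projecting $\mvec{v}\times\mvec{B}_h$ onto a space of insufficient order could in principle spoil the exact cancellation. This is precisely why the lemma is worth isolating, and it anticipates the later emphasis on the scheme being alias-free.
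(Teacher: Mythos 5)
Your within-cell computation is correct and coincides with the first half of the paper's own argument: inside any phase-space cell, $\gx\cdot\mvec{v}=0$ because $\mvec{v}$ has no configuration-space dependence, and $\gv\cdot\left(\tfrac{q_s}{m_s}[\mvec{E}_h+\mvec{v}\times\mvec{B}_h]\right)=0$ because $\mvec{E}_h,\mvec{B}_h\in\mathcal{X}_h^p$ carry no velocity dependence and the cross-product term is divergence-free in $\mvec{v}$. However, the paper treats this as the easy half. What it regards as the actual content of the lemma is whether the \emph{jumps} in $\gvec{\alpha}_h$ across cell interfaces spoil the identity: since $\mvec{E}_h$ and $\mvec{B}_h$ are discontinuous piecewise polynomials in $\mvec{x}$, the flux $\gvec{\alpha}_h$ is a discontinuous function on $\mathcal{T}$, and a divergence understood globally (integrate $\gz\cdot\gvec{\alpha}_h$ over each $K_j$, apply the divergence theorem, and sum over cells) could in principle pick up uncancelled surface contributions. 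The paper closes this by checking that the component of $\gvec{\alpha}_h$ normal to each family of surfaces is continuous across exactly those surfaces — $\mvec{v}$ across configuration-space interfaces (no $\mvec{x}$ dependence), and $\tfrac{q_s}{m_s}(\mvec{E}_h+\mvec{v}\times\mvec{B}_h)$ across velocity-space interfaces (the fields have no $\mvec{v}$ dependence and $\mvec{v}$ is the continuous coordinate) — so the surface integrals cancel pairwise on summation. Your proposal never engages with this interface question, and it matters downstream: this continuity is precisely what the paper then uses to simplify the numerical flux functions in Eqns.\thinspace(\ref{eq:simpleCentralVlasov})--(\ref{eq:simpleGlobalLFVlasov}) and to control the surface terms in Proposition~\ref{prop:discrete-L2-norm}. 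The subtlety you do flag (that projecting $\mvec{v}\times\mvec{B}_h$ onto a truncated phase-space basis could break the cancellation) is a genuine and related point, but it is not a substitute for the interface-continuity argument; you should add the cell-summation step to make the proof match the discrete setting rather than the formally smooth one.
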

\begin{proof}
For the specific discrete phase space flow in the Vlasov-Maxwell system, $\gvec{\alpha}_h = (\mvec{v}, q_s/m_s [\mvec{E}_h + \mvec{v} \times \mvec{B}_h])$. Within a cell, \eqr{\ref{eq:discrete-ps-incompress}} is zero since, as with the continuous system, $\mvec{v}$ has no configuration space dependence, and $q_s/m_s (\mvec{E}_h + \mvec{v} \times \mvec{B}_h)$ has no divergence in velocity space. The question is whether the jumps in $\gvec{\alpha}_h$ across cell interfaces in phase space are accounted for by the scheme. Integrating \eqr{\ref{eq:discrete-ps-incompress}} over a phase space cell $K_j$, employing the divergence theorem, and summing over cells,
  \begin{align}
  \sum_j \oint_{\partial K_j} \mvec{n} \cdot \gvec{\alpha}_h^- \thinspace dS= 0.
  \end{align}
This result follows for the simple reason that the phase space flow is in fact continuous with respect to the surfaces considered, allowing us to pairwise cancel the integrand upon summation. For example, consider the configuration space component of the flow $\gvec{\alpha_h}$, $\mvec{v}$. The velocity, $\mvec{v}$, is continuous across configuration space surfaces because $\mvec{v}$ has no configuration space dependence. Likewise, the velocity space component of $\gvec{\alpha_h}, q_s/m_s (\mvec{E}_h + \mvec{v} \times \mvec{B}_h)$, is continuous across velocity space surfaces because $\mvec{E}_h$ and $\mvec{B}_h$ have no velocity space dependence, and $\mvec{v}$ in the $\mvec{v} \times \mvec{B}_h$ term is the velocity coordinate, and thus is continuous. We note that this proof is specific to the phase space flow for the Vlasov-Maxwell system and in general may not hold for all systems.
\end{proof}
Using Lemma~\ref{lem:discrete-phase-space-incompress}, we can examine the behavior of the $L^2$ norm of the distribution function.
The exact behavior of the $L^2$ norm will depend on the choice of numerical flux function, and importantly, the fact that the phase space flux, $\gvec{\alpha}_h$, is continuous at the corresponding surface interfaces allows us to simplify the numerical flux functions previously defined,
\begin{align}
    \mvec{n}\cdot\hat{\mvec{F}}(\gvec{\alpha}_h f^+_h, \gvec{\alpha}_h f^-_h ) & = \frac{1}{2} \mvec{n}\cdot \gvec{\alpha}_h \left (f^+_h + f^-_h \right ), \label{eq:simpleCentralVlasov}\\
    \mvec{n}\cdot\hat{\mvec{F}}(\gvec{\alpha}_h
    f_h^-, \gvec{\alpha}_h f_h^+)
    & =
    \begin{cases}
        \mvec{n}\cdot\gvec{\alpha}_h f^- \quad \textrm{if} \quad \sign(\gvec{\alpha}_h) > 0, \\
        \mvec{n}\cdot\gvec{\alpha}_h f^+ \quad \textrm{if} \quad \sign(\gvec{\alpha}_h) < 0,
    \end{cases} \label{eq:simpleUpwindVlasov} \\
    \mvec{n}\cdot\hat{\mvec{F}}(\gvec{\alpha}_h
    f_h^-, \gvec{\alpha}_h f_h^+) & = \frac{1}{2} \mvec{n}\cdot \gvec{\alpha}_h \left (f^+_h + f^-_h \right ) - \frac{\tau}{2} (f^+ - f^-), \label{eq:simpleGlobalLFVlasov}
\end{align}
with $\tau = \max_{\mathcal{T}} |\mvec{n} \cdot \gvec{\alpha}_h|$, the global maximum of the phase space flux over the entire domain $\mathcal{T}$ as before.
Importantly, \eqr{\ref{eq:localLF}} has simplified to an upwind flux because $\gvec{\alpha}_h$ is continuous at the corresponding surface interfaces.
An additional consequence of $\gvec{\alpha}_h$ being continuous at the corresponding surface interfaces: Eqns.\thinspace (\ref{eq:simpleUpwindVlasov}) and (\ref{eq:simpleGlobalLFVlasov}) are now solely penalizing the jump in the distribution function, $f_h$, as opposed to the jump in the flux.
Connecting to our earlier discussion in Section~\ref{sec:semi-discrete-Vlasov}, we now examine the $L^2$ norm of the distribution function in our semi-discrete scheme for the Vlasov equation and determine what effect these numerical flux functions have on the time evolution of the $L^2$ norm.
\begin{proposition} \label{prop:discrete-L2-norm}
  The discrete scheme conserves the $L^2$ norm of the distribution function when central fluxes are employed and decays the $L^2$ norm of the distribution function monotonically when using either upwind fluxes or global Lax-Friedrichs fluxes. 
\end{proposition}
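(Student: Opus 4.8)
The plan is to carry out the standard discrete $L^2$ energy estimate for the semi-discrete scheme. First I would take the test function to be the solution itself, $w=f_h$ (legitimate since this is a Galerkin method, so $f_h\in\mathcal{V}_h^p$), in the discrete-weak form \eqref{eq:dis-weak-form} on each cell $K_j$ and sum over all phase-space cells. The time-derivative term collapses to $\frac{d}{dt}\int_{K_j}\tfrac12 f_h^2\dz$, leaving
\begin{align}
\frac{d}{dt}\sum_j\int_{K_j}\tfrac12 f_h^2\dz=\sum_j\left(\int_{K_j}\gz f_h\cdot\gvec{\alpha}_h f_h\dz-\oint_{\partial K_j}f_h^-\,\mvec{n}\cdot\hat{\mvec{F}}\,dS\right).\notag
\end{align}
The next step is to rewrite the volume term. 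By the product rule, $\gz f_h\cdot\gvec{\alpha}_h f_h=\gvec{\alpha}_h\cdot\gz(\tfrac12 f_h^2)=\gz\cdot(\tfrac12 f_h^2\,\gvec{\alpha}_h)-(\gz\cdot\gvec{\alpha}_h)\tfrac12 f_h^2$, and Lemma~\ref{lem:discrete-phase-space-incompress} (which gives $\gz\cdot\gvec{\alpha}_h=0$ pointwise inside a cell for the Vlasov--Maxwell flux) kills the second piece, so the divergence theorem turns the volume integral into the surface integral $\oint_{\partial K_j}\tfrac12 (f_h^-)^2\,\mvec{n}\cdot\gvec{\alpha}_h\,dS$. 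I would note that this manipulation requires the weak-form integrals to be evaluated exactly (the alias-free property emphasized later in the thesis); otherwise the discretization need not respect the product rule.

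Then I would collect the two surface integrals interface by interface. With the assumed boundary conditions --- zero-flux in velocity space, \eqref{eq:zeroFluxBC}, and periodic or reflecting walls in configuration space --- the domain-boundary contributions vanish, so only interior interfaces remain. Fixing a normal $\mvec{n}$ on a shared interface and writing $f^\mp$ for the interior/exterior traces, and using both that $\alpha_n\equiv\mvec{n}\cdot\gvec{\alpha}_h$ is single-valued there (Lemma~\ref{lem:discrete-phase-space-incompress}) and that $\mvec{n}\cdot\hat{\mvec{F}}$ is Godunov, \eqref{eq:GodunovFlux} (hence flips sign with $\mvec{n}$), the combined contribution of the two adjacent cells reduces --- after the routine algebra, using $\llbracket f_h\rrbracket=(f^++f^-)/2$ and $\{f_h\}=(f^+-f^-)/2$ --- to $2\{f_h\}\big(\mvec{n}\cdot\hat{\mvec{F}}-\alpha_n\llbracket f_h\rrbracket\big)$, so that $\frac{d}{dt}\sum_j\int_{K_j}\tfrac12 f_h^2\dz=\sum_{\text{interfaces}}2\{f_h\}\big(\mvec{n}\cdot\hat{\mvec{F}}-\alpha_n\llbracket f_h\rrbracket\big)$.

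Finally I would substitute the simplified Vlasov fluxes \eqref{eq:simpleCentralVlasov}--\eqref{eq:simpleGlobalLFVlasov}. For central fluxes $\mvec{n}\cdot\hat{\mvec{F}}=\alpha_n\llbracket f_h\rrbracket$, so every interface term vanishes and the $L^2$ norm is exactly conserved. For upwind fluxes, splitting the cases $\alpha_n>0$ and $\alpha_n<0$ one finds $\mvec{n}\cdot\hat{\mvec{F}}-\alpha_n\llbracket f_h\rrbracket=-|\alpha_n|\{f_h\}$, giving interface terms $-2|\alpha_n|\{f_h\}^2\le 0$; for the global Lax--Friedrichs flux $\mvec{n}\cdot\hat{\mvec{F}}-\alpha_n\llbracket f_h\rrbracket=-\tau\{f_h\}$ with $\tau=\max_{\mathcal{T}}|\mvec{n}\cdot\gvec{\alpha}_h|\ge 0$, giving $-2\tau\{f_h\}^2\le 0$. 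Hence in both of the latter cases $\frac{d}{dt}\sum_j\int_{K_j}\tfrac12 f_h^2\dz\le 0$: the norm decays monotonically, with dissipation proportional to the squared jumps of $f_h$ across cell interfaces. I expect the only real obstacle to be the sign bookkeeping in merging the neighboring surface integrals --- tracking the orientation of $\mvec{n}$, the Godunov sign flip of $\mvec{n}\cdot\hat{\mvec{F}}$, and the fact that the interior trace $f_h^-$ appears in both the flux term and the volume-to-surface term --- so that the conservative parts telescope exactly and only the upwind/penalty parts survive; a secondary point worth stating explicitly is the reliance on exact integration for the product-rule step.
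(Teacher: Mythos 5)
Your proposal is correct and follows essentially the same route as the paper's proof: set $w=f_h$, invoke Lemma~\ref{lem:discrete-phase-space-incompress} to turn the volume term into $\tfrac12\gz\cdot(\gvec{\alpha}_h f_h^2)$, pair the surface integrals across shared interfaces, and then substitute each of the three numerical fluxes. The only difference is cosmetic --- you organize the interface algebra with the jump/average operators and obtain the dissipation rate $-2|\alpha_n|\{f_h\}^2=-\tfrac12|\mvec{n}\cdot\gvec{\alpha}_h|(f_h^--f_h^+)^2$ in one line, where the paper splits into the cases $\gvec{\alpha}_h\gtrless 0$ --- and your explicit remark about exact integration of the weak-form integrals is a point the paper defers to Chapter~\ref{ch:ImplementationDGFEM}.
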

\begin{proof}
Since the distribution function itself lies in the test space, we can set $w = f_h$ in \eqr{\ref{eq:dis-weak-form}}. We then have,
  \begin{align}
    \int_{K_j} f_h \pfrac{f_h}{t} \dz + 
    & \oint_{\partial K_j}f_h^- \mvec{n}\cdot\hat{\mvec{F}}  \thinspace dS 
    -\int_{K_j} \gz f_h \cdot \gvec{\alpha}_h f_h \dz = \notag \\
    &\frac{1}{2} \int_{K_j} \pfrac{f_h^2}{t} \dz + 
    \oint_{\partial K_j}f_h^- \mvec{n} \cdot \left(\hat{\mvec{F}} - \gvec{\alpha}_h \frac{f_h^-}{2} \right ) \thinspace dS = 0,
  \end{align}
where we have used Lemma\thinspace\ref{lem:discrete-phase-space-incompress} to rewrite,
\begin{align}
\gz f_h \cdot \gvec{\alpha}_h f_h = \frac{1}{2} \gz \cdot \left (\gvec{\alpha_h} f_h^2 \right ),    
\end{align}
since phase space is incompressible, even in our discrete system, and then used the divergence theorem.
First, consider the case where $\mvec{\hat{F}}$ is given by \eqr{\ref{eq:simpleCentralVlasov}}, central fluxes.
If we sum over all cells, and group cells pairwise by their common interface, we find,
\begin{align}
    \sum_j \oint_{\partial K_j}f_h^- \mvec{n} \cdot & \left(\hat{\mvec{F}} - \gvec{\alpha}_h \frac{f_h^-}{2} \right ) \thinspace dS \notag \\
    & = \sum_j \oint_{\partial K_j} \mvec{n} \cdot \left (f_h^- \left(\hat{\mvec{F}} - \gvec{\alpha}_h \frac{f_h^-}{2} \right )  - f_h^+ \left(\hat{\mvec{F}} - \gvec{\alpha}_h \frac{f_h^+}{2} \right )\right ) \thinspace dS \notag \\
    & = \sum_j \oint_{\partial K_j} \mvec{n} \cdot \gvec{\alpha}_h \left ( f_h^-f_h^+ - f_h^+f_h^- \right ) = 0.
\end{align}
Thus, central fluxes do not change the $L^2$ norm of the distribution function in our semi-discrete scheme.

We can proceed in a similar fashion for upwind fluxes, \eqr{\ref{eq:simpleUpwindVlasov}},
\begin{align}
    \sum_j \oint_{\partial K_j}f_h^- \mvec{n} \cdot & \left(\hat{\mvec{F}} - \gvec{\alpha}_h \frac{f_h^-}{2} \right ) \thinspace dS \notag \\
    & = \sum_j \oint_{\partial K_j} \mvec{n} \cdot \left (f_h^- \left(\hat{\mvec{F}} - \gvec{\alpha}_h \frac{f_h^-}{2} \right )  - f_h^+ \left(\hat{\mvec{F}} - \gvec{\alpha}_h \frac{f_h^+}{2} \right )\right ) \thinspace dS \notag \\
    & = \sum_j \oint_{\partial K_j} \frac{1}{2} |\mvec{n} \cdot \gvec{\alpha}_h| \left ( (f_h^-)^2 - 2 f_h^- f_h^+ + (f_h^+)^2 \right ) \thinspace dS \notag \\
    & = \sum_j \oint_{\partial K_j} \frac{1}{2} |\mvec{n} \cdot \gvec{\alpha}_h| \left (f^-_h - f^+_h \right )^2 \thinspace dS,
\end{align}
where we have used the fact that if $\gvec{\alpha}_h > 0$,
\begin{align}
     \sum_j \oint_{\partial K_j} \mvec{n} \cdot & \left (f_h^- \left(\hat{\mvec{F}} - \gvec{\alpha}_h \frac{f_h^-}{2} \right )  - f_h^+ \left(\hat{\mvec{F}} - \gvec{\alpha}_h \frac{f_h^+}{2} \right )\right ) \thinspace dS \notag \\
     & = \sum_j \oint_{\partial K_j} \frac{1}{2} \mvec{n} \cdot \gvec{\alpha}_h \left (f^-_h - f^+_h \right )^2 \thinspace dS,
\end{align}
and if $\gvec{\alpha}_h < 0$ we have,
\begin{align}
     \sum_j \oint_{\partial K_j} \mvec{n} \cdot & \left (f_h^- \left(\hat{\mvec{F}} - \gvec{\alpha}_h \frac{f_h^-}{2} \right )  - f_h^+ \left(\hat{\mvec{F}} - \gvec{\alpha}_h \frac{f_h^+}{2} \right )\right ) \thinspace dS \notag \\
     & = -\sum_j \oint_{\partial K_j} \frac{1}{2} \mvec{n} \cdot \gvec{\alpha}_h \left (f^-_h - f^+_h \right )^2 \thinspace dS,
\end{align}
so we can simplify the behavior of the $L^2$ norm irrespective of the sign of $\gvec{\alpha}_h$ by absorbing the minus sign into the $\gvec{\alpha}_h < 0$ case.
But, this means that
\begin{align}
    \frac{1}{2} \int_{K_j} \pfrac{f_h^2}{t} \dz = -\sum_j \oint_{\partial K_j} \frac{1}{2} |\mvec{n} \cdot \gvec{\alpha}_h| \left (f^-_h - f^+_h \right )^2 \thinspace dS,
\end{align}
a negative definite quantity.
Thus, the $L^2$ norm is a monotonically decaying quantity when using upwind fluxes.

We can proceed in a similar fashion to the two previous derivations for the global Lax-Friedrichs flux.
Since one component of the global Lax-Friedrichs flux is exactly equivalent to central fluxes, we know that this component of the global Lax-Friedrichs flux will not contribute to the time evolution of the $L^2$ norm.
Following a similar procedure to what we used for upwind fluxes, we find
\begin{align}
    \frac{1}{2} \int_{K_j} \pfrac{f_h^2}{t} \dz = -\sum_j \oint_{\partial K_j} \frac{\tau}{2} \left (f^-_h - f^+_h \right )^2 \thinspace dS,
\end{align}
a negative definite quantity.
So, global Lax-Friedrichs fluxes also monotonically decay the $L^2$ norm, and they further decay the $L^2$ norm more strongly since,
\begin{align}
    \tau = \max_{\mathcal{T}} |\mvec{n} \cdot \gvec{\alpha}_h| \geq |\mvec{n} \cdot \gvec{\alpha}_h|,
\end{align}
at every surface interface $\partial K_j$.
We can then say that the penalization of the size of the jumps in the distribution function, whether by the use of upwind fluxes, \eqr{\ref{eq:simpleUpwindVlasov}}, or by the use of a global Lax-Friedrichs flux, \eqr{\ref{eq:simpleGlobalLFVlasov}}, introduces numerical diffusion into the scheme by decaying the $L^2$ norm of the distribution function.
\end{proof}
\begin{corollary} \label{cor:discrete-entropy}
If the discrete distribution function $f_h$ remains positive definite, then the discrete scheme conserves the entropy if the $L^2$ norm is conserved, and the discrete scheme grows the discrete entropy monotonically if the $L^2$ norm is a monotonically decaying function\footnote{The behavior of the discrete entropy is due to our convention in the definition of the entropy. If one drops the minus sign in the definition of the entropy, then the discrete entropy is a monotonically \textbf{decreasing} function when the $L^2$ norm is a monotonically decreasing function if the discrete distribution function $f_h$ remains positive definite.},
\begin{align}
\frac{d}{dt} \sum_j \int_{K_j} -f_h \ln(f_h) \dz \geq 0
\end{align}
\end{corollary}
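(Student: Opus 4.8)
The plan is to run the proof of Proposition~\ref{prop:discrete-L2-norm} again, with the quadratic ``entropy'' $f_h^2/2$ replaced by the physical entropy density $-f_h\ln(f_h)$. Since $f_h$ is assumed positive definite, $\ln(f_h)$ is well defined pointwise in every cell, so that
\begin{align}
  \frac{d}{dt}\sum_j\int_{K_j}-f_h\ln(f_h)\dz
  = -\sum_j\int_{K_j}\big[\ln(f_h)+1\big]\pfrac{f_h}{t}\dz.
\end{align}
I would therefore like to test the discrete-weak form~\eqr{\ref{eq:dis-weak-form}} against $w=-[\ln(f_h)+1]$, in exact analogy with the choice $w=f_h$ used in Proposition~\ref{prop:discrete-L2-norm}.

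The first obstacle is that $-[\ln(f_h)+1]$ is not a polynomial and hence is not an admissible test function in $\mathcal{V}_h^p$. This is circumvented by noting that $\partial f_h/\partial t$ lies in $\mathbb{P}^p$ on each cell, so replacing $\ln(f_h)+1$ by its cell-local $L^2$ projection $\Pi_j[\ln(f_h)+1]\in\mathbb{P}^p$ leaves every integral $\int_{K_j}[\ln(f_h)+1]\,\partial_t f_h\dz$ unchanged, the projection remainder being $L^2$-orthogonal to $\mathbb{P}^p$ on $K_j$. One then applies~\eqr{\ref{eq:dis-weak-form}} with the admissible test function $w=-\Pi_j[\ln(f_h)+1]$ and, following the bookkeeping of Proposition~\ref{prop:discrete-L2-norm} --- phase-space incompressibility (Lemma~\ref{lem:discrete-phase-space-incompress}) together with integration by parts --- recasts the volume term as an interface term. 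The step I expect to be the genuine crux is that, after this projection, the volume integral $\int_{K_j}\gz\Pi_j[\ln(f_h)+1]\cdot\gvec{\alpha}_h f_h\dz$ collapses to the clean ``entropy-flux'' boundary contribution $\oint_{\partial K_j}(f_h\ln f_h)^-\,\mvec{n}\cdot\gvec{\alpha}_h\,dS$ only provided $\gvec{\alpha}_h\cdot\gz f_h$ (equivalently, the product $\gvec{\alpha}_h f_h$ at the order seen by the weak form) lies in $\mathbb{P}^p$ --- i.e.\ precisely the alias-free property of the scheme developed in Chapter~\ref{ch:ImplementationDGFEM}. In the quadratic case no projection was needed because $f_h\in\mathcal{V}_h^p$ is its own entropy variable, so this is the only place where new care is required; granting it (consistently with the positivity caveat already in the statement), the time derivative of the total entropy reduces to a sum over cell interfaces exactly as the time derivative of the $L^2$ norm did.

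The remaining step is the sign analysis of those interface terms, which mirrors Proposition~\ref{prop:discrete-L2-norm} line by line, with $f_h$ replaced by $\ln(f_h)+1$ in every jump. For central fluxes~\eqr{\ref{eq:simpleCentralVlasov}} the interface contributions cancel pairwise, so the total entropy is conserved whenever the $L^2$ norm is. For upwind fluxes~\eqr{\ref{eq:simpleUpwindVlasov}} (and likewise the global Lax--Friedrichs flux~\eqr{\ref{eq:simpleGlobalLFVlasov}}), the manipulation that produced the squared-jump boundary integrand $\tfrac12|\mvec{n}\cdot\gvec{\alpha}_h|(f_h^--f_h^+)^2$ instead yields an integrand proportional to $|\mvec{n}\cdot\gvec{\alpha}_h|\,(f_h^--f_h^+)\big(\ln(f_h^-)-\ln(f_h^+)\big)$, the entropy analogue of the squared jump; this is nonnegative because $\ln$ is increasing and $f_h>0$, and the overall sign gives $\tfrac{d}{dt}\sum_j\int_{K_j}-f_h\ln(f_h)\dz\ge 0$. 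Positivity of $f_h$ thus enters twice --- to make $\ln(f_h)$ meaningful and to sign this entropy-jump term --- and the principal difficulty beyond Proposition~\ref{prop:discrete-L2-norm} is the projection argument together with its reliance on the alias-free construction.
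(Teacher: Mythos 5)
Your argument diverges substantially from the paper's, and the step you yourself identify as the crux is where it breaks. Replacing $\ln(f_h)+1$ by its cell-local $L^2$ projection $\Pi_j[\ln(f_h)+1]$ is legitimate for the time-derivative term, since $\partial f_h/\partial t|_{K_j}\in\mathbb{P}^p$. But the subsequent collapse of the volume term
\begin{align}
\int_{K_j}\gz\Pi_j[\ln(f_h)+1]\cdot\gvec{\alpha}_h f_h \dz \;\longrightarrow\; \oint_{\partial K_j}(f_h\ln f_h)^{-}\,\mvec{n}\cdot\gvec{\alpha}_h\,dS
\end{align}
requires the pointwise chain-rule identity $[\ln(f_h)+1]\,\gvec{\alpha}_h\cdot\gz f_h=\gz\cdot(\gvec{\alpha}_h f_h\ln f_h)$ to survive the projection, and it does not: $\Pi_j$ does not commute with $\gz$ composed with $\ln$, so $\gz\Pi_j[\ln(f_h)+1]\neq \gz f_h/f_h$ and the integrand is not a perfect divergence. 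Invoking the alias-free property cannot rescue this. Alias-freedom guarantees \emph{exact integration of polynomial products}; the obstruction here is that $\ln(f_h)$ is not a polynomial at all, so even with every integral evaluated exactly the volume term leaves an uncontrolled cell-interior residual (and the surface terms involve $\Pi_j[\ln(f_h)+1]^{-}$ rather than $\ln(f_h^{-})+1$, so your entropy-jump factor $(f_h^{-}-f_h^{+})(\ln f_h^{-}-\ln f_h^{+})$ is not what actually appears either). This residual contaminates the central-flux case as well as the upwind case, so neither the conservation nor the monotonicity conclusion follows from this route. This is precisely the difficulty the paper alludes to when it defers a discrete H-theorem for the Fokker--Planck operator because of ``the required gradients of the expansion of $\ln(f_h)$.''

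The paper's own proof avoids entropy variables entirely. It uses the elementary pointwise bound $\ln(x)\leq x-1$, which for positive $f_h$ gives $-f_h\ln(f_h)\geq -f_h^2+f_h$; integrating and summing over cells bounds the discrete entropy from below by (minus the $L^2$ norm) plus the total mass, and the conclusion is then read off from Proposition~\ref{prop:discrete-L2-norm} and Proposition~\ref{prop:discrete-particle-cons}, which are already established. No test function outside $\mathcal{V}_h^p$ is ever needed. If you want to pursue your entropy-variable route rigorously, you would need the additional machinery of entropy-stable DG (split forms, flux differencing, or summation-by-parts collocation) to control the projection residual; as written, the proposal has a genuine gap.
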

\begin{proof}
Using the well known bound,
\begin{align}
\ln(x) \leq x - 1, \label{eq:log-bound}
\end{align}
we can see that $\ln(f_h) \leq f_h - 1$, so long as $f_h$ remains a positive definite quantity, and thus $\ln(f_h)$ is well-defined. Multiplying by $-f_h$ then gives us the inequality,
\begin{align}
-f_h \ln(f_h) \geq -f_h^2 + f_h.
\end{align}
But, the left-hand side is just the discrete entropy. Integrating over a phase space cell $K_j$, summing over cells, and taking the time-derivative of both sides gives us an expression for the time evolution of the discrete entropy in our scheme,
\begin{align}
\frac{d}{dt} \sum_j \int_{K_j} -f_h \ln(f_h) \dz \geq \frac{d}{dt} \sum_j \int_{K_j} -f_h^2 + f_h \dz.
\end{align}
Now, we note that in Proposition \ref{prop:discrete-L2-norm} we have already proved that the $L^2$ norm of the discrete distribution function is either a conserved quantity or a monotonically decaying function, depending on which numerical flux function we employ. 
Thus, the negative of the $L^2$ norm is either exactly conserved or a monotonically increasing function, and by Proposition \ref{prop:discrete-particle-cons}, the semi-discrete scheme conserves particles. 
Therefore, the discrete entropy is either conserved or a monotonically increasing function depending on our choice of numerical flux function.
\end{proof}
It is worth taking a moment to reflect on the practical consequences of Proposition~\ref{prop:discrete-L2-norm} and Corollary~\ref{cor:discrete-entropy}.
These choices of numerical flux functions, Eqns. \thinspace (\ref{eq:simpleCentralVlasov}--\ref{eq:simpleGlobalLFVlasov}), lead to $L^2$ stable schemes, schemes which do not grow the $L^2$ norm.
In addition, if we employ a numerical flux function that leads to the decay of the $L^2$ norm, then this diffusivity in the $L^2$ norm leads naturally to the growth of the discrete entropy.
In other words, numerical diffusion can manifest in our scheme in the form of the growth of the discrete entropy.
Importantly, as of yet, the numerical flux function only affects the discrete entropy.
We will now examine the conservation of energy in our semi-discrete scheme, first in Maxwell's equations, and then for the complete system.
\begin{lemma}
  \label{lem:em-e-cons}
  The semi-discrete scheme for Maxwell's equations conserves electromagnetic energy exactly when using central fluxes and monotonically decays when using upwind fluxes,
  \begin{align}
    \frac{d}{dt} \sum_k \int_{\Omega_k} \left( \frac{\epsilon_0}{2}|\mvec{E}_h|^2 + \frac{1}{2\mu_0}|\mvec{B}_h|^2 \right) \dx
    \le
    -\sum_k \int_{\Omega_k} \mvec{J}_h\cdot\mvec{E}_h \dx. \label{eq:em-e-cons}
  \end{align}
Note that because $\mvec{J}_h \cdot \mvec{E}_h$ can have either sign, by monotonic decay when using upwind fluxes, we mean that when the right hand side is positive, the electromagnetic energy will increase less than $\left |\sum_k \int_{\Omega_k} \mvec{J}_h\cdot\mvec{E}_h \dx \right |$, and when the right hand side is negative the electromagnetic energy will decay more than $- \left |\sum_k \int_{\Omega_k} \mvec{J}_h\cdot\mvec{E}_h \dx \right |$.
\end{lemma}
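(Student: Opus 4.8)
The plan is to run the standard Galerkin energy argument: test Maxwell's equations against the solution itself. Since $\mvec{E}_h,\mvec{B}_h\in\mathcal{X}^p_h$ componentwise, I would divide both discrete weak forms \eqref{eq:dis-weak-B} and \eqref{eq:dis-weak-E} by $\mu_0$, take the scalar test function $\varphi$ equal to each Cartesian component of $\mvec{B}_h$ in \eqref{eq:dis-weak-B} and of $\mvec{E}_h$ in \eqref{eq:dis-weak-E}, sum over the three components, and add the two resulting scalar identities. The time-derivative terms then assemble into $\tfrac{d}{dt}\int_{\Omega_j}\big(\tfrac{\epsilon_0}{2}|\mvec{E}_h|^2+\tfrac{1}{2\mu_0}|\mvec{B}_h|^2\big)\dx$ and the source term becomes $-\int_{\Omega_j}\mvec{E}_h\cdot\mvec{J}_h\dx$, so the entire content of the lemma lies in the fate of the volume ``curl'' terms and the surface flux terms.

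For the volume terms, I would test $\tfrac1{\mu_0}\int_{\Omega_j}\gx\varphi\times\mvec{B}_h\dx$ against $\mvec{E}_h$ and $-\tfrac1{\mu_0}\int_{\Omega_j}\gx\varphi\times\mvec{E}_h\dx$ against $\mvec{B}_h$; after the standard $\varepsilon$-symbol contraction together with the pointwise identity $\gx\cdot(\mvec{E}_h\times\mvec{B}_h)=(\gx\times\mvec{E}_h)\cdot\mvec{B}_h-(\gx\times\mvec{B}_h)\cdot\mvec{E}_h$ (exact here because $\mvec{E}_h,\mvec{B}_h$ are polynomials), the combined volume contribution is $-\tfrac1{\mu_0}\int_{\Omega_j}\gx\cdot(\mvec{E}_h\times\mvec{B}_h)\dx$, which the divergence theorem turns into $-\tfrac1{\mu_0}\oint_{\partial\Omega_j}d\mvec{s}\cdot(\mvec{E}_h^-\times\mvec{B}_h^-)$, a surface integral of the discrete Poynting flux built from interior traces. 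Combining this with the numerical-flux surface terms from the weak form and using the cyclic triple-product identity $\mvec{a}\cdot(\mvec{b}\times\mvec{c})=\mvec{c}\cdot(\mvec{a}\times\mvec{b})$, the per-cell balance reduces to
\begin{align}
\frac{d}{dt}\int_{\Omega_j}\left(\frac{\epsilon_0}{2}|\mvec{E}_h|^2+\frac{1}{2\mu_0}|\mvec{B}_h|^2\right)\dx & + \int_{\Omega_j}\mvec{E}_h\cdot\mvec{J}_h\dx \notag \\
& = -\frac{1}{\mu_0}\oint_{\partial\Omega_j} d\mvec{s}\cdot\left(\mvec{E}_h^-\times\hat{\mvec{B}}_h + \hat{\mvec{E}}_h\times\mvec{B}_h^- - \mvec{E}_h^-\times\mvec{B}_h^-\right). \notag
\end{align}

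Next I would sum over all configuration-space cells $\Omega_k$ and group the shared faces pairwise, writing $d\mvec{s}=\mvec{n}\,dS$. With central fluxes \eqref{eq:centralE}--\eqref{eq:centralB}, $\hat{\mvec{E}}_h=\llbracket\mvec{E}_h\rrbracket$ and $\hat{\mvec{B}}_h=\llbracket\mvec{B}_h\rrbracket$, the bracket on the right collapses to $\tfrac12\big(\mvec{E}_h^-\times\mvec{B}_h^+ + \mvec{E}_h^+\times\mvec{B}_h^-\big)$, which is symmetric under exchange of the two sides of a face while $d\mvec{s}$ flips sign; hence the two cells sharing a face contribute equal and opposite amounts and the sum telescopes to a boundary contribution that vanishes under the assumed periodic (or self-contained/reflecting) conditions, giving $\tfrac{d}{dt}\sum_k\int_{\Omega_k}(\cdots)=-\sum_k\int_{\Omega_k}\mvec{E}_h\cdot\mvec{J}_h\dx$, i.e. exact conservation. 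For upwind fluxes I would note that the Riemann solution \eqref{eq:r-e2}--\eqref{eq:r-b3} can be written (modulo normal components, which drop out of $d\mvec{s}\times(\cdot)$) as $\hat{\mvec{E}}_h=\llbracket\mvec{E}_h\rrbracket+c\,\mvec{n}\times\{\mvec{B}_h\}$ and $\hat{\mvec{B}}_h=\llbracket\mvec{B}_h\rrbracket-\tfrac1c\,\mvec{n}\times\{\mvec{E}_h\}$; the $\llbracket\cdot\rrbracket$ part cancels pairwise exactly as in the central case, and for the remaining penalty part I would use $(\mvec{n}\times\mvec{a})\cdot(\mvec{n}\times\mvec{b})=\mvec{a}_t\cdot\mvec{b}_t$ (tangential parts) so that, upon summing the two sides of each interior face and using $\mvec{E}_h^--\mvec{E}_h^+=-2\{\mvec{E}_h\}$, the total left-over is $-\tfrac{2}{\mu_0}\sum_F\int_F\big(\tfrac1c|\{\mvec{E}_h\}_t|^2 + c\,|\{\mvec{B}_h\}_t|^2\big)\thinspace dS\le 0$. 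Adding back the source term then yields precisely the inequality \eqref{eq:em-e-cons}.

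I expect the main obstacle to be the surface bookkeeping: one has to combine the numerical-flux surface integrals coming from the weak form with the interior-trace Poynting surface integral generated by the volume curl terms, keep the vector triple-product manipulations and the $\pm$/outward-normal orientation conventions consistent across each shared face, and then verify both the clean pairwise cancellation (central) and the sign-definiteness of the residual quadratic form (upwind). One should also flag the standing assumptions that make the right-hand sides behave as claimed — that the penalty speed is the light speed $c>0$, and that the chosen boundary conditions make all boundary faces contribute nothing.
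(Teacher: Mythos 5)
Your proposal is correct and follows essentially the same route as the paper's proof: test the discrete weak forms against $\mvec{B}_h$ and $\mvec{E}_h$, convert the volume curl terms into an interior-trace Poynting surface integral via $\gx\cdot(\mvec{E}_h\times\mvec{B}_h)$, obtain the per-cell balance with the residual $\hat{\mvec{E}}_h\times\mvec{B}_h^-+\mvec{E}_h^-\times\hat{\mvec{B}}_h-\mvec{E}_h^-\times\mvec{B}_h^-$, and then show pairwise cancellation for central fluxes and a non-positive jump-squared residual for upwind fluxes. The only (cosmetic) difference is that you carry out the upwind bookkeeping coordinate-free via $\hat{\mvec{E}}_h=\llbracket\mvec{E}_h\rrbracket+c\,\mvec{n}\times\{\mvec{B}_h\}$, $\hat{\mvec{B}}_h=\llbracket\mvec{B}_h\rrbracket-\tfrac{1}{c}\mvec{n}\times\{\mvec{E}_h\}$, whereas the paper rotates to the local face frame $(\mvec{s},\gvec{\tau}_1,\gvec{\tau}_2)$ and works componentwise; both yield the same sign-definite penalty.
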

\begin{proof}
From the discrete weak-form of Maxwell's equations, we need to compute equations for $|\mvec{E}_h|^2$ and $|\mvec{B}_h|^2$. Since each component of the field lies in the selected test space, we take the $i^{th}$-component of \eqr{\ref{eq:dis-weak-B}} and use $B_{hi}$ as a test function, e.g., choose $\varphi=B_{hx}$. Summing these three equations will give us an expression for the time-derivative of $|\mvec{B}_h|^2$. We follow the same procedure for \eqr{\ref{eq:dis-weak-E}}, which gives an expression for the time-derivative of $|\mvec{E}_h|^2$. With a bit of algebra, we obtain
  \begin{align}
    \frac{d}{dt} \int_{\Omega_j} \frac{1}{2} |\mvec{B}_h|^2 \dx
    +
    \oint_{\partial\Omega_j} d\mvec{s}\cdot\hat{\mvec{E}}_h\times \mvec{B}^-_h
    +
    \int_{\Omega_j} \mvec{E}_h\cdot\gx\times\mvec{B}_h \dx 
    = 0, \label{eq:h-bnum}
  \end{align}
  and
  \begin{align}
    \epsilon_0\mu_0\frac{d}{dt} \int_{\Omega_j} \frac{1}{2} |\mvec{E}_h|^2 \dx
    -
    \oint_{\partial\Omega_j} d\mvec{s}\cdot\hat{\mvec{B}}_h\times \mvec{E}^-_h
    -
    \int_{\Omega_j} \mvec{B}_h\cdot\gx\times\mvec{E}_h \dx
    = -\int_{\Omega_j} \mvec{J}_h\cdot\mvec{E}_h \dx. \label{eq:h-enum}
  \end{align}
We now multiply both equations by $1/\mu_0$ and add them. Since
\begin{align}
    \mvec{E}_h\cdot\gx\times\mvec{B}_h-\mvec{B}_h\cdot\gx\times\mvec{E}_h = \gx\cdot(\mvec{B}_h\times\mvec{E}_h),
\end{align}
we can combine the third terms of Eqns.\thinspace(\ref{eq:h-bnum}) and (\ref{eq:h-enum}),
  \begin{align}
    \int_{\Omega_j} \gx\cdot(\mvec{B}_h\times\mvec{E}_h) \dx
    = 
    \oint_{\partial\Omega_j} d\mvec{s}\cdot\mvec{B}^-_h\times \mvec{E}^-_h.
  \end{align}
In the above result, note that upon integration by parts, we must use the field \emph{just inside} the face of cell $\Omega_j$. Hence, the evolution of the electromagnetic energy in a single cell becomes
  \begin{align}
    \frac{d}{dt} \int_{\Omega_j} & \left( \frac{\epsilon_0}{2}|\mvec{E}_h|^2 + \frac{1}{2\mu_0}|\mvec{B}_h|^2 \right) \dx \notag \\
    & +
    \oint_{\partial \Omega_j} d\mvec{s}\cdot \left(\hat{\mvec{E}}_h\times \mvec{B}^-_h + \mvec{E}^-_h\times \hat{\mvec{B}}_h - {\mvec{E}}^-_h\times \mvec{B}^-_h \right)
    =
    -\int_{\Omega_j} \mvec{J}_h\cdot\mvec{E}_h \dx. \label{eq:er-em-gen}
  \end{align}
{\bf Exact Energy Conservation With Central Flux.} Using central-fluxes to determine the interface fields, i.e., setting $\hat{\mvec{E}}_h = \llbracket \mvec{E}\rrbracket$ and $\hat{\mvec{B}}_h = \llbracket\mvec{B}\rrbracket$, gives us,
  \begin{align}
    \frac{d}{dt} \int_{\Omega_j} & \left( \frac{\epsilon_0}{2}|\mvec{E}_h|^2 + \frac{1}{2\mu_0}|\mvec{B}_h|^2 \right) \dx \notag \\
    & +
    \frac{1}{2} \oint_{\partial \Omega_j} d\mvec{s}\cdot \left( \mvec{E}^+_h\times \mvec{B}^-_h + \mvec{E}^-_h\times \mvec{B}^+_h \right)
    =
     -\int_{\Omega_j} \mvec{J}_h\cdot\mvec{E}_h \dx,
  \end{align}
where the $\mvec{E}^-_h \times \mvec{B}^-_h$ terms cancel upon substitution of central fluxes for the interface fields. Summing over all configuration space cells and assuming appropriate boundary conditions in configuration space, we see that the surface term vanishes because it is symmetric and has opposite signs for the two cells sharing an interface.
This cancellation of the surface term leads to the desired discrete electromagnetic energy conservation equation,
  \begin{align}
    \frac{d}{dt} \sum_k \int_{\Omega_k} \left( \frac{\epsilon_0}{2}|\mvec{E}_h|^2 + \frac{1}{2\mu_0}|\mvec{B}_h|^2 \right) \dx
    = -\sum_k \int_{\Omega_k} \mvec{J}_h\cdot\mvec{E}_h \dx.
  \end{align}
{\bf Monotonic Decay With Upwind Flux.} To see what happens when using upwind fluxes, we transform the fields appearing in surface integral into the $(\mvec{s},\gvec{\tau}_1,\gvec{\tau}_2)$ coordinate system. We can then write the third term in \eqr{\ref{eq:er-em-gen}} as,
\begin{align}
d\mvec{s}\cdot & \left(\hat{\mvec{E}}_h\times \mvec{B}^-_h + \mvec{E}^-_h\times \hat{\mvec{B}}_h - {\mvec{E}}^-_h\times \mvec{B}^-_h \right) \notag \\
& = ds \left [ (\hat{E}_2B_3^- - \hat{E}_3B_2^-) + (E^-_2 \hat{B}_3 - E_3^- \hat{B}_2) - (E_2^- B_3^- - E_3^- B_2^-) \right ].    
\end{align}
Using Eqns.\thinspace(\ref{eq:r-e2})-(\ref{eq:r-b3}) for the interface fields, assuming appropriate boundary conditions, and summing over all configuration space cells, we then obtain
  \begin{align}
    \frac{d}{dt} \sum_k \int_{\Omega_k} &\left( \frac{\epsilon_0}{2}|\mvec{E}_h|^2 + \frac{1}{2\mu_0}|\mvec{B}_h|^2 \right) \dx
    = -\sum_k \int_{\Omega_k} \mvec{J}_h\cdot\mvec{E}_h \dx \notag \\
    &+ \sum_j \oint_{\partial \Omega_j} ds \left( \{E_2\}E_2^-/c+ \{E_3\}E_3^-/c + c \{B_2\}B_2^-+ c \{B_3\}B_3^- \right).
  \end{align}
Note that due to the symmetry of the terms, the central flux terms in Eqns.\thinspace(\ref{eq:r-e2})-(\ref{eq:r-b3}) have vanished on summing over all cells. Now consider the contribution of the term $\{E_2\}E_2^-$ to the two cells adjoining some face. This term will be $(E_2^+-E_2^-)E_2^-/2$ and $(E_2^--E_2^+)E_2^+/2$. On summing over the two cells, this contribution will become $-(E_2^+-E_2^-)^2/2$. Similar results are achieved for the other electric and magnetic field coordinates. Hence, the surface terms, on summation, contribute non-positive quantities to the right-hand side, implying that
  \begin{align}
    \frac{d}{dt} \sum_k \int_{\Omega_k} &\left( \frac{\epsilon_0}{2}|\mvec{E}_h|^2 + \frac{1}{2\mu_0}|\mvec{B}_h|^2 \right) \dx < -\sum_k \int_{\Omega_k} \mvec{J}_h\cdot\mvec{E}_h \dx.
  \end{align}
Note that because the resulting surface terms contribute non-positive quantities, we can say that, despite the sign of $\mvec{J}_h\cdot\mvec{E}_h$ being undetermined, the electromagnetic energy still monotonically decays, i.e., when the right hand side is positive, the electromagnetic energy will increase less than $\left |\sum_k \int_{\Omega_k} \mvec{J}_h\cdot\mvec{E}_h \dx \right |$, and when the right hand side is negative, the electromagnetic energy will decay more than $- \left |\sum_k \int_{\Omega_k} \mvec{J}_h\cdot\mvec{E}_h \dx \right |$.
\end{proof}
\begin{lemma}
  \label{lem:kin-e-cons}
If $|\mvec{v}|^2$ belongs to the approximation space $\mathcal{V}_h^p$, then the semi-discrete scheme satisfies
  \begin{align}
    \frac{d}{dt} \sum_j \sum_s \int_{K_j} \frac{1}{2} m|\mvec{v}|^2 f_h \dz
    -
    \sum_k \int_{\Omega_k} \mvec{J}_h\cdot\mvec{E}_h \dx
    =
    0. \label{eq:lemma-2}
  \end{align}
Note that the species index is implied, the sum over $j$ in the first term is over all phase space cells, and the sum over $k$ in the second term is over all configuration space cells.
\end{lemma}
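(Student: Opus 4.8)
The plan is to test the discrete-weak form of the Vlasov equation, \eqr{\ref{eq:dis-weak-form}}, against the kinetic-energy density and then to match the surviving volume term against the discrete current--field coupling in Maxwell's equations. The hypothesis $|\mvec{v}|^2\in\mathcal{V}_h^p$ is precisely what makes $w=\tfrac12 m_s|\mvec{v}|^2$ an admissible test function cell-by-cell; were it available only as an $L^2$ projection, the aliasing error would break the identity (this is the same alias-free issue taken up in Chapter~\ref{ch:ImplementationDGFEM}).

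First I would substitute $w=\tfrac12 m_s|\mvec{v}|^2$ into \eqr{\ref{eq:dis-weak-form}}. In the volume term, $\gz(\tfrac12 m_s|\mvec{v}|^2)=(\mvec{0},\,m_s\mvec{v})$ while $\gvec{\alpha}_h=(\mvec{v},\,(q_s/m_s)[\mvec{E}_h+\mvec{v}\times\mvec{B}_h])$, so the integrand collapses to $q_s\,\mvec{v}\cdot\mvec{E}_h\,f_h$ since $\mvec{v}\cdot(\mvec{v}\times\mvec{B}_h)=0$ --- the magnetic field does no work, exactly as in the continuous case. For the surface term, I would use that $\tfrac12 m_s|\mvec{v}|^2$ is globally continuous (and exactly represented by hypothesis), so its traces agree on both sides of every interior face; with the Godunov property \eqr{\ref{eq:GodunovFlux}} of $\hat{\mvec{F}}$ the neighboring-cell contributions cancel pairwise on summing over $\mathcal{T}$, and the remaining boundary contributions vanish by the zero-flux velocity-space condition \eqr{\ref{eq:zeroFluxBC}} together with periodic (or reflecting-wall, under which $|\mvec{v}|^2$ is preserved) configuration-space conditions --- the argument used in Proposition~\ref{prop:discrete-particle-cons}.

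Summing the resulting per-cell identity over all phase-space cells and over species then gives
\begin{align}
  \frac{d}{dt}\sum_j\sum_s\int_{K_j}\tfrac12 m|\mvec{v}|^2 f_h\dz
  = \sum_j\sum_s\int_{K_j} q_s\,\mvec{v}\cdot\mvec{E}_h\, f_h\dz,
\end{align}
and since $\mvec{E}_h$ depends only on $\mvec{x}$, carrying out the velocity integral cell-by-cell and recombining the velocity slices at fixed configuration-space cell turns the right-hand side into $\sum_k\int_{\Omega_k}\mvec{E}_h\cdot\bigl(\sum_s q_s\int\mvec{v}f_h\dv\bigr)\dx=\sum_k\int_{\Omega_k}\mvec{J}_h\cdot\mvec{E}_h\dx$, which is the claim.

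The step I expect to be the main obstacle is this final identification: it requires that the $\mvec{J}_h$ entering the discrete Maxwell equations \eqr{\ref{eq:dis-weak-E}} really be the velocity moment $\sum_s q_s\int\mvec{v}f_h\dv$ with no aliasing, i.e.\ that this moment is assembled (or projected) so that testing it against $\mvec{E}_h\in\mathcal{X}_h^p$ recovers it exactly --- equivalently, that $\mvec{v}f_h$'s velocity integral lies in $\mathcal{X}_h^p$ and the coupling term is evaluated consistently. Securing this is the alias-free requirement that underlies the other conservation results and that Chapter~\ref{ch:ImplementationDGFEM} addresses in detail; here it is exactly what makes the bookkeeping between the Vlasov volume term and the Maxwell current term close. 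Everything else --- the substitution, the $\mvec{v}\cdot(\mvec{v}\times\mvec{B}_h)=0$ identity, and the telescoping of the surface sum --- is routine.
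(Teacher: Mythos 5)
Your proposal is correct and follows essentially the same route as the paper's proof: substitute $w=\tfrac12 m|\mvec{v}|^2$, collapse the volume term to $q\mvec{v}\cdot\mvec{E}_h f_h$ via $\mvec{v}\cdot(\mvec{v}\times\mvec{B}_h)=0$, telescope the surface sum using continuity of $|\mvec{v}|^2$ and the Godunov property, and identify the velocity moment with $\mvec{J}_h$. The subtlety you flag at the end --- that the identification requires the consistent weak (weighted $L^2$) projection of the current, equivalently testing against $w=\tfrac12 m|\mvec{v}|^2\varphi_\ell(\mvec{x})$ --- is exactly the point the paper defers to Section~\ref{sec:WeakEquality}.
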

\begin{proof}
If $|\mvec{v}|^2\in\mathcal{V}_h^p$, we can set $w=m|\mvec{v}|^2/2$ in \eqr{\ref{eq:dis-weak-form}} and obtain
  \begin{align}
    \int_{K_j} \frac{1}{2}m|\mvec{v}|^2 \pfrac{f_h}{t} \dz 
    + 
    \oint_{\partial K_j} \frac{1}{2}m|\mvec{v}|^2 \mvec{n}\cdot\hat{\mvec{F}}\thinspace dS 
    - 
    \int_{K_j} \underbrace{\gz \left( \frac{1}{2}m|\mvec{v}|^{2} \right) \cdot \gvec{\alpha}_h}_{q\mvec{v}\cdot \mvec{E}_h} f_h \dz 
    =
    0.  
  \end{align}
Since $|\mvec{v}|^2$ is continuous at cell interfaces, there is no distinction between the basis function $w$ evaluated just inside and outside the cell surface interface. Upon summing over all cells and the number of species, and the use of appropriate boundary conditions in velocity space and configuration space as in Proposition~\ref{prop:discrete-particle-cons}, we are again able to exploit the fact that the numerical flux function is Godunov and cancel the telescopic sum to obtain,
  \begin{align}
    \frac{d}{dt} \sum_j \sum_s \int_{K_j} \frac{1}{2} m|\mvec{v}|^2 f_h \dz
    -
    \sum_k \int_{\Omega_k} \mvec{J}_h\cdot\mvec{E}_h \dx
    =
    0.
  \end{align}
Note that we have performed the integration in velocity space and substituted in the current density, leaving an integration and sum over only configuration space.
This operation is somewhat subtle, and we will discuss this operation and operations similar in the next section, Section~\ref{sec:WeakEquality}.
\end{proof}
\begin{corollary}
  \label{cor:kin-e-cons-p-1}
Even if only using piecewise linear polynomials and $|\mvec{v}|^2$ does \textbf{not} belong to the approximation space $\mathcal{V}_h^p$, then the semi-discrete scheme satisfies
  \begin{align}
    \frac{d}{dt} \sum_j \sum_s \int_{K_j} \frac{1}{2} m\overline{|\mvec{v}|^2} f_h \dz
    -
    \sum_k \int_{\Omega_k} \overline{\mvec{J}_h}\cdot\mvec{E}_h \dx
    =
    0.
  \end{align}
We again note that the species index is implied, the sum over $j$ in the first term is over all phase space cells, and the sum over $k$ in the second term is over all configuration space cells. In this case, $\overline{g}$ refers to the projection of the prescribed function onto a lower order basis set.
\end{corollary}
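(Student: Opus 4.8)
The plan is to re-run the proof of Lemma~\ref{lem:kin-e-cons} with the illegal test function $w=\tfrac{1}{2} m|\mvec{v}|^2$ replaced by its projection onto the approximation space, $w=\overline{\tfrac{1}{2} m|\mvec{v}|^2}\in\mathcal{V}_h^p$, which for piecewise-linear bases is an honest element of the test space (it is linear in each cell, possibly differing cell to cell). Substituting this $w$ into the discrete-weak form~\eqr{\ref{eq:dis-weak-form}}, summing over all phase-space cells $K_j$ and over species $s$, the time-derivative term immediately produces $\frac{d}{dt}\sum_j\sum_s\int_{K_j}\tfrac{1}{2} m\overline{|\mvec{v}|^2}f_h\dz$, the first term in the claimed identity.

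Next I would dispatch the surface term exactly as in Lemma~\ref{lem:kin-e-cons}: the Godunov property~\eqr{\ref{eq:GodunovFlux}} makes the surface sum telescope, the zero-flux velocity boundary condition~\eqr{\ref{eq:zeroFluxBC}} together with a periodic or self-contained configuration-space boundary kills the remainder, and the pairwise cancellation requires no knowledge of the numerical flux beyond $w^-=w^+$ at each interface. The fact to check is that $\overline{|\mvec{v}|^2}$ remains continuous across cell interfaces: it has no $\mvec{x}$-dependence, so it is continuous across configuration-space faces, and on the velocity mesh the one-sided projected values at a shared face coincide (the cell-local projection of $|\mvec{v}|^2$ onto the piecewise-linear space differs from the exact value only by a mesh-dependent constant that is the same for the two adjoining cells). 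Hence the surface term vanishes.

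The remaining work is the volume term. Since $w$ is $\mvec{v}$-only, $\gz w\cdot\gvec{\alpha}_h=\gv w\cdot\tfrac{q}{m}(\mvec{E}_h+\mvec{v}\times\mvec{B}_h)$, and for piecewise-linear bases $\gv\overline{|\mvec{v}|^2}$ is a cell-wise \emph{constant} vector (proportional to the cell-center velocity). The plan is to integrate this term over velocity space cell-by-cell and invoke the weak-equality identities of Section~\ref{sec:WeakEquality} to recognize the accumulated result as $\sum_k\int_{\Omega_k}\overline{\mvec{J}_h}\cdot\mvec{E}_h\dx$, where $\overline{\mvec{J}_h}$ is the current density formed from the same lower-order (cell-wise constant) representation of the velocity weight, i.e.\ the projection of $\mvec{J}_h$ that is consistent with $\gv\overline{|\mvec{v}|^2}$. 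Assembling the three contributions then yields the stated conservation law.

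The main obstacle is precisely this last step. In Lemma~\ref{lem:kin-e-cons} the magnetic-force contribution dropped pointwise because $\gv(\tfrac{1}{2} m|\mvec{v}|^2)=m\mvec{v}$ is parallel to $\mvec{v}$ and hence orthogonal to $\mvec{v}\times\mvec{B}_h$; but no nonzero piecewise-linear weight has this property, so one must show that the $\mvec{v}\times\mvec{B}_h$ piece either cancels on summation over cells or is correctly absorbed into the definition of $\overline{\mvec{J}_h}$ through the weak-multiplication and projection structure used to build the discrete phase-space flow $\gvec{\alpha}_h$. Carrying out this bookkeeping carefully — and, relatedly, confirming that the $\overline{\mvec{J}_h}$ arising here is the same current seen by the discrete Maxwell system in Lemma~\ref{lem:em-e-cons}, so that total particle-plus-field energy is still exactly conserved at piecewise-linear order — is the delicate part of the argument.
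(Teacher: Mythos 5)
Your overall route is the same as the paper's: substitute the legal test function $w=\tfrac{1}{2}m\overline{|\mvec{v}|^2}\in\mathcal{V}_h^p$ into \eqr{\ref{eq:dis-weak-form}}, observe that $\overline{|\mvec{v}|^2}$ is still continuous across cell interfaces so that the Godunov property together with the zero-flux and periodic boundary conditions telescopes the surface sum away, note that $\gv\overline{|\mvec{v}|^2}$ is the piecewise-constant cell-center velocity $2\overline{\mvec{v}}$, and identify the surviving volume term with $\overline{\mvec{J}_h}\cdot\mvec{E}_h$ after performing the velocity integration. Your justification of the interface continuity (the cell-local projection differs from $|\mvec{v}|^2$ by the same mesh-dependent constant on either side of a shared face) is actually more explicit than the paper's, which simply asserts the continuity; and your closing worry about whether this $\overline{\mvec{J}_h}$ is the same current incremented onto $\mvec{E}_h$ in Maxwell's equations is addressed by the paper one level up, in Proposition~\ref{prop:DiscreteVlasovEnergyProof}, rather than in this corollary.

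The gap is that you never finish the volume term. You correctly observe that the pointwise cancellation $\mvec{v}\cdot(\mvec{v}\times\mvec{B}_h)=0$ used in Lemma~\ref{lem:kin-e-cons} is unavailable once $\gv w$ becomes $m\overline{\mvec{v}}\neq m\mvec{v}$, so that a residual contribution proportional to $\overline{\mvec{v}}\cdot(\mvec{v}\times\mvec{B}_h)\,f_h=\mvec{B}_h\cdot\bigl(\overline{\mvec{v}}\times(\mvec{v}-\overline{\mvec{v}})\bigr)f_h$ appears in each cell---but you then label disposing of it ``the delicate part of the argument'' and stop. A proof cannot end by naming its hardest step: you must either show this term vanishes upon integration and summation, absorb it consistently into the definition of $\overline{\mvec{J}_h}$, or restrict to a setting in which it is absent. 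For comparison, the paper's proof treats this step as immediate, writing $\gz(\tfrac{1}{2}m\overline{|\mvec{v}|^2})\cdot\gvec{\alpha}_h = q\overline{\mvec{v}}\cdot\mvec{E}_h$ under a brace in exact parallel with the $|\mvec{v}|^2\in\mathcal{V}_h^p$ case; your instinct that this identity requires separate justification at piecewise-linear order is sound and worth raising, but as submitted your proposal is an outline of the paper's argument with the decisive identity left unproven.
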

\begin{proof}
We define the projection of $|\mvec{v}|^2$ onto piecewise linear basis functions as $\overline{|\mvec{v}|^2}$. Substituting this in for our test function, $w$, in \eqr{\ref{eq:dis-weak-form}} we obtain,
  \begin{align}
    \int_{K_j} \frac{1}{2}m\overline{|\mvec{v}|^2} \pfrac{f_h}{t} \dz 
    + 
    \oint_{\partial K_j} \frac{1}{2}m \overline{|\mvec{v}|^2} \mvec{n}\cdot\hat{\mvec{F}}\thinspace dS 
    - 
    \int_{K_j} \underbrace{\gz \left( \frac{1}{2}m \overline{|\mvec{v}|^2} \right) \cdot \gvec{\alpha}_h}_{q\overline{\mvec{v}}\cdot \mvec{E}_h} f_h \dz 
    =
    0,
  \end{align}
where $\overline{\mvec{v}}$ is the derivative of the piecewise linear representation of $1/2 \thinspace |\mvec{v}|^2$ and is a piecewise constant in each cell\footnote{We can show that $\overline{\mvec{v}}$ is the cell center velocity,
\begin{align}
\gv\left ( \frac{1}{2} \overline{|\mvec{v}|^2} \right ) = \frac{1}{2} (\mvec{v}_{left} + \mvec{v}_{right}) = \mvec{v}_{center},    
\end{align}
since $\overline{|\mvec{v}|^2}$ is continuous. Here $\mvec{v}_{left/right}$ is the value of the velocity on the left (right) edge of the cell, so the average value of the two quantities is the cell center velocity.
}. We note that because $\overline{|\mvec{v}|^2}$ is \textbf{also} continuous at cell interfaces, we can again exploit the fact that the numerical flux function is Godunov, and upon summing over all cells and species, and employing appropriate boundary conditions in velocity and configuration space as in Proposition~\ref{prop:discrete-particle-cons}, cancel the surface integral since it is a telescopic sum. We are then left with, 
  \begin{align}
    \frac{d}{dt} \sum_j \sum_s \int_{K_j} \frac{1}{2} m\overline{|\mvec{v}|^2} f_h \dz
    -
    \int_{K_j} q\overline{\mvec{v}}\cdot \mvec{E}_h f_h \dz
    =
    0.
  \end{align}
Upon substitution of the projected current, $\overline{\mvec{J}_h}$, after performing the velocity integration first, we obtain the desired analogous expression to Lemma~\ref{lem:kin-e-cons} for piecewise linear polynomials. 
\end{proof}
\begin{proposition} \label{prop:DiscreteVlasovEnergyProof}
If central-fluxes are used for Maxwell's equations, and if $|\mvec{v}|^2\in \mathcal{V}_h^p$, the semi-discrete scheme conserves total (particles plus field) energy exactly,
\begin{align}
    \frac{d}{dt} \sum_{j} \sum_s \int_{K_j}\frac{1}{2} m|\mvec{v}|^2 f_h \dz
    +
    \frac{d}{dt} \sum_k \int_{\Omega_k} \left( \frac{\epsilon_0}{2}|\mvec{E}_h|^2 + \frac{1}{2\mu_0}|\mvec{B}_h|^2 \right) \dx
    =
    0.
\end{align}
If upwind fluxes are used for Maxwell's equations, and if $|\mvec{v}|^2\in \mathcal{V}_h^p$, the semi-discrete scheme decays the total (particles plus field) energy,
\begin{align}
    \frac{d}{dt} \sum_{j} \sum_s \int_{K_j}\frac{1}{2} m|\mvec{v}|^2 f_h \dz
    +
    \frac{d}{dt} \sum_k \int_{\Omega_k} \left( \frac{\epsilon_0}{2}|\mvec{E}_h|^2 + \frac{1}{2\mu_0}|\mvec{B}_h|^2 \right) \dx
    <
    0.    
\end{align}
And if only piecewise linear polynomials are used and thus $|\mvec{v}|^2\notin \mathcal{V}_h^p$, then the projected energy will either be conserved or decaying depending on the choice of fluxes for Maxwell's equations,
\begin{align}
    \frac{d}{dt} \sum_{j} \sum_s \int_{K_j}\frac{1}{2} m\overline{|\mvec{v}|^2} f_h \dz
    +
    \frac{d}{dt} \sum_k \int_{\Omega_k} \left( \frac{\epsilon_0}{2}|\mvec{E}_h|^2 + \frac{1}{2\mu_0}|\mvec{B}_h|^2 \right) \dx
    \leq
    0,
\end{align}
so long as the scheme is consistent and the appropriate current $\overline{\mvec{J}_h}$ is incremented on to the electric field in Maxwell's equations.
\end{proposition}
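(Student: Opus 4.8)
The plan is to obtain the result by adding the electromagnetic energy balance, Lemma~\ref{lem:em-e-cons}, to the kinetic energy balance for the Vlasov equation, Lemma~\ref{lem:kin-e-cons} (or its piecewise-linear analogue, Corollary~\ref{cor:kin-e-cons-p-1}), and observing that the coupling term $\sum_k \int_{\Omega_k} \mvec{J}_h\cdot\mvec{E}_h \dx$ enters the two balances with opposite signs and therefore cancels exactly. No new computation is needed; the work is already done in the two lemmas and the corollary.

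Concretely, for the first case I would assume $|\mvec{v}|^2\in\mathcal{V}_h^p$ and central fluxes for Maxwell's equations. Lemma~\ref{lem:kin-e-cons} gives
\begin{align}
\frac{d}{dt} \sum_j \sum_s \int_{K_j} \frac{1}{2} m|\mvec{v}|^2 f_h \dz = \sum_k \int_{\Omega_k} \mvec{J}_h\cdot\mvec{E}_h \dx, \notag
\end{align}
while Lemma~\ref{lem:em-e-cons} with central fluxes gives the \emph{exact} identity
\begin{align}
\frac{d}{dt} \sum_k \int_{\Omega_k} \left( \frac{\epsilon_0}{2}|\mvec{E}_h|^2 + \frac{1}{2\mu_0}|\mvec{B}_h|^2 \right) \dx = -\sum_k \int_{\Omega_k} \mvec{J}_h\cdot\mvec{E}_h \dx. \notag
\end{align}
Adding the two, the right-hand sides annihilate and exact total (particles plus field) energy conservation follows. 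For the upwind case, I would instead use the inequality form of the electromagnetic energy balance supplied by Lemma~\ref{lem:em-e-cons}; adding it to the still-exact kinetic balance from Lemma~\ref{lem:kin-e-cons} again cancels $\mvec{J}_h\cdot\mvec{E}_h$ and leaves the strict decay of the total energy, the second claim. The piecewise-linear case proceeds identically, with Corollary~\ref{cor:kin-e-cons-p-1} replacing $|\mvec{v}|^2$ and $\mvec{J}_h$ by the lower-order projections $\overline{|\mvec{v}|^2}$ and $\overline{\mvec{J}_h}$; here the cancellation forces exactly the projected current $\overline{\mvec{J}_h}$ to be the one incremented onto the electric field in the discrete Maxwell update, which is precisely the consistency hypothesis stated in the proposition. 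One then reads off conservation or decay according to whether central or upwind fluxes are used for Maxwell's equations.

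The main obstacle is not the bookkeeping of adding the balances but justifying that the coupling term is literally the same object in both places. In Lemma~\ref{lem:kin-e-cons} the quantity $\sum_k\int_{\Omega_k}\mvec{J}_h\cdot\mvec{E}_h\dx$ arose from the volume term $q\mvec{v}\cdot\mvec{E}_h f_h$ by carrying out the velocity integral \emph{first}, whereas in the discrete Maxwell's equations $\mvec{J}_h$ is the independently prescribed discrete current sourcing \eqr{\ref{eq:dis-weak-E}}. Establishing that these agree --- i.e.\ that the velocity moment $q_s\int \mvec{v} f_h \dv$ used to source the fields is consistent with the phase-space operation appearing in the kinetic energy balance --- is exactly the ``weak equality'' subtlety flagged after Lemma~\ref{lem:kin-e-cons} and developed in Section~\ref{sec:WeakEquality}. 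The piecewise-linear version is the most delicate point: one must verify that $\overline{\mvec{v}}$, the cell-center velocity, contracted with $\mvec{E}_h$ and integrated against $f_h$ reproduces $\overline{\mvec{J}_h}\cdot\mvec{E}_h$ after the velocity integration, so that the \emph{same} projected current appears on both sides. Once that identification is secured, the cancellation of $\mvec{J}_h\cdot\mvec{E}_h$ (or $\overline{\mvec{J}_h}\cdot\mvec{E}_h$) is immediate and all three conclusions of Proposition~\ref{prop:DiscreteVlasovEnergyProof} drop out.
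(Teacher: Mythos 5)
Your proposal is correct and follows essentially the same route as the paper, which proves this proposition precisely by substituting the electromagnetic energy balance of Lemma~\ref{lem:em-e-cons} into the kinetic energy balance of Lemma~\ref{lem:kin-e-cons} (or Corollary~\ref{cor:kin-e-cons-p-1} in the piecewise-linear case) so that the $\mvec{J}_h\cdot\mvec{E}_h$ coupling term cancels. Your added emphasis on verifying, via the weak-equality machinery of Section~\ref{sec:WeakEquality}, that the current appearing in the kinetic balance is the same object sourcing the discrete Amp\`ere equation is exactly the consistency condition the paper flags.
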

\begin{proof}
The proof of this proposition follows from the substitution of the results of Lemma~\thinspace\ref{lem:em-e-cons} into the results of Lemma~\thinspace\ref{lem:kin-e-cons}, or Corollary~\ref{cor:kin-e-cons-p-1} if $|\mvec{v}|^2$ is not in the solution space.
\end{proof}
We wish to make a few remarks about the results of this section.
Firstly, we emphasize that energy conservation for the Vlasov equation was agnostic on the specific form of the numerical flux function, central, upwind, or global Lax-Friedrichs, so long as the numerical flux is Godunov.
Secondly, we want to point out a subtlety in comparison between the continuous proof of energy conservation, Proposition~\ref{prop:collisionlessEnergyConservation}, and the proof of energy conservation for our semi-discrete system, Proposition~\ref{prop:DiscreteVlasovEnergyProof}.
The continuous proof involves the manipulation of terms which are higher order than $|\mvec{v}|^2$, but we note that the higher order terms in the continuous proof come from the substitution of the explicit expressions for $\gvec{\alpha}$, the phase space flow, whereas in the discrete proof presented here, we have left the discrete phase space flow $\gvec{\alpha}_h$ as is to stress the fact that $\gvec{\alpha}_h$ has its own basis function expansion.
Thus, the higher order terms which are explicit in the continuous energy conservation proof are implicit here in the discrete energy conservation proof. 
If $|\mvec{v}|^2\in \mathcal{V}_h^p$, then $\mvec{v}\in \mathcal{V}_h^p$ as well, and terms in the discrete phase space flow such as $\mvec{v}$, the configuration space component of the phase space flow, can be exactly represented in terms of our basis function expansion.

Finally, we note that, although the total energy decays when using upwind fluxes for Maxwell's equations, this decay is small due to the high order nature of the scheme.
We will demonstrate this explicitly in Chapter~\ref{ch:Benchmarks} as part of the benchmarking of the algorithms.
Other authors have also demonstrated that this loss of energy is small for higher order schemes such as the DG method employed here~\citep{Balsara:2017}.

Before we conclude this section on the properties of the semi-discrete Vlasov--Maxwell system of equations, we would be remiss not to discuss the evolution of the total momentum.
The total momentum, particles plus fields, is conserved in the continuous system of equations, but what about our semi-discrete system?
Our formulation of the DG method for the Vlasov--Maxwell system of equations does \textbf{not} conserve momentum.

We can show momentum non-conservation by choosing $w = m_s\mvec{v}$ and proceeding as we did with the continuous system,
  \begin{align}
    \int_{K_j} m\mvec{v} \pfrac{f_h}{t} \dz 
    + 
    \oint_{\partial K_j} m\mvec{v} \mvec{n}\cdot\hat{\mvec{F}}\thinspace dS 
    - 
    \int_{K_j} \gz \left( m\mvec{v} \right) \cdot \gvec{\alpha}_h f_h \dz 
    =
    0.    
  \end{align}
Since $\mvec{v}$ is continuous, upon summation over all phase space cells and species, we obtain
  \begin{align}
    \sum_j \sum_s \int_{K_j} m_s \mvec{v} \pfrac{f_h}{t} \dz 
    - 
    \sum_k \int_{\Omega_k} \rho_{c_h} \mvec{E}_h + \mvec{J}_h \times \mvec{B}_h  \dx
    =
    0. 
  \end{align}
We can proceed exactly as we did with the continuous proof, but we note a key subtlety,
  \begin{align}
    \frac{d}{dt} & \sum_j \sum_s \int_{K_j} m_s \mvec{v} f_h \dz + \frac{d}{dt} \sum_k \int_{\Omega_k} \left (\epsilon_0 \mvec{E}\times\mvec{B} \right) \dx \notag \\
    & + \int_{\Omega_j} \gx\left( \frac{\epsilon_0}{2}|\mvec{E}_h|^2 + \frac{1}{2\mu_0}|\mvec{B}_h|^2 \right) - \gx\cdot\left( \epsilon_0\mvec{E}_h\mvec{E}_h + \frac{1}{\mu_0}\mvec{B}_h\mvec{B}_h \right) \dx
    =
    0. \label{eq:DiscreteMomentumConservation}
  \end{align} 
Since the electric and magnetic fields are discontinuous across configuration space cell interfaces, we cannot use integration by parts to eliminate the latter two terms.
In other words, integration by parts holds only locally and not over the whole domain due to the jumps in the fields across surfaces. 
However, it is important to note from the form of this equation that momentum conservation depends only weakly on velocity space resolution.
Since the size of the discontinuities in the electric and magnetic fields decrease with increasing configuration space resolution, we can more strongly conserve momentum by increasing configuration space resolution.

So, our semi-discrete Vlasov--Maxwell system of equations using the discontinuous Galerkin finite element method conserves mass, and can conserve the energy, $L^2$ norm, and entropy depending on our choice of numerical flux function, while incurring errors in the total momentum due to our discretization of Maxwell's equations.
We still need to discretize the system in time, but we will delay this discussion for a moment as we move to the semi-discrete discretization of the Fokker--Planck collision operator and the discrete Fokker--Planck collision operator properties.
Before we derive the semi-discrete form of the Fokker--Planck operator, it is useful to go into more detail on a concept we have been surreptitiously employing throughout our discussion of the discontinous Galerkin method: the concept of weak equality.
Weak equality underlies all of our discussion up to this point, but we have not made explicit what it means for two functions to be weakly equal, nor how we can use weak equality to actually compute quantities we require in our algorithm, such as velocity moments and the drag and diffusion coefficients in the Fokker--Planck equation.

\section{An Interlude on Weak Equality and Weak Operators} \label{sec:WeakEquality}

Consider some interval $I$ and some function space $\mathcal{P}$ spanned by basis set $\psi_{\ell}$, $\ell=1,\ldots,N$. We will define two functions $f$ and $g$ to be \textit{weakly equal} if
\begin{align}
    \int_I (f-g) \psi_{\ell} \thinspace dx = 0, \quad \forall \ell=1,\ldots,N. \label{eq:weakEqualityDef}
\end{align}
We will denote weakly equal functions by $f \doteq g$. 
Unlikely strong equality, in which functions agree at all points in the interval, weak equality only assures us that the projection of the functions on a chosen basis set is the same. 
However, the functions themselves may be quite different from each other with respect to their behaviour, e.g, each function's positivity or monotonicity in the interval.

The connection between weak equality and the minimization of the error in the $L^2$ norm in Section~\ref{sec:L2ErrorMinimization} is immediately clear.
In constructing a DG discretization of some operator $G[f]$, we are saying,
\begin{align}
    \pfrac{f}{t} \doteq G[f],
\end{align}
and then we construct a projection of the solution $f$ in the space $\mathcal{P}$, which we chose to be the space of piecewise polynomials of order $p,$, i.e., $\mathbb{P}^p$.
This concept of weak equality is another means of deriving Eqns.\thinspace(\ref{eq:dis-weak-form}) and (\ref{eq:dis-weak-B})-(\ref{eq:dis-weak-E}) for the semi-discrete Vlasov--Maxwell system of equations, and why these forms for the Vlasov equation and Maxwell's equation are referred to as the \emph{discrete weak forms} for these equations.

The real power in the concept of weak equality is the ability to connect functions defined in different spaces.
Consider an operation we performed as part of our proof of energy conservation for the Vlasov equation, Lemma~\ref{lem:kin-e-cons},
\begin{align}
    \sum_j \sum_s \int_{K_j} q_s \mvec{v}\cdot \mvec{E}_h f_h \dz \defeq \sum_k \int_{\Omega_k} \mvec{J}_h \cdot\mvec{E}_h \dx. \label{eq:discreteJdotE}
\end{align}
Note that we are using the $\defeq$ symbol here to emphasize that in the process of proving Lemma~\ref{lem:kin-e-cons}, we took \eqr{\ref{eq:discreteJdotE}} as a definition.

While \eqr{\ref{eq:discreteJdotE}} may seem to follow naturally from our definition of the continuous current density in \eqr{\ref{eq:currentDensity}}, the subtlety here is that the distribution function projection is defined over the full phase space, $f_h \in \mathcal{V}_h^p$, while the current density is defined only in the solution space for configuration space, $\mvec{J}_h \in \mathcal{X}_h^p$.
But, here is where we can leverage weak equality,
\begin{align}
    \mvec{J}_h \doteq \sum_j \sum_s \int_{K_j\setminus \Omega_k} q_s \mvec{v} f_h \dv,
\end{align}
i.e., we project the integral over velocity space of the distribution function, weighted by $q_s \mvec{v}$ in this case, onto configuration space basis functions in the space $\mathcal{X}_h^p$.
Note the change of subscript between the phase space cell $K_j$ and configuration space cell $\Omega_k$ since for the purposes of this operation, we need to sum the contributions from all the velocity space cells for a given configuration space cell.
The full computation for this expression would be
\begin{align}
    \sum_m \mvec{J}_m \int_{\Omega{k}} \varphi_m \varphi_\ell \dx = \sum_j \sum_s \int_{\Omega_k} \int_{K_j\setminus \Omega_k} q_s \mvec{v} f_h \varphi_{\ell} \dv \dx, \label{eq:computeCurrentDetailed}
\end{align}
upon plugging in the phase space expansion of the distribution function.
Note that this operation is performed for all $\varphi_\ell \in \mathcal{X}_h^p$.
This procedure gives us a general means of defining the velocity space moments, such as the current density, which couple the particle dynamics and the electromagnetic fields.

So, the actual operation for proving Lemma~\ref{lem:kin-e-cons} is
\begin{align}
    \sum_j \sum_s \int_{K_j} q_s \mvec{v}\cdot \mvec{E}_h f_h \dz \doteq \sum_k \int_{\Omega_k} \mvec{J}_h \cdot\mvec{E}_h \dx. \label{eq:weakJdotE}
\end{align}
Importantly, for the purposes of using the weak operation to compute the current density in \eqr{\ref{eq:weakJdotE}}, we should have technically substituted $w = 1/2 \thinspace m_s |\mvec{v}|^2 \varphi_\ell(\mvec{x})$, where $\varphi_\ell$ are each of our $\ell$ configuration space basis functions, as our test function $w$ when proving Lemma~\ref{lem:kin-e-cons} (and Corollary~\ref{cor:kin-e-cons-p-1}).
In other words, to actually convert the integral over velocity space of $\mvec{v} \cdot \mvec{E}_h f_h$ to the discrete analog of $\mvec{J}_h \cdot \mvec{E}_h$, we must ensure we are projecting the velocity integral of $\mvec{v} \cdot \mvec{E}_h f_h$ onto the full configuration space expansion.

These procedures, such as the operation defined in \eqr{\ref{eq:computeCurrentDetailed}}, are sometimes referred to as weighted $L^2$ projections, or more generally weighted projections, if the norm of choice is not the $L^2$ norm.
A more mathematically complete discussion of these types of projection operators can be found in textbooks on the foundations of finite element methods, such as \citet{BrennerScott:2008}, and these operators are common throughout the literature \citep{Cockburn:2000}.
In fact, there has been growing interest in leveraging weighted $L^2$ projections in novel ways, especially for wave propagation in heterogeneous media, so that the complexities of the media the wave is propagating in are directly encoded within the discretization \citep{Chan:2017, Chan:2019, Guo:2020, Shukla:2020}.

We will use the concept of weak equality in a similar fashion to the construction of these weighted $L^2$ projections to define other types of weak operators in anticipation of the needed machinery to discretize the Fokker--Planck equation in the VM-FP system of equations.
We will define a new set of notation to make the subsequent discussion a bit more clear,
\begin{align}
    M_{0_h} & \doteq \sum_j \int_{K_j\setminus \Omega_k} f_h \dv, \label{eq:discrete0thMoment} \\
    \mvec{M}_{1_h} & \doteq \sum_j \int_{K_j\setminus \Omega_k} \mvec{v} f_h \dv, \label{eq:discrete1stMoment} \\
    M_{2_h} & \doteq \sum_j \int_{K_j\setminus \Omega_k} |\mvec{v}|^2 f_h \dv,  \label{eq:discrete2ndMoment}
\end{align}
which are related to discrete representations of Eqns.\thinspace(\ref{eq:massDefinition}--\ref{eq:energyDefinition}), but without factors of mass and the relevant constants.
For example, we can compute the discrete charge density and discrete current density from Eqns.\thinspace(\ref{eq:discrete0thMoment}--\ref{eq:discrete1stMoment}),
\begin{align}
    \rho_{c_h} & = \sum_s q_s M_{0_{h_s}}, \\
    \mvec{J}_h & = \sum_s q_s \mvec{M}_{1_{h_s}}. \label{eq:strongEqualityCurrent}
\end{align}

For the drag and diffusion coefficients in the Fokker--Planck equation,
Eqns.\\(\ref{eq:DoughertyDrag}) and (\ref{eq:DoughertyDiffusion}), we require the flow and temperature, Eqns.\thinspace(\ref{eq:flowDefinition}) and (\ref{eq:temperatureDefinition}), which involve a number of different operations applied to the velocity moments, such as the division of two velocity moments in \eqr{\ref{eq:flowDefinition}}.
We might naively expect the discrete representation for the flow to be
\begin{align}
    \mvec{u}_h = \frac{\sum_j \int_{K_j\setminus \Omega_k} \mvec{v} f_h \dv}{\sum_j \int_{K_j\setminus \Omega_k} f_h \dv}. \label{eq:badDiscreteFlowDefinition}
\end{align}
But, we only know the projections of the moments, not the actual functions, so simple division like in \eqr{\ref{eq:badDiscreteFlowDefinition}} is ill-defined.
To make this point more concrete, consider what constructing a polynomial expansion of $\mvec{u}_h$ defined in \eqr{\ref{eq:badDiscreteFlowDefinition}} would require: a polynomial expansion of a rational function, since both the numerator and denominator have their own polynomial expansions in \eqr{\ref{eq:badDiscreteFlowDefinition}}.
We cannot project a rational function onto a polynomial as we would incur \emph{aliasing errors} in the construction of the polynomial because a rational function requires an infinite number of polynomials to represent, and we are already limiting ourselves to a finite subspace of polynomials.

By aliasing errors, we mean errors that arise due to being unable to uniquely determine the representation of the quantity of interest.
Because the flow $\mvec{u}_h$ in \eqr{\ref{eq:badDiscreteFlowDefinition}} has no unique representation in a finite subspace of polynomials, the ultimate computation of \eqr{\ref{eq:badDiscreteFlowDefinition}} can lead to uncontrolled and unbounded errors\footnote{A suitable analogy would be the aliasing that arises in the context of Fourier transforms, where an undersampled signal, a signal which would require a higher sampling rate to resolve the Nyquist frequency, will produce an inaccurate Fourier transform due to power in the higher frequencies being ``aliased'' into the signal\citep[see, e.g.,][]{NumRecipes:2007}. This power aliased into the signal is the same manifestation of the unbounded and uncontrolled errors that arise in trying to construct a polynomial representation of a rational function like the rational function in our naive definition of the flow $\mvec{u}_h$ in \eqr{\ref{eq:badDiscreteFlowDefinition}}.}.
We will find later in Chapter~\ref{ch:ImplementationDGFEM} that the elimination of aliasing errors will prove a critical component of constructing stable discretizations of the VM-FP system of equations.

So, in anticipation of this requirement for our algorithm, how do we eliminate aliasing errors in the computation of the flow and temperature required by the Fokker--Planck equations?
To find $\mvec{u}_h$, we need to invert the \emph{weak-operator} equation,
\begin{align}
    M_{0_h} \mvec{u}_h \doteq \mvec{M}_{1_h}. \label{eq:weakFlowDefinition}
\end{align}
Using the definition of weak-equality in \eqr{\ref{eq:weakEqualityDef}} extended to multiple dimensions, this expression means,
\begin{align}
    \int_I (M_{0_h} \mvec{u}_h - \mvec{M}_{1_h}) \psi_{\ell} \dx = 0,
\end{align}
where in general, for our algorithms, the space $I$ is a configuration space cell $\Omega_j$ (or $K_j$) and the basis expansion is $\varphi \in \mathcal{X}_h^p$ (or $w \in \mathcal{V}_h^p$).
This procedure should determine $\mvec{u}_h$, i.e., the projection of the flow in the function space, so we can write $\mvec{u}_h = \sum_m \mvec{u}_m \psi_m$, leading to the linear system of equations
\begin{align}
\sum_m u_m \int_I M_{0_h} \psi_{\ell} \psi_m \dx = \int_I \mvec{M}_{1_h} \psi_{\ell} \dx,
\end{align}
for $\ell=1,\ldots,N$. 
Inverting this linear system determines $\mvec{u}_m$ and hence the projection of $\mvec{u}_h$ in the function space.
We call this process \emph{weak-division}. 
Note that weak-division only determines $\mvec{u}_h$ to an equivalence class as we can replace the specific $\mvec{u}_h$ in the function space with any other function that is weakly equal to it. 
In addition, note that $M_{0_h}$ and $\mvec{M}_{1_h}$ are determined by \eqr{\ref{eq:discrete0thMoment}} and \eqr{\ref{eq:discrete1stMoment}} and thus themselves have expansions that must be included in this computation.
Therefore, the weight in the weighted $L^2$ projection is the basis expansion of the moment $M_{0_h}$.

We can follow a similar procedure for the temperature,
\begin{align}
    M_{0_h} \frac{T_h}{m} \doteq \frac{1}{3} \left (M_{2_h} - \mvec{M_{1_h}} \cdot \mvec{u}_h \right ),
\end{align}
where the factor of $1/3$ comes from the number of velocity dimensions, since the integrals over each velocity direction all contribute to the temperature.
This procedure requires both \emph{weak-division} and what can be referred to as \emph{weak-multiplication}, because we require the expansion of $\mvec{M_{1_h}} \cdot \mvec{u}_h$,
\begin{align}
    \mathcal{K}_h & \doteq \mvec{M_{1_h}} \cdot \mvec{u}_h, \\
    \sum_m \mathcal{K}_m \int_I \psi_{\ell} \psi_m \dx & = \int_I \mvec{M_{1_h}} \cdot \mvec{u}_h \psi_{\ell} \dx,
\end{align}
where both $\mvec{M_{1_h}}$ and $\mvec{u}_h$ themselves have expansions which must be included in the computation.
These sorts of ``polynomial operations,''  where division and multiplication are extended to act on quantities which have expansions in some basis, have been exploited previously in the literature \citep{Atkins:1998,Lockard:1999}.

Having generalized certain operations such as division and multiplication to situations where all the quantities of interest are projections, we can ask the question: do these operations have similar rules to their elementary counterparts?
For example, does weak-division have the equivalent of divide-by-zero issues?
Consider the interval $[-1,1]$ and the orthonormal linear basis set
\begin{align}
    \psi_0 = \frac{1}{\sqrt{2}}; \qquad \psi_1 = \frac{\sqrt{3}}{\sqrt{2}}x. \label{eq:1DOrthonormalBasis}
\end{align}
In one dimension, where $\mvec{M}_{1_h}$ and $\mvec{u}_h$ have just a single component, let $M_1 = 1$ and $M_0 = n_0\psi_0 + n_1 \psi_1$. 
For this simple case, the result of weak-division is 
\begin{align}
    u = \frac{\sqrt{2}}{n_0^2-n_1^2}(n_0 - \sqrt{3}n_1 x).
\end{align}
Hence, the weak-division is not defined for $n_1 = \pm n_0$\footnote{Formally, the mean density $n_0$ is a positive definite quantity, so this constraint should simply be $n_1 = n_0$}.
This calculation shows that, even if the mean density is positive, the slope cannot become too steep---see Figure~\ref{fig:weak-div-p1} where we plot the trend of steepening the slope of density, $M_0$, and the effect of the steepening on the calculation of the flow, $u$. 
When the ``blow-up'' occurs, i.e., $n_1 = \pm n_0$, $M_0$ has a zero-crossing at either $x = \pm 1/\sqrt{3}$.
Although the function for the flow, $u$, appears well behaved through the steepening of the density, this blow-up corresponds to the situation where the density itself becomes unrealizable with a positive definite function.

\begin{figure}[ht]
    \centering
    \includegraphics[width=\textwidth]{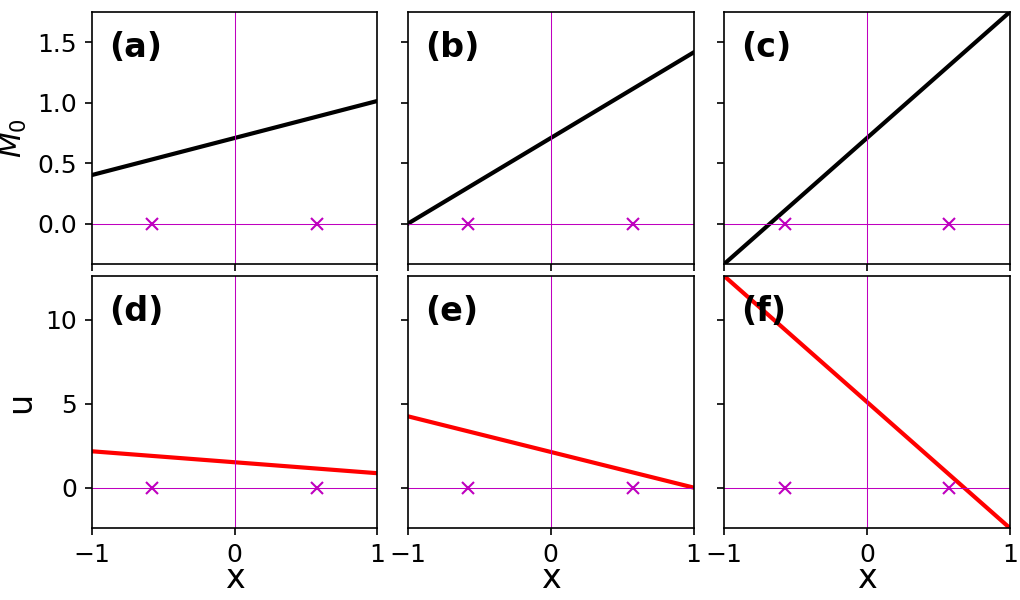}
    \caption{Weak division for the $p=1$ basis, \eqr{\ref{eq:1DOrthonormalBasis}}, to compute $u$ from $M_0 u \doteq M_1$. In this plot, $M_1 = 1$ and the effect of changing $M_0$ (top row) on the flow $u$ (bottom row) is shown. As the density steepens, the velocity becomes larger. If the density becomes too steep, if we increase the slope of the density further so that the density has a zero crossing at $x = \pm 1/\sqrt{3}$ (magenta crosses), the solution for $u$ would blows in the sense that the flow becomes an unrealizable function. Importantly, this blow-up condition corresponds to the situation where the slope of $M_0$ becomes too steep to represent $M_0$ with a positive definite function, which physically corresponds to a situation where the representation of the density is producing negative density functions. Since the value of the particle density can only be positive, this blow-up is highly undesirable.}
    \label{fig:weak-div-p1}
\end{figure}

In this regard, there is nothing necessarily unphysical with a piecewise linear reconstruction $M_0(x)$ having a zero crossing within the domain, and the DG algorithm can result in such solutions.
In principle, there is a physically realizable function that is weakly equivalent $\tilde{M}_0(x) \doteq M_0(x)$ but positive everywhere, as long as $n_1 < \sqrt{3} n_0$.
However, if the slope of $M_0$ becomes too steep, if the density varies too rapidly within a cell, we lose the ability to construct a physically realizable representation for the density, and thus the computation of the flow $u$ would also become physically unrealizable.
In practice we can use constraints like these to limit the slope of the density and thus make the weak division operator always well posed.
This idea of using these constraints to limit the higher order moments in our DG expansion is similar to the philosophy of limiters in high order finite-volume methods.
In smooth regions, we use can the standard calculations and so retain high-order accuracy there, while introducing limiters where the solution is locally varying too quickly to be accurately resolved, in order to robustly preserve certain properties of the solution.

Another application of weak-equality is to recover a continuous function from a discontinuous one.
Say we want to construct a continuous representation $\hat{f}$ on the interval $I=[-1,1]$, from a function, $f$, which has a single discontinuity at $x=0$.
We can choose some function spaces $\mathcal{P}_L$ and $\mathcal{P}_R$ on the interval $I_L = [-1,0]$ and $I_R = [0,1]$ respectively. 
Then, we can reconstruct a continuous function $\hat{f}$ such that
\begin{align}
  \hat{f} &\doteq f_L \quad x \in I_L \quad\mathrm{on}\
      \mathcal{P}_L \label{eq:he1} \\
  \hat{f} &\doteq f_R \quad x \in I_R \quad\mathrm{on}\ \mathcal{P}_R. \label{eq:he2}
\end{align}
where $f = f_L$ for $x\in I_L$ and $f = f_R$ for $x\in I_R$.

As with all our previous discussions about weak equality, this procedure only determines $\hat{f}$ up to its projections in the left and right intervals. 
To determine $\hat{f}$ uniquely, we use the fact that given the $N$ pieces of information in $I_L$ and $N$ pieces of information in $I_R$, where $N$ is the number of basis functions in $\mathcal{P}_{L,R}$, we can construct a polynomial of maximum order $2N-1$. 
We can hence write
\begin{align}
  \hat{f}(x) = \sum_{m=0}^{2N-1} \hat{f}_m x^m.
\end{align}
Using this expression in Eqn.\thinspace (\ref{eq:he1}) and (\ref{eq:he2}) completely determines $\hat{f}$.
In a certain sense, the recovery procedure is a special case of a more general method to go from one basis to another under the restriction of weak equality, just as we constructed the velocity moments, defined only in configuration space, from an operation over the full phase space.

And just as we leveraged weak equality to give us a prescription for the computation of the components of the drag and diffusion coefficients, this procedure to recover a continuous function from discontinuous function foreshadows an additional need we have when discretizing the Fokker--Planck equation: the ability to compute second derivatives.
As an example in one dimension, we wish to compute $g \doteq f_{xx}$ where we know $f$ on a mesh with cells $I_j = [x_{j-1/2}, x_{j+1/2}]$.
Multiply by some test function $\psi \in \mathcal{P}_j$, where $\mathcal{P}_j$ is the function space in cell $I_j$ and integrate to get the following weak-form,
\begin{align}
    \int_{I_j} \psi g \thinspace dx =   \psi \hat{f}_x \bigg|^{x_{j+1/2}}_{x_{j-1/2}}
  -
  \int_{I_j} \psi_{x} f_x \thinspace dx. \label{eq:gfxxp1}
\end{align}
Where we have replaced $f$ by the reconstructed function $\hat{f}$ in the surface term. 
Note that we need \emph{two reconstructions}, one using data in cells $I_{j-1}, I_j$ and the other using data in cells $I_j, I_{j+1}$. 
In the volume term, we continue to use $f$ itself and not the left/right reconstructions as the latter are weakly-equal to the former and can be replaced without changing the volume term.
Once the function space is selected, we have completely determined $g$. 

Notice that one more integration by parts can be performed in \eqr{\ref{eq:gfxxp1}} to obtain another weak-form,
\begin{align}
    \int_{I_j} \psi g \thinspace dx =   (\psi \hat{f}_x - \psi_x \hat{f})\bigg|^{x_{j+1/2}}_{x_{j-1/2}}
  +
  \int_{I_j} \psi_{xx} f \thinspace dx. \label{eq:gfxxp2}
\end{align}
In this form, we need to use both the value and first derivative of the reconstructed functions at the cell interfaces.
Numerically, each of these weak-forms will lead to different update formulas. 
For example, for piecewise linear basis functions, the volume term drops out in \eqr{\ref{eq:gfxxp2}}.
We will find two integration by parts allows us to retain more properties of the continuous Fokker--Planck equation in our semi-discrete formulation of the Fokker--Planck equation in Section~\ref{sec:propertiesDiscreteFP}.

The procedure outlined above is essentially the recovery discontinuous Galerkin (RDG) scheme first proposed in~\citet{VanLeer:2005} and \citet{VanLeer:2007}. 
Extensive study of the properties of the RDG scheme to compute second derivatives is presented in~\citet{Hakim:2014}, where it is shown that the RDG scheme has some advantages compared to the standard local discontinuous Galerkin (LDG) schemes~\citep{Cockburn:1998a,Cockburn:2000} traditionally used to discretize diffusion operators in DG. 
The formulation in terms of weak equality allows systematic extension to higher dimensions just as we developed general formulas for velocity moments irrespective of dimensionality, and we turn now to a semi-discrete formulation of the Fokker--Planck equation given the tools outlined in this section.

\section{The Semi-Discrete Fokker--Planck Equation}

We now want to derive the semi-discrete form of the Fokker--Planck equation using a DG method.
Since the Fokker--Planck equation is solved in tandem with the Vlasov--Maxwell portion of the VM-FP system of equations, we will consider the same phase space mesh, $\mathcal{T}$, with cells $K_j$, and solution space $\mathcal{V}_h^p$ defined in \eqr{\ref{eq:phaseSpaceSolutionSpace}}.
For the Fokker--Planck component, we can integrate by parts once to obtain a discrete weak form, analogous to the collisionless component of the VM-FP system of equations in \eqr{\ref{eq:dis-weak-form}},
\begin{align}
    \int_{K_j} w\pfrac{f_h}{t} \dz = \oint_{\partial K_j}\nu \thinspace w^- \mvec{n}\cdot\hat{\mvec{G}}  \thinspace dS - \int_{K_j} \nu \thinspace \gv w \cdot \left [ (\mvec{v} - \mvec{u}_h)f_h + \frac{T_h}{m} \gv f_h \right ] \dz. \label{eq:schemeNC}
\end{align}
Here, the numerical flux function $\hat{\mvec{G}}$ includes both the drag and diffusion terms,
\begin{align}
    \mvec{n}\cdot\hat{\mvec{G}} = \mvec{n}\cdot \left (\hat{\mvec{F}}_{drag} + \frac{T_h}{m}\gv \hat{f} \right ),
\end{align}
where $\hat{\mvec{F}}_{drag}$ is a numerical flux function for the drag term. 
Our only requirement for the numerical flux function for the drag term will be that, like the collisionless flux in phase space, this numerical flux function for the drag term is a Godunov flux, \eqr{\ref{eq:GodunovFlux}}.
The latter term involves the recovery of the distribution function at a velocity space cell as described in Section~\ref{sec:WeakEquality}.
Note that $T_h$ is unchanged by the recovery process, as the temperature is only a function of configuration space, and thus is continuous across velocity space interfaces.
As with the collisionless phase space flux, example Godunov fluxes for the drag term include,
\begin{align}
    \mvec{n}\cdot\hat{\mvec{F}}_{drag} & = \frac{1}{2} \mvec{n}\cdot (\mvec{v} - \mvec{u}_h) (f^+ + f^-), \label{eq:dragCentral}\\
    \mvec{n}\cdot\hat{\mvec{F}}_{drag} & =     \begin{cases}
        \mvec{n}\cdot(\mvec{v} - \mvec{u}_h) f^- \quad \textrm{if} \quad \sign(\mvec{v} - \mvec{u}_h) > 0, \\
        \mvec{n}\cdot(\mvec{v} - \mvec{u}_h) f^+ \quad \textrm{if} \quad \sign(\mvec{v} - \mvec{u}_h) < 0,
    \end{cases} \\
    \mvec{n}\cdot\hat{\mvec{F}}_{drag} & = \frac{1}{2} \mvec{n}\cdot (\mvec{v} - \mvec{u}_h) (f^+ + f^-) - \frac{\max_{\mathcal{T}}|\mvec{v} - \mvec{u}_h|}{2} (f^+ - f^-), \label{eq:dragGlobalLF}
\end{align}
where we have already exploited the fact that $\mvec{v} - \mvec{u}_h$ is continuous at velocity space interfaces to simplify a central flux, upwind flux, and global Lax-Friedrichs flux to the forms shown in Eqns.\thinspace(\ref{eq:dragCentral}--\ref{eq:dragGlobalLF}).

While \eqr{\ref{eq:schemeNC}} may seem like a perfectly fine DG method for the Fokker--Planck equation, the method as written in \eqr{\ref{eq:schemeNC}} does not retain some of the important properties of the continuous system.
For example, we can show that the method as written in \eqr{\ref{eq:schemeNC}} does not conserve momentum.
To see this lack of conservation, substitute $w = m \mvec{v}$, where we have dropped the species subscript because, as we showed in Section~\ref{sec:PropertiesKineticEquation}, the Fokker--Planck equation conserves the momentum of each species individually.
Upon substitution of $w = m \mvec{v}$ and summing over all cells, we obtain
\begin{align}
    \sum_j \int_{K_j} m \mvec{v} \pfrac{f_h}{t} \dz = 
    -\sum_j \int_{K_j} m \nu \thinspace \gv \mvec{v} \cdot \left [ (\mvec{v} - \mvec{u}_h)f_h + \frac{T_h}{m} \gv f_h \right ] \dz,
\end{align}
where we have already eliminated the surface term due to the assumption of the flux being Godunov and the fact that $\mvec{v}$ is continuous at velocity space interfaces.
Note that implicit in the cancellation of the surface terms is the fact that we are again employing zero-flux boundary conditions in velocity space, similar to \eqr{\ref{eq:zeroFluxBC}},
\begin{align}
        \mvec{n}\cdot\hat{\mvec{G}}(\mvec{x}, \mvec{v}_{max}) = \mvec{n}\cdot\hat{\mvec{G}}(\mvec{x}, \mvec{v}_{min}) = 0.
\end{align}
Additionally, while the numerical flux due to the drag is a Godunov flux, and thus why it can be eliminated upon summation over cells, the reason the $\gv \hat{f}$ term, the gradient of the recovered distribution function, vanishes is for the simple reason that $\gv \hat{f}$ can also be constructed to be continuous at the corresponding interfaces.
When constructing the recovered distribution function, we can make sure that both the value, and the slope, are continuous at the shared interface, a desirable property for the discretization of a diffusion operator!

Substitution of $\gv \mvec{v} = \overleftrightarrow{\mvec{I}}$ allows us to determine under what conditions our discrete scheme conserves momentum.
Firstly, we require that
\begin{align}
    \sum_j \int_{K_j} (\mvec{v} - \mvec{u}_h)f_h = 0, \label{eq:schemeNCFlow}
\end{align}
but this is simply $ M_{0_h} \mvec{u}_h \doteq \mvec{M}_{1_h}$ once the integrals are separated into their configuration space and velocity space components\footnote{Note that \eqr{\ref{eq:weakFlowDefinition}} is a stronger statement than \eqr{\ref{eq:schemeNCFlow}}, because \eqr{\ref{eq:weakFlowDefinition}} is the full projection of the flow onto the configuration space basis expansion. As we mentioned in Section~\ref{sec:WeakEquality} when discussing the equality in \eqr{\ref{eq:weakJdotE}}, for the purposes of discussing conservation relations, we have substituted expressions such as $w = \mvec{v}$ or $w = |\mvec{v}|^2$, but we could have just as easily substituted $w = \mvec{v} \varphi_\ell(\mvec{x})$ or $w = |\mvec{v}|^2 \varphi_\ell(\mvec{x})$, where $\varphi_\ell \in \mathcal{X}_h^p$ are each of the $\ell$ basis functions spanning configuration space. Doing so would not change the algebra and the subsequent proofs and would make the connection between \eqr{\ref{eq:weakFlowDefinition}} and \eqr{\ref{eq:schemeNCFlow}} concrete.}.
So, if we ensure computation of the discrete flow exactly as we described in Section~\ref{sec:WeakEquality}, i.e., that the projection of the discrete flow is consistent and incurs no aliasing errors, this term will vanish.
Unfortunately, the final term,
\begin{align}
    \sum_j \int_{K_j}  \frac{T_h}{m} \gv f_h = \sum_j \oint_{\partial K_j} \frac{T_h}{m} f_h^- dS \neq 0,
\end{align}
since the distribution function is not continuous at cell interfaces.
Thus, in this formulation of the semi-discrete Fokker--Planck equation, we cannot expect to conserve momentum.

A similar argument shows that the semi-discrete Fokker-Planck equation described in \eqr{\ref{eq:schemeNC}} does not conserve energy either.
The lack of conservation of both momentum and energy can be traced to the gradient term, $T_h/m \thinspace \gv f_h$, which regardless of whether one is examining the energy or momentum, will pick up the jumps in the distribution function at cell interfaces.
We are thus motivated to integrate by parts again to obtain a new semi-discrete formulation for the Fokker--Planck equation,
\begin{align}
   \int_{K_j} w\pfrac{f_h}{t} \dz = \oint_{\partial K_j} & \nu \thinspace w^- \mvec{n}\cdot\hat{\mvec{G}}  \thinspace dS - \oint_{\partial K_j} \nu \thinspace \mvec{n}\cdot \gv w^- \frac{T_h}{m} \hat{f} dS \notag \\
   &-\int_{K_j} \nu \left [ \thinspace \gv w \cdot \left ( \mvec{v} - \mvec{u}_h \right )f_h - \gv^2 w \left ( \frac{T_h}{m} f_h \right ) \right ] \dz. \label{eq:dis-weak-FP}    
\end{align}
For this scheme, we require both the value and the slope of the recovered distribution function at the cell interfaces.
\eqr{\ref{eq:dis-weak-FP}} will be the form whose properties we examine in the next section as we determine what we have retained compared to the continuous Fokker--Planck equation.

\section{Properties of the Semi-Discrete Fokker--Planck Equation}\label{sec:propertiesDiscreteFP}

We now proceed as we did in Section~\ref{sec:PropertiesDiscreteVM}, but for the semi-Discrete Fokker--Planck equation, to determine what properties the semi-discrete formulation retains in comparison to the continuous equation.
As with the semi-discrete Vlasov equation, we will assume the boundary conditions in velocity space are zero flux,
\begin{align}
        \mvec{n}\cdot\hat{\mvec{G}}(\mvec{x}, \mvec{v}_{max}) = \mvec{n}\cdot\hat{\mvec{G}}(\mvec{x}, \mvec{v}_{min}) = 0. \notag
\end{align}
In addition, we note that there is an additional boundary term due to our second integration by parts,
\begin{align}
    \oint_{\partial K_j} \nu \thinspace & \mvec{n}\cdot \gv w^- \frac{T_h}{m} \hat{f} dS = \notag \\
    & \int_{\Omega_j} \oint_{\partial V_{max/min}} \nu \thinspace \mvec{n}\cdot \gv w^- \frac{T_h}{m} f_h(\mvec{x}, \mvec{v}_{max/min}, t) \dx \thinspace dS_{V_{max/min}}, \label{eq:additionalFPBoundaryTerm}
\end{align}
where we have separated the surface integral over the edge of velocity space into an integral over configuration space and the specific edge of velocity space surface, and we have substituted for the recovery polynomial at the edge of velocity space the distribution function evaluated at the edge of velocity space.
Since we have no information beyond the edge of velocity space due to the zero flux boundary condition on the numerical flux function $\hat{\mvec{G}}$, choosing the recovery polynomial at the edge of velocity space to be simply the distribution function evaluated at the edge is the most natural choice.
This vector notation may seem somewhat strange, so as a concrete example, this operation for the $v_x$ derivative is
\begin{align}
\oint_{\partial K_j} \nu \thinspace & \hat{\mvec{x}} \cdot \gv w^- \frac{T_h}{m} \hat{f} dS = \notag \\
& \int_{\Omega_j} \oint_{\partial V_{max/min}} \nu \thinspace \nabla_{v_x} w^- \frac{T_h}{m} f_h(\mvec{x}, v_{x_{max/min}}, v_y, v_z, t) \dx \thinspace dv_y dv_z,
\end{align}
i.e., for the edge of $v_x$ in velocity space, we evaluate the distribution function at the maximum or minimum $v_x$ and leave the other dependencies (all of $\mvec{x}$ and $v_y,v_z$) intact to be integrated over.
This particular boundary term will turn out to be important for the conservation properties of the semi-discrete system, in addition to being an explicit boundary term required as part of the complete update formula.
Note that, because the Fokker--Planck equation only involves derivatives in velocity space, our semi-discrete formulation of the Fokker--Planck equation is agnostic to the boundary conditions in configuration space for the following properties.
We will also drop the species subscript from the mass, as we know from Section~\ref{sec:PropertiesKineticEquation} that the continuous Fokker--Planck equation conserves mass, momentum, and energy individually for each species.
\begin{proposition}
The discrete scheme in \eqr{\ref{eq:dis-weak-FP}} conserves mass,
\begin{align}
    \frac{d}{dt} \sum_j \int_{K_j} m f_h \dz = 0.
\end{align}
\end{proposition}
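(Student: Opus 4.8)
The plan is to mirror the proof of Proposition~\ref{prop:discrete-particle-cons}: insert the constant test function $w = m$ into the discrete-weak form \eqr{\ref{eq:dis-weak-FP}}, sum over all phase-space cells $K_j$, and show that every term on the right-hand side either vanishes pointwise or telescopes to zero once the zero-flux velocity-space boundary conditions are imposed. Because $w = m$ is constant, $\gv w = 0$ and $\gv^2 w = 0$, so the entire volume integral in \eqr{\ref{eq:dis-weak-FP}} drops, and the second surface integral — the extra boundary term generated by the second integration by parts, written out explicitly in \eqr{\ref{eq:additionalFPBoundaryTerm}} — also vanishes identically since it carries a factor of $\gv w^-$.

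What survives after summing over all cells is
\begin{align}
    \frac{d}{dt}\sum_j \int_{K_j} m f_h \dz = \sum_j \oint_{\partial K_j} \nu \thinspace m \thinspace \mvec{n}\cdot\hat{\mvec{G}} \thinspace dS,
\end{align}
and I would then argue the right-hand side is a telescoping sum that cancels in pairs. On a face oriented along a configuration-space direction the outward normal $\mvec{n}$ is a configuration-space vector while $\hat{\mvec{G}}$ is a velocity-space vector, so $\mvec{n}\cdot\hat{\mvec{G}} = 0$ there automatically; this is also why the result is insensitive to the configuration-space boundary conditions. On interior velocity-space faces, $\hat{\mvec{G}} = \hat{\mvec{F}}_{drag} + (T_h/m)\gv\hat{f}$ is single-valued: $\hat{\mvec{F}}_{drag}$ is Godunov by assumption, $T_h$ is continuous across velocity-space interfaces, and the recovered distribution $\hat{f}$ together with its velocity gradient is constructed to be continuous at the shared interface. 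The two cells abutting such a face carry opposite outward normals and the common constant weight $m$, so their contributions cancel. On the velocity-space boundary faces, $\mvec{n}\cdot\hat{\mvec{G}} = 0$ by the zero-flux condition. Hence the entire surface sum vanishes and $\frac{d}{dt}\sum_j \int_{K_j} m f_h \dz = 0$.

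This argument is essentially routine; the one point deserving care is the observation that the diffusive piece of $\hat{\mvec{G}}$, assembled from the recovered function $\hat{f}$, is genuinely interface-continuous, so that $\hat{\mvec{G}}$ as a whole behaves like a Godunov flux and the pairwise cancellation of surface integrals proceeds exactly as for the collisionless operator in Proposition~\ref{prop:discrete-particle-cons}. This is the structural counterpart of why momentum and energy fail to be conserved by the one-integration-by-parts form \eqr{\ref{eq:schemeNC}} but are recovered by \eqr{\ref{eq:dis-weak-FP}}: with the constant weight $w = m$, the problematic jump terms in $f_h$ never enter the computation in the first place.
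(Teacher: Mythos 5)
Your proof is correct and follows essentially the same route as the paper: substitute the constant weight $w=m$, observe that the volume terms and the $\gv w^-$ surface term vanish identically, and cancel the remaining $\hat{\mvec{G}}$ surface sum pairwise using the Godunov property together with the zero-flux velocity-space boundary condition. The extra detail you supply — that the diffusive piece of $\hat{\mvec{G}}$ is single-valued because the recovered $\hat{f}$ and its gradient are interface-continuous — is implicit in the paper's appeal to $\hat{\mvec{G}}$ being Godunov, so nothing is missing or different in substance.
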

\begin{proof}
Substituting $w = m$ into \eqr{\ref{eq:dis-weak-FP}} and summing over all cells, we obtain
\begin{align}
    \sum_j \int_{K_j} m\pfrac{f_h}{t} \dz = \sum_j \oint_{\partial K_j} \nu m \mvec{n}\cdot\hat{\mvec{G}}  \thinspace dS = 0,
\end{align}
since the gradient of a constant function is zero, and we have chosen the numerical flux function $\hat{\mvec{G}}$ to be a Godunov flux so that the sum over surfaces pairwise cancels the flux.
Combined with a zero flux boundary condition in velocity space, the proof of mass conservation is complete.
Note that because the Fokker--Planck equation only contains derivatives in velocity space, just like the continuous proof in Section~\ref{sec:PropertiesKineticEquation}, we can construct the time evolution of the zeroth moment due to the semi-discrete Fokker--Planck equation,
\begin{align}
    \sum_j \int_{K_j \setminus \Omega_j } m \pfrac{f_h}{t} \dz \doteq \pfrac{\rho_{m_h}}{t} = 0,
\end{align}
where $\rho_{m_h}$ is the projection of the mass density onto configuration space basis functions.
\end{proof}
\begin{proposition}\label{prop:discrete-momentum-cons-FP}
The discrete scheme in \eqr{\ref{eq:dis-weak-FP}} conserves momentum,
\begin{align}
    \frac{d}{dt} \sum_j \int_{K_j} m \mvec{v} f_h \dz = 0, 
\end{align}
if
\begin{align}
    T_h \left (  \sum_j \oint_{\partial V_{max_j}} f_h \thinspace dS_{V_{max}} - \sum_j \oint_{\partial V_{min_j}} f_h \thinspace dS_{V_{min}} \right ) + m \mvec{M}_{1_h} - m M_{0_h} \mvec{u}_h \doteq 0, 
\end{align}
i.e., for each velocity component, for example the $x$ component, we have
\begin{align}
    T_h & \left [  \sum_j \oint_{\partial V_{max_j}} f_h(\mvec{x}, v_{x_{max}}, v_y, v_z) \thinspace dv_y dv_z - \sum_j \oint_{\partial V_{min_j}} f_h(\mvec{x}, v_{x_{min}}, v_y, v_z) \thinspace dv_y dv_z \right ] \notag \\
    & + m M_{1_{x_h}} - m M_{0_h} u_{x_h} \doteq 0, \label{eq:FPMomentumConstraint}
\end{align}
where we have temporarily dropped the time dependence from $f_h$ for ease of notation.
\end{proposition}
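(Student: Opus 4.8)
The plan is to mirror the continuous argument of Proposition~\ref{prop:collisionMomentumConservation}, and the failed attempt with the once-integrated form \eqr{\ref{eq:schemeNC}}, but now working from the twice-integrated weak form \eqr{\ref{eq:dis-weak-FP}}. First I would substitute $w = m\mvec{v}$, component by component — say $w = m v_x$ for the $x$-momentum, the other two components being identical — noting that for $p\ge 1$ this $w$ lies exactly in $\mathcal{V}_h^p$, and recording the two facts that make a linear test function special: $\gv w$ is a constant vector (equal to $m$ times the $v_x$ coordinate unit vector), and therefore $\gv^2 w = 0$. Plugging this into \eqr{\ref{eq:dis-weak-FP}} and summing over all phase-space cells $K_j$ leaves three contributions to dispose of: the $\hat{\mvec{G}}$ surface term, the extra surface term $\oint_{\partial K_j}\nu\,\mvec{n}\cdot\gv w^-\,(T_h/m)\hat{f}\,dS$ produced by the second integration by parts, and the volume term.

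The $\hat{\mvec{G}}$ surface term vanishes identically: $v_x$ is the velocity coordinate and hence continuous across every cell interface, so $w^-=w^+$ there, while the Godunov property \eqr{\ref{eq:GodunovFlux}} makes $\mvec{n}\cdot\hat{\mvec{G}}$ appear with opposite sign in the two cells sharing a face; interior faces cancel pairwise on summation, and the zero-flux velocity-space boundary condition eliminates the remaining boundary faces. For the extra surface term, since $\gv w$ is a constant vector aligned with $v_x$, only faces whose outward normal has a nonzero $v_x$-component contribute, i.e.\ the $v_x=v_{x,\min}$ and $v_x=v_{x,\max}$ faces of each cell. On interior $v_x$-faces the recovered function $\hat{f}$ is, by construction, continuous; $T_h$ depends only on $\mvec{x}$ and is likewise continuous; and the normals are opposite — so these cancel pairwise. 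Only the genuine velocity-space boundary survives, and there, by the zero-flux condition and the natural choice discussed around \eqr{\ref{eq:additionalFPBoundaryTerm}}, $\hat{f}$ is replaced by $f_h$ evaluated at the boundary; this is precisely the $T_h\big(\sum_j\oint_{\partial V_{\max_j}}f_h - \sum_j\oint_{\partial V_{\min_j}}f_h\big)$ term in the hypothesis. In the volume term, $\gv^2 w = 0$ annihilates the diffusion piece, leaving only $-\nu\int_{K_j} m(v_x-u_{x_h})f_h\,\dz$ from the drag piece.

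It then remains to recast $\sum_j\int_{K_j} m(v_x-u_{x_h})f_h\,\dz$ as a configuration-space statement. Carrying out the velocity integral first and invoking the definitions \eqr{\ref{eq:discrete0thMoment}} and \eqr{\ref{eq:discrete1stMoment}} of $M_{0_h}$ and $\mvec{M}_{1_h}$, this becomes, weakly, $m M_{1_{x_h}} - m\,[M_{0_h}u_{x_h}]$, the product $M_{0_h}u_{x_h}$ understood as a weak multiplication; here one uses the footnoted remark that the conservation test function should really be taken as $w = m v_x\,\varphi_\ell(\mvec{x})$ ranging over all configuration-space basis functions $\varphi_\ell\in\mathcal{X}_h^p$, which is exactly what promotes the bare velocity integral into a full $L^2$ projection onto configuration space. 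Combining the three pieces gives $\tfrac{d}{dt}\sum_j\int_{K_j} m v_x f_h\,\dz \doteq -\nu\big[\,T_h(\cdots) + m M_{1_{x_h}} - m M_{0_h}u_{x_h}\,\big]$ (treating $\nu$ as constant, or keeping it inside the projection otherwise), and the bracket is precisely the quantity assumed to vanish in \eqr{\ref{eq:FPMomentumConstraint}}; repeating the argument for the $v_y$ and $v_z$ components finishes the proof.

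The main obstacle, I expect, is this last reduction: keeping the bookkeeping honest when exchanging the order of velocity and configuration integration and when introducing the weak product $M_{0_h}u_{x_h}$. The cleanest route is to never actually substitute the bare $w = m v_x$ but rather $w = m v_x\,\varphi_\ell(\mvec{x})$ from the outset, so that the drag volume term is manifestly a weighted $L^2$ projection onto $\mathcal{X}_h^p$ and the hypothesis \eqr{\ref{eq:FPMomentumConstraint}} is literally the statement ``this projection is zero.'' By contrast, the surface-term cancellations are routine once one notes the continuity at the relevant interfaces of $v_x$ (a coordinate), $T_h$ (a configuration-space function), and $\hat{f}$ (continuous by the recovery construction), together with the sign flip built into the Godunov flux.
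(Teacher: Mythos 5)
Your proposal is correct and follows essentially the same route as the paper's proof: substitute $w=m\mvec{v}$ (equivalently, componentwise $w=mv_x\,\varphi_\ell(\mvec{x})$), kill the $\hat{\mvec{G}}$ surface term via the Godunov property and continuity, use $\gv^2\mvec{v}=0$ to drop the diffusion volume term, cancel the interior contributions of the remaining recovery surface term by continuity of $\hat{f}$ so that only the velocity-space boundary survives, and identify the drag volume term with the weak-moment condition $m\mvec{M}_{1_h}-mM_{0_h}\mvec{u}_h$. The only cosmetic difference is that you make the per-face bookkeeping of the extra surface term slightly more explicit; the logical content is identical.
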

\begin{proof}
Substituting $w = m \mvec{v}$ into \eqr{\ref{eq:dis-weak-FP}} and summing over all cells, we obtain
\begin{align}
   \sum_j \int_{K_j} m \mvec{v} \pfrac{f_h}{t} \dz = - \sum_j \oint_{\partial K_j} \nu T_h \hat{f} dS - \sum_j \int_{K_j} \nu m \left ( \mvec{v} - \mvec{u}_h \right )f_h  \dz,    
\end{align}
where we have eliminated the sum over the surface integral involving the numerical flux function $\hat{\mvec{G}}$ since it involves the Godunov flux for the drag and the gradient of the recovered distribution function, both of which cancel upon pairwise summation over the shared surfaces.
Likewise, $\gv^2 \mvec{v} = 0$, so the second volume term vanishes.
These simplifications leave the surface term involving the value of the recovered distribution function, plus the volume term for the drag.
Since the recovered distribution function is continuous at the shared interface, this term also pairwise cancels upon execution of the sum over the surfaces, with the exeception of the contribution at the edge of velocity space.
Thus, to conserve momentum, we require,
\begin{align}
    \int_{\Omega_k} T_h & \left (  \sum_j \oint_{\partial V_{max_j}} f_h \thinspace dS_{V_{max}} - \sum_j \oint_{\partial V_{min_j}} f_h \thinspace dS_{V_{min}} \right )\dx \notag \\
    & + \int_{\Omega_k} \left [\sum_j \int_{K_j \setminus \Omega_k} m \left ( \mvec{v} - \mvec{u}_h \right )f_h \dv \right ] \dx = 0,
\end{align}
where we have used the fact that the Fokker--Planck equation only involves derivatives in velocity space to explicitly separate the configuration space and velocity space integrals, i.e., we have made the conservation of momentum local to a configuration space cell as it must be given the continuous proof in Proposition~\ref{prop:collisionMomentumConservation}.
But we note that this constraint is simply
\begin{align}
    T_h \left (  \sum_j \oint_{\partial V_{max_j}} f_h \thinspace dS_{V_{max}} - \sum_j \oint_{\partial V_{min_j}} f_h \thinspace dS_{V_{min}} \right ) + m \mvec{M}_{1_h} - m M_{0_h} \mvec{u}_h \doteq 0, \notag    
\end{align}
with the caveat that the weak equality will involve a projection over the entire configuration space basis expansion.
To complete the proof, we just redo this calculation with $w = m \mvec{v} \varphi_\ell(\mvec{x})$ for each of our $\ell$ configuration space basis functions, $\varphi_\ell \in \mathcal{X}_h^p$, so that we can substitute
\begin{align}
    \int_{\Omega_k} \left [ \sum_j \int_{K_j \setminus \Omega_k} m \left ( \mvec{v} - \mvec{u}_h \right )f_h \dv \right ] \dx = 0 \notag,    
\end{align}
with
\begin{align}
    m \mvec{M}_{1_h} - m M_{0_h} \mvec{u}_h \doteq 0.    
\end{align}
For clarity, we note that the constraint equation for the flow and temperature required for momentum conservation, \eqr{\ref{eq:FPMomentumConstraint}}, in one spatial dimension and one velocity dimension, 1X1V, is
\begin{align}
    T_h \big[ f_h(v_{max})-f_h(v_{min})\big] + m M_{1_h} - m M_{0_h} u_h \doteq 0.
\end{align}
\end{proof}
\begin{proposition}\label{prop:discrete-energy-cons-FP-p2}
The discrete scheme in \eqr{\ref{eq:dis-weak-FP}} conserves energy,
\begin{align}
    \frac{d}{dt} \sum_j \int_{K_j} \frac{1}{2} m |\mvec{v}|^2 f_h \dz = 0, 
\end{align}
if $|\mvec{v}|^2 \in \mathcal{V}_h^p$, and
\begin{align}
    T_h & \left [  \sum_j \oint_{\partial V_{max_j}} (\mvec{n}\cdot \mvec{v}_{max})f_h \thinspace dS_{V_{max}} - \sum_j \oint_{\partial V_{min_j}} (\mvec{n}\cdot \mvec{v}_{min}) f_h \thinspace dS_{V_{min}} \right ] \notag \\
    & + m M_{2_h} - m \mvec{M}_{1_h} \cdot \mvec{u}_h - 3 M_{0_h} T_h \doteq 0, \label{eq:FPEnergyWeakMomentP2}
\end{align}
where the $\mvec{n}\cdot \mvec{v}_{max/min}$ involves a sum over the contribution from each velocity space surface, i.e.,
\begin{align}
    T_h & \left [  \sum_j \oint_{\partial V_{max_j}} v_{x_{max}} f_h(\mvec{x}, v_{x_{max}}, v_y, v_z) \thinspace dv_y dv_z - \sum_j \oint_{\partial V_{min_j}} v_{x_{min}} f_h(\mvec{x}, v_{x_{min}}, v_y, v_z) \thinspace dv_y dv_z \right ] \notag \\
    + & T_h \left [  \sum_j \oint_{\partial V_{max_j}} v_{y_{max}} f_h(\mvec{x}, v_x, v_{y_{max}}, v_z) \thinspace dv_x dv_z - \sum_j \oint_{\partial V_{min_j}} v_{y_{min}} f_h(\mvec{x}, v_x, v_{y_{min}}, v_z) \thinspace dv_x dv_z \right ] \notag \\
    + & T_h \left [  \sum_j \oint_{\partial V_{max_j}} v_{z_{max}} f_h(\mvec{x}, v_x, v_y, v_{z_{max}}) \thinspace dv_x dv_y - \sum_j \oint_{\partial V_{min_j}} v_{z_{min}} f_h(\mvec{x}, v_x, v_y, v_{z_{min}}) \thinspace dv_x dv_y \right ] \notag \\
    + & m M_{2_h} - m \mvec{M}_{1_h} \cdot \mvec{u}_h - 3 M_{0_h} T_h \doteq 0,
\end{align}
where we have temporarily dropped the time dependence from $f_h$ for ease of notation.
\end{proposition}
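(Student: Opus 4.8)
The plan is to mirror the momentum-conservation argument of Proposition~\ref{prop:discrete-momentum-cons-FP}, now taking the kinetic-energy weight as the test function. First I would substitute $w = \frac{1}{2} m |\mvec{v}|^2$ into the twice-integrated-by-parts discrete weak form \eqr{\ref{eq:dis-weak-FP}} — a substitution that is legitimate precisely because of the hypothesis $|\mvec{v}|^2 \in \mathcal{V}_h^p$, which also guarantees $\mvec{v} \in \mathcal{V}_h^p$ — and then sum over all phase space cells $K_j$, invoking the zero-flux velocity-space boundary conditions on $\hat{\mvec{G}}$ and on $\mvec{n}\cdot\gv w^-\frac{T_h}{m}\hat{f}$ as in \eqr{\ref{eq:additionalFPBoundaryTerm}}.

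Then I would dispose of the surface terms one at a time. The first surface integral, involving $w^- \mvec{n}\cdot\hat{\mvec{G}}$, vanishes on summation because $\hat{\mvec{G}}$ is a Godunov flux for the drag piece and the recovered-distribution gradient $\gv \hat{f}$ is continuous at velocity-space interfaces, so interior contributions cancel pairwise and the remaining boundary contributions vanish by the zero-flux condition. For the second surface integral I would compute $\gv w = m\mvec{v}$, which is continuous across all cell interfaces; hence its interior contributions again cancel pairwise, leaving only the contribution at the edges of velocity space, which is exactly the term $T_h \sum_j \oint_{\partial V_{max_j}} (\mvec{n}\cdot \mvec{v}_{max}) f_h \thinspace dS - T_h \sum_j \oint_{\partial V_{min_j}} (\mvec{n}\cdot \mvec{v}_{min}) f_h \thinspace dS$ appearing in the constraint, after substituting the edge distribution function for the recovery polynomial at the domain boundary.

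Next I would evaluate the two volume terms using $\gv w \cdot (\mvec{v}-\mvec{u}_h) = m\mvec{v}\cdot(\mvec{v}-\mvec{u}_h)$ and $\gv^2 w = 3m$ (the three velocity dimensions). Performing the velocity-space integrations first and appealing to the definitions \eqr{\ref{eq:discrete0thMoment}}--\eqr{\ref{eq:discrete2ndMoment}}, together with weak multiplication for $\mvec{M}_{1_h}\cdot\mvec{u}_h$, the summed volume contribution becomes weakly equal to $m M_{2_h} - m\mvec{M}_{1_h}\cdot\mvec{u}_h - 3 T_h M_{0_h}$ on each configuration-space cell. Collecting the surviving surface boundary term and this volume term and demanding their sum vanish yields precisely the stated constraint \eqr{\ref{eq:FPEnergyWeakMomentP2}}, whence $\frac{d}{dt}\sum_j \int_{K_j} \frac{1}{2}m|\mvec{v}|^2 f_h \dz = 0$. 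As in Proposition~\ref{prop:discrete-momentum-cons-FP}, to obtain the full weak equality rather than merely its zeroth configuration-space projection one repeats the computation with $w = \frac{1}{2}m|\mvec{v}|^2 \varphi_\ell(\mvec{x})$ for each $\varphi_\ell \in \mathcal{X}_h^p$.

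I expect the main obstacle to be the careful bookkeeping of the weak-equality machinery: in particular, justifying that the velocity-space integral of $m\mvec{v}\cdot(\mvec{v}-\mvec{u}_h)f_h$ is \emph{weakly} equal to $m M_{2_h} - m\mvec{M}_{1_h}\cdot\mvec{u}_h$ (a weak multiplication against $\mvec{u}_h$, not an exact product) without introducing aliasing errors, and confirming that the second integration by parts genuinely converts the troublesome $\frac{T_h}{m}\gv f_h$ term into one whose surface part cancels by continuity of the recovered function and its slope — the same mechanism that fails in the once-integrated form \eqr{\ref{eq:schemeNC}}. The role of the hypothesis $|\mvec{v}|^2 \in \mathcal{V}_h^p$ should also be emphasized, since without it (the piecewise-linear case) one must instead work with the projected weight $\overline{|\mvec{v}|^2}$ and the correspondingly projected moments, paralleling Corollary~\ref{cor:kin-e-cons-p-1}.
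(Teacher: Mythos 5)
Your proposal follows essentially the same route as the paper's proof: substitute $w = \tfrac{1}{2}m|\mvec{v}|^2$ (valid since $|\mvec{v}|^2 \in \mathcal{V}_h^p$), cancel the $\hat{\mvec{G}}$ surface term via the Godunov property and continuity of $\gv\hat{f}$, reduce the remaining surface term to the velocity-space boundary contribution by continuity of $\mvec{v}$ and $\hat{f}$, evaluate the volume terms with $\gv w = m\mvec{v}$ and $\gv^2 w = 3m$ to obtain $mM_{2_h} - m\mvec{M}_{1_h}\cdot\mvec{u}_h - 3M_{0_h}T_h$, and promote the result to a full weak equality by repeating with $w = \tfrac{1}{2}m|\mvec{v}|^2\varphi_\ell(\mvec{x})$. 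The argument is correct and matches the paper's proof in all essentials.
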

\begin{proof}
Since $|\mvec{v}|^2$ is in our approximation space $\mathcal{V}_h^p$, we can substitute $w = 1/2 \thinspace m |\mvec{v}|^2$ into \eqr{\ref{eq:dis-weak-FP}} and sum over all cells to obtain
\begin{align}
   \sum_j \int_{K_j} \frac{1}{2} m |\mvec{v}|^2 \pfrac{f_h}{t} \dz = - \sum_j & \oint_{\partial K_j} \nu T_h (\mvec{n} \cdot \mvec{v}) \hat{f} dS \notag \\
   & - \sum_j \int_{K_j} \nu m  \mvec{v} \cdot \left ( \mvec{v} - \mvec{u}_h \right )f_h  - 3 \nu T_h f_h \dz, \label{eq:FP-energy-step-p2}
\end{align}
where we have again leveraged the fact that the surface integral involving the numerical flux function, $\hat{\mvec{G}}$, is a Godunov flux for the drag, and the gradient of the recovered distribution function is continuous, so that both terms cancel upon pairwise summation over the shared surfaces.
We have also substituted $\gv |\mvec{v}|^2 = 2 \mvec{v}$ and $\gv^2 |\mvec{v}|^2 = 6$.
As with our proof of discrete momentum conservation, Proposition~\ref{prop:discrete-momentum-cons-FP}, the interior summation of the remaining surface terms vanishes since the recovered distribution function is continuous at velocity space surfaces, leaving only the integrals along the surfaces at the edge of velocity space.
To conserve energy, we then must satisfy
\begin{align}
    \int_{\Omega_k} T_h & \left [  \sum_j \oint_{\partial V_{max_j}} (\mvec{n}\cdot \mvec{v}_{max})f_h \thinspace dS_{V_{max}} - \sum_j \oint_{\partial V_{min_j}} (\mvec{n}\cdot \mvec{v}_{min}) f_h \thinspace dS_{V_{min}} \right ] \dx \notag \\
    & + \int_{\Omega_k} \left [\sum_j \int_{K_j \setminus \Omega_k} m \left ( |\mvec{v}|^2 - \mvec{v} \cdot \mvec{u}_h \right )f_h  - 3 T_h f_h \dv \right ] \dx = 0, \label{eq:FPEnergyConservationConstraintP2}
\end{align}
where we have again used the fact that the Fokker--Planck equation only involves derivative in velocity space to explicitly separate the configuration space and velocity space integrals, i.e., we have made the conservation of energy local to a configuration space cell as it must be given the continuous proof in Proposition~\ref{prop:collisionEnergyConservation}.
This constraint is simply
\begin{align}
    T_h & \left [  \sum_j \oint_{\partial V_{max_j}} (\mvec{n}\cdot \mvec{v}_{max})f_h \thinspace dS_{V_{max}} - \sum_j \oint_{\partial V_{min_j}} (\mvec{n}\cdot \mvec{v}_{min}) f_h \thinspace dS_{V_{min}} \right ] \notag \\
    & + m M_{2_h} - m \mvec{M}_{1_h} \cdot \mvec{u}_h - 3 M_{0_h} T_h \doteq 0, \notag 
\end{align}
so long as we recognize that weak equality is a stronger statement than the constraint in \eqr{\ref{eq:FPEnergyConservationConstraintP2}}, and we repeat our calculation with $w = 1/2 \thinspace m |\mvec{v}|^2 \varphi_\ell(\mvec{x})$ for each of our $\ell$ configuration space basis functions, $\varphi_\ell \in \mathcal{X}_h^p$, so that we can substitute
\begin{align}
    \int_{\Omega_k} \left [\sum_j \int_{K_j \setminus \Omega_k} m \left ( |\mvec{v}|^2 - \mvec{v} \cdot \mvec{u}_h \right )f_h  - 3 T_h f_h \dv \right ] \dx = 0,    
\end{align}
with
\begin{align}
    m M_{2_h} - m \mvec{M}_{1_h} \cdot \mvec{u}_h - 3 M_{0_h} T_h \doteq 0.
\end{align}
We note that in one spatial dimension and one velocity dimension, 1X1V, this constraint in \eqr{\ref{eq:FPEnergyWeakMomentP2}} is simply
\begin{align}
    T_h \big[ v_{max} f_h(v_{max})-v_{min} f_h(v_{min})\big] + m M_{2_h} - m M_{1_h} u_h - T_h M_{0_h} \doteq 0, \label{eq:FP-1X1V-p2-constraint}
\end{align}
where the coefficient multiplying $T_h M_{0_h}$ has reduced from three to one because we are now only integrating over one velocity dimension, instead of three velocity dimensions.
\end{proof}
One of the most important consequences of Propositions~\ref{prop:discrete-momentum-cons-FP} and \ref{prop:discrete-energy-cons-FP-p2} is that the constraints, Eqns.\thinspace \ref{eq:FPMomentumConstraint} and \ref{eq:FPEnergyWeakMomentP2}, provide a closed set of equations to determine the components of the drag and diffusion coefficients.
Collecting our constraint equations,
\begin{align}
    T_h & \left (  \sum_j \oint_{\partial V_{max_j}} f_h \thinspace dS_{V_{max}} - \sum_j \oint_{\partial V_{min_j}} f_h \thinspace dS_{V_{min}} \right ) + m \mvec{M}_{1_h} - m M_{0_h} \mvec{u}_h \doteq 0, \notag \\
    T_h & \left [  \sum_j \oint_{\partial V_{max_j}} (\mvec{n}\cdot \mvec{v}_{max})f_h \thinspace dS_{V_{max}} - \sum_j \oint_{\partial V_{min_j}} (\mvec{n}\cdot \mvec{v}_{min}) f_h \thinspace dS_{V_{min}} \right ] \notag \\
    & + m M_{2_h} - m \mvec{M}_{1_h} \cdot \mvec{u}_h - 3 M_{0_h} T_h \doteq 0,  \notag
\end{align}
or in one spatial dimension and one velocity dimension (1X1V),
\begin{align}
    T_h \big[ f_h(v_{max})-f_h(v_{min})\big] + m M_{1_h} - m M_{0_h} u_h & \doteq 0, \notag \\
    T_h \big[ v_{max} f_h(v_{max})-v_{min} f_h(v_{min})\big] + m M_{2_h} - m M_{1_h} u_h - T_h M_{0_h} & \doteq 0, \notag 
\end{align}
we have a system of linear equations which allow us to uniquely determine the temperature, $T_h$, and flow, $\mvec{u}_h$, which can then be substituted into our discrete weak form, \eqr{\ref{eq:dis-weak-FP}}.
These expressions may at first seem surprising, as they are a coupled set of linear equations, involving corrections to the temperature, $T_h$, and flow, $\mvec{u}_h$, based on the value of the distribution function at the boundary of velocity space.
If the distribution function vanishes at the boundary, one can eliminate these boundary conditions and recover what we might naively expect for the constraint equations for the temperature and flow, e.g., \eqr{\ref{eq:weakFlowDefinition}} for the flow.
But critically, because we are using a zero-flux boundary condition in velocity space, the distribution function is not exactly zero at the boundary, and one must account for this correction, however small it may be, to ensure the discrete scheme for the Fokker--Planck equation conserves momentum and energy, both locally within a configuration space cell, and globally.

Additionally, we note that we only discussed the case when $|\mvec{v}|^2 \in \mathcal{V}_h^p$ when examining whether the semi-discrete Fokker--Planck equation conserved energy in Proposition~\ref{prop:collisionEnergyConservation}.
Since we showed in Corollary~\ref{cor:kin-e-cons-p-1} and Proposition~\ref{prop:DiscreteVlasovEnergyProof} that the semi-discrete Vlasov--Maxwell system of equations conserves energy, even if only employing piecewise linear polynomials and thus $|\mvec{v}|^2\notin \mathcal{V}_h^p$, we can examine a similar case, but for the semi-discrete Fokker--Planck equation.
We note that we can at this point connect our discussion about the projection of $\overline{|\mvec{v}|^2}$ onto piecewise linear basis functions using the language of weak equality, i.e.,
\begin{align}
    \overline{|\mvec{v}|^2} \doteq |\mvec{v}|^2. \label{eq:v2-p1}
\end{align}
We emphasize again an important property of this projection: just like $|\mvec{v}|^2, \overline{|\mvec{v}|^2}$ is continuous in velocity space, so that we do not have to worry about discontinuities in the projection of  $|\mvec{v}|^2$ onto piecewise linear basis functions.
\begin{proposition}\label{prop:FP-energy-p1}
The discrete scheme in \eqr{\ref{eq:dis-weak-FP}} conserves energy when using piecewise linear polynomials,
\begin{align}
    \frac{d}{dt} \sum_j \int_{K_j} \frac{1}{2} m \overline{|\mvec{v}|^2} f_h \dz = 0, 
\end{align}
where $\overline{|\mvec{v}|^2}$ is the projection of $|\mvec{v}|^2$ onto piecewise linear basis functions, if
\begin{align}
    T_h & \left [  \sum_j \oint_{\partial V_{max_j}} (\mvec{n}\cdot \overline{\mvec{v}}_{max})f_h \thinspace dS_{V_{max}} - \sum_j \oint_{\partial V_{min_j}} (\mvec{n}\cdot \overline{\mvec{v}}_{min}) f_h \thinspace dS_{V_{min}} \right ] \notag \\
    & + m M^*_{2_h} - m \mvec{M}^*_{1_h} \cdot \mvec{u}_h - 3 M^*_{0_h} T_h \doteq 0. \label{eq:FPEnergyWeakMomentP1}
\end{align}
Here, $\mvec{n}\cdot \overline{\mvec{v}}_{max/min}$ involves a sum over the contribution from each velocity space surface, i.e.,
\begin{align}
    T_h & \left [  \sum_j \oint_{\partial V_{max_j}} \overline{v}_{x_{max}} f_h(\mvec{x}, v_{x_{max}}, v_y, v_z) \thinspace dv_y dv_z - \sum_j \oint_{\partial V_{min_j}} \overline{v}_{x_{min}} f_h(\mvec{x}, v_{x_{min}}, v_y, v_z) \thinspace dv_y dv_z \right ] \notag \\
    + & T_h \left [  \sum_j \oint_{\partial V_{max_j}} \overline{v}_{y_{max}} f_h(\mvec{x}, v_x, v_{y_{max}}, v_z) \thinspace dv_x dv_z - \sum_j \oint_{\partial V_{min_j}} \overline{v}_{y_{min}} f_h(\mvec{x}, v_x, v_{y_{min}}, v_z) \thinspace dv_x dv_z \right ] \notag \\
    + & T_h \left [  \sum_j \oint_{\partial V_{max_j}} \overline{v}_{z_{max}} f_h(\mvec{x}, v_x, v_y, v_{z_{max}}) \thinspace dv_x dv_y - \sum_j \oint_{\partial V_{min_j}} \overline{v}_{z_{min}} f_h(\mvec{x}, v_x, v_y, v_{z_{min}}) \thinspace dv_x dv_y \right ] \notag \\
    + & m M^*_{2_h} - m \mvec{M}^*_{1_h} \cdot \mvec{u}_h - 3 M^*_{0_h} T_h \doteq 0,
\end{align}
and the ``star moments'' are defined as follows,
\begin{align}
    M^*_{0_h} & \doteq \sum_{j\neq j_{max}} \oint_{\partial K_j \setminus \Omega_k} (\mvec{n} \cdot \Delta \mvec{v}) \hat{f} dS, \label{eq:M0Star} \\
    \mvec{M}^*_{1_h} & \doteq \sum_j \int_{K_j \setminus \Omega_k} \overline{\mvec{v}} f_h \dv, \label{eq:M1Star}\\
    M^*_{2_h} & \doteq \sum_j \int_{K_j \setminus \Omega_k} \mvec{v} \cdot \overline{\mvec{v}} f_h \dv, \label{eq:M2Star}
\end{align}
where $\hat{f}$ is the recovery polynomial, $\overline{\mvec{v}}$ is $1/2 \thinspace \gv \overline{|\mvec{v}|^2}$ and equal to the cell center velocity, as previously shown in Corollary~\ref{cor:kin-e-cons-p-1}, and $\Delta \mvec{v}$ is the 1D grid spacing along the direction $\mvec{v}$. Note that $\Delta \mvec{v}$ in the $j^{\textrm{th}}$ cell is related to the cell center velocity, 
\begin{align}
    \Delta \mvec{v}_j = \overline{\mvec{v}}_{j+1} - \overline{\mvec{v}}_j, \label{eq:deltaVCellCenter}
\end{align}
and the sum in \eqr{\ref{eq:M0Star}} is over all surfaces except the edges of velocity space, i.e., the last index $j_{max}$.
\end{proposition}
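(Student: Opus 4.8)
The plan is to run the argument of Proposition~\ref{prop:discrete-energy-cons-FP-p2} essentially unchanged, with $|\mvec{v}|^2$ replaced everywhere by $\overline{|\mvec{v}|^2}$. The two facts that drive everything are that $\gv w = m\,\overline{\mvec{v}}\,\varphi_\ell$, where $\overline{\mvec{v}}$ is the piecewise-constant cell-centre velocity of Corollary~\ref{cor:kin-e-cons-p-1}, and that $\gv^2 w = 0$ on each cell. Concretely, I would set $w = \tfrac{1}{2} m\,\overline{|\mvec{v}|^2}\,\varphi_\ell(\mvec{x})$ in the discrete weak form \eqref{eq:dis-weak-FP} for every configuration-space basis function $\varphi_\ell \in \mathcal{X}_h^p$ (the factor $\varphi_\ell$ carries through inertly under the velocity derivatives and is what upgrades the scalar energy identity to a genuine weak equality, exactly as in Propositions~\ref{prop:discrete-momentum-cons-FP} and \ref{prop:discrete-energy-cons-FP-p2}), and then sum over all phase-space cells $K_j$.

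Then I would treat the four resulting terms in turn. First, the surface term carrying the Godunov flux $\hat{\mvec{G}}$ is weighted by $w^- = \tfrac{1}{2} m\,\overline{|\mvec{v}|^2}\,\varphi_\ell$; since $\overline{|\mvec{v}|^2}$ is continuous across velocity-space interfaces (cf.\ \eqref{eq:v2-p1}) and $\hat{\mvec{G}}$ is a Godunov flux, this telescopes to zero once the zero-flux velocity-space boundary condition is imposed. Second, and this is the new ingredient relative to the $p\ge 2$ case, the surface term $-\oint_{\partial K_j}\nu\,(\mvec{n}\cdot\gv w^-)\,\tfrac{T_h}{m}\hat f\,dS = -\oint_{\partial K_j}\nu\,T_h\,(\mvec{n}\cdot\overline{\mvec{v}}^-)\,\hat f\,\varphi_\ell\,dS$ no longer telescopes on the interior, because $\gv w^-$ jumps between cells; pairing the two contributions on a shared velocity-space face and using $\overline{\mvec{v}}_{j+1}-\overline{\mvec{v}}_j = \Delta\mvec{v}_j$ (eq.~\eqref{eq:deltaVCellCenter}) produces, on interior faces, exactly the star moment $M^*_{0_h}$ of \eqref{eq:M0Star}, and on the $v_{max}/v_{min}$ faces (where by the zero-flux condition the recovery polynomial is taken to be $f_h$ itself, cf.\ \eqref{eq:additionalFPBoundaryTerm}) the explicit $\mvec{n}\cdot\overline{\mvec{v}}_{max/min}$ boundary terms of \eqref{eq:FPEnergyWeakMomentP1}. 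This interior surface term is the discrete stand-in for the $\gv^2 w$ volume term that supplied $3M_{0_h}T_h$ in Proposition~\ref{prop:discrete-energy-cons-FP-p2}. Third, the genuine $\gv^2 w$ volume term vanishes identically here. Fourth, the drag volume term $-\int_{K_j}\nu\,\gv w\cdot(\mvec{v}-\mvec{u}_h)f_h = -\int_{K_j}\nu m\big(\mvec{v}\cdot\overline{\mvec{v}} - \overline{\mvec{v}}\cdot\mvec{u}_h\big)f_h\,\varphi_\ell$; separating the velocity and configuration integrals (legitimate since the Fokker--Planck operator differentiates only in velocity) and reading off the definitions \eqref{eq:M1Star} and \eqref{eq:M2Star} gives $-\nu\int_{\Omega_k}\big(mM^*_{2_h} - m\mvec{M}^*_{1_h}\cdot\mvec{u}_h\big)\varphi_\ell\,\dx$, with a weak multiplication for $\mvec{M}^*_{1_h}\cdot\mvec{u}_h$. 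Collecting the pieces, the time derivative of the projected energy equals $-\nu$ times the projection onto each $\varphi_\ell$ of the left-hand side of \eqref{eq:FPEnergyWeakMomentP1}, so imposing that this quantity be weakly zero is precisely the stated condition.

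I expect the main obstacle to be the bookkeeping in the second step: fixing the signs in the pairwise cancellation, pinning down the ``$\sum_{j\neq j_{max}}$'' convention in \eqref{eq:M0Star} so that each interior velocity-space face is counted with the correct multiplicity while the boundary surfaces split off cleanly, and, most delicately, verifying that summing the $v_x$-, $v_y$- and $v_z$-interface families reproduces the coefficient $3$ multiplying $M^*_{0_h}T_h$ in \eqref{eq:FPEnergyWeakMomentP1} --- the exact discrete counterpart of $\gv^2|\mvec{v}|^2 = 6$ in the $p\ge 2$ proof. A minor additional point worth recording, exactly as after Proposition~\ref{prop:discrete-energy-cons-FP-p2}, is that if $f_h$ vanishes at the velocity-space boundary the boundary terms in \eqref{eq:FPEnergyWeakMomentP1} drop out and the constraint reduces to the purely interior relation $mM^*_{2_h} - m\mvec{M}^*_{1_h}\cdot\mvec{u}_h - 3M^*_{0_h}T_h \doteq 0$; but with a zero-flux boundary condition these corrections must be retained, however small, for the scheme to conserve energy both locally in each configuration-space cell and globally.
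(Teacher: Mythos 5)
Your proposal is correct and follows essentially the same route as the paper's proof: substitute $w = \tfrac{1}{2}m\,\overline{|\mvec{v}|^2}\,\varphi_\ell$, telescope the Godunov-flux surface term using the continuity of $\overline{|\mvec{v}|^2}$, note that the diffusion volume term vanishes while the $\gv w$ surface term fails to telescope because $\overline{\mvec{v}}$ is piecewise constant, convert the interior-face jumps via $\Delta\mvec{v}_j = \overline{\mvec{v}}_{j+1}-\overline{\mvec{v}}_j$ into $M^*_{0_h}$ plus the explicit velocity-space boundary terms, and read the drag volume term off as $m M^*_{2_h} - m\mvec{M}^*_{1_h}\cdot\mvec{u}_h$. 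The bookkeeping you flag as delicate (the coefficient of $M^*_{0_h}T_h$ arising from the three interface families, the discrete analogue of $\gv^2|\mvec{v}|^2$) is exactly the point the paper also handles only by remarking that the coefficient reduces from three to one in 1X1V.
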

\begin{proof}
Since we are restricting ourselves to only using piecewise linear polynomials, $|\mvec{v}|^2\notin \mathcal{V}_h^p$, and we must project $|\mvec{v}|^2$ onto our basis set using \eqr{\ref{eq:v2-p1}}.
We can then substitute $w = 1/2 m \overline{|\mvec{v}|^2}$ into \eqr{\ref{eq:dis-weak-FP}} and sum over cells to obtain
\begin{align}
   \sum_j \int_{K_j} \frac{1}{2} m \overline{|\mvec{v}|^2} \pfrac{f_h}{t} \dz = - \sum_j & \oint_{\partial K_j} \nu T_h (\mvec{n} \cdot \overline{\mvec{v}}) \hat{f} dS \notag \\
   & - \sum_j \int_{K_j} \nu m  \overline{\mvec{v}} \cdot \left ( \mvec{v} - \mvec{u}_h \right )f_h \dz, \label{eq:FP-energy-step-p1}
\end{align}
where $1/2 \thinspace \gv \overline{|\mvec{v}|^2} = \overline{\mvec{v}}$, the cell center velocity. Note the differences in \eqr{\ref{eq:FP-energy-step-p1}} compared to \eqr{\ref{eq:FP-energy-step-p2}} in Proposition~\ref{prop:discrete-energy-cons-FP-p2}, when we were employing at least quadratic polynomials and $|\mvec{v}|^2 \in \mathcal{V}_h^p$.
The sum over surface integrals involving the numerical flux function, $\hat{\mvec{G}}$, still vanishes because $\overline{|\mvec{v}|^2}$, despite being a projection, is continuous across velocity space interfaces, and we can thus still leverage the fact that the flux for the drag term is a Godunov flux and the gradient of the recovered distribution function is continuous to pairwise cancel the surface integrals in the sum.
Importantly, the volume term for the diffusion has vanished, since $\gv^2 \overline{|\mvec{v}|^2} = 0$.
Likewise, we cannot cancel the interior sums over the surface in the remaining surface integrals like we did in Proposition~\ref{prop:discrete-energy-cons-FP-p2} because $\overline{\mvec{v}}$ is not continuous at velocity space surfaces---$\overline{\mvec{v}}$ is a piecewise constant function!
To have energy conservation, we then must have
\begin{align}
    \int_{\Omega_k} & T_h \left [  \sum_j \oint_{\partial V_{max_j}} (\mvec{n}\cdot \overline{\mvec{v}}_{max})f_h \thinspace dS_{V_{max}} - \sum_j \oint_{\partial V_{min_j}} (\mvec{n}\cdot \overline{\mvec{v}}_{min}) f_h \thinspace dS_{V_{min}} \right ] \dx \notag \\
    & + \int_{\Omega_k} \left [\sum_j \int_{K_j \setminus \Omega_k } m  \overline{\mvec{v}} \cdot \left ( \mvec{v} - \mvec{u}_h \right )f_h \dv - T_h \sum_{j\neq j_{max}} \oint_{\partial K_j \setminus \Omega_j} (\mvec{n} \cdot \Delta \mvec{v}) \hat{f} dS \right ]\dx= 0,
\end{align}
where we have used \eqr{\ref{eq:deltaVCellCenter}} to simplify the interior surface integrals of the recovered distribution function.
Repeating our calculation for $w = 1/2 m \overline{|\mvec{v}|^2} \varphi_\ell(\mvec{x})$ for each of our $\ell$ configuration space basis functions, $\varphi_\ell \in \mathcal{X}_h^p$, and using Eqns.\thinspace(\ref{eq:M0Star}--\ref{eq:M2Star}), we then have
\begin{align}
    T_h & \left [  \sum_j \oint_{\partial V_{max_j}} (\mvec{n}\cdot \overline{\mvec{v}}_{max})f_h \thinspace dS_{V_{max}} - \sum_j \oint_{\partial V_{min_j}} (\mvec{n}\cdot \overline{\mvec{v}}_{min}) f_h \thinspace dS_{V_{min}} \right ] \notag \\
    & + m M^*_{2_h} - m \mvec{M}^*_{1_h} \cdot \mvec{u}_h - 3 M^*_{0_h} T_h \doteq 0, \notag    
\end{align}
exactly the constraint we expect for energy to be conserved.
We note that in one spatial dimension and one velocity dimension, this constraint simplifies to
\begin{align}
    T_h \big[ \overline{v}_{max}f_h(v_{max})-\overline{v}_{min} f_h(v_{min})\big] + M^*_{2_h} - M^*_{1_h} u_h -T_h M^*_{0_h} \doteq 0,
\end{align}
where, like in \eqr{\ref{eq:FP-1X1V-p2-constraint}} in Proposition~\ref{prop:discrete-energy-cons-FP-p2}, the 1X1V constraint does not have a factor of three multiplying $T_h$.
\end{proof}
So, the semi-discrete Fokker--Planck equation also retains conservation of energy with piecewise linear polynomials, provided one modifies the constraint equations,
\begin{align}
    T_h & \left (  \sum_j \oint_{\partial V_{max_j}} f_h \thinspace dS_{V_{max}} - \sum_j \oint_{\partial V_{min_j}} f_h \thinspace dS_{V_{min}} \right ) + m \mvec{M}_{1_h} - m M_{0_h} \mvec{u}_h \doteq 0, \notag \\
    T_h & \left [  \sum_j \oint_{\partial V_{max_j}} (\mvec{n}\cdot \overline{\mvec{v}}_{max})f_h \thinspace dS_{V_{max}} - \sum_j \oint_{\partial V_{min_j}} (\mvec{n}\cdot \overline{\mvec{v}}_{min}) f_h \thinspace dS_{V_{min}} \right ] \notag \\
    & + m M^*_{2_h} - m \mvec{M}^*_{1_h} \cdot \mvec{u}_h - 3 M^*_{0_h} T_h \doteq 0, \notag    
\end{align}
or in one spatial dimension and one velocity dimension (1X1V),
\begin{align}
    T_h \big[ f_h(v_{max})-f_h(v_{min})\big] + m M_{1_h} - m M_{0_h} u_h & \doteq 0, \notag \\
    T_h \big[ \overline{v}_{max}f_h(v_{max})-\overline{v}_{min} f_h(v_{min})\big] + M^*_{2_h} - M^*_{1_h} u_h -T_h M^*_{0_h} & \doteq 0. \notag   
\end{align}
When using piecewise linear polynomials, one must not only compute $M_{0_h}$ and $\mvec{M}_{1_h}$, the standard moments given in Eqns.\thinspace(\ref{eq:discrete0thMoment}) and (\ref{eq:discrete1stMoment}), but also the ``star moments'' given by Eqns.\thinspace(\ref{eq:M0Star} -- \ref{eq:M2Star}).
Using these coupled constraint equations, we can then uniquely compute the temperature, $T_h$, and flow, $\mvec{u}_h$, for use in our semi-discrete Fokker--Planck equation.

Before we conclude this section, we note that we have not discussed a discrete analogy to the continuous system's Second Law of Thermodynamics, Proposition~\ref{prop:collisionEntropy}, and H-theorem, Corollary~\ref{coro:HTheorem}.
Unfortunately the finite velocity space extents required by our continuum approach complicate the requisite proofs, along with the required gradients of the expansion of $\ln (f_h)$. 
We will instead defer until Chapter~\ref{ch:Benchmarks}, when we demonstrate numerically that the scheme still respects these essential physics properties.

\section{The Time Discretization of the \\Vlasov--Maxwell--Fokker--Planck System of Equations}\label{sec:TimeDiscretization}
Having now constructed a semi-discrete scheme for the VM-FP system of equations for the discretization of the equation system in phase space and configuration space, we seek to complete the discretization with a discussion of how best to numerically integrate the semi-discrete system in time.
We note that the result of the semi-discrete system is a set of ordinary differential equations.
For the Vlasov--Fokker--Planck equation we have
\begin{align}
    \pfrac{f_h}{t} \doteq \mathcal{L}(f_h, \mvec{E}_h, \mvec{B}_h, t), \label{eq:linearOpKinetic}
\end{align}
and likewise for Maxwell's equations, where $\mathcal{L}$ is a linear operator encompassing the evaluation of the integrals in the discrete weak forms, Eqns.\thinspace(\ref{eq:dis-weak-form}) and (\ref{eq:dis-weak-FP}), for all basis functions $w \in \mathcal{V}_h^p$ and all cells $K_j \in \mathcal{T}$.
We will show in Chapter~\ref{ch:ImplementationDGFEM} how we actually construct and evaluate $\mathcal{L}$ in \eqr{\ref{eq:linearOpKinetic}}.
For now, we imagine that we have evaluated $\mathcal{L}$ for the Vlasov--Fokker--Planck equation, and likewise Maxwell's equations, and now need to solve the system of ordinary differential equations for the time derivative of the discrete distribution function and electromagnetic fields.

We consider in this thesis a class of strong stability preserving Runge--Kutta (SSP-RK) methods \citep{Shu:2002,Durran:2010}.
These methods are all multi-stage Runge--Kutta methods.
Defining a forward Euler step as
\begin{align}
    \mathcal{F}(f,t) = f + \Delta t \mathcal{L}(f,t), \label{eq:forwardEuler}
\end{align}
we can construct, for example, the second order SSP-RK,
\begin{align}\begin{aligned}
  f^{(1)} &= \mathcal{F}\left(f^{n},t^{n}\right), \\ f^{n+1} &=
  \frac{1}{2}f^{n} + \frac{1}{2}\mathcal{F}\left(f^{(1)},t^n+\Delta
  t\right),
\end{aligned}\end{align}
the third order SSP-RK,
\begin{align}
\begin{aligned}
  f^{(1)} &= \mathcal{F}\left(f^{n},t^{n}\right), \\ f^{(2)} &=
  \frac{3}{4}f^{n} + \frac{1}{4}\mathcal{F}\left(f^{(1)},t^n+\Delta
  t\right),\\ f^{n+1} &= \frac{1}{3}f^{n} +
  \frac{2}{3}\mathcal{F}\left(f^{(2)},t^n+\Delta t/2\right),
\end{aligned}\label{eq:SSPRK3}
\end{align}
and the four stage third order SSP-RK:
\begin{align}\begin{aligned}
  f^{(1)} &= \mathcal{F}\left(f^{n},t^{n}\right), \\ f^{(2)} &=
  \frac{1}{2}f^{(1)} + \frac{1}{2}\mathcal{F}\left(f^{(1)},t^n+\Delta
  t/2\right),\\ f^{(3)} &= \frac{2}{3}f^{n} + \frac{1}{6}f^{(2)} +
  \frac{1}{6}\mathcal{F}\left(f^{(2)},t^n+\Delta t\right),\\ f^{n+1}
  &= \frac{1}{2}f^{(3)} +
  \frac{1}{2}\mathcal{F}\left(f^{(3)},t^n+\Delta t/2\right).
\end{aligned}\end{align}
There are SSP-RK methods with more stages, as well as higher order, than the methods shown here \citep{Shu:2002}. 
Multi-stage Runge--Kutta methods require a balance between the order of the scheme and the number of stages, and thus the amount of computations required. 
Especially for very high order multi-stage Runge--Kutta methods, it can require increasingly large numbers of intermediate stages to attain marginal improvements to the order of the scheme.
We will most often employ the three-stage, third order SSP-RK method, \eqr{\ref{eq:SSPRK3}}, as a balance between accuracy, computation, and memory footprint for the storage of the intermediate stages.

The result of the SSP-RK-DG space-time discretization for the VM-FP system of equations is a fully explicit scheme, and thus we expect to be restricted in the size of our time-step by a Courant-Friedrich-Lewy (CFL) condition. 
CFL conditions arise due to the restriction that we must be able to integrate the system of ordinary differential equations along the characteristics of the partial differential equation.
In practical terms, imagine propagating a wave with velocity $v$ in a discrete system.
In order to propagate the wave along a discrete grid with some cell spacing $\Delta x$, we must be careful not to take too large of a time-step, lest the wave move multiple grid cells in a single time-step and thus potentially lose amplitude and phase information.
Thus, we require
\begin{align}
    \Delta t \lesssim \frac{\Delta x}{v}.
\end{align}

CFL conditions can be expressed simply in terms of the CFL frequency, the fastest signal in the discrete system,
\begin{align}
    \sum_{i=1}^d \omega_i \Delta t \leq C,
\end{align}
where $d$ is the dimensionality of the problem, $\omega_i$ is the fastest frequency in each of the $i$ dimensions, and $C$ is some additional safety factor which may be required for stability.
One CFL condition will come from solving Maxwell's equations, where we must be able to stably propagate light waves,
\begin{align}
    \sum_{i=1}^{CDIM} c \frac{\Delta t}{\Delta x_i} \leq \frac{1}{2p + 1}. \label{eq:MaxwellCFL}
\end{align}
Here, $c$ is the speed of light, $CDIM$ is the number of configuration space dimensions, and $p$ is the polynomial order of our basis expansion.
We recognize $c/\Delta x_i$ as the largest discrete frequency in the system given some cell spacing $\Delta x_i$ in each of the $i$ configuration space dimensions, since the speed of light is unequivocally the fastest velocity in the system.

Note that we have plugged in for the safety factor $C = 1/(2p+1)$.
This CFL condition is similar to the constraint for the finite-difference-time-domain (FDTD) discretization of Maxwell's equations \citep{Yee:1966}, but with this additional safety factor for stability which depends upon the polynomial order of our basis expansion\citep{Cockburn:2001}.
In fact, \citet{Cockburn:2001} explicitly calculated the required safety factor arising from the polynomial order of the basis expansion for $L^2$ stability, and although it is not exactly $1/(2p + 1)$, the safety factor is approximately this value for a wide variety of polynomial orders, at least to within five to ten percent. 
As such, we use $1/(2p + 1)$ for the safety factor in our computation of the size of the time-step.
In the limit that the grid spacing in each configuration space dimension is equal, $\Delta x = \Delta y = \Delta z$, we can simplify the Maxwell's equation CFL condition to
\begin{align}
    c \frac{\Delta t}{\Delta x} \leq \frac{1/CDIM}{2p + 1}.
\end{align}

We likewise have a CFL condition for the Vlasov--Fokker--Planck equation.
We first note that the Vlasov equation CFL condition can be written as,
\begin{align}
    \Delta t \sum_{i=1}^{CDIM+VDIM} \max_{\mathcal{T}} \left |\frac{\alpha_i}{\Delta z_i} \right | \leq \frac{1}{2p + 1},
\end{align}
where $|\cdot|$ is the absolute value.
It will give us better intuition for this time step constraint by separating the configuration space and velocity space CFL conditions,
\begin{align}
    \Delta t \left [\sum_{i=1}^{CDIM} \max_{\mathcal{T}}\left |\frac{v_i }{\Delta x_i} \right | + \sum_{j=1}^{VDIM} \max_{\mathcal{T}}\left |\frac{q_s/m_s \thinspace (E_h + v \times B_h)_j}{\Delta v_j} \right | \right ] \leq \frac{1}{2p + 1}, \label{eq:VlasovCFLCondition}
\end{align}
where we have abbreviated the number of configuration space dimensions as $CDIM$, as before in \eqr{\ref{eq:MaxwellCFL}}, and the number of velocity space dimensions as $VDIM$.
The first term on the left-hand side of \eqr{\ref{eq:VlasovCFLCondition}} uses the maximum velocity in each direction, i.e., the velocity space edge in each direction, to determine the largest frequency in configuration space from the local configuration space grid spacing $\Delta x_i$.
The second term on the left-hand side of \eqr{\ref{eq:VlasovCFLCondition}} uses the maximum acceleration due to the electromagnetic fields measured in the phase space domain $\mathcal{T}$ to compute the largest frequency in velocity space from the local velocity space grid spacing $\Delta v_j$.

Likewise, for the Fokker--Planck equation we have
\begin{align}
    \Delta t \left [\sum_{j=1}^{VDIM} \max_{\mathcal{T}} \left | \nu \frac{(v - u_h)_j}{\Delta v_j} \right |+ \sum_{j=1}^{VDIM} \max_{\mathcal{T}} \left |\nu \frac{T_h}{m_s} \frac{1}{(\Delta v_j)^2} \right | \right ]\leq \frac{1}{2p + 1}, \label{eq:FPCFLCondition}
\end{align}
where the first term on the left-hand side of \eqr{\ref{eq:FPCFLCondition}} is the maximum frequency due to the drag term, and the second term on the left-hand side of \eqr{\ref{eq:FPCFLCondition}} is the CFL frequency due to the diffusion operator.
Note that the CFL frequency of the diffusive term scales like $(\Delta v_j)^{-2}$, the inverse square of the grid spacing, as it must because the diffusion operator involves two derivatives of the distribution function in velocity space.
Defining
\begin{align}
    CFL^{collisionless} & = \sum_{i=1}^{CDIM} \max_{\mathcal{T}}\left |\frac{v_i }{\Delta x_i} \right | + \sum_{j=1}^{VDIM} \max_{\mathcal{T}}\left |\frac{q_s/m_s \thinspace (E_h + v \times B_h)_j}{\Delta v_j} \right |, \\
    CFL^c & = \sum_{j=1}^{VDIM} \max_{\mathcal{T}} \left | \nu \frac{(v - u_h)_j}{\Delta v_j} \right |+ \sum_{j=1}^{VDIM} \max_{\mathcal{T}} \left |\nu \frac{T_h}{m_s} \frac{1}{(\Delta v_j)^2} \right |, 
\end{align}
we can then say that the total CFL condition for the Vlasov--Fokker--Planck equation is
\begin{align}
    \Delta t (CFL^{collisionless} + CFL^c) \leq \frac{1}{2p + 1}. \label{eq:totalVlasovCFL}
\end{align}

A few remarks on the CFL condition for the Vlasov--Fokker--Planck equation are in order. 
The first remark is that we are being careful to determine the maximum frequency in each dimension.
For Maxwell's equation, the CFL condition, \eqr{\ref{eq:MaxwellCFL}}, could naturally be simplified because the speed of light is the same in each direction.
We could presume a similar restriction for the Vlasov--Fokker--Planck equation, find the maximum characteristic of each of the phase space dimensions, find the maximum of those maximum characteristics, and then include an additional safety factor of $1/d_z$ where $d_z$ is the number of phase space dimensions.
In other words, presuming the acceleration in the $x$ direction, $E_x + v_y B_z - v_z B_y$, is the maximum characteristic in the system, we can use that acceleration divided by the grid spacing $\Delta v_x$ to calculate the CFL frequency, and then divide that CFL frequency by six if one is evolving the Vlasov--Fokker--Planck equation in the full six dimensional phase space.
Of course, this approach could lead to a quite restrictive time-step compared to the combination of CFL frequencies in Eqns.\thinspace(\ref{eq:VlasovCFLCondition}) and (\ref{eq:FPCFLCondition}), depending on how anisotropic the characteristics are.
For example, even if the acceleration is quite large in the $x$ direction, leading to a large CFL frequency in the $v_x$ direction, the acceleration in the other two velocity dimensions, along with the maximum velocity in the three configuration space dimensions, could be lower magnitude and thus lead to smaller contributions to the total CFL frequency.
So long as we are careful to stay within the region of stability for our SSP-RK scheme, there is little reason not to take the largest possible time-step. 

An additional remark is to connect the maximum characteristic, for example the maximum acceleration or the maximum drag, to the numerical flux functions defined previously, Eqns.\thinspace(\ref{eq:simpleGlobalLFVlasov}) and (\ref{eq:dragGlobalLF}).
In the global Lax-Friedrichs fluxes defined for the Vlasov equation and drag component of the Fokker--Planck equation, we required the maximum of the flux, either collisionless or drag, sampled over the whole phase space domain, $\mathcal{T}$.
This term, $\tau$ for example in \eqr{\ref{eq:simpleGlobalLFVlasov}}, is exactly the required component of the CFL frequency in each dimension in Eqns.\thinspace(\ref{eq:VlasovCFLCondition}) and (\ref{eq:FPCFLCondition}).
Historically, the definition of the penalization term has also been done in the opposite direction, with for example
\begin{align}
    \tau_i = \frac{1}{2p + 1} \frac{\Delta z_i}{\Delta t},
\end{align}
as in \citep{Lax:1954}.
Though this particular penalization term is a critical component of some stability bounds proved for the hyperbolic partial differential equations studied in \citep{Lax:1954}, such a large penalization can have unintended consequences for the accuracy of the scheme, leading to a combination of overdiffusion and monotonicity errors in the discrete solution.
We will avoid such an extreme definition and instead continue to use Eqns.\thinspace(\ref{eq:simpleGlobalLFVlasov}) and (\ref{eq:dragGlobalLF}) when we discuss the actual implementation of the method in the next chapter, Chapter~\ref{ch:ImplementationDGFEM}.

With both the CFL constraint for Maxwell's equations and the CFL constraint for the Vlasov--Fokker--Planck equation in hand, we have completed the mathematical formulation of our discrete VM-FP system of equations.
We evaluate the operators defined in our semi-discrete scheme, Eqns.\thinspace(\ref{eq:dis-weak-form}) and (\ref{eq:dis-weak-FP}) for the Vlasov--Fokker--Planck equation, and Eqns.\thinspace (\ref{eq:dis-weak-B}) and (\ref{eq:dis-weak-E}) for Maxwell's equations, and then determine from these evaluations which of the two CFL conditions, \eqr{\ref{eq:totalVlasovCFL}} or \eqr{\ref{eq:MaxwellCFL}}, is more restrictive.
Having calculated both the linear operator $\mathcal{L}$ for the complete semi-discrete VM-FP system of equations and the size of the time step $\Delta t$, we can then plug the results into a forward Euler time step, \eqr{\ref{eq:forwardEuler}}, and repeat the process as desired for a multi-stage SSP-RK method, e.g., SSP-RK3 in \eqr{\ref{eq:SSPRK3}}.
Before we move on from the mathematical foundation we have laid in this chapter to the details of turning this mathematical foundation into algorithms and code, we summarize the results of this chapter in the next section.

\section{Summary of Chapter 2}\label{sec:chapter2Summary}

We now summarize the contents of this chapter, and in doing so, foreshadow some of the most important issues we will have to address in Chapter~\ref{ch:ImplementationDGFEM} when we move from a mathematical formulation of the discrete scheme to an algorithmic formulation of the numerical method.
In this regard, it is worth further driving the point of this chapter home: Eqns.\thinspace(\ref{eq:dis-weak-form}) and (\ref{eq:dis-weak-FP}) for the Vlasov--Fokker--Planck equation, and Eqns.\thinspace (\ref{eq:dis-weak-B}) and (\ref{eq:dis-weak-E}) for Maxwell's equations, followed by an appropriate ordinary differential equation integrator such as an SSP-RK3 method,  \eqr{\ref{eq:SSPRK3}}, are a mathematically complete description of the discrete scheme.
To now be a bit glib, mathematically, we are done.

We have formulated a discrete scheme, which has provably retained properties of the continuous system, with some flexibility in the choice of numerical flux function, e.g., for Maxwell's equations,
central fluxes, Eqns.\thinspace(\ref{eq:centralE})-(\ref{eq:centralB}), or upwind fluxes, Eqns.\thinspace(\ref{eq:r-e2})-(\ref{eq:r-b3}), both of which are perfectly acceptable numerical flux functions for Maxwell's equations which have different, but potentially better properties depending on the problem being tackled.
For example, we showed in Lemma~\ref{lem:em-e-cons} that central fluxes for Maxwell's equations conserves the electromagnetic energy, thus producing a completely conservative scheme in Proposition~\ref{prop:DiscreteVlasovEnergyProof}, while upwind fluxes for Maxwell's equations introduces numerical diffusion in the electromagnetic energy, thus leading to a monotonic decay of the energy.
Central fluxes for Maxwell's equations is not free of numerical errors though, replacing diffusive errors with dispersive errors, errors in the phases of the solutions, e.g., when propagating an electromagnetic wave.
These dispersive errors can be equally problematic \citep{Hesthaven:2004}, but regardless of the choice of numerical flux function, the central point remains: the mathematical formulation of the discrete Vlasov--Maxwell--Fokker--Planck (VM-FP) system of equations using a discontinuous Galerkin finite element method, with a polynomial basis, is completely specified by Eqns.\thinspace(\ref{eq:dis-weak-form}) and (\ref{eq:dis-weak-FP}) for the Vlasov--Fokker--Planck equation, and Eqns.\thinspace (\ref{eq:dis-weak-B}) and (\ref{eq:dis-weak-E}) for Maxwell's equations.
Of course, to go from Eqns.\thinspace(\ref{eq:dis-weak-form}) and (\ref{eq:dis-weak-FP}) for the Vlasov--Fokker--Planck equation, and Eqns.\thinspace (\ref{eq:dis-weak-B}) and (\ref{eq:dis-weak-E}) for Maxwell's equations, to a numerical algorithm and code is its own non-trivial task, which we address in Chapter~\ref{ch:ImplementationDGFEM}.
So, to summarize:
\begin{enumerate}
\item The discontinuous Galerkin finite element method (DG) is a spatial discretization scheme which combines aspects of finite element and finite volume methods and leverages the benefits of both numerical methods to produce high order accurate, robust, physically motivated spatial discretizations of a wide spectrum of partial differential equations.
The essential idea is an $L^2$ minimization of the error after expanding the quantity of interest, for example the distribution function,
\begin{align*}
f(\mvec{z}, t) \approx f_h(\mvec{z}, t) = \sum_{k=1}^N f_k(t) w (\mvec{z}),
\end{align*}
in a basis set $w = w (\mvec{z})$, which we took to be the space of polynomials of order $p,\thinspace \mathbb{P}^p$ throughout this Chapter. 
The $L^2$ minimization of the error can be formulated in the language of \emph{weak equality},
\begin{align*}
\pfrac{f_h}{t} \doteq G[f_h],
\end{align*} 
where $G[f_h]$ is a general operator acting on the quantity of interest, for example the Vlasov--Fokker--Planck spatial operator, and $\doteq$ in the space spanned by $w = w (\mvec{z})$ denotes the operation
\begin{align*}
\int_I \pfrac{f_h}{t} w_\ell (\mvec{z}) \dz = \int_I G[f_h] w_\ell (\mvec{z}), \quad \forall \ell=1,\ldots,N.
\end{align*}
Note that weak equality, unlike strong equality where functions are everywhere equal, determines the solution up to an equivalence class, enforcing that the projections of the left hand side and right hand side on the basis set spanned by $ w_\ell (\mvec{z}), \thinspace \forall \ell=1,\ldots,N$ are equal.
\item With the machinery of weak equality and an $L^2$ minimization of the error, we can formulate the DG discretization of our equation system of interest and derive the discrete-weak forms of the VM-FP system of equations,
 \begin{align*}
   \int_{K_j} & w\pfrac{f_h}{t} \dz + 
  \oint_{\partial K_j}w^- \mvec{n}\cdot\hat{\mvec{F}}  \thinspace dS 
  - \int_{K_j} \gz w \cdot \gvec{\alpha}_h f_h \dz = \\ 
  & \oint_{\partial K_j} \nu \thinspace w^- \mvec{n}\cdot\hat{\mvec{G}}  \thinspace dS - \oint_{\partial K_j} \nu \thinspace \mvec{n}\cdot \gv w^- \frac{T_h}{m} \hat{f} dS \\
  &-\int_{K_j} \nu \left [ \thinspace \gv w \cdot \left ( \mvec{v} - \mvec{u}_h \right )f_h - \gv^2 w \left ( \frac{T_h}{m} f_h \right ) \right ] \dz, \\
  \int_{\Omega_j}& \varphi \pfrac{\mvec{B}_h}{t}\dx
  +
  \oint_{\partial\Omega_j} d\mvec{s}\times(\varphi^-\hat{\mvec{E}}_h)
  -
  \int_{\Omega_j} \gx\varphi\times\mvec{E}_h \dx
  =
  0, \\
  \epsilon_0\mu_0\int_{\Omega_j}& \varphi \pfrac{\mvec{E}_h}{t}\dx
  -
  \oint_{\partial\Omega_j} d\mvec{s}\times(\varphi^-\hat{\mvec{B}}_h)
  +
  \int_{\Omega_j} \gx\varphi\times\mvec{B}_h \dx
  =
  -\mu_0\int_{\Omega_j} \varphi \mvec{J}_h\dx,
\end{align*}
where the first equation is the semi-discrete Vlasov--Fokker--Planck equation, and the second two equations are the semi-discrete Faraday and Ampere-Maxwell equations from Maxwell's equations.
Note that, as we pointed out in Section~\ref{sec:semi-discrete-Vlasov}, the divergence constraint for Maxwell's equations are not an explicit component of our discretization, and thus errors in the divergence of the electric and magnetic fields may arise throughout the numerical integration of our DG discretization of Maxwell's equations.
We will address this subtlety in Chapter~\ref{ch:Benchmarks} when we benchmark our numerical method for the VM-FP system of equations.
The discrete weak forms for the VM-FP system of equations, Eqns.\thinspace(\ref{eq:dis-weak-form}) and (\ref{eq:dis-weak-FP}) for the Vlasov--Fokker--Planck equation, and Eqns.\thinspace (\ref{eq:dis-weak-B}) and (\ref{eq:dis-weak-E}) for Maxwell's equations, are derived using integration by parts on the spatial operators, so that we obtain contributions to the solution from both volume and surface integrals.
These individual pieces make the connection between DG and finite element and finite volume methods concrete, with the volume integral bearing a resemblance to the integrals over the grid cells required in a finite element method, and the surface integral requiring the specification of a numerical flux function for the advection of the quantities of interest across surface interfaces, just as in a finite volume method.
Importantly, the semi-discrete Fokker--Planck equation requires \emph{two} integration by parts on the diffusion operator to ultimately demonstrate the semi-discrete scheme retains some of the properties of the continuous Fokker--Planck equation discussed in Section~\ref{sec:PropertiesKineticEquation}.
\item There are many potential options for numerical flux functions, but a critical property of the numerical flux function to prove our semi-discrete spatial discretization retains properties of the continuous system is that the numerical flux function obeys the \emph{Godunov flux} condition,
   \begin{align*}
      \oint_{\partial K_j}w^- \mvec{n}\cdot\hat{\mvec{F}}  \thinspace dS  = -\oint_{\partial K_j}w^+ \mvec{n}\cdot\hat{\mvec{F}}  \thinspace dS,
   \end{align*}
i.e., the flux into a cell is equal and opposite to the flux out of its neighbor cell along the shared interface.
Example numerical flux functions for the collisionless advection in phase space are
\begin{align*}
    \mvec{n}\cdot\hat{\mvec{F}}(\gvec{\alpha}_h f^+_h, \gvec{\alpha}_h f^-_h ) & = \frac{1}{2} \mvec{n}\cdot \gvec{\alpha}_h \left (f^+_h + f^-_h \right ), \\
    \mvec{n}\cdot\hat{\mvec{F}}(\gvec{\alpha}_h
    f_h^-, \gvec{\alpha}_h f_h^+)
    & =
    \begin{cases}
        \mvec{n}\cdot\gvec{\alpha}_h f^- \quad \textrm{if} \quad \sign(\gvec{\alpha}_h) > 0, \\
        \mvec{n}\cdot\gvec{\alpha}_h f^+ \quad \textrm{if} \quad \sign(\gvec{\alpha}_h) < 0,
    \end{cases} \\
    \mvec{n}\cdot\hat{\mvec{F}}(\gvec{\alpha}_h
    f_h^-, \gvec{\alpha}_h f_h^+) & = \frac{1}{2} \mvec{n}\cdot \gvec{\alpha}_h \left (f^+_h + f^-_h \right ) - \frac{\tau}{2} (f^+ - f^-),
\end{align*}
i.e., central fluxes, upwind fluxes, and global Lax-Friedrichs fluxes.
Note that these forms of the numerical flux function exploit the fact that the discrete phase space flow, $\gvec{\alpha}_h$, is continuous at the corresponding surface interfaces, Lemma~\ref{lem:discrete-phase-space-incompress}.
Likewise, similar flux functions can be defined for the numerical flux function for Maxwell's equations,
\begin{align*}
    \hat{\mvec{E}}_h & = \llbracket \mvec{E}\rrbracket, \\ \hat{\mvec{B}}_h & = \llbracket\mvec{B}\rrbracket,
\end{align*}
or upwind fluxes,
\begin{align*}
  \hat{E}_2 & = \llbracket E_2 \rrbracket - c\thinspace\{ B_3 \}, \\
  \hat{E}_3 & = \llbracket E_3 \rrbracket + c\thinspace\{ B_2 \}, \\
  \hat{B}_2 & = \llbracket B_2 \rrbracket + \{E_3\}/c, \\
  \hat{B}_3 &= \llbracket B_3 \rrbracket - \{E_2\}/c, 
\end{align*}
with
\begin{align*}
  \llbracket g \rrbracket & \equiv (g^+ + g^-)/2, \\
  \{ g \} & \equiv (g^+-g^-)/2,
\end{align*}
and the drag component of the Fokker--Planck equation,
\begin{align*}
    \mvec{n}\cdot\hat{\mvec{F}}_{drag} & = \frac{1}{2} \mvec{n}\cdot (\mvec{v} - \mvec{u}_h) (f^+ + f^-), \\
    \mvec{n}\cdot\hat{\mvec{F}}_{drag} & =     \begin{cases}
        \mvec{n}\cdot(\mvec{v} - \mvec{u}_h) f^- \quad \textrm{if} \quad \sign(\mvec{v} - \mvec{u}_h) > 0, \\
        \mvec{n}\cdot(\mvec{v} - \mvec{u}_h) f^+ \quad \textrm{if} \quad \sign(\mvec{v} - \mvec{u}_h) < 0,
    \end{cases} \\
    \mvec{n}\cdot\hat{\mvec{F}}_{drag} & = \frac{1}{2} \mvec{n}\cdot (\mvec{v} - \mvec{u}_h) (f^+ + f^-) - \frac{\max_{\mathcal{T}}|\mvec{v} - \mvec{u}_h|}{2} (f^+ - f^-),
\end{align*}
where we have used the fact that $\mvec{v} - \mvec{u}_h$ is continuous across velocity space surfaces to simplify a central flux, upwind flux, and global Lax-Friedrichs flux for the drag component of the numerical flux function for the Fokker--Planck equation.
The total numerical flux function for the Fokker--Planck equation is
\begin{align*}
    \mvec{n}\cdot\hat{\mvec{G}} = \mvec{n}\cdot \left (\hat{\mvec{F}}_{drag} + \frac{T_h}{m}\gv \hat{f} \right ),
\end{align*}
where $\hat{f}$ is the distribution function at the interface using the recovery procedure, and we require both the gradient, and the value, of the recovered distribution function since we integrated the diffusion term by parts twice.
\item The recovery procedure for computing the surface terms for the diffusion also leverages weak equality.
We have the distribution function in two neighboring cells sharing an interface,
\begin{align*}
  \hat{f} &\doteq f_L, \\
  \hat{f} &\doteq f_R,
\end{align*}
where $f_L$ is the distribution function in the cell to the ``left'' of the interface and $f_R$ is the distribution function to the ``right'' of the interface.
Defining the recovery polynomial as
\begin{align}
  \hat{f}(x) = \sum_{m=0}^{2N-1} \hat{f}_m x^m,
\end{align}
in one dimension, we can then uniquely compute a continuous polynomial (with continuous first derivatives, too).
Importantly, the recovery procedure is fundamentally one-dimensional, since the discontinuity we are constructing the continuous representation along is a discontinuity at a surface.
A continuous function, with continuous first derivatives, is ``recovered'' using the data that is discontinuous at a given surface, i.e., the discontinuity is along the one dimension that is fixed at that surface.
The reconstruction of the recovery polynomial's functional dependence along the surface in arbitrary dimensions will be addressed as part of our discussion of how to turn the mathematical formulation of the discrete scheme into code in Chapter~\ref{ch:ImplementationDGFEM}.
\item We further utilize weak equality to determine the required velocity space moments for the coupling between the Vlasov--Fokker--Planck equation and Maxwell's equations, as well as the moments required for the drag and diffusion coefficients in the Fokker--Planck equations.
Weak equality allows us to define fundamental operators, e.g., division and multiplication, when the quantities being manipulated are themselves projections.
The velocity space moments are
\begin{align*}
    M_{0_h} & \doteq \sum_j \int_{K_j\setminus \Omega_k} f_h \dv, \\
    \mvec{M}_{1_h} & \doteq \sum_j \int_{K_j\setminus \Omega_k} \mvec{v} f_h \dv, \\
    M_{2_h} & \doteq \sum_j \int_{K_j\setminus \Omega_k} |\mvec{v}|^2 f_h \dv,  
\end{align*}
with the charge and current densities required for coupling to Maxwell's equations given by
\begin{align*}
    \rho_{c_h} & = \sum_s q_s M_{0_{h_s}}, \\
    \mvec{J}_h & = \sum_s q_s \mvec{M}_{1_{h_s}}.
\end{align*}
Note that the charge and current density are strongly equal to the sum over species of the velocity space moments, since we have already projected down to the configuration space expansion.
Likewise, for the flow and temperature in the drag and diffusion coefficients,
\begin{align*}
    T_h & \left (  \sum_j \oint_{\partial V_{max_j}} f_h \thinspace dS_{V_{max}} - \sum_j \oint_{\partial V_{min_j}} f_h \thinspace dS_{V_{min}} \right ) + m \mvec{M}_{1_h} - m M_{0_h} \mvec{u}_h \doteq 0, \\
    T_h & \left [  \sum_j \oint_{\partial V_{max_j}} (\mvec{n}\cdot \mvec{v}_{max})f_h \thinspace dS_{V_{max}} - \sum_j \oint_{\partial V_{min_j}} (\mvec{n}\cdot \mvec{v}_{min}) f_h \thinspace dS_{V_{min}} \right ] \\
    & + m M_{2_h} - m \mvec{M}_{1_h} \cdot \mvec{u}_h - 3 M_{0_h} T_h \doteq 0, 
 \end{align*}
which require weak multiplication and division, or weighted $L^2$ projections, as defined in Section~\ref{sec:WeakEquality}.
These expressions can be modified, for the discrete current density, temperature, and flow, in the case of running with only piecewise linear polynomials, as discussed in Corollary~\ref{cor:kin-e-cons-p-1} and Proposition~\ref{prop:FP-energy-p1} respectively.
\item Using weak equality to construct consistent projections of velocity moments, a Godunov numerical flux function, and appropriate boundary conditions, i.e., zero-flux in velocity space and a self-contained boundary condition in configuration space, like periodic boundary conditions, we can prove that the semi-discrete scheme retains a number of the continuous VM-FP system of equations' properties.
In particular, the whole system conserves mass and energy, even when using piecewise linear polynomials and projecting $|\mvec{v}|^2$ onto linear polynomials, and we can show that even though the collisionless evolution does not obey momentum conservation, the semi-discrete Fokker--Planck equation conserves momentum.
Importantly, the lack of momentum conservation arises from our discretization of Maxwell's equations, and thus only depends on configuration space resolution, a property we will numerically demonstrate in Chapter~\ref{ch:Benchmarks}.
The collisionless component, the semi-discrete Vlasov--Maxwell system of equations, is also $L^2$ stable, either conserving or decaying the $L^2$ norm.
This $L^2$ stability leads to a discrete analogue of the second Law of Thermodynamics for the semi-discrete Vlasov--Maxwell system of equations, with numerical diffusion arising as a production of entropy in our discrete system.
Although we did not analytically prove a discrete second Law of Thermodynamics for the semi-discrete Fokker--Planck equation, we will compare the entropy behavior between collisionless and collisional simulations in Chapter~\ref{ch:Benchmarks}, and show that the collisionless entropy production is small compared to the collisional entropy production. 
Because many of these properties, especially for the semi-discrete Vlasov--Fokker--Planck equation, only depended on the numerical flux function being Godunov and not a specific form of the numerical flux function, we can imagine further flexibility in terms of the mathematical formulation of the scheme.
For example, we could extend the recovery procedure to handle the collisionless and drag components of the discretization and still retain the properties proved.
\item Having specified a spatial discretization and constructed the semi-discrete VM-FP system of equations, we only require an ordinary differential equation integrator for the time integration to complete the discretization and integrate the equation system in time.
Example integrators include strong-stability preserving Runge--Kutta methods, e.g., a three-stage third order method,
\begin{align*}
\begin{aligned}
  f^{(1)} &= \mathcal{F}\left(f^{n},t^{n}\right), \\ f^{(2)} &=
  \frac{3}{4}f^{n} + \frac{1}{4}\mathcal{F}\left(f^{(1)},t^n+\Delta
  t\right),\\ f^{n+1} &= \frac{1}{3}f^{n} +
  \frac{2}{3}\mathcal{F}\left(f^{(2)},t^n+\Delta t/2\right),
\end{aligned}
\end{align*}
with $\mathcal{F}$ defining the complete evaluation of the semi-discrete VM-FP system of equations, Eqns.\thinspace(\ref{eq:dis-weak-form}) and (\ref{eq:dis-weak-FP}) for the Vlasov--Fokker--Planck equation, and Eqns.\thinspace (\ref{eq:dis-weak-B}) and (\ref{eq:dis-weak-E}) for Maxwell's equations.
These explicit time integrators have Courant-Friedrichs-Lewy constraints on the size of the time-step,
\begin{align*}
    CFL^{collisionless} & = \sum_{i=1}^{CDIM} \max_{\mathcal{T}}\left |\frac{v_i }{\Delta x_i} \right | + \sum_{j=1}^{VDIM} \max_{\mathcal{T}}\left |\frac{q_s/m_s \thinspace (E_h + v \times B_h)_j}{\Delta v_j} \right |, \\
    CFL^c & = \sum_{j=1}^{VDIM} \max_{\mathcal{T}} \left | \nu \frac{(v - u_h)_j}{\Delta v_j} \right |+ \sum_{j=1}^{VDIM} \max_{\mathcal{T}} \left |\nu \frac{T_h}{m_s} \frac{1}{(\Delta v_j)^2} \right |, \\
    \Delta t & (CFL^{collisionless} + CFL^c) \leq \frac{1}{2p + 1},
\end{align*}
for the Vlasov--Fokker--Planck equation, and
\begin{align*}
    c \frac{\Delta t}{\Delta x} \leq \frac{1/CDIM}{2p + 1},
\end{align*}
for Maxwell's equations.
Here, we have abbreviated the number of configuration space dimensions as $CDIM$ and the number of velocity space dimensions as $VDIM$.
The more restrictive of the two conditions tells us the maximum stable time-step, and completes the prescription for the numerical integration of the VM-FP system of equations in space and time.
 \end{enumerate}
 
Thus, we can now move to a discussion of how to evaluate the discrete scheme, i.e., how do we turn the math into code, an algorithmic formulation of the discrete scheme that allows one to actually perform numerical experiments.
Throughout this summary, we have emphasized the requirements that components of the discrete scheme be constructed \emph{consistently}, e.g., computing velocity moments using weak equality.
This emphasis is not without merit.
When we first described plasmas as rich in their underlying physics in Chapter~\ref{ch:Introduction}, we alluded to the fact that important physics properties are implicit to the underlying equation system.
For example, we are discretizing the Vlasov--Fokker--Planck equation for the evolution of the particle distribution function, but just as important is that velocity moments such as the zeroth, mass, and second, energy, obey conservation equations.
To actually retain these properties that we painstakingly proved in this Chapter, we will find that the ultimate algorithmic formulation of the scheme requires a comparable amount of precision to the amount of mathematical care that was taken when deriving the discrete scheme.

\renewcommand{\thechapter}{3}
\epigraph{Some of the material in this chapter has been adapted from \citet{Juno:2018}, \citet*{Hakim:2019}, and \citet{HakimJuno:2020}.}{}
\chapter{From Math to Code: Efficient Implementation of DG for \\the Vlasov--Maxwell--Fokker--Planck System of Equations}\label{ch:ImplementationDGFEM}

It is now time to undertake the task of translating the discrete scheme described in Chapter~\ref{ch:DGFEM} into an algorithm which can be implemented in a code, in this case, the \gke~simulation framework.
As part of our derivation of the discrete scheme, there were many components of the scheme we left deliberately abstract as they were unnecessary for describing the numerical method mathematically and proving properties of the discretization of the VM-FP system of equations.
We have a long to-do list for converting Eqns.\thinspace(\ref{eq:dis-weak-form}) and (\ref{eq:dis-weak-FP}) for the Vlasov--Fokker--Planck equation, and Eqns.\thinspace (\ref{eq:dis-weak-B}) and (\ref{eq:dis-weak-E}) for Maxwell's equations, into code.

We have restricted ourselves to basis sets of polynomials as part of the proofs of the various conservation properties that our discrete scheme retains from the continuous system, such as conservation of mass and energy, but we have made no mention yet of what specific form this polynomial basis takes.
We likewise must now evaluate these integrals in the discrete weak forms of the VM-FP system of equations in some fashion, including a potential transformation from a more convenient computational space to the physical domain on which the equations are defined.
Finally, in tandem with actually performing the integrals in the discrete weak forms, we must determine algorithmically how to compute the various components of the scheme, such as velocity moments for the coupling between Maxwell's equations and the Vlasov--Fokker--Planck equation and the recovery of the distribution function for the Fokker--Planck equation.
With a prescription for how to perform these operations, we will then be able to bring the whole algorithm together and evaluate computationally the spatial discretization.
Combined with the time discretization described in Section~\ref{sec:TimeDiscretization}, we will then have completed the conversion from the mathematical machinery described in Chapter~\ref{ch:DGFEM} to the computational machinery required to perform the numerical integration of the VM-FP system of equations in our simulation framework \gke.

\section{Polynomial Bases in 1D: Nodal versus Modal}\label{sec:1DBases}

Even in one dimension, there is tremendous freedom in the definition of the polynomial basis.
The definition of the function space, $\mathbb{P}^p$, only restricts us to polynomials of, at most, order $p$.
We could, for example, take our basis set to be simply
\begin{align}
    \psi_k(x) = x^k, \quad k = 0,\dots,p, \quad x \in [-1,1], \label{eq:bad1DBasis}
\end{align}
where we have defined the polynomials on the interval $[-1,1]$ for convenience.

We could define the polynomials with respect to the local grid cell immediately, as we did in the brief one dimensional DG example in Section~\ref{sec:L2ErrorMinimization} in \eqr{\ref{eq:1DPolynomialDGExample}} wherein the linear polynomial included the local grid cell volume and cell center coordinate.
However, as we will show in Section~\ref{sec:ComputationToPhysical}, we can always transform our computational domain to the physical domain on which the equations are defined.
We will find certain properties of the polynomial basis are ultimately more intuitive by defining the polynomials on a reference element, in this case the element $[-1, 1]$ in one dimension.
By defining the polynomials on a reference element, we also afford ourselves greater flexibility, especially with respect to the physical coordinate system and the overall structure of the grid on which the physical domain is defined.

So, with these caveats about defining the polynomial basis on a reference element aside, the basis set defined in \eqr{\ref{eq:bad1DBasis}} seems perfectly acceptable.
Indeed, \eqr{\ref{eq:bad1DBasis}} is, mathematically, a completely reasonable basis.
We could employ this basis and the basis would lead to the discrete scheme retaining all the properties of the continuous system proved in Chapter~\ref{ch:DGFEM} and the discrete scheme would still be $L^2$ stable.
However, the basis defined in \eqr{\ref{eq:bad1DBasis}} is a very bad choice for our basis expansion because the basis has serious computational issues.

To see why \eqr{\ref{eq:bad1DBasis}} forms a bad basis computationally, consider the following operation that will be required as part of our discretization,
\begin{align}
    \int_{K_j} \pfrac{f_h(\mvec{z}, t)}{t} w_\ell(\mvec{z}) \dz = \sum_k \frac{df_k(t)}{dt} \int_{K_j} w_k(\mvec{z}) w_\ell(\mvec{z}) \dz = \mvec{M} \frac{d \mvec{f}}{dt}, \label{eq:firstStepMassMatrix}
\end{align}
where the matrix $\mvec{M}$ has entries
\begin{align}
    M_{k\ell} = \int_{K_j} w_k (\mvec{z}) w_\ell(\mvec{z}) \dz, \label{eq:firstMassMatrixDefinition}
\end{align}
and we have added back in the spatial dependence to the basis functions to make the meaning of evaluation of entries of the matrix $\mvec{M}$ more clear.
In other words, each combination of basis functions, integrated over the cell $K_j$, produces a matrix with size $N_p \times N_p$, where $N_p$ is the number of basis functions in the expansion within a cell.
This matrix, \eqr{\ref{eq:firstMassMatrixDefinition}}, is often called the mass matrix in the DG and finite element literature \citep{Hesthaven:2007}.
Note that \eqr{\ref{eq:firstStepMassMatrix}} implies that we will require the inverse of the mass matrix, $\mvec{M}$, to ultimately discretize the system of ordinary differential equations for $\mvec{f}$, the vector of expansion coefficients within a cell.

Now, this mass matrix in one dimension on the reference cell is simply
\begin{align}
    M_{k\ell} & = \int_{-1}^1 \psi_k (x) \psi_\ell(x) \thinspace dx.
\end{align}
To make this example concrete, for the basis defined in \eqr{\ref{eq:bad1DBasis}}, consider the mass matrix in one dimension for polynomial order four:
\begin{align}
  M_{k\ell} = \int_{-1}^1 x^k x^\ell \thinspace dx
  = \begin{pmatrix} 
    2 & 0 & \frac{2}{3} & 0 & \frac{2}{5} \\ 0 &
    \frac{2}{3} & 0 & \frac{2}{5} & 0 \\ \frac{2}{3} & 0 & \frac{2}{5}
    & 0 & \frac{2}{7} \\ 0 & \frac{2}{5} & 0 &\frac{2}{7} & 0
    \\ \frac{2}{5} & 0 & \frac{2}{7} & 0 & \frac{2}{9}
  \end{pmatrix}. \label{eq:badMassMatrix1D}
\end{align}
Perhaps unremarkable, but let us examine the condition number for the matrix in \eqr{\ref{eq:badMassMatrix1D}},
\begin{align}
    \kappa^\infty(\mvec{M}) \defeq || \mvec{M}^{-1} ||_\infty || \mvec{M} ||_\infty = \frac{8211}{16},
\end{align}
where $|| \cdot ||_\infty$ is the $L^\infty$ matrix norm\footnote{Note the condition number can be defined with any suitable matrix norm, such as the Frobenius norm,
\begin{align*}
    || \mvec{A} || = \sqrt{\sum_{k=1}^N \sum_{\ell=1}^N |A_{kl}|^2}.
\end{align*}
},
\begin{align}
    || \mvec{A} || = \max_{1\leq k \leq N} \sum_{\ell=1}^N |A_{kl}|.
\end{align}

The condition number measures the sensitivity of the solution to small changes in the initial data.
Because we require the inverse of the mass matrix, $\mvec{M}$, before we can discretize the system of ordinary differential equations for the time evolution of $\mvec{f}$ a large condition number for the mass matrix is very bad.
A rough rule of thumb is that for $\kappa^\infty(\mvec{A}) = 10^n$, we expect to lose $n$ digits of accuracy due to a loss of precision from the inversion of the matrix \citep{NumRecipes:2007}. 
So, for the matrix in \eqr{\ref{eq:badMassMatrix1D}}, we would expect to lose $\log_{10}(\kappa^\infty(\mvec{M})) \sim 2.7$ digits of accuracy.
As we go to higher and higher polynomial order with the simple monomial basis defined in \eqr{\ref{eq:bad1DBasis}}, the loss of accuracy becomes quite high.

A standard means of ameliorating this issue of poor conditioning of the component matrices, such as the mass matrix, in the DG discretization is to perform a Gram-Schmidt orthogonalization process on \eqr{\ref{eq:bad1DBasis}}.
We would thus obtain a basis of orthogonal polynomials, which can then be made orthonormal.
As part of the Gram-Schmidt procedure, we first define a projection operator,
\begin{align}
    \textrm{proj}_{\upsilon}(\psi) = \frac{(\psi, \upsilon)_{L^2}}{(\upsilon, \upsilon)_{L^2}} \upsilon,
\end{align}
where the $L^2$ inner product, $(\cdot, \cdot)_{L^2}$, is the inner product we have been continually employing,
\begin{align*}
    (\psi, \upsilon)_{L^2} = \int_{-1}^1 \psi(x) \upsilon(x) \thinspace dx,
\end{align*}
with natural generalizations to higher dimensions.
We then use this projection operator to transform the monomial basis in \eqr{\ref{eq:bad1DBasis}} into a set of orthogonal polynomials. Proceeding sequentially through the polynomial set,
\begin{align} \begin{aligned}
    \upsilon_0 & = \psi_0 = 1, \\
    \upsilon_1 & = \psi_1 - \textrm{proj}_{\upsilon_0}(\psi_1) = x, \\
    \upsilon_2 & = \psi_2 - \textrm{proj}_{\upsilon_0}(\psi_2) - \textrm{proj}_{\upsilon_1}(\psi_2) = \frac{3x^2 - 1}{3}, \\
    \upsilon_3 & = \psi_3 - \textrm{proj}_{\upsilon_0}(\psi_3) - \textrm{proj}_{\upsilon_1}(\psi_3) - \textrm{proj}_{\upsilon_2}(\psi_3) = \frac{x (5x^2 - 3)}{5} \\
    \upsilon_4 & = \psi_4 - \textrm{proj}_{\upsilon_0}(\psi_4) - \textrm{proj}_{\upsilon_1}(\psi_4) - \textrm{proj}_{\upsilon_2}(\psi_4) - \textrm{proj}_{\upsilon_3}(\psi_4)= \frac{35 x^4 - 30 x^2 + 3}{35}.
    \end{aligned}
\end{align}
This procedure generalizes to higher polynomial orders as we might expect, with
\begin{align}
    \upsilon_k = \psi_k - \sum_{j=1}^{k-1} \textrm{proj}_{\upsilon_{j-1}}(\psi_k).
\end{align}
We can make these polynomials orthonormal using
\begin{align}
    \hat{\upsilon} = \frac{\upsilon}{\sqrt{(\upsilon, \upsilon)_{L^2}}},
\end{align}
i.e., dividing by the $L^2$ norm of the polynomials.
This procedure gives us the following set of orthonormal polynomials for the one dimensional, $p=4$, basis,
\begin{align}\begin{aligned}
    \hat{\upsilon}_0 & = \frac{1}{\sqrt{2}},  \\
    \hat{\upsilon}_1 & = \sqrt{\frac{3}{2}} x, \\
    \hat{\upsilon}_2 & = \sqrt{\frac{5}{8}} (3x^2 - 1), \\
    \hat{\upsilon}_3 & = \sqrt{\frac{7}{8}} (5x^3 - 3 x), \\
    \hat{\upsilon}_4 & = \frac{3}{8\sqrt{2}} (35 x^4 - 30 x^2 + 3).
    \end{aligned} \label{eq:1DOrtho}
\end{align}
Because these polynomials are orthonormal,
\begin{align}
    \int_{-1}^1 \hat{\upsilon}_k \hat{\upsilon}_\ell \thinspace dx = \delta_{k\ell},
\end{align}
\eqr{\ref{eq:badMassMatrix1D}} reduces to
\begin{align}
    \mvec{M} = \overleftrightarrow{\mvec{I}},
\end{align}
the identity matrix, whose condition number is trivially $\kappa^\infty(\mvec{M}) = 1$.

As an aside, we can gain intuition for why the conditioning of the mass matrix improves so dramatically when employing orthonormal polynomials by examining the behavior of our two choice of basis sets on the interval $[-1, 1]$, shown in Figure~\ref{fig:basis-comp}. 
We can understand the poor conditioning of \eqr{\ref{eq:badMassMatrix1D}} because the monomial basis defined in \eqr{\ref{eq:bad1DBasis}} becomes less linearly independent as we go to higher order, i.e., the higher order polynomials become indistinguishable from each other, implying that the representation is more sensitive to changes in the solution.
In other words, we have trouble actually obtaining an accurate representation of the solution from the monomial basis because of the behavior of the monomials on the interval $[-1,1]$.
In contrast, the orthonormal basis maintains good coverage of the interval as we increase the order of the polynomials, and thus we expect the accuracy of the representation continually improves as we go to higher and higher order.

\begin{figure}
    \centering
    \includegraphics[width=\textwidth]{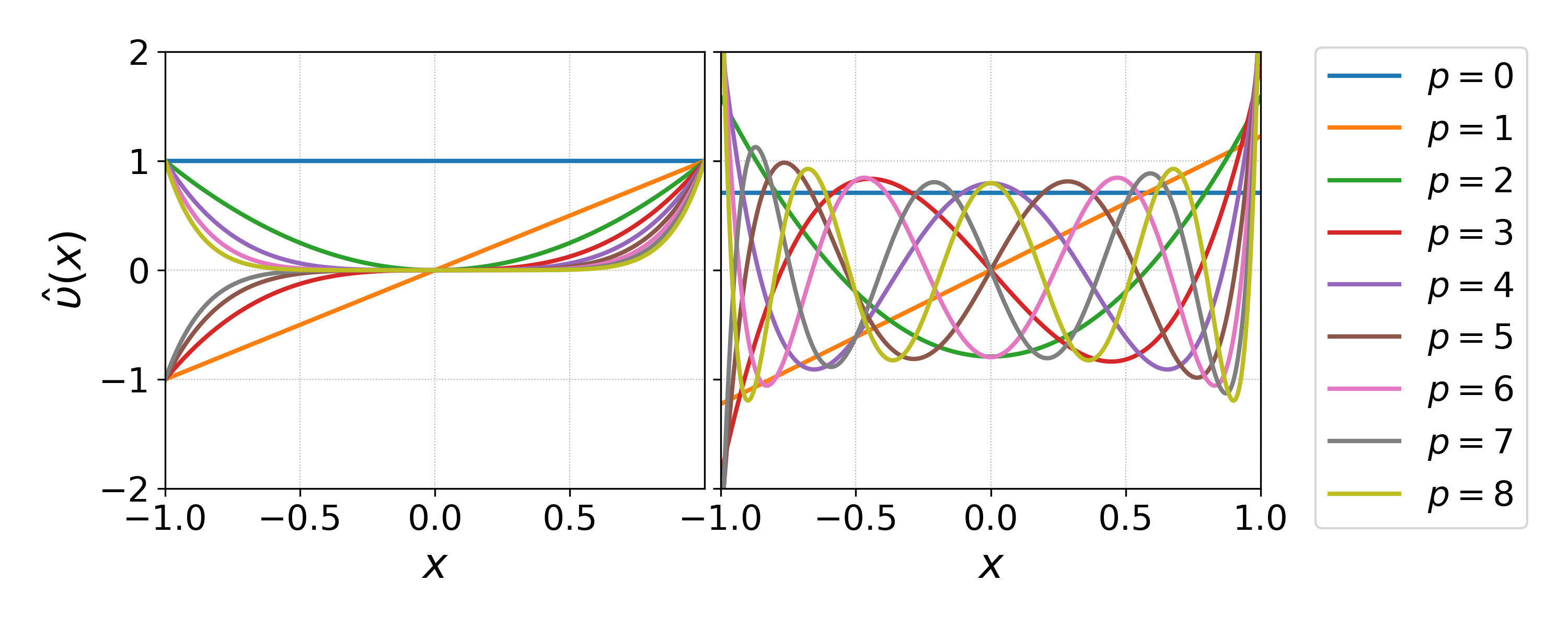}
    \caption{The simple monomial basis (left) defined in \eqr{\ref{eq:bad1DBasis}} and the orthonormal basis obtained by a Gram-Schmidt orthogonolization (orthonormalization) applied to the monomial basis (right). We can see that in the limit of high polynomial order, the monomial basis becomes less linearly independent, i.e., the higher order polynomials are essentially indistinguishable. On the other hand, the orthonormal basis maintains better ``coverage'' of the space on the interval from $[-1,1]$ so that it is easy to imagine why higher order orthonormal polynomials do actually improve the accuracy of the representation.}
    \label{fig:basis-comp}
\end{figure}

Also, we note the similarities between the orthonormal polynomials defined and Legendre polynomials, which are an orthogonal set of polynomials defined on the interval $[-1,1]$ with an identical inner product to the inner product we have been employing, \eqr{\ref{eq:L2innerproduct}}. 
Legendre polynomials are normalized to be equal to $\pm 1$ at the edges of the interval. Although Legendre polynomials are orthogonal and very similar to the orthogonal polynomials we first found with our Gram-Schmidt procedure, they are not orthonormal,
\begin{align}
    \int_{-1}^1 P_n(x) P_m(x) \thinspace dx = \frac{2}{2n + 1} \delta_{mn}.
\end{align}

Importantly, we have our first instance justifying out choice to define the polynomials on a reference element $[-1,1]$: we are performing a Gram-Schmidt orthogonalization (orthonormalization) process on polynomials defined on this interval.
Thus, the polynomials will be orthogonal and orthonormal on this interval, and potentially only on this interval.
We will see that this restriction does not cause any issues for the purposes of transforming from the reference element, or computational space, to physical space in Section~\ref{sec:ComputationToPhysical}.

The set of one dimensional orthonormal polynomials derived in this chapter, e.g., \eqr{\ref{eq:1DOrtho}} for polynomial order four, define what is called a \emph{modal} basis for our DG discretization.
This terminology follows from the fact that in the projection of a quantity of interest onto our basis set, we are projecting onto a set of modes.
An alternative prescription is called a \emph{nodal} basis, wherein the basis set is defined by a set of polynomials whose values are known at nodes.
In other words, a basis such as
\begin{align}
    f(x, t) \approx f_h(x, t) \defeq \sum_{k=0}^{N_p-1} f_k(\xi_k, t) \ell_k(x), \label{eq:1DNodal}
\end{align}
where $\ell_k$ are the Lagrange interpolating polynomials,
\begin{align}
     \ell_k(x) = \prod_{j=0, j\neq k}^{N_p-1} \frac{x - \xi_j}{\xi_k - \xi_j},
\end{align}
and $\xi_k$ are the $k$ nodes by which the polynomials are defined.
In other words, in this basis set, the polynomials take the value of one at one node and zero at all other nodes, thus the coefficients $f_k$ in \eqr{\ref{eq:1DNodal}} are known at the nodes $\xi_k$.

Just as \eqr{\ref{eq:bad1DBasis}} was related mathematically to the orthonormal, modal basis by the Gram-Schmidt orthogonalization (orthonormalization) process, so too do the one dimensional modal and nodal bases have a mathematical connection.
Using the Vandermonde matrix,
\begin{align}
    \mathcal{V}_{k\ell} = \hat{\upsilon}_\ell(\xi_k),
\end{align}
i.e., the matrix whose entries are each of the $\ell$ orthonormal polynomials evaluated at the nodes $\xi_k$, we can transform the coefficients in the \emph{modal} basis to the coefficients in the \emph{nodal} basis,
\begin{align}
    \mathcal{V}_{k\ell} f_\ell(t) = f_k(\xi_k, t).
\end{align}
And just as with \eqr{\ref{eq:bad1DBasis}}, both the modal and nodal bases are perfectly mathematically acceptable basis sets for implementing the DG scheme for the VM-FP system of equations described in Chapter~\ref{ch:DGFEM}, but they have quite different computational properties.
Before we can explore the full extent of the computational consequences for a modal versus a nodal basis set, we should first discuss the generalization of these basis sets to higher dimensions.

\section{Polynomial Bases in Higher Dimensions: \\The ``Curse of Dimensionality'' and Serendipitous Basis Choices}\label{sec:HigherDimensionBases}

From the beginning, we have been interested in the numerical integration of an equation system which is high-dimensional, up to six dimensions plus time.
This high dimensionality of the VM-FP system of equations presents a special set of challenges for the design and implementation of our numerical method.
The ``curse of dimensionality,'' the exponential cost scaling of a numerical method with the dimensionality of the problem, is not a ``curse'' to be taken lightly.
This exponentially increasing cost scaling with dimensionality is in fact one of the principal reasons for the popularity of the particle-in-cell method discussed in Chapter~\ref{ch:Introduction}, as it is argued that the integration of particles on a three-dimensional grid, instead of the integration of the particle distribution function on a six-dimensional grid, is inevitably more cost effective.
Of course, we have strong motivation for the direct discretization approach, so we instead want to focus on whether this burden of cost can be overcome.

The standard higher dimensional generalization of the one dimensional bases defined in Section~\ref{sec:1DBases} is a tensor basis constructed from a tensor product of the one dimensional basis sets for each dimension of interest.
For example, in two dimensions, the generalization of the monomial basis is simply
\begin{align}
    \mathbb{Q}^p_2 = \spn_{0\leq m, n \leq p} \{ x^m y^n \}.
\end{align}
Due to the nature of the tensor product, the number of basis functions within a cell scales like $(p+1)^d$, exactly the exponential scaling we predicted at the beginning of this section.
We seek reductions then of this tensor product basis.

The first reduction we consider is known as the Serendipity basis set \citep{Arnold:2011}. 
The Serendipity basis set is obtained by dropping all monomial terms which have ``super-linear'' degree greater than the specified polynomial order $p$.
For example, for the piecewise quadratic, two dimensional, Serendipity basis expansion, we would have
\begin{align}
    \mathcal{S}_2^2 = \{1, x, y, xy, x^2, y^2, x^2 y, x y^2, \stkout{x^2 y^2} \},
\end{align}
because the ``super-linear'' degree of $x^2 y^2$ is four, which is greater than the specified polynomial order of two. 
We could then apply the appropriate higher dimensional generalization of the Gram-Schmidt orthonormalization procedure described in the previous section, Section~\ref{sec:1DBases}.
In two dimensions, this generalization of the inner product would be
\begin{align}
    (\upsilon, \psi)_{L^2} = \int_{-1}^1 \int_{-1}^1 \upsilon(x,y) \psi(x,y) \thinspace dxdy,
\end{align}
so that we would find the two dimensional, piecewise quadratic, orthonormal, modal, Serendipity basis to be
\begin{align}\begin{aligned}
  \hat{\upsilon}_0(x,y) &= \frac{1}{2}, \\
  \hat{\upsilon}_1(x,y) &= \frac{\sqrt{3}x}{2}, \\
  \hat{\upsilon}_2(x,y) &= \frac{\sqrt{3}y}{2}, \\
  \hat{\upsilon}_3(x,y) &= \frac{3xy}{2}, \\
  \hat{\upsilon}_4(x,y) &= \frac{\sqrt{5}(3x^2-1)}{4}, \\
  \hat{\upsilon}_5(x,y) &= \frac{\sqrt{5}(3y^2-1)}{4}, \\
  \hat{\upsilon}_6(x,y) &= \frac{\sqrt{15}(3x^2-1)y}{4}, \\
  \hat{\upsilon}_7(x,y) &= \frac{\sqrt{15}(3y^2-1)x}{4}. \\
\end{aligned}\label{eq:2DSerendipQuadratic}
\end{align}

The general scaling of the Serendipity basis set is given by
\begin{align}
N_p = \sum_{i=0}^{\min(d, p/2)} 2^{n-i} \binom{d}{i} \binom{p - i}{i},
\end{align}
where $N_p$ is the number of polynomials, $d$ is the dimensionality of the basis set, and $p$ is the polynomial order.
This particular reduced basis set has been extensively studied in the literature, and found to have the same formal convergence order as the tensor basis, though the generalization of the Serendipity basis to unstructured grids requires care as arbitrary refinements of an unstructured grid will destroy the convergence order of the Serendipity expansion \citep{Arnold:2002}.
By convergence order, we mean the rate of convergence to the true solution of the continuous system in the limit that the grid spacing goes to zero.
So a second order method corresponds to a method where the errors decrease as $(\Delta x)^2$ as $\Delta x \rightarrow 0$.
Although we have not said so explicitly up to this point, all the work of this thesis uses structured grids, specifically structured quadrilaterals.

We can consider a further reduction on top of the Serendipity basis to drop all monomials of \emph{total} degree greater than the polynomial order specified, which we call the maximal order basis set. 
For this reduced basis set, we would only retain polynomials zero through five in \eqr{\ref{eq:2DSerendipQuadratic}}, since polynomials six and seven have total degree three.
The general scaling of the maximal order basis set is
\begin{align}
    N_p = \frac{(p+d)!}{p!d!}.
\end{align}

Elsewhere in the finite element literature, these three basis sets, the tensor basis, Serendipity, and what we are calling maximal order, are sometimes abbreviated as the $\mathbb{Q}, \mathcal{S},$ and $\mathbb{P}$ spaces respectively.
Like the Serendipity basis set, the maximal order basis set has been the subject of a large number of studies to examine its convergence order and accuracy relative to the tensor basis.
While maintaining the same convergence order, the maximal order basis set is generally found to be less accurate, and this basis can have further detrimental consequences for the physicality of the solution.
For example, \citet{Cheng:2013} found the maximal order basis to have more serious issues with artificial dissipation compared to the tensor basis in a Vlasov--Poisson study using the discontinuous Galerkin method.

For reference, the number of degrees of freedom in a cell for a variety of polynomial orders and up to six dimensions for the three basis sets, tensor, Serendipity, and maximal order, is included in Tables \ref{table:TensorTable}, \ref{table:SerendipityTable}, and \ref{table:MaxOrderTable}, respectively.
\begin{table}[!htb]
\begin{center}
\begin{tabular}{| l | l | l | l | l | l | l | l |}
\hline
$\mathbb{Q}$ & \textbf{Polynomial Order} & 1 & 2 & 3 & 4  & 5 & 6  \\ \hline
\textbf{Dimension} & $(p+1)^d$& & & & & & \\ \hline
2 & & 4 & 9 & 16 & 25 & 36 & 49  \\ \hline
3 & & 8 & 27 & 64 & 125 & 216 & 343 \\ \hline
4 & & 16 & 81 & 256 & 625 & 1296 & 2401 \\ \hline
5 & & 32 & 243 & 1024 & 3125 & 7776 & 16807 \\ \hline
6 & & 64 & 729 & 4096 & 15625 & 46656 & 117649 \\ \hline
\end{tabular}
\caption{Number of degrees of freedom internal to a cell in the tensor product basis set.}
\label{table:TensorTable}
\end{center}
\end{table}
\begin{table}[!htb]
\begin{center}
\begin{tabular}{| l | l | l | l | l | l | l | l |}
\hline
$\mathcal{S}$ & \textbf{Polynomial Order} & 1 & 2 & 3 & 4 & 5 & 6 \\ \hline
\textbf{Dimension} & $\sum_{i=0}^{\min(d, p/2)} \binom{d}{i} \binom{p - i}{i} $& & & & & & \\ \hline
2 & & 4 & 8 & 12 & 17 & 23 & 30 \\ \hline
3 & & 8 & 20 & 32 & 50 & 74 & 105 \\ \hline
4 & & 16 & 48 & 80 & 136 & 216 & 328 \\ \hline
5 & & 32 & 112 & 192 & 352 & 592 & 952 \\ \hline
6 & & 64 & 256 & 448 & 880 & 1552 & 2624 \\ \hline
\end{tabular}
\caption{Number of degrees of freedom internal to a cell in the Serendipity basis set.}
\label{table:SerendipityTable}
\end{center}
\end{table}
\begin{table}[!htb]
\begin{center}
\begin{tabular}{| l | l | l | l | l | l | l | l |}
\hline
$\mathbb{P}$ & \textbf{Polynomial Order} & 1 & 2 & 3 & 4  & 5 & 6 \\ \hline
\textbf{Dimension} & $\frac{(p+d)!}{p!d!}$& & & & & & \\ \hline
2 & & 3 & 6 & 10 & 15 & 21 & 28 \\ \hline
3 & & 4 & 10 & 20 & 35 & 56 & 84 \\ \hline
4 & & 5 & 15 & 35 & 70 & 126 & 210 \\ \hline
5 & & 6 & 21 & 56 & 126 & 252 & 462 \\ \hline
6 & & 7 & 28 & 84 & 210 & 462 & 924 \\ \hline
\end{tabular}
\caption{Number of degrees of freedom internal to a cell in the maximal order basis set.}
\label{table:MaxOrderTable}
\end{center}
\end{table}
We can see the aforementioned exponential increase in the number of polynomials, and thus the cost, with the tensor product basis in Table~\ref{table:TensorTable}.
We note the rather dramatic reduction in the number of degrees of freedom, especially for the higher dimensional cases, for the Serendipity and maximal order basis sets.

We conclude this section with a brief discussion of how these reduced modal basis sets in higher dimensions can also be converted to their nodal counterparts.
While there is no known nodal configuration for the maximal order basis, there are nodal configurations for the Serendipity basis with potentially favorable computational properties, such as the nodal configuration in one, two, and three dimensions (1D, 2D, 3D) discussed in \citet{Arnold:2011} and shown in Figure~\ref{fig:SerendipitySchematic}.
In this case, we will have a unique polynomial for each node, which has variation throughout the entire multi-dimensional reference cell, that takes the value of one at one node and zero at the other nodes.
\begin{figure}
    \centering
    \includegraphics[width=\textwidth]{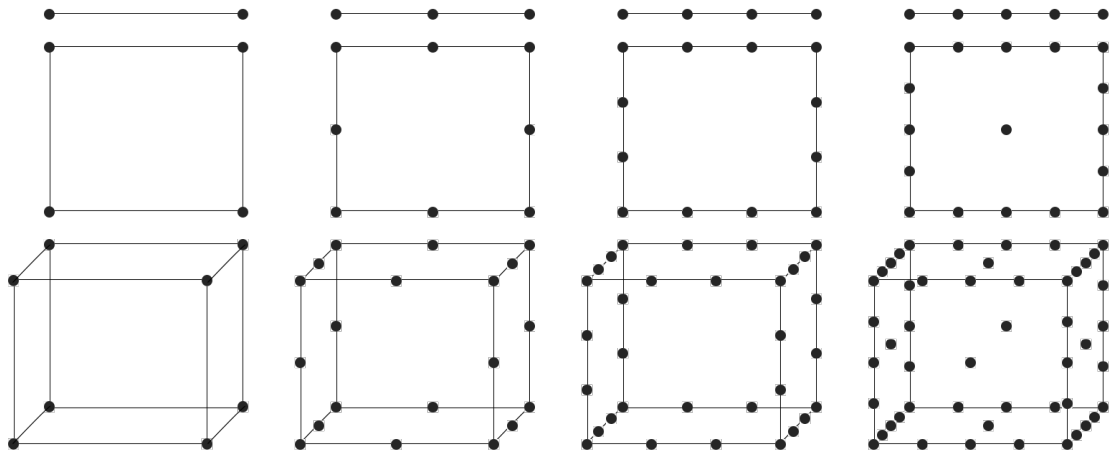}
    \caption{Schematic drawing of the nodal locations for the Serendipity basis in 1D (top), 2D (middle), and 3D (bottom) for polynomial orders one (far left), two (middle left), three (middle right), and four (far right).}
    \label{fig:SerendipitySchematic}
\end{figure}

This particular nodal layout is constructed such that every higher dimensional reference quadrilateral element is built from the lower dimensional reference quadrilateral elements, so that the lower dimensional faces of a reference quadrilateral element also form a unisolvent expansion, i.e., the polynomials local to the face form a complete basis of the solution space. 
For example, consider the pictorial representation of the 3D reference element. 
Each 2D face of the reference 3D element is exactly the 2D reference element for that particular polynomial order. 
This same recursive approach can be applied to higher dimensions as well\footnote{This fact is true in general, but higher polynomial orders may modify the lower dimensional reference elements such that the recursive algorithm is not quite as obvious as the one presented here.
Just as polynomial order four introduces an interior node to a reference 2D element, so can higher polynomial orders introduce interior nodes to higher dimensional reference elements which would have to be taken into account in the recursive generation of the reference element. 
For up to polynomial order four though, every higher dimensional object can be easily generated as described, with 2D reference elements making up the faces of a 3D reference element, 3D reference elements making up the faces of a 4D reference element, and so on. 
Considering that the Serendipity basis in four, five, and six dimensions, with polynomial order four, involves the solution of a large number of degrees of freedom per cell, we will not consider further extensions of this recursive algorithm due to the same performance and cost considerations that motivated the use of the Serendipity basis---we seek to avoid evolving thousands of degrees of freedom per cell.}, with the reference 4D element being comprised of reference 3D elements for each of the 4D element's eight 3D faces, a reference 5D element consisting of a reference 4D element for all ten 4D faces of a 5D element, and so on. 
This approach has the advantage of greatly simplifying surface integral calculations. Since every higher dimensional element is recursively generated from lower dimensional elements, every face of a higher dimensional element, the 2D faces in 3D or the 4D faces in 5D, forms a unisolvent expansion for that surface. 
We thus only require the nodal information local to that face and can reduce the number of multiplications in the evaluation of the surface integrals by a somewhat sizable fraction. In 5D for instance, to advance the solution of the distribution function in time, one performs one 5D volume integral and ten 4D surface integrals, so the 4D surface integrals can be performed with this reduced number of degrees of freedom.

So, we have mitigated the malediction normally imposed on us by solving a higher dimensional partial differential equation system like the VM-FP system of equations by choosing reduced basis sets such as the Serendipity and maximal order basis set.
In addition, we have prescriptions for both nodal and modal bases for the Serendipity basis set.
Having defined our basis sets, we will now move to the actual evaluation of the integrals in the discrete weak form, first focusing on the transformation from the reference elements on which we have chosen to define the polynomials to the actual physical domain on which the VM-FP system of equations is defined, and then moving to a procedure to evaluate the integrals in totality.
The latter procedure will prove especially subtle and lead to the two most critical algorithmic advancements in this thesis.

\section{Transforming from Computational Space to Physical Space}\label{sec:ComputationToPhysical}

Having defined suitable polynomial basis sets for the full spectrum of dimensionality of interest, for arbitrary polynomial order, we return to an issue discussed in Section~\ref{sec:1DBases}.
We require integrals over the physical domain, i.e., a physical cell $K_j$ in phase space, such as in \eqr{\ref{eq:firstMassMatrixDefinition}}, but we have defined the polynomials on the interval $[-1,1]^d$, where $d$ is the dimensionality of the reference element.
To transform \eqr{\ref{eq:firstMassMatrixDefinition}}, we can make a change of variables,
\begin{align}
    M_{k\ell} & = \int_{K_j} w_k (\mvec{z}) w_\ell(\mvec{z}) \dz \notag \\
            & = \int_I w_k(\mvec{z}(\gvec{\eta})) w_\ell(\mvec{z}(\gvec{\eta})) \left | \frac{d\mvec{z}}{d\gvec{\eta}}\right |\thinspace d\gvec{\eta} \notag \\
            & = \int_I \hat{\upsilon}_k(\gvec{\eta}) \hat{\upsilon}_\ell(\gvec{\eta}) \left | \frac{d\mvec{z}}{d\gvec{\eta}}\right |\thinspace d\gvec{\eta}, \label{eq:transformedMassMatrix}
\end{align}
where
\begin{align}
    \left (\frac{d\mvec{z}}{d\gvec{\eta}} \right)_{ij} =  \frac{dz_i}{d\eta_j}
\end{align}
is the Jacobian matrix, and we require its determinant to perform the transformation.
In this procedure, we have transformed the basis functions $w(\mvec{z})$ defined on the physical phase space mesh to $\hat{\upsilon}(\gvec{\eta})$, the orthonormal basis set defined on the reference element $I = [-1,1]^d$, where $d$ is the dimensionality of the reference element.
We could also just as easily transform the phase space basis functions $w(\mvec{z})$ to the nodal basis defined on the reference element $I = [-1,1]^d$.

To determine the Jacobian matrix and its determinant, we must know the functional form for the change of variables from the coordinate $\mvec{z}$ to the coordinate $\gvec{\eta}$.
To take a simple example, we could transform from a uniform, structured, Cartesian grid to the reference element with the formula
\begin{align}
    \mvec{z} = \gvec{\eta} \frac{\Delta \mvec{z}}{2} + \mvec{z}_{\textrm{center}},
\end{align}
where $\Delta \mvec{z}$ is the grid spacing in each direction of phase space, and $\mvec{z}_{\textrm{center}}$ is the cell center.
The entries of the Jacobian matrix would then be
\begin{align}
    \frac{dz_i}{d\eta_j} = \frac{\Delta z_i}{2} \delta_{ij},
\end{align}
and since this matrix is diagonal, the determinant is straightforwardly
\begin{align}
    \left | \frac{d\mvec{z}}{d\gvec{\eta}}\right | = \frac{1}{2^d} \prod_{i=1}^d \Delta z_i. \label{eq:structuredJacobian}
\end{align}

The change of variables need not be so simple. 
But, so long as the Jacobian for the change of variables is known, we can map the reference element onto as complex a physical grid as we can imagine.
For example, we can construct a non-orthogonal coordinate system which follows magnetic field lines, as is done with the simulation framework the VM-FP solver is built in, \gke, for other applications \citep{Bernard:2019,Shi:2019,Mandell:2020,Bernard:2020,Francisquez:2020}.
Depending on the complexity of the Jacobian though, e.g., if the transformation itself varies in space, further modification of the integrals may be required, especially for the terms involving gradients.

Let us now, in the lead up to the next section, return to the explicit expression for the discrete weak form of the Vlasov equation, \eqr{\ref{eq:dis-weak-form}}, and attempt to reveal exactly the integrals we need to compute.
Substituting the expansions of the distribution function, $f_h$ and the phase space flow, $\gvec{\alpha}_h$, into \eqr{\ref{eq:dis-weak-form}}, we obtain
\begin{align}
    \sum_k \frac{d f_k(t)}{dt} \int_{K_j} & w_k (\mvec{z}) w_\ell(\mvec{z}) \dz + \sum_m \hat{\mvec{F}}_m (t) \cdot \oint_{\partial K_j} \mvec{n} w_\ell^- (\mvec{z}) w_m(\mvec{z})  \thinspace dS \notag \\
    & - \sum_{m,n} f_m(t) \gvec{\alpha}_n(t) \cdot \int_{K_j} \gz w_\ell (\mvec{z}) w_m(\mvec{z}) w_n (\mvec{z})\dz = 0.
\end{align}
Assuming our grid is uniform, structured, and Cartesian we can rearrange this expression using the procedure in \eqr{\ref{eq:transformedMassMatrix}}, as well as the determinant of the Jacobian matrix in \eqr{\ref{eq:structuredJacobian}}, to obtain
\begin{align}
    \sum_k & \frac{d f_k(t)}{dt} \frac{1}{2^d} \prod_{i=1}^d \Delta z_i \int_I \hat{\upsilon}_k (\gvec{\eta}) \hat{\upsilon}_\ell(\gvec{\eta}) d\gvec{\eta} + \left (\frac{1}{2^d} \prod_{i=1, i\neq j}^d \Delta z_i \right )\sum_m \hat{\mvec{F}}_m (t) \cdot \oint_{\partial I_j} \mvec{n} \hat{\upsilon}_\ell^- (\gvec{\eta}) \hat{\upsilon}_m(\gvec{\eta})  \thinspace dS \notag \\
    & - \left (\frac{1}{2^d} \prod_{i=1}^d \Delta z_i \right )\sum_{m,n} f_m(t) \gvec{\alpha}_n(t) \cdot \int_I \frac{2}{\Delta \mvec{z}} \nabla_{\gvec{\eta}} \hat{\upsilon}_\ell (\gvec{\eta}) \hat{\upsilon}_m(\gvec{\eta}) \hat{\upsilon}_n (\gvec{\eta}) d\gvec{\eta} = 0. \label{eq:inelegantVlasovDiscrete}
\end{align}
Note the slight change in notation, where we are denoting the surface $\partial I_j$ as the surface with constant $j$ dimension, where $j=x,y,z,v_x,v_y,v_z$, since the determinant of the Jacobian for the surface integral will not have the volume factor for that dimension.
In addition, we have obtained an additional factor of $2/\Delta \mvec{z}$ in transforming the gradient from $\gz$ to $\nabla_{\gvec{\eta}}$.
Importantly, this term is still a vector, and one only picks up the factor of $2/\Delta \mvec{z}$ for the particular gradient being transformed.

Since \eqr{\ref{eq:inelegantVlasovDiscrete}} must be solved for every $\hat{\upsilon}_\ell$ in our basis expansion, we can make \eqr{\ref{eq:inelegantVlasovDiscrete}} more elegant by rearranging it to be a linear system,
\begin{align}
    \frac{d f_k}{dt} = (M_{k\ell})^{-1} \left [\sum_{m} \mathcal{U}_{\ell m} \cdot \hat{\mvec{F}}_m (t) + \sum_{m,n} \mathcal{C}_{\ell m n} \cdot \gvec{\alpha}_n (t) f_m (t) \right ], \label{eq:linearVlasovExp}
\end{align}
where $(M_{k\ell})^{-1}$ is the inverse of the transformed mass matrix,
\begin{align}
    M_{k\ell} = \int_I \hat{\upsilon}_k (\gvec{\eta}) \hat{\upsilon}_\ell(\gvec{\eta}) d\gvec{\eta}, \label{eq:VlasovMassMatrix}
\end{align}
and the tensors $\mathcal{U}_{\ell m}$ and $\mathcal{C}_{\ell m n}$ are
\begin{align}
    \mathcal{U}_{\ell m} & = \frac{2}{\Delta z_j} \oint_{\partial I_j} \mvec{n} \hat{\upsilon}_\ell^- (\gvec{\eta}) \hat{\upsilon}_m(\gvec{\eta})  \thinspace dS, \label{eq:VlasovSurfaceMatrix} \\
    \mathcal{C}_{\ell m n} & = \int_I \frac{2}{\Delta \mvec{z}} \nabla_{\gvec{\eta}} \hat{\upsilon}_\ell (\gvec{\eta}) \hat{\upsilon}_m(\gvec{\eta}) \hat{\upsilon}_n (\gvec{\eta}) d\gvec{\eta}. \label{eq:VlasovVolumeMatrix}
\end{align}
A few remarks on these matrices and tensors are in order.
The first remark is the implicit sum in retaining the dot products in \eqr{\ref{eq:linearVlasovExp}}, i.e., we have to perform the surface integrals for each of the $j$ surfaces and sum over the contribution, and likewise we must sum over each contribution from the phase space flux, $\gvec{\alpha}_h$, in the volume term.
In addition, we remark that the contribution from the determinant of the Jacobian matrix has been cancelled when going from \eqr{\ref{eq:inelegantVlasovDiscrete}} to \eqr{\ref{eq:linearVlasovExp}}.
The only coordinate transform contributions that survive are the factor from transforming the gradient $\gz$ to $\nabla_{\gvec{\eta}}$, and the remaining inverse volume factor, $2/\Delta z_j$, in the surface integral for the dimension which is constant at the corresponding surface, $\partial I_j$.

While we chose to illustrate the change of coordinates and construction of the linear system with the orthonormal modal basis expansion, i.e., $\hat{\upsilon}_\ell$ for each of the $\ell$ basis functions in the expansion, we could have just as easily illustrated these transformations with the nodal basis expansion.
Importantly, a key operation we must perform to be able to construct the linear system shown in \eqr{\ref{eq:linearVlasovExp}} is to project the numerical flux function onto our basis expansion.
For example, if we employ central fluxes, then using the machinery of weak equality from Section~\ref{sec:WeakEquality}, we have
\begin{align}
    \hat{\mvec{F}} \doteq \frac{1}{2} \gvec{\alpha}_h (f_h^+ + f_h^-), \label{eq:projectCentralFluxes}
\end{align}
where the projection is done over the full basis expansion, but the phase space flux $\gvec{\alpha}_h$ and the distribution function $f^{\pm}_h$ are evaluated at the corresponding surface.

Similar manipulations which produced \eqr{\ref{eq:linearVlasovExp}} can also be performed for our semi-discrete forms of Maxwell's equations and the Fokker-Planck equation.
The essential idea is always to construct the mass matrix which multiplies the time derivative, and the two tensors which encode the spatial discretization, one for the surface integral contributions, for each surface on the reference element, and one for the volume integral contribution.
Note that in the construction of the tensor for the surface integral contributions, we must project the flux functions onto the corresponding basis set, i.e., we must project central, Eqns.\thinspace(\ref{eq:centralE})-(\ref{eq:centralB}), or upwind fluxes, Eqns.\thinspace(\ref{eq:r-e2})-(\ref{eq:r-b3}), for Maxwell's equations onto configuration space basis functions.
Likewise, we must project the two surface fluxes for the semi-discrete Fokker--Planck equation onto phase space basis functions.

The evaluation of all of these linear operations in each cell $K_j$ in phase space and $\Omega_j$ in configuration space then completes the algorithm for the spatial discretization.
To actually evaluate these linear operations though, we now need to construct these tensors for the surface integrals and volume integral by specifying how to compute the integrals in Eqns.\thinspace(\ref{eq:VlasovMassMatrix}--\ref{eq:VlasovVolumeMatrix}).
What may seem relatively straightforward belies a subtlety that is of singular consequence for the construction of the algorithm.

\section{Evaluating the Integrals: The Importance of an \\ Alias-Free Scheme}\label{sec:alias-free-scheme}

At first glance, there is nothing remarkable about the integrals which must be performed in the construction of Eqns.\thinspace(\ref{eq:VlasovMassMatrix}--\ref{eq:VlasovVolumeMatrix}).
They are products of polynomials; we could either use Gaussian quadrature of an appropriate degree, or even exactly integrate the combinations of polynomials and store the entries of the matrices and tensors defined in Eqns.\thinspace(\ref{eq:VlasovMassMatrix}--\ref{eq:VlasovVolumeMatrix}) for the Vlasov equation and the analogous matrices and tensors for the Fokker--Planck equation and Maxwell's equations.

Consider what the application of Gaussian quadrature to \eqr{\ref{eq:VlasovVolumeMatrix}} would entail.
In one dimension, the numerical integration of a function with Gaussian quadrature is done via
\begin{align}
  \int_{-1}^1 f(x) \thinspace dx \approx \sum_{i=1}^{N_q} \mathcal{W}_i f(x_i),  
\end{align}
where $\mathcal{W}_i$ and $x_i$ are the $i$ weights and abscissas for the Gaussian quadrature rule.
The extension to higher dimensions is done using a tensor product of one dimensional weights and abscissas, e.g., in two dimensions,
\begin{align}
      \int_{-1}^1 \int_{-1}^1 f(x, y) \thinspace dxdy \approx \sum_{i=1}^{N_q} \sum_{j=1}^{N_q} \mathcal{W}_i \mathcal{W}_j f(x_i, y_j).
\end{align}
An example Gaussian quadrature rule, Gauss-Legendre, is shown in Table~\ref{tab:model:quadrature}.
\begin{table}[!htb]
 \label{tab:model:quadrature}
 \begin{center}
   \begin{tabular}{| l | l | l | l |}
   \hline
    $N_q$ & $x_i$ & $\mathcal{W}_i$ & Order of Accuracy $(2 N_q - 1)$ \\ \hline 
    1 & 0 & 2 & 1\\ \hline
    2 & $\pm\frac{1}{\sqrt{3}}$ & 1 & 3\\ \hline
    \multirow{2}{*}{3} &
     0 & $\frac{8}{9}$ & \\ & $\pm\sqrt{\frac{3}{5}}$ &$
     \frac{5}{9}$ & 5 \\ \hline
    \multirow{2}{*}{4} &
     $\pm\sqrt{\frac{3}{7}-\frac{2}{7}\sqrt{\frac{6}{5}}}$ &
     $\frac{18+\sqrt{30}}{36}$ & \\ &
     $\pm\sqrt{\frac{3}{7}+\frac{2}{7}\sqrt{\frac{6}{5}}}$&
     $\frac{18-\sqrt{30}}{36}$ & 7 \\ \hline
    \multirow{3}{*}{5} & 0 &
     $\frac{128}{225}$ & \\ &
     $\pm\frac{1}{3}\sqrt{5-2\sqrt{\frac{10}{7}}}$ &
     $\frac{322+13\sqrt{70}}{900}$ & \\ &
     $\pm\frac{1}{3}\sqrt{5+2\sqrt{\frac{10}{7}}}$ & 
     $\frac{322-13\sqrt{70}}{900}$ & 9 \\ \hline
   \end{tabular}
   \caption{The weights and abscissas for the Gauss-Legendre quadrature rule.  The nodes (abscissas) are the roots
   of the Legendre polynomial $P_{N_q}(x)$ and the weights
   $\mathcal{W}_i=2/[(1-x_i^2)(P_{N_q}'(x_i))^2]$ \citep{Abramowitz:1985}.}
 \end{center}
\end{table}
To perform Gaussian quadrature on integrals such as \eqr{\ref{eq:VlasovVolumeMatrix}}, we require a tensor product of $N_q$ quadrature points in each direction for every dimension we wish to integrate.
This approach will integrate exactly monomials of a particular order, e.g., $2N_q - 1$ for Gauss-Legendre or $2N_q - 3$ for Gauss-Lobatto, regardless of the dimension in which the monomial varies. 

Even with the added accuracy of Gauss-Legendre,  this strategy quickly becomes untenable for the same reason the tensor product basis is prohibitively expensive for solving the VM-FP system of equations: the ``curse of dimensionality.'' For example, consider integrating the volume term in five dimensions with second order polynomials. Naively, one expects this to require the integration of monomials with degree $3p = 6$ in each dimension, because both $\gvec{\alpha_h}$ and $f_h$ have polynomial expansions, thus requiring at least 4 quadrature points in each dimension, or a total of $4^5 = 1024$ quadrature points, to avoid under-integrating the volume term in \eqr{\ref{eq:VlasovVolumeMatrix}}. Given that the scaling of the computation of the volume integral in a cell is $\mathcal{O}(N^{tot}_q N_p)$, where $N_q^{tot}$ is the total number of quadrature points, the number of operations per phase space cell becomes quite large for modest polynomial orders in high dimensions.

Leveraging the fact that \eqr{\ref{eq:VlasovVolumeMatrix}} is just a triple product of polynomials and exactly integrating each term in the tensor to some specified precision, e.g., double precision, is not guaranteed to produce a more favorable computational complexity.
If every degree of freedom within a phase space cell is coupled, the resulting tensor would be dense and the computational complexity of evaluating this tensor convolution would then be $\mathcal{O}(N_p^3)$, where $N_p$ is the number of basis functions in our phase space expansion.
It is perhaps the case that the scaling would not be as dire as $\mathcal{O}(N_p^3)$, since the phase space flux, $\gvec{\alpha}_h$, requires the expansions of the electromagnetic fields, which live in the configuration space subspace of our phase space expansion, but $\gvec{\alpha}$ does vary linearly in velocity space via the $\mvec{v} \times \mvec{B}$ component of the Lorentz force. Thus, we expect the computational complexity would be between $\mathcal{O}(N_p^3)$ and $\mathcal{O}(N_c N_p^2)$, where $N_c$ is the number of configuration space basis functions, and not the full reduction to the more favorable $\mathcal{O}(N_c N_p^2)$ scaling.

An approach that is standard with nodal bases is to reduce the cost of the scheme by only evaluating the terms in these integrals, such as \eqr{\ref{eq:VlasovSurfaceMatrix}} and \eqr{\ref{eq:VlasovVolumeMatrix}}, at the specified nodes that define the polynomials \citep{Hesthaven:2007,Hindenlang:2012}.
In doing so, the required number of operations would be significantly decreased, as the values of the coefficients at the nodes are known by the definition of the nodal basis, reducing the computational complexity to $\mathcal{O}(N_p^2)$.
But, this approach incurs the very same aliasing errors we warned about in Section~\ref{sec:WeakEquality}.
Even if the values of the various quantities such as $\gvec{\alpha}_h$ and $f_h$ are known at the nodes, the product of the two quantities required for the volume term is not known at the nodes because the product of the two quantities is higher order.
Thus, we will be unable to determine the nonlinear term uniquely if we evaluate $\gvec{\alpha}_h$ and $f_h$ at the nodes and multiply the result.

We now make concrete one of the principal advancements of this thesis: the intolerable consequences of aliasing errors in a DG discretization of an equation system such as the VM-FP system of equations.
We emphasized in Section~\ref{sec:PropertiesKineticEquation} and Appendix~\ref{app:proofsContinuous} for the continuous system, and again when we discussed the properties of the discrete system in Sections~\ref{sec:PropertiesDiscreteVM} and \ref{sec:propertiesDiscreteFP}, that many properties of the VM-FP system of equations are implicit to the equation system.
The Vlasov--Fokker--Planck equation is a conservation equation for the particle distribution function, and the fact that it is a conservation equation makes certain properties explicit, such as phase space incompressibility for the collisionless component of the equation system.
However, other properties are contained in velocity moments of the equation system.
For example, it is the second velocity moment of the Vlasov--Fokker--Planck equation, combined with Maxwell's equations, that gives us total energy conservation.

When proving that the discrete scheme maintains properties of the continuous system such as conservation of mass and energy, we substituted for the test functions, $w$, expressions we presumed we would be able to integrate.
In one case, we substituted $w = 1/2\thinspace m |\mvec{v}|^2$ and evaluated the integrals to massage the volume term into forms which determined the conditions for which energy would be conserved.
While at first glance this may seem like an obvious assertion: we have to evaluate the integrals correctly to actually retain properties such as conservation of mass and energy, it is important to realize why this is the case.
Were we evaluating explicit conservation relations, such as the conservation of mass, momentum, and energy equations in the Euler equations, the Navier-Stokes equations, or the equations of magnetohydrodynamics, aliasing errors could be problematic, but they would not destroy conservation relations.
The aliasing errors arising from not exactly representing the fluid equation solution in a DG algorithm exactly might cause anomalous energy transport, but the aliasing induced transport would not destroy energy conservation of the equation system.

We do not wish to be overly uncharitable on this point. It is well known within the DG computational fluid dynamics community that aliasing errors can lead to stability issues \citep{Kirby:2003}; however, because the aliasing errors manifest in the smallest scales and highest wavenumbers, techniques such as filtering and artificial dissipation are commonly employed to ameliorate these errors \citep{Fischer:2001, Gassner:2013b, Flad:2016, Moura:2017}.
And because fluids equations such as the Euler equations, the Navier-Stokes equations, or the equations of magnetohydrodynamics involve the discretization of explicit conservation relations for mass, momentum, and energy, there is far less concern that such filtering or artificial dissipation will destroy the quality of the solution, at least at scales above the resolution of the simulation.
There are other means of alleviating or eliminating aliasing errors using split-form formulations of the DG method\footnote{In the split-form forumulation, conservative and non-conservative forms of the equation at the continuous level are averaged to produce a different, but ultimately more computationally favorable, equation to discretize.} \citep{Gassner:2013a, Gassner:2014, Gassner:2016a, Gassner:2016b, Flad:2017}, and overintegration, essentially the idea we already discussed of adding sufficient quadrature points to exactly integrate the nonlinear term \citep{Mengaldo:2015, Kopriva:2018, Fehn:2019}.
For a comparison of these two approaches, see \citet{Winters:2018}.
Importantly, with the exception of overintegration\footnote{And only overintegration in specific circumstances, as overintegration of expressions such as \eqr{\ref{eq:badDiscreteFlowDefinition}} for computing the discrete flow, $\mvec{u}_h$ will always incur aliasing errors unless you apply overintegration to the linear operation defined in \eqr{\ref{eq:weakFlowDefinition}}, because \eqr{\ref{eq:badDiscreteFlowDefinition}} involves integration of a rational function, which Gaussian quadrature cannot integrate exactly.}, techniques such as filtering and the split-form formulation are attempts to reduce aliasing errors, not completely eliminate them.
For certain equation systems, the split form formulation has been shown to balance robustly two sources of aliasing: too much energy in the small scales due to under-integration of the conservative form and too little energy in the small scales caused by under-integration of the non-conservative form.
These errors then roughly cancel and produce a more favorable method; however, formulating the equations in split-form is still principally a means of controlling aliasing errors, not removing aliasing errors entirely \citep{Winters:2018}. 

Critically, we must eliminate aliasing errors from our DG discretization of the VM-FP system of equations, lest these aliasing errors manifest themselves as the ``energy content'' of the velocity moments being transported in uncontrolled and undesirable ways.
Because the physics content of the velocity moments of the particle distribution function are directly encoded in our DG discretization, we cannot allow aliasing errors to change the behavior of these moments in our basis expansion.
We are explicitly evolving a polynomial expansion in velocity space that corresponds directly to evolving velocity moments like mass and energy, so any anomalous transport of the ``energy content'' of our expansion will inevitably destroy the conservation relations implicit to the VM-FP system of equations.

The very same structure of our basis expansion we leveraged to demonstrate the discrete VM-FP system of equations retained key properties of the continuous system imposes the constraint that we eliminate aliasing errors from the evaluation of the discrete weak forms for the VM-FP system of equations.
If we do not respect this restriction on our discrete scheme, we by no means guarantee the VM-FP system of equations retains these properties of the continuous system, and thereby risk not just the physicality of the solution, but the overall stability of the numerical method.
It would be nigh impossible to correct the rearrangement of the ``energy content'' of the basis expansion in a physically reasonable way, much less a stable way.
If we cannot safely apply standard techniques such as filtering to mitigate aliasing errors, we must then eliminate these errors in their entirety.

So, we return to the computational complexity we found for the naive means of eliminating aliasing errors with exact integration. 
For exact numerical integration, the computational complexity will inevitably be $\mathcal{O}(N^{tot}_q N_p)$, while exact analytic integration will produce an algorithm we expect will lie between $\mathcal{O}(N_c N_p^2)$ and $\mathcal{O}(N_p^3)$, at least if one assumes that every degree of freedom couples to every other degree of freedom in the expansion.
We can ask the question if there is any way to reduce this cost, and indeed for numerical integration, some savings can be obtained by use of an anisotropic quadrature scheme.
For example, if we consider the advection in velocity space,
\begin{align}
\int_{K_j} \gv w_\ell \cdot \gvec{\alpha}^{v}_h f_h  \dz = \int_{K_j} \gv w_\ell \cdot \frac{q}{m} (\mvec{E}_h + \mvec{v} \times \mvec{B}_h) f_h  \dz,
\end{align}
for each of the $\ell$ basis functions in our phase space expansion, we can see that, while we require integrating monomials of degree $3p$ in configuration space, in velocity space we require at most integrating monomials with degree $2p + 1$.
Table \ref{table:AnisotropicQuadratureTable} considers the impact anisotropic quadrature, using only the minimum number of quadrature points required along each direction of integration, has on a few combinations of velocity space and configuration space dimensions. 
\begin{table}[!htb]
\begin{center}
\begin{tabular}{| l | l | l | l | l | l |}
\hline
& \textbf{Polynomial Order} & 1 & 2 & 3 & 4 \\ \hline
\textbf{Dimension} & $((3p+1)/2)^{CDIM}\times((2p+2)/2)^3$& & & & \\ \hline
1X3V & & 16 & 108 & 320 & 875 \\ \hline
2X3V & & 32 & 432 & 1600 & 6125 \\ \hline
3X3V & & 64 & 1728 & 8000 & 42875 \\ \hline
\end{tabular}
\caption{Number of quadrature points required to integrate the volume term for the advection of the distribution function in velocity space as a function of dimension.}
\label{table:AnisotropicQuadratureTable}
\end{center}
\end{table}
\begin{table}[!htb]
\begin{center}
\begin{tabular}{| l | l | l | l | l | l |}
\hline
\textbf{Cost(Original/New)} & \textbf{Polynomial Order} & 1 & 2 & 3 & 4 \\ \hline
\textbf{Dimension} & & & & & \\ \hline
1X3V & & 1 & $\sim 2.37$ & $\sim 1.95$ & $\sim 2.74$ \\ \hline
2X3V & & 1 & $\sim 2.37$ & $\sim 1.95$ & $\sim 2.74$ \\ \hline
3X3V & & 1 & $\sim 2.37$ & $\sim 1.95$ & $\sim 2.74$ \\ \hline
\end{tabular}
\caption{Reduction in the number of quadrature points,  relative to isotropic quadrature, required to integrate the volume term for the advection of the distribution function in velocity space.}
\label{table:AnisotropicQuadratureReduction}
\end{center}
\end{table}
While there is no gain for polynomial order one, there is a moderate improvement relative to isotropic quadrature for other combinations, as shown in Table \ref{table:AnisotropicQuadratureReduction}.
A similar reduction in the number of quadrature points required can be demonstrated for the surface integrals.

Although we could individually examine each component of the semi-discrete Vlasov--Fokker--Planck equation and determine the minimum amount of quadrature required to integrate each term exactly, it is worth pointing out that, inevitably, the computational complexity of this algorithm remains $\mathcal{O}(N^{tot}_q N_p)$.
There are some exceptions: for example, we can rewrite the phase space flux in configuration space to exploit the fact that the we are employing structured, Cartesian grids, 
\begin{align}
    \int_{K_j} \gx w_\ell \cdot \mvec{v} f_h  \dz = \int_{K_j} \gx w_\ell \cdot (\mvec{v} - \mvec{v}_{\textrm{center}}) f_h d\mvec{z} + \int_{K_j} \gx w_\ell \cdot \mvec{v}_{\textrm{center}} f_h  \thinspace d\mvec{z},
\end{align}
for each of the $\ell$ basis functions in our phase space expansion, where $\mvec{v}_{\textrm{center}} = \overline{\mvec{v}}$ is the cell center velocity.
These integrals can be pre-computed on the phase space reference elements because they are only coordinate weighted matrices, independent of one's exact position in velocity space, thus reducing their computational complexity to $\mathcal{O}(N_p^2)$.

However, the rearrangement of the phase space flux in configuration space to reduce the cost is the exception and not the norm.
The individual pieces of the semi-discrete Fokker--Planck equation will be limited in cost by the number of quadrature points required to integrate exactly the semi-discrete form because the Fokker--Planck equation is nonlinear, just like the advection in velocity space due to the electromagnetic fields.

So, numerical quadrature will be inescapably expensive if we are to satisfy our constraint that we must integrate the semi-discrete VM-FP system of equations exactly to prevent aliasing errors from destroying the quality of our solution.
As stated above, at first glance, the analytical integration to pre-compute and construct the tensors, for example \eqr{\ref{eq:VlasovVolumeMatrix}}, for convolution as part of the update, are very dense.
The convolution of these dense tensors will lead to an unfavorable computational complexity, similar to the numerical quadrature approach, between $\mathcal{O}(N_c N_p^2)$ and $\mathcal{O}(N_p^3)$.
However, if we could sparsify these tensors in some way, thereby reducing the couplings between all of the polynomials in our basis expansion, we may dramatically improve the computational complexity, and thus reduce the cost, of our numerical method for the VM-FP system of equations.

It is no coincidence we have drawn continual attention to the modal, orthonormal basis in our discussion of the specific forms our polynomial bases might take.
We now emphasize the second of our most important algorithmic advances in our implementation of our DG discretization of the VM-FP system of equations: employing a modal, orthonormal basis set for our polynomial basis expansion.
This judicious choice of basis functions allows us to significantly sparsify the requisite tensors needed to evaluate the spatial discretization of the VM-FP system of equations, while still respecting the requirement that our algorithm be alias-free for stability and accuracy.

To get a sense for just how sparse the update with a modal, orthonormal basis is, we consider again the collisionless update, the Vlasov equation, and the volume term defined in \eqr{\ref{eq:VlasovVolumeMatrix}}.
Now, we will project the phase space flux, $\gvec{\alpha}_h$, onto this modal, orthonormal basis,
\begin{align}
    \gvec{\alpha}^{x}_j(t) & = \int_{I} (\mvec{v} - \mvec{v}_{\textrm{center}}) \hat{\upsilon}_j(\gvec{\eta}) d\gvec{\eta} + \int_{I} \mvec{v}_{\textrm{center}} \hat{\upsilon}_j(\gvec{\eta}) d\gvec{\eta} \label{eq:orthonormalAlphaConfig} \\
    \gvec{\alpha}^{v}_j(t) & = \sum_i \int_{I} \frac{q}{m} \left[ \mvec{E}_i(t) + \mvec{v}_{\textrm{center}} \times  \mvec{B}_i(t) \right]  \hat{\vartheta}_i(\gvec{\zeta}) \hat{\upsilon}_j(\gvec{\eta}) d\gvec{\eta} \notag \\
    & + \sum_i \int_{I} \frac{q}{m} (\mvec{v} - \mvec{v}_{\textrm{center}}) \times \mvec{B}_i(t) \hat{\vartheta}_i(\gvec{\zeta}) \hat{\upsilon}_j(\gvec{\eta}) d\gvec{\eta}, \label{eq:orthonormalAlphaVelocity}
\end{align}
where we have denoted the orthonormal basis expansion in phase space as $\hat{\upsilon}(\gvec{\eta})$ and the orthonormal expansion in configuration space as $\hat{\vartheta}(\gvec{\zeta})$. 
Importantly, these expressions have already leveraged the fact that the mass matrix is the identity matrix, up to the volume factor in a cell, to simplify the resulting expressions so that the index $\gvec{\alpha}_j$ maps to the $j^{\textrm{th}}$ basis function on the right hand side.
By separating $\mvec{v} \rightarrow (\mvec{v} - \mvec{v}_{\textrm{center}}) + \mvec{v}_{\textrm{center}}$, we can cleanly separate the velocity dependence into the piecewise constant basis function and a piecewise linear basis function.
In other words, we can clearly see that we require only a small fraction of the full basis expansion's dependence in velocity space to represent both the configuration space and velocity space phase space flux $\gvec{\alpha}^{x,v}_h$.

These expressions for the phase space flux can be plugged in for the coefficients in \eqr{\ref{eq:linearVlasovExp}}, and the whole update evaluated, after exploiting a similar sparsity in the collisionless numerical flux function and the other components of the discrete weak forms of the VM-FP system of equations.
To actually evaluate matrices such as \eqr{\ref{eq:VlasovVolumeMatrix}}, we can use a computer algebra system, for example Maxima \citep{maxima}, and compute the explicit form of the sums in \eqr{\ref{eq:linearVlasovExp}}.
In other words, by evaluating
\begin{align}
    \textrm{out}_k = \sum_{m,n} \mathcal{C}_{kmn} \cdot \gvec{\alpha}_n f_m,
\end{align}
where $\textrm{out}_k$ is a component of the update for $df_k/dt$, and using the fact that the mass matrix is the identity matrix to change variables $\ell \rightarrow k$, we obtain the update shown in Figure~\ref{fig:1X2V-p1-vol-update} for the piecewise linear tensor product basis in one spatial and two velocity dimensions (1X2V).
\begin{figure}[!htb]
    \centering
    \includegraphics[width=\textwidth]{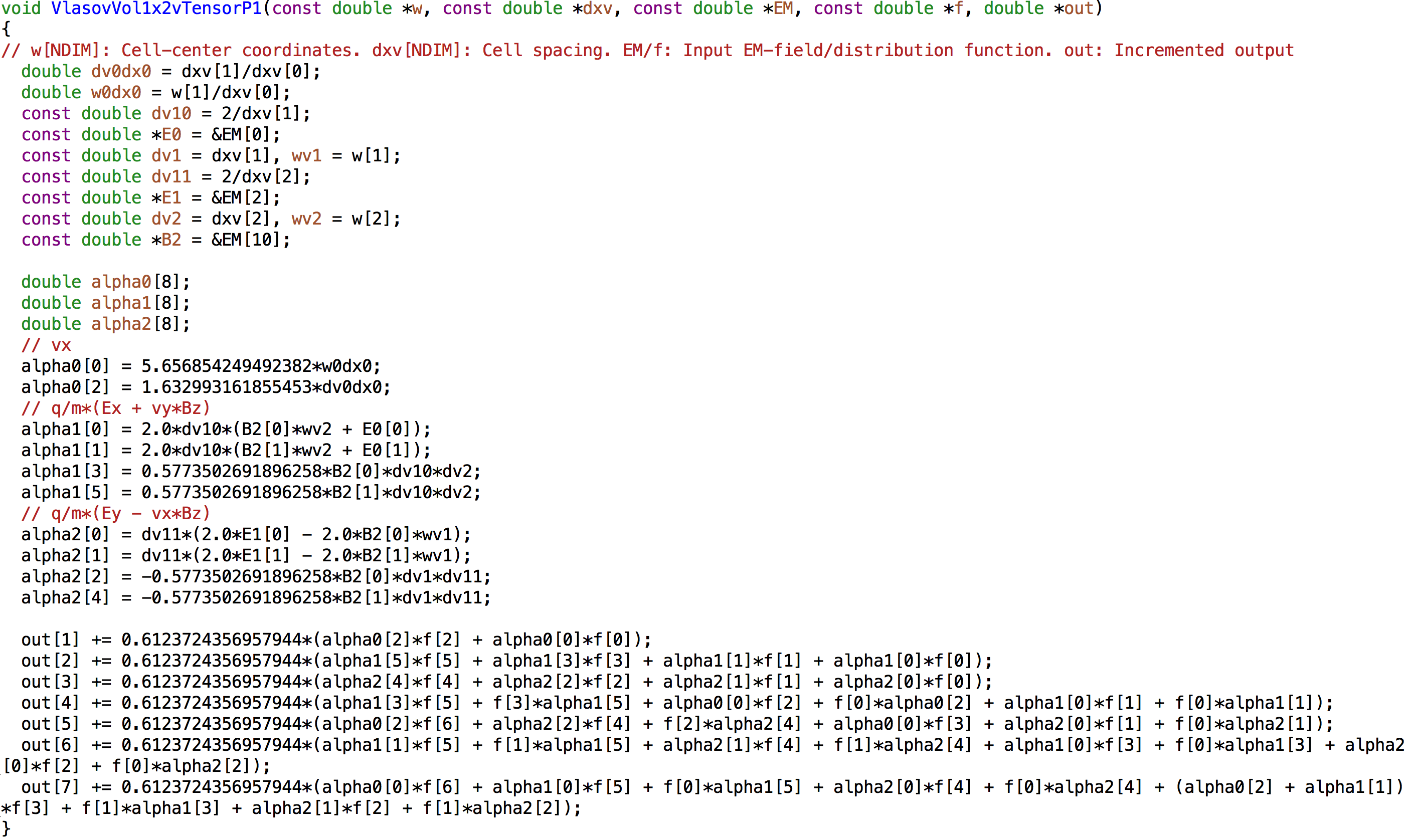}
    \caption{The computational kernel for the volume integral, \eqr{\ref{eq:VlasovVolumeMatrix}}, for the collisionless advection in phase space of the particle distribution function in one spatial dimension and two velocity dimensions (1X2V) for the piecewise linear tensor product basis. Note that this computational kernel takes the form of a C++ kernel that can be called repeatedly for each grid cell $K_j$ depending on the local cell center coordinate and the local grid spacing. Here, the local cell coordinate is the input ``const double w'' and the local grid spacing is the input ``const double dxv''. The out array is the increment to the right hand side due this volume integral contribution in a forward Euler time-step, i.e., a piece of \eqr{\ref{eq:linearOpKinetic}} for the Vlasov--Fokker--Planck equation. To complete the right hand side of \eqr{\ref{eq:linearOpKinetic}} for the evolution of the particle distribution function, for a given phase space cell, we require the surface contributions for the collisionless advection, as well as the computational kernels for the corresponding tensors encoding the spatial discretization of the Fokker--Planck equation.}
    \label{fig:1X2V-p1-vol-update}
\end{figure}

Figure~\ref{fig:1X2V-p1-vol-update} shows a C++ computational kernel that can be called for every cell $K_j$ of a structured, Cartesian grid in phase space, as we are passing all the information required to the kernel to determine where we are physically in phase space, i.e., the local cell center coordinate and grid cell size.
The output of this computational kernel, the out array, is a piece of \eqr{\ref{eq:linearOpKinetic}} for the Vlasov--Fokker--Planck equation, the volume integral of the collisionless advection in phase space. 
To complete the right hand side of \eqr{\ref{eq:linearOpKinetic}} for a given phase space cell, we require the surface contributions for the collisionless advection, as well as the computational kernels for the corresponding tensors encoding the spatial discretization of the Fokker--Planck equation.
We will likewise have computational kernels for Maxwell's equations which completely specify the volume and surface contributions, and allow for the incrementing of the solution in a forward Euler time-step.

Notably, the computational kernel in Figure~\ref{fig:1X2V-p1-vol-update} has no matrix data structure, much less the requirement to perform quadrature since we have already analytically evaluated the integrals in \eqr{\ref{eq:VlasovVolumeMatrix}} with a computer algebra system and written out the results to double precision.
We refer to this as a ``quadrature- and matrix-free'' implementation of the DG method.
Such quadrature-free methods using orthogonal (orthonormal) polynomials were studied in the early days of the DG method \citep{Atkins:1998,Lockard:1999} and are still applied to a variety of linear hyperbolic equations, such as the acoustic wave equation for studies of seismic activity, the level set equation, and Maxwell's equations \citep{Kaser:2006,Marchandise:2006,Koutschan:2012,Kapidani:2020}.
Even for alternative formulations of DG which do not seek to eliminate aliasing errors by exactly integrating the components of the discrete weak form, matrix-free implementations are desirable to reduce the memory footprint of the scheme \citep{Fehn:2019}.
Minimizing the memory footprint can lead to performance gains even beyond the reduction in the number of operations required to take a time-step.

We emphasize again the novelty of our approach.
Using a modal, orthonormal basis, we produce a ``quadrature- and matrix-free'' method that respects our requirement that our algorithm be alias-free by analytically evaluating the integrals in the discrete weak forms of the VM-FP system of equations, thus the quadrature-free component.
And the matrix-free component follows from the fact that the resulting integrals produce sparse tensors whose convolutions can be unfolded in their entirety, eliminating the need for a matrix data structure to actually evaluate the tensor-tensor convolutions.
All that is required is entry-by-entry evaluation of the results of these convolutions, as demonstrated in Figure~\ref{fig:1X2V-p1-vol-update} by the out array.

As a frame of reference the sparseness of our ``quadrature- and matrix-free'' method, the computational kernel in Figure~\ref{fig:1X2V-p1-vol-update} has $\sim 70$ multiplications; whereas, the update for numerical quadrature applied to a nodal basis has $\sim 250$ multiplications.
The potential gains from a nodal basis by only requiring the expansion local to a surface in the surface integrals do not provide enough computational savings to compete with the sparsity of the orthonormal, modal expansion.
We will do a thorough computational complexity experiment in Section~\ref{sec:computationalComplexity} to determine both exactly what the computational complexity of the sparse, orthonormal, modal basis expansion is, as well as compare in totality the performance of a sparse, orthonormal, modal basis expansion to an optimized nodal basis expansion using anisotropic quadrature with high performance linear algebra libraries.
Before we do this comparison though, it is worth going through the final details of the algorithm.
We must now discuss how we compute the recovery polynomial in generality, and how we compute velocity moments, to complete the implementation of our numerical method for the VM-FP system of equations.

\section{Extending the Recovery Scheme to Higher Dimensions}\label{sec:recoveryHigherDimensions}

As stated above in Section~\ref{sec:ComputationToPhysical}, many of the components of the surface integrals, for example the numerical flux functions for the collisionless advection and drag term, are simple enough to project onto our phase space basis expansion, compute the coefficients in our modal, orthonormal basis expansion, and then convolve tensors such as \eqr{\ref{eq:VlasovSurfaceMatrix}} to evaluate the surface integral contributions in our discretization of the VM-FP system of equations.
However, we require a prescription for computing the recovery polynomial in generality so we can evaluate the corresponding surface integrals in the discrete Fokker--Planck equation.
Whereas projections such as \eqr{\ref{eq:projectCentralFluxes}} for central fluxes applied to the collisionless advection naturally retain the spatial dependence at the surface, and thus the high order nature of our scheme, we have not yet described a procedure for the non-recovered spatial dependence in our computation of the recovery polynomial.

We said in the summary of Chapter~\ref{ch:DGFEM}, Section~\ref{sec:chapter2Summary}, that the recovery procedure is fundamentally one dimensional: we are only generating a recovery polynomial across the surface where the function has a discontinuity.
So, let us consider the operation of projecting a two dimensional function, $f(x,y)$, onto a one-dimensional basis,
\begin{align}
    \int_{-1}^1 g(x,y) \psi_k(x) \thinspace dx & = \int_{-1}^1 f(x,y) \psi_k(x) \thinspace dx, \\
    g_k(y) & = \int_{-1}^1 f(x,y) \psi_k(x) \thinspace dx, \label{eq:2DTo1DProjection}
\end{align}
i.e., each of the $k$ coefficients for the component expansion in the $x$ dimension retain their $y$ variation.
Note that the simplified form of \eqr{\ref{eq:2DTo1DProjection}} assumes the basis $\psi_k$ is our modal, orthonormal basis expansion to simplify the left hand side, and that as part of this operation $f(x,y)$ has a two dimensional basis expansion in $x$ and $y$.

Although we characterized the recovery procedure mathematically in Section~\ref{sec:WeakEquality}, we should now explicitly compute the recovery polynomial in a specific test case to make apparent how to use \eqr{\ref{eq:2DTo1DProjection}} to compute the recovery polynomial in generality.
Let us use the piecewise linear, one dimensional, modal, orthonormal basis for this demonstration,
\begin{align}
    \begin{aligned}
    \hat{\upsilon}_1(x) & = \frac{1}{\sqrt{2}}, \\
    \hat{\upsilon}_2(x) & = \sqrt{\frac{3}{2}} x,
    \end{aligned}
\end{align}
but on a slightly different reference element, $K_L=[-2,0]$ on the left, and $K_R=[0,2]$ on the right, so that the left and right cells each have the same volume as our original reference element $[-1, 1]$.
The discontinuity will still be located $x = 0$.
These shifted basis functions are then
\begin{align}
    \begin{aligned}
    \hat{\upsilon}_{L1}(x) & = \frac{1}{\sqrt{2}}, \\
    \hat{\upsilon}_{R1}(x) & = \frac{1}{\sqrt{2}}, \\
    \hat{\upsilon}_{L2}(x) & = \sqrt{\frac{3}{2}} (x + 1), \\
    \hat{\upsilon}_{R2}(x) & = \sqrt{\frac{3}{2}} (x - 1),
    \end{aligned}
\end{align}
so that the full basis expansions in each cell are,
\begin{align}
    \begin{aligned}
    f_L(x) & = \frac{1}{\sqrt{2}} f_{L1} + \sqrt{\frac{3}{2}} (x + 1) f_{L2}, \\
    f_R(x) & = \frac{1}{\sqrt{2}} f_{R1} + \sqrt{\frac{3}{2}} (x - 1) f_{R2}.
    \end{aligned}
\end{align}
Since we are using piecewise linear polynomials in the left and right cells, two basis functions in each cell, four basis functions total, we can represent a cubic function across the interface,
\begin{align}
    h(x) = h_1 + h_2 x + h_3 x^2 + h_4 x^3.
\end{align}
We then solve the following set of equations
\begin{align}
    \begin{aligned}
    \int_{-2}^0 [h(x) - f_L(x)]  \hat{\upsilon}_{L1}(x) \thinspace dx & = 0, \\
    \int_{-2}^0 [h(x) - f_L(x)]  \hat{\upsilon}_{L2}(x) \thinspace dx & = 0, \\
    \int_0^2 [h(x) - f_R(x)]  \hat{\upsilon}_{R1}(x) \thinspace dx & = 0, \\
    \int_0^2 [h(x) - f_R(x)]  \hat{\upsilon}_{R2}(x) \thinspace dx & = 0,
    \end{aligned}
\end{align}
using a computer algebra system to analytically evaluate each integral and invert the matrix equation for the coefficients,
\begin{align}
    \begin{aligned}
    h_1 & = \frac{\sqrt{2}\left(-2\sqrt{3}f_{R2} + 2\sqrt{3}f_{L2} + 3f_{R1} + 3f_{L1} \right)}{12},\\
    h_2 & =- \frac{\sqrt{2}\left(5\sqrt{3}f_{R2} +    5\sqrt{3}f_{L2} - 9f_{R1} + 9f_{L1} \right)}{16},\\
    h_3 & =- \frac{\sqrt{3}\left(f_{L2} - f_{R2} \right)}{\sqrt{2}^5},\\
    h_4 & = \frac{\sqrt{2}\left(5\sqrt{3}f_{R2} + 5\sqrt{3}f_{L2} - 5f_{R1} + 5f_{L1} \right)}{32}. 
    \end{aligned} \label{eq:1DRecoveryEvaluated}
\end{align}
Now we can use \eqr{\ref{eq:2DTo1DProjection}} to modify the individual pieces of \eqr{\ref{eq:1DRecoveryEvaluated}}.
For example, if the original function $f = f(x,y)$, we can compute in the right cell
\begin{align}
    \begin{aligned}
    f_{R1}(y) & = \int_0^2 f(x,y) \hat{\upsilon}_{R1}(x) \thinspace dx, \\
    f_{R2}(y) & = \int_0^2 f(x,y) \hat{\upsilon}_{R2}(x) \thinspace dx,
    \end{aligned}
\end{align}
and likewise for the left cell. 

This procedure, combining the one dimensional recovery in \eqr{\ref{eq:1DRecoveryEvaluated}} with the projection from the higher dimensional space onto the one dimensional basis, \eqr{\ref{eq:2DTo1DProjection}}, to determine how the coefficients vary in the other dimensions, is general and can be extended to as high dimensionality and as high polynomial order as we choose.
Notably, regardless of the specific form of the recovery polynomial, we emphasize that we only require the first and second coefficients, $h_1$ and $h_2$ in \eqr{\ref{eq:1DRecoveryEvaluated}}, because we are evaluating the recovery polynomial and its first derivative at the $x=0$ surface.
In other words, the value of the recovery polynomial at the surface of the reference element is $h_1$, and the value of the gradient of the recovery polynomial at the surface of the reference element is $h_2$, at least for piecewise linear polynomials.
We have thus completely specified the required recovered function, e.g., the recovered distribution function in the discrete Fokker--Planck equation, the value and the gradient of the recovered function, and the recovered function's variation along the surface across which we are constructing the recovered function.
We can then project the results of this recovery process onto phase space basis functions, and construct a similar tensor to \eqr{\ref{eq:VlasovSurfaceMatrix}} to convolve and evaluate the surface contributions in the discrete Fokker--Planck equation.

\section{Computing the Coupling Moments}

The final component of our implementation is a means of computing the velocity moments which close our equation system, such as $\mvec{J}_h$ for coupling to Maxwell's equations.
In the same way we demonstrated how one leverages weak equality to actually calculate the recovery polynomial in arbitrary dimensions in Section~\ref{sec:recoveryHigherDimensions}, the goal of this section is to illustrate the use of weak equality to compute the coupling moments, and the form these computational kernels take. 
Recall the operations we defined in Eqns.\thinspace(\ref{eq:discrete0thMoment}--\ref{eq:discrete1stMoment}), which we here write out explicitly transformed to the reference element on which the modal, orthonormal basis sets are defined,
\begin{align}
    \sum_m M_{0_m} \int_{I_\Omega} & \hat{\vartheta}_\ell(\gvec{\zeta}) \hat{\vartheta}_m(\gvec{\zeta}) d\gvec{\zeta} \notag \\
    & = \left( \frac{1}{2^{VDIM}} \prod_{i=1}^{VDIM} \Delta v_i \right )\sum_n \sum_j \int_{I_j \setminus I_\Omega} f_n(t) \hat{\upsilon}_n(\gvec{\eta}) \hat{\vartheta}_\ell(\gvec{\zeta}) d\gvec{\eta}, \\
    \sum_m \mvec{M}_{1_m} \int_{I_\Omega} & \hat{\vartheta}_\ell(\gvec{\zeta}) \hat{\vartheta}_m(\gvec{\zeta}) d\gvec{\zeta} \notag \\
    & = \left( \frac{1}{2^{VDIM}} \prod_{i=1}^{VDIM} \Delta v_i \right ) \sum_n \sum_j \int_{I_j \setminus I_\Omega} \mvec{v} f_n(t) \hat{\upsilon}_n(\gvec{\eta}) \hat{\vartheta}_\ell(\gvec{\zeta}) d\gvec{\eta}, \\
    \sum_m M_{2_m} \int_{I_\Omega} & \hat{\vartheta}_\ell(\gvec{\zeta}) \hat{\vartheta}_m(\gvec{\zeta}) d\gvec{\zeta} \notag \\
    & = \left( \frac{1}{2^{VDIM}} \prod_{i=1}^{VDIM} \Delta v_i \right ) \sum_n \sum_j \int_{I_j \setminus I_\Omega} |\mvec{v}|^2 f_n(t) \hat{\upsilon}_n(\gvec{\eta}) \hat{\vartheta}_\ell(\gvec{\zeta}) d\gvec{\eta}.
\end{align}
We note that the matrix on the left hand side is simply the mass matrix in configuration space, and since we have already canceled the configuration space volume factor, the matrix is simply the identity matrix.
However, we require a means to make the integrals on the reference element independent of our location in phase space, and so we perform a similar transform as done in Eqns.\thinspace(\ref{eq:orthonormalAlphaConfig}) and (\ref{eq:orthonormalAlphaVelocity}),
\begin{align}
    M_{0_\ell} & = \left( \frac{1}{2^{VDIM}} \prod_{i=1}^{VDIM} \Delta v_i \right )\sum_n \sum_j \int_{I_j \setminus I_\Omega} f_n(t) \hat{\upsilon}_n(\gvec{\eta}) \hat{\vartheta}_\ell(\gvec{\zeta}) d\gvec{\eta}, \label{eq:fullM0Calc} \\
    \mvec{M}_{1_\ell} & = \left( \frac{1}{2^{VDIM}} \prod_{i=1}^{VDIM} \Delta v_i \right )\sum_n \sum_j \int_{I_j \setminus I_\Omega} (\mvec{v} - \mvec{v}_{\textrm{center}}) f_n(t) \hat{\upsilon}_n(\gvec{\eta}) \hat{\vartheta}_\ell(\gvec{\zeta}) d\gvec{\eta} \notag \\
    & +  \left( \frac{1}{2^{VDIM}} \prod_{i=1}^{VDIM} \Delta v_i \right )\sum_n \sum_j \int_{I_j \setminus I_\Omega} \mvec{v}_{\textrm{center}} f_n(t) \hat{\upsilon}_n(\gvec{\eta}) \hat{\vartheta}_\ell(\gvec{\zeta}) d\gvec{\eta} \\
    M_{2_\ell} & = \left( \frac{1}{2^{VDIM}} \prod_{i=1}^{VDIM} \Delta v_i \right )\sum_n \sum_j \int_{I_j \setminus I_\Omega} |\mvec{v} - \mvec{v}_{\textrm{center}}|^2 f_n(t) \hat{\upsilon}_n(\gvec{\eta}) \hat{\vartheta}_\ell(\gvec{\zeta}) d\gvec{\eta} \notag \\
    & + \left( \frac{1}{2^{VDIM}} \prod_{i=1}^{VDIM} \Delta v_i \right )\sum_n \sum_j \int_{I_j \setminus I_\Omega} 2 \mvec{v}_{\textrm{center}} \cdot (\mvec{v} - \mvec{v}_{\textrm{center}}) f_n(t) \hat{\upsilon}_n(\gvec{\eta}) \hat{\vartheta}_\ell(\gvec{\zeta}) d\gvec{\eta} \notag \\
    & + \left( \frac{1}{2^{VDIM}} \prod_{i=1}^{VDIM} \Delta v_i \right )\sum_n \sum_j \int_{I_j \setminus I_\Omega} |\mvec{v}_{\textrm{center}}|^2 f_n(t) \hat{\upsilon}_n(\gvec{\eta}) \hat{\vartheta}_\ell(\gvec{\zeta}) d\gvec{\eta}, \label{eq:fullM2Calc}
\end{align}
which can be further simplified to,
\begin{align}
    \mvec{M}_{1_\ell} & = \mvec{v}_{\textrm{center}} M_{0_\ell} \notag \\
    & + \left( \frac{1}{2^{VDIM}} \prod_{i=1}^{VDIM} \Delta v_i \right )\sum_n \sum_j \int_{I_j \setminus I_\Omega} (\mvec{v} - \mvec{v}_{\textrm{center}}) f_n(t) \hat{\upsilon}_n(\gvec{\eta}) \hat{\vartheta}_\ell(\gvec{\zeta}) d\gvec{\eta}, \label{eq:M1Simplification} \\
    M_{2_\ell} & = 2 \mvec{M}_{1_\ell} \cdot \mvec{v}_{\textrm{center}} - |\mvec{v}_{\textrm{center}}|^2 M_{0_\ell} \notag \\
    & + \left( \frac{1}{2^{VDIM}} \prod_{i=1}^{VDIM} \Delta v_i \right )\sum_n \sum_j \int_{I_j \setminus I_\Omega} |\mvec{v} - \mvec{v}_{\textrm{center}}|^2 f_n(t) \hat{\upsilon}_n(\gvec{\eta}) \hat{\vartheta}_\ell(\gvec{\zeta}) d\gvec{\eta}. \label{eq:M2Simplification}
\end{align}
We can then generate a computational kernel to compute these coupling moments sequentially, and the needed quantities such as the current density can be computed from the results, e.g., via \eqr{\ref{eq:strongEqualityCurrent}}.
Using a 1X2V, one configuration space dimension and two velocity space dimensions, piecewise linear, tensor product basis again as an example, we show the results of a computer algebra system evaluating the integrals in Eqns.\thinspace(\ref{eq:fullM0Calc}--\ref{eq:fullM2Calc}), with the simplifications outlined in Eqns.\thinspace(\ref{eq:M1Simplification}) and (\ref{eq:M2Simplification}), in Figure~\ref{fig:1X2V-p1-Moment-Calc}.
\begin{figure}
    \centering
    \includegraphics[width=\textwidth]{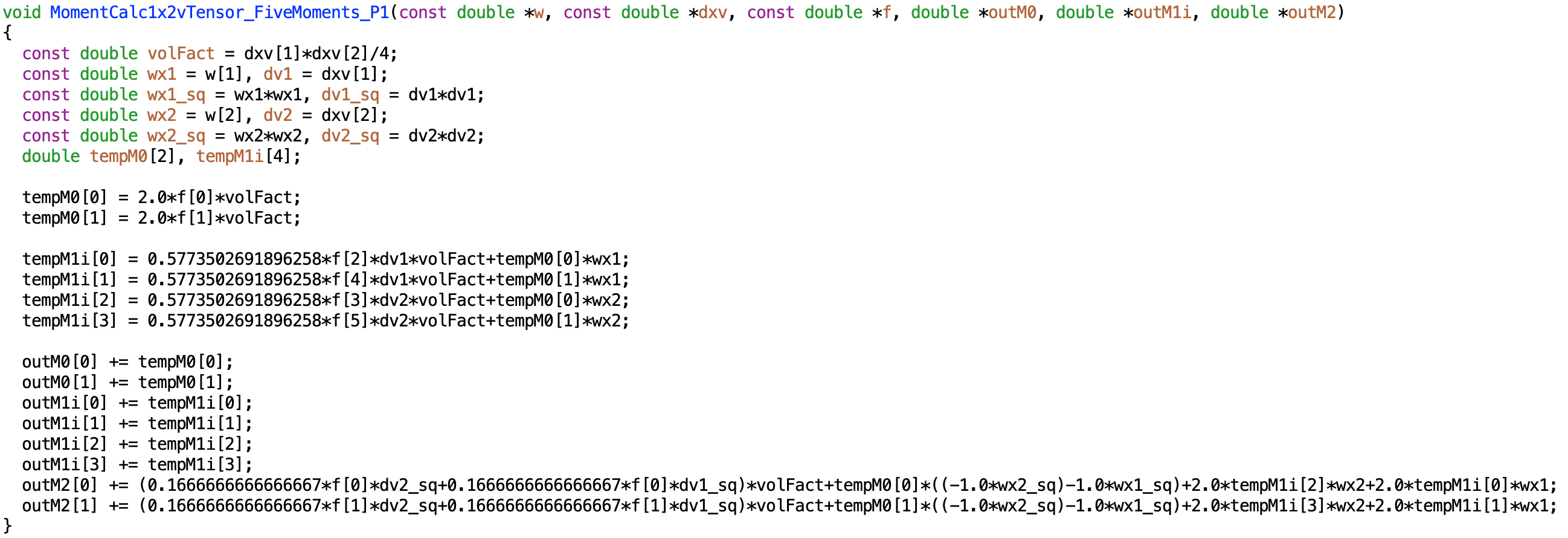}
    \caption{Example computational kernel for the calculation of the zeroth through second moments using weak equality in one spatial dimension and two velocity dimensions (1X2V) with piecewise linear, tensor product, modal, orthonormal polynomials. Note that this computational kernel is called inside a loop over velocity space for a given configuration space cell, as we are integrating over velocity space.}
    \label{fig:1X2V-p1-Moment-Calc}
\end{figure}
It is critical to note that the computational kernel in Figure~\ref{fig:1X2V-p1-Moment-Calc} is called for every velocity space cell associated with a given configuration space cell, i.e., these kernels form a reduction operation across velocity space, as expected since we are integrating over velocity space at a given configuration space cell.
The beauty of Eqns.\thinspace(\ref{eq:fullM0Calc}--\ref{eq:fullM2Calc}), with the simplifications outlined in Eqns.\thinspace(\ref{eq:M1Simplification}) and (\ref{eq:M2Simplification}), is that this same computational kernel can be called irrespective of our location in phase space, so long as we pass the correct cell center coordinate and local grid cell size.
Analogous to the updates for the Vlasov--Fokker--Planck equation and Maxwell's equations, the computation of the coupling moments is also free of both quadrature and matrix data structures.

We note in concluding this section that these procedures can be, and within \gke~are, extended to other diagnostic moments, for example the stress tensor and heat flux,
\begin{align}
    \overleftrightarrow{\mvec{S}}_h & \doteq \sum_j \int_{K_j\setminus \Omega_k} \mvec{v}\mvec{v} f_h \dv, \\
    \gvec{\mathcal{Q}}_h & \doteq \frac{1}{2} \sum_j \int_{K_j\setminus \Omega_k} |\mvec{v}|^2 \mvec{v} f_h \dv,
\end{align}
which can be rearranged similarly with the same variable manipulation as before, $\mvec{v} \rightarrow (\mvec{v} - \mvec{v}_{\textrm{center}}) + \mvec{v}_{\textrm{center}}$.
In general, the mathematical machinery of weak equality can be straightforwardly converted to linear equations which can be computed to determine the desired projection of some quantity, whether it is a velocity moment, a numerical flux function, or a more complicated constraint equation for quantities such as $\mvec{u}_h$ and $T_h$.
The components of the linear equation, the integrals over complex combinations of basis functions, can then be analytically evaluated using a computer algebra system such as Maxima \citep{maxima}, and with the help of the modal, orthonormal polynomial basis, significantly sparsified, reducing the number of operations required to evaluate and solve the linear equations.

Although we have focused on the components of the discretization which are both quadrature- and matrix-free, we should briefly discuss the parts of the discretization which are not necessarily matrix-free.
For example, the solution to the set of linear equations for the discrete flow and temperature, $\mvec{u}_h$ and $T_h$, e.g., Eqns.\thinspace \ref{eq:FPMomentumConstraint} and \ref{eq:FPEnergyWeakMomentP2} when using at least piecewise quadratic polynomials, is not matrix-free because of the coupling between the projections of $\mvec{u}_h$ and $T_h$ due to the boundary corrections from finite velocity space extents.
All the computational machinery we have outlined, i.e., the analytic evaluation of the integrals using a computer algebra system, is still the procedure for evaluating $\mvec{u}_h$ and $T_h$.
Now though, instead of completely unrolling the evaluation of the matrix equations and eliminating the need for a matrix data structure by evaluating every individual term in the linear equation, we construct the relevant matrix and invert the linear system to obtain our solution for $\mvec{u}_h$ and $T_h$.
In Figure~\ref{fig:u-T-calc} we show an example computational kernel to solve the coupled linear system for $\mvec{u}_h$ and $T_h$, using the Eigen linear algebra library \citep*{eigen}, in one configuration space and one velocity space dimension (1X1V) with piecewise quadratic Serendipity polynomials.
\begin{figure}[!htb]
    \centering
    \includegraphics[width=\textwidth]{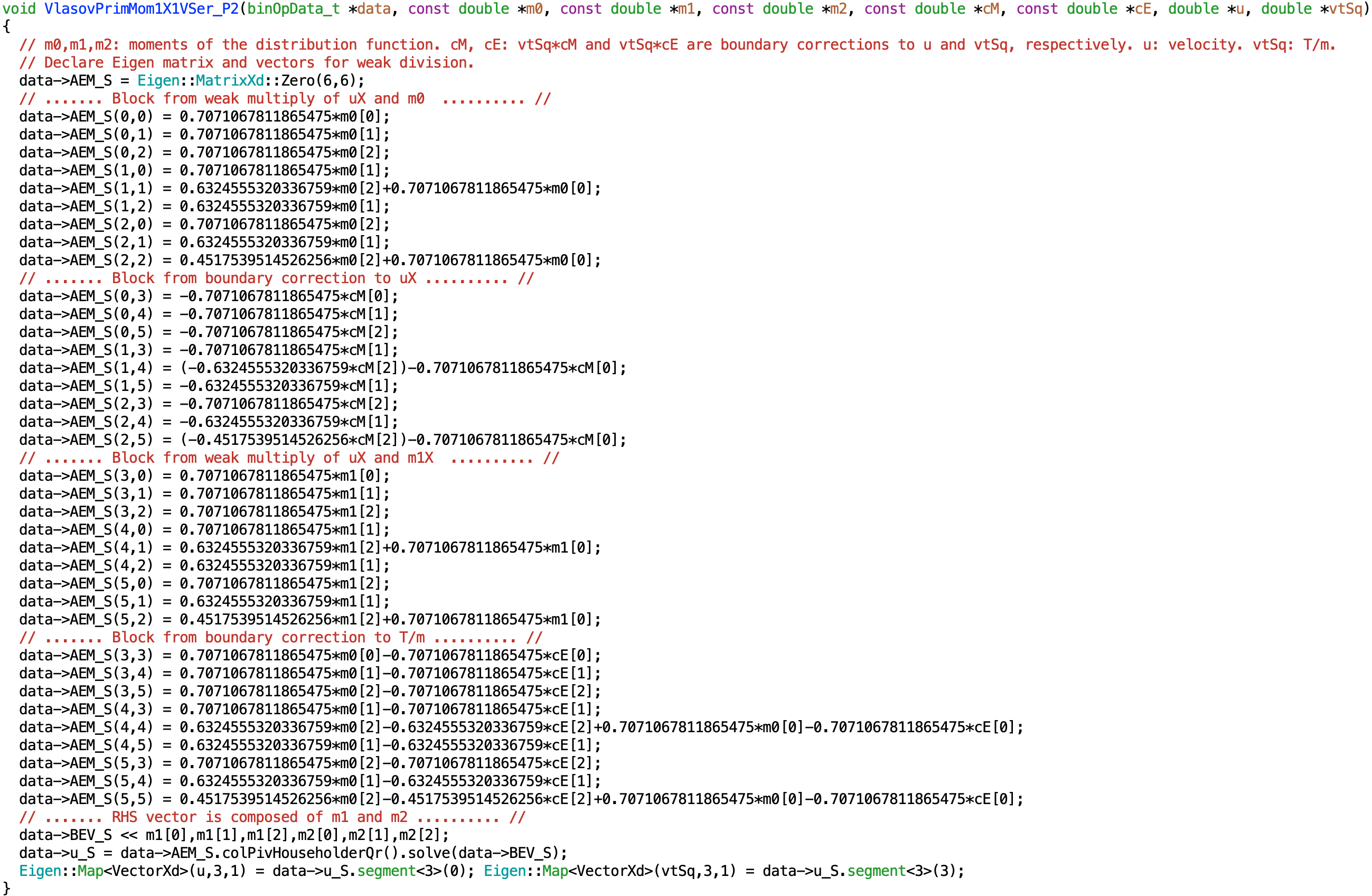}
    \caption{A C++ computational kernel for the construction and inversion of the matrix to solve the coupled linear system for the discrete flow and temperature, $\mvec{u}_h$ and $\mvec{T}_h$. Here, we show the form of the matrix in one spatial and one velocity dimension (1X1V) using piecewise quadratic Serendipity polynomials. Since both $u_h$ and $T_h$ have three degrees of freedom, i.e., three basis functions, which describe their projection, the coupled linear system is six by six. We construct the individual terms in the matrix using a combination of weak multiplication, weak division, and the corrections at the boundary due to our finite velocity space extents. We can then use a linear algebra library, in this case Eigen, to solve the linear system and determine the discrete flow and temperature required in the evaluation of the drag and diffusion coefficients in the discrete Fokker--Planck equation.}
    \label{fig:u-T-calc}
\end{figure}
Importantly, the fact that our basis is modal and orthonormal reduces the number of terms in the matrix we have to invert.
These computational kernels can then be called in every configuration space cell to calculate the local expansion of $\mvec{u}_h$ and $T_h$ required for the discretization of the Fokker--Planck equation.

Now that all the pieces of our discrete scheme are complete, including the means of computing the coupling moments between Maxwell's equations and the Vlasov--Fokker--Planck equation, the implementation of our discrete scheme is finished.
We turn now to the question of the computational complexity of our discrete scheme.
Although we expect the modal, orthonormal basis to have significantly decreased the cost of numerically integrating our DG discretization of the VM-FP system of equations, we require quantitative proof of this cost reduction.

\section{A Computational Complexity Experiment}\label{sec:computationalComplexity}

We know the choice of a modal, orthonormal polynomial basis leads to the tensors over which we need to sum, such as \eqr{\ref{eq:VlasovVolumeMatrix}}, being sparse, and we have evidence from the computational kernel presented in Figure~\ref{fig:1X2V-p1-vol-update} that the number of operations is indeed reduced compared to the use of numerical quadrature.
We would like to determine generally how sparse the tensors required to update our discrete VM-FP system of equations are.
In Figure~\ref{fig:algorithmscaling}, we plot the results of a numerical experiment using the computational kernels for updating the collisionless component of the VM-FP system of equations.
\begin{figure}[!htb]
    \centering
    \includegraphics[width=0.49\textwidth]{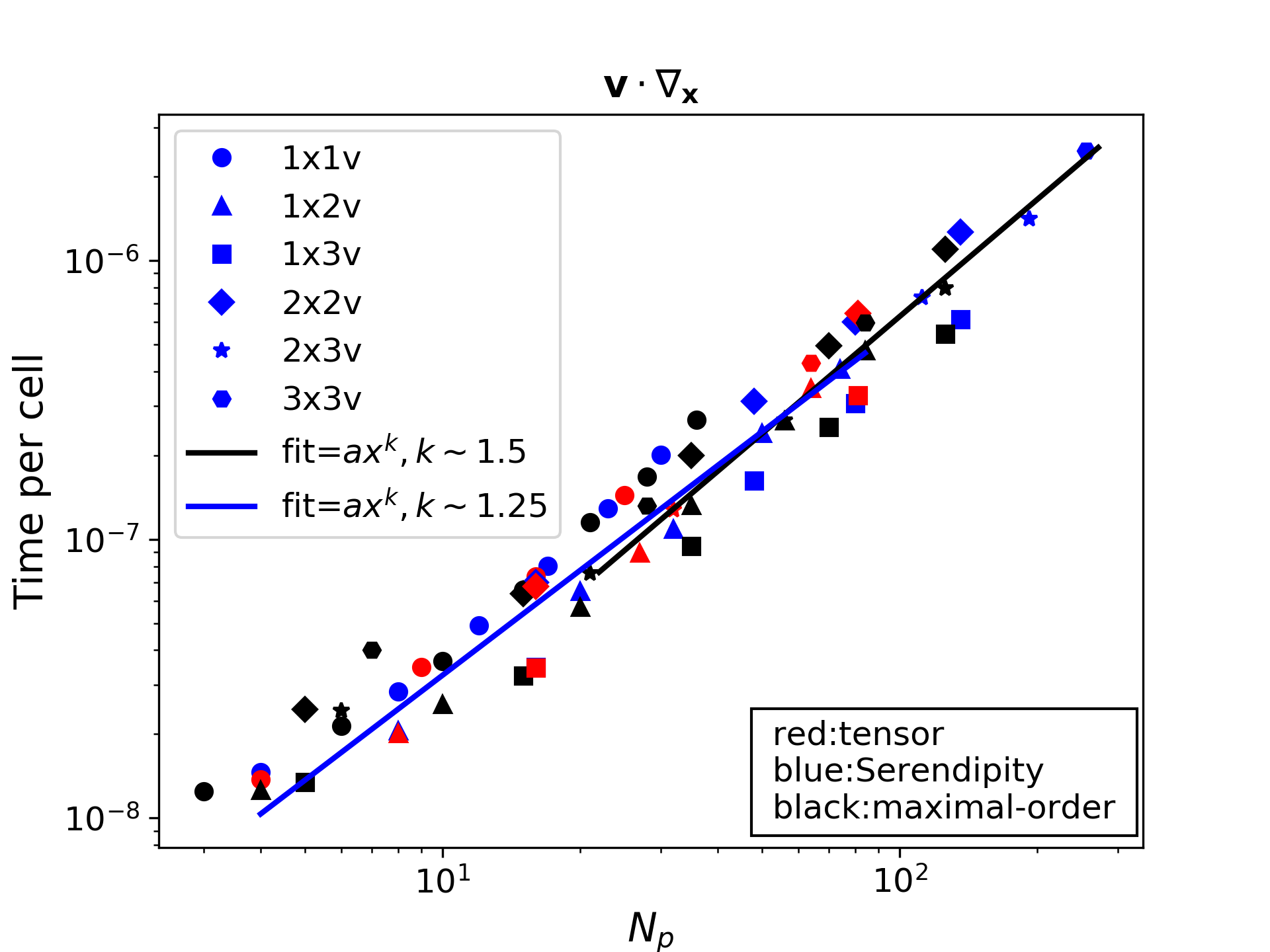}
    \includegraphics[width=0.49\textwidth]{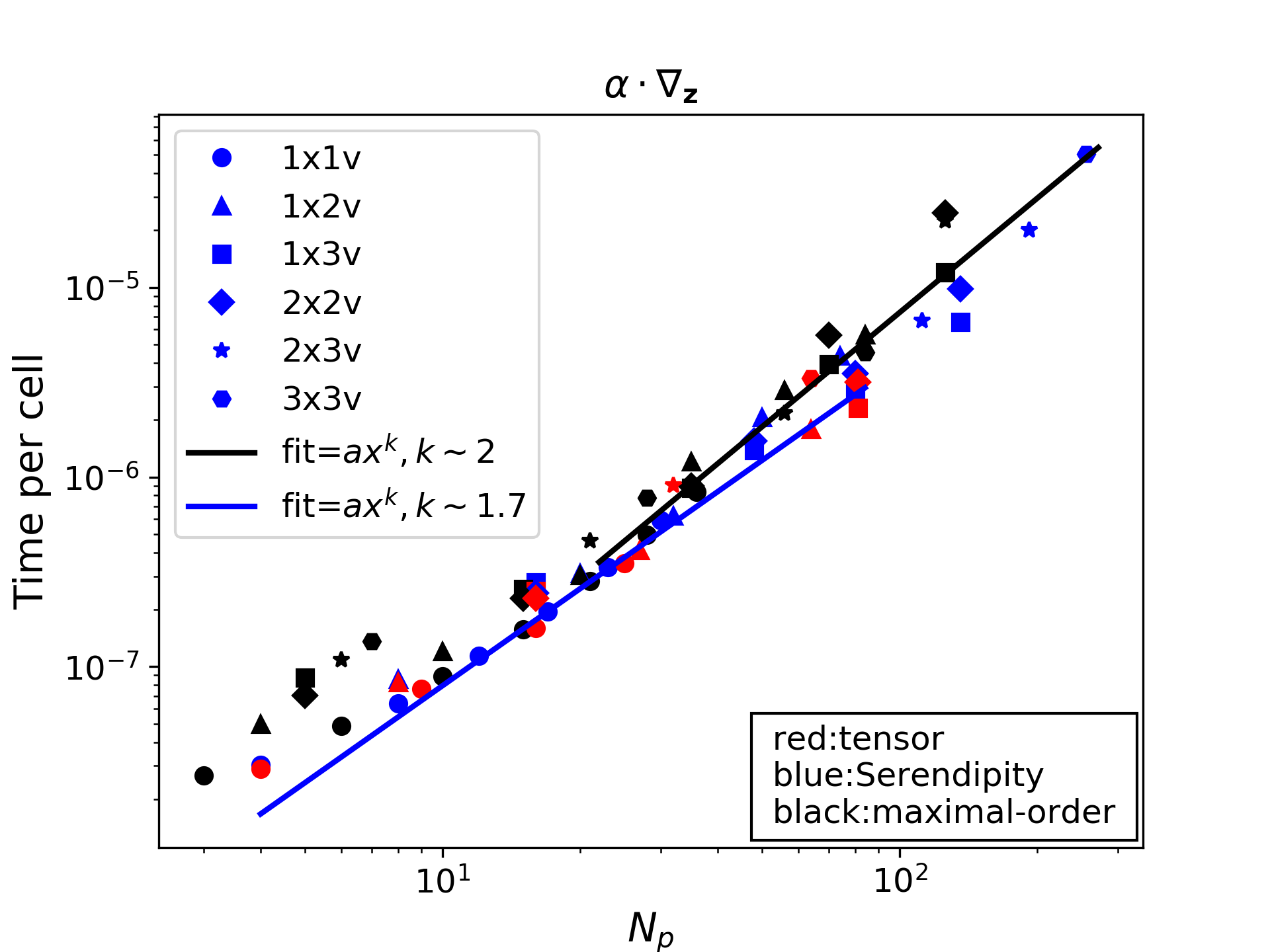}
    \caption{Scaling, i.e., the time to evaluate the update versus the number degrees of freedom, $N_p$, in a cell, of just the streaming term, $\gvec{\alpha}_h^x = \mvec{v}$, (left) and the total, streaming and acceleration, update (right) for the Vlasov solver. The dimensionality of the solve is denoted by the relevant marker, and the three colors correspond to three different basis expansions: black:maximal-order, blue:Serendipity, and red:tensor. Importantly, this is the scaling of the \emph{full} update, for every dimension, i.e., the 3x3v points include the six dimensional volume integral and all twelve five dimensional surface integrals. \label{fig:algorithmscaling}}
\end{figure}
We show the time to evaluate the computational kernels for just the streaming term, $\gvec{\alpha}_h^x = \mvec{v}$, in the left plot of Figure~\ref{fig:algorithmscaling}, and the evaluation of the full phase space update, streaming and acceleration, in the right plot.
From the scaling of the cost to evaluate these computational kernels we can determine the computational complexity of the algorithm with respect to the number of degrees of freedom per cell, i.e., the number of basis functions in our expansion, $N_p$.

It is immediately apparent that even with the steepening of the scaling as the number of degrees of freedom increases there is at least some gain over the use of direct quadrature to evaluate the integrals in the discrete weak form because, at worst, the total, streaming plus acceleration, update scales roughly as $\mathcal{O}(N_p^2)$.
In fact, this scaling of, at worst $\mathcal{O}(N_p^2)$, is exactly the scaling obtained by under-integrating the nonlinear term in a nodal basis, as mentioned in Section~\ref{sec:alias-free-scheme} \citep{Hesthaven:2007,Hindenlang:2012}.
But critically, we have obtained this computational complexity while eliminating aliasing errors from our scheme, as we require for stability and accuracy!
We can explicitly evaluate the gain compared to the anisotropic quadrature shown in Table~\ref{table:AnisotropicQuadratureTable}.
For example, for piecewise quadratic basis functions in six dimensions, the Serendipity space has 256 degrees of freedom in a cell but requires 1728 quadrature points to evaluate the nonlinear term

However, the improvement in the scaling is actually better than it first appears.
The scaling shown in Fig.~\ref{fig:algorithmscaling} is the cost scaling of the full update to perform a forward Euler step in a phase space cell, i.e., in six dimensions, three spatial and three velocity, the total update time in the right plot of Fig.~\ref{fig:algorithmscaling} is the time to compute the six dimensional volume integral plus the twelve required five dimensional surface integrals. 
This means the scaling we are quoting is irrespective of the dimensionality of the problem, unlike in the case of the nodal basis, where the quadrature must be performed for every integral and there is a hidden dimensionality factor in the scaling.
In other words, in six dimensions, what at first may only seem like a factor of $\sim 7$ improvement moving from a nodal to an orthonormal, modal representation is in fact a factor of $\sim 40$ improvement in the scaling once one includes the dimensionality factor, up to the constant of proportionality of the scaling. 
Of course, one must also compare the size of the constant of proportionality multiplying both scalings to accurately compare the reduction in the number of operations and improvement in the overall performance, since said constant of proportionality can either tell us the picture is much rosier, that in fact the improvement in performance is larger than we expected, or much more dire, that the improvement in the scaling is offset by a larger constant of proportionality.

To determine the constant of proportionality, we will perform a more thorough numerical experiment and compare the cost of the alias-free nodal scheme and alias-free modal scheme for a complete collisionless Vlasov--Maxwell simulation.
We consider the following test: a 2X3V computation done with both the nodal and the modal algorithms, with a detailed timing breakdown of the most important step of the algorithm, the Vlasov time step. 
The reader is referred Table \ref{table:VlasovSummaryTable} for a summary of the following two paragraphs if they wish to skip the details of the computer architecture and optimizations employed.
Both computations are performed in serial on a Macbook Pro with an \textbf{Intel Core i7-4850HQ (``Crystal Well'')} chip, the same architecture on which the scaling analysis was performed. 
The only optimization in the compilation of both algorithms is \textbf{``O3''} and both versions of the code are compiled with the C++ \textbf{Clang 9.1} compiler. 

Specific details of the computations are as follows: a $16^2 \times 16^3$ grid, with polynomial order two, and the Serendipity basis, 112 degrees of freedom per cell. 
The two simulations were run for a number of time-steps to allow us to more accurately compute the time per step of just the Vlasov solver, as well as the time per step of the complete simulation. 
The time-stepper of choice for this numerical experiment is the three-stage, third order, SSP-RK method, \eqr{\ref{eq:SSPRK3}}.
To make the simulations as realistic as possible in terms of memory movement, we also evolve a ``proton'' and ``electron'' distribution function, i.e., we evolve the Vlasov--Maxwell system of equations for two plasma species.

To make the comparison as favorable as possible for the nodal algorithm, we also employ the Eigen linear algebra library, \textbf{Eigen 3.3.4} \citep*{eigen}, to perform the dense matrix-vector multiplies required to evaluate the higher order quadrature needed to eliminate aliasing errors in the nodal DG discretization. 
And we note that the nodal algorithm is optimized to use only the surface basis functions in the surface integral evaluations, so we are doing as much as possible to reduce the cost of the alias-free nodal scheme.

The results are as follows: 
for the \emph{nodal} basis, the computation required \textbf{1079.63} seconds per time step, of which \textbf{1033.89} seconds were spent solving the Vlasov equation. 
The remaining time is split between the computation of Maxwell's equations, the computation of the current from the first velocity moment of the distribution function to couple the particles and the fields, and the accumulation of each Runge-Kutta stage from our three stage Runge-Kutta method. 
For the \emph{modal} basis, the computation required \textbf{67.4312} seconds per time step, of which \textbf{60.3431} seconds were spent solving the Vlasov equation.

In the nodal case, we emphasize that we achieve a reasonable CPU efficiency, and the nodal timings are not a matter of poor implementation.
We estimate the number of multiplications in the alias-free nodal algorithm required to perform a full time-step is $\sim 3e12$, three trillion, once one considers the fact that we are evolving two distribution functions with a three-stage Runge--Kutta method.
One thousand seconds to perform three trillion multiplications corresponds to an efficiency of $\sim 3e9$ flops per second (3 GFlops/s).
This estimate is within 50 percent of the measured efficiencies of Eigen's matrix-vector multiplication routines for \textbf{Eigen 3.3.4} on a similar CPU architecture to the one employed for this test \citep*{eigen}, so we argue that the cost of the alias-free nodal algorithm is due to the number of operations required and not an inefficient implementation of the algorithm.

It is then worth discussing how this improvement in the timings using the modal algorithm compares with our expectations.
Given the scaling of the modal basis, we would anticipate the gain in efficiency in five dimensions would be around a factor of twenty, a factor of four from the reduction in the scaling from $\mathcal{O}(N_q N_p)$ to $\mathcal{O}(N_p^2)$, and a factor of five from the latter scaling containing all of the five dimensional volume integrals and the ten four dimensional surface integrals.
We can see that the gain in just the Vlasov solver is $\sim 17$, while the gain in the overall time per step is $\sim 16$, not quite as much as we would naively expect, but still a sizable increase in the speed of the Vlasov solver. 
The reduction in the overall time is due to the fact that, while the time to solve Maxwell's equations and compute the currents to couple the Vlasov equation and Maxwell's equations is reduced, these other two costs, in addition to the cost to accumulate each Runge-Kutta stage, is not reduced as dramatically as the time to solve the Vlasov equation is.
\begin{table}[!htb]
\begin{center}
\begin{tabular}{| l | l | l |}
\hline
\textbf{Computer} & \textbf{Architecture} & \textbf{Compiler} \\ \hline
MacBook Pro & Intel Core i7-4850HQ  & Clang 9.1 C++ \\
(High Sierra OS) & (``Crystal Well'') & \\ \hline
\textbf{Optimization Flags} & \textbf{Grid Size} & \textbf{Polynomial Order}  \\ \hline
``O3,'' & $16^2 \times 16^3$ & Serendipity quadratic, \\ 
Eigen 3.3.4 for nodal & & 112 degrees of freedom \\\hline
\textbf{Nodal Total Time} & \textbf{Modal Total Time} & \textbf{Total Time Reduction} \\ \hline
\textbf{1079.63} $\frac{\textrm{seconds}}{\textrm{time-step}}$ & \textbf{67.4312} $\frac{\textrm{seconds}}{\textrm{time-step}}$ & $ \sim 16 $ \\ \hline
\textbf{Nodal Vlasov Time} & \textbf{Modal Vlasov Time} & \textbf{Vlasov Time Reduction} \\ \hline
\textbf{1033.89} $\frac{\textrm{seconds}}{\textrm{time-step}}$ & \textbf{60.3431} $\frac{\textrm{seconds}}{\textrm{time-step}}$ & $ \sim 17 $ \\ \hline
\end{tabular}
\caption{Summary of the parameters for the numerical experiment to compare the full cost of an alias-free nodal and orthonormal, modal algorithm.}
\label{table:VlasovSummaryTable}
\end{center}
\end{table}
Again, the details of this comparison are summarized in Table \ref{table:VlasovSummaryTable}.

So, we have achieved our goal of respecting the requirement that our DG method for the VM-FP system of equations be alias-free, while measurably reducing the cost to attain the computational complexity of other common DG schemes which tolerate or simply attempt to control aliasing errors.
Because there were ultimately many pieces to the evaluation of the DG method for the VM-FP system of equations, we summarize in the next section the complete algorithm for computing the spatial discretization and taking a forward Euler time-step.

\section{Summary of the Algorithm}

The focus of this chapter has been principally on the evaluation of the linear operator in \eqr{\ref{eq:linearOpKinetic}} which goes into a forward Euler time-step, \eqr{\ref{eq:forwardEuler}}.
We summarize now all the steps in the evaluation of this linear operator, for the discrete Vlasov--Fokker--Planck equation and Maxwell's equations, so that we can perform a forward Euler time-step.
\begin{enumerate}
    \item Loop over configuration space cells, and for each configuration space cell, compute the needed coupling moments from the distribution functions for each species at the old time-step, $f_h^n$, where superscript $n$ denotes the known time-step.
    \begin{itemize}
        \item Within each configuration space cell, loop over velocity space to compute velocity moments using computational kernels, such as the 1X2V kernel shown in Figure~\ref{fig:1X2V-p1-Moment-Calc}. These kernels will give $M^n_{0_h}, \mvec{M}^n_{1_h}$ and $M^n_{2_h}$, Eqns.\thinspace(\ref{eq:fullM0Calc}--\ref{eq:fullM2Calc}).
        \item Calculate the current density from $\mvec{M}^n_{1_h}$ for each plasma species,
        \begin{align*}
            \mvec{J}^n_{h} = \sum_s q_s \mvec{M}^n_{1_{h_s}}.
        \end{align*}
        \item Calculate the discrete flow and temperature, $\mvec{u}^n_h$ and $T_h^n$, from $M^n_{0_h}, \mvec{M}^n_{1_h}$ and $M^n_{2_h}$, as well as the boundary corrections in velocity space, using computational kernels such as the one shown in Figure~\ref{fig:u-T-calc} for a 1X1V, polynomial order two, simulation. Note that if using piecewise linear polynomials, we require the additional ``star moments'' in the computation of $\mvec{u}^n_h$ and $T_h^n$, Eqns.\thinspace(\ref{eq:M0Star}--\ref{eq:M2Star}).
    \end{itemize}
    \item Loop over configuration space cells and update the electromagnetic fields, $\mvec{E}_h^n, \mvec{B}_h^n$, forward in time.
    \begin{itemize}
        \item Project the chosen numerical flux function for the electric and magnetic fields, central fluxes, Eqns.\thinspace(\ref{eq:centralE})-(\ref{eq:centralB}), or upwind fluxes, Eqns.\thinspace(\ref{eq:r-e2})-(\ref{eq:r-b3}), onto the modal, orthonormal configuration space basis expansion.
        \item Evaluate the volume and surface integrals using the corresponding computational kernels, analogous to the volume and surface tensors for the collisionless Vlasov equation, Eqns.\thinspace(\ref{eq:VlasovSurfaceMatrix}) and (\ref{eq:VlasovVolumeMatrix}), but note that these computational kernels only involve the configuration space basis expansion. After evaluation of the volume and surface integrals, increment the electromagnetic fields with this contribution multiplied by the size of the time-step $\Delta t$,
        \begin{align*}
            \mvec{E}_h^{n+1} = \mvec{E}_h^n + \Delta t \mathcal{L}_{EM}(\mvec{E}_h^n, \mvec{B}_h^n),
        \end{align*}
        and likewise for the magnetic field.
        \item Increment the current density at the known time-step onto the electric field,
        \begin{align}
            \mvec{E}_h^{n+1} = \mvec{E}_h^n + \frac{\Delta t}{\epsilon_0} \mvec{J}_h^n.
        \end{align}
    \end{itemize}
    \item Loop over phase space cells and update the particle distribution function for each species, $f_h^n$, forward in time.
    \begin{itemize}
        \item Project the chosen numerical flux functions for both the collisionless advection and the drag term in the Fokker--Planck equation, e.g., central fluxes, \eqr{\ref{eq:simpleCentralVlasov}}, or global Lax-Friedrichs fluxes, \eqr{\ref{eq:simpleGlobalLFVlasov}}, for the collisionless advection and central fluxes, \eqr{\ref{eq:dragCentral}}, or global Lax-Friedrichs fluxes, \eqr{\ref{eq:dragGlobalLF}}, for the drag term in the Fokker--Planck equation.
        \item Determine the recovered distribution function from the general recovery procedure described in Section~\ref{sec:recoveryHigherDimensions}, i.e., recover a continuous function across the interface from the distribution function in the two neighboring cells, while retaining the phase-space dependence of the distribution function representation on the surface. Compute the value and gradient of the recovered distribution function at the surface, and add these contributions to the numerical flux functions for computing the surface integral contributions to the discrete Fokker--Planck equation.
        \item Evaluate the volume and surface integrals in the DG discretization of the Vlasov--Fokker--Planck equation, e.g., the volume kernel in Figure~\ref{fig:1X2V-p1-vol-update} for a piecewise linear, 1X2V, simulation, and increment these contributions multiplied by the size of the time-step $\Delta t$ onto the old values of the particle distribution function,
        \begin{align*}
            f_h^{n+1} = f_h^n + \Delta t \mathcal{L}_{VFP}(f_h^n, \mvec{E}_h^n, \mvec{B}_h^n, \mvec{u}^n_h, T^n_h).
        \end{align*}
        \item Repeat each calculation, the flux function project, the recovery procedure, and the evaluation of volume and surface integrals, for each species in the plasma.
    \end{itemize}
\end{enumerate}

The above steps form the core of forward Euler time-step, which can then be combined into a multi-stage Runge--Kutta method, such as our preferred three-stage, third order, SSP-RK3 scheme, \eqr{\ref{eq:SSPRK3}}.
Note that for computing the size of the time-step, while the CFL condition for Maxwell's equation at each stage will remain fixed since the speed of light is a constant, we can evaluate the CFL constraint for the Vlasov--Fokker--Planck equation at each stage.
The general structure of this forward Euler method is unchanged, even if we modify components of the update, for example applying the recovery procedure for the update of the advective terms such as the collisionless update of the Vlasov equation.
However, we could modify this update to separate the collisionless and collision operators if an operator split would provide a more favorable time-stepping scheme.
For example, as the collisionality increases and the collision operator becomes the more restrictive component of taking a time-step, standard operator splits that employ Runge--Kutta-Legendre multi-stage methods for advection-diffusion equations are an option \citep{Meyer:2014}.

So, we have formulated and implemented a Runge--Kutta discontinuous Galerkin discretization of the Vlasov--Maxwell--Fokker--Planck system of equations---a sizable effort!
But now we turn to the equally important question: does the code give the right answer?
In the next chapter, Chapter~\ref{ch:Benchmarks}, we will pursue an extensive benchmarking endeavor to determine the validity of our numerical method.

\renewcommand{\thechapter}{4}
\epigraph{Some of the material in this chapter has been adapted from \citet{Juno:2018}, \citet*{Hakim:2019}, and \citet{HakimJuno:2020}.}{}
\chapter{Benchmarking our DG Vlasov--Maxwell--Fokker--Planck Solver in \gke}\label{ch:Benchmarks}

We will proceed on three different fronts to determine the validity of our implemented DG scheme for the VM-FP system of equations. 
First, we will examine just the Vlasov--Fokker--Planck equation, in the absence of electromagnetic fields.
Then, we will benchmark the collisionless Vlasov--Maxwell system of equations, with special focus on self-consistent simulations including the feedback between the plasma and the electromagnetic fields.
Finally, we will bring it all together for a benchmark of the complete equation system, a validation of the VM-FP system of equations in their entirety.

We reiterate a few definitions for convenience here.
We will make use of the Maxwellian velocity distribution as a common initial condition,
\begin{align}
    f_s(\mvec{x}, \mvec{v}, t=0) = n_s(\mvec{x}) \left (\frac{m_s}{2 \pi T_s(\mvec{x})} \right )^{\frac{VDIM}{2}} \exp \left (-m_s \frac{|\mvec{v} - \mvec{u}_s(\mvec{x})|^2}{2 T_s(\mvec{x})} \right ), \label{eq:ICMaxwellian}
\end{align}
where $VDIM$ is the number of velocity dimensions.
We note that we will have to project \eqr{\ref{eq:ICMaxwellian}} onto our basis expansion at the start of any simulation.
Although this distribution function defines local thermodynamic equilibrium, as we discussed in Corollary~\ref{coro:HTheorem} in Chapter~\ref{ch:Introduction} and in Appendix~\ref{app:proofsContinuous}, the Maxwellian velocity distribution might have some configuration space dependence that is unstable to perturbations.
The system will then rearrange itself to a different energy state in a collisionless system, and to a higher entropy state in the presence of collisions.
\eqr{\ref{eq:ICMaxwellian}} is thus often a convenient initial condition, though we will make clear when we employ different initial plasma distributions.
We will also use consistently the definition of the thermal velocity $v_{th_s} = \sqrt{T_s/m_s}$, especially to define the extents in velocity space.

Although we will reiterate many of the specifics for every benchmark, we note here a few details which will be unchanged throughout our benchmarks.
We will consistently use the Serendipity element space for our polynomial basis as an optimal middle ground of cost and accuracy between the tensor product basis and the maximal order basis.
As an optimization of the computation and memory required in a multi-stage method, and accuracy of the time integration, we will also employ the three stage, third order, SSP-RK3 method for the time integration of all benchmarks presented.
Importantly, we will use the same numerical flux functions for all the presented benchmarks, upwinding, \eqr{\ref{eq:simpleUpwindVlasov}} for $\gvec{\alpha}^x = \mvec{v}$, the streaming term, and global Lax-Friedrichs for both the acceleration $\gvec{\alpha}^v = q/m \thinspace (\mvec{E}_h + \mvec{v} \times \mvec{B}_h)$, \eqr{\ref{eq:simpleGlobalLFVlasov}}, and the drag term, \eqr{\ref{eq:dragGlobalLF}}.
Finally, we will uniformly use zero-flux boundary conditions in velocity space, with the additional boundary term we must evaluate in the Fokker--Planck operator due to integrating by parts twice, \eqr{\ref{eq:additionalFPBoundaryTerm}}, so as to retain the proved conservation properties in Chapter~\ref{ch:DGFEM}. 
When we refer to zero flux boundary conditions in velocity space in all of the forthcoming boundary conditions, and when numerically integrating the discrete Fokker--Planck equation in Sections~\ref{sec:benchmarkVFP} and \ref{sec:benchmarkFullVMFP}, we are implicitly also taking into account this additional boundary condition in \eqr{\ref{eq:additionalFPBoundaryTerm}}.

\section{Benchmarks of the Vlasov--Fokker--Planck Equation}\label{sec:benchmarkVFP}

\subsection{Collisional Relaxation to a Discrete Maxwellian}

In the absence of streaming and body forces, any initial distribution function should relax to a Maxwellian.
Although we did not demonstrate this to be the case via analytic examination of our discretization of the Fokker--Planck equation, we now consider a numerical demonstration of a discrete analog to the H-theorem proved in Corollary~\ref{coro:HTheorem} in Chapter~\ref{ch:Introduction}.
Importantly, a proper implementation of the discrete Fokker--Planck equation has a maximum entropy state, which \textit{by definition} is the discrete Maxwellian.
However, such a discrete Maxwellian is not necessarily the projection of \eqr{\ref{eq:ICMaxwellian}} onto basis functions, as \eqr{\ref{eq:ICMaxwellian}} is a continuous function defined on all of velocity space, $\mvec{v} \in (-\infty, \infty)$, and we are employing finite velocity space extents.
Nevertheless, these two quantities, the projection of \eqr{\ref{eq:ICMaxwellian}} and the maximum entropy state of our discrete Fokker--Planck operator will converge towards each other as the grid is refined.

In this first test, the relaxation of an initial non-Maxwellian distribution function to a discrete Maxwellian, due to collisions, is studied.
We will avoid the use of a species index in this test since the electromagnetic fields are zero, and we are only studying the effects of the collision operator.
The initial distribution function is a step-function in velocity space,
\begin{align}
f_0(x,v,t=0) = \begin{cases}
1/(2v_0) &\qquad |v|<v_0 \\
0 &\qquad |v| \geq v_0,
\end{cases}    
\end{align}
where $v_0 = \sqrt{3} v_{th}$. 
Piecewise linear and quadratic Serendipity basis sets on 16 and 8 velocity space cells, respectively, are used.
Note that there is no variation in configuration space in this problem, so only one configuration space cell is required.
Velocity space extents, ($v_{\min},v_{\max}$), are placed at $\pm6 v_{th}$ the simulation is run to $\nu t = 5$, five collisional periods, and zero flux boundary conditions are used in velocity space. 

In each case, the relative change in density and energy are close to machine precision, demonstrating excellent conservation properties of the scheme. 
In Figure~\ref{fig:sq-relax-cons}, the time-history of the error in normalized energy change is plotted.
The errors per time-step of the conservative scheme are machine precision, and the small change in energy is due to the numerical diffusion inherent to the SPP-RK3 scheme. 
\begin{figure}[!htb]
    \centering
    \includegraphics[width=0.49\textwidth]{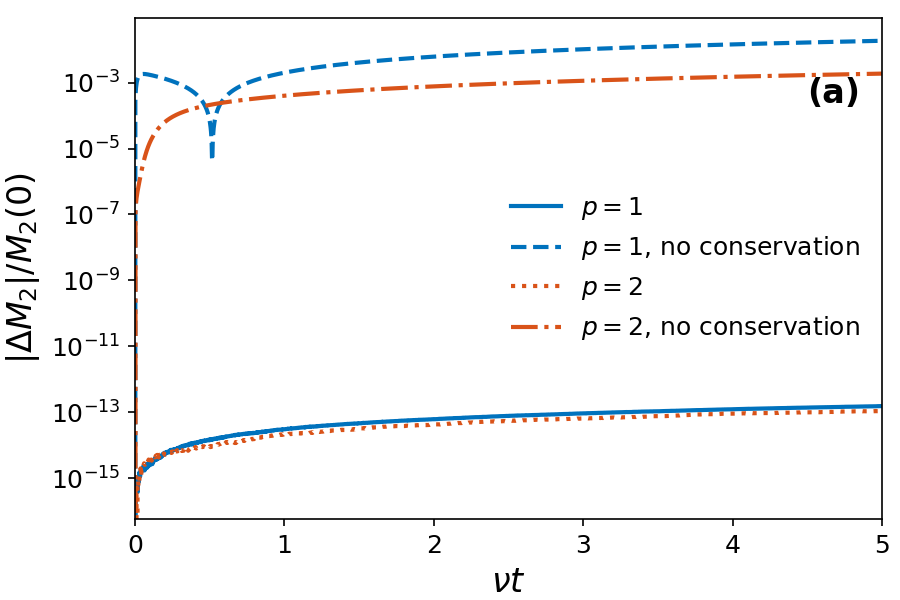}
    \includegraphics[width=0.49\textwidth]{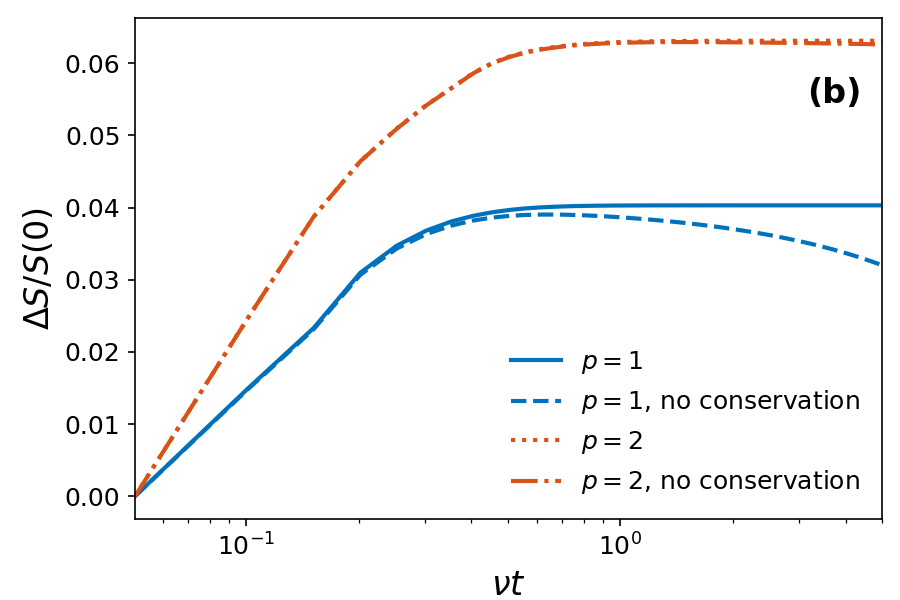}
    \caption{(a) Relative change in energy, $\Delta M_2/M_2 (t=0) = [M_2(t) - M_2 (t=0)]/M_2(t=0)$, for $p=1$, $N=16$ (solid and dashed blue) and $p=2$, $N=8$ (dotted and dash-dot orange) cases for relaxation of a square distribution to a discrete Maxwellian. The decrease in energy in our conservative scheme is close to machine precision. The curves labeled `no conservation' omit the boundary correction terms and use regular moments instead of ``star moments'' (for $p=1$) needed for momentum and energy conservation. (b) Time-history of relative change in entropy. When using the conservative scheme, the entropy rapidly increases and remains constant once the distribution function becomes a discrete Maxwellian.}
    \label{fig:sq-relax-cons}
\end{figure}
For fixed time-step size, changing resolution or polynomial order has little impact on the magnitude of energy errors, and they always remain close to machine precision. 

Figure~\ref{fig:sq-relax-cons} also shows that as the distribution function relaxes, the entropy rapidly increases and then remains constant once the discrete Maxwellian state is obtained.
The change in entropy between $p=1$ and $p=2$ is indicative that different discrete Maxwellians will be obtained depending on grid resolution and polynomial order. 
The same figure shows that neglecting the boundary corrections and ``star moments'' (for $p=1$) needed for conservation degrade energy conservation by many orders of magnitude, and in the $p=1$ case, can even lead to decreasing entropy.
In fact, the violation of the second law of thermodynamics when neglecting the boundary corrections to the drag and diffusion coefficients provides solid evidence that the care taken in accounting for the finite velocity space extents in formulating the scheme in Chapter~\ref{ch:DGFEM} produces a more reliable scheme for the physics content of the equation system.
Note that this is not a good test for momentum conservation, because the initial momentum is zero.

We now consider relaxation in a 1X2V setting. 
For this test, the initial condition is selected as a sum of two Maxwellians, the first with drift velocity $\mvec{u} = (3 v_{th},0)$ and the second with drift velocity $\mvec{u} = (0,3v_{th})$. 
Both Maxwellians have a thermal speed of $v_{th} = 1/2$. 
A $16^2$ grid in velocity space with $p=2$ Serendipity basis functions is used.
Again, there is no variation in configuration space in this problem, so only one configuration space cell is required.

As the particles collide, the distribution function will relax to a new Maxwellian with non-zero drift and different temperature, thus allowing us to test momentum conservation. 
The simulation is run to $\nu t= 5$, five collisional periods. 
Figure~\ref{fig:bi-relax-cons} shows the initial and final distribution function demonstrating the relaxation to the discrete Maxwellian. 
The errors in the energy and the $x$- and $y$-components of momentum are close to machine precision for our conservative scheme, as shown in panel (c). 
Neglecting boundary correction terms degrades conservation by many orders of magnitude. 
Also, panel (d) demonstrates that the entropy increases monotonically, reaching its steady-state value once the discrete Maxwellian is obtained.
\begin{figure}[!htb]
    \centering
    \includegraphics[width=0.9\textwidth]{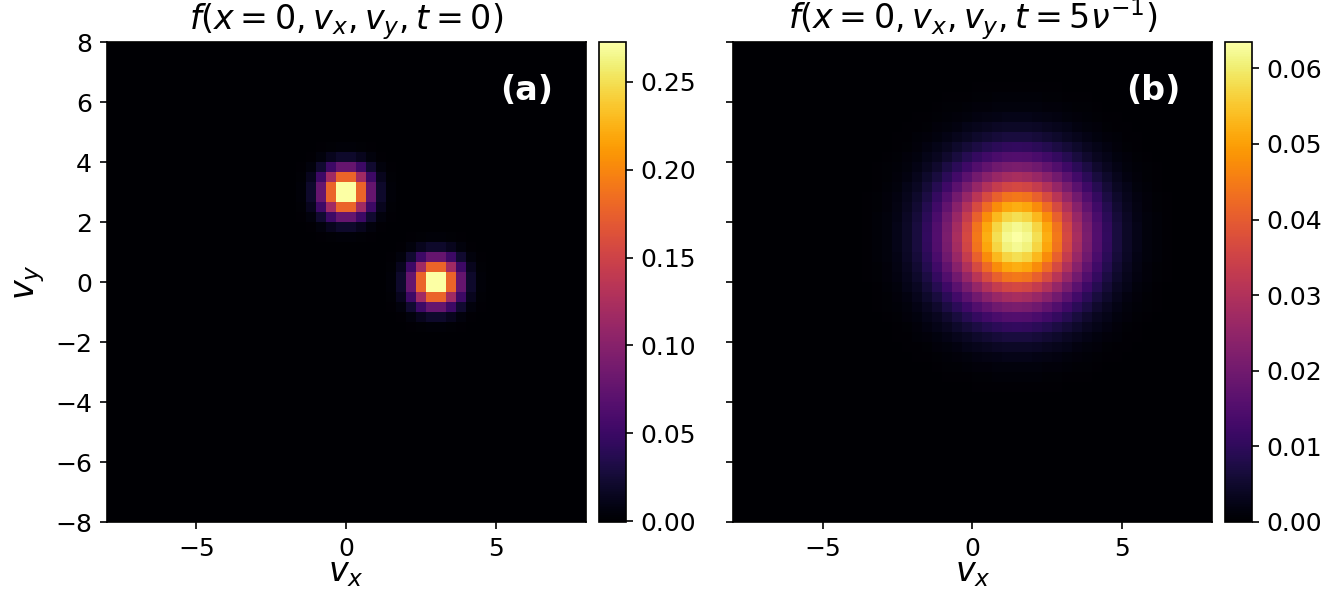}
    \includegraphics[width=0.49\textwidth]{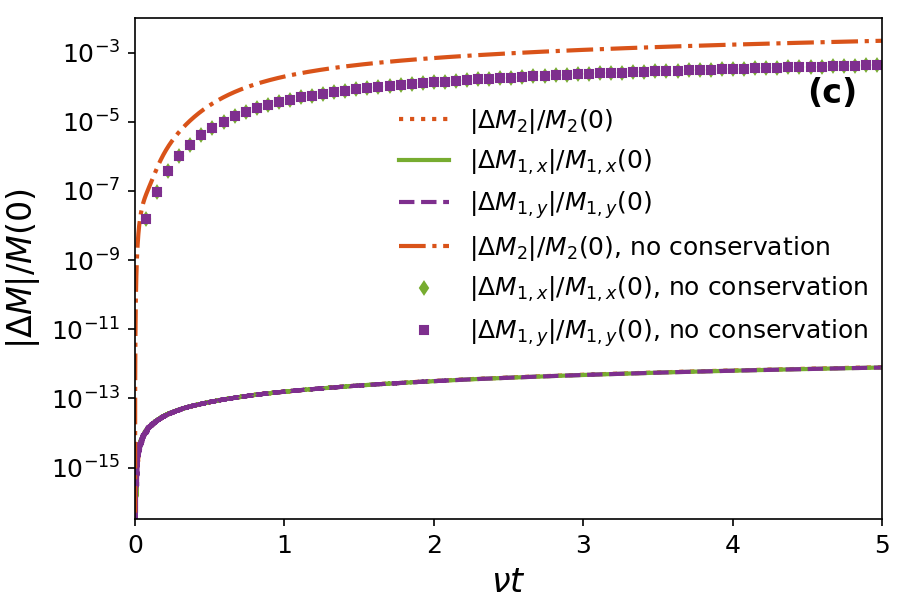}
    \includegraphics[width=0.49\textwidth]{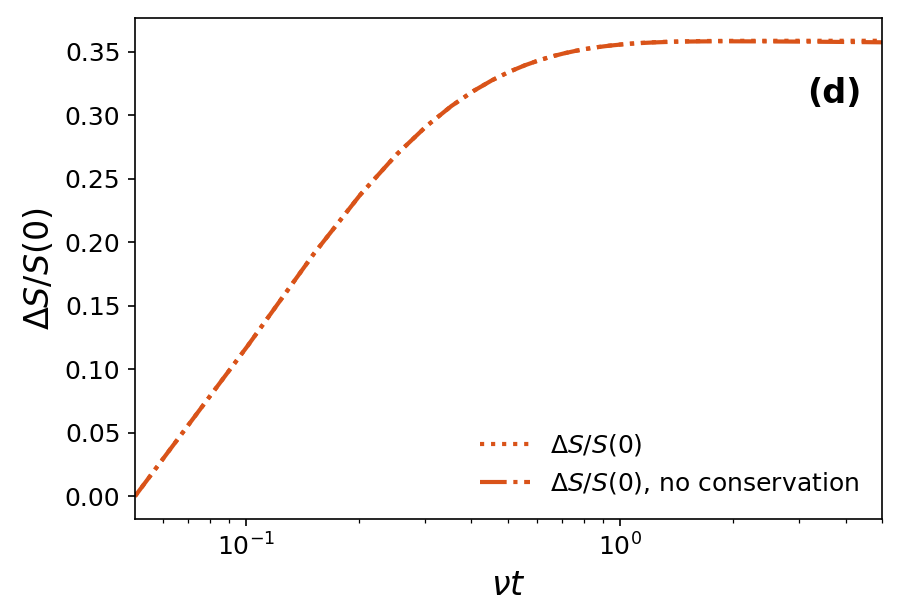}
    \caption{The initial (a), relaxed (b) distribution function in a 1X2V relaxation test. Conservation (c) of energy (orange) and momentum (green, purple) is at machine precision for our conservative scheme. Neglecting boundary corrections breaks conservation by more than 8 orders of magnitude. Purple and green curves overlay each other on this scale. (d) The entropy increases rapidly and then remains constant once the discrete Maxwellian is obtained.}
    \label{fig:bi-relax-cons}
\end{figure}
These tests demonstrate the high accuracy with which the moments are conserved as well as providing empirical evidence that entropy is a non-decreasing function of time, so long as we are careful to include the corrections in the computation of the moments and additional boundary condition which arise from solving the Vlasov--Fokker--Planck equation on a finite velocity grid.

\subsection{Kinetic Sod-Shock}\label{sec:KineticSodShock}

We now add in the streaming of particles in configuration space, $\gvec{\alpha}^x_h = \mvec{v}$, while keeping the electromagnetic fields zero, to test the accuracy of our DG Vlasov--Fokker--Planck equation in the presence of spatial gradients.
In this benchmark, we study shock structure in the kinetic regime with the classic Sod-shock \citep{Sod:1978} initial conditions in one spatial dimension and one velocity dimension (1X1V),
\begin{align}
  \left[
    \begin{matrix}
      \rho_l \\
      u_l \\
      p_l
    \end{matrix}
  \right]
  = 
  \left[
    \begin{matrix}
      1 \\
      0.0 \\
      1.0
    \end{matrix}
  \right],
  \qquad
  \left[
    \begin{matrix}
      \rho_r \\
      u_r \\
      p_r
    \end{matrix}
  \right]
  = 
  \left[
    \begin{matrix}
      0.125 \\
      0.0 \\
      0.1
    \end{matrix}
  \right],
\end{align}
where this mass density, flow, and pressure are used to initialize the Maxwellian velocity distribution defined in \eqr{\ref{eq:ICMaxwellian}} on the left, $l$, and right, $r$, sides of the domain.
The phase space domain is $[0,L]$ in configuration space and $[-6v_{th,l},6v_{th,l}]$ in velocity space, with $v^2_{th,l}=p_l/\rho_l=1$ since $p_l = n_l T_l = 1$ and $\rho_l = m n_l = 1$, and we initialize the discontinuity to be at $x=L/2$.
Note that for this 1X1V system, the gas adiabatic constant is $\gamma = 3$ because the internal energy is defined as $p/(\gamma-1) = N \rho v_{th}^2/2$, $N=1$ in one dimension, and upon rearranging, we find $\gamma = 3$.
The simulations were run on a $64\times 16$ grid, with piecewise quadratic Serendipity elements, $L=1$, and $t_{end}=0.1$.
Zero flux boundary conditions are used in velocity space and copy boundary conditions are used for configuration space, where the value of the distribution function at $x = 0$ and $x = L$ is copied into the ghost layer for the computation of the fluxes at the configuration space boundary.
Note that this copy boundary condition copies the full expansion of the distribution function from the skin cells at $x = 0$ and $x = L$ into the ghost layer, and so is not the same as a homogeneous Neumann boundary condition, but more akin to a perfectly matched layer, i.e., an open boundary condition.

The Knudsen number ($\mathrm{Kn} = \lambda_\textrm{mfp}/L$, where $\nu = v_{th}/\lambda_\textrm{mfp}$) is varied between $1/10$, $1/100$ and $1/500$. 
In the first case, the gas is close to collisionless on the time-scale of the simulation, as the box size is not much greater than the mean free path of particle collisions, but in the last case, the gas is highly collisional, because the particles undergo many collisions while propagating through the box, $L \gg \lambda_{mfp}$.
Hence, in the last case the solution should match, approximately, the solution from the Euler equations for the evolution of a fluid\footnote{We note that the Euler equations are formally derived with the full Boltzmann collision operator accounting for hard sphere collisions of gas particles, and then taking viscosity and heat conduction to be zero. In this case, even the simplified Fokker--Planck operator leads to a high collisionality limit. However, the transport coefficients for matching a Navier-Stokes solution with finite viscosity and heat conduction, i.e., finite momentum and heat transport, would need to be modified to account for this particular collision operator.}.

Figure~\ref{fig:sod-shock-moments} shows the density, velocity, temperature and gas frame, or kinetic, heat-flux,
\begin{align}
    q_h(x,t) \doteq \sum_j \int_{K_j \setminus \Omega_k} (v-u_h(x,t))^3\, f_h(x,v,t) \thinspace dv,
\end{align}
obtained from the kinetic simulations. 
\begin{figure}[!htb]
    \centering
    \includegraphics[width=\textwidth]{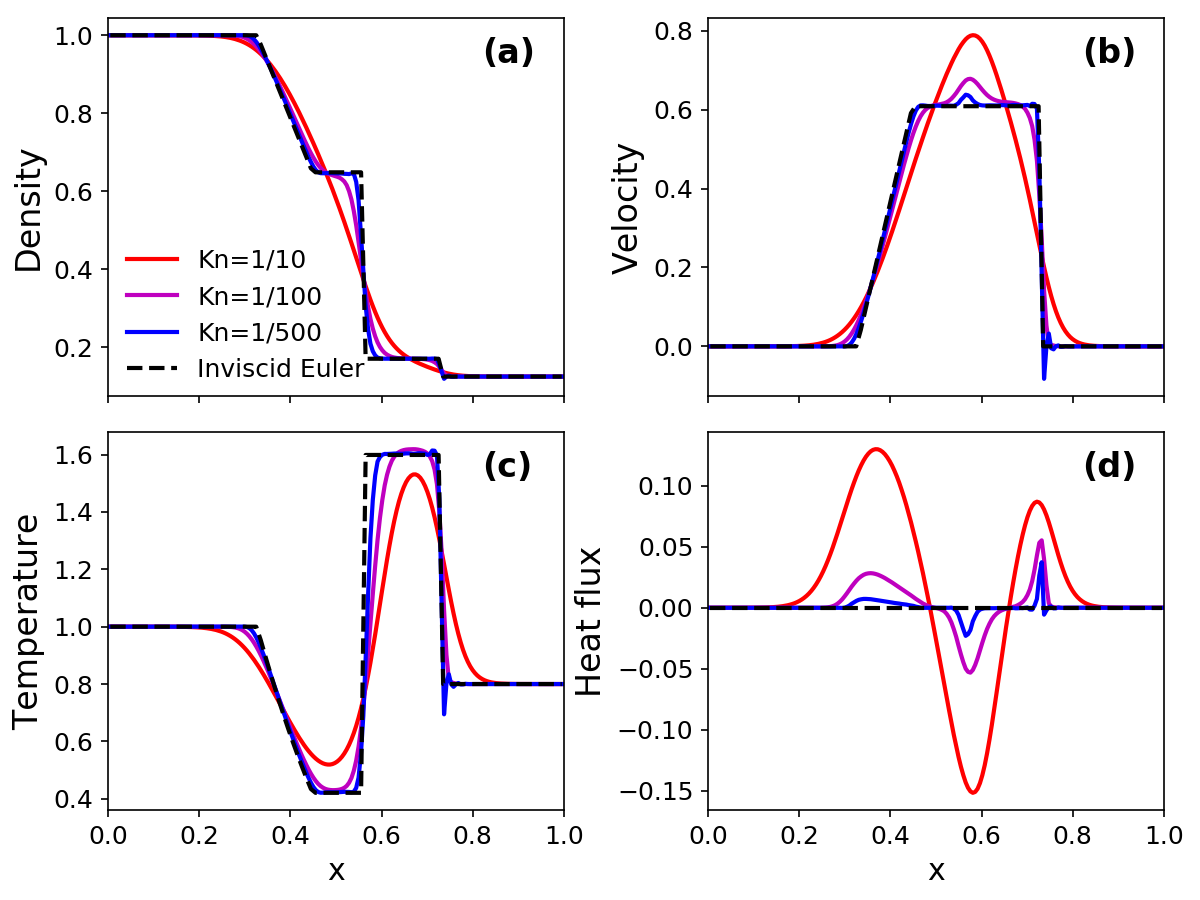}
    \caption{Density (a), velocity (b), temperature (c) and gas frame, or kinetic, heat-flux (d) from a Sod-Shock problem. Plotted are results with Knudsen numbers of $1/10$ (red), $1/100$ (magenta), and $1/500$ (blue), with the inviscid Euler results (black dashed) shown for comparison. As the gas becomes more collisional, i.e., decreasing Knudsen number, the solutions tend to the Euler result. Note that there is no heat-flux in the inviscid limit.}
    \label{fig:sod-shock-moments}
\end{figure}
For comparison, the exact solution to the corresponding inviscid Euler Riemann problem is also shown.
It is observed, as expected, that as the gas becomes more collisional, the moments tend to the Euler solution.
An interesting aspect of the kinetic results, though, are the viscosity, heat-conductivity and other transport effects which smooth the shock structures that are sharp in the Euler solution. 
In particular, the lower-right plot of Figure~\ref{fig:sod-shock-moments} shows that the heat-flux is completely absent in the inviscid equations. 
There is significant heat-flux in the low collisionality case, but this heat flux vanishes as the collisionality increases.
It is a testament to the accuracy of our discrete Vlasov--Fokker--Planck implementation that we can transition from the low to high collisionality limit, comparing favorably with the Euler equation solution in the high collisionality limit. 

We next consider a Sod-shock with a sonic point in the rarefaction wave. The initial conditions are selected as
\begin{align}
  \left[
    \begin{matrix}
      \rho_l \\
      u_l \\
      p_l
    \end{matrix}
  \right]
  = 
  \left[
    \begin{matrix}
      1 \\
      0.75 \\
      1.0
    \end{matrix}
  \right],
  \qquad
  \left[
    \begin{matrix}
      \rho_r \\
      u_r \\
      p_r
    \end{matrix}
  \right]
  = 
  \left[
    \begin{matrix}
      0.125 \\
      0.0 \\
      0.1
    \end{matrix}
  \right],    
\end{align}
and this mass density, flow, and pressure are again used to construct an initial Maxwellian velocity distribution, \eqr{\ref{eq:ICMaxwellian}}.
We employ the same $64 \times 16$ grid with piecewise quadratic Serendipity elements, $[-6 v_{th,l}, 6 v_{th,l}]$ velocity space extents, and zero-flux boundary conditions in velocity space.
In contrast to the standard Sod-shock, this problem is run on a periodic domain $[-1,1]$, with the ``left'' state applied for $|x|<0.3$. 
The Knudsen number is $1/200$ and the simulation is run to $t=0.1$. 
As the domain is periodic, the total momentum and energy should remain constant, thereby testing conservation properties in a more complex setting. 
Note that the net momentum is not zero in this problem, which, combined with the configuration space variation that develops in this benchmark, makes this a more strenuous test of momentum conservation compared to the relaxation test. 

Figure~\ref{fig:sod-distf} shows the density, velocity, and distribution function at $t=0.1$.
\begin{figure}[!htb]
    \centering
    \includegraphics[width=\textwidth]{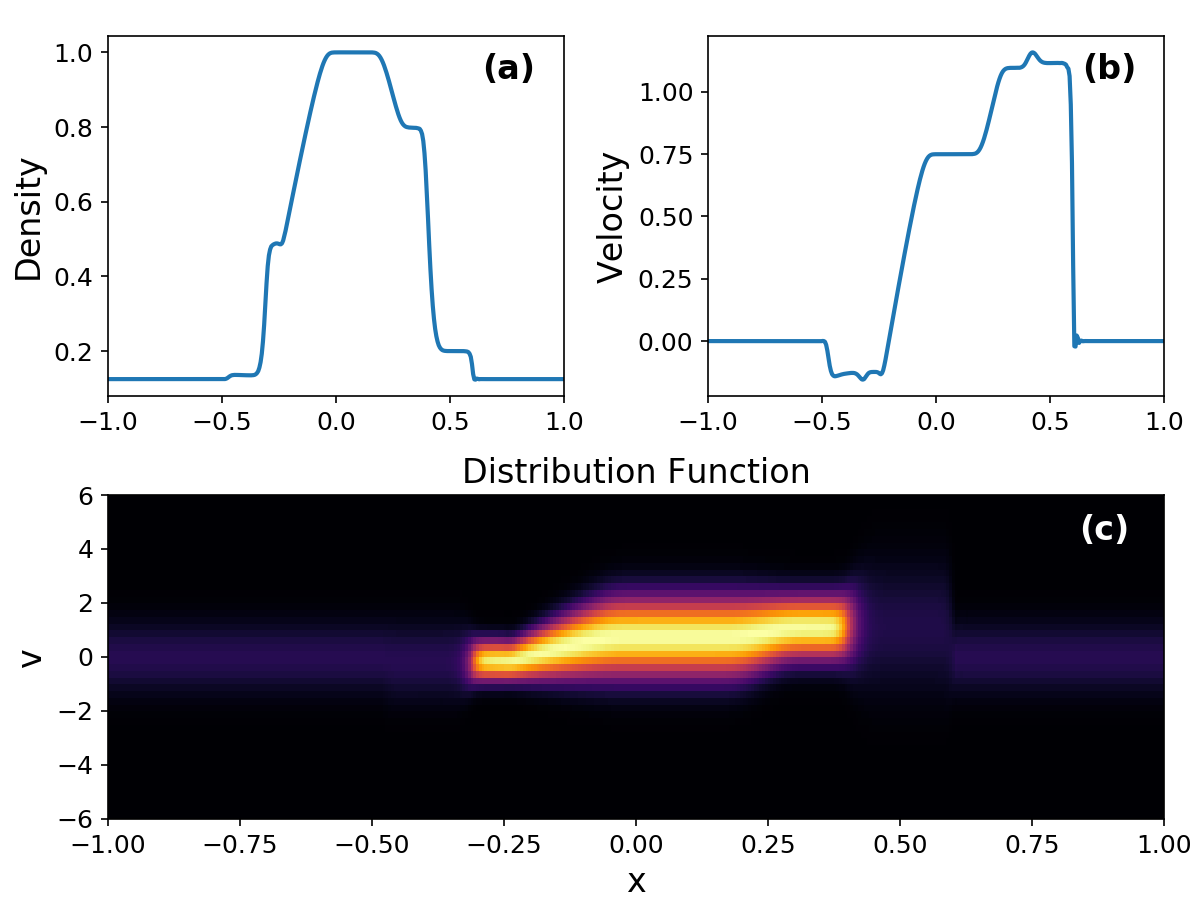}
    \caption{The density (a), velocity (b), and distribution function (c) for the Sod-shock problem with a sonic point in the rarefaction. Complicated shock structures are formed and are visible both in the moments as well as the distribution function.}
    \label{fig:sod-distf}
\end{figure}
Complex shock structures are visible both in the moments and the distribution function.
Figure~\ref{fig:sod-cons} shows the errors in momentum and energy as a function of time for $p=1$ and $p=2$ cases.
In each case, the errors are close to machine precision when using our conservative scheme, but neglecting boundary corrections and using regular moments instead of `star moments' (for $p=1$) leads to errors many orders of magnitude greater.
\begin{figure}[!htb]
    \centering
    \includegraphics[width=0.49\textwidth]{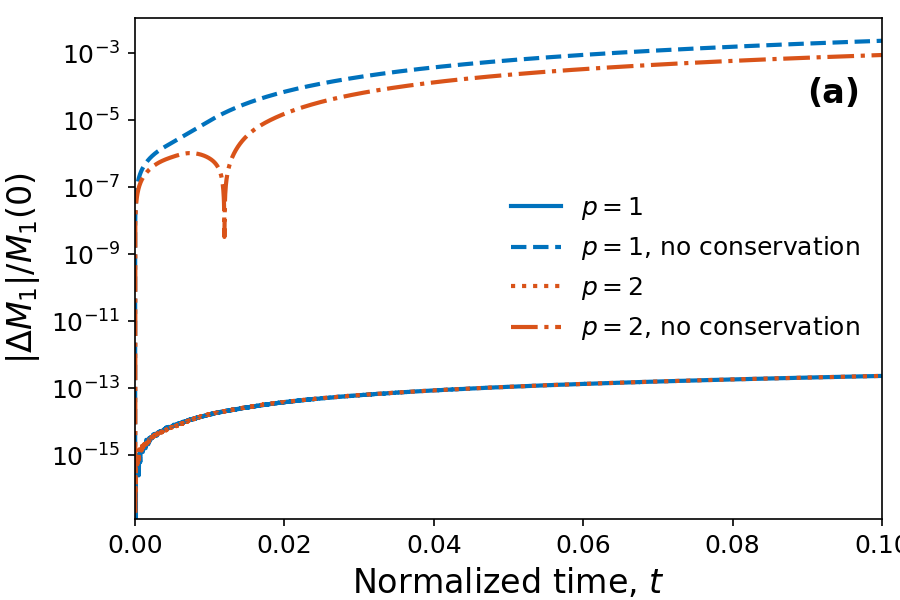}
    \includegraphics[width=0.49\textwidth]{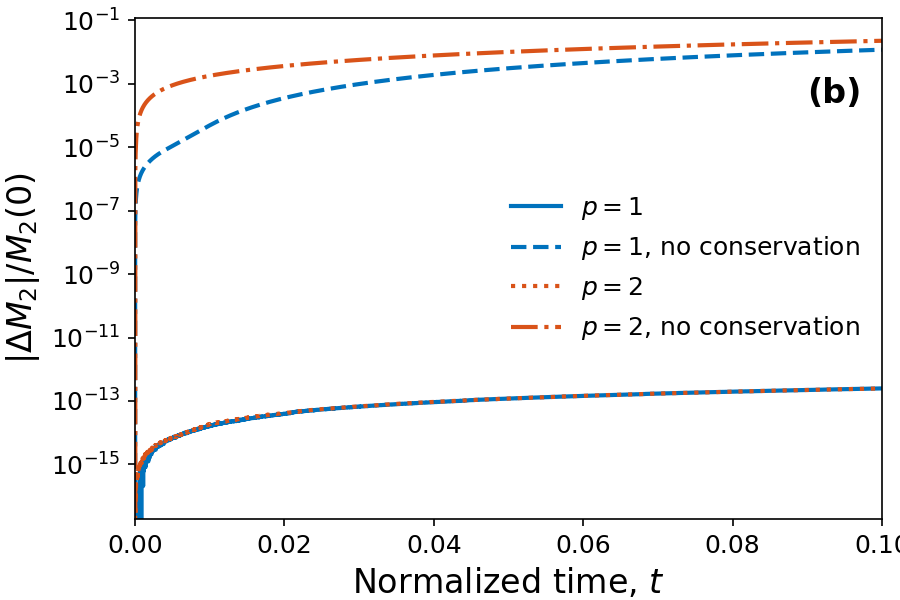}
    \caption{The relative change in the momentum (a) and energy (b) for $p=1$ (blue) and $p=2$ (orange) cases for the Sod-shock problem with a sonic point in the rarefaction. Our conservative scheme gives us machine precision errors in momentum and energy errors that are nearly independent of polynomial order and only depend on the number of time-steps taken in each simulation. However, neglecting the boundary corrections needed for conservation leads to errors orders of magnitude greater.}
    \label{fig:sod-cons}
\end{figure}
We note that, even in the presence of spatial gradients, the errors are independent of polynomial order and only depend on the number of time-steps taken in the simulations, as we expect from our mathematical formulation in the algorithm in Chapter~\ref{ch:DGFEM}.
So, not only do we converge to the inviscid Euler solution in the limit of high collisionality as we expect, but momentum and energy are conserved to a high precision by the scheme.
Importantly, while we did not discuss the limit of no electromagnetic fields in Chapter~\ref{ch:DGFEM} when we discussed momentum conservation in the discrete scheme, we did note that the errors in momentum conservation arose from our discretization of Maxwell's equations.
Thus, we find here by numerical demonstration that our DG discretization of the Vlasov--Fokker--Planck equation in the limit of $\mvec{E} = \mvec{B} = 0$ exactly conserves the momentum, in addition to the energy, as the momentum conservation errors in the relaxation test and kinetic Sod-shock benchmark are only a function of the size of the time-step.

\section{Benchmarks of the Collisionless Vlasov--Maxwell \\ System of Equations}

\subsection{Conservation Test for the Vlasov--Maxwell System of Equations}

To test the conservation properties of the discrete Vlasov--Maxwell system of equations, we set up a drifting electron-proton plasma with a large density gradient in both species to drive strong asymmetric flows.
We initialize a Maxwellian velocity distribution, \eqr{\ref{eq:ICMaxwellian}}, for both protons and electrons with a density gradient,
\begin{align}
n(x, t = 0) & = n_0 (1 + 4\exp(-\beta_l (x - x_m)^2)) \qquad x < x_m, \notag \\
& =  n_0 (1 + 4\exp(-\beta_r (x - x_m)^2)) \qquad x>x_m, \label{eq:conservationTestDensityProfile},
\end{align}
in a 1X1V box.
The phase space domain is $L_x = 96 \lambda_D$ with velocity space extents $[-5.0 v_{th_e}, 7.0 v_{th_e}]$ and $[-6.0 v_{th_p}+v_{th_e}, 6.0 v_{th_p}+v_{th_e}]$ for the electrons and protons respectively.
Here, $\lambda_D$ is the Debye length, \eqr{\ref{eq:DebyeLengthDef}}, and $v_{th_e}$ and $v_{th_p}$ are the electron and proton thermal velocities.

We set $\beta_l = 0.5 \lambda_D^{-2}$, $\beta_r = 0.03125 \lambda_D^{-2}$, $x_m = L_x/4 = 24 \lambda_D$, and $n_0 = 1$ in \eqr{\ref{eq:conservationTestDensityProfile}}.
There is a constant drift in both the protons and electrons, $u(x, t=0) = v_{th_e}$, and the following parameters are chosen: $m_p/m_e = 1836$, $T_p/T_e = 1.0$, and $v_{th_e} = 1.0$. 
The latter is a normalization such that the velocity normalization in the system is the electron thermal velocity, a reasonable choice in 1X1V when Maxwell's equations reduce to just Ampere's Law,
\begin{align}
    \pfrac{\mvec{E}}{t} = \frac{\mvec{J}}{\epsilon_0},
\end{align}
and thus there are no light waves in the system. 

We employ periodic boundary conditions in $x$ and zero-flux boundary conditions in $v_x$, though we note that this density gradient is not periodic.
However, the value of the gradient at the edge of configuration space is small, far below machine precision.
To demonstrate energy conservation, irrespective of configuration space resolution or polynomial order, we perform a number of simulations with $N_x = 4$, $\Delta x = 24 \lambda_D$, and $N_v = 12$, $\Delta v = 1 v_{th_s}$. 
Simulations are run for $1000 \omega_{pe}^{-1}$, where $\omega_{pe}$ is the electron plasma frequency, \eqr{\ref{eq:plasmaFrequency}}.
\begin{figure}[!htb]
    \centering
    \includegraphics[width=0.49\textwidth]{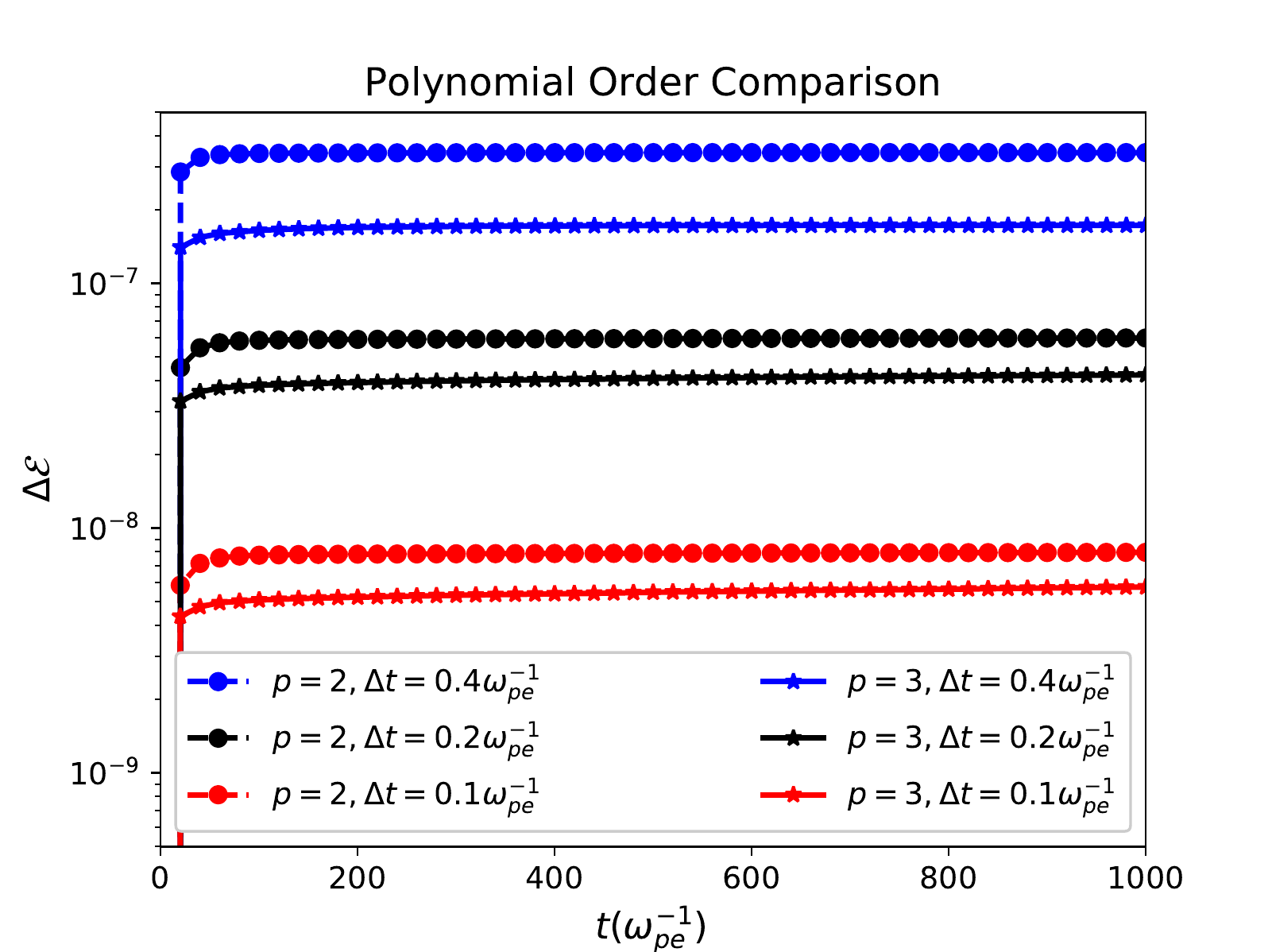}
    \includegraphics[width=0.49\textwidth]{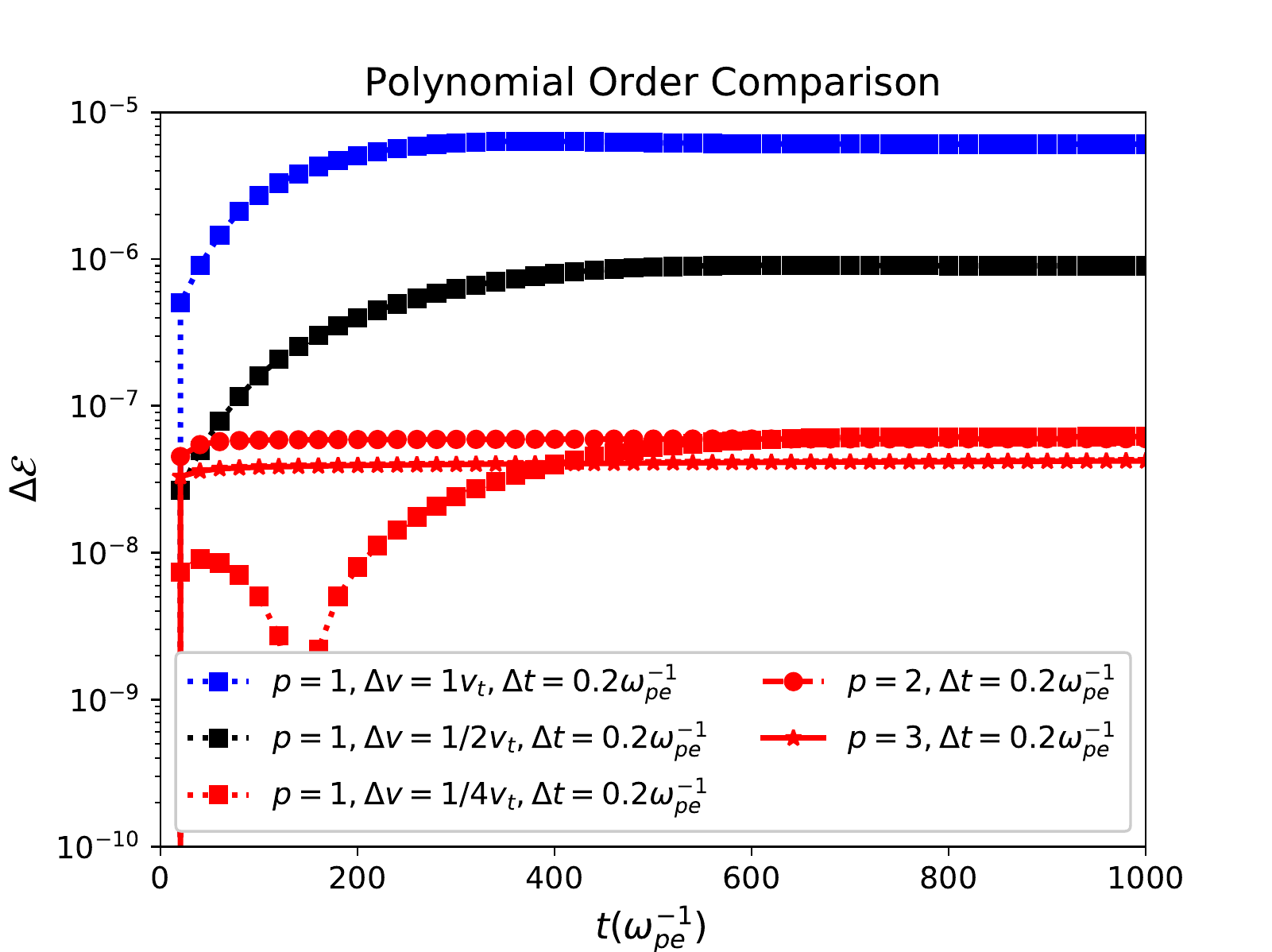}\\
    \includegraphics[width=0.49\textwidth]{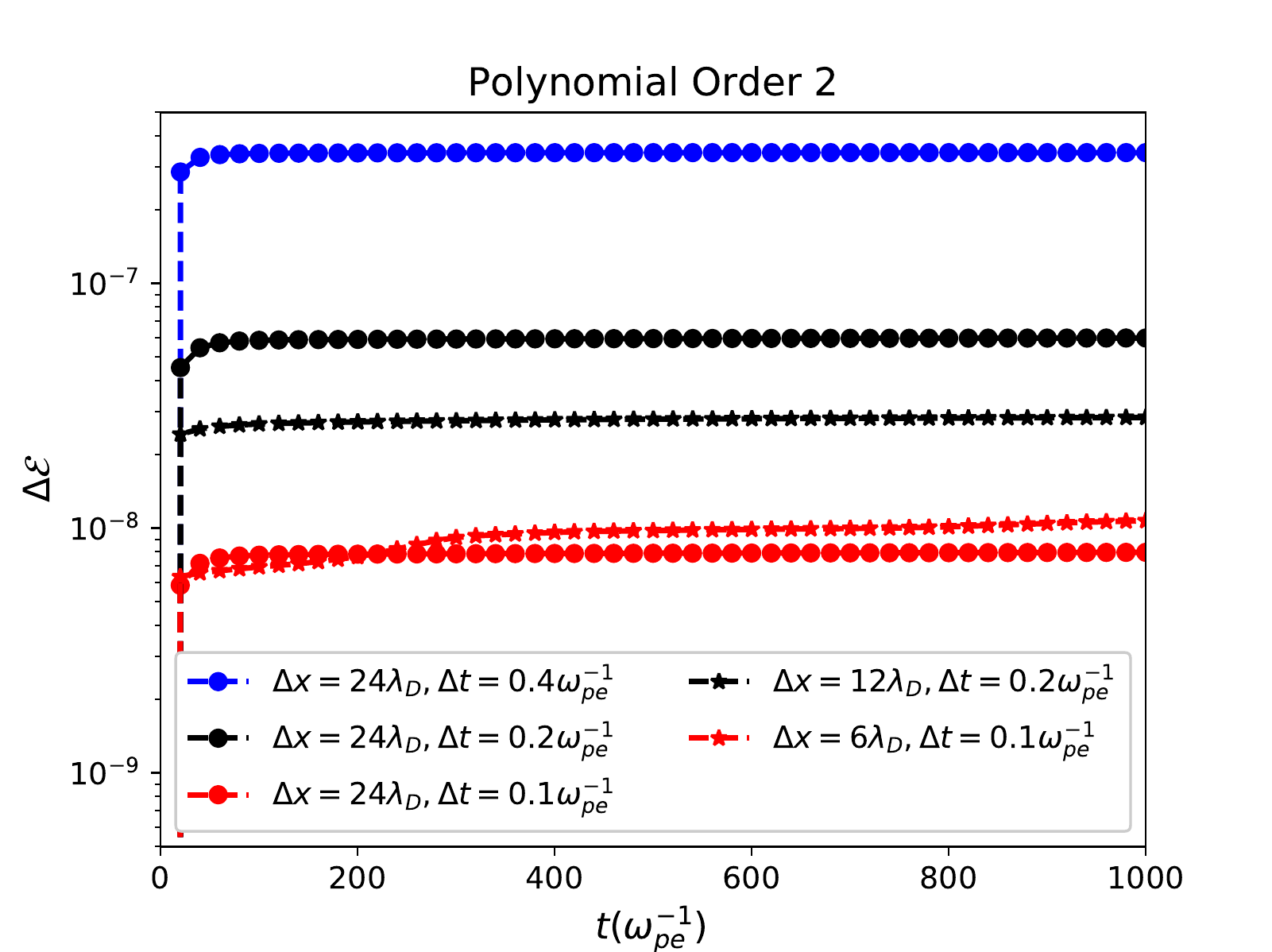}
    \includegraphics[width=0.49\textwidth]{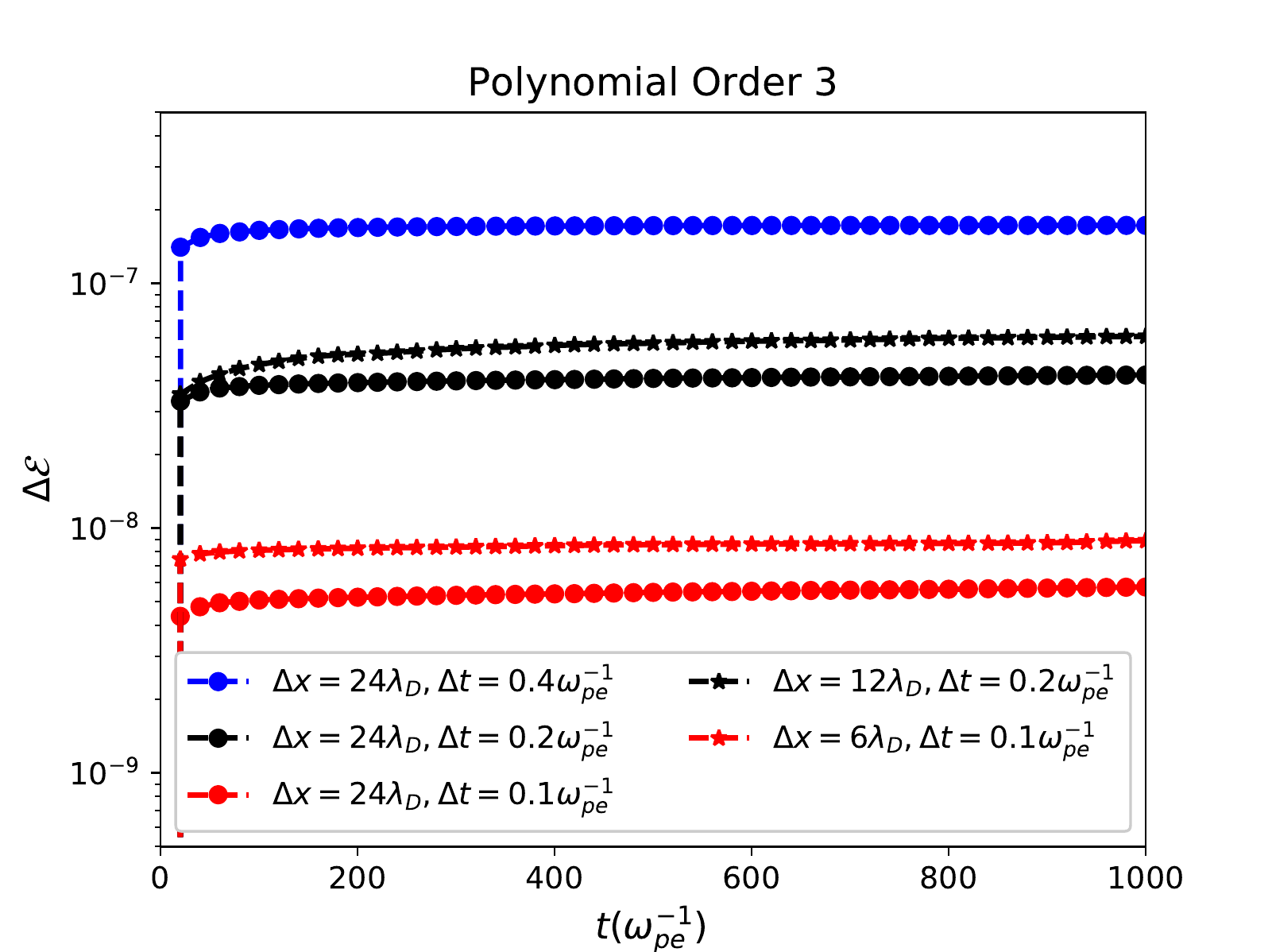}
    \caption{The change in the total, electron plus proton and electromagnetic, energy for a number of simulations to demonstrate the robustness of our energy conserving scheme. The scheme's energy conservation is independent of the polynomial order (top left/right), with the caveat that the choice of polynomial order 1 requires sufficient velocity resolution to reduce the projection errors in projecting $|\mvec{v}|^2$. The latter caveat of projection errors in the polynomial order 1 simulations is also the reason for the dip in the most resolved polynomial order 1 calculation, where the computation of errors is the most sensitive and we must be careful about finite precision effects. We note though that for fixed time-step we recover the energy conservation result of $p=2$ and $p=3$ if we use enough velocity space resolution with the $p=1$ simulations. Likewise, the scheme's energy conservation depends only the size of the time-step, not the configuration space resolution (bottom left/right). The convergence of the energy errors in the top left plot match our expectations for a third order time-stepping method, 2.5 and 2.9 for $p=2$, and 2.0 and 2.9 for $p=3$.} \label{fig:EnergyConservation}
\end{figure}
Results are plotted in Figure~\ref{fig:EnergyConservation}, where the change in the total energy is defined as
\begin{align}
    \Delta \mathcal{E} = \left | \frac{\int_0^{L_x} \mathcal{E}(t) - \mathcal{E}(t=0) \thinspace d\mvec{x}}{\int_0^{L_x} \mathcal{E}(t=0) \thinspace d\mvec{x}} \right |, 
\end{align}
with
\begin{align}
    \mathcal{E} = \frac{1}{2} m_e \int_{\mvec{v}_{min}}^{\mvec{v}_{max}} |\mvec{v}|^2 f_e \thinspace d\mvec{v} + \frac{1}{2} m_p \int_{\mvec{v}_{min}}^{\mvec{v}_{max}} |\mvec{v}|^2 f_p \thinspace d\mvec{v} + \frac{1}{2} \epsilon_0 |\mvec{E}|^2.
\end{align}
Note that the absolute value in the definition of the relative energy change is due to the fact that the total energy decreases with time. 

We emphasize a number of results. Defining the convergence order as,
\begin{align}
    \mathcal{C}(\mathcal{E}_1, \mathcal{E}_2) = \log_2\left ( \frac{E_1}{E_2} \right ) =  \frac{\log(\mathcal{E}_1) - \log(\mathcal{E}_2)}{\log(2)}, \label{eq:convergenceOrder}
\end{align}
we find the order of convergence with decreasing time-step to match our expectations for a third-order Runge-Kutta method, 2.5 and 2.9 for $p=2$, and 2.0 and 2.9 for $p=3$.
In addition, the energy conservation errors are independent of choice of polynomial order.
We note in particular that energy can be conserved with polynomial order 1, but depending on the size of the time-step, one may require more velocity resolution so that projection errors from projecting $|\mvec{v}|^2$ onto linear basis functions do not dominate the error in the computation of the energy. 
Finally, as expected, the conservation of energy is determined by the error in the time-stepping scheme, and refining the grid and increasing the configuration space resolution from $N_x = 4$ to $N_x = 8, 16$ does not improve the energy conservation compared to decreasing the size of the time-step.

We can likewise examine the extent to which momentum is conserved, even though our algorithm does not formally conserve the total momentum.
\begin{figure}[!htb]
    \centering
    \includegraphics[width=0.49\textwidth]{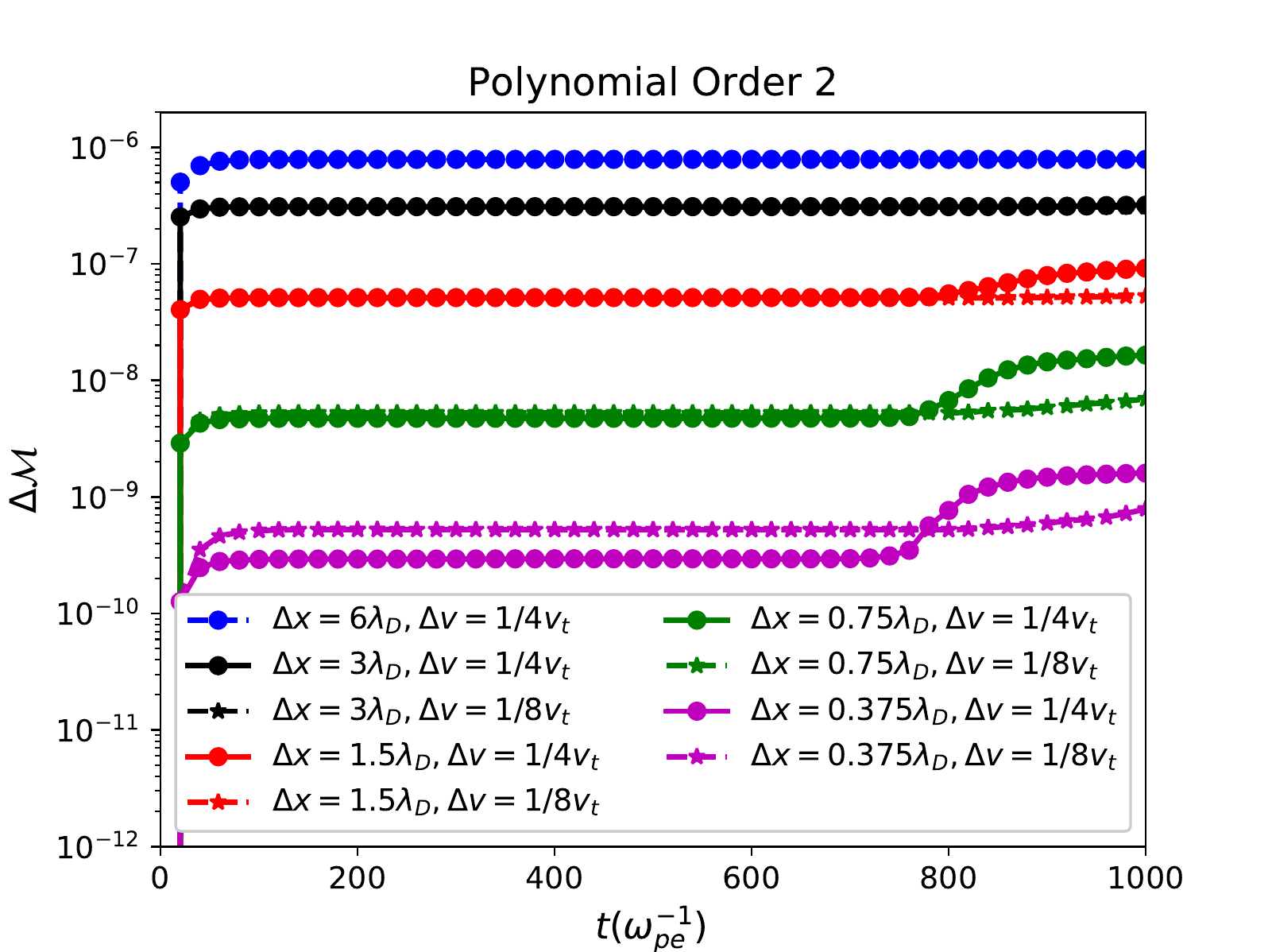}
    \includegraphics[width=0.49\textwidth]{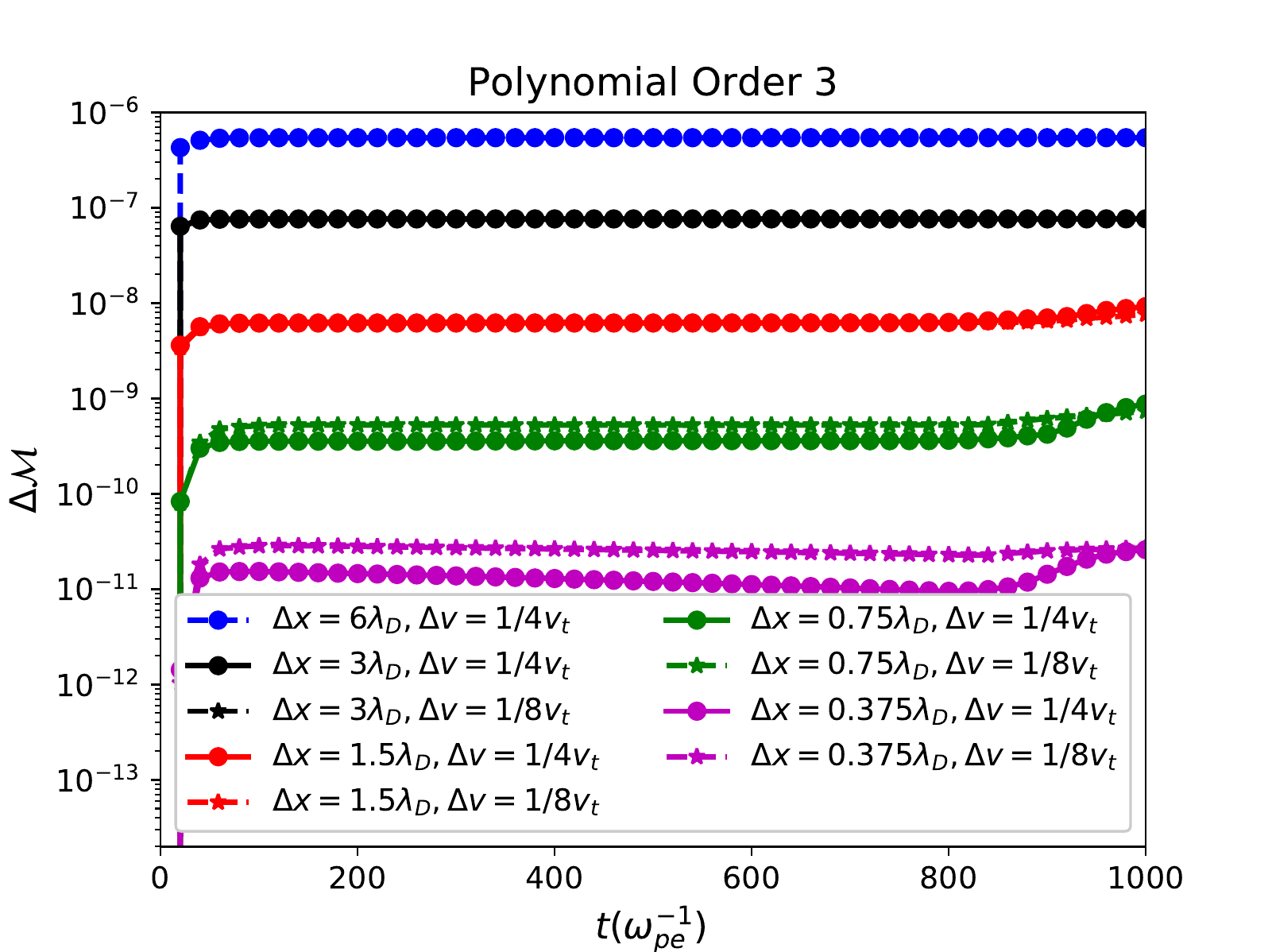}
    \caption{The change in the total, electron plus proton, momentum in a number of simulations. Simulations with polynomial order 2 (left) and polynomial order 3 (right) are performed with increasing configuration space and velocity space resolution to demonstrate that errors in the total momentum decrease with increasing configuration space resolution, while only weakly depending on velocity space resolution. The convergence orders of the polynomial order 2 simulations are 1.35, 2.55, 2.93, and 3.14, and the convergence orders of the polynomial order 3 simulations are 2.83, 3.32, 3.38, and 4.76, and these convergence orders are calculated using the higher velocity resolution results. We note the convergence orders are largely unaffected by using the lower velocity resolution simulations to compute them.} \label{fig:MomentumConservation}
\end{figure}
In Figure~\ref{fig:MomentumConservation}, we plot the integrated total momentum, relative to the total momentum at the beginning of the simulation,
\begin{align}
    \Delta \mathcal{M} = \left | \frac{\int_0^{L_x} \mathcal{M}(t) - \mathcal{M}(t=0) \thinspace d\mvec{x}}{\int_0^{L_x} \mathcal{M}(t=0) \thinspace d\mvec{x}} \right | \label{eq:TotalMomenumChange},
\end{align}
where
\begin{align}
    \mathcal{M} = m_e \int_{\mvec{v}_{min}}^{\mvec{v}_{max}} |\mvec{v}| f_e \thinspace d\mvec{v} + m_p \int_{\mvec{v}_{min}}^{\mvec{v}_{max}} |\mvec{v}| f_p \thinspace d\mvec{v},
\end{align}
is the total, electron plus proton, momentum. We note again the absolute value in \eqr{\ref{eq:TotalMomenumChange}} is due to the fact that the total momentum decreases with time. 
While we cannot show that our scheme conserves the total momentum, the errors in the total momentum converge rapidly with increasing configuration space resolution, and depend only weakly on resolution in velocity space. 
The convergence order as defined by \eqr{\ref{eq:convergenceOrder}} are 1.35, 2.55, 2.93, and 3.14 for $p=2$, and 2.83, 3.32, 3.38, and 4.76 for $p=3$, calculated using the higher velocity resolution results, though one can use the lower velocity resolution results and obtain virtually identical convergence rates.
We have thus demonstrated one aspect of the scheme that is high-order: the convergence of the errors in the total momentum with our orthonormal, modal, DG algorithm are super-linear in polynomial order.

Finally, we examine two additional convergence metrics for our discretization of the Vlasov--Maxwell system with this initial condition: the behavior of the $L^2$ norm of the distribution function and the divergence errors in Gauss' law for the electric field.
We expect with our choice of numerical flux function, upwinding, \eqr{\ref{eq:simpleUpwindVlasov}} for $\gvec{\alpha}^x$, the streaming term, and global Lax-Friedrichs for the acceleration $\gvec{\alpha}^v$, that the $L^2$ norm of the distribution function is a monotonically decaying function.
\begin{figure}[!htb]
    \centering
    \includegraphics[width=0.49\textwidth]{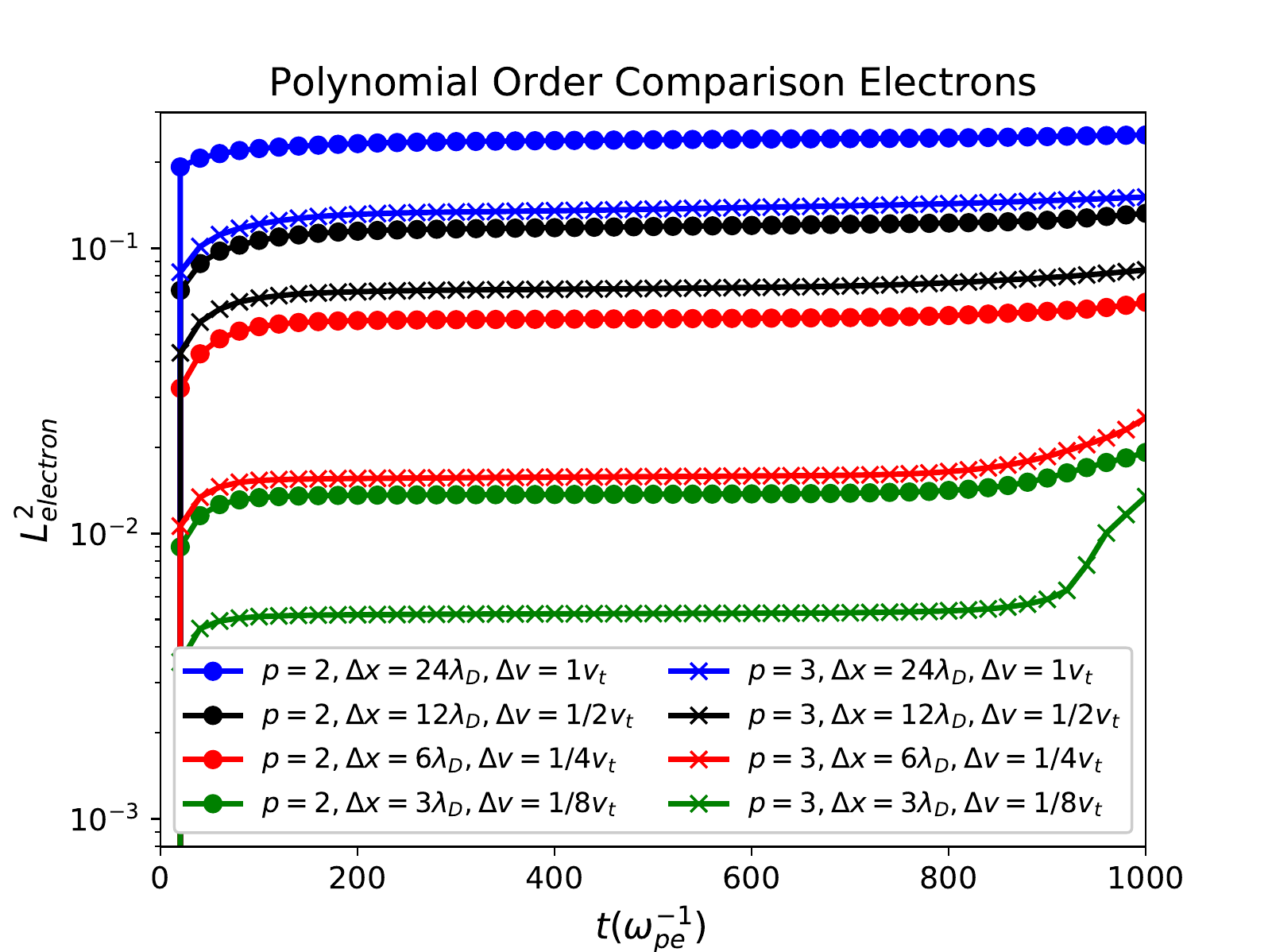}
    \includegraphics[width=0.49\textwidth]{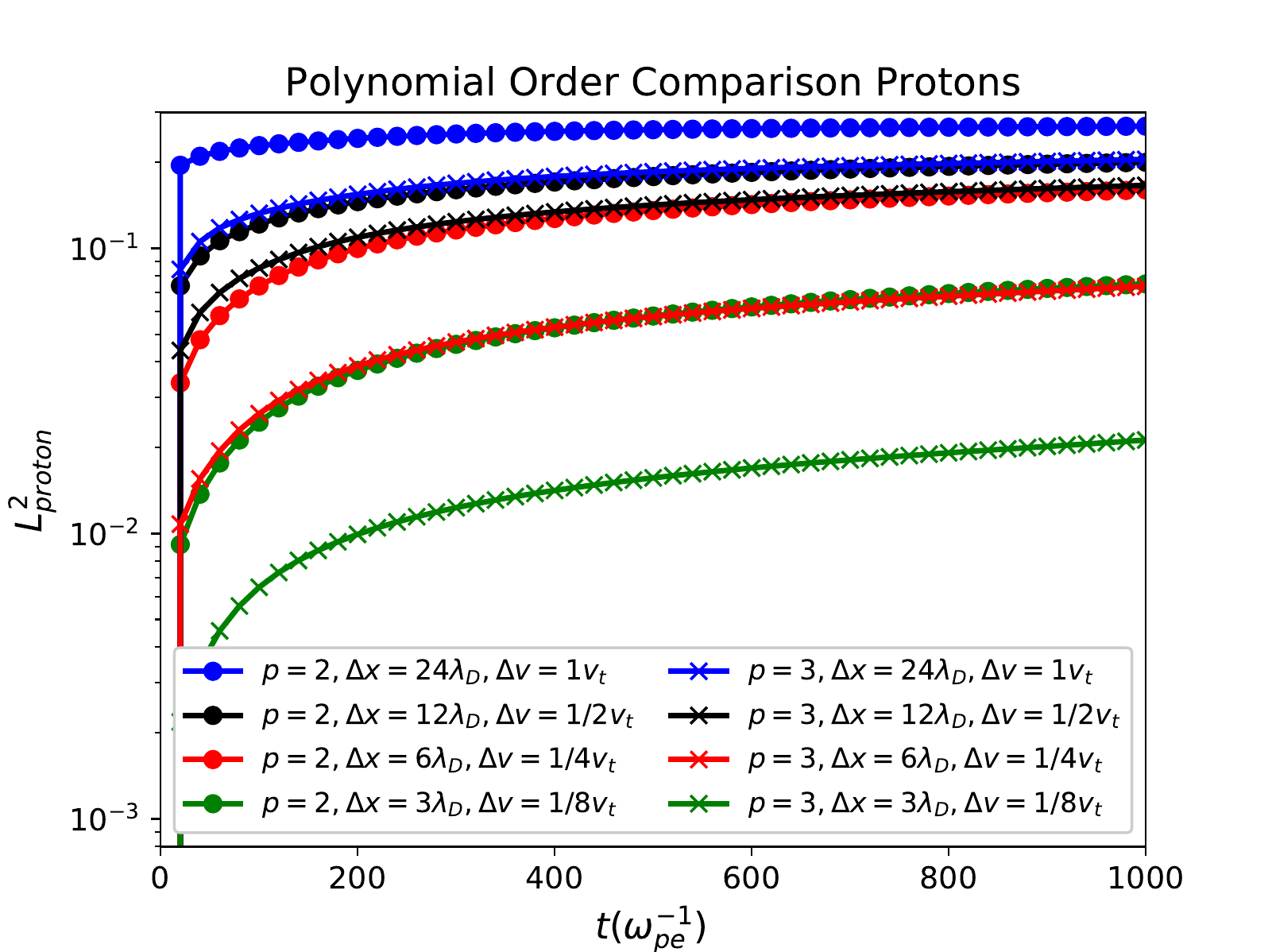}
    \caption{The change in the $L^2$ norm of the electron (left) and proton (right) distribution function with increasing resolution and polynomial order. As expected, the behavior of the $L^2$ norm of the distribution function is monotonic and decays in time. We note as well that increasing the polynomial order from 2 to 3 corresponds extremely well with a doubling of the resolution, providing direct evidence for the often assumed benefit of a high order method.} \label{fig:L2Norm}
\end{figure}
We present numerical evidence for this proof in Figure~\ref{fig:L2Norm} for both the protons and electrons by plotting the relative change in the $L^2$ norm,
\begin{align}
    L^2_s = \left | \frac{\int_0^{L_x} \int_{\mvec{v}_{min}}^{\mvec{v}_{max}} f_s^2(t) - f_s^2(t=0)\thinspace d\mvec{x}d\mvec{v}}{\int_0^{L_x} \int_{\mvec{v}_{min}}^{\mvec{v}_{max}} f_s^2(t=0)\thinspace d\mvec{x}d\mvec{v}} \right |.
\end{align}
It is interesting to note the behavior of polynomial order 3 compared to polynomial order 2, which provides anecdotal evidence that increasing the polynomial order of the simulation is analogous to increasing the resolution in configuration and velocity space. 
Although this behavior is often touted as prima facie for employing high order methods, such behavior is difficult to demonstrate analytically for nonlinear equation systems, if it is demonstrable at all.

Likewise, we consider how well Gauss' law for the electric field is satisfied in a discrete sense. In one dimension, \eqr{\ref{eq:divE}} becomes
\begin{align}
    \pfrac{E_x(x)}{x} = |e|\frac{n_p(x) - n_e(x)}{\epsilon_0},
\end{align}
where we have already substituted in for the charge density, $\rho_c = |e| (n_p - n_e)$.
We plot the results for the suite of simulations considered above, polynomial order 2 and 3, in Figure~\ref{fig:divEErrors} at the end of the simulations, $t = 1000\omega_{pe}^{-1}$.
\begin{figure}[!htb]
    \centering
    \includegraphics[width=0.49\textwidth]{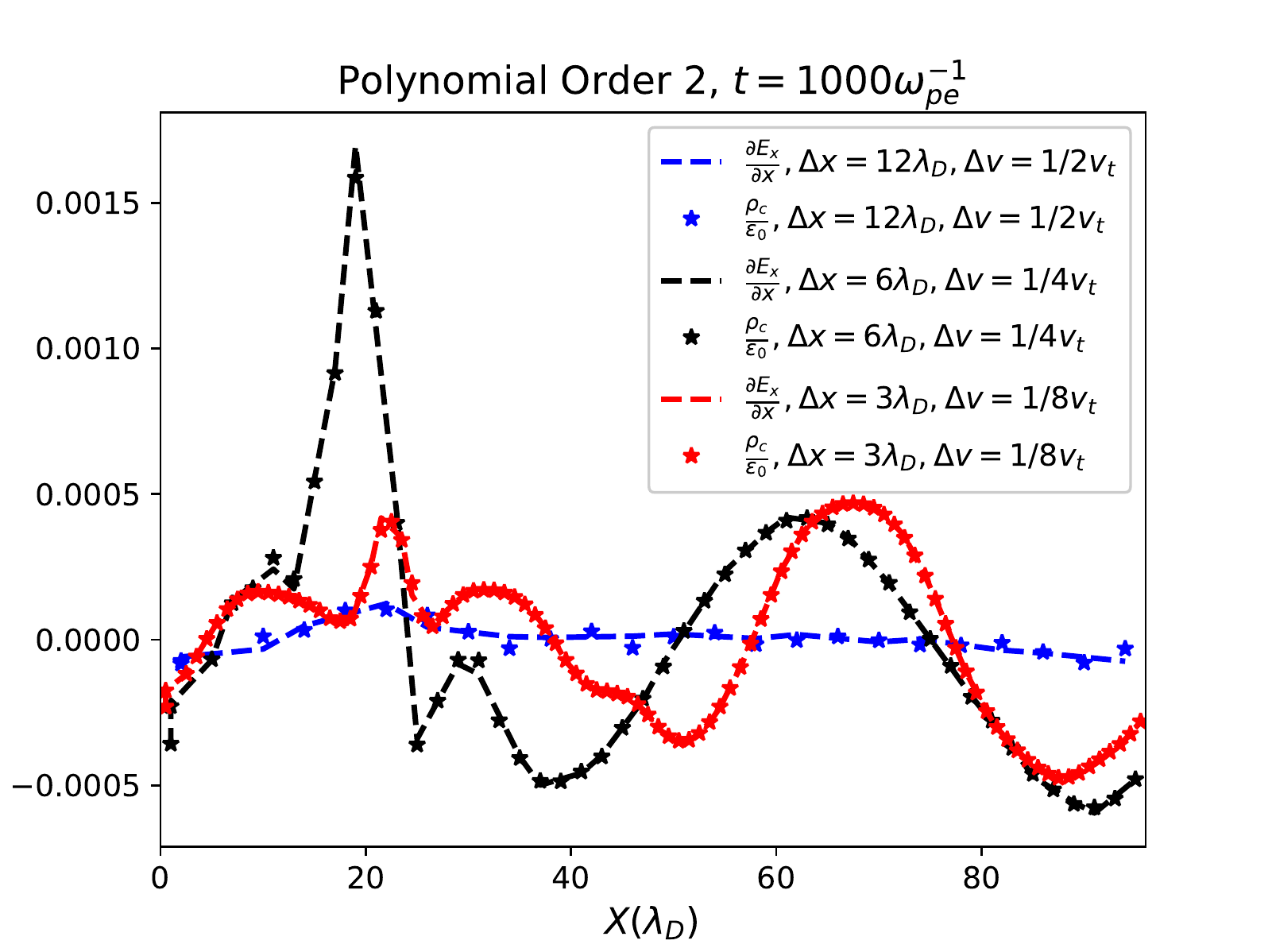}
    \includegraphics[width=0.49\textwidth]{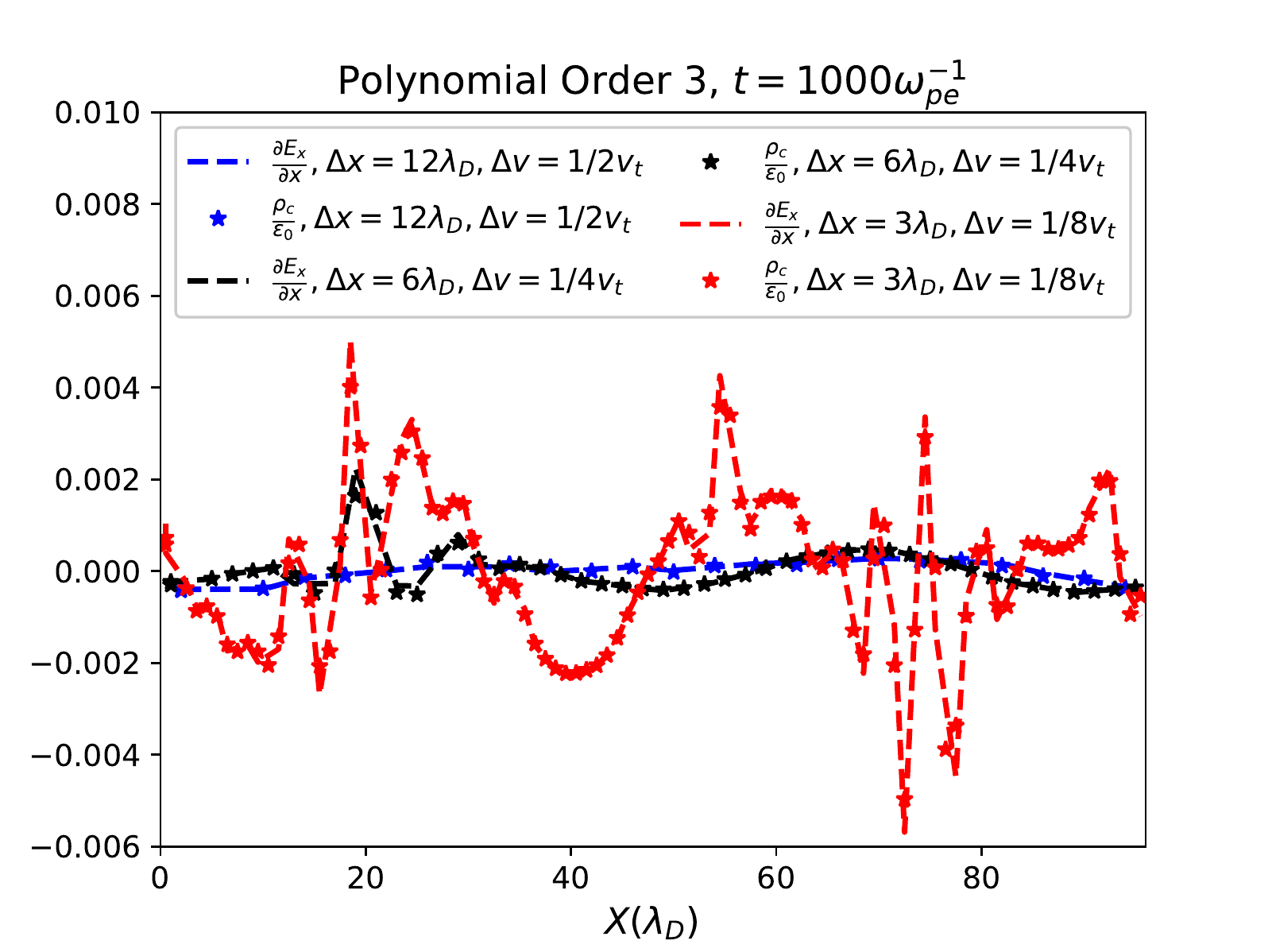}
    \caption{Comparison of the divergence of the electric field (dashed line) and the charge density (stars) for polynomial order 2 (left) and polynomial order 3 (right) simulations at the end of the simulation, $t=1000 \omega_{pe}^{-1}$. We can see that the two quantities agree reasonably well, especially as we refine the grid. Even as higher amplitude, smaller scale, electric fields are excited in the higher resolution simulations, the two quantities track each other well, despite the fact that we do not enforce this condition, and the charge density does not appear anywhere in evolved system of equations.} \label{fig:divEErrors}
\end{figure}
We note that, while the agreement is not perfect, the two quantities track remarkably well, even as larger amplitude, smaller scale, electric fields are formed with increasing resolution.
Especially for the finest resolution, polynomial order 3, when very fine scale structure forms in the electric field as the resolution approaches the Debye length, the characteristic length scale of these simulations, the charge density and divergence of the electric field agree very well. 
We reiterate that we currently do not enforce this condition, as the charge density $\rho_c$ does not appear anywhere in the Vlasov equation or Ampere's law, and thus it is a testament to the robustness of our numerical method that we do not observe large divergence errors in Gauss's law for the electric field.  

\subsection{Advection in Specified Electromagnetic Fields}

We now turn our attention to another simple, yet subtle, test of the Vlasov--Maxwell solver: advection in specified electromagnetic fields. 
Since charged particles circulate around magnetic fields, and we are employing a Cartesian mesh, we check that our numerical method can handle the advection of the distribution function in phase space.
In other words, we are checking that our algorithm can handle corner transport across cells. Consider a constant magnetic field in the $z$ direction, $\mvec{B} = B_0 \mvec{e}_z$ and an oscillating electric field of the form,
\begin{align}
    \mvec{E}(t) = E_0 \cos(\omega t) \mvec{e}_x. \label{eq:constAdvectionEField}
\end{align}
The evolution of charged particles in such a system can be solved analytically.
Assuming no spatial variation of the electric and magnetic fields, we have two ordinary differential equations for the evolution of the particles' velocities,
\begin{align}
    \frac{d v_x}{d t} & = \frac{q_s}{m_s} E_0 \cos(\omega t) + \Omega_c v_y, \\
    \frac{d v_y}{d t} & = -\Omega_c v_x,
\end{align}
where $\Omega_c = q_s B_0/m_s$ is the cyclotron frequency of the particles in this particular magnetic field. 
For simplicity, let us normalize the time and frequency to the inverse cyclotron frequency and cyclotron frequency respectively so that our two ordinary differential equations become,
\begin{align}
    \frac{d v_x}{d \tilde{t}} & = \frac{E_0}{B_0} \cos(\tilde{\omega} \tilde{t}) + v_y, \\
    \frac{d v_y}{d \tilde{t}} & = -v_x,    
\end{align}
where tildes indicate normalized quantities.

We can convert this system of coupled first-order ordinary differential equations into a set of uncoupled second order ordinary differential equations and solve for the particular solutions of each to obtain,
\begin{align}
    v_x(\tilde{t}) & = w_x(\tilde{t}) + v_x(0) \cos(\tilde{t}) + v_y(0) \sin(\tilde{t}), \\
    v_y(\tilde{t}) & = w_y(\tilde{t}) - v_x(0) \sin(\tilde{t}) + v_y(0) \cos(\tilde{t}),
\end{align}
where,
\begin{align}
    w_x(\tilde{t}) & = 
    \begin{cases}
    \frac{E_0}{B_0(1 - \tilde{\omega}^2)}[\sin(\tilde{t}) - \tilde{\omega}\sin(\tilde{\omega}\tilde{t})] & \quad \tilde{\omega} \neq 1, \\
    \frac{E_0}{2 B_0} [\tilde{t} \cos(\tilde{t}) + \sin(\tilde{t})] & \quad \tilde{\omega} = 1,
    \end{cases} \label{eq:wxAnalytic}
    \\
    w_y(\tilde{t}) & = 
    \begin{cases}
    \frac{E_0}{B_0(1 - \tilde{\omega}^2)}[\cos(\tilde{t}) - \cos(\tilde{\omega}\tilde{t})] & \quad \tilde{\omega} \neq 1, \\
    -\frac{E_0}{2 B_0} \tilde{t} \sin(\tilde{t}) & \quad \tilde{\omega} = 1.
    \end{cases}
\end{align}
Note that $\tilde{\omega} = 1$ means that the denormalized frequency is equal to the cyclotron frequency, i.e., when $\tilde{\omega} = 1$, that is the resonant case for the particles. 
Since the motion of a distribution of particles is constant along characteristics, we know that, given an initial distribution $f_0 (v_x, v_y)$, the distribution of particles at any later time is
\begin{align}
    f(v_x(t), v_y(t), t) = f_0(v_x(0), v_y(0), 0).
\end{align}
Consider an initial Maxwellian distribution of electrons in one spacial dimension and two velocity dimensions, 1X2V, \eqr{\ref{eq:ICMaxwellian}}.
Using our solution for the particles' velocities, we can see that,
\begin{align}
    [v_x(\tilde{t}) - w_x(\tilde{t})]^2 + [v_y(\tilde{t}) - w_y(\tilde{t})]^2 = v_x(0)^2 + v_y(0)^2. \label{eq:analyticConstFlow}
\end{align}
So, the exact solution for an initial Maxwellian distribution of particles is just a Maxwellian with drift velocities $w_x(\tilde{t}), w_y(\tilde{t})$ for all future times. 

We simulate the evolution of an initially Maxwellian distribution function of electrons under the influence of a constant magnetic field in the $z$ direction, $\mvec{B} = B_0 \mvec{e}_z$, and a time-varying electric field given by \eqr{\ref{eq:constAdvectionEField}}, one simulation with $\tilde{\omega} = 0.5, E_0/B_0 = 1.0$, a non-resonant case, and one simulation with $\tilde{\omega} = 1.0, E_0/B_0 = 0.5$, a resonant case. 
We compare the analytic solution from Eqns. (\ref{eq:wxAnalytic})--(\ref{eq:analyticConstFlow}) to simulations using our Vlasov--Maxwell solver in Figures \ref{fig:constAdvectionF} and \ref{fig:constAdvectionMoments}. 
\begin{figure}[!htb]
    \centering
    \includegraphics[width=0.72\textwidth]{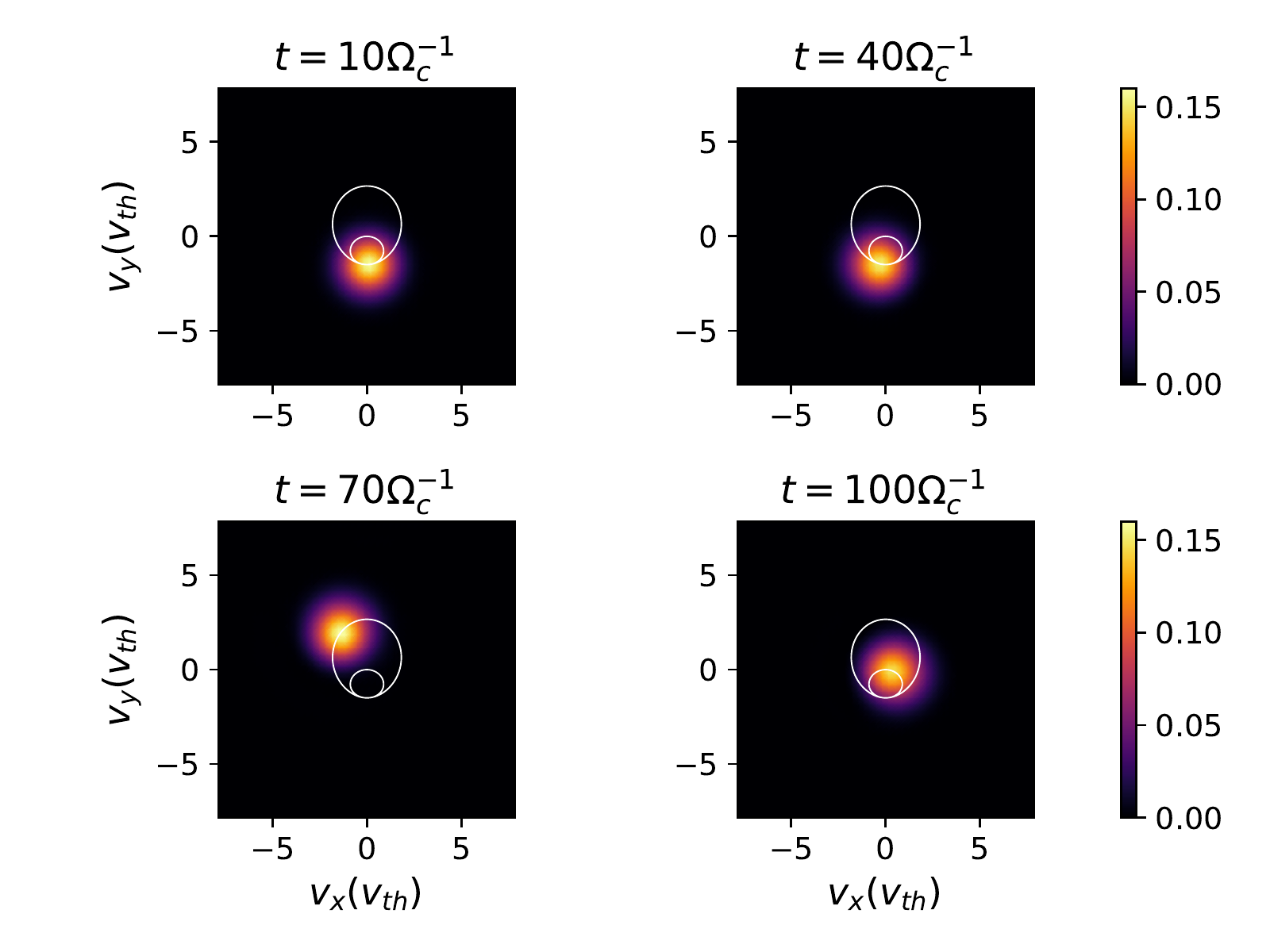}
    \includegraphics[width=0.72\textwidth]{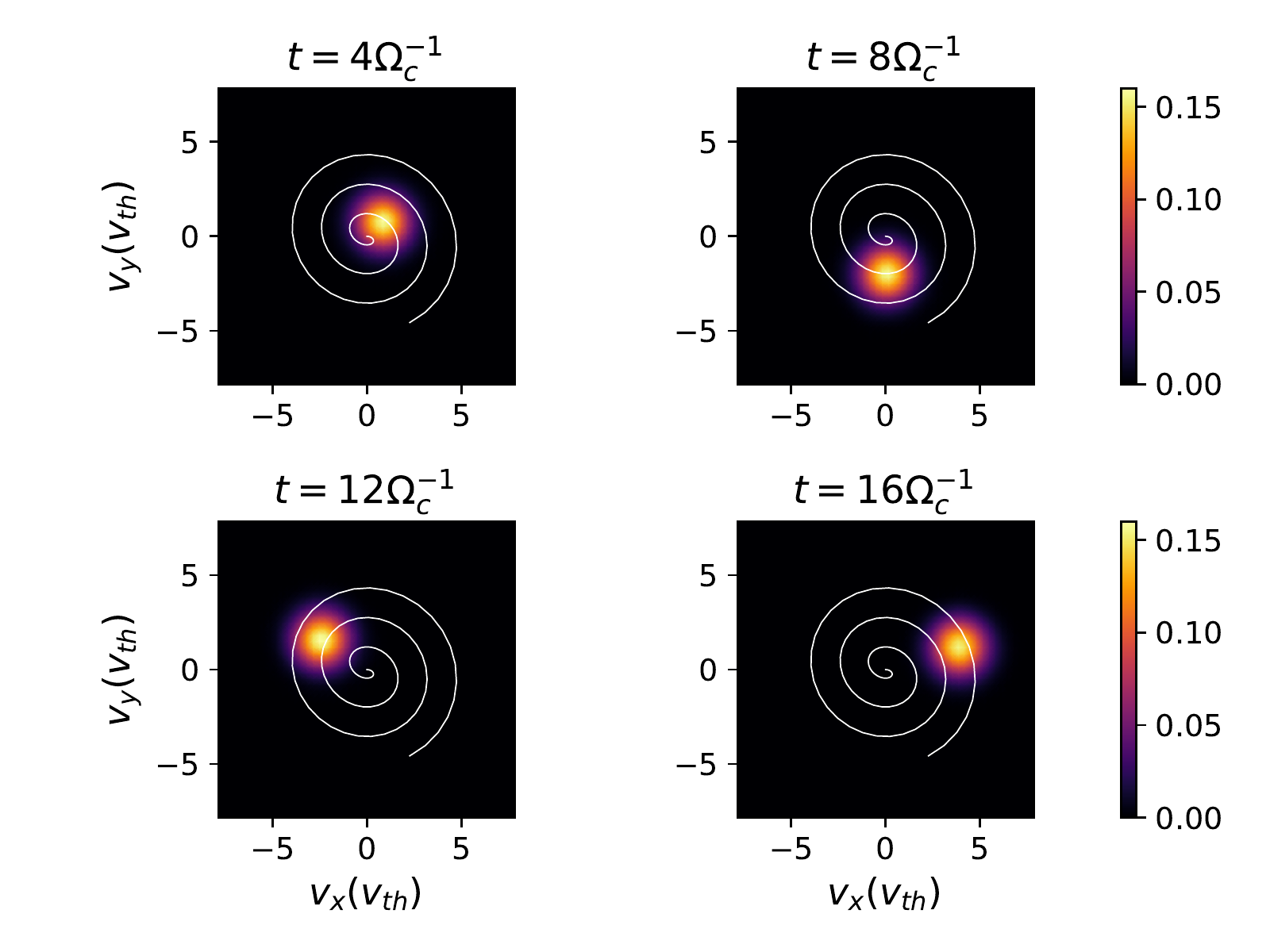}
    \caption{The non-resonant (top) and resonant (bottom) advection of a distribution of electrons in phase space, over-plotted with the analytical solution. The electron distribution function is plotted at $f(x=\pi, v_x, v_y)$. We can see that in both cases the distribution function's evolution is well described by our derived analytical solution, and that in the non-resonant case, where the distribution function is advected for a large number of inverse cyclotron periods, there is no noticeable diffusion of the distribution function in phase space. We emphasize that these simulations are performed with polynomial order 2 on a relatively coarse velocity space mesh, $N_{v_x} = N_{v_y} = 16$ with velocity space extents $[-8 v_{th_e}, 8 v_{th_e}]$ in both the $v_x$ and $v_y$ dimensions, so $\Delta v_x = \Delta v_y = 1 v_{th_e}$.} \label{fig:constAdvectionF}
\end{figure}
\begin{figure}[!htb]
    \centering
    \includegraphics[width=0.85\textwidth]{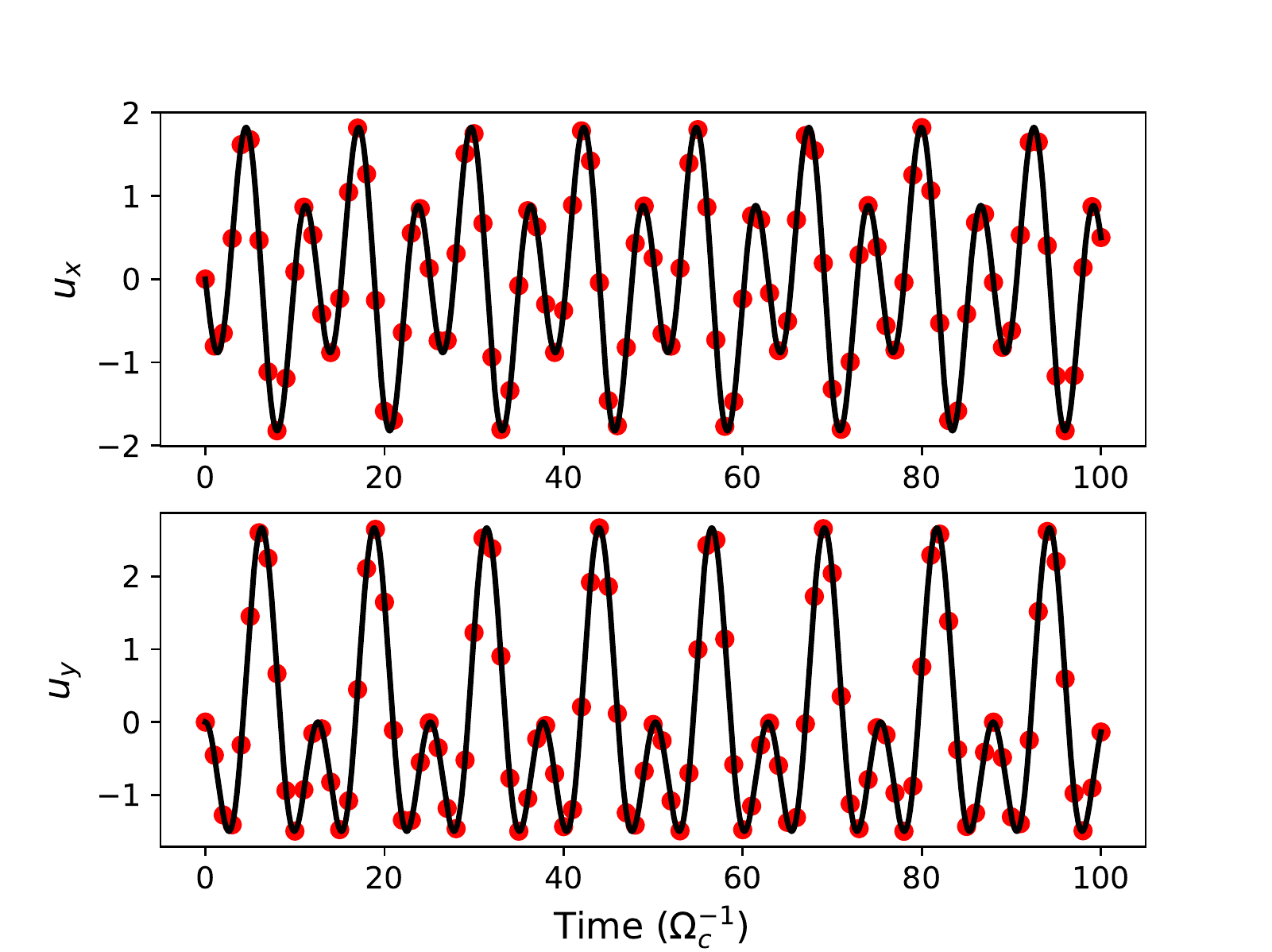}
    \includegraphics[width=0.85\textwidth]{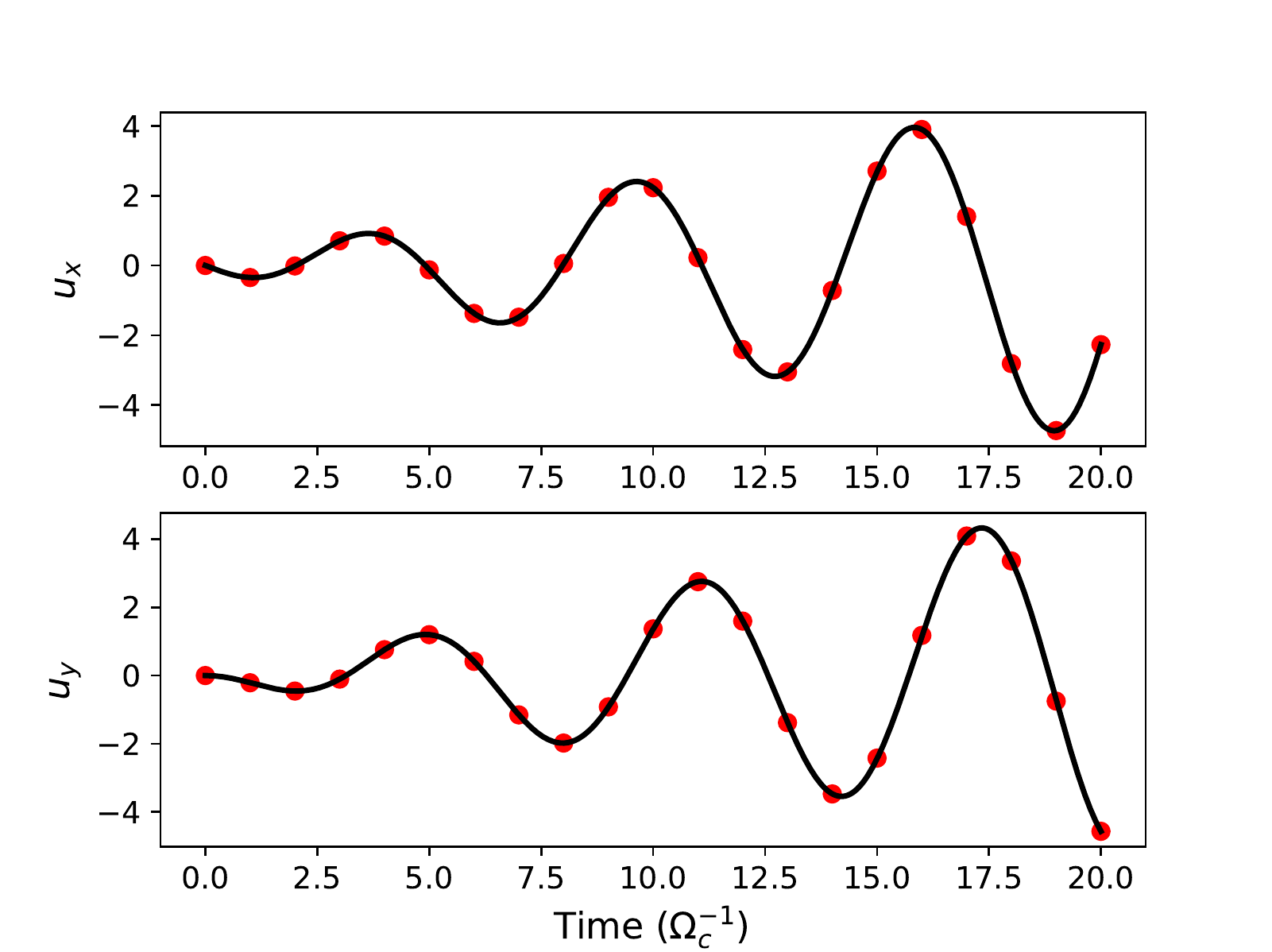}
    \caption{The value of the flow computed from the simulations (red dots) over-plotted with the analytic solution (black line) for non-resonant (top) and resonant (bottom) cases. The values of the flow are plotted at $u_x(x=\pi), u_y(x=\pi)$.} \label{fig:constAdvectionMoments}
\end{figure}
Both simulations are performed on a 1X2V grid with $L_x = 2 \pi$, and velocity space extents $[-8 v_{th_e}, 8 v_{th_e}]$ in both the $v_x$ and $v_y$ dimensions. We use polynomial order 2, $N_x = 2$, and $N_{v_x} = N_{v_y} = 16$, so $\Delta v_x = \Delta v_y = 1 v_{th_e}$. 
Periodic boundary conditions are employed in configuration space, and zero flux boundary conditions are employed in velocity space.
Even on a coarse velocity space mesh, the evolution of the distribution function is well-described by our analytic solution, with very little diffusion as electrons circulate around the magnetic field. 
Additionally, we run the non-resonant case, $\tilde{\omega} = 0.5, E_0/B_0 = 1.0$, to $t = 1000 \Omega_{c}^{-1}$ and plot the final distribution function in Figure~\ref{fig:polyOrderCompConstAdvectionF}.
\begin{figure}[!htb]
    \centering
    \includegraphics[width=\textwidth]{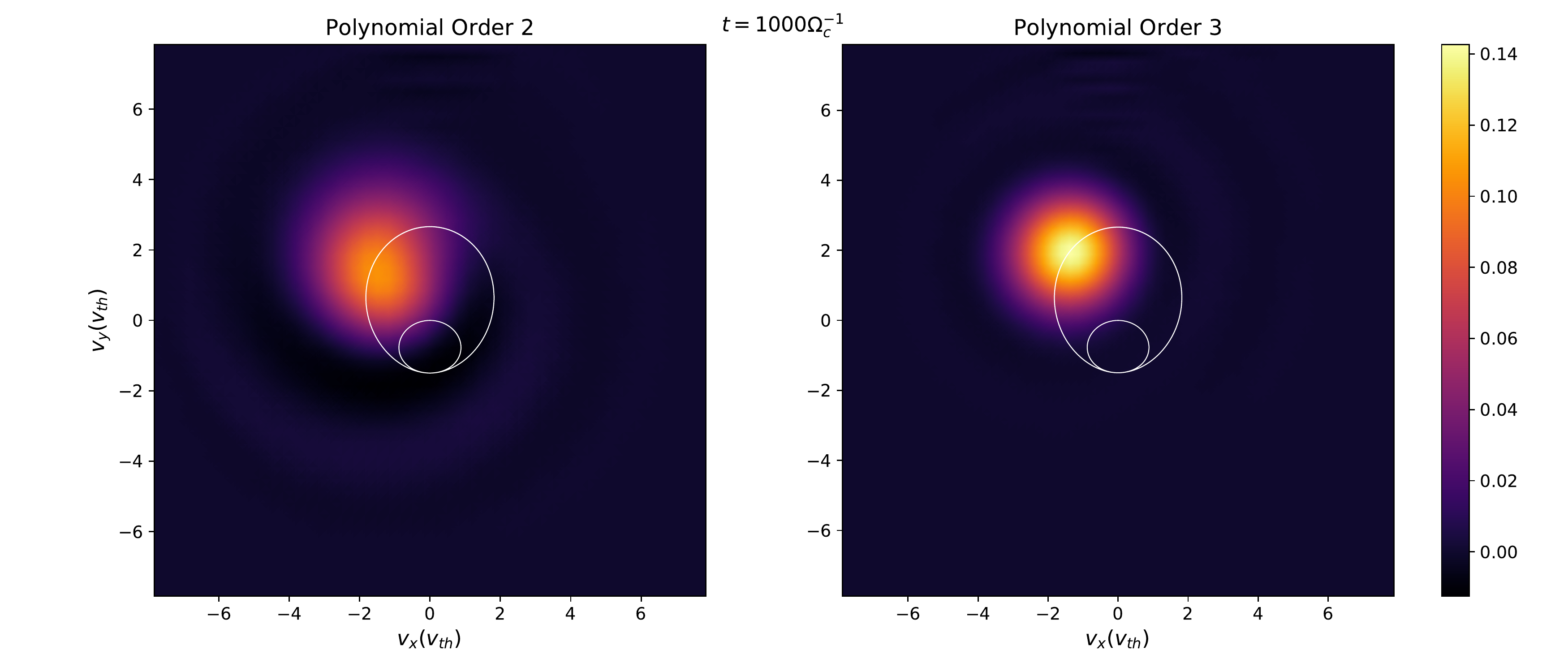}
    \caption{Comparison of a polynomial order 2 (left) and polynomial order 3 (right) simulation of the non-resonant case at $t = 1000 \Omega_{c}^{-1}$. The electron distribution function is plotted at $f(x=\pi, v_x, v_y)$. On this coarse mesh, $N_{v_x} = N_{v_y} = 16$ with velocity space extents $[-8 v_{th_e}, 8 v_{th_e}]$ in both the $v_x$ and $v_y$ dimensions, so $\Delta v_x = \Delta v_y = 1 v_{th_e}$, the diffusion of the distribution function in phase space starts to become noticeable for the polynomial order 2 case after running the simulation for a long enough time. But, we note that for the same coarse mesh, the distribution function in the polynomial order 3 simulation remains pristine at this late time.} \label{fig:polyOrderCompConstAdvectionF}
\end{figure}
While we note some noticeable diffusion in the polynomial order 2 simulation, by increasing to polynomial order 3 on the same grid, we virtually eliminate this diffusion, again illustrating the virtues of a high-order method applied to the discretization of the Vlasov--Maxwell system. 

It is worth emphasizing an inherent flexibility we have in our Vlasov--Maxwell solver in \gke: we can choose whatever polynomial order is ultimately necessary for the required dynamics. 
While the polynomial order 3 simulation of the non-resonant case is slightly more expensive, an 80 percent increase in cost for a $t=1000 \Omega_c^{-1}$ simulation for the specified grid resolution of $N_x = 2$, $N_{v_x} = N_{v_y} = 16$, this freedom to increase the polynomial order as needed ultimately allows us to tackle a wider range of problems. 
And, we wish to point out that an 80 percent increase in cost is actually better than we would naively expect, as there are 60 percent more basis functions, $32/20 = 1.6$, going from polynomial order 2 to 3, and we require 50 percent more time-steps for the high polynomial order simulation from a more restrictive CFL condition. 
This back-of-the-envelope calculation suggests that polynomial order 3 should be 2.5 times more expensive for the same grid resolution and end time. 
The improvement over the naive cost scaling occurs because the higher polynomial order computational kernels obtain better efficiency in terms of arithmetic intensity, i.e., the number of floating point operations per byte of memory moved.

\subsection{Landau Damping of Langmuir Waves}\label{sec:CollisionlessLandauDamping}

Consider a plasma, or Langmuir, wave propagating in a plasma of protons and electrons whose distribution functions are given by Maxwellians, \eqr{\ref{eq:ICMaxwellian}}. 
Langmuir waves are dispersive waves, with a dispersion relation given by
\begin{align}
1 - \frac{1}{2 k^2 \lambda_{De}^2} Z'\left(\frac{\omega}{\sqrt{2} v_{th_e} k} \right ) = 0, \label{eq:plasmaDisp}
\end{align} 
in the limit that the proton mass is much larger than the electron mass and the protons can thus be considered immobile.
$Z(\zeta)$ is the plasma dispersion function, defined as
\begin{align}
Z(\zeta) = \frac{1}{\sqrt{\pi}} \int_{-\infty}^\infty \frac{e^{-x^2}}{x - \zeta} dx, \label{eq:plasmaZFunc}
\end{align}
with the derivative of the plasma dispersion function given by
\begin{align}
Z'(\zeta) = -2[1 + \zeta Z(\zeta)].
\end{align}
An application of complex integration techniques shows that depending on the sign of the largest imaginary component of the frequency $\omega = \omega_r + i \gamma$, the wave is either unstable and will grow with time, or will damp away, a phenomenon known as Landau damping. 

For Langmuir waves propagating in a Maxwellian plasma of protons and electrons, the waves quickly damp. Using a 1X1V setup, we can initialize Langmuir waves in the Vlasov--Maxwell system with a small density perturbation and the corresponding electric field to support this density perturbation,
\begin{align}
n_e(x) & = n_0[1 + \alpha \cos(kx)] \label{eq:electronLangmuirInit} \\
n_p(x) & = n_0 \\
E_x(x) & = -|e| \alpha \frac{\sin(kx)}{\epsilon_0 k}, \label{eq:ExLangmuirInit}
\end{align}
where $n_0 = 1.0$, $\alpha$ is the size of the perturbation, and $k$ is the wavenumber of the wave.
The electric charge $e$ and permittivity of free space $\epsilon_0$ are included in the electric field to satisfy \eqr{\ref{eq:divE}}. 
Choosing $\alpha \ll 1$ allows us to compare with the linear analytical theory described above. 
The box size is set to $L_x = 2\pi/k$ so exactly one wavelength fits in the domain. 
Specific parameters for these runs are: $\alpha = 10^{-4}$, $m_p/m_e = 1836$, $T_p/T_e = 1.0$, and  $v_{th_e}/c = 0.1$. 
For the proton species, the velocity space extents are $\pm 6 v_{th_p}$, and for the electrons, the velocity space extents are $\pm 6 v_{th_e}$. 
The boundary conditions in configuration space are periodic, while the boundary conditions in velocity space are zero flux. 

The resolution is chosen for each simulation to adequately resolve the Debye length in configuration space and to mitigate numerical recurrence in velocity space.
By numerical recurrence, we refer to the process by which the collisionless system artificially ``un-mixes'' if the distribution function forms structure at the velocity space grid scale, see, e.g., \citet{Cheng:2013} for a discussion of numerical recurrence in DG schemes. 
Numerical recurrence is inevitable with finite velocity resolution for this particular problem, because the Landau damping of the wave will create smaller and smaller velocity space structure through the phase-mixing of the wave. 
We could completely eliminate this issue with a diffusive process in velocity space, such as a collision operator, and we will explore the effects of collisions on the Langmuir wave in Section~\ref{sec:collisionLangmuirWave}. 
Here, we choose ample velocity resolution so that the wave damps enough for us to extract a clean damping rate and frequency for the initialized wave. 
We find for the longest wavelengths, using polynomial order 2, a resolution of 64 points in configuration space adequately resolves the Debye length, and 128 points in velocity space permits the wave to phase-mix sufficiently to extract damping rates. 

The evolution of the electromagnetic energy, as well as the other components of the energy, in a prototypical simulation is given in Figure~\ref{fig:LangmuirWaveEnergy}. 
\begin{figure}[!htb]
    \centering
    \includegraphics[width=0.49\textwidth]{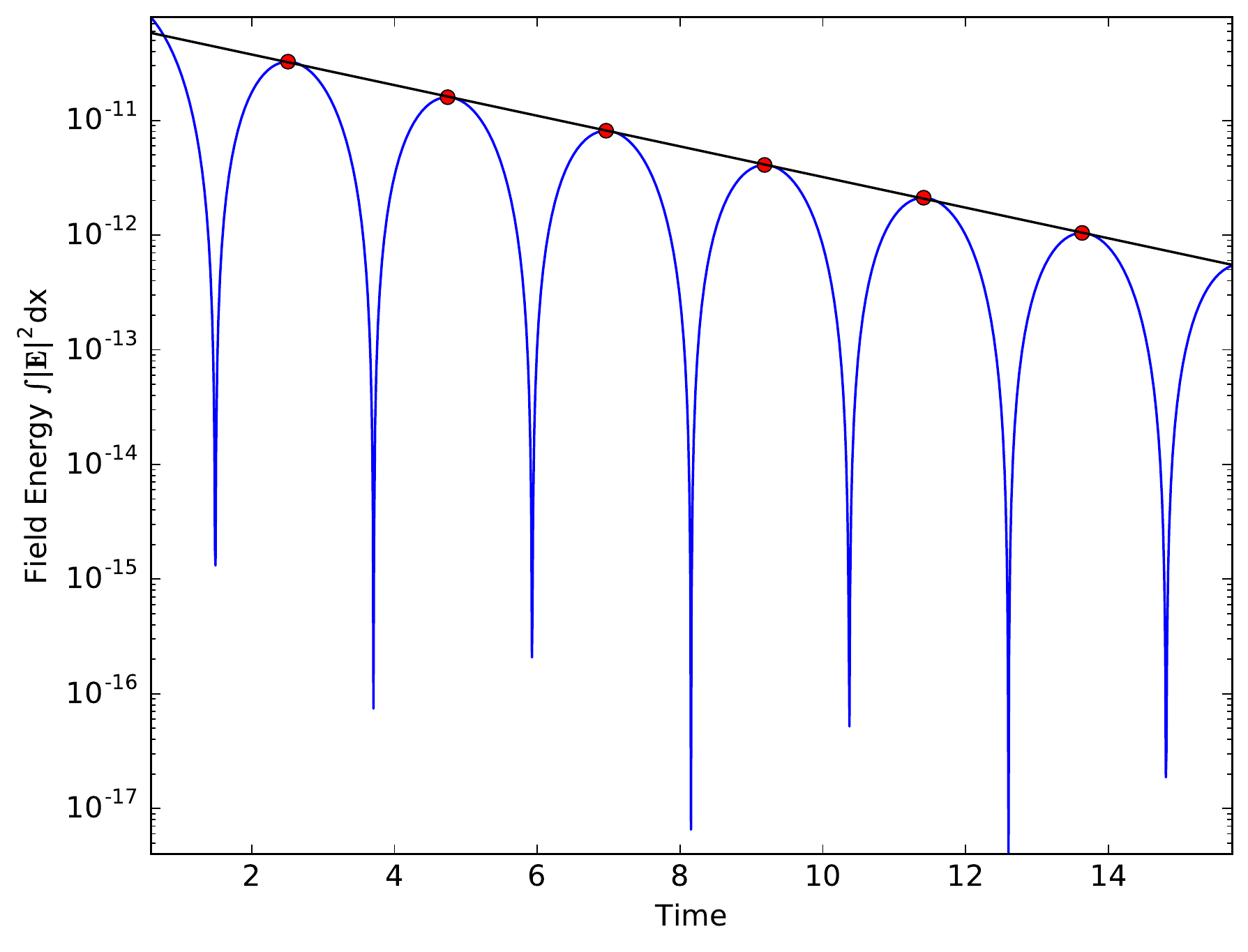}
    \includegraphics[width=0.49\textwidth]{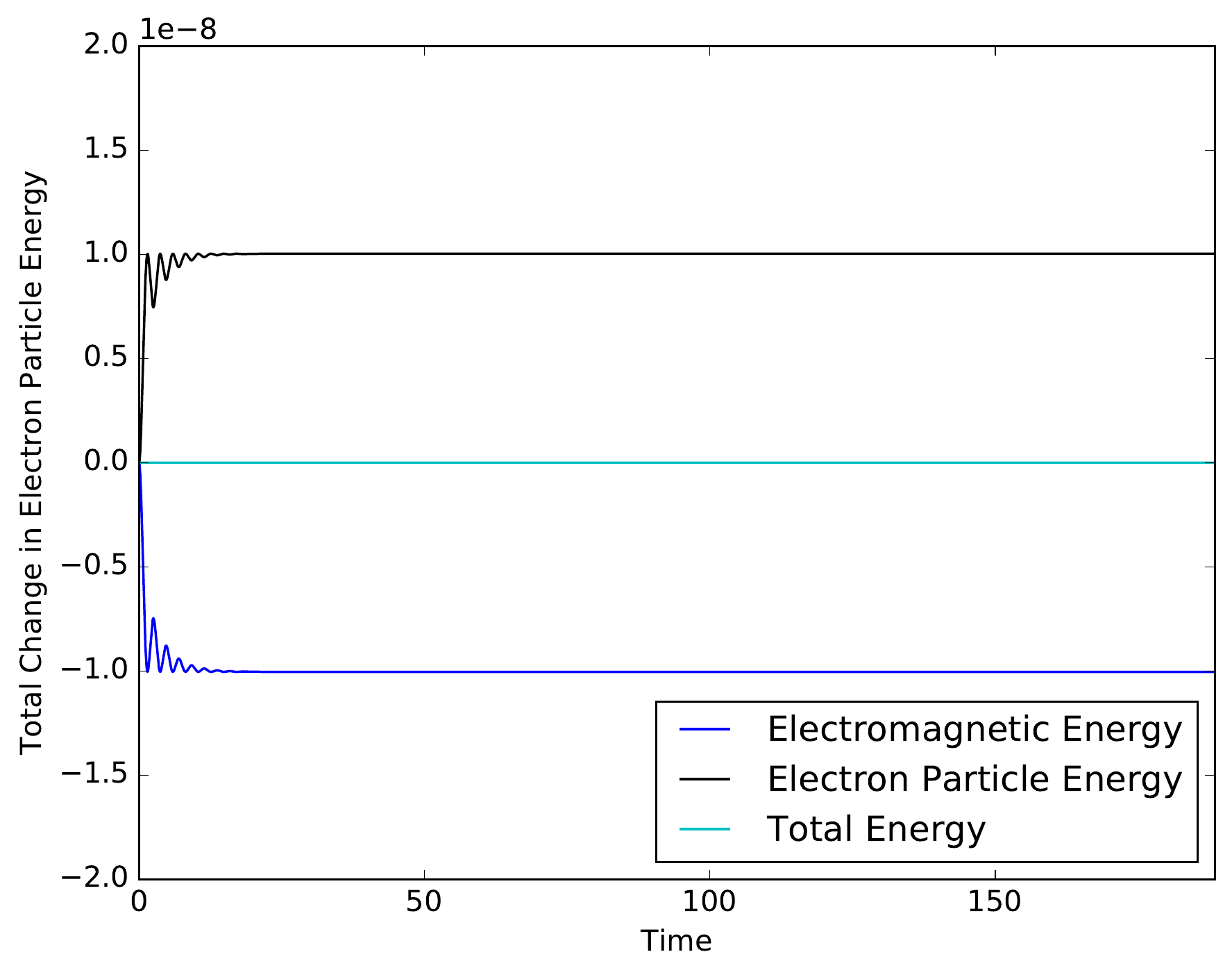}
    \caption{Prototypical evolution of the electromagnetic energy (blue), $\frac{\epsilon_0}{2} \int |\mvec{E}|^2 dx$, for the damping of a Langmuir wave, in this case $k \lambda_D = 0.5$, for a number of plasma periods (left), and the evolution of various components of the energy for the full length of the simulation (right). The right plot is the relative change in the energy component compared to the total energy at $t=0$, i.e, $\Delta E_{comp}/E_0$. The local maxima (red circles) of the evolution in the left plot are used to determine both the damping rate and frequency of the excited wave via linear regression, with the black line being our reference fit for the damping rate. We note that energy is very well conserved, and, as expected, the plasma waves damp on the electrons, converting electromagnetic energy to electron thermal energy.}
    \label{fig:LangmuirWaveEnergy}
\end{figure}
Comparisons of a number of Vlasov--Maxwell simulations with theory for both the damping rates and the frequencies of the waves are given in Figure~\ref{fig:LangmuirWaveTheory}. 
\begin{figure}[!htb]
    \centering
    \includegraphics[width=0.49\textwidth]{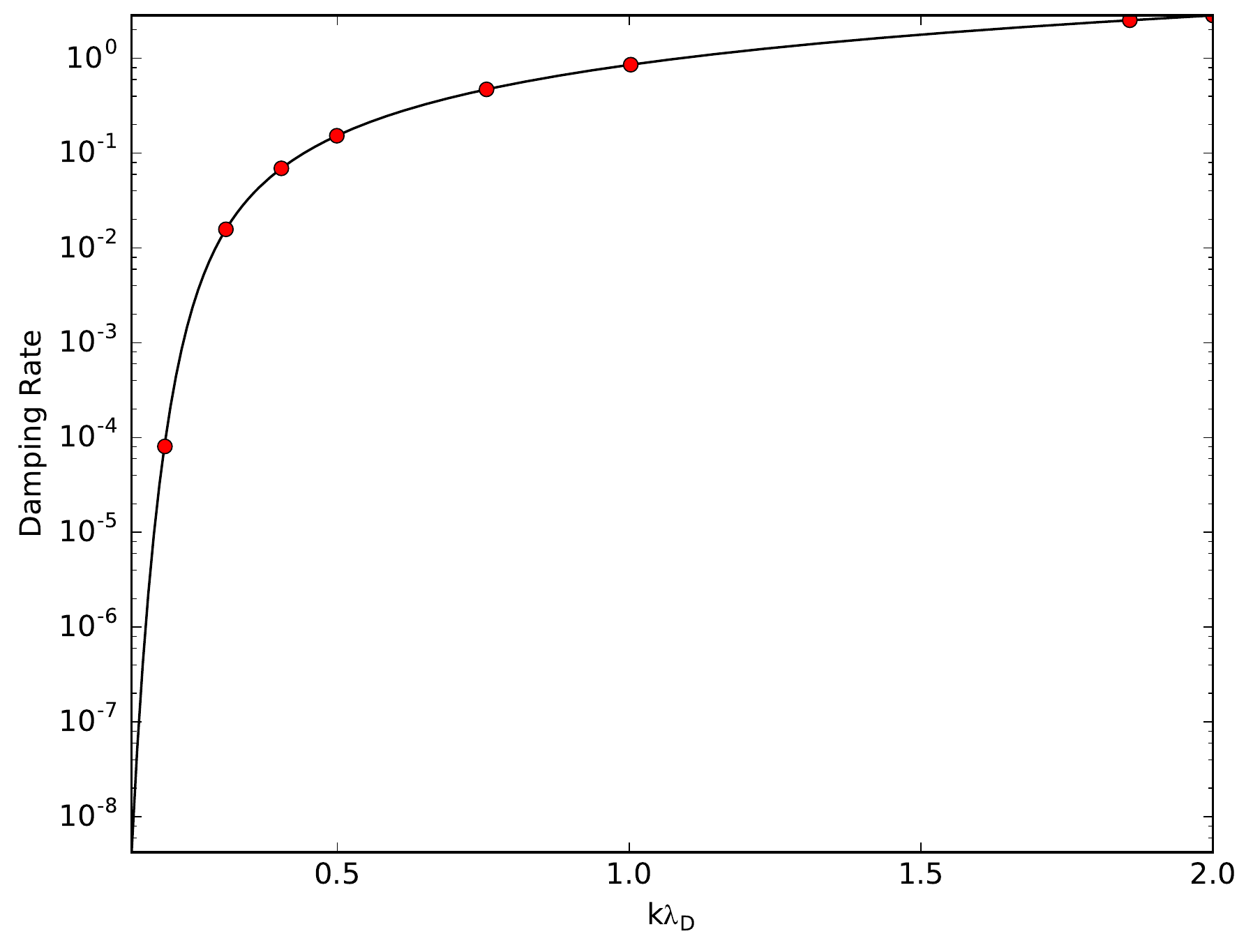}
    \includegraphics[width=0.49\textwidth]{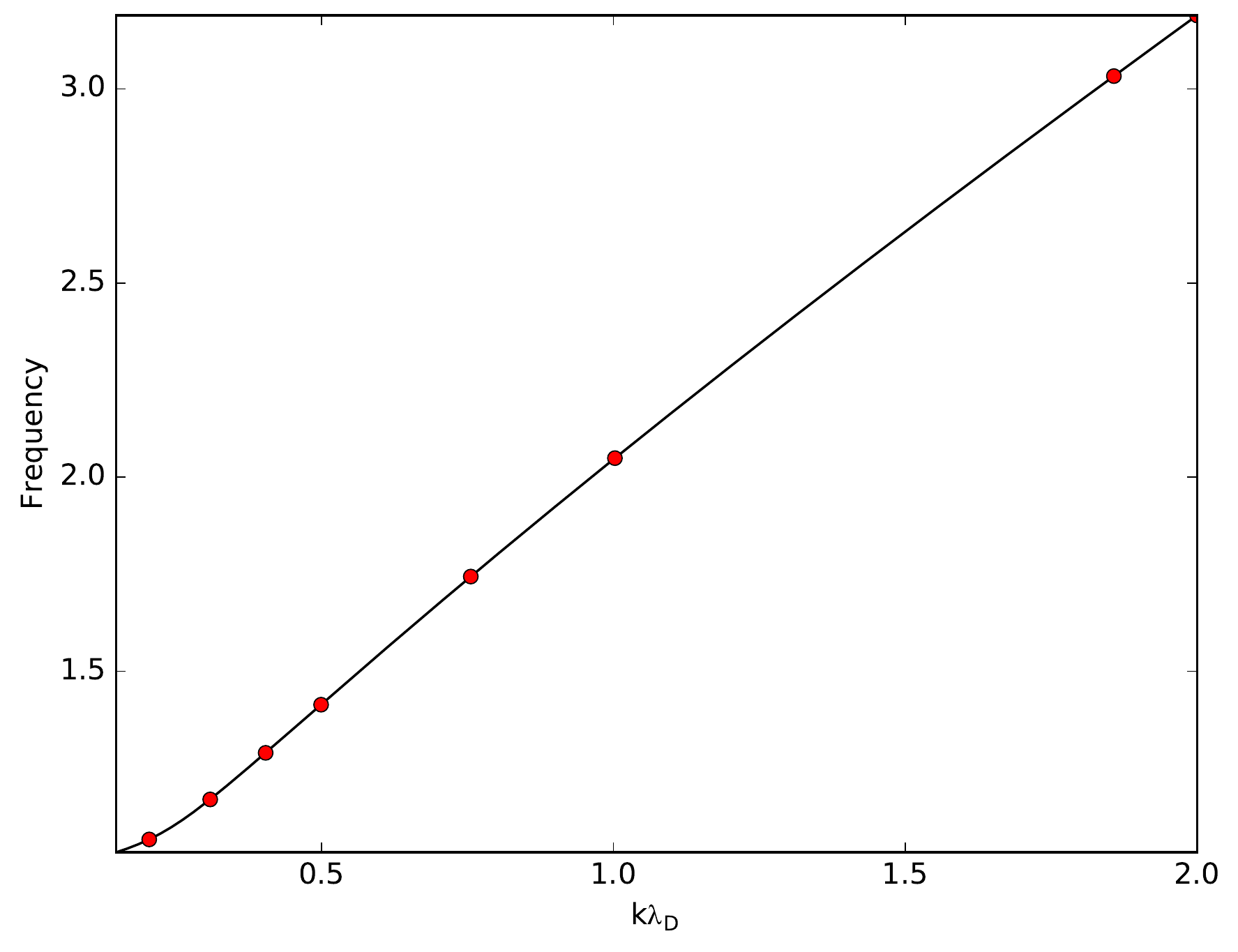}
    \caption{Damping rates (left) and frequencies (right) of Langmuir waves from theory (solid line) and for a number of Vlasov--Maxwell simulations (red circles). The solid lines are obtained using a root finding technique applied to \eqr{\ref{eq:plasmaDisp}}. The x-axis of both figures is normalized to the Debye length, $\lambda_D$, and the y-axis of both figures is normalized to the plasma frequency, $\omega_{pe}$.}
    \label{fig:LangmuirWaveTheory}
\end{figure}
For the theoretical result, we solve \eqr{\ref{eq:plasmaDisp}} using a root-finding technique. 
We emphasize that we solve the Vlasov--Maxwell system in its entirety, including the nonlinear term, for both the protons and electrons. With the above simulation parameters, the plasma waves damp entirely on the electron species, so the approximation that the protons are essentially immobile in our dispersion relation holds to high precision.
We also wish to note that the resolution of 64 points in configuration space is not required for every simulation. For example, the prototypical simulation presented in Figure~\ref{fig:LangmuirWaveEnergy} uses only 16 points in configuration space, or approximately one grid cell per Debye length. As long as the gradients are properly resolved, the Vlasov--Maxwell discretization is extremely robust.

\subsection{Three-Species Collisionless Electrostatic Shock}\label{sec:threeSpeciesShock}

We turn now to benchmarking the flexibility of our Vlasov--Maxwell solve in \gke~by considering the evolution of a plasma with more than two species.
In \citet{Pusztai:2018}, a semi-analytic model for electrostatic collisionless shocks was derived and then checked against the results of a number of fully nonlinear Vlasov--Maxwell calculations.
The Vlasov--Maxwell simulations performed in \citet{Pusztai:2018} were done with an initially alias-free nodal scheme implemented and described in \citet{Juno:2018}, before the algorithm was improved with an orthonormal, modal basis---see Chapter~\ref{ch:ImplementationDGFEM} for details on the othornormal, modal basis compared to the nodal basis.
In the following test, we employ the orthonormal, modal basis algorithm for the three-species shock problem and reproduce the results of \citet{Pusztai:2018} with our new and improved implementation of the DG scheme for the VM-FP system of equations.

The three-species collisionless shock setup described in \citet{Pusztai:2018} is repeated here for clarity.
A Maxwellian, \eqr{\ref{eq:ICMaxwellian}}, with a density gradient in 1X1V in all three species is initialized and allowed to evolve freely, as in Section~\ref{sec:KineticSodShock}, but now allowing the electromagnetic fields to evolve as well.
This density gradient is a step function, with $n_L = n_0$, and $n_R = 2 n_0$, where $n_0$ is the density normalization, and the subscripts $L$ and $R$ denote the left and right values of the density in the 1D configuration space domain.

The three species in the plasma are electrons, fully ionized aluminum, and a proton impurity species. 
The real mass ratios of the various species are employed so that $m_p/m_e = 1836, m_i/m_p = 27$, where the subscript $i$ denotes the mass of the aluminum ion species. 
Note that $Z_i = 13$ for fully ionized aluminum.
Since the proton species is an impurity, we choose $n_p/n_i = 0.01$. The electrons are much hotter than either ion species, $T_e/T_p = 45, T_p = T_i$. The configuration space domain has length $L_x = 100 \lambda_D$.
Note that the jump in the density is initialized at $x = 50 \lambda_D$, the middle of the domain. 
The velocity space extents of the electrons, aluminum ions, and proton impurity are $[-6 v_{th_e}, 6 v_{th_e}], [-18 v_{th_i}, 54 v_{th_i}]$, and $[-6 v_{th_p}, 18 v_{th_p}]$ respectively, with $v_{th_s}$ denoting the thermal velocity of the specified species.
We use the same resolution as \citet{Pusztai:2018}, $N_x = 256$ and $N_v = 96$ for all three species, and $p=2$ Serendipity elements. 
Copy boundary conditions are employed in the $x$ dimension as in Section~\ref{sec:KineticSodShock}, i.e., we employ a perfectly matched layer in configuration space to allow the electromagnetic fields and distribution function to evolve freely at $x = 0$ and $x = 100 \lambda_D$, and zero flux boundary conditions are employed in velocity space.

We plot the aluminum and proton distribution functions in the vicinity of the shock in Figure~\ref{fig:ThreeSpeciesShockDistributionFunctions}. 
We note that this figure is similar to Figure 9 in \citet{Pusztai:2018}.
\begin{figure}[!htb]
    \centering
    \includegraphics[width=\textwidth]{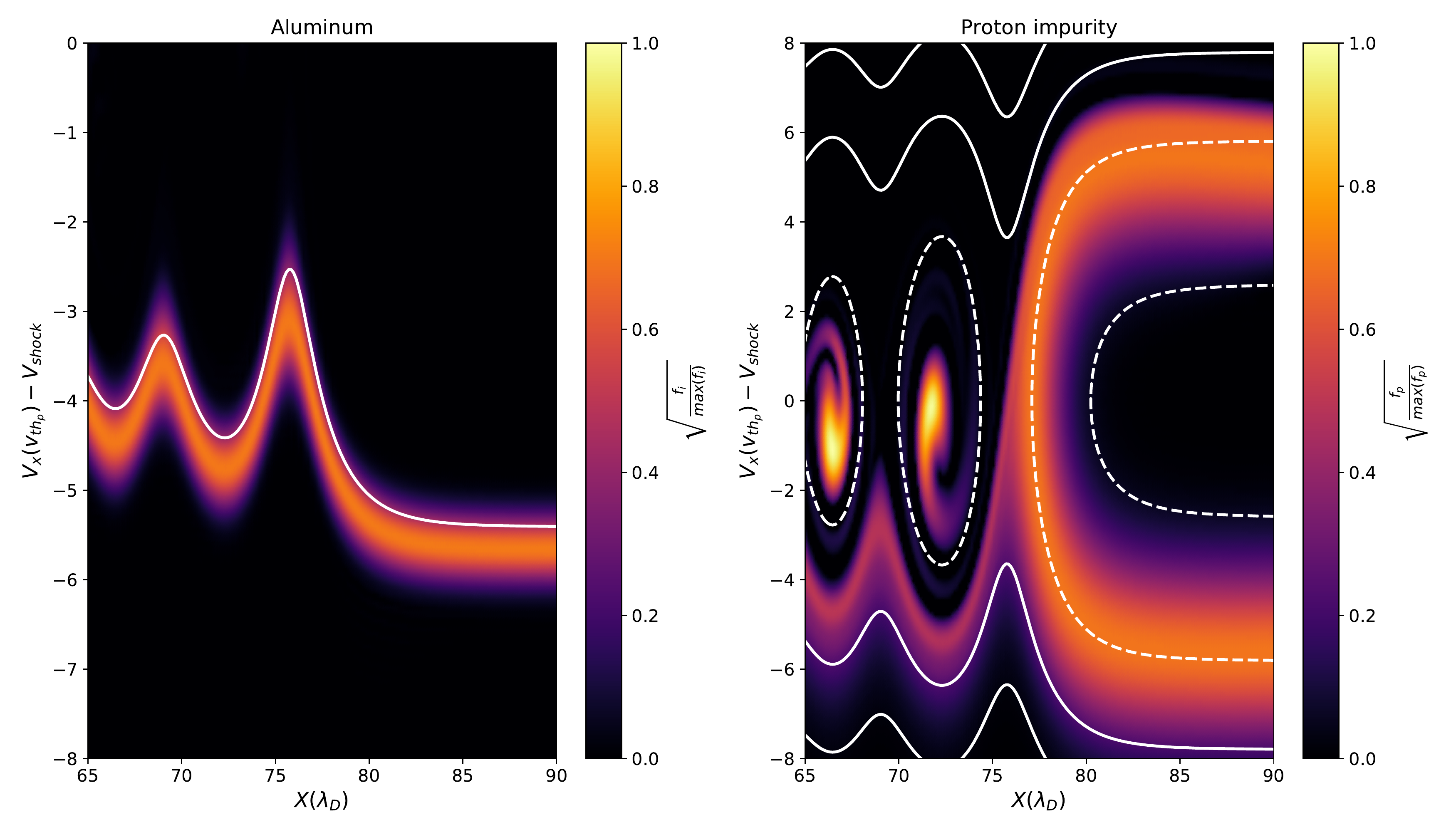}
    \caption{The aluminum (left) and proton impurity (right) distribution functions in the vicinity of the shock at $t = 35 \sqrt{m_e/m_p} \omega_{pe}^{-1} \sim 1500 \omega_{pe}^{-1}$. Over-plotted in white are contours of constant $H(x,v) = \frac{1}{2} m_s v^2 + q_s \phi(x)$, the Hamiltonian. We note that the Hamiltonian has been transformed to the rest frame of the shock, $\hat{v} = v - V_{shock}$, and there is some freedom in computing $\phi(x)$ from the electric field in our simulations. We choose $\phi(x = 0) = 0$ on the left edge of the domain, and then integrate $E_x$ along the 1D domain to determine the electrostatic potential. We draw attention to the trapped particle regions in the proton distribution function just down-stream of the shock, which amplify the cross-shock potential and lead to a large reflected population of protons. Note that we are plotting a normalized value for the distribution function, as in \citet{Pusztai:2018}, and that the v-axes are different for the two species.}
    \label{fig:ThreeSpeciesShockDistributionFunctions}
\end{figure}
These distribution functions are plotted at $t = 35 \sqrt{m_e/m_p} \omega_{pe}^{-1} \sim 1500 \omega_{pe}^{-1}$ and over-plotted in white are contours of constant $H(x,v) = \frac{1}{2} m_s v^2 + q_s \phi(x)$, the Hamiltonian. 
We note that the Hamiltonian has been transformed to the rest frame of the shock, $\hat{v} = v - V_{shock}$, and there is some freedom in computing $\phi(x)$ from the electric field in our simulations. 
We choose $\phi(x = 0) = 0$ on the left edge of the domain, and then integrate $E_x$ along the 1D domain to determine the electrostatic potential. 

We find similar results to \citet{Pusztai:2018} for the value of the shock velocity, $V_{shock} = 5.66 v_{th_p}, M = 1.216$, where $M = V_{shock}/\sqrt{Z_i T_e/m_i}$ is the mach number, the value of the maximum normalized electrostatic potential, $\hat{\phi}_{max} = 23.9$, where $\hat{\phi} = e \phi/T_p$, and the measured ratio of the reflected population of the proton impurity species, $\alpha_p = 0.874$, computed from integrating the density in the upstream and reflected components of the proton distribution function at $x=85 \lambda_D$. 
These results are in good agreement with the semi-analytic model derived in \citet{Pusztai:2018}, especially for the reflected proton ratio, $\alpha_p \sim 0.889$.

The utility of a continuum discretization of the Vlasov-Maxwell system is made manifest by the clean representation of the proton impurity distribution function in Figure~\ref{fig:ThreeSpeciesShockDistributionFunctions}.
The trapped particles in the downstream region amplify the cross-shock potential and lead to a large reflected population.
We make no claims of the effort that may be required to reproduce these features with a particle code.
We merely wish to emphasize here that a continuum representation can be useful for elucidating features of the particle distribution function relevant to the overall dynamics. 

\subsection{Lower Hybrid Drift Instability}

The Vlasov--Maxwell system of equations supports a large zoo of instabilities.
Many of these instabilities are fundamentally ``kinetic'' in nature, meaning their ultimate evolution is challenging to model with fluid systems of equations.
In other words, the actual collisionless dynamics of the plasma is a critical component to the evolution of the instability, and equations that evolve a truncated set of of velocity moments of the Vlasov--Maxwell system of equations will have difficulty modeling these instabilities.

Determining whether an extended two-fluid model could capture the dynamics of current sheets unstable to modes such as the lower-hybrid drift instability (LHDI) \citep{Hirose:1972,Davidson:1977,Yoon:2002} was the focus of a recent paper, \citet{Ng:2019} (see also \citet{NgThesis:2019}).
Due to the inhomogeneities in the magnetic field and density in the vicinity of the current, diamagnetic effects may become important and drive instabilities such as the LHDI.
As part of this study, Vlasov--Maxwell simulations of the LHDI were performed with \gke~to compare both the linear and nonlinear stages of the evolution of the unstable current sheet in a fully kinetic model and the aforementioned extended two-fluid models. 
A simulation of a current sheet unstable to the LHDI is reproduced here as evidence our modal, orthonormal DG discretization of the Vlasov--Maxwell system of equations provides a fiducial representation of the dynamics of this kinetic instability.

We use the same parameters as \citet{Ng:2019}. In 2X2V, two spatial, $(x,y)$, and two velocity, $(v_x, v_y)$, dimensions, we initialize a gradient in an out-of-plane magnetic field,
\begin{align}
    B_z(x, y) & = B_0 (y) + \delta B(x,y), \\
    B_0(y) & = -C_0 \tanh\left ( \frac{y}{\ell} \right ), \\
    \delta B(x,y) & = C_1 \cos \left ( \frac{\pi y}{L_y} \right ) \sin \left ( \frac{2\pi m x}{L_x}\right ), 
\end{align}
where $\ell = \rho_p$ and $m = 8$, i.e., a current sheet of width $\rho_p$ and an $m=8$ perturbation to the current sheet. 
Here, $\rho_p$ is the proton Larmor radius, $\rho_p = v_{th_p}/\Omega_{cp}$. 
The box size is $L_x \times L_y = 6.4 \rho_p \times 12.8 \rho_p$. 
The velocity space extents for electrons are $[-8 v_{th_e}, 8 v_{th_e}]^2$, and the velocity space extents for the protons are  $[-6 v_{th_e}, 6 v_{th_e}]^2$. 
Zero flux boundary conditions are used in velocity space, periodic boundary conditions are used in $x$, and reflecting boundary conditions are used in $y$.
By reflecting, we mean that the particles reflect off the $y$-boundary, and the boundary condition for Maxwell's equations is that of a perfect conductor, zero tangent electric field and zero normal magnetic field. 
The grid resolution is $N_x \times N_y = 128 \times 256$, with $N_v^2 = 32^2$ grid points in velocity space for the electrons, and $N_v^2 = 24^2$ for the protons, with piecewise quadratic Serendipity elements.

Additional parameters are $v_{th_e}/c = 0.06, m_p/m_e = 36, T_p/T_e = 10$, and $\beta_{tot} = 1.0$. 
Since $\beta_{tot} = 1.0$ and the protons are 10 times hotter than the electrons, we have $\beta_p = 10.0/11.0$ and $\beta_e = 1.0/11.0$. 
The system is normalized such that the constants are $C_0 = v_{th_e}/\sqrt{\beta_e} = v_{Ae}$, the electron Alfv\'en velocity, and $C_1 = 10^{-4}/m$ where $m$ is the mode number being initialized.
Note that with the chosen parameters, the resolution is such that $\Delta x \approx \rho_e$, where $\rho_e$ is the electron gyroradius, $\rho_e = v_{th_e}/\Omega_{ce}$.

Finally, we note two critical components to initializing the system.
First, the astute reader will notice that the the initial magnetic field has non-zero curl, and therefore there must be a supporting current in the plasma, thus we refer to this initial condition as a current sheet,
\begin{align}
    J_x & = -\frac{C_0}{\ell} \sech^2\left ( \frac{y}{\ell} \right )  - C_1 \frac{\pi}{L_y} \sin\left ( \frac{\pi y}{L_y} \right ) \sin \left ( \frac{2\pi m x}{L_x}\right ), \\
    J_y & = -C_1 \frac{2 \pi m}{L_x} \cos \left ( \frac{\pi y}{L_y} \right ) \cos \left ( \frac{2\pi m x}{L_x}\right ).
\end{align}
Since the protons are 10 times hotter than the electrons, we give the appropriate fraction of the current to the protons and electrons, 10.0/11.0 to the protons and 1.0/11.0 to the electrons. 
Second, to initialize the particle distribution functions, we initialize both a current carrying and background Maxwellian, the sum of two instances of \eqr{\ref{eq:ICMaxwellian}}, for each species,
\begin{align}
    f_s (x, y, v_x, v_y) = \frac{m_s n_0 \sech^2\left ( \frac{y}{\ell} \right ) }{2 \pi T_s} \exp & \left (- m_s \frac{(v_x - u_{x_s})^2 + (v_y - u_{y_s})^2}{2 T_s} \right ) \notag \\
    & + \frac{m_s n_B}{2 \pi T_s} \exp \left (- m_s \frac{v_x^2 + v_y^2}{2 T_s} \right ),
\end{align}
where,
\begin{align}
    u_{x_s} & = T_{frac} \frac{J_x}{q_s \sech^2\left ( \frac{y}{\ell} \right ) }, \\
    u_{y_s} & = T_{frac} \frac{J_y}{q_s \sech^2\left ( \frac{y}{\ell} \right ) },
\end{align}
and $n_0 = 1.0$ and $n_B = 10^{-3}$. 
Note that $T_{frac}$ is the aforementioned fraction of the current given to the protons and electrons, 10.0/11.0 and 1.0/11.0 respectively. 
This background density is for numerical stability, so that the density does not go to zero away from the current sheet.

We plot the results of this simulation in Figures \ref{fig:lhdiElectricField} and \ref{fig:lhdiProtonDistributionFunction}, focusing on the late linear stage when the traditional mode structure of the LHDI is most visually evident.
In Figure \ref{fig:lhdiElectricField}, we see the logarithmic growth of the electric field associated with the LHDI\footnote{Note that we use a slightly different coordinate system from \citet{Ng:2019}, who instead define the 2X2V domain as $(y, z, v_y, v_z)$. This is why the equivalent mode structure found in \citet{Ng:2019} is in the y-electric field, as opposed to here, where the LHDI mode structure is found in the x-electric field.}, with a growth rate found $\gamma \sim 1.1 \Omega_{ci}$, in agreement with linear theory and \citet{Ng:2019}'s computation, as well as the mode structure expected for an $m=8$ perturbation.
\begin{figure}[!htb]
    \centering
    \includegraphics[width=\textwidth]{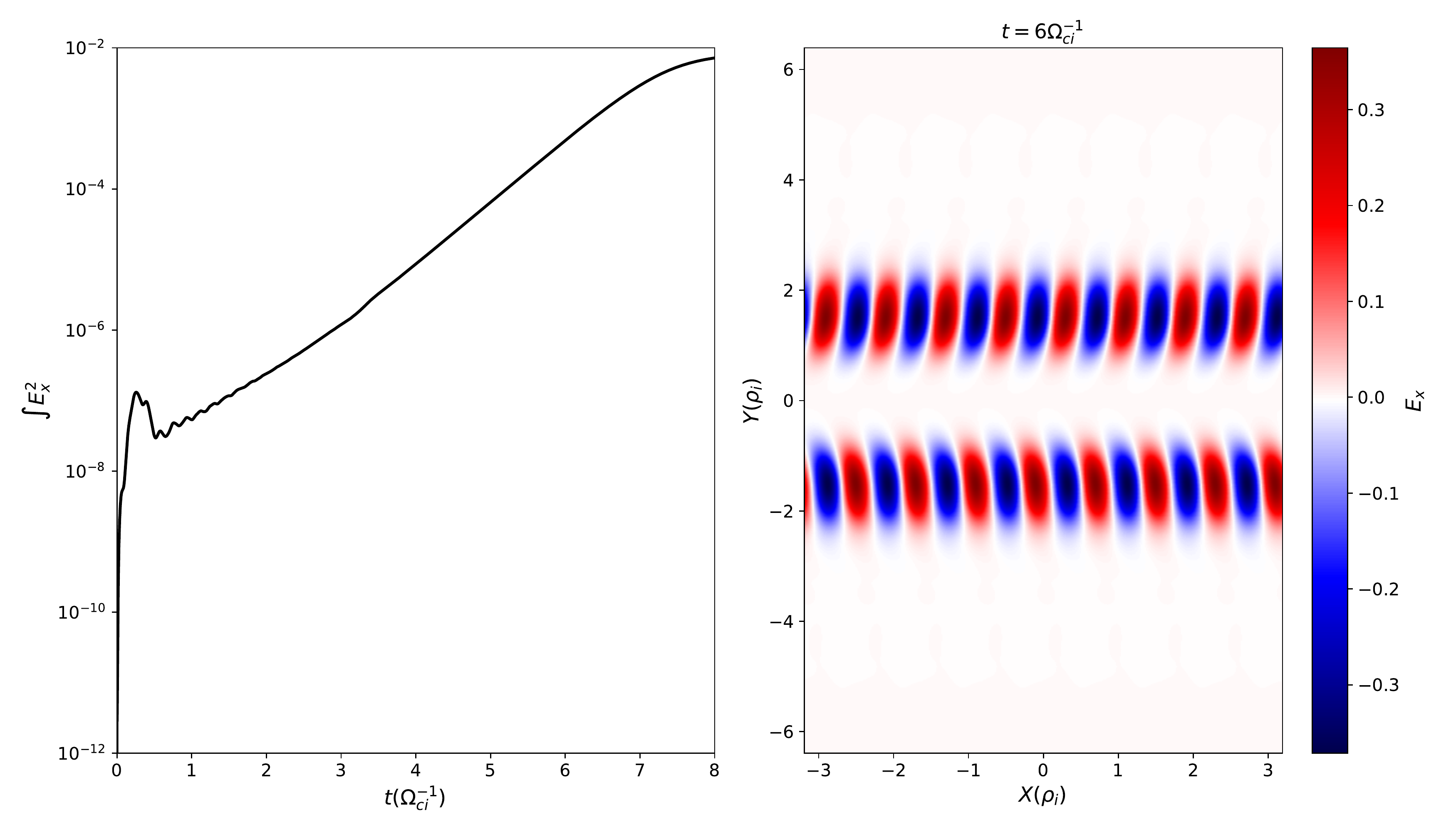}
    \caption{The exponential growth of the LHDI electric field (left) and the LHDI electric field visualized in configuration space late in the linear stage at $t=6 \Omega_{ci}^{-1}$ (right). The growth rate, $\gamma \sim 1.1 \Omega_{ci}$, compares well with linear theory and the results presented in \citet{Ng:2019}. Likewise, the mode structure in a snapshot of the LHDI electric field corresponds to the typical LHDI electric field for an $m=8$ perturbation, with the electric field localized to the edge of the current sheet where the density gradient is largest. The LHDI electric field magnitude is normalized to $B_0 v_{A_0} = B_0^2/\sqrt{\mu_0 n_0 m_p}$ where $B_0$ is the asymptotic magnetic field and $n_0$ is the density in the current layer.}
    \label{fig:lhdiElectricField}
\end{figure}
Likewise the structure is concentrated away from the current sheet centered at $y=0$, as expected since it is the edge of the current sheet where the density gradient is largest and thus most unstable to the LHDI.

In Figure \ref{fig:lhdiProtonDistributionFunction}, we present the proton distribution function at the edge of the current sheet and confirm the presence of the proton resonance expected for the LHDI.
\begin{figure}[!htb]
    \centering
    \includegraphics[width=\textwidth]{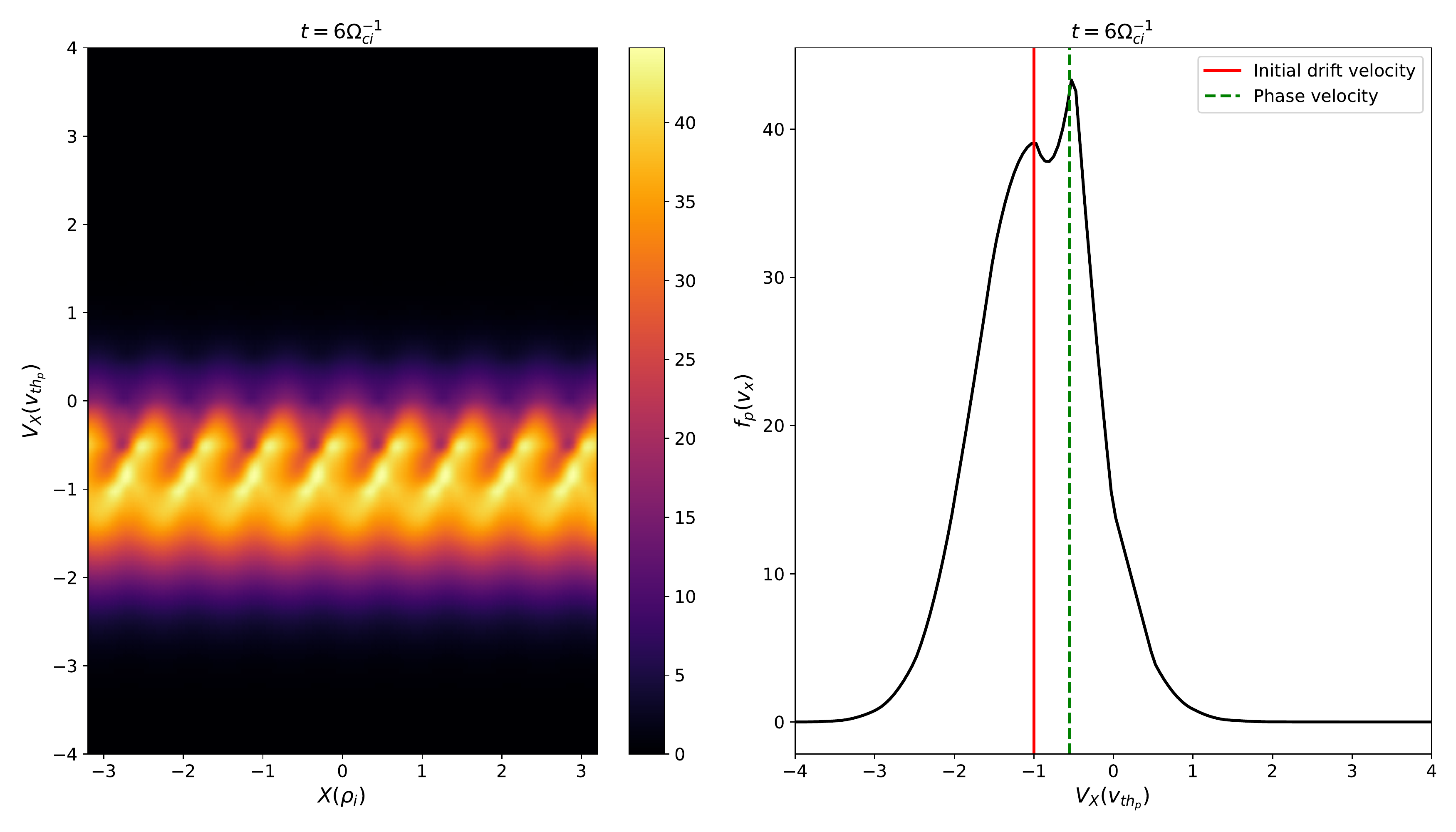}
    \caption{The distribution function for the protons plotted at $f(x, y = -1.7 \rho_p, v_x, v_y = 0.0 v_{th_p})$, at the edge of the current sheet (left), and a further cut of the 2D distribution function, $f(x = 2.3 \rho_p, y = -1.7 \rho_p, v_x, v_y = 0.0 v_{th_p})$ (right). The mode structure for an $m=8$ perturbation is again easily seen in the 2D visualization of the proton distribution function, as the protons at the edge of the current sheet are resonant with the growing electric field from the LHDI. We have over-plotted the initial drift velocity (red solid) and the phase velocity for the resonance condition (green dashed) on top of the 1D cut of the distribution function at $x = 2.3 \rho_p$.}
    \label{fig:lhdiProtonDistributionFunction}
\end{figure}
Both the initial drift and the phase velocity for the ion resonance condition are over-plotted with a cut of the distribution function through $x = 2.3 \rho_p, y = -1.7 \rho_p, v_y = 0.0 v_{th_p}$\footnote{Note that \citet{Ng:2019} contains a sign difference in the initial magnetic field profile, which manifests as a difference in the sign of the proton flow. The growth rate, mode structure, and resonant velocity are manifestly unaffected, because in 2D a change in sign of the initial flow profile is analogous to a rotation of the whole system by 180 degrees, and the Vlasov-Maxwell system has rotational symmetry.}. 
The resonant velocity is computed by solving Eq. (18) in \citet{Ng:2019}. 
The clear resonance structure in the ion distribution function, used as proof of the importance of ion kinetics in the dynamics of the instability in \citet{Ng:2019}, is again a prominent aspect of the algorithm presented here in this thesis.
While there have been numerous particle-in-cell studies of the LHDI \citep{Lapenta:2002, Lapenta:2003, Daughton:2003, Roytershteyn:2012}, the phase space structure lucidly provided by a continuum approach presents an alternative means of understanding the plasma physics of these small scale, kinetic, instabilities.

\subsection{Hybrid Two-stream/Filamentation Instability}\label{sec:hybridTSW}

Our final benchmark of our collisionless Vlasov--Maxwell solver is in the same vein as the previous section and concerns the modeling of small scale, kinetic instabilities.
In astrophysical settings, interpenetrating beams, or flows, of plasma are quite common, as they can serve as a free energy source for a myriad of instabilities.
In particular, in the unmagnetized case, the two-stream instability, filamentation instability \citep{Fried:1959}, and a hybrid mode of the two-stream and filamentation referred to as the electromagnetic oblique mode \citep{Bret:2009} are of interest for a variety of astrophysical systems from gamma ray bursts \citep{Medvedev:1999} to pulsar wind outflows \citep{Kazimura:1998} to cosmological scenarios \citep{Schlickeiser:2003, Lazar:2009}.
It is of particular interest in these astrophysical contexts if the filamentation instability, or filamentation-like instabilities, are efficient enough to produce dynamically important magnetic fields, and, for example, explain the observed emission or the presence of a magnetic field in the system.

The dynamics of these instabilities, especially their competition, served as the motivation for a recent study using the Vlasov--Maxwell solver in \gke~\citep{Skoutnev:2019}.
\citet{Skoutnev:2019} found that in a certain parameter regime, as the beams internal temperature was decreased and $v_{th}/u_d$, the ratio of the thermal velocity to the drift speed of the beam, became smaller, the electromagnetic oblique modes had comparable growth rates to the two-stream instability.
These modes thus saturated on similar time scales, leading to the dynamics of a single mode having a manifestly different final nonlinear state in comparison to an initialization of a spectrum of modes. 

We will consider the results of these nonlinear simulations from \citet{Skoutnev:2019} in Chapter~\ref{ch:Leverage}, but here we focus on the ability of the DG Vlasov--Maxwell solver to accurately capture the linear growth of these modes, two-stream, filamentation, and electromagnetic oblique.
For the purposes of demonstrating that the algorithm adequately captures the growth of these modes, we will focus on single mode simulations, in contrast to the simulations presented in \citet{Skoutnev:2019}, which were initialized from a bath of random fluctuations.
We will focus particular attention on the electromagnetic oblique modes in anticipation of how their unique physics will prove a critical component of the nonlinear evolution of a spectrum of modes discussed in Chapter~\ref{ch:Leverage}.

To initialize these single mode simulations, we consider an electron-proton plasma in 2X2V, but with the protons forming a stationary, charge-neutralizing background\footnote{For the purposes of the simulation, this limit is achieved by not adding a proton contribution to the current in Maxwell's equations so that the only contribution to the current comes from the dynamic electron species.}. 
The electrons are initialized as two drifting Maxwellians, \eqr{\ref{eq:ICMaxwellian}},
\begin{align}
    f_e (x, y, v_x, v_y) = \frac{m_e n_0 }{2 \pi T_e} \exp & \left (- m_e \frac{(v_x)^2 + (v_y - u_d)^2}{2 T_e} \right ) \notag \\
    & + \frac{m_e n_0 }{2 \pi T_e} \exp \left (- m_e \frac{(v_x)^2 + (v_y + u_d)^2}{2 T_e} \right ), \label{eq:TSWElcInit}
\end{align}
where $n_0 = 0.5$ and the drift velocity is chosen to be $u_y = 0.3c$, with $c$ being the speed of light.
The electron temperature is chosen so that $v_{th_e}/u_d = 1/3$, $v_{th_e} = 0.1c$.
The simulations are performed with $N_x \times N_y \times N_v^2 = 8 \times 8 \times 8^2$ configuration and velocity space resolution, with polynomial order 3 and the Serendipity element basis. 
The box size in configuration space is chosen to fit exactly one wave mode in the box $L_x \times L_y = 2\pi/k_x \times 2\pi/k_y$, and the velocity space extents are $[-3 u_y, 3 u_y]^2$, with periodic boundary conditions in configuration space and zero-flux boundary conditions in velocity space.
A small perturbation is seeded in the electric and magnetic fields of the form
\begin{align}
    E_x & = -\frac{\delta \sin(k_x x + k_y y)}{k_x + k_y \alpha}, \\
    E_y & = \alpha E_x, \\
    B_z & = k_x E_y - k_y E_x,
\end{align}
where $\delta$ is the size of the perturbation and $\alpha$ is a coefficient determined by the eigenfunctions of the linear theory and corresponds to the ratio of the y-electric field to the x-electric field. 

In the notation of \citet{Skoutnev:2019}, we define an angle $\theta$ with respect to x-axis so that the wave vector, $\mvec{k} = (k_x \mvec{\hat{x}}, k_y \mvec{\hat{y}})$, corresponds to a pure filamentation mode when $\theta = 0$ degrees, and a pure two-stream mode when $\theta = 90$ degrees. 
In other words, a pure $k_x$ mode is a filamentation mode, and a pure $k_y$ mode is a two-stream mode, with all the intermediate angles defining the aforementioned oblique modes.
We note in both cases the initial condition simplifies, as a filamentation mode reduces to a perturbation in $B_z$, and a two-stream mode reduces to a perturbation in $E_y$. 
For all of the simulations, $\delta$ is chosen to be sufficiently small to maximize the linear regime of the simulation and insure a reasonable fit of the growth rate.

In Figure~\ref{fig:linearTheoryTSW}, we compare the results of the linear theory with a sequence of Vlasov-Maxwell simulations using \gke~for a variety of initial perturbations.
\begin{figure}[!htb]
    \centering
    \includegraphics[width=\textwidth]{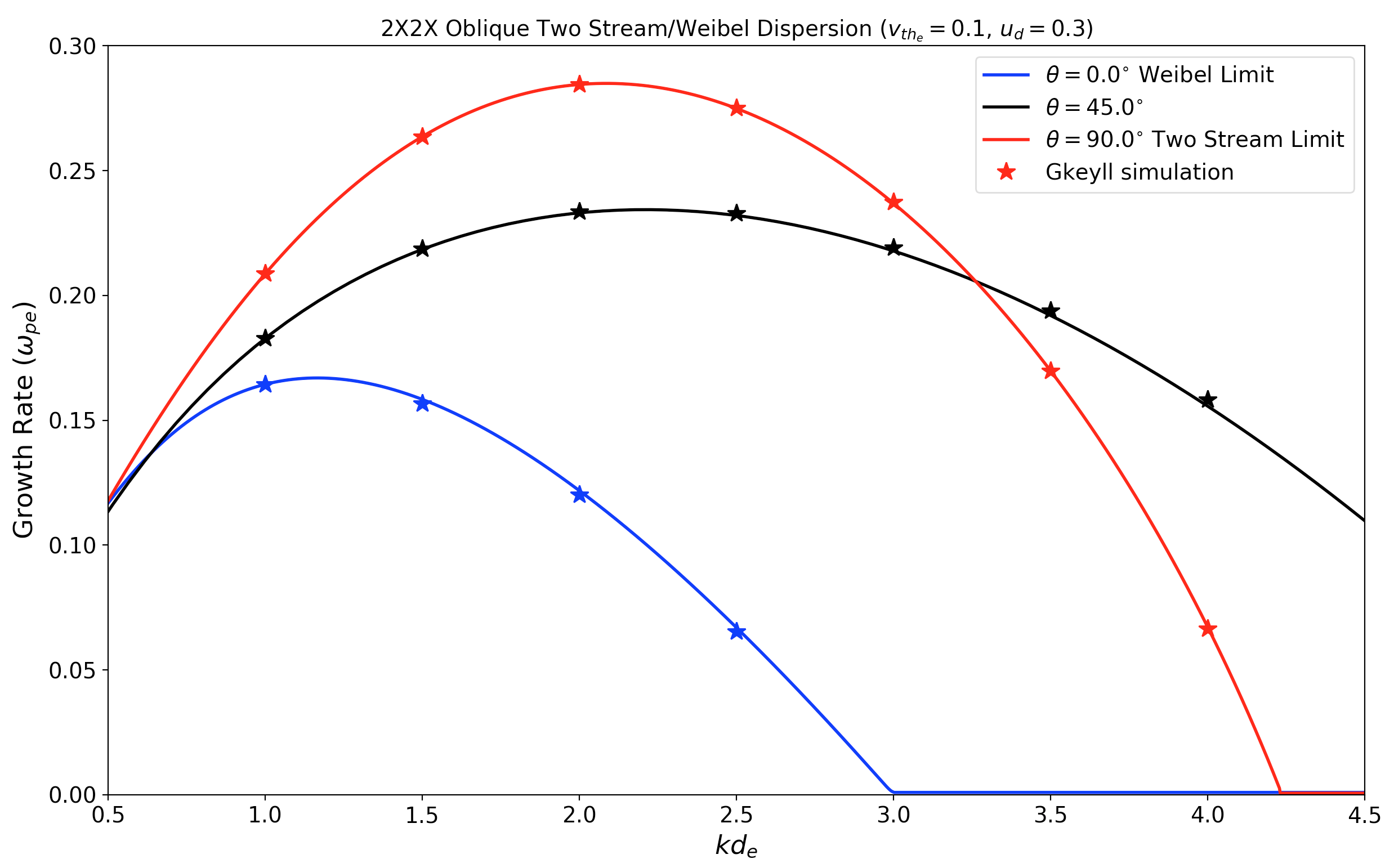}
    \caption{Comparison of linear theory (solid line) calculated from the dispersion relation in \eqr{\ref{eq:generalTSWDispersion}} after rotation to the coordinate system aligned with $\mvec{k}$, Eqns.\thinspace(\ref{eq:diag11Dispersion}--\ref{eq:diag22Dispersion}), with a number of \gke~simulations (stars) for the filamentation limit, $\theta = 0^{\circ}$, an oblique mode at $\theta = 45^{\circ}$, and the two-stream limit, $\theta = 90^{\circ}$. We observe good agreement between the linear theory and our DG Vlasov-Maxwell solver.}
    \label{fig:linearTheoryTSW}
\end{figure}
The linear theory solution is found by linearizing the Vlasov--Maxwell system of equations to obtain the dispersion matrix,
\begin{align}
    D_{ij}=\frac{\omega^2}{c^2}\left( k_ik_j-k^2\delta_{ij}\right) +\epsilon_{ij}, \label{eq:generalTSWDispersion}
\end{align}
where,
\begin{align}
    \epsilon_{ij}=\left(1-\sum_s\frac{\omega_{p_s}^2}{\omega^2}\right)\delta_{ij}+\sum_s \frac{\omega_{p_s}^2}{\omega^2}\int_{-\infty}^{\infty} v_iv_j\frac{\mvec{k}\cdot \gv f_{0_s}}{\omega-\mvec{k}\cdot \mvec{v}} \dv.
\end{align}
It is most convenient to rotate the dispersion matrix to the coordinate system aligned with the wave vector $\mvec{k}$, i.e., a rotation by the angle $\theta$ previously defined,
\begin{align}
    D=\left(
    \begin{array}{cc}
    D_{11} & D_{12}\\
    D_{21} & D_{22} 
    \end{array}\right),
\end{align}
where
\begin{align}
    D_{11} & = 1-\frac{\omega_{pe}^2}{4k^2v_{th}^2}\left[Z'(\xi_{+})+ Z'(\xi_{-})\right], \label{eq:diag11Dispersion} \\
    D_{12} = D_{21} & =\frac{\omega_{pe}^2 u_d \cos\theta}{4\omega kv_{th}^2}\left[Z'(\xi_{+})- Z'(\xi_{-})\right], \\
    D_{22} & = 1-\frac{\omega_{pe}^2}{\omega^2}-\frac{k^2c^2}{\omega^2}- \frac{\omega_{pe}^2\left(u_d^2 \cos^2\theta+v_{th}^2\right)}{4\omega^2 v_{th}^2}\left[Z'(\xi_{+})+ Z'(\xi_{-})\right]. \label{eq:diag22Dispersion}
\end{align}
Here, $Z(\xi_{\pm})$ is the plasma dispersion function previously employed in Section~\ref{sec:CollisionlessLandauDamping}, \eqr{\ref{eq:plasmaZFunc}}, but now with $\xi_{\pm}=\frac{\omega \pm ku_d \sin \theta}{\sqrt{2}kv_{th}}$.
The linear solution, the solid lines in Figure~\ref{fig:linearTheoryTSW}, are eigenmodes of the system found by solving $\text{det} (D)=0$ for $\omega$ with the corresponding eigenvectors satisfying $R^TDR\mvec{E}=0$, where $R$ is the rotation matrix for the angle $-\theta$.

We now turn to the evolution of an electromagnetic oblique mode in the nonlinear regime.
We repeat the oblique mode calculation with $\theta = 45^{\circ}$ with an increased resolution, $N_x \times N_y \times N_v^2 = 48 \times 48 \times 64^2$, and slightly larger velocity extents, $[-10 v_{th_e}, 10 v_{th_e}]^2$, running the simulation for $t = 500 \omega_{pe}^{-1}$, deep into the nonlinear evolution of the mode, with wave-vector $k_x = k_y = 2.0$. 
In Figure~\ref{fig:obliqueEarlyNonlinear}, we plot the three field components, $E_x, E_y$, and $B_z$, as well as the particle distribution function at $(y=L_y/2, v_y = 0), (x=L_x/2, v_x = 0)$, and $(x=L_x/2, y=L_y/2)$ at $t=125 \omega_{pe}^{-1}$ at the initial nonlinear phase.
\begin{figure}[!htb]
    \centering
    \includegraphics[width=\textwidth]{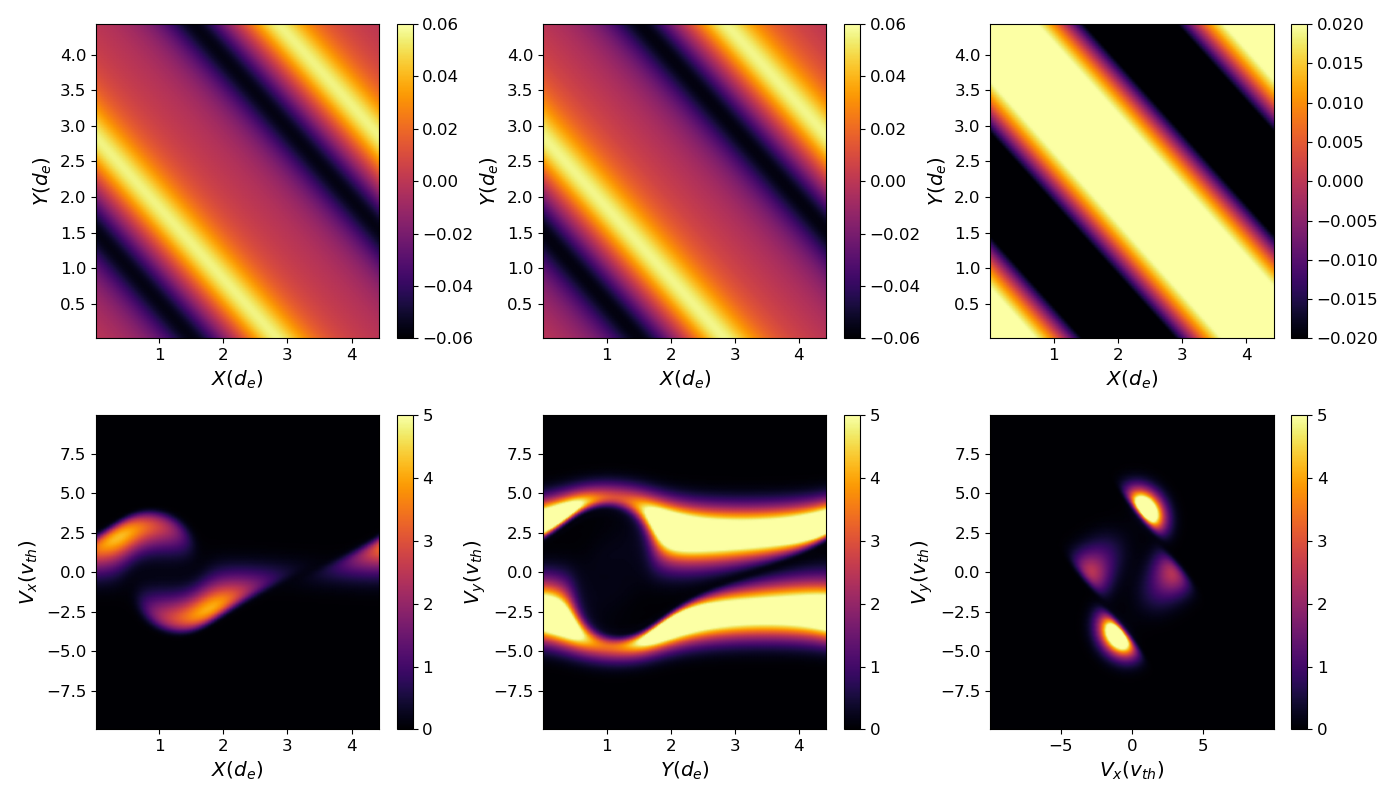}
    \caption{The evolution of the electromagnetic fields, $E_x$ (top left), $E_y$ (top middle), and $B_z$ (top right), as well as the electron distribution function at $(y=L_y/2, v_y = 0)$ (bottom left), $(x=L_x/2, v_x = 0)$ (bottom middle), and $(x=L_x/2, y=L_y/2)$ (bottom right) at $t=125 \omega_{pe}^{-1}$ as the oblique mode, $\theta=45^{\circ}$, instability is going nonlinear. We observe the growth of all three components of the initial electromagnetic fields, with standard signatures of both two-stream- and filamentation modes in the distribution function: the phase space vortices in the $x-v_x$ and $y-v_y$ plane, and the deflection of the beams in the $v_x-v_y$ plane respectively.} \label{fig:obliqueEarlyNonlinear}
\end{figure}
We can see that the oblique mode grows all three components of the field that are initialized, as well as the standard signatures of the the two-stream and filamentation instability, the phase space vortices in the $x-v_x$ and $y-v_y$ plane, and the deflection of the beams in the $v_x-v_y$ plane respectively. 
Late in time at $t=500 \omega_{pe}^{-1}$ in Figure~\ref{fig:obliqueLateNonlinear}, we see that the saturated state has little if any magnetic field, as potential wells have formed in the electric fields that have scattered the particles to a fairly isotropic state in the $v_x-v_y$ plane and depleted the phase space structure required to support a magnetic field.
\begin{figure}[!htb]
    \centering
    \includegraphics[width=\textwidth]{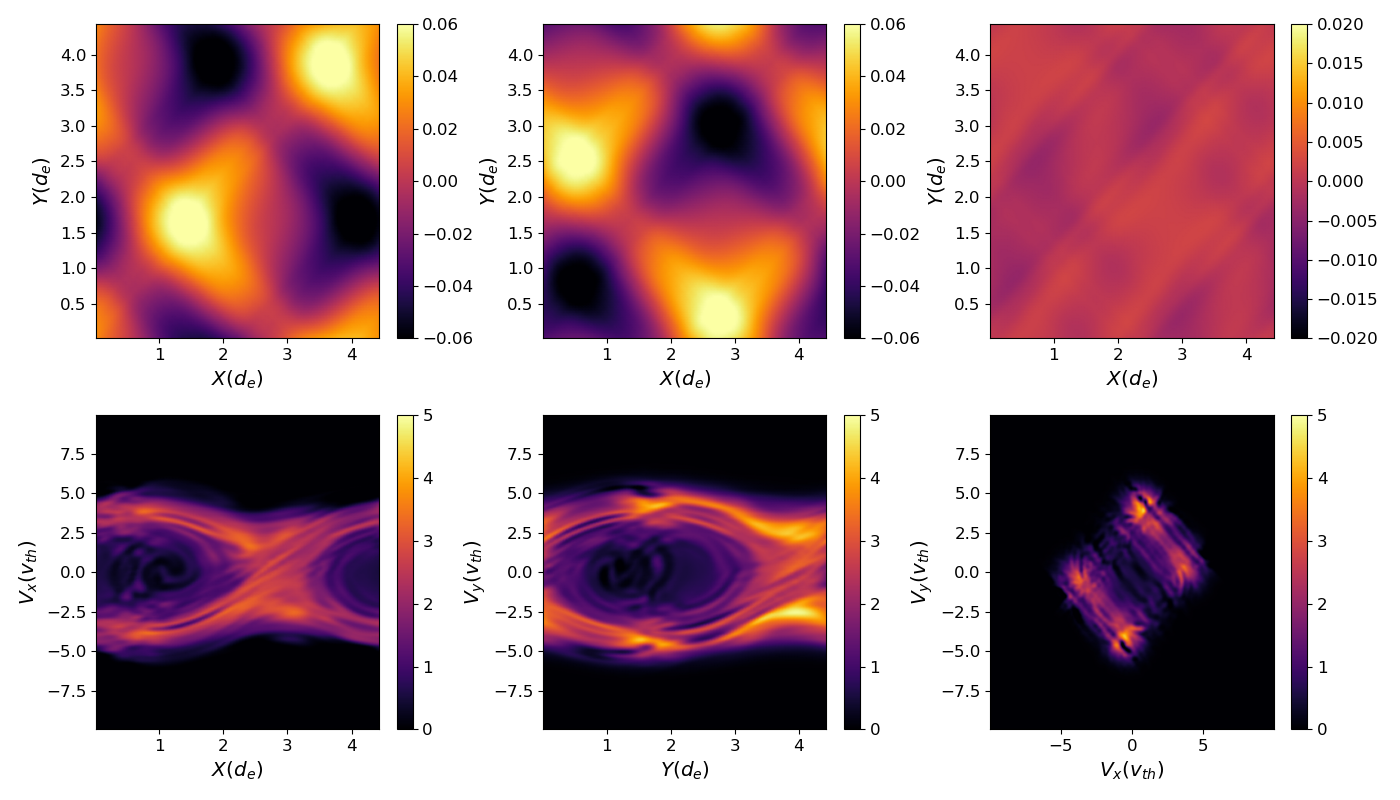}
    \caption{The evolution of the electromagnetic fields, $E_x$ (top left), $E_y$ (top middle), and $B_z$ (top right), as well as the electron distribution function at $(y=L_y/2, v_y = 0)$ (bottom left), $(x=L_x/2, v_x = 0)$ (bottom middle), and $(x=L_x/2, y=L_y/2)$ (bottom right) at $t=500 \omega_{pe}^{-1}$ of the oblique mode, $\theta=45^{\circ}$, instability deep in the nonlinear phase of the dynamics. Here, we observe little, if any, magnetic field, as the electrostatic wells forming in the electric field components scatter particles to a nearly isotropic state in the $v_x-v_y$ plane and deplete the phase space structure required to support the magnetic field.} \label{fig:obliqueLateNonlinear}
\end{figure}
This particle scattering will prove to be an important component of the nonlinear evolution of a spectrum of unstable modes in Chapter~\ref{ch:Leverage}.

\section{Benchmarking the Complete Vlasov--Maxwell-Fokker--Planck \\ System of Equations}\label{sec:benchmarkFullVMFP}

\subsection{Collisional Landau Damping}\label{sec:collisionLangmuirWave}

We return now to the Landau damping of Langmuir waves discussed in Section~\ref{sec:CollisionlessLandauDamping}, but now including the effects of collisions with our discretization of the Fokker--Planck equation.
Collisions can significantly change the damping rate, and in the limit of high collisionality, the damping can be ``shut off.'' 
This shut off happens when the mean free path becomes shorter than the wavelength, preventing the particles from resonating with the wave and gaining energy before being scattered via collisions.
We are interested in demonstrating that the discrete VM-FP system of equations in \gke~can smoothly transition from the collisionless to collisional regimes, similar to our benchmarks in Section~\ref{sec:KineticSodShock}, but now including the self-consistent plasma-electromagnetic field feedback.

We again initialize Maxwellian, \eqr{\ref{eq:ICMaxwellian}}, proton and electron distribution functions with the initial density and electric field again given by Eqns.\thinspace(\ref{eq:electronLangmuirInit}--\ref{eq:ExLangmuirInit}).
We choose a fixed $k$ for this study scanning collisionality, $k \lambda_D = 0.5$, and still set $L_x = 2\pi/k$ so exactly one wavelength fits in the domain.
The proton and electron velocity space limits are again set to $\pm 6 v_{th_s}$, with periodic boundary conditions in configuration space and zero flux boundary conditions in velocity space.

Figure~\ref{fig:langmuirDamp} shows the electric field energy as a function of time for $\nu = 0.0\omega_{pe}, 0.25 \omega_{pe},$ and $1.0 \omega_{pe}$.
\begin{figure}[!htb]
    \centering
    \includegraphics[width=0.9\textwidth]{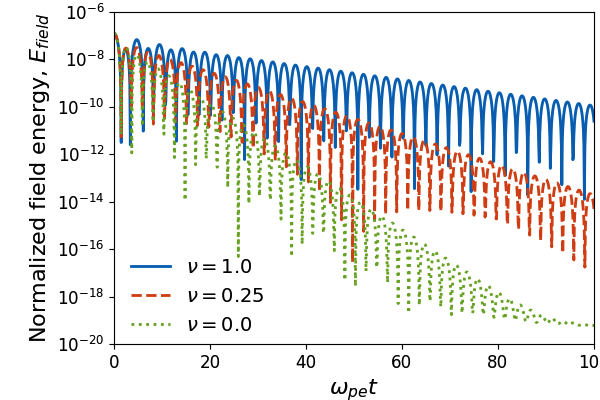}
    \caption{Field energy as a function of time for the linear collisional Landau damping problem with varying collisionality. Similar to Figure~\ref{fig:LangmuirWaveEnergy}, we compute the damping rate of each simulation by fitting to the peaks of the field energy. The collision frequency $\nu$ is normalized to the electron plasma frequency.}
    \label{fig:langmuirDamp}
\end{figure}
As the collision frequency increases, we find a rapidly decreasing damping rate in the moderate collisionality regime, as seen in Figure~\ref{fig:dampVsNu}.
\begin{figure}[!htb]
    \centering
    \includegraphics[width=0.9\textwidth]{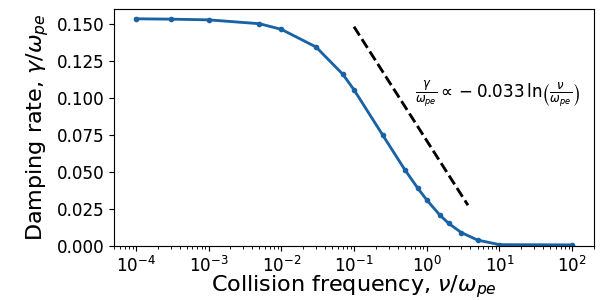}
    \caption{Damping rate versus collisionality computed from simulations such as those shown in Figure~\ref{fig:langmuirDamp}. As expected, the damping rate shuts off with increasing collisionality due to the particles being scattered by collisions before they can resonate with the wave. The black dashed line shows an analytical estimate of the damping rate computed from expressions found in \citet{Anderson:2007b} and agrees well with the results computed here.}
    \label{fig:dampVsNu}
\end{figure}
We compare this damping rate cut off with the results of \citet{Anderson:2007b}, who employ a similar simplified collision operator, though they consider the 1X3V case, and here we are examining the 1X1V case, so the results are not expected to match exactly. 
Nevertheless, a fit (black dashed line in Figure~\ref{fig:dampVsNu}) to the slope in the intermediate collisionality transition regime from the theory in \citet{Anderson:2007b} shows reasonable agreement with the numerical results.

\subsection{Heating via Magnetic Pumping}

Our final benchmark provides an opportunity to perform our most exacting test yet of the VM-FP system of equations.
We will examine heating via magnetic pumping, a process by which oscillations of the magnetic field are converted to particle energy. 
Magnetic pumping relies on the approximate conservation of the magnetic moment, $\mu = m v_\perp^2/2B$, in a magnetized plasma. 
As the magnetic field increases, to maintain magnetic moment conservation, $v_\perp^2$ should also increase.
In a collisionless system, if the magnetic field is oscillating slowly compared to the gyro period, then $v_\perp^2$ oscillates up and down in a reversible way, and there is no net heating of the plasma.  
However, collisions can provide a route to pitch angle scatter the energy into the parallel direction, leading to an overall irreversible heating of the plasma.

This mechanism was originally proposed as a heating mechanism in the early days of fusion research and investigated extensively~\citep{Berger:1958,Laroussi:1989}.
Recently, this same mechanism has been studied as a potential source of particle heating in the solar wind~\citep{Lichko:2017}.
We use a similar setup as~\citet{Lichko:2017}, with a small modification to the parameters and a different collision operator\footnote{Both our collision operator and the collision operator employed by \citet{Lichko:2017} are Fokker--Planck collision operators, but \citet{Lichko:2017} discretizes the full, unsimplified Fokker--Planck equation written in Landau form,
\begin{align}
\pfrac{f^c_s}{t}  = \sum_{s'} \nu_{s,s'} \gv \cdot \int d\mvec{v}' \overleftrightarrow{\mvec{U}}(\mvec{v}, \mvec{v}') \cdot \left ( f_{s'}(\mvec{v}') \gv f_s(\mvec{v}) - \frac{m_s}{m_{s'}} f_s(\mvec{v}) \nabla_{\mvec{v}'} f_{s'}(\mvec{v}') \right ), \label{eq:LandauCollision}
\end{align}
where
\begin{align}
\nu_{s,s'} = \frac{q^2_s q^2_{s'} \ln(\Lambda)}{8 \pi m_s \epsilon_0}
\end{align}
is the collision frequency of species $s$ colliding with species $s'$, and $\overleftrightarrow{\mvec{U}}(\mvec{v}, \mvec{v}')$ is the Landau tensor,
\begin{align}
\overleftrightarrow{\mvec{U}}(\mvec{v}, \mvec{v}') = \frac{1}{|\mvec{v} - \mvec{v}'|} \left (\overleftrightarrow{\mvec{I}} - \frac{(\mvec{v} - \mvec{v}')(\mvec{v} - \mvec{v}')}{|\mvec{v} - \mvec{v}'|^2} \right ).
\end{align}
}.
Note that the collision operator employed by \citet{Lichko:2017} retains the velocity dependence of the collision frequency, and is thus a more accurate description of collisions in a plasma.
Nevertheless, our simplified Fokker--Planck operator contains pitch-angle scattering due to the isotropic diffusion term, and thus can be used to test whether our discretization of the VM-FP system of equations contains an accurate representation of magnetic pumping.

We set up a 1X3V domain which has extents $[0,200\pi\rho_e]\times[-8v_{th,s}, 8v_{th,s}]^3$ on a $256\times24^3$ grid.
Here, $\rho_s = v_{th,s}/\Omega_{cs}$ is the gyroradius of species $s$. A perturbation is driven on a  background magnetic field $\mvec{B}=B_0 \hat{\mvec{z}} $ using an antenna that drives currents given by
\begin{align} 
\mvec{J} {=} 
\hat{\mvec{y}}J_0
\sin^2\left [\frac{\pi}{2}\thinspace\min(1, \omega_{\textrm{ramp}}t) \right ]
& \sin(\omega_{\textrm{pump}} t) \notag \\
\times & \left[
\exp\left(-\frac{(x-x_1)^2}{2\sigma_J^2}\right){-}
\exp\left(-\frac{(x-x_2)^2}{2\sigma_J^2}\right)
\right]. 
\label{eq:jPump}
\end{align}
The current is turned on slowly over one pumping period using $\omega_{\text{ramp}}=\omega_{\textrm{pump}}$. 
This ramping phase ensures that the antenna is ``turned on'' slowly and hence does not excite unwanted waves in the plasma. 
Further, we need to ensure that the plasma density is low enough that the electromagnetic waves are not ``trapped'' in the density holes that are created around the antenna.

The tests shown here use $\omega_{\textrm{pump}}=0.1 \Omega_{ce}$, $x_1=50\pi\rho_e$, $x_2=150\pi\rho_e$, $\sigma_J=200\pi\rho_e/256$, and $ \Omega_{ce}=2.5\omega_{pe}$. 
We employ a proton mass ratio $m_p/m_e=1836$ and initialize electron and proton species as Maxwellians with zero mean flow, number density $n\thinspace \rho_e^3=2.99\times10^{5}$, and thermal speed $v_{th_e}^2/c^2=\beta \Omega_{ce}^2/[2\omega_{pe}^2(1+\tau)]$.
The temperature ratio is $\tau=T_p/T_e=1$, and the ratio between plasma and magnetic pressures is $\beta=2\times10^{-4}$. 
With these quantities, the normalized background magnetic field amplitude is $\epsilon_0\omega_{pe}B_0/(en)=\Omega_{ce}/\omega_{pe}$, and we use the normalized driving current density amplitude $J_0/(enc)=\Omega_{ce}/(2\omega_{pe})$. 

Figure\thinspace\ref{fig:magPump} shows the evolution of the magnetic field and thermal energy in the middle of the domain, $x = 100 \pi \rho_e$.
\begin{figure}[!htb]
    \centering
    \includegraphics[width=\textwidth]{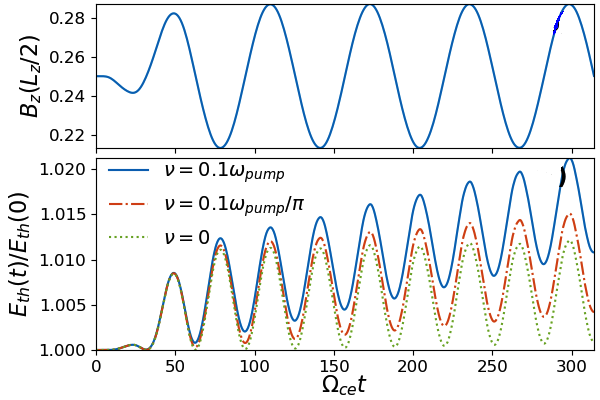}
    \caption{Time evolution of the magnetic field (top) in the middle of the domain from the magnetic pumping problem. As the antenna currents ramp up, an oscillating field is created that then transfers energy, via pitch-angle scattering, to the plasma, leading to an increase in the thermal energy (bottom). With zero collisionality (bottom, green), the energy exchange is completely reversible, and no net heating is observed, but as the collision frequency is made finite, magnetic pumping begins to heat the plasma.}
    \label{fig:magPump}
\end{figure}
As the antenna current ramps up, an oscillating magnetic field structure is created. 
The amplitude of oscillations are about $15\%$ of the background. 
This oscillating energy is then transferred to parallel heating via pitch angle scattering. 
This heating is shown in the bottom panel of the figure, which shows that as the collision frequency becomes finite, the plasma gains thermal energy through the simulation.
Importantly, these simulations show how taxing this test problem is, as it relies on every part of the discretization of the VM-FP system of equations, and that the scheme must be able to preserve the adiabatic invariants.
Were the magnetic moment, $\mu$, not conserved in the zero collisionality case, and the overall scheme not conservative, we would not be able to confidently argue the heating demonstrated is a consequence of the physics contained in the collision operator.

As a comprehensive test of the algorithm's ability to model heating via magnetic pumping, we next turn to the heating rate versus the ratio of the collisionality to the pump frequency, $\nu/\omega_{\textrm{pump}}$. In Figure~\ref{fig:heatRates}, we plot the heating rate computed from the code, 
\begin{align}
    \gamma_H = \frac{1}{\mathcal{E}} \pfrac{\mathcal{E}}{t},
\end{align}
where $\mathcal{E}$ is the second velocity moment, \eqr{\ref{eq:energyDefinition}}, the particle energy.
This quantity is computed in the middle of the domain, $x = 100  \pi \rho_e$.
\begin{figure}[!htb]
    \centering
    \includegraphics[width=\textwidth]{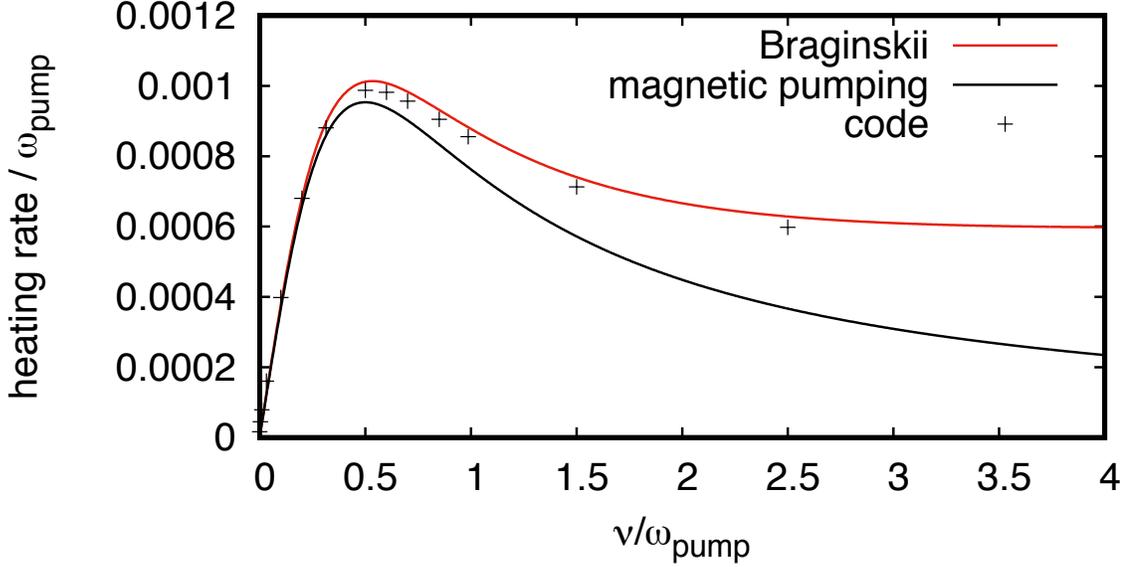}
    \caption{Heating rate via magnetic pumping, plus an additional viscous heating mechanism, as a function of normalized collision frequency. The code agrees well with the theoretical prediction (black line) magnetic pumping at lower collision frequency, but shows an additional heating mechanism at higher collisionalty due to the viscous damping of out-of-plane flows, which are included in the Braginskii-based theory (red line).}
    \label{fig:heatRates}
\end{figure}
We compare the results of \gke~simulations with our DG VM-FP solver to the heating rate predicted by the theory of magnetic pumping \citep{Lichko:2017},
\begin{align}
    \gamma_{\textrm{mp}} = \omega_{pump}^2 \frac{2}{9} \left (\frac{\delta n}{n_0} \right )^2 \frac{\nu}{(\omega_{\textrm{pump}}^2 + 4 \nu^2)},
\end{align}
where $\delta n$ and $n_0$ are computed from the central density $n(t) = n_0 + \delta n \sin(\omega_{\rm pump} t)$ after the initial transients.
Note that this heating rate is derived in terms of the magnetic fluctuations, $\delta B/B_0$, but if the plasma is frozen-in to the magnetic field, the ratios $\delta n/n_0$ and $\delta B/B_0$ are equal.
To correctly match the heating rates computed from the time evolution of the temperature, the density compression is also measured in the middle of the domain, $x = 100  \pi \rho_e$.

Our discretization of the VM-FP system of equations agrees with magnetic pumping theory for small $\nu/\omega_{\textrm{pump}} \lesssim 1$, but indicates an additional heating mechanism for larger collisionality.
This trend was also observed, but in a different parameter regime and using the Landau form of the collision operator, \eqr{\ref{eq:LandauCollision}}, in~\citet{Lichko:2017}.
Because the pump frequency is larger than the proton cyclotron frequency, $\omega_{\textrm{pump}} > \Omega_{cp}$, the protons are unmagnetized and unable to respond to the compression of the magnetic field.
Thus, when the electrons undergo compression, the protons are effectively stationary, leading to an electric field to maintain charge neutrality, but this electric field drives an $\mvec{E} \times \mvec{B}$ flow.
In our chosen geometry, the electric field to maintain quasi-neutrality develops in the $x$ direction, so the $z$ magnetic field drives a flow in the $y$ direction.
This flow is then viscously damped, leading to additional heating.

This additional heating can be derived by considering a Braginskii calculation \citep{Braginskii:1965} in which flows are viscously damped in the limit $\nu \gg \omega_{\textrm{pump}}$.
We can use the Braginskii stress tensor\footnote{Note that the Braginskii calculation is performed as an asymptotic expansion of the full Fokker--Planck collision operator, written in the Landau form, \eqr{\ref{eq:LandauCollision}}} to compute the heating rate for the viscous damping of the electron flows,
\begin{align}
    \gamma_{B} = \frac{2}{3 n T}\left[ \left( \frac{\eta_0}{3} + \eta_1 \right)\overline{\left( \frac{\partial u_x}{\partial x} \right)^2} + \eta_1 
    \overline{\left( \frac{\partial u_y}{\partial x} \right)^2}
    \right], 
\end{align}
where $\eta_0 = 0.96 n T \tau_c$ and $\eta_1 = 0.3 n T / (\tau_c  \Omega_{c}^2)$ are two of Braginskii's viscosity coefficients and $\tau_c$ is the collision time for the species.
These expressions are for $\omega_{\rm pump} \ll \nu \ll \Omega_c$, but are generalized for arbitrary $\nu/\Omega_c$ in \citet{Braginskii:1965}.
The $\eta_0$ term gives rise to magnetic pumping in the collisional limit~\citep{Kulsrud:2005,Schekochihin:2005}, and asymptotic matching can be done to extend the definition of $\eta_0$ into the low collisionality regime. 
We can then relate Braginskii's collision time, $\tau_c$, to the collision rate for our simplified Fokker--Planck collision operator by $\tau_c = 0.52 / \nu$. 
The $\eta_1$ term represents additional viscous heating due to classical cross-field momentum transport.

One can calculate the time-averaged squared shearing rate given by
\begin{align}
   \overline{\left( \frac{\partial u_x}{\partial x} \right)^2} = (1/2) \omega_{\textrm{pump}}^2 \left ( \frac{\delta n}{n_0} \right )^2,
\end{align}
and
\begin{align}
  \overline{\left( \frac{\partial u_y}{\partial x} \right)^2} = \frac{1}{2} \frac{\omega_{pe}^4}{\Omega_{ce}^2} \left ( \frac{\delta n}{n_0} \right )^2,
\end{align}
to find that the out-of-plane flows are actually larger, with $\overline{u_y^2} \approx 2.56 \, \overline{u_x^2}$ for our parameters.  
Viscous heating from damping these flows dominates at high collisionality for these parameters.
Note that because we are using $\delta n/n_0$ in the formulas, we obtain a slightly smaller heating rate since $\delta n/n_0 = 0.131$, but $\delta B/B_0 = 0.148$ in our simulations, because the plasma is not completely frozen-in.
In spite of these subtleties, the plasma not being completely frozen-in and the use of a different collision operator, we find good agreement between the theoretical heating rates in these two different parameter regimes, and our simulations add further credibility to our implementation of the DG discretization VM-FP system of equations in \gke.

Although this benchmarking section has been by no means exhaustive, we have covered a wide spectrum of functionality within our algorithm for the VM-FP systems of equations.
We have demonstrated numerically the conservation properties proved analytically in Chapter~\ref{ch:DGFEM}, and further shown numerically that our scheme satisfies discrete analogs of the Second Law of Thermodynamics and an H-theorem.
We have shown the code obtains theoretical estimates for damping rates, growth rates, and heating rates in a variety of non-trivial test cases of both the collisionless Vlasov--Maxwell implementation and the full Vlasov--Maxwell--Fokker--Planck numerical method. 
Further, we have shown that a continuum VM-FP solver provides a high fidelity representation of the particle distribution function which can be leveraged to clearly identify everything from particle trapping to resonant wave-particle interactions.
We turn now to the question of critical importance: what science can be done with this novel, well-tested tool that provides such high quality particle distribution function data?

\renewcommand{\thechapter}{5}
\epigraph{Some of the material in this chapter has been adapted from \citet{Juno:2020} and \citet*{Skoutnev:2019}}{}
\chapter{Leveraging the Uncontaminated Phase Space}\label{ch:Leverage}

We turn now to a question of the utmost importance after the meticulous work to derive, implement, and test a novel numerical method for the VM-FP system of equations: what new science can be done with this tool?
As we discussed in Chapter~\ref{ch:Benchmarks}, the continuum representation of the particle distribution function, free of the counting noise which normally pollutes a particle-based discretization, allows for the clear identification of plasma processes in phase space.
We would like now to leverage this high fidelity representation for the particle distribution function in a variety of numerical experiments to provide new perspective on energization processes and nonlinear saturation mechanisms in a number of plasma environments.

This chapter will not be an exhaustive discussion of every ongoing project with the VM-FP solver in \gke. 
It is merely our goal to demonstrate the versatility of this approach of a continuum discretization and to justify our effort in the previous chapters deriving and implementing the DG algorithm for the VM-FP system of equations.
We refer the reader to a number of publications for the breadth of applicability of the VM-FP solver, including bounded plasma and plasma sheath studies \citep{Cagas:2017a, CagasThesis:2018, Cagas:2020}, electrostatic shocks \citep{Pusztai:2018, Sundstrom:2019}, instability calculations \citep{Cagas:2017b, Ng:2019}, and simulations of the plasma dynamo \citep{Pusztai:2020}.

We will focus on the ability to directly diagnose the energy transfer between the electromagnetic fields and the plasma in phase space, and the nonlinear saturation of instabilities driven by counter-streaming beams of plasma.
Using the clean, uncontaminated phase space, we will be able to identify phase space energization signatures as a complement to other methods of determining the mechanisms of energy exchange within a plasma.
Likewise, we will leverage the high fidelity representation of the distribution function to completely characterize the nonlinear dynamics of the beam-driven instabilities discussed in Section~\ref{sec:hybridTSW}, and in doing so, showcase a situation where the particle noise inherent to particle-based methods can lead to deceptive dynamics.

\section{Directly Diagnosing the Energy Transfer in Phase Space}

\subsection{The Field-Particle Correlation}
Before we dive into the distribution function, we require a means of interpreting the structure in the distribution function and how this structure can be translated to study the energy transfer between the electromagnetic fields and the plasma.
To probe the energy exchange between electromagnetic fields and the plasma in phase space, we will utilize a technique called the field-particle correlation \citep{Klein:2016, Klein:2017a, Klein:2017b, Klein:2020, Howes:2017, Howes:2018, Li:2019}. 
The essential idea behind the field-particle correlation is to determine where in phase space the plasma is gaining or losing energy, and thereby ascertain the specifics of the energization process, or processes, that may be occurring.

To derive the field-particle correlation diagnostic, we examine the collisionless Vlasov equation weighted by $1/2\thinspace m_s |\mvec{v}|^2$,
\begin{align}
    \pfrac{w_s}{t} = -\mvec{v} \cdot \nabla_{\mvec{x}} w_s - \frac{q_s}{2}|\mvec{v}|^2\mvec{E} \cdot \nabla_{\mvec{v}} f_s - \frac{q_s}{2}|\mvec{v}|^2(\mvec{v} \times \mvec{B}) \cdot \nabla_{\mvec{v}} f_s, \label{eq:dws}
\end{align}
where we have separated out each component of the phase space flux: the configuration space streaming term, the electric field, and the magnetic field.
Here, $w_s(\mvec{x}, \mvec{v}, t) = m_s|\mvec{v}|^2 f_s(\mvec{x}, \mvec{v}, t)/2$ is the phase space energy density and is a function of the full 6D phase space, because we have not performed any integrations over phase space.

However, we can gain intuition for how $w_s$ evolves by integrating over phase space,
\begin{align}
    \pfrac{W_s}{t} =  
    -\int \int q_s \frac{|\mvec{v}|^2}{2} \mvec{E} \cdot \gv f_s \dx \dv
    & = -\int \left( \int q_s \mvec{v}  f_{s} \dv \right) \cdot \mvec{E} \dx \notag \\
    & = -\int \mvec{J_s} \cdot \mvec{E} \dx, \label{eq:dwsdt}
\end{align}
where we have split the integral over phase space into an integral over configuration space and velocity space.
Here,
\begin{align}
    W_s = \int \mathcal{E}_s \dx,
\end{align}
the integral of the particle energy over all of configuration space. Note that we have performed similar operations to the proof of Proposition~\ref{prop:collisionlessEnergyConservation} in Appendix~\ref{app:proofsContinuous}, i.e., we have integrated the velocity gradient by parts which eliminates the contribution from the magnetic field by properties of the cross product, and we have used a suitable boundary condition, such as periodic boundary conditions in configuration space and the distribution function vanishing at the edge of velocity space, to eliminate the boundary terms.
In other words, the exchange of energy between the plasma and the electromagnetic fields is governed entirely by the electric field since only the electric field can do work on the plasma, and vice versa.
Both the magnetic field and streaming term can move energy around in phase space, but neither component of the Vlasov equation corresponds to a net energization or de-energization of the plasma.

We could stop here and only use $\mvec{J}_s \cdot \mvec{E}$ as a proxy for the bulk energization of the plasma, but this would be restrictive, as $\mvec{J}_s \cdot \mvec{E}$ gives us no information about what is happening to the particles as a function of their particular velocities.
From this formulation of the energy exchange, we would be unable to distinguish between energization processes such as resonant wave-particle interactions and direct acceleration via electric fields.
In this vein, we have no way to distinguish between a transfer of energy which is oscillatory, such as a wave propagating through the plasma, and a transfer of energy which is secular, such as the wave damping on the plasma via a resonant process like Landau damping.

So, we step back from performing the integration over phase space and focus on \eqr{\ref{eq:dwsdt}}.
Since we expect the electric field to be the only participant in the direct energization and de-energization of the plasma, we will define the field-particle correlation
\begin{align}
    C(\mvec{x},\mvec{v},t,\tau) = -\frac{q_s}{2} \frac{1}{\tau} \int_{t}^{t+\tau} |\mvec{v}|^2 \mvec{E}(\mvec{x},t') \cdot \gv f_s(\mvec{x},\mvec{v},t') \thinspace dt'. \label{eq:FPCTimeAvg}
\end{align}
Here, $\tau$ defines a correlation time over which to average so we can address our previous concern about distinguishing between oscillatory and secular energy transfer by averaging over the oscillatory energy exchange.
In the limit of $\tau \rightarrow 0$, we obtain the instantaneous energy exchange,
\begin{align}
    C(\mvec{x},\mvec{v},t,0) = \pfrac{w_s}{t} = -\frac{q_s}{2} |\mvec{v}|^2 \mvec{E}(\mvec{x},t) \cdot \gv f_s(\mvec{x},\mvec{v},t). \label{eq:FPCInstantaneous}
\end{align}

Importantly, because this diagnostic does not require integrations over configuration space, it can be used as a single-point diagnostic.
This feature has already been leveraged to discover the presence of electron Landau damping in observations of the Earth's turbulent magnetosheath using spacecraft measurements \citep{Chen:2019}.
The result in \citet{Chen:2019} provides sizable motivation to apply the field-particle correlation to other plasma systems beyond the Alfv\'enic turbulence studied with the field-particle correlation in, e.g., \citet{Klein:2017b}, that gave a frame of reference for the signature of Landau damping observed in \citet{Chen:2019}.
By applying the field-particle correlation to other plasma systems, we can build a Rosetta stone that can be used to translate the signatures observed in other spacecraft observations.
We undertake such a study in the next section.

\subsection{Perpendicular Collisionless Shock}

We now examine in greater detail the results of the simulation shown in Figure~\ref{fig:proton-dist-proof-of-concept} in Section~\ref{sec:introObjectives}.
The particular simulation is a perpendicular collisionless shock.
Here, a collisionless shock refers to a shock-wave, a disturbance propagating faster than the local (magneto)sonic speed, which inevitably dissipates its bulk kinetic energy as other forms of energy, e.g., thermal energy, by means other than particle collisions, because the shock wave forms on scales smaller than the inter-particle mean-free path.
For a survey of studies of collisionless shocks relevant for the heliosphere and Earth's bow shock, we refer the reader to \citet{Wilson:2010, Wilson:2012, Wilson:2014a, Wilson:2014b} and references therein.

Since these shock-waves are collisionless, we know that the energy transfer from the kinetic energy of the incoming supersonic flow into thermal and electromagnetic energy occur due to kinetic processes such as wave-particle interactions and small-scale instabilities.
And, since this energy conversion is collisionless, it can be diagnosed directly in phase space with the aforementioned field-particle correlation technique, \eqr{\ref{eq:FPCTimeAvg}}.
We will use a perpendicular collisionless shock set-up in 1X2V to determine how the upstream kinetic energy from the supersonic plasma flows is converted to other forms of energy.
Here, perpendicular refers to the orientation of the magnetic field with respect to the shock normal, the direction of the incoming supersonic flow.
We now describe in detail the simulation parameters.

The particular geometry we choose is the one spatial coordinate is in the $x$ direction, with the initial magnetic field in the $z$ direction, $\mvec{B} (t = 0) = B_0 \mvec{\hat{z}}$.
In this geometry, we can see why we only require the two velocity dimensions perpendicular to the magnetic field to describe the dynamics because of how Maxwell's equations simplify,
\begin{align}
    \pfrac{B_z}{t} & = -\pfrac{E_y}{x}, \\
    \pfrac{E_y}{t} & = -c^2 \pfrac{B_z}{x} - \frac{J_y}{\epsilon_0}, \\
    \pfrac{E_x}{t} & = -\frac{J_x}{\epsilon_0}.
\end{align}
The electrons and protons are initialized with the same supersonic flow into a reflecting wall, which leads to a shock wave that propagates from left to right in our simulation.
Note that the particles reflect from the wall, but the ``reflecting wall'' boundary condition for the electromagnetic fields is a conducting wall boundary condition in the traditional sense, with zero normal magnetic field and zero tangential electric field.
This method of initialization is often called the ``injection'' setup, and this setup has been previously employed in numerous particle-in-cell studies of collisionless shocks \citep[e.g.,][and references therein]{Caprioli:2014a,Caprioli:2014b,Caprioli:2014c}.

Detailed parameters are as follows: the reflecting wall for the particles and conducting wall for the electromagnetic fields are at $x = 0$, and plasma is injected with a copy boundary condition\footnote{We previously employed this boundary condition in Sections~\ref{sec:KineticSodShock} and \ref{sec:threeSpeciesShock}, but we repeat the definition of this boundary condition here for completeness. A copy boundary condition means that the value in the ghost layer at the rightmost grid cell is exactly equal to the value in the rightmost grid cell, for all the quantities being evolved, including the distribution functions for the electrons and protons, and the electromagnetic fields. Because the plasma is initialized with a flow propagating from right to left, this boundary condition leads to a continuous injection of plasma from the right wall, with the corresponding electric field and magnetic field to support the $\mvec{E} \times \mvec{B}$ flow.
} 
at $x = 25 d_p$, where $d_p$ is the proton collisionless skin depth, $d_p = c/\omega_{pp}$.
Here, $c$ is the speed of light, and $\omega_{pp}$ is proton plasma frequency, $\omega_{pp} = \sqrt{e^2 n_0/\epsilon_0 m_p}$.
We use a reduced mass ratio between the protons and electrons, $m_p/m_e = 100$.
The total plasma beta, $\beta = 2 \mu_0 n_0 (T_e + T_p)/B^2 = 2$, with the proton beta, $\beta_p = 1.3$, and electron beta, $\beta_e = 0.7$.

Both the protons and electrons are non-relativistic, with $v_{th_e}/c = 1/(16\sqrt{2})$, with the previous definitions of the thermal velocity, $v_{th_s} = \sqrt{T_s/m_s}$.
The in-flow velocity to initialize the perpendicular, electromagnetic shock is $U_x = -3 v_A$ ($U_x < 0$ because the in-flow is from right to left), where $v_A$ is the proton Alfv\'en speed, $v_A = B_0/\sqrt{\mu_0 n_0 m_p}$.
Since the plasma is initialized with a flow transverse to a background magnetic field, we initialize the corresponding electric field necessary to support this flow, $\mvec{E} = -\mvec{u} \times \mvec{B} = U_x B_0 \mvec{\hat{y}}$.
With these specified parameters and initial flow, we can initialize Maxwellian velocity distribution, \eqr{\ref{eq:ICMaxwellian}}, functions for the protons and electrons.

For the grid in configuration space, we use $N_x = 1536$, $\Delta x \sim d_e/6$, with piecewise quadratic Serendipity elements for the discontinuous Galerkin basis expansion.
In velocity space, the electron extents are $\pm 8 v_{th_e}$, and the proton extents are $\pm 16 v_{th_p}$, with zero-flux boundary conditions at the edges of velocity space, and $N_{v_x} = N_{v_y} = 64$ for both species, corresponding to $\Delta v = v_{th_e}/4$ for the electrons and $\Delta v = v_{th_p}/2$ for the protons.

We solve the full VM-FP system of equations and run the simulation with a small amount of collisions to regularize velocity space. 
We find the additional boundary condition from the collision operator, \eqr{\ref{eq:additionalFPBoundaryTerm}}, also assists in stability by providing a small amount of regularization at the edge of velocity space.
In this case, we choose an electron-electron collision frequency, $\nu_{ee} = 1.0e-4 \Omega_{ce} = 0.01 \Omega_{cp}$, much less than the proton cyclotron frequency, $\Omega_{cp} = e B_0/m_p$, with the proton-proton collision frequency correspondingly smaller based on the square root of the mass ratio, $\nu_{pp} = 0.001 \Omega_{cp}$.

We will begin with a discussion of the overall structure of the collisionless shock.
In Figure~\ref{fig:perpShockFields}, we show the electromagnetic fields and reduced particle distribution functions in $x-v_x$ phase space, integrated over $v_y$, for the electrons and protons, after the perpendicular shock has formed and propagated through the simulation domain, $t_{end} = 11 \Omega_{cp}^{-1}$.
Although the downstream region after the shock has passed through the plasma is fairly oscillatory, because the energy injected into the plasma by the shock sloshes back and forth between the electromagnetic fields and particles, we can estimate the compression ratio of this low Mach number shock based on the magnetic field to be roughly, $r \sim 2.5$. 
This estimate is based on the mean value of the magnetic field, $B_z$, in the downstream region (solid black line in Figure~\ref{fig:perpShockFields}).
With this estimate for the compression ratio, we calculate the shock velocity to be $U_{shock} = U_x/(r-1) = 2 v_A$.

We have marked an approximate transition from the upstream of the shock to the shock ramp (dashed-dotted lines) and likewise an approximate transition from the shock to the downstream region (dashed lines) in Figure~\ref{fig:perpShockFields}. 
The full extent of the shock includes the foot, where the initial field variation begins, the ramp, where most of the reflected proton population can be found, and the overshoot.
It is worth emphasizing a striking feature of the electromagnetic fields through the shock: we expect the y-electric field to be the dominant component of the energization of the protons and electrons through the shock, because the x-electric field is roughly bimodal through the shock and oscillates about 0 in the downstream.
This feature is perhaps intuitive, as in this reduced dimensionality, the x-electric field is the electrostatic component of the dynamics, and so we might naively expect that the dominant energy exchange will happen through the electromagnetic component of the fields, i.e., the component of the electric field which supports the compression of the magnetic field.
Still, these features fittingly foreshadow our ultimate analysis of the phase space signature of the energization mechanism.

\begin{figure}
    \centering
    \vspace{-0.5in}
    \includegraphics[width=\textwidth]{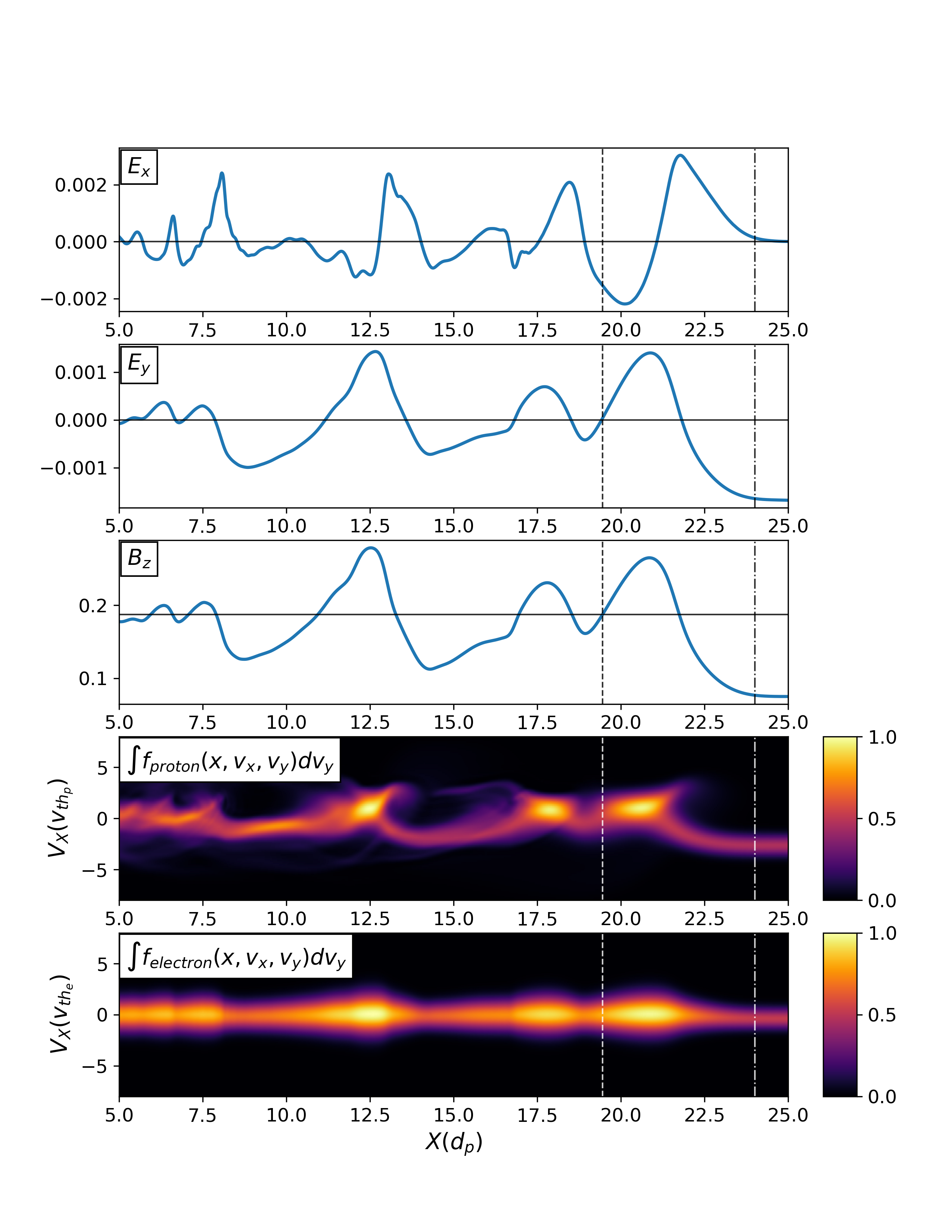}
    \vspace{-0.8in}
    \caption{The x-electric field (top), y-electric field (second from top), z-magnetic field (middle), reduced proton distribution function (second from bottom), and reduced electron distribution function (bottom), both integrated in $v_y$, after the perpendicular shock has formed and propagated through the simulation domain. We have marked an approximate transition from upstream of the shock to the shocked plasma (dashed-dotted lines), and likewise an approximate transition from the shock to the downstream region (dashed lines). To mark the mean values of the oscillating downstream electromagnetic fields, we have used a solid black line to mark the approximate compression of the magnetic field, along with $\mvec{E} = 0$.}
    \label{fig:perpShockFields}
\end{figure}

The particle distribution functions in $x-v_x$ phase space in Figure~\ref{fig:perpShockFields} are illustrative of the dynamics through the shock, showing a clear compression of the electrons and a reflected population of protons.
We can gain further insights into the dynamics of this shock by looking at the distribution function in $v_x-v_y$ at fixed points in configuration space through the shock. 
\begin{figure}[!htb]
    \centering
    \includegraphics[width=\textwidth]{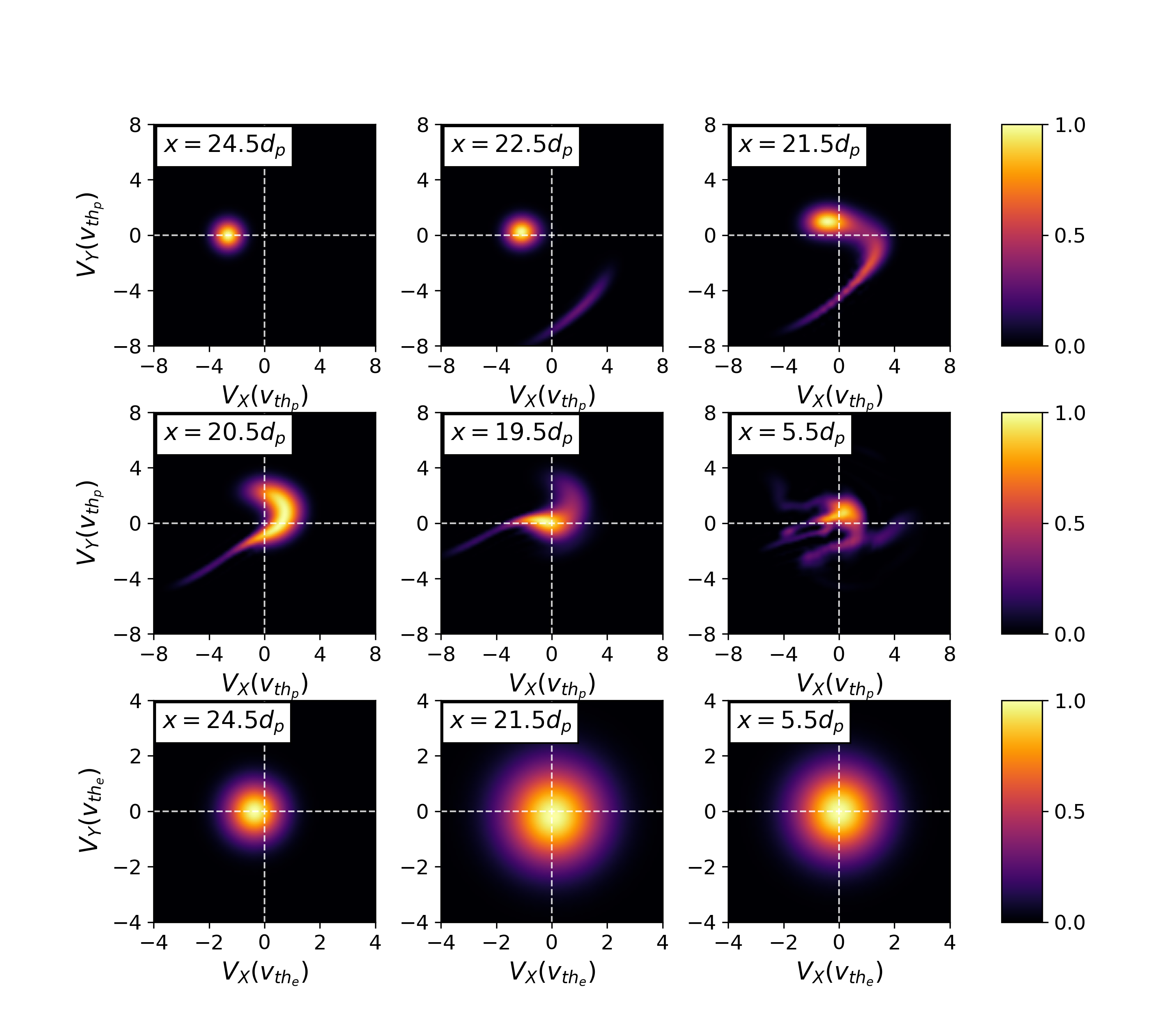}
    \caption{The proton (top two rows) and electron (bottom row) distribution functions plotted through the shock at $t = 11 \Omega_{cp}^{-1}$. As we move from upstream, $x = 24.5 d_p$, through the shock ramp centered at $x = 21.5 d_p$, we can identify the reflected proton population as well as a broadening of the electron distribution function.}
    \label{fig:distThroughShock}
\end{figure}
In Figure~\ref{fig:distThroughShock}, we plot the proton and electron distribution functions in velocity space through the shock, from upstream through the ramp to downstream.
We draw special attention to the proton distribution function in the shock ramp, where we can identify a higher energy tail in $v_x-v_y$.

\subsubsection{Proton Energization in a Perpendicular Shock}

We would like to identify the energization mechanism for this high energy tail of protons, along with the cause of the broadening of the electron distribution.
We thus turn to \eqr{\ref{eq:FPCTimeAvg}}, but instead of performing a time average, we use the instantaneous limit, \eqr{\ref{eq:FPCInstantaneous}}, since we expect the energization through this shock to be impulsive and not require any averaging over an oscillatory component of the energy exchange.
Further, we separate the field-particle correlation into the energization in each of the two velocity directions and transform the fields and velocities to the shock rest-frame,
\begin{align}
    C_{v_x} (x, v_x', v_y', t) & = -q_s \frac{(v_x'-U_{shock})^2}{2} E_x(x,t) \pfrac{f_s(x, v_x' - U_{shock}, v_y', t)}{v_x'}, \label{eq:vxCorrelation} \\
    C_{v_y} (x, v_x', v_y', t) & = -q_s \frac{v_y'^2}{2} [E_y(x,t) - U_{shock} B_z(x, t)] \pfrac{f_s(x, v_x' - U_{shock}, v_y', t)}{v_y'},  \label{eq:vyCorrelation}
\end{align}
where we have performed a Lorentz transformation of the the $y$ electric field,
\begin{align}
    \mvec{E}' = \mvec{E} - \mvec{u} \times \mvec{B}.
\end{align}
Here, primed coordinates denote the simulation frame and unprimed coordinates denote the shock rest-frame, so that, for example, the velocity in the shock rest-frame is
\begin{align}
    v_x = v_x' - U_{shock}.
\end{align}
Note that we are multiplying by the velocity squared in the particular direction of interest, as we expect the orthogonal velocity coordinates, e.g., $v_y$ for the $E_x$ correlation, will integrate to zero as the $x$ electric field can only provide net energization in the $v_x$ direction.

We first investigate the proton energization in the shock foot through the downstream transition, $x = 22.5 d_p \rightarrow 19.5 d_p$ in Figure~\ref{fig:distThroughShock}.
We plot in Figures~\ref{fig:protonFPCFootRamp} and \ref{fig:protonFPCPeakTranision} the field-particle correlation separated into the $v_x$ and $v_y$ components, Eqns. (\ref{eq:vxCorrelation}) and (\ref{eq:vyCorrelation}), as well as the corresponding proton distribution function, through the shock.
We focus in Figure~\ref{fig:protonFPCFootRamp} on the shock foot and ramp, around $x = 22.5 d_p$ and $x = 21.5 d_p$ respectively, at the specified time of $t = 11 \Omega_{cp}^{-1}$.
The blue-red signature identifies the region in phase space in which particles are being accelerated to higher velocities.
Blue regions correspond to a loss of phase space energy density, while red regions correspond to an increase, so a blue-red region means phase space energy density is being transported from the blue to the red region.
\begin{figure}
    \centering
    \vspace{-0.8in}
    \includegraphics[width=\textwidth]{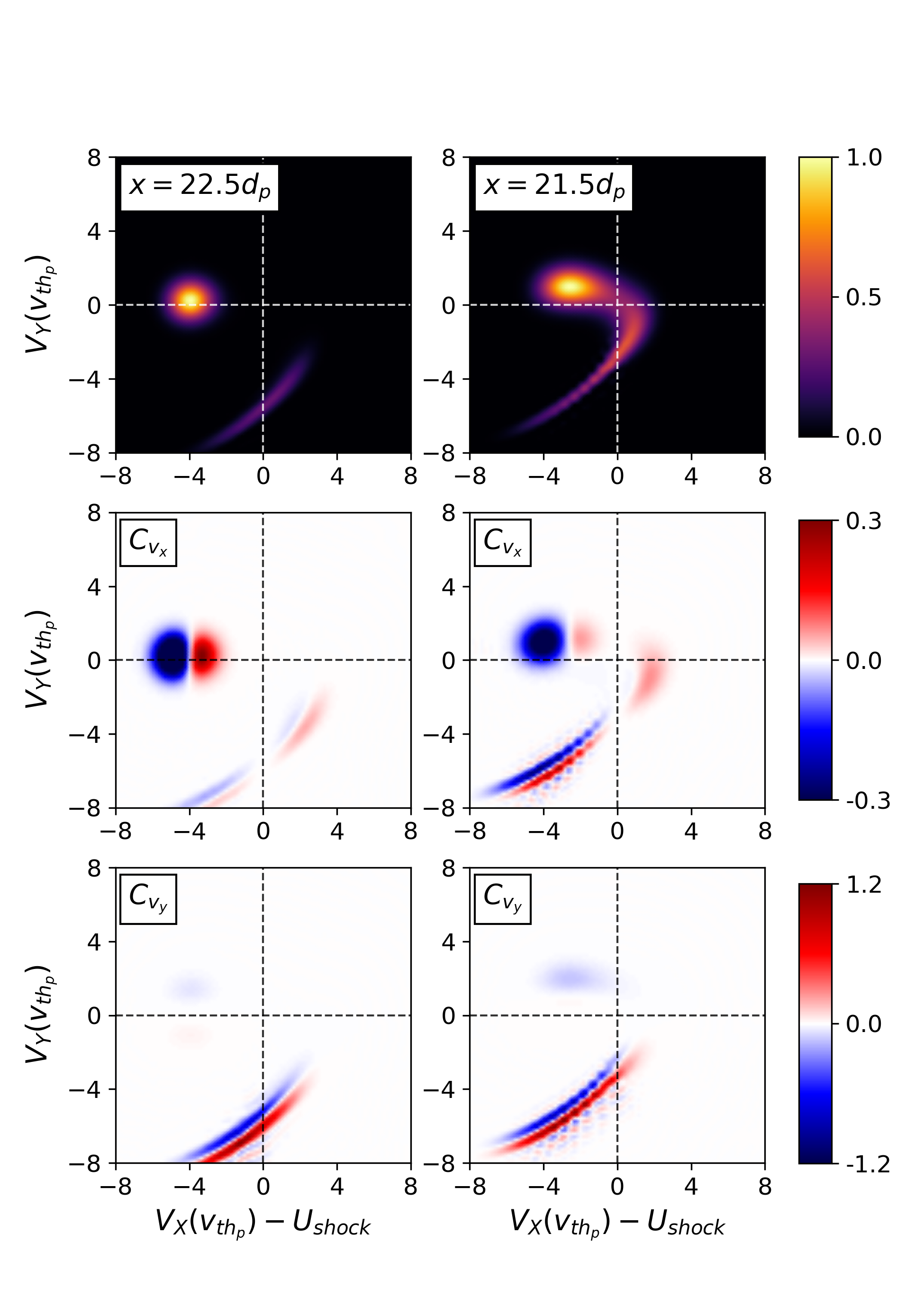}
    \vspace{-0.7in}
    \caption{Proton distribution functions (top row), $C_{v_x}$ field-particle correlations (middle row), and $C_{v_y}$ field-particle correlations (bottom row) in the shock foot and ramp region, where the shock has begun energizing the plasma. We see clear evidence in the proton distribution function of a high energy tail in $v_x - v_y$. Further, we note that the energization of the plasma is localized to this high energy tail. This energization is due to the component of the proton distribution function which returns upstream via its gyromotion, and is thus able to gain energy along the motional electric field, $E_y$, which supports the $\mvec{E} \times \mvec{B}$ drift.}
    \label{fig:protonFPCFootRamp}
\end{figure}
We note that in both the shock foot and ramp, the energization is dominantly in $v_y$ and concentrated in the vicinity of the high energy tail.

In Figure~\ref{fig:protonFPCPeakTranision}, we examine the overshoot and transition to the downstream region of the shock, where all of the secular energization is complete and the remaining energy exchange is governed by a sloshing back and forth between the electromagnetic fields and plasma.
\begin{figure}
    \centering
    \vspace{-0.8in}
    \includegraphics[width=\textwidth]{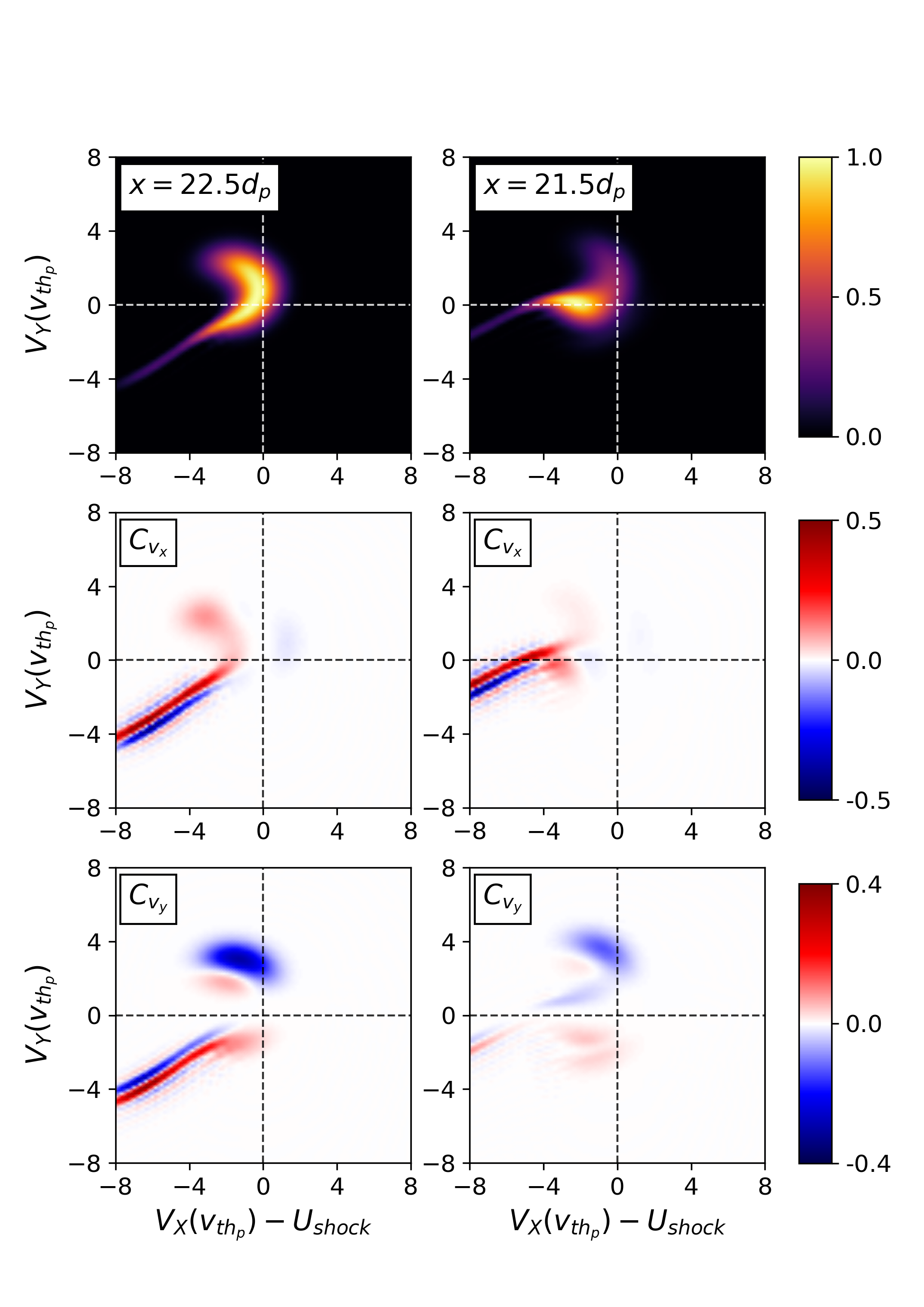}
    \vspace{-0.7in}
    \caption{Proton distribution functions (top row), $C_{v_x}$ field-particle correlations (middle row), and $C_{v_y}$ field-particle correlations (bottom row) in the overshoot and transition regions of the shock, after much of the secular energization has been completed by the shock. We see that the magnitude of the field-particle correlation has decreased in comparison to Figure~\ref{fig:protonFPCFootRamp}, and that the correlation has become more unstructured. By this point in the shock, protons in the plasma are almost downstream, and thus no long experience the gradient in the magnetic field off which the protons reflected, preventing them from gaining further energy along the motional electric field. What remains is oscillatory energy exchange between the plasma and the electromagnetic fields.}
    \label{fig:protonFPCPeakTranision}
\end{figure}
In the overshoot and transition region, we note that the energization has decreased in magnitude in the units of the field-particle correlation and become much more unstructured.
The progression from the region of direct energization to the downstream region where no further secular energization occurs and energy is merely exchanged back and forth between the fields and the particles is then nearly complete.
By this point, the shock is ``done'' in the sense of converting the incoming bulk kinetic energy of the supersonic flows to other forms of energy, though it remains for the downstream region to further partition the energy between the thermal energy of the plasma and electromagnetic energy via other collisionless processes. 

Although we can make some sense of the energy exchange occurring by the relative magnitudes of the field-particle correlation and the overall structure, we would like to understand what particular processes are present in the energy exchange.
We wish to further scrutinize the high energy tail in the shock ramp in the proton distribution function which is prominent in Figure~\ref{fig:protonFPCFootRamp} and a ``hot spot'' for the energization of the protons.
This higher energy tail in the proton distribution function arises from the component of the proton distribution function which returns upstream via its gyromotion, and is thus able to gain energy along the motional electric field, $E_y$, which supports the $\mvec{E} \times \mvec{B}$ drift.

To understand this process of protons returning upstream and gaining energy along the motional electric field, we consider a single-particle picture.
In this single-particle picture, we approximate the shock as a discontinuity in the magnetic field, since the proton gyro-orbit, or Larmor orbit, is as large or larger than the shock scale length, $\rho_p \gtrsim L_{shock} \sim d_p$.
In Figure~\ref{fig:trans_ideal}, we plot (a) the trajectory of a proton in the $(x,y)$ plane and (b) its corresponding trajectory in $(v_x,v_y)$ velocity space in the shock frame,  where the colors indicate the corresponding segments of the trajectory.
The proton velocity is normalized to the proton thermal velocity, $v_{th_p}$.
\begin{figure}
    \centering
    \includegraphics[width=0.7\textwidth]{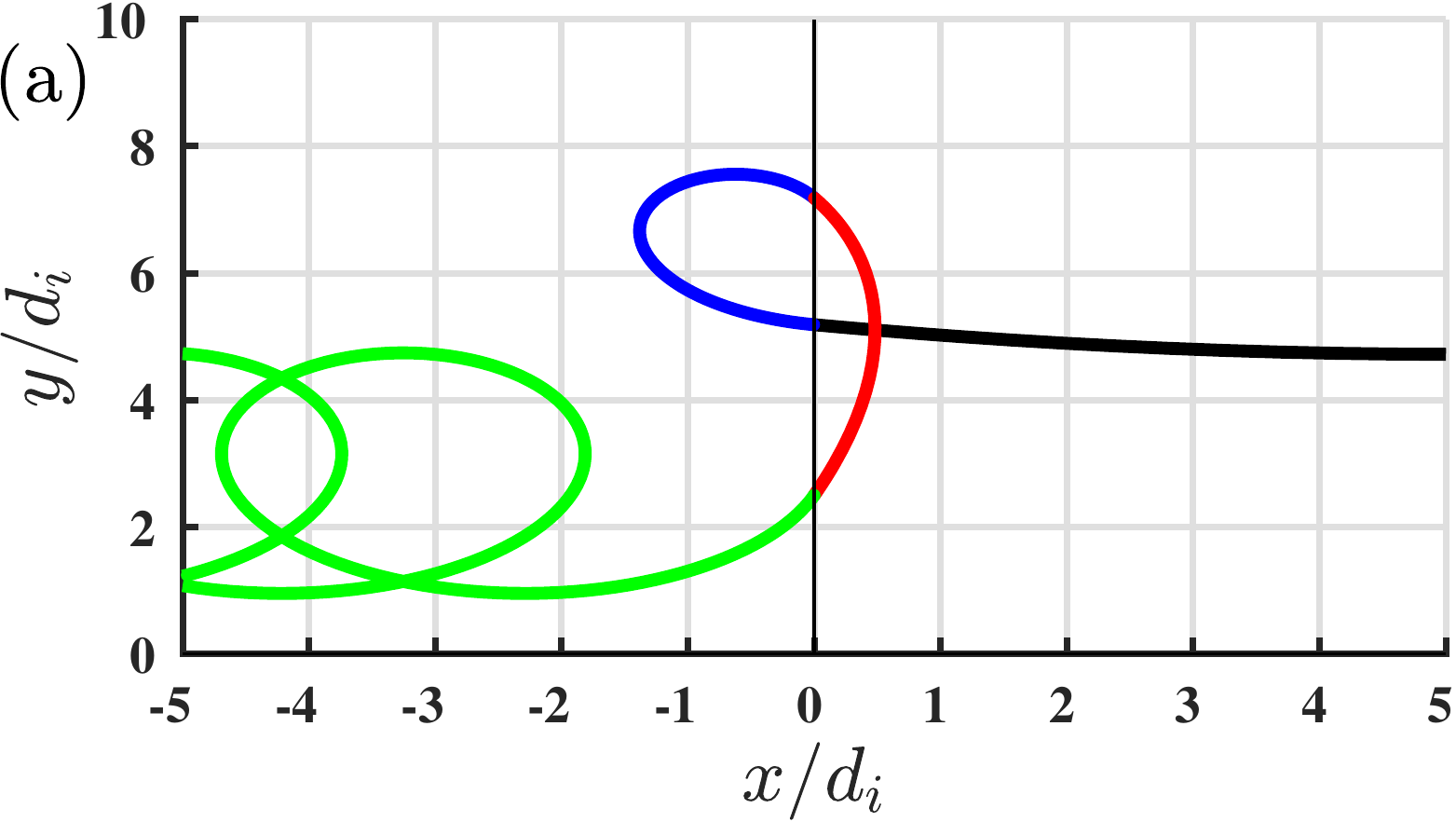}
    \includegraphics[width=0.7\textwidth]{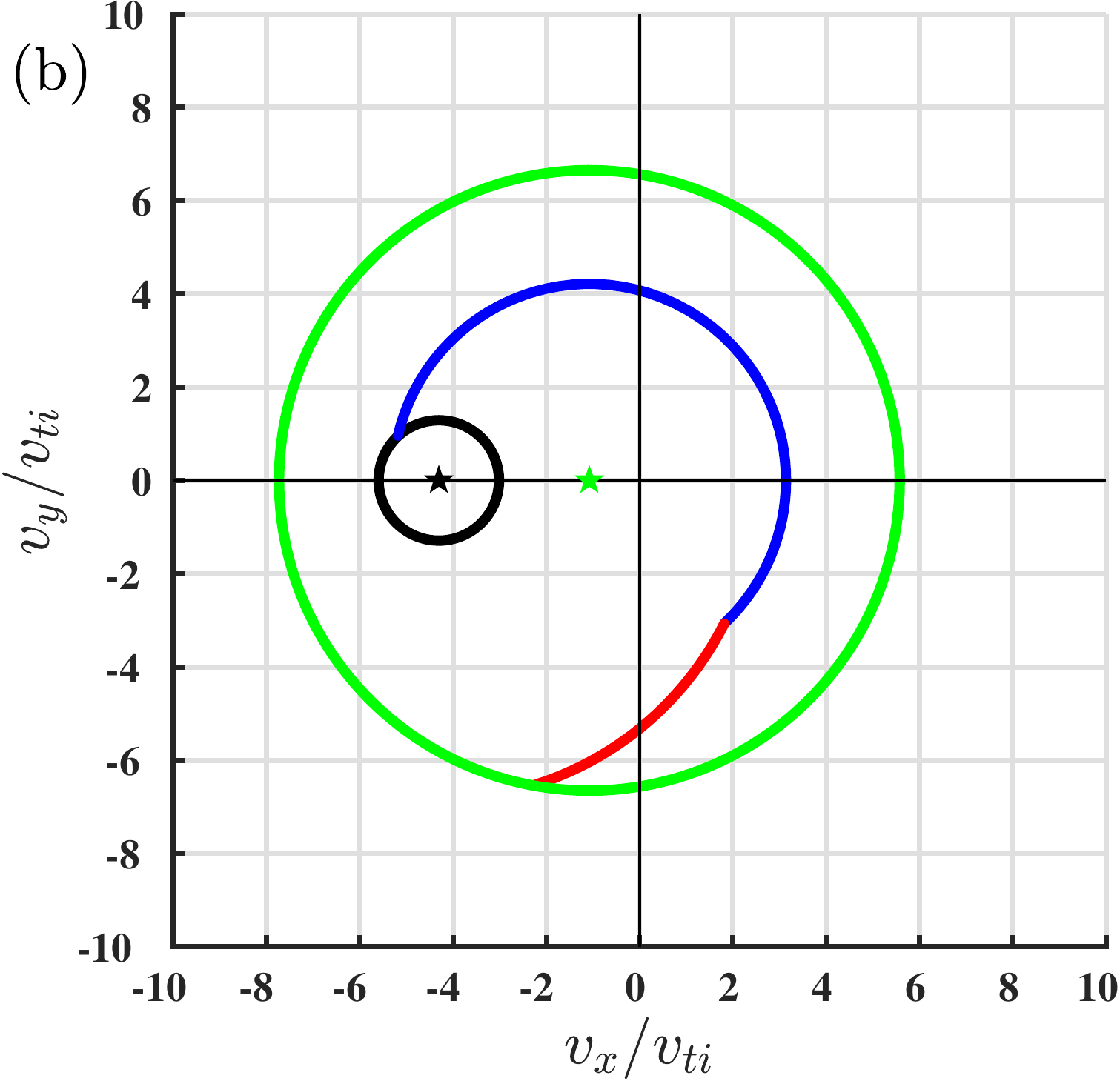}
    \caption{(a) Real space trajectory of a proton as it traverses the shock front and (b) the corresponding velocity space trajectory. Note that the magnetic gradient is assumed to be a discontinuity in this simple picture of the perpendicular shock. The colors of the particle trajectories in real space (a) correspond to the particle's location in phase space (b). Black is upstream, blue corresponds to a proton crossing the magnetic discontinuity before returning upstream, gaining energy along the red trajectory, and then returning downstream and following the green trajectory.}
    \label{fig:trans_ideal}
\end{figure}
In the upstream region, $x>0$ (black), the black circle centered about the upstream $\mvec{E} \times \mvec{B}$ velocity (black star) corresponds to the gyro-orbit of the proton about the upstream inflow velocity in the $(v_x,v_y)$ plane. 

Upon first crossing the magnetic discontinuity to $x<0$, the particle changes to a Larmor gyration in the $(v_x,v_y)$ plane (blue) about the downstream $\mvec{E} \times \mvec{B}$ velocity (green star).
In the larger amplitude downstream perpendicular magnetic field, the radius of the Larmor motion in the $(x,y)$ plane is reduced (blue), and under appropriate conditions, it can lead to the particle crossing back upstream to $x>0$ (red). 

When the proton passes back upstream to $x>0$, it will once again undergo a Larmor orbit in the $(v_x,v_y)$ plane (red) about the upstream $\mvec{E} \times \mvec{B}$ velocity (black star).  
In this segment of the trajectory (red), the proton gains perpendicular energy in the shock frame, given by the distance in velocity space of the proton from the origin of the $(v_x,v_y)$ plane.
This picture is exactly what we observe in phase space in Figure~\ref{fig:protonFPCFootRamp}, and it is no coincidence that the segment of the trajectory in red roughly corresponds to the location in phase space of the high energy tail which is gaining energy in our self-consistent perpendicular shock simulation.

Finally, the particle will eventually cross back into the downstream region to $x<0$ (green), resuming its Larmor orbit in the $(v_x,v_y)$ plane (green) about the downstream $\mvec{E} \times \mvec{B}$ velocity (green star).
Without any additional crossings of the magnetic discontinuity, the proton will simply $\mvec{E} \times \mvec{B}$ drift downstream, periodically gaining and losing energy, in the shock frame, due to work on the proton by the motional electric field $E_y<0$, but the proton will experience no net energization over a complete Larmor orbit.
This energy exchange, without any overall gain in energy, is present in Figure~\ref{fig:protonFPCPeakTranision}, wherein the field-particle correlation becomes more structured and lower amplitude.
In the transition to the downstream region, we only observe the oscillatory exchange of energy between the electromagnetic fields and plasma because the protons are drifting past the magnetic gradient.
Once the protons have drifted past the magnetic gradient, they no longer have the means to return upstream and gain energy off the motional electric field.

Whether a given proton will be ``reflected''\footnote{Note that, unlike many early simple models of collisionless shocks \citep[eg.,][]{Sckopke:1983}, this is not a specular reflection at the magnetic discontinuity at $x=0$, but rather the result of the Lorentz force leading to a return of the proton upstream to $x>0$ due to the increased magnetic field at the shock ramp.
} 
by the increased magnetic field magnitude beyond the discontinuity and return to the upstream region ($x>0$) from downstream ($x<0$) depends on three conditions in this idealized shock model:
(i) the jump in the magnetic field magnitude $B_{d}/B_{u}$; 
(ii) the perpendicular velocity in the frame of the upstream $\mvec{E} \times \mvec{B}$ velocity relative to that inflow velocity, $v_{\perp, u}/U_u$; 
and (iii) the gyrophase $\theta$ of the 
proton's gyro-orbit when it first reaches the magnetic discontinuity at $x=0$. 
For given values of $B_{d}/B_{u}$ and $v_{\perp, u}$ reflection may occur over a range of 
values of gyrophase $\theta$.
For the self-consistent perpendicular shock studied here, a portion of the the distribution of protons have the required gyrophase to reflect off the magnetic gradient and gain energy in Figure~\ref{fig:protonFPCFootRamp}.

The energization mechanism we have identified in Figure~\ref{fig:protonFPCFootRamp} is called \emph{shock-drift acceleration} and has been studied previously in the literature \citep{Paschmann:1982,Sckopke:1983, Anagnostopoulos:1994, Anagnostopoulos:2009, Ball:2001}.
We have identified, for the first time, the phase space signature of this energization process using the field-particle correlation and a continuum method for the solution of the VM-FP system of equations.
Phase space energization signatures, such as those shown in Figure~\ref{fig:protonFPCFootRamp} for shock-drift acceleration, are useful not just for the study of direct numerical simulations, but also as a means of interpreting observational results from in situ spacecraft---see \citet{Chen:2019} and the motivating theoretical studies by \citet{Howes:2017} and \citet{Klein:2017b}.

\subsubsection{Electron Energization in a Perpendicular Shock}

Having identified the proton energization mechanism, we turn now to the electron dynamics in the shock.
We again examine the field-particle correlation in $v_x$ and $v_y$ through the shock foot to the transition to the downstream in Figures~\ref{fig:electronFPCFootRamp} and \ref{fig:electronFPCPeakTranision}.
\begin{figure}
    \centering
    \vspace{-1.0in}
    \includegraphics[width=\textwidth]{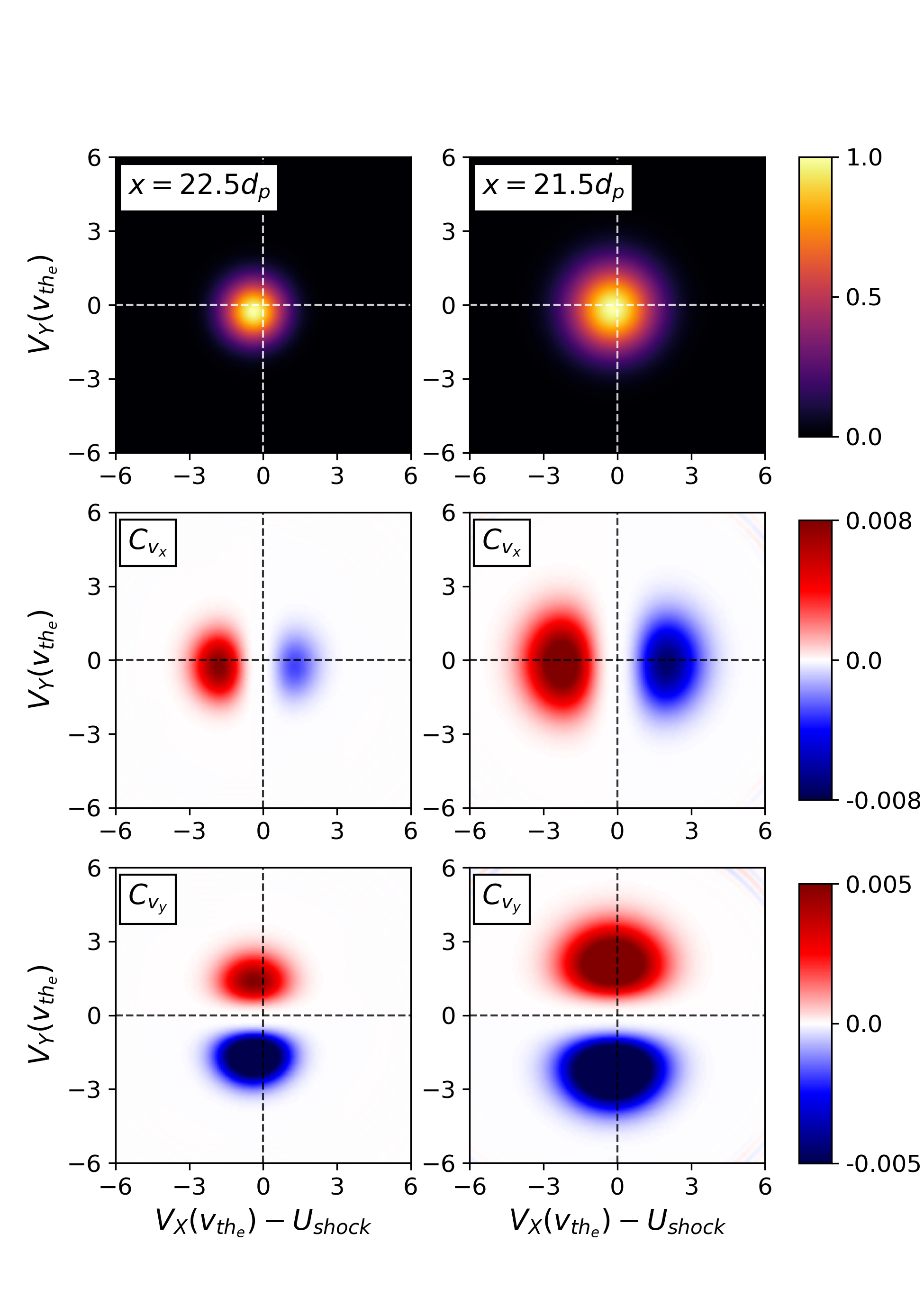}
    \vspace{-0.7in}
    \caption{Electron distribution functions (top row), $C_{v_x}$ field-particle correlations (middle row), and $C_{v_y}$ field-particle correlations (bottom row) in the shock foot and ramp region. The field-particle correlation has a slight asymmetry that corresponds to an energy gain to the $x$ field-particle correlation and an energy loss due to the $y$ field-particle correlation. The gain in energy due to $E_x$ exceeds the loss in energy due to $E_y$, corresponding to a net energization of the electrons.}
    \label{fig:electronFPCFootRamp}
\end{figure}
At first glance, the phase space signature appears to roughly cancel on each side of $v_{x,y} = 0$ for all of the correlations, each correlation has a slight asymmetry which leads to either net energization or net de-energization.
\begin{figure}
    \centering
    \vspace{-1.0in}
    \includegraphics[width=\textwidth]{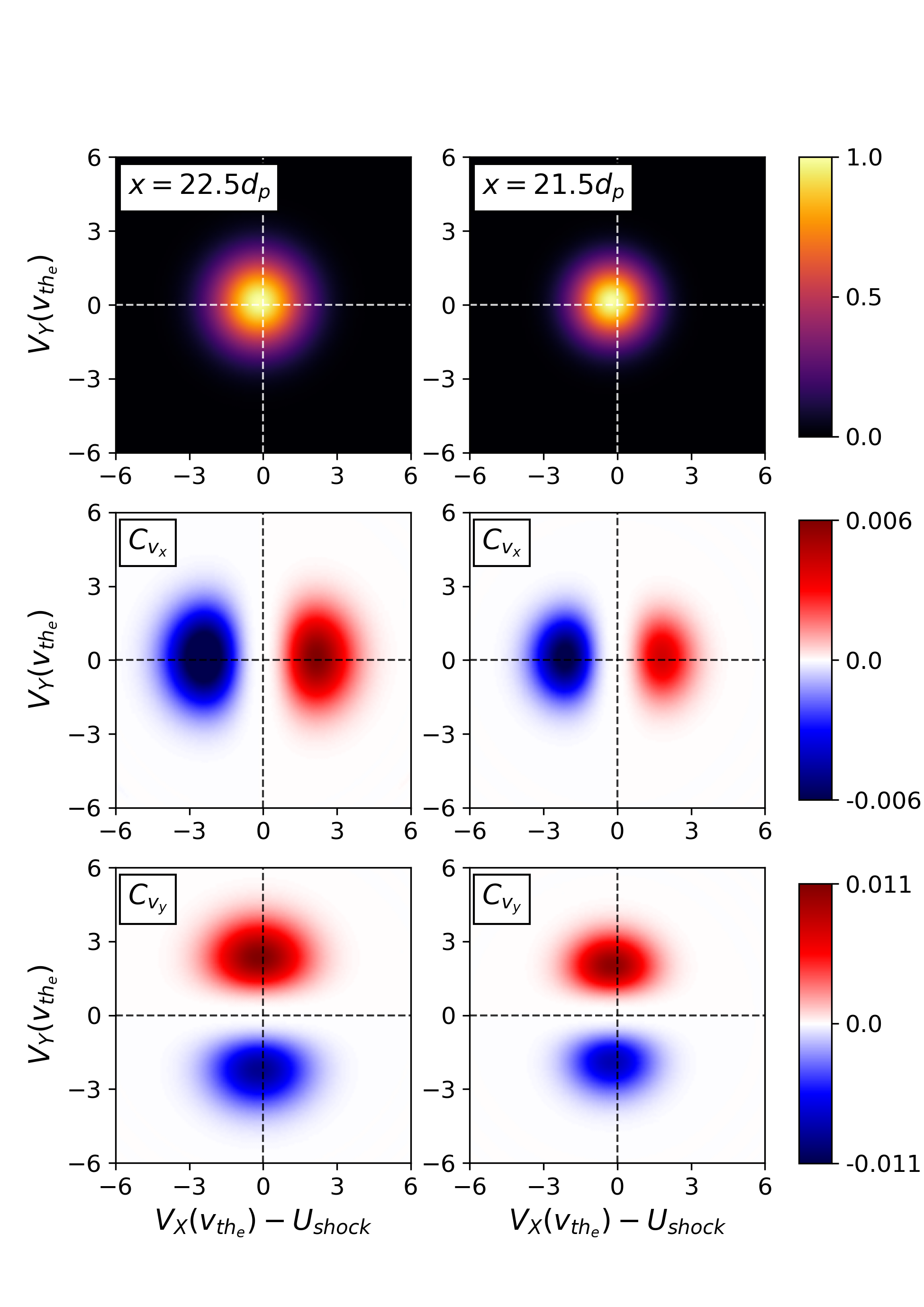}
    \vspace{-0.7in}
    \caption{Electron distribution functions (top row), $C_{v_x}$ field-particle correlations (middle row), and $C_{v_y}$ field-particle correlations (bottom row) in the overshoot and transition regions of the shock. Here, we observe the opposite behavior to Figure~\ref{fig:electronFPCFootRamp}, where now the asymmetry in the field particle correlation is such that the particles gain energy due to $E_y$ and lose energy due to $E_x$. The gain in energy due to $E_y$ still exceeds the loss in energy due to $E_x$, so the electrons continue to gain energy in this region of the shock. This particular energization signature in the $y$ field particle correlation arises from alignment of the $\gx B$ drift and the motional electric field, $E_y$, and relies on conservation of the electron's magnetic moment, the first adiabatic invariant. Because of the relationship between this energization mechanism and the electron's first adiabatic invariant, we call this adiabatic heating.}
    \label{fig:electronFPCPeakTranision}
\end{figure}
In the shock foot and ramp, Figure~\ref{fig:electronFPCFootRamp}, these slight asymmetries correspond to a gain of energy due to $E_x$, and a loss of energy due to $E_y$, and we note by their magnitudes that more energy is gained due to $E_x$ than lost due to $E_y$.
Thus, the electrons overall gain energy.
We see the opposite trend in the overshoot and transition to the downstream, Figure~\ref{fig:electronFPCPeakTranision}, wherein the electrons gain energy due to $E_y$ and lose energy due to $E_x$.
Again, the gain in energy due to $E_y$ is larger than the loss of energy due to $E_x$, so the electrons overall continue to gain energy.

The energy gain and loss due to $E_x$ can be thought of simply as electrons responding to an electrostatic potential, $E_x = -\partial \phi/ \partial x$, as $E_x$ is the electrostatic component of the electromagnetic fields.
We are especially interested, though, in the energy gain (and loss) due to $E_y$, the electromagnetic component of the electric field, since this component of the field supports the compression of the magnetic field.
To understand the energy exchange between the electrons and $E_y$, we again turn to a single-particle picture for intuition.

Because the electron gyro-orbit is much smaller than the length scale of the collsionless shock, $\rho_e \ll L_{shock} \sim d_p$, we approximate the shock in an idealized model as a linear ramp in the magnetic field.
In Figure~\ref{fig:elec_profile}, we plot in the top panel the profile of the perpendicular magnetic field $B_z(x)$ (blue) and the motional electric field $E_y(x)$ (red) along the shock normal direction, and in the middle panel the trajectory of an electron in the $(x,y)$ plane as it flows through the shock ramp, $0 \le x/d_p \le 2$.
\begin{figure}
    \centering
    \includegraphics[width=\textwidth]{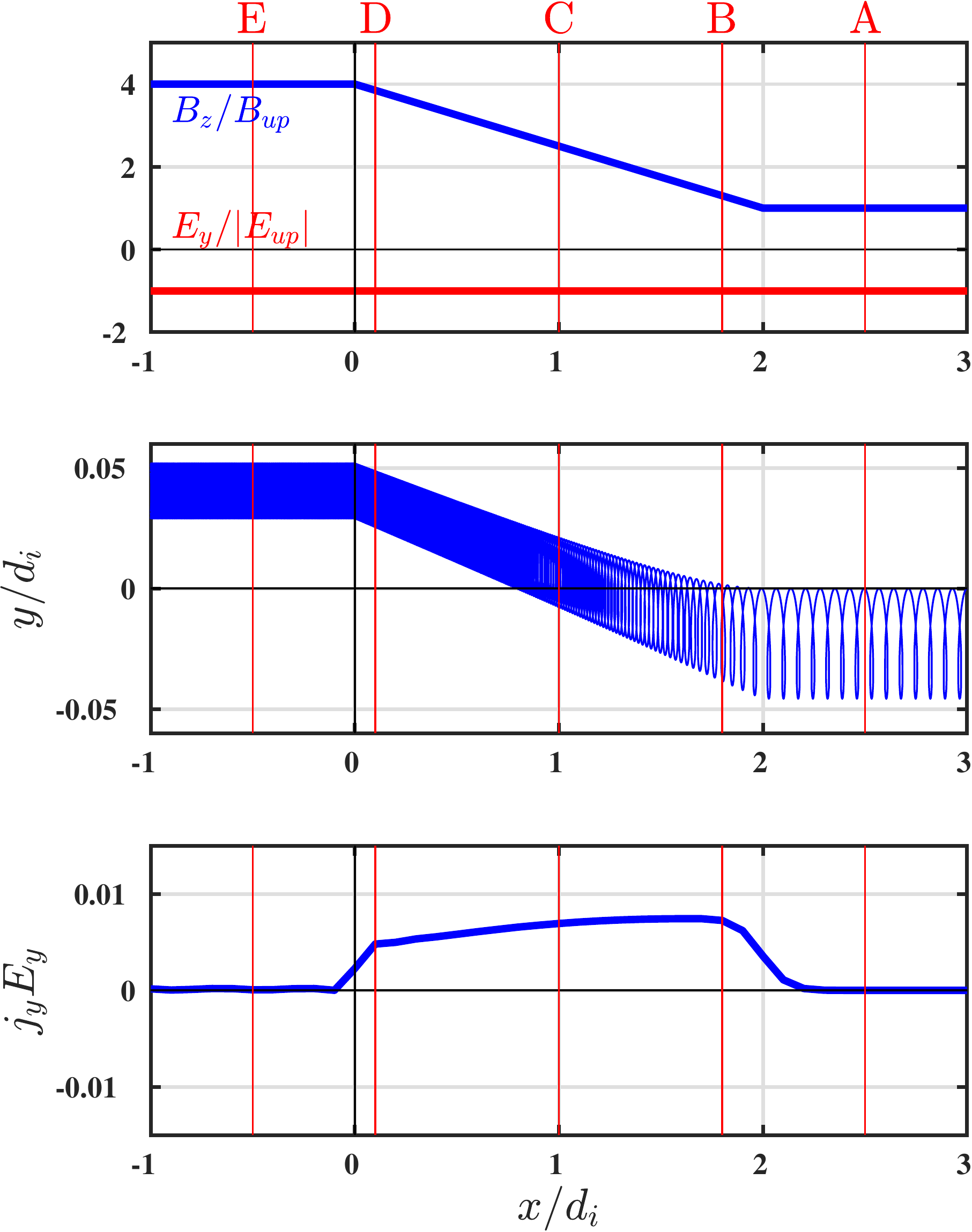}
    \caption{(Top panel) Profiles along the shock normal direction of the perpendicular magnetic field $B_z$ (blue) and the motional electric field $E_y$ (red), (Middle panel) trajectory of an electron in the $(x,y)$ plane, and 
    (Bottom panel) the rate of work done by the electric field on the electron $j_y E_y$.}
    \label{fig:elec_profile}
\end{figure}
The trajectory plot shows clearly the $\gx B$ drift in the $+y$ direction. 
A salient difference between the idealized single particle motion for electrons and protons is that the electron thermal velocity is larger than the inflow velocity, so electrons can move in the $+x$ direction, even upstream of the shock.
This condition is also satisfied for the shock parameters in our self-consistent perpendicular shock simulation, $U_{shock} \sim 2 v_A \ll v_{th_e}$.

Although the electron constantly gains and loses energy as part of its $\mvec{E} \times \mvec{B}$ drift due to the motional electric field $E_y$, the net effect on the particle energy over a Larmor orbit is zero, because the drift in the $-x$ direction is perpendicular to the electric field, $U_{\mvec{E} \times \mvec{B}} \cdot E_y = 0$.
But, in the region where the perpendicular magnetic field changes magnitude, $0 \le x/d_p \le 2$, a $\gx B$ drift arises in the $+y$ direction, which leads to a net energization of the electrons by $E_y$.
This alignment of the motional electric field, $E_y$, with a drift, in this case the $\gx B$ drift, allows the electrons to gain energy, as shown in the bottom panel of Figure~\ref{fig:elec_profile}.

As an aside, the rate of energization of the electrons by the $\gx B$ drift in the motional electric field is precisely the rate required to satisfy the conservation of the first adiabatic invariant of the electron, the electron magnetic moment, $\mu = m_e v_\perp^2/2 B_z$. 
This connection can be shown by calculating the net rate of work done by $E_y$ due to the $\gx B$ drift, which contributes to the perpendicular kinetic energy of the electrons,
\begin{align}
    \frac{d m_e v_\perp^2/2}{dt}=q_e u_{\gx B} E_y,
\label{eq:elecmu}
\end{align}
where the magnitude of the $\gx B$ drift in the $+y$ direction is given by 
\begin{align}
u_{\nabla B}= \frac{m_e v_\perp^2}{2 q_e B_z} 
\left(\frac{1}{B_z} \frac{\partial B_z}{\partial x} \right).
\end{align}
For the static fields in this idealized model, the total time derivative is determined by the $\mvec{E} \times \mvec{B}$ velocity, 
\begin{align}
    \frac{d}{dt} = \pfrac{}{t} + u_x \pfrac{}{x} = u_{\mvec{E} \times \mvec{B}} \pfrac{}{x}.
\end{align}
Substituting $u_{E \times B}=E_y/B_z$, we can manipulate \eqref{eq:elecmu} to obtain
\begin{align}
\frac{\partial }{\partial x} \frac{m_e v_\perp^2}{2B_z}= \frac{\partial \mu }{\partial x}=0,
\label{eq:mucons}
\end{align}
proving that the electron's first adiabatic invariant $\mu$ is conserved.
Because this energization process relies on the electron's first adiabatic invariant being conserved, we call this energization \emph{adiabatic heating}.

This simple model for the electron energization presumes that the only electric field participating is $E_y$, but we can see from Figure~\ref{fig:perpShockFields} that this is not the case.
Even if the electrostatic field is roughly bi-modal across the shock so that much of the energy exchange between the electrostatic field and the electrons is reversed when the electrons cross downstream, the presence of this electrostatic field still complicates the picture.
The electrostatic electric field leads to an $\mvec{E} \times \mvec{B}$ flow in the $-y$ direction which counters the $\gx B$ drift in the $+y$ direction.
Still, for at least a component of the energization through the shock, especially in the overshoot and transition region in Figure~\ref{fig:electronFPCPeakTranision}, we see a signature in the field-particle correlation of energy gain in $y$, which is characteristic of this alignment between the $\gx B$ drift and the motional electic field, $E_y$.
Because of the finite $\gx B$ drift, there are more electrons with velocities aligned with the motional electric field, $E_y$, leading to the asymmetry in the field-particle correlation in Figure~\ref{fig:electronFPCPeakTranision}, and thus a net gain of energy for the electrons.

We conclude this study of a self-consistent perpendicular shock with our DG VM-FP solver noting that, with the combination of diagnostics such as the field-particle correlation and our continuum representation of the particle distribution function, we can directly diagnose the energy exchange of kinetic plasma processes in phase space.
We have shown, for the first time, the phase space signature of shock-drift acceleration of the protons and adiabatic heating of the electrons in a collisionless shock.
Although these energization mechanisms have been studied previously, especially using the same single particle, and more generally Lagrangian, picture we used to model the particulars of the energization processes, the Eulerian phase space picture presented here is of considerable value.
Especially when interpreting spacecraft observations of particle distribution functions, which must usually be done in the Eulerian frame to obtain good enough sampling statistics, having a means of interpreting the specific energization mechanisms opens new possibilities for diagnosing the details of the phase space dynamics.

There is more that can be learned from this perpendicular shock simulation.
For example, we have only noted and not examined the competition between the electrostatic and electromagnetic electric fields in energizing electrons.
Given the requirements for adiabatic heating, $\rho_e \ll L_{shock}$, we might expect more realistic mass ratios to yield different results for this competition as well.

Finally, the distribution function structure we resolve in the downstream region, where the plasma and electromagnetic fields continually exchange energy, is a rich problem for understanding the ultimate ``mixing'' of the plasma.
Collisionless shocks are often discussed interchangeably with irreversible heating and entropy increase, though we note that the energy exchange happens on length scales much smaller than the collisional mean-free path.
Thus, despite the total energy exchange being ``done'' once the shock has passed through the plasma, we expect additional kinetic mechanisms are at play which transfer energy to smaller velocity space scales, where collisions ultimately dissipate this energy.
Given the structure we can represent in phase space with the continuum VM-FP solver presented in this thesis, we expect the ultimate diagnosis of this collisionless mixing is ideally studied by the approach taken here, as the details of the collisionless mixing may be obscured in particle-based method with the artificial collisionality introduced by finite sized particles \citep{birdsallbook}.

The focus of this section has been on how we can use the high fidelity representation of the distribution function to more carefully analyze plasma processes in phase space.
Because diagnostics such as the field-particle correlation, \eqr{\ref{eq:FPCTimeAvg}} and \eqr{\ref{eq:FPCInstantaneous}}, involve gradients of the velocity distribution function, traditional particle-based methods may have difficulty leveraging these diagnostic to examine the precise processes present.
Counting noise can add sizable errors to the computation of these velocity space gradients, and significant spatial averaging to reduce the noise in post-processing may mix energization processes occurring in different regions of configuration space, thus making it more challenging to determine the specifics of the energy exchange between the plasma and the electromagnetic fields.
We now turn to another application which reveals a different utility of the continuum kinetic discretization: the phase space dynamics themselves being sensitive to phase space resolution.

\section{The Phase Space Dynamics of Filamentation-Type Instabilities}

We consider here an extension of the benchmark studied in Section~\ref{sec:hybridTSW}, the phase space dynamics of filamentation-type instabilities.
Recall in Figure~\ref{fig:linearTheoryTSW} for the parameters chosen for the benchmark that the oblique, $45^{\circ}$, mode had a growth rate within 20-30 percent of the two-stream.
This may not be similar enough to affect the dynamics under more general perturbations of all modes in the system for this parameter regime, $v_{th_e}/u_y = 1/3$, $v_{th_e} = 0.1c$.
But the evolution the competition of all the modes present, as would occur in the astrophysical systems where these modes are present, is likely to have an effect on the dynamics.
For example, we can ask whether the full spectrum of modes vying for dominance under more general conditions affects the efficiency of magnetic field growth from the unstable beams of plasma, a question of vital importance for the origins of the cosmological magnetic field \citep{Schlickeiser:2003, Lazar:2009}.

If we survey the parameter space more extensively, we find that these oblique modes can have comparable growth rates to the two-stream instability as the ratio of the thermal velocity to the drift speed is reduced and the beams are made colder---see Figure~\ref{fig:fullLinearTSW}.
\begin{figure}
    \centering
    \vspace{-0.2in} 
    \includegraphics[width=\textwidth]{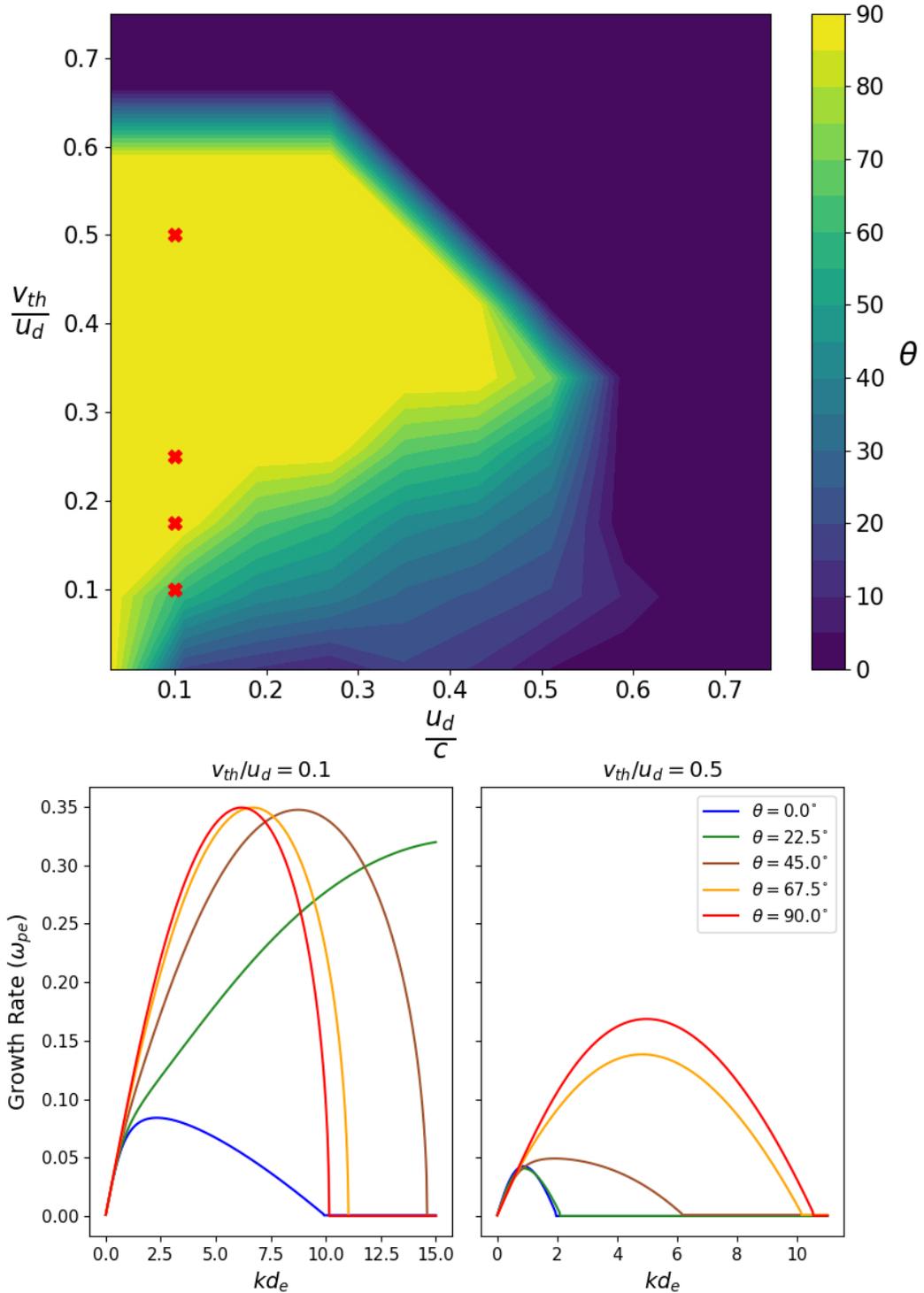}
    \vspace{-0.4in}    
    \caption{Contour plot of the angle of the fastest-growing mode in the parameter space of $v_{th_e}/u_d$ and $u_d/c$ (top panel). $\theta=90^{\circ}$ corresponds to a pure two-stream mode, and $\theta=0^{\circ}$ corresponds to a pure filamentation mode. Red crosses correspond to the four simulations presented. Growth rates versus wavenumber (bottom panels) of different modes for the hot (right panel) and cold (left panel) cases for $u_d=0.1c$. We can see in the hot case, $v_{th_e}/u_d = 0.5$, that the two-stream instability is the fastest growing mode, while when we make the beams colder, $v_{th_e}/u_d = 0.1$, the oblique modes for a variety of angles have comparable growth rates to the pure two-stream instability.}
    \label{fig:fullLinearTSW}
\end{figure}
Although some parameters, e.g., $v_{th_e}/u_d = 0.5$, clearly show that the two-stream instability is the fastest growing mode and there is not much competition for the fastest growing mode in the system, we can expect that the competition could be quite significant as the beams become colder and multiple modes spanning a wide range of angles saturate at similar times.

To study the competition between all of these modes, two-stream, oblique, and filamentation, we set-up a similar phase space domain to Section~\ref{sec:hybridTSW}, two configuration space and two velocity space dimensions (2X2V) with a drifting electron-proton plasma.
The protons are taken to be a stationary, charge-neutralizing background as before, and the electrons are initialized as two drifting Maxwellians, \eqr{\ref{eq:TSWElcInit}}.
We repeat this initial electron distribution here for clarity,
\begin{align*}
    f_e (x, y, v_x, v_y) = \frac{m_e n_0 }{2 \pi T_e} \exp & \left (- m_e \frac{(v_x)^2 + (v_y - u_d)^2}{2 T_e} \right ) \notag \\
    & + \frac{m_e n_0 }{2 \pi T_e} \exp \left (- m_e \frac{(v_x)^2 + (v_y + u_d)^2}{2 T_e} \right ).    
\end{align*}
The electromagnetic fields are initialized as a bath of fluctuations in the electric and magnetic fields in the two configuration space dimensions, i.e.,
\begin{align}
    B_z(t=0)=\sum_{n_x,n_y=0}^{16,16}\tilde B_{n_x,n_y}\sin \left (\frac{2\pi n_x x}{L_x}+\frac{2\pi n_y y}{L_y}+\tilde \phi_{n_x,n_y} \right ), \label{eq:BTSWInit}
\end{align}
where $\tilde B_{n_x, n_y}$ and $\tilde \phi_{n_x,n_y}$ are random amplitudes and phases respectively.
The electric fields, $E_x(t=0)$ and $E_y(t=0)$, are initialized similarly to \eqr{\ref{eq:BTSWInit}}, and all three dynamically important electromagnetic fields in this two dimensional geometry are given equal average energy densities, $\langle  \epsilon_0 E_x^2/2\rangle=\langle \epsilon_0 E_y^2/2\rangle=\langle B_z^2/2\mu_0\rangle\approx 10^{-7}E_K$, where $E_K$ is the initial total electron energy.

We focus on four particular simulations, whose parameters are indicated by red crosses in Figure~\ref{fig:fullLinearTSW}.
The drift velocity is fixed at $u_d=0.1c$, but we vary the temperature of the beams by choosing $v_{th_e}/u_d\in \{0.1,0.175,0.25,0.5\}$. 
The box sizes, respectively, are $L_x/d_e\in \{2.7,3.8,4.4,7.7\}$ and $L_y/d_e\in \{3.1,4.0,4.8,6.3\}$, where $d_e$ is the electron inertial length, $d_e = c/\omega_{pe}$.
Box sizes $L_x=2\pi/k_{0^{\circ}}^{max}$ and $L_y=2\pi m/k_{90^{\circ}}^{max}$ are chosen to be roughly equal, $L_x\approx L_y$, while fitting a single fastest-growing wavelength of the filamentation instability and an integer number, $m\approx k_{90^{\circ}}^{max}/k_{0^{\circ}}^{max}$, of two-stream modes.
The configuration space boundary conditions are periodic, and the velocity space boundary conditions are zero-flux.
The velocity space extents are varied for each simulation to contain the phase space evolution of the instabilities in the nonlinear regime, $[-3 u_d, u_d]^2$ to $[-5 u_d, 5 u_d]^2$.
Likewise, we vary the resolution in configuration and velocity space to obtain convergence, from $32^2 \times 32^2$ to $64^2 \times 96^2$.
All simulations use piecewise quadratic Serendipity polynomials.

We plot in Figure~\ref{fig:growth_sat} the evolution of the magnetic field energy, $\epsilon_B$, and electric field energy, $\epsilon_E$, normalized to the initial total energy of the electrons.
\begin{figure}
    \centering
    \includegraphics[width=\textwidth]{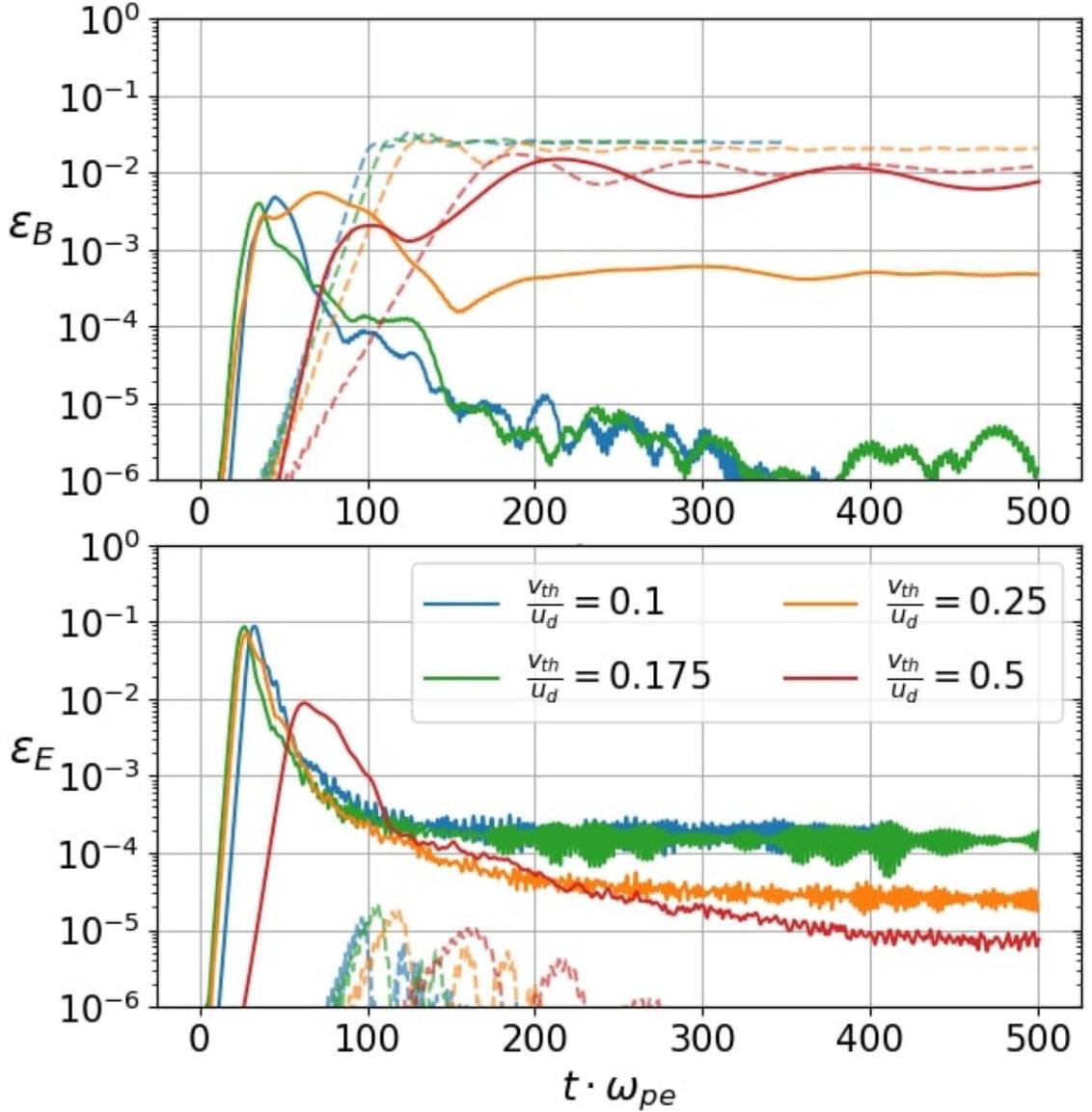}
    \vspace{-1.0in}  
    \caption{Growth and saturation of magnetic field (top panel) and electric field (bottom panel) energies normalized by the initial total electron energy for beams with drift velocity $u_d=0.1c$ at different temperatures. Solid lines correspond to 2X2V simulations with initial random modes which drive two-stream, oblique and filamentation modes, while dashed lines correspond to 1X2V simulations which only support pure filamentation modes. We can see clearly the effect of the higher dimensionality and competition between the different modes, since for all 1X2V simulations, regardless of the ratio of $v_{th_e}/u_d$, a magnetic field grows and saturates, whereas the growth of a magnetic field is sensitive to this ratio of $v_{th_e}/u_d$ when the two-stream, oblique, and filamentation modes are allowed to compete with each other in two configuration space dimensions.}
    \label{fig:growth_sat}
\end{figure}
We compare in Figure~\ref{fig:growth_sat} the results of the four simulations in 2X2V (solid lines), where two-stream, oblique, and filamentation modes are allowed to grow and compete with each other, with the results of similar 1X2V simulations (dashed lines) varying $v_{th_e}/u_d$, but which only support the filamentation instability.
We see that, while the 1X2V simulations robustly grow a magnetic field from the free energy of the unstable beams of plasma and the formation of current filaments from this free energy, irrespective of this ratio of $v_{th_e}/u_d$ and the temperature of the beams, the situation is quite different in two configuration space dimensions, wherein the various modes are permitted to compete with each other.

In 2X2V, the initial growth phase is quite different from the corresponding 1X2V simulations.
In 2X2V, we see the growth of both magnetic and electric fluctuations due to the combination of unstable oblique and two-stream modes.
The oblique modes in particular are what lead to the growth of both electric and magnetic field fluctuations, as the two-stream instability would only grow an electric field, and the filamentation instability is much more slowly growing than the other instabilities.
Following saturation, potential wells formed by the saturation of two-stream and oblique modes, the tilted current filaments of oblique modes, and the vertical, i.e., uniform in $y$, current filaments associated with the potentially still-growing filamentation instability all nonlinearly interact and vie for dominance.

To understand this interplay between the formation of current filaments and potential wells by the various instabilities, we examine the electromagnetic fields and particle distribution functions of the two limiting cases, $v_{th_e}/u_d = 0.5$, the hot case, and $v_{th_e}/u_d = 0.1$, the cold case.
\begin{figure}
    \centering
    \vspace{-0.5in}  
    \includegraphics[width=0.8\textwidth]{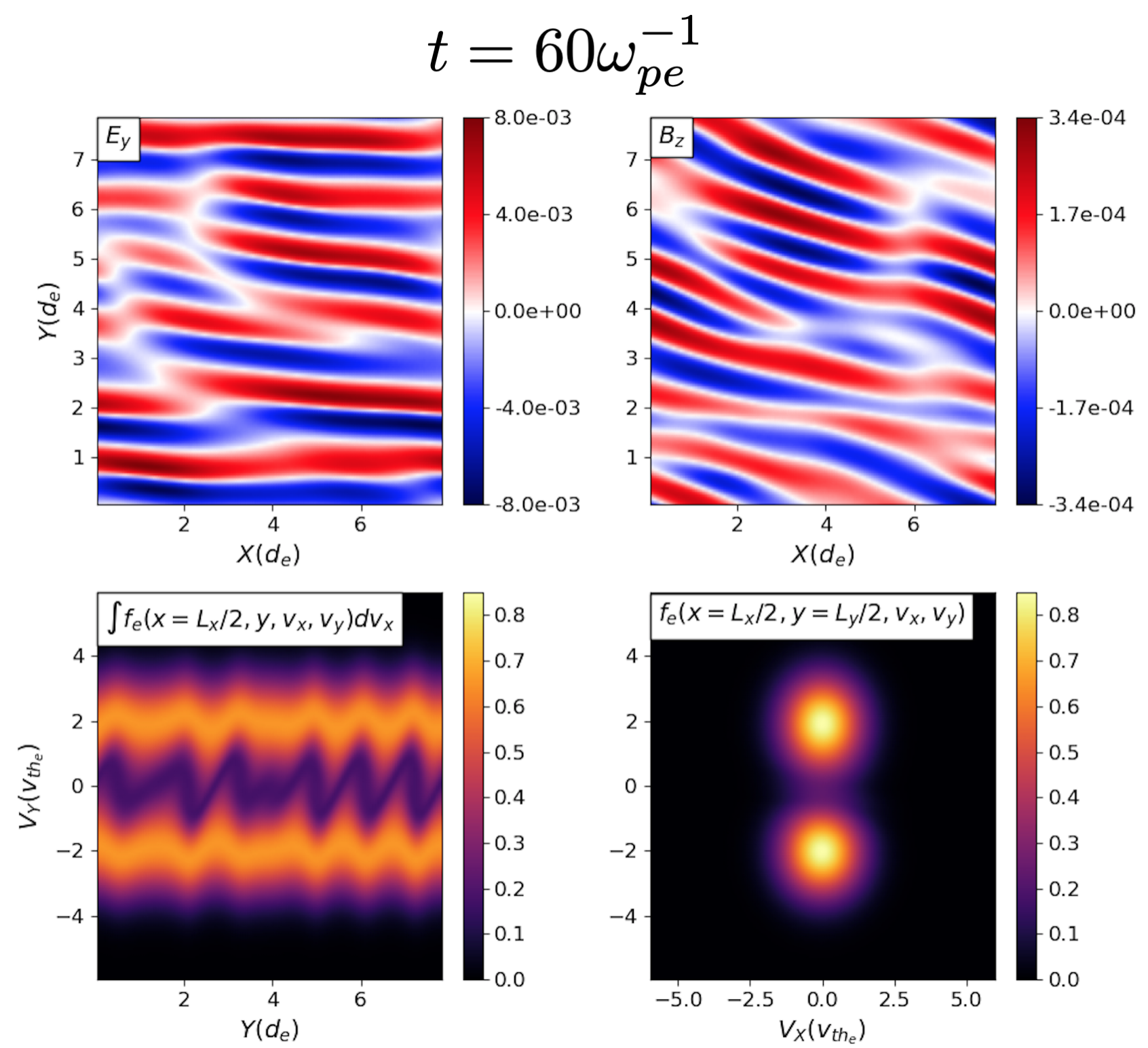}
    \includegraphics[width=0.8\textwidth]{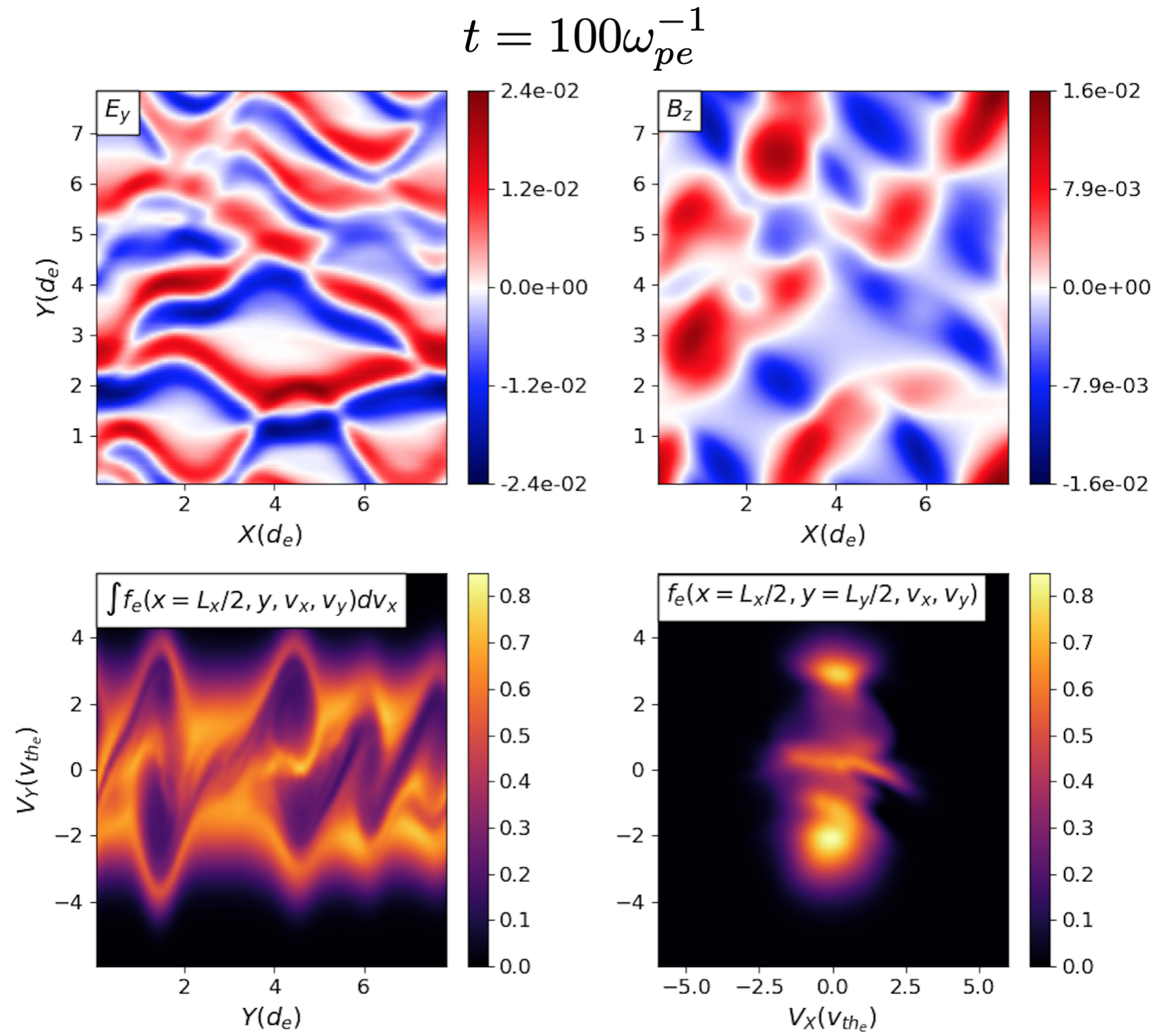}    
    \caption{$t = 60 \omega_{pe}^{-1}$ and $t = 100 \omega_{pe}^{-1}$ snapshots of the evolution of the hot case. We see the initial development of the two-stream instability and roll-up of the distribution function, before the electron tubes formed by the two-stream instability are destroyed by the more slowly growing filamentation instability.}
    \label{fig:hotTSWEarly}
\end{figure}
\begin{figure}
    \centering
    \vspace{-0.5in}  
    \includegraphics[width=0.8\textwidth]{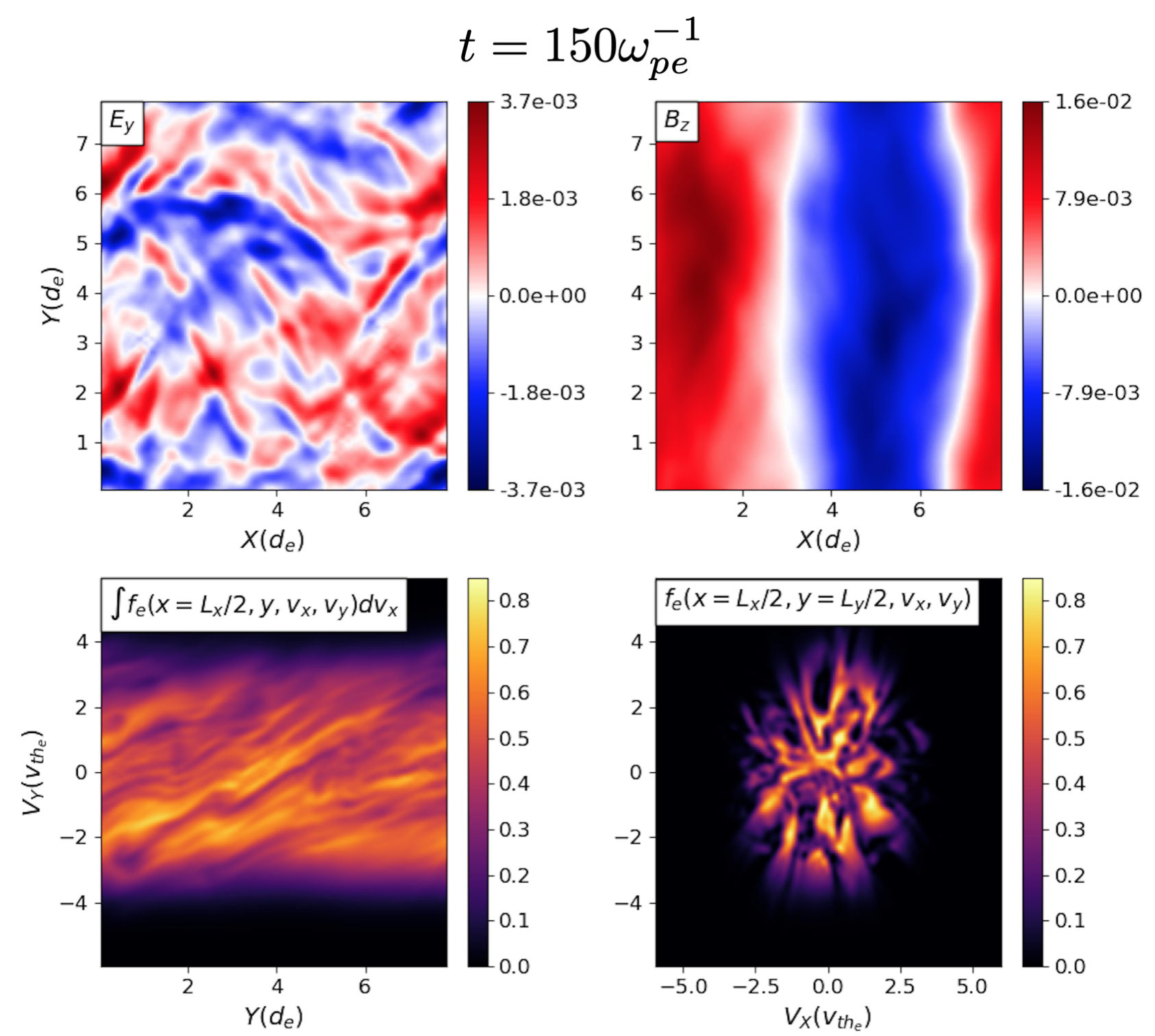}
    \includegraphics[width=0.8\textwidth]{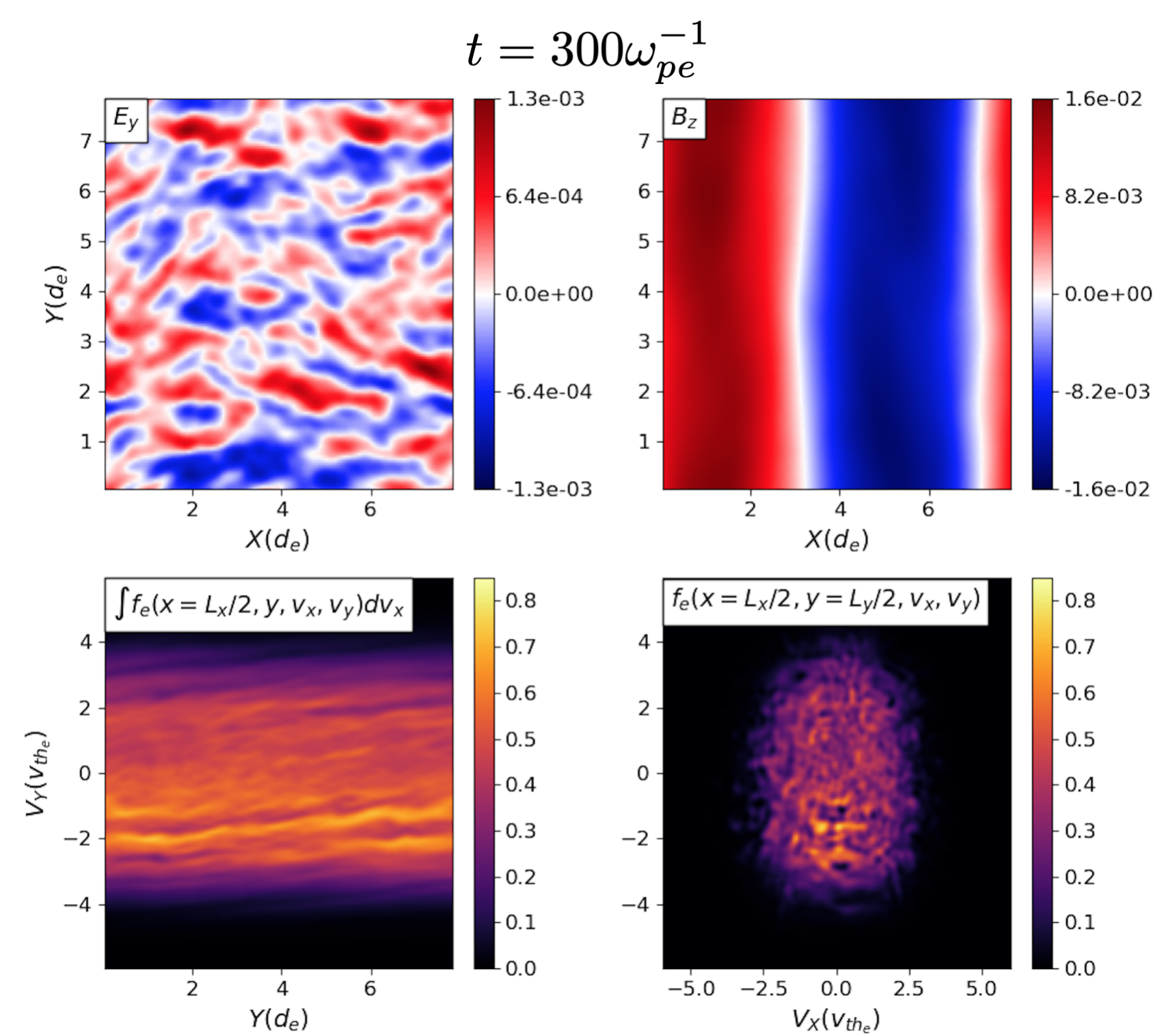}    
    \caption{$t = 150 \omega_{pe}^{-1}$ and $t = 300 \omega_{pe}^{-1}$ snapshots of the evolution of the hot case. In the deep nonlinear phase we observe the development of a temperature anisotropy in the distribution function, which provides a secondary free energy source for the secular Weibel instability. The growth of the secular Weibel instability from the temperature anisotropy ultimately supports a saturated magnetic field.}
    \label{fig:hotTSWLate}
\end{figure}
We plot in Figures~\ref{fig:hotTSWEarly} and \ref{fig:hotTSWLate} the evolution of the hot case in the early and late nonlinear stages of the plasma.
\begin{figure}
    \centering
    \vspace{-0.5in}  
    \includegraphics[width=0.8\textwidth]{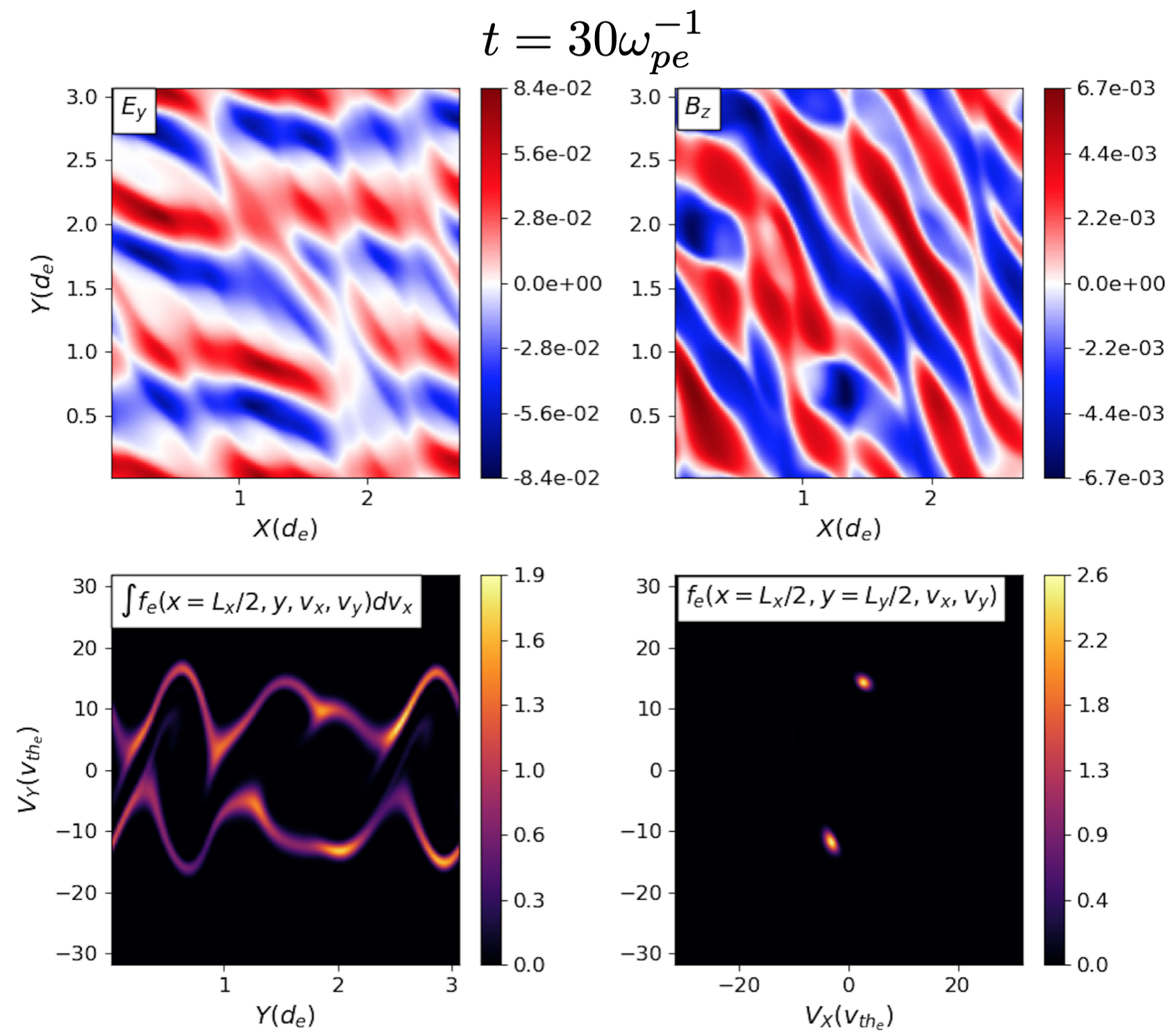}
    \includegraphics[width=0.8\textwidth]{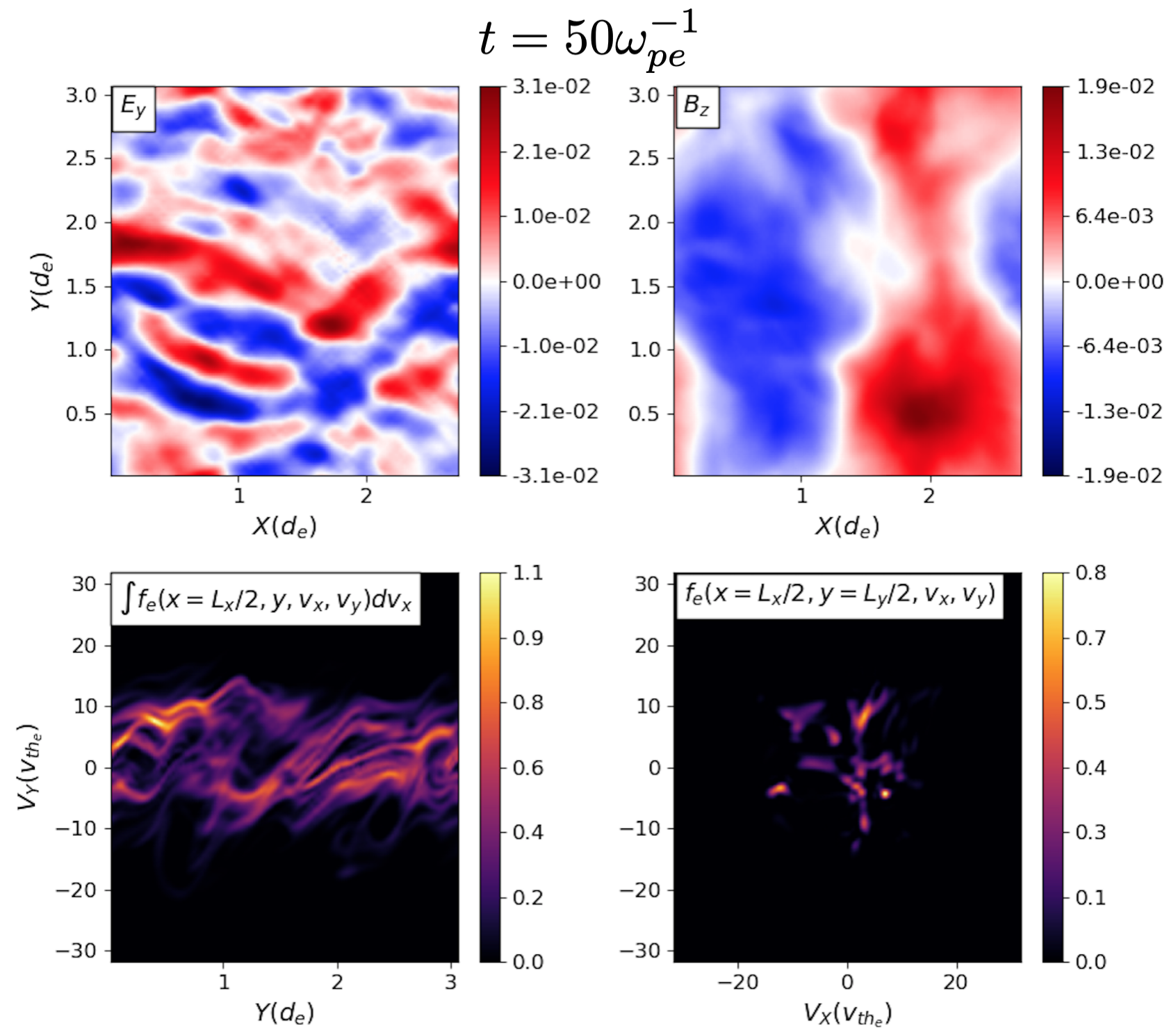}    
    \caption{$t = 30 \omega_{pe}^{-1}$ and $t = 50 \omega_{pe}^{-1}$ snapshots of the evolution of the cold case. We observe significantly more structure in the electromagnetic fields compared to the hot case in Figure~\ref{fig:hotTSWEarly}, as a variety of oblique modes all growth in tandem with the two-stream instability. These additional modes also lead to additional phase space structure, in contrast to the simple plateaus in $v_y$ which formed in the hot case.}
    \label{fig:coldTSWEarly}
\end{figure}
\begin{figure}
    \centering
    \vspace{-0.5in}  
    \includegraphics[width=0.8\textwidth]{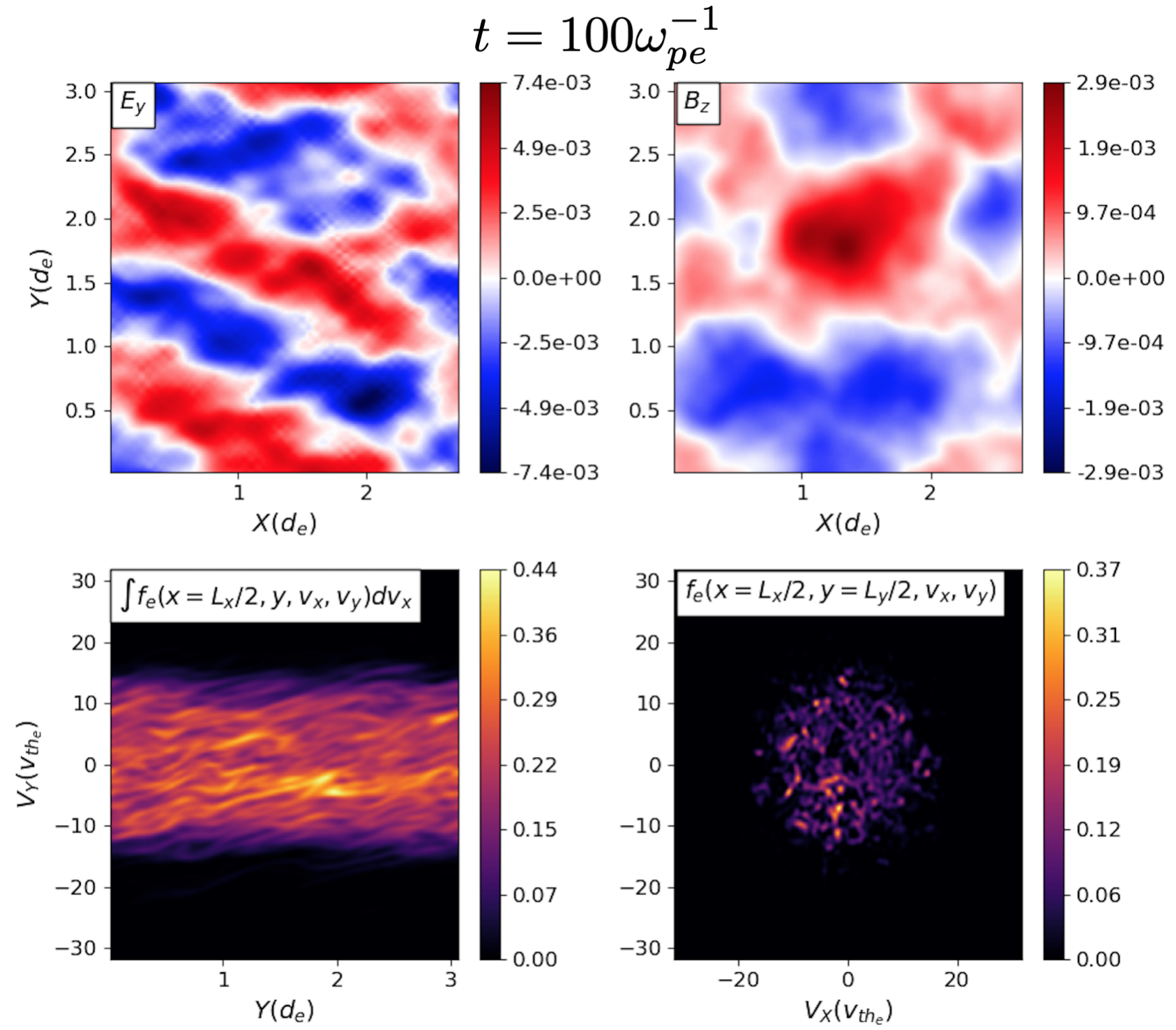}
    \includegraphics[width=0.8\textwidth]{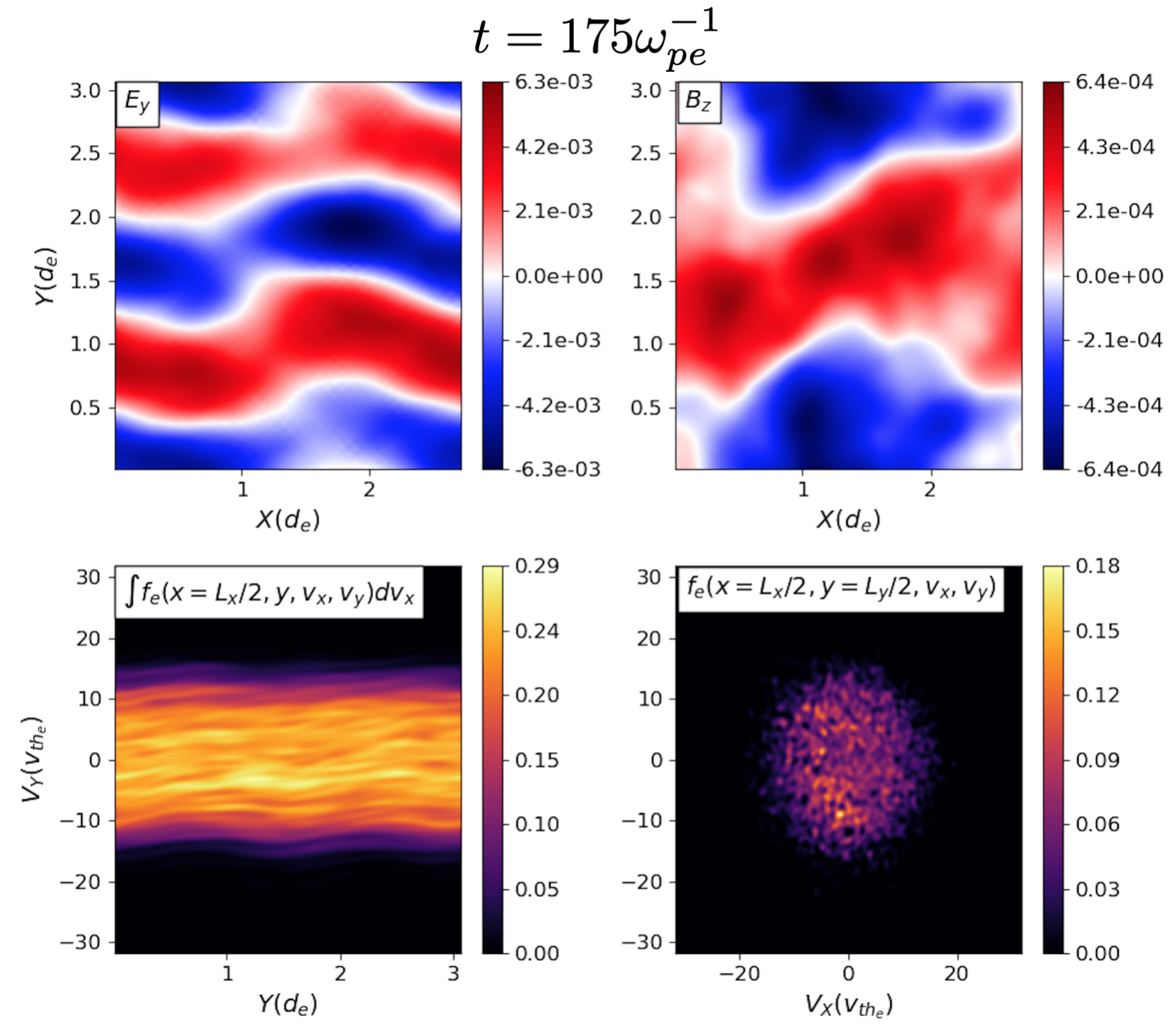}    
    \caption{$t = 100 \omega_{pe}^{-1}$ and $t = 175 \omega_{pe}^{-1}$ snapshots of the evolution of the cold case. The saturated oblique modes have now given their energy back to the electrons in a much more isotropic fashion than a pure two-stream mode, leading to almost zero temperature anisotropy. Without a temperature anisotropy to provide free energy to the Weibel instability, the magnetic field collapses, and we observe no saturated magnetic field structure.}
    \label{fig:coldTSWLate}
\end{figure}
Likewise, he cold case is presented in Figures~\ref{fig:coldTSWEarly} and \ref{fig:coldTSWLate}.

In the hot case, in the early nonlinear stage, we see the formation of the two-stream modes with their quasi-one dimensional structure in $E_y$, uniform in $x$ and multiple wavelengths of the fastest growing mode in $y$.
While there is some initial magnetic field present due to the growing oblique modes, the dynamics are dominated at this stage by the electrostatic two-stream instability.
As the two-stream modes saturate, we see the roll-up in phase space in the $y-v_y$ reduced distribution functions shown.
Importantly, in the early nonlinear stage, the more slowly growing filamentation instability arises and fractures the saturated two-stream modes.
We thus have the beginnings of magnetic field growth due to the presence of the filamentation instability.

However, the sustained growth of the magnetic field arises due to the presence of a secondary instability in the hot case.
The fast saturation of the two-stream instability, along with the disruption and release of the stored electrostatic energy from the saturated two-stream modes by the filamentation instability, heats the electrons primarily in one direction in velocity space, $v_y$, because the electrostatic two-stream instability is fundamentally one-dimensional.
But this leads to a temperature anisotropy in the electron distribution, as can be seen forming in the late nonlinear evolution of the hot case in Figure~\ref{fig:hotTSWLate}.
This temperature anisotropy provides a source of free energy for the secular Weibel instability \citep{Weibel:1959}, and a saturated magnetic field.
We can clearly see this temperature anisotropy by inspection of the electron distribution function in $v_x-v_y$ in the late nonlinear time, as the distribution function is visibly broadened in $v_y$.
Note that the magnetic energy saturates at $\epsilon_B\sim 10^{-2}$, near the Alfv\'en-limited regime, $\rho_e\sim m_e u_d/(e B_z)\sim 7d_e\sim L_x$, and enters a steady-state oscillation at the magnetic bounce frequency, agreeing closely with previous particle-in-cell studies \citep{Fonseca:2003, Silva:2003, Nishikawa:2003, Nishikawa:2005, Kato:2008, Kumar:2015, Takamoto:2018} and 1X2V simulations \citep{Califano:1998, Cagas:2017b}.

The cold case is strikingly different, as we see that the two-stream mode is now competing with a spectrum of oblique modes in the early nonlinear stage in Figure~\ref{fig:coldTSWEarly}.
The electric and magnetic fields are much more structured, and while a single oblique mode is relatively dominant, we see that the distribution function structure from the initial saturation of the instabilities is not as simple as the roll-up and formation of electron tubes observed in the hot case.
Critically, the saturation of a spectrum of oblique modes at similar times leads to a heating of the electrons in a roughly isotropic fashion, as can be seen in the $v_x-v_y$ cuts in Figure~\ref{fig:coldTSWLate}.
This isotropic energization means that there is no temperature anisotropy to provide free energy for the Weibel instability to sustain the growing magnetic field.
The magnetic field that has grown ultimately collapses as the oblique modes damp on the electrons, giving their energy back to the electrons.

We can explicitly quantify this difference in the anisotropy after these instabilities have gone nonlinear.
In Figure~\ref{fig:aniso}, we find that the spatially averaged temperature anisotropy, $\bar{A}$, drops from a large initial value in both the hot, $\bar{A}=5$, and cold, $\bar{A}=101$, cases to some residual value as the instabilities present grow off this effective temperature anisotropy, where the spatially averaged temperature anisotropy is defined as
\begin{align}
    \bar{A}=\int_0^{L_y} \int_0^{L_x} \frac{\int (v_y-u_y)^2f(x,y,\mvec{v})\dv}{\int (v_x-u_x)^2f(x,y,\mvec{v})\dv}\dx,    
\end{align}
where $u_{x,y}$ are the flows in the $x$ and $y$ dimensions respectively.
\begin{figure}[!htb]
    \centering
    \includegraphics[width=\textwidth]{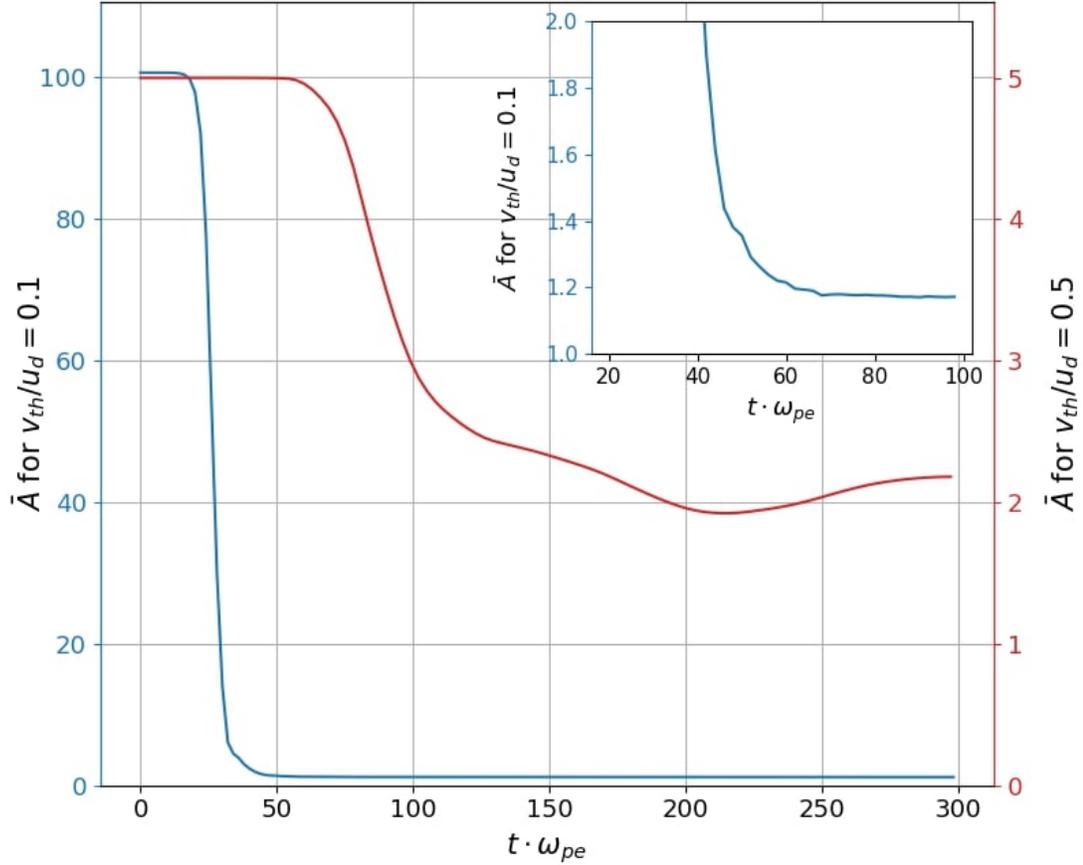}
    \caption{Effective temperature anisotropy of the hot case (red) and cold case (blue) over time. The effective temperature anisotropy starts at a finite value because of the initial beams in $v_y$ and then decreases as the beam-driven instabilities are excited. For the hot case, the temperature anisotropy reduces to a finite value, off which the secular Weibel instability can ultimately feed. In the cold case, the effective temperature anisotropy decreases to a value close to one, i.e., close to isotropy, and thus there is no free energy source for the secular Weibel instability to grow and support a saturated magnetic field.}
    \label{fig:aniso}
\end{figure}
We note the evolution of the temperature anisotropy in the hot case, where we observe a decrease in the anisotropy from $\bar{A}=5$ to a finite value, $\bar{A}\approx 2$, that explains the source of free energy for the secular Weibel instability that ultimate supports the saturated magnetic field.
The cold case on the other hand, has functionally no temperature anisotropy after nonlinear saturation, having collapsed from the large effective temperature anisotropy of two cold beams, $\bar{A}=101$, to $\bar{A}\approx 1.2$.

This collapse of the magnetic field and inefficient conversion of the initial kinetic energy of the cold beams to any appreciable amount of magnetic energy has not been previously observed in the literature, and in fact contradicts previous particle-in-cell studies in similar parameter regimes \citep{Kato:2008}.
While there are many differences between the study performed here and the study performed in a similar parameter regime in \citet{Kato:2008}, e.g., \citet{Kato:2008} includes the effect of the protons on the dynamics and self-consistently drives the system by studying a collisionless shock which excites these instabilities, we consider here the effect particle noise can have on simulations in this parameter regime.
Since the magnetic field collapses by orders of magnitude as a result of these oblique modes isotropically heating the electrons as these instabilities nonlinearly saturate, we are interested in determining the effective phase space resolution required to adequately resolve this process.

We plot in Figure~\ref{fig:PICGkeyll} a suite of simulations using the particle-in-cell code \texttt{p3d} \citep{Zeiler:2002}.
\begin{figure}[!htb]
    \centering
    \includegraphics[width=\textwidth]{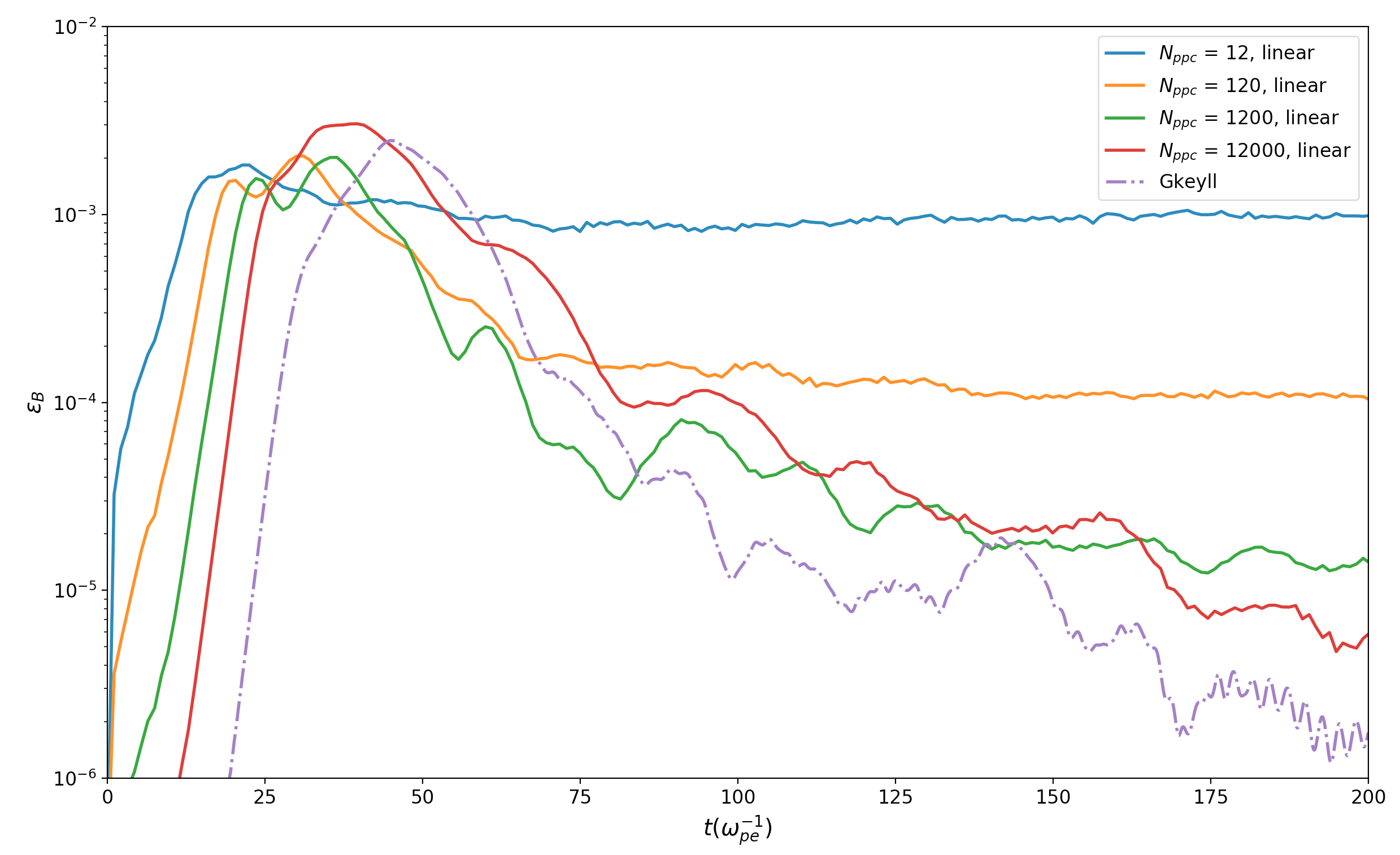}
    \caption{Comparison of the integrated magnetic field energy between a number of particle-in-cell simulations, varying the particles-per-cell, and the \gke VM-FP simulation of the cold case. In the limit of large particle-per-cell counts, the particle-in-cell simulations agree with the continuum kinetic result, but as the number of particles-per-cell is decreased, a saturated magnetic field appears.}
    \label{fig:PICGkeyll}
\end{figure}
We initialize the simulations in exactly the same way as the continuum VM-FP simulations using \gke, we specify two drifting Maxwellians for the electrons, and a bath of fluctuations in the electromagnetic fields given by \eqr{\ref{eq:BTSWInit}}.
The particle-in-cell simulations are performed using linear particle interpolants (triangle shaped particles), and the number of particles per cell is varied to determine the effect that particle noise has on the solution.

We can see that indeed, particle noise does appear to lead to a saturated magnetic field state.
Further, the convergence to the continuum, grid-based method is slow, as it requires a considerable number of particles to recreate the behavior of the collapsing magnetic field.
The saturated magnetic field in the low particle count simulations is a result of ``quasi-thermal'' noise in the sampling of the current to produce the magnetic field.
Essentially, in the same way that particle noise can manifest as fluctuations in the electric field due to errors in the sampling of the density of the particle distribution function \citep{Langdon:1979}, so too can these errors manifest in the current, giving rise to and supporting a magnetic field.

Given the fact that the low particle count simulations saturate at what appears to be the noise floor of the simulations, we can potentially improve the comparison by filtering the particle-in-cell data using a simple low pass filter at the largest wavenumber fluctuations in the box.
We plot the same comparison between our continuum VM-FP simulation of the cold case, and the two extreme particle counts, with and without filtering, in Figure~\ref{fig:PICGkeyllFilter}.
\begin{figure}[!htb]
    \centering
    \includegraphics[width=\textwidth]{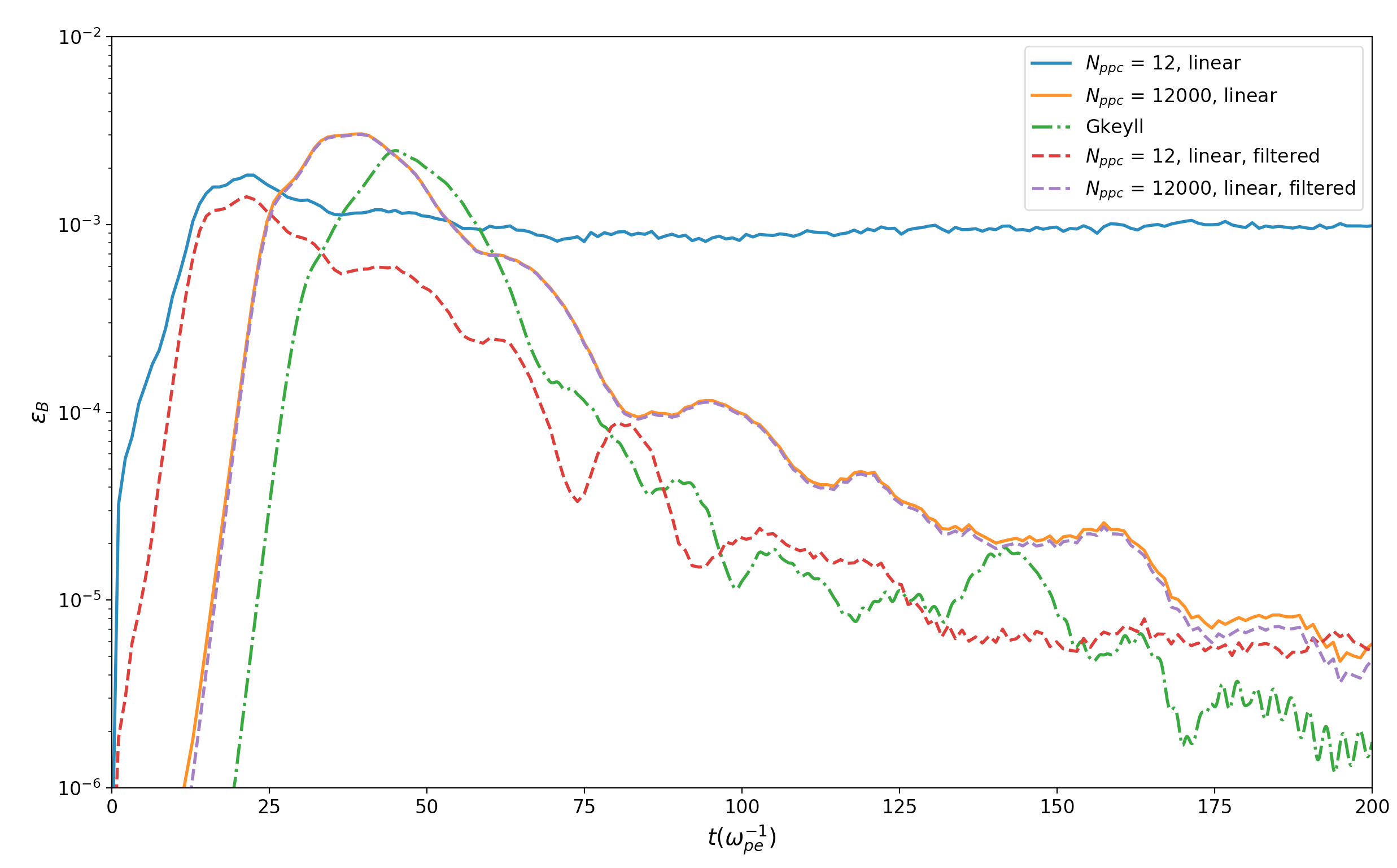}
    \caption{Comparison of the integrated magnetic field energy between the largest and smallest particle-per-cell counts, with and without a low pass filter, and the \gke VM-FP simulation of the cold case. We can see that the filter does allow for the recovery of the collapsing magnetic field in the low particle-per-cell count, adding credibility to the interpretation that the saturated magnetic field is due to noise.}
    \label{fig:PICGkeyllFilter}
\end{figure}
The improvement from a low-pass filter adds further credibility to the interpretation that the saturated magnetic field in the low particle count particle-in-cell calculations arises due to counting noise.

Importantly, while these isolated simulations can be improved with filtering, it does not eliminate the possibility that particle noise is at least partially responsible for the lack of agreement between the \gke results presented here and other particle-in-cell studies \citep{Kato:2008}.
While filtering as a post-processing step works robustly for this problem set-up, where the plasma is perturbed and allowed to evolve, a driven simulation in which the plasma instabilities are constantly being excited may be polluted by this same noise that we can see in the non-filtered case.
It is much more difficult to filter the noise in-situ, and thus the dynamics may be affected by the magnetic field attempting to collapse due to the electron instabilities, but being unable to, under the stress of a constant injection of noise-polluted, unstable fluctuations.
Given the sensitivity of the overall dynamics and magnetic field growth to parameter regimes of relevance in astrophysical plasmas, it is vital that care be taken when resolving the phase space evolution of these instabilities.
Further details of this comparison can be found in \citet{Juno:2020}.

We conclude this section having presented a series of simulations of unstable plasmas, in which novel behavior in the competition between beam-driven instabilities was found in the limit of the beam temperature and the ratio $v_{th_e}/u_d$ decreasing. 
While the continuum DG VM-FP solver described in this thesis recovers the results of previous kinetic studies when the electron beams are hot, we find that the secular Weibel instability can feed off the residual temperature anisotropy remaining from saturated two-stream modes, the picture changes dramatically as the beams grow colder.
The oblique modes that exist between the filamentation instability and two-stream instability become as fast growing as, or faster than, the two-stream instability, significantly complicating the initial nonlinear phase.
Without the dominant two-stream mode in the early nonlinear saturation, the electrons are ultimately energized quasi-isotropically, leading to a collapse of the temperature anisotropy and lack of a free energy source to support a saturated magnetic field.

We attempted to replicate this result in analogous particle-in-cell simulations and found that the result is sensitive to the particle noise arising from the number of particles per cell employed in the simulation.
Simulations with very few particles per cell attain saturated magnetic field states arising from the presence of quasi-thermal noise in the magnetic field, i.e., sampling error in the computation of the current from the particles discretizing the distribution function. 
While these errors can be mitigated with filtering in this isolated system, we emphasize that recovering the behavior of these instabilities in a driven context, such as a collisionless shock, may be more challenging.
We thus argue for the utility of the continuum approach presented in this thesis as a means of obtaining an accurate solution for plasma dynamics that are sensitive to phase space resolution.

\renewcommand{\thechapter}{6}

\chapter{Summary and Future Work}\label{ch:Summary}

We have presented in this thesis the derivation, implementation, and application of a discretization of the Vlasov--Maxwell--Fokker--Planck (VM-FP) system of equations which uses the discontinuous Galerkin (DG) finite element method to numerically integrate the VM-FP equation system on a phase space grid.
In contrast to traditional particle-based approaches to the numerical integration of the kinetic equation, this approach provides a high fidelity representation of the particle distribution function, free of the counting noise inherent to Monte Carlo methods.
This unpolluted discrete representation of the particle dynamics in the full phase space affords new opportunities for analysis of the plasma processes present directly in phase space, and makes new problems accessible by increasing the signal-to-noise ratio.

We identified and solved a number of analytic and numerical challenges throughout this thesis.
We showed what is required in the mathematical formulation of the DG algorithm for the discrete VM-FP system of equations to retain important properties of the continuous system, such as conservation of mass and energy.
In the implementation stage, we noted that the direct discretization of the VM-FP system of equations was rife with difficulties owing to the high dimensional nature of the equation system.
Importantly, we noted that standard means of lowering the cost of DG schemes would be catastrophic for the discretization of the VM-FP system of equations, as numerical integration errors that could reduce the computational complexity of the algorithm would inevitably destroy the implicit properties of the VM-FP system of equations, such as conservation of energy.
We designed a scheme free of aliasing errors in the integration, and further formulated a basis set for our DG scheme using orthonormal, modal polynomials that sparsified the resulting tensor-tensor convolutions.

We benchmarked the implementation of the DG VM-FP solver against a large suite of tests, and numerically demonstrated the analytically proved properties of the scheme.
The DG VM-FP solver was then deployed to study the energization of plasmas in fundamental plasma processes such as collisionless shocks as well as the details of the nonlinear saturation of beam-driven instabilities.
Using the increased phase space resolution afforded to us by a continuum discretization of the VM-FP system of equations, we were able to directly diagnose the energization processes such as shock-drift acceleration in phase space.
Likewise, we were able to identify a new parameter regime as the unstable beams became colder for the saturation of filamentation-type instabilities.
In this cold parameter regime, we observed no saturated magnetic field as a result of the competition between additional unstable modes that could grow more quickly in the cold beam parameter regime.
We drew special attention to this result, as analogous particle-in-cell simulations of this system found saturated magnetic fields when using low numbers of particles per cell due to particle noise.

There are a number of avenues of future research to build off the algorithmic and physics work presented in this thesis.
The methods in this thesis can be extended to other kinetic systems, for example the relativistic Vlasov-Maxwell system of equations.
In addition, it is worth exploring whether the recovery procedure described in Chapters~\ref{ch:DGFEM} and \ref{ch:ImplementationDGFEM} for the diffusion operator in the Fokker--Planck equation can also be applied to other components of the update, e.g., the discretization of Maxwell's equations.
Given some of the challenges in discretizing Maxwell's equations, especially in the choice of numerical flux function, an alternative approach that reconstructs continuous functions at the interface could be particularly powerful.

On the physics side, we have demonstrated that the field-particle correlation, combined with our continuum discretization of the VM-FP system of equations, provides a particularly useful way to characterize the energization processes present in phase space, but we have only scratched the surface of what can be done.
Even amongst the benchmarks presented, for example the lower hybrid drift instability and magnetic pumping, identifying the phase space energization signatures of these processes would further build a Rosetta stone for assistance in interpreting future numerical and observational solutions.
We can also extend the study of filamentation-type instabilities to include the proton dynamics as well as inhomogeneities in the beams, e.g., if the two beams have different densities.

But we conclude noting the power and utility of our continuum VM-FP solver in the \gke~ simulation framework, and emphasize that there is an enormous array of problems that can be tackled with this solver, especially if one requires high phase space resolution.
\titleformat{\chapter}
{\normalfont\large}{Appendix \thechapter:}{1em}{}
\appendix
\renewcommand{\thechapter}{A}
\renewcommand{\chaptername}{Appendix}

\chapter{Proofs of the Properties of the Continuous \\ Vlasov--Maxwell--Fokker--Planck System of Equations}\label{app:proofsContinuous}
\textbf{Proof of Proposition~\ref{prop:collisionlessMassConservation} (The collisionless Vlasov--Maxwell system of equations conserves mass.)}
\begin{proof}
If we multiply the conservation equation form of the collisionless Vlasov equation, \eqr{\ref{eq:vlasovConservationEq}}, by the mass of the particle, integrate over the phase space domain $K$, and apply the divergence theorem, we obtain,
\begin{align}
    \frac{d}{dt} \left ( m_s \int_K f_s \dz \right ) = -m_s \oint_{\partial K} \gvec{\alpha}_s f_s \thinspace dS = 0,
\end{align}
by our assumed boundary conditions.
Note that this proposition holds individually for each species $s$ in the plasma as we are not including the effects of source terms such as ionization or recombination in our system.
\end{proof}
\textbf{Proof of Proposition~\ref{prop:collisionlessL2} (The collisionless Vlasov--Maxwell system of equations conserves the $L^2$ norm of the particle distribution function.)}
\begin{proof}
We first multiply the conservation equation form of the collisionless Vlasov equation, \eqr{\ref{eq:vlasovConservationEq}}, by the distribution function $f_s$ and integrate over the full phase space to obtain
\begin{align}
     \frac{d}{dt} \left ( \frac{1}{2} \int_K f_s^2 \dz \right ) = -\oint_{\partial K} \gvec{\alpha}_s f_s^2 \thinspace dS + \int_K \gz f_s \cdot \gvec{\alpha}_s f_s \dz,
\end{align}
where we have used the chain rule,
\begin{align}
    f_s \frac{d}{dt} f_s = \frac{1}{2} \frac{d}{dt} \left (f_s^2 \right),
\end{align}
to simplify the left hand side and integration by parts to rewrite the right hand side.
We can again use our assumed boundary conditions to eliminate the surface integral, and the product rule to rewrite the volume integral,
\begin{align}
     \gz f_s \cdot \gvec{\alpha}_s f_s = \gz \cdot \left (\frac{1}{2} \gvec{\alpha}_s f_s^2 \right ) - \frac{1}{2} \left (\gz \cdot \gvec{\alpha}_s \right ) f_s^2 = \gz \cdot \left (\frac{1}{2} \gvec{\alpha}_s f_s^2 \right ),
\end{align}
since phase space is incompressible,
\begin{align}
    \gz \cdot \gvec{\alpha}_s = \left (\gx \cdot \mvec{v}, \frac{q_s}{m_s} \gv \cdot [\mvec{E} + \mvec{v} \times \mvec{B}] \right ) = 0. \label{eq:phaseSpaceIncomp}
\end{align}
But, since we can rewrite the volume term as a total derivative, we can again apply the divergence theorem and use boundary conditions to eliminate the remainder of the right hand side,
\begin{align}
    \frac{d}{dt} \left ( \frac{1}{2} \int_K f_s^2 \dz \right ) = 0. \notag
\end{align}
This completes the proof.
As with the conservation of particles, the conservation of the $L^2$ norm by the collisionless Vlasov--Maxwell system holds individually for each species $s$ in the system. 
\end{proof}
\textbf{Proof of Proposition~\ref{prop:collisionlessEntropy} (The collisionless Vlasov--Maxwell system of equations conserves the entropy density $S = -f \ln(f)$ of the system.)}
\begin{proof}
Again using the conservation equation form of the collisionless Vlasov equation, \eqr{\ref{eq:vlasovConservationEq}}, multiplying by $-\ln f_s$, and integrating over phase space we obtain
\begin{align}
    \frac{d}{dt} \left [\int_K -f_s \ln(f_s) \dz \right ] = \oint_{\partial K} \ln (f_s) \left ( \gvec{\alpha}_s f_s \right ) \thinspace dS - \int_{K} \pfrac{f_s}{t} + \gz \ln (f_s) \cdot \gvec{\alpha}_s f_s,  
\end{align}
where we have again used the chain rule to rewrite the time derivative,
\begin{align}
    -\ln(f_s) \pfrac{}{t} f_s = \pfrac{f_s}{t} - \pfrac{\ln(f_s) f_s}{t},
\end{align}
since
\begin{align}
    \pfrac{\ln(f_s)}{t} = \pfrac{f_s}{t} \frac{1}{f_s}.
\end{align}
We have also again used integration by parts on the right hand side of \eqr{\ref{eq:vlasovConservationEq}} and can eliminate the surface integral with our boundary conditions in phase space.
Using the chain rule and the incompressibility of phase space, \eqr{\ref{eq:phaseSpaceIncomp}}, we find
\begin{align}
     \gz \ln (f_s) \cdot \gvec{\alpha}_s f_s = \gvec{\alpha}_s \cdot \gz f_s = \gz \cdot \left ( \gvec{\alpha}_s f_s \right ),
\end{align}
but this expression means the the right hand side is simply the collisionless Vlasov equation, \eqr{\ref{eq:vlasovConservationEq}}, which is equal to zero, completing the proof,
\begin{align}
    \frac{d}{dt} \left (\int_K -f_s \ln(f_s) \dz \right ) = 0. \notag
\end{align}
As before with mass conservation and conservation of the $L^2$ norm, conservation of entropy holds independently for each species in the collisionless Vlasov--Maxwell system.
\end{proof}
\textbf{Proof of Proposition~\ref{prop:collisionlessMomentum} (The collisionless Vlasov--Maxwell system of equations conserves the total, particles plus fields, momentum.)}
\begin{proof}
We begin by multiplying \eqr{\ref{eq:collisionlessComponent}} by $m_s \mvec{v}$, summing over species, and integrating over phase space to obtain
\begin{align}
    \underbrace{\int_K \sum_s m_s \mvec{v} \pfrac{f_s}{t} \dz}_{\int_\Omega \sum_s \pfrac{\gvec{\mathcal{M}}_s}{t} \dx } = - \int_K \sum_s & m_s \mvec{v} \gx \cdot (\mvec{v} f_s) \dz \notag \\
    & - \int_K \sum_s m_s \mvec{v} \gv \cdot \left [ \frac{q_s}{m_s} \left ( \mvec{E} + \mvec{v} \times \mvec{B} \right ) f_s \right ] \dz.
\end{align}
Since the velocity coordinate does not depend on configuration space, we can bring $m_s \mvec{v}$ inside the divergence in the first term on the right side, apply the divergence theorem, and eliminate this term by our configuration space boundary conditions.
For the second term on the right hand side, we can use integration by parts to move the velocity divergence onto $m_s \mvec{v}$, eliminating the surface term using our boundary condition in velocity space,
\begin{align}
     -\int_K \sum_s m_s \mvec{v} \gv \cdot \left [ \frac{q_s}{m_s} \left ( \mvec{E} + \mvec{v} \times \mvec{B} \right ) f_s \right ] \dz & = \int_K \sum_s q_s \gv \mvec{v} \cdot \left ( \mvec{E} + \mvec{v} \times \mvec{B} \right ) f_s \dz, \notag \\
     & = \int_\Omega \rho_c \mvec{E} + \mvec{J} \times \mvec{B} \dx, \label{eq:mom-proof-1}
\end{align}
where we have used the fact that $\gv \mvec{v} = \overleftrightarrow{\mvec{I}}$ and the definitions of the charge density and current density, Eqns.\thinspace\ref{eq:chargeDensity}--\ref{eq:currentDensity}, to perform the integral over velocity space.
To make further progress, we consider Maxwell's equations. Taking the cross-product of \eqr{\ref{eq:dbdt}} with $\epsilon_0\mvec{E}$, the cross-product of \eqr{\ref{eq:dedt}} with $\mvec{B}/\mu_0$, and subtracting the resulting equations we obtain
  \begin{align}
    \epsilon_0\pfraca{t} (\mvec{E}\times\mvec{B}) + \epsilon_0 \underbrace{\mvec{E}\times(\gx\times\mvec{E})}_{(\gx\mvec{E})\cdot\mvec{E} - (\mvec{E}\cdot\gx)\mvec{E}}
    + \frac{1}{\mu_0} \underbrace{\mvec{B}\times(\gx\times\mvec{B})}_{(\gx\mvec{B})\cdot\mvec{B} - (\mvec{B}\cdot\gx)\mvec{B}}
    =
    -\mvec{J}\times\mvec{B},
  \end{align}
for the evolution of the electromagnetic momentum density,\footnote{The electromagnetic momentum density is also commonly written as
\begin{align}
\mvec{p}_{EM} = \frac{\mvec{S}}{c^2},
\end{align}
where $\mvec{S}$ is the Poynting flux,
\begin{align}
    \mvec{S} = \frac{1}{\mu_0} \mvec{E} \times \mvec{B},
\end{align}
and $c$ is the speed of light, $c = 1/\sqrt{\epsilon_0 \mu_0}$.
}
$\epsilon_0 \mvec{E} \times \mvec{B}$. Now, for any vector field $\mvec{A}$ we have
\begin{align}
    (\nabla\mvec{A})\cdot\mvec{A} & = \nabla |\mvec{A}|^2/2, \\
    (\mvec{A}\cdot\nabla)\mvec{A} & = \nabla\cdot(\mvec{A}\mvec{A})-\mvec{A}\nabla\cdot\mvec{A}.
\end{align}
Using these vector identities and the divergence Eqns.\thinspace(\ref{eq:divE}) and (\ref{eq:divB}) to replace $\gx \cdot \mvec{E} = \rho_c/\epsilon_0$ and $\gx \cdot \mvec{B} = 0$ gives
  \begin{align}
    \epsilon_0\pfraca{t} (\mvec{E}\times\mvec{B}) + \gx\left( \frac{\epsilon_0}{2}|\mvec{E}|^2 + \frac{1}{2\mu_0}|\mvec{B}|^2 \right)
    - \gx\cdot\left( \epsilon_0\mvec{E}\mvec{E} + \frac{1}{\mu_0}\mvec{B}\mvec{B} \right)
    + \varrho_c\mvec{E} = -\mvec{J}\times\mvec{B}. \label{eq:mom-proof-2}
  \end{align}
We recognize the spatial gradients and divergences in \eqr{\ref{eq:mom-proof-2}} to be acting on the Maxwell stress tensor,
\begin{align}
    \overleftrightarrow{\gvec{\sigma}} = \epsilon0 \left (\mvec{E}\mvec{E} - \frac{1}{2} |\mvec{E}|^2 \overleftrightarrow{\mvec{I}} \right ) + \frac{1}{\mu_0} \left (\mvec{B}\mvec{B} - \frac{1}{2} |\mvec{B}|^2 \overleftrightarrow{\mvec{I}} \right ).
\end{align}
So, inserting \eqr{\ref{eq:mom-proof-2}} into \eqr{\ref{eq:mom-proof-1}} and using configuration space boundary conditions to eliminate the total derivatives of the Maxwell stress tensor gives our desired conservation relation,
\begin{align}
    \frac{d}{dt} \left (\int_\Omega \sum_s \gvec{\mathcal{M}}_s  +  \epsilon_0 \mvec{E}\times\mvec{B} \dx \right )= 0. \notag
\end{align}
We emphasize that the linear momentum is a conserved vector quantity.
In other words, only the corresponding components of the particle and electromagnetic momentum can be exchanged, e.g., the $x$ particle momentum can be exchanged with the $x$ component of the electromagnetic momentum.
Of course the stress tensor for the particles,
\begin{align}
    \overleftrightarrow{\mvec{S}}_s = \int \gx \cdot (\mvec{v} \mvec{v} f_s) \dv,
\end{align} 
can move momentum between the various components of the particle momentum density, and likewise the Maxwell stress tensor can move momentum between the various components of the electromagnetic momentum density.
But when the particles and electromagnetic fields exchange momentum, they do so component by component.
\end{proof}
\textbf{Proof of Proposition~\ref{prop:collisionlessEnergyConservation} (The collisionless Vlasov--Maxwell system of equations conserves the total, particles plus fields, energy.)}
\begin{proof}
We proceed in a similar fashion to our proof of momentum conservation, but we now multiply \eqr{\ref{eq:collisionlessComponent}} by $1/2 \thinspace m_s |\mvec{v}|^2$, sum over species, and integrate over phase space to obtain
\begin{align}
    \underbrace{\int_K \sum_s \frac{1}{2} m_s |\mvec{v}|^2 \pfrac{f_s}{t} \dz}_{\int_\Omega \sum_s \pfrac{\mathcal{E}_s}{t} \dx } = - \int_K \sum_s &  \frac{1}{2} m_s |\mvec{v}|^2 \gx \cdot (\mvec{v} f_s) \dz \notag \\
    & - \int_K \sum_s  \frac{1}{2} m_s |\mvec{v}|^2 \gv \cdot \left [ \frac{q_s}{m_s} \left ( \mvec{E} + \mvec{v} \times \mvec{B} \right ) f_s \right ] \dz.
\end{align}
Since the velocity coordinate does not depend on configuration space, we can move $1/2 \thinspace m_s |\mvec{v}|^2$ inside the configuration space divergence, forming a total derivative and allowing us to use the divergence theorem and boundary conditions to eliminate this term.
As before with momentum conservation, we use integration by parts and velocity space boundary conditions on the second term on the right hand side,
\begin{align}
    - \int_K \sum_s  \frac{1}{2} m_s |\mvec{v}|^2 \gv \cdot \left [ \frac{q_s}{m_s} \left ( \mvec{E} + \mvec{v} \times \mvec{B} \right ) f_s \right ] \dz & = \int_K \frac{q_s}{2} \gv |\mvec{v}|^2 \cdot \left ( \mvec{E} + \mvec{v} \times \mvec{B} \right ) f_s \dz \notag \\
    & = \int_\Omega \mvec{J} \cdot \mvec{E} \dz,  \label{eq:fenergy}
\end{align}
where we have used the fact that $\mvec{v} \cdot (\mvec{v} \times \mvec{B}) = 0$ by properties of the cross product to eliminate the magnetic field term.
To make further progress, we again examine Maxwell's equations. 
Taking the dot product of \eqr{\ref{eq:dedt}} with $\mvec{E}/\mu_0$, the dot product of \eqr{\ref{eq:dbdt}} with $\mvec{B}/\mu_0$, and adding the resulting equations gives us
  \begin{align}
    \pfraca{t}\left( \frac{\epsilon_0}{2}|\mvec{E}|^2 + \frac{1}{2\mu_0}|\mvec{B}|^2 \right)
    +
    \frac{1}{\mu_0}\underbrace{ \left [ \mvec{B}\cdot(\gx\times\mvec{E}) - \mvec{E}\cdot(\gx\times\mvec{B}) \right ] }_{= \gx\cdot(\mvec{E}\times\mvec{B})}
    =
    -\mvec{J}\cdot\mvec{E}.
  \end{align}
Using this result in \eqr{\ref{eq:fenergy}}, along with configuration space boundary conditions to eliminate the divergence of the Poynting flux, gives the total energy conservation law,
\begin{align}
    \frac{d}{dt} \left (\int_\Omega \sum_s \mathcal{E}_s+ \frac{\epsilon_0}{2} |\mvec{E}|^2 + \frac{1}{2 \mu_0}|\mvec{B}|^2 \dx \right )= 0. \notag
\end{align}
\end{proof}
\textbf{Proof of Proposition~\ref{prop:collisionMassConservation} (The Fokker--Planck equation conserves mass.)}
\begin{proof}
If we multiply \eqr{\ref{eq:collisionalComponent}} by the mass of the particle and integrate over phase space, just as with Proposition~\ref{prop:collisionlessMassConservation}, we can use the boundary conditions in velocity space to obtain
\begin{align}
    \frac{d}{dt}\left ( m_s \int_K f^c_s \dz \right ) = \oint_{\partial K} \nu_s \left [ (\mvec{v} - \mvec{u}_s) f_s + \frac{T_s}{m_s} \gv f_s \right ] \thinspace dS = 0.
\end{align}
Because we are not including particle sources such as ionization and recombination, this conservation relation holds for each plasma species.
Importantly, because the Fokker-Planck operator only involves derivatives in velocity space, this conservation is \emph{local},
\begin{align}
\int_{K\setminus\Omega} m_s \pfrac{f^c_s}{t} \dv = \pfrac{\rho_s}{t} = 0,
\end{align}
i.e., the Fokker-Planck collision operator does not change the local mass (or number) density in configuration space.
\end{proof}
\textbf{Proof of Proposition~\ref{prop:collisionMomentumConservation} (The Fokker--Planck equation conserves the particle momentum.)}
\begin{proof}
If we first multiply \eqr{\ref{eq:collisionalComponent}} by $m_s \mvec{v}$ and integrate over phase space, we can integrate the collision operator by parts once to obtain
\begin{align}
    \frac{d}{dt}\left ( \int_K m_s \mvec{v} f^c_s \dz \right ) = \oint_{\partial K} & m_s \mvec{v} \thinspace \nu_s \left [ (\mvec{v} - \mvec{u}_s) f + \frac{T_s}{m_s} \gv f_s \right ] \thinspace dS \notag \\
    & - \int_K m_s \nu_s \gv \mvec{v} \cdot \left [ (\mvec{v} - \mvec{u}_s) f_s + \frac{T_s}{m_s} \gv f_s \right ] \dz.
\end{align}
We can eliminate the surface integral with our boundary conditions in velocity space. Recall that $\gv \mvec{v} = \overleftrightarrow{\mvec{I}}$, so the volume integral simplifies to
\begin{align}
   \int_K m_s \gv \mvec{v} \cdot \left [ (\mvec{v} - \mvec{u}_s) f_s + \frac{T_s}{m_s} \gv f_s \right ] & \dz = \int_K m_s (\mvec{v} - \mvec{u}) f_s \dz, \notag \\
   & = \int_\Omega \left ( \gvec{\mathcal{M}}_s - m_s n_s \mvec{u}_s \right ) \dx = 0, \label{eq:collisions-mom-1}
\end{align}
where we have dropped the velocity independent collision frequency for notational convenience.
In our simplification to \eqr{\ref{eq:collisions-mom-1}} we have used the fact that the diffusion coefficient, $T_s/m_s$, does not depend on velocity space to write what remains of the diffusion term as a total derivative, which upon integrating the total derivative and using the boundary conditions in velocity space, eliminates the diffusion term.
\eqr{\ref{eq:collisions-mom-1}} completes the proof.
As with conservation of mass in Proposition~\ref{prop:collisionMassConservation}, since the Fokker--Planck collision operator only includes derivatives in velocity space, we can construct a \emph{local} conservation law,
\begin{align}
\int_{K\setminus\Omega} m_s \mvec{v} \pfrac{f^c_s}{t} \dv = \pfrac{\gvec{\mathcal{M}}_s}{t} = 0,
\end{align}
i.e., the Fokker--Planck collision operator does not change the local momentum density in configuration space.
\end{proof}
\textbf{Proof of Proposition~\ref{prop:collisionEnergyConservation} (The Fokker--Planck equation conserves the particle energy.)}
\begin{proof}
In analogy with Proposition~\ref{prop:collisionMomentumConservation}, we multiply \eqr{\ref{eq:collisionalComponent}} by $1/2 \thinspace m_s |\mvec{v}|^2$, integrate over phase space, and use integration by parts to obtain
\begin{align}
    \frac{d}{dt}\left ( \int_K \frac{1}{2} m_s |\mvec{v}|^2 f^c_s \dz \right )& = \oint_{\partial K} \frac{1}{2} m_s |\mvec{v}|^2 \thinspace \nu_s \left [ (\mvec{v} - \mvec{u}_s) f + \frac{T_s}{m_s} \gv f_s \right ] \thinspace dS \notag \\
    & - \int_K \frac{1}{2} m_s \nu_s \gv |\mvec{v}|^2 \cdot \left [ (\mvec{v} - \mvec{u}_s) f_s + \frac{T_s}{m_s} \gv f_s \right ] \dz.
\end{align}
As before, we can eliminate the surface integral with our boundary conditions in velocity space. Using the fact that $\gv |\mvec{v}|^2 = 2\mvec{v}$, the volume integral can be rewritten as
\begin{align}
    \int_K  m_s \mvec{v} \cdot \left [ (\mvec{v} - \mvec{u}_s) f_s + \frac{T_s}{m_s} \gv f_s \right ] & \dz = \int_K \left [ m_s \left ( |\mvec{v}|^2 - \mvec{v} \cdot \mvec{u}_s \right ) f_s + T_s \mvec{v} \cdot \gv f_s \right ] \dz, \notag \\
    & = \int_\Omega 2 \mathcal{E}_s - m_s n_s |\mvec{u}_s|^2 - 3 n_s T_s \dx = 0, \label{eq:collisions-en-1}
\end{align}
where we have dropped the velocity independent collision frequency for notational convenience and used integration by parts and the velocity space boundary conditions to simplify
\begin{align}
    \int_K T_s \mvec{v} \cdot \gv f_s \dz = \int_K T_s f_s (\gv \cdot \mvec{v})  \dz = \int_\Omega 3 n_s T_s \dx.
\end{align}
\eqr{\ref{eq:collisions-en-1}} completes the proof.
We note that as with conservation of mass in Proposition~\ref{prop:collisionMassConservation} and conservation of momentum in Proposition~\ref{prop:collisionMomentumConservation}, since the Fokker--Planck collision operator only includes derivatives in velocity space, we can construct a \emph{local} conservation law,
\begin{align}
\int_{K\setminus\Omega} \frac{1}{2} m_s |\mvec{v}|^2 \pfrac{f^c_s}{t} \dv = \pfrac{\mathcal{E}_s}{t} = 0,
\end{align}
i.e., the Fokker--Planck collision operator does not change the local energy density in configuration space.
\end{proof}
\textbf{Proof of Proposition~\ref{prop:collisionEntropy} (The Fokker--Planck equation satisfies the Second Law of Thermodynamics and leads to a non-decreasing entropy density $S = -f \ln(f)$.)}
\begin{proof}
Defining the total entropy as
\begin{align}
\mathcal{S}_s = -\int_K f_s \ln{f_s} \dz, \label{eq:totalEntropyDef}
\end{align}
and taking the time derivative of the total entropy, we have
\begin{align}
\pfrac{\mathcal{S}_s}{t} = -\int_K \pfrac{f_s}{t}[\ln (f_s) + 1] \dz.
\end{align}
We can rewrite the Fokker--Planck operator as a flux in velocity space,
\begin{align}
\pfrac{f_s^c}{t} = \gv \cdot \mvec{F},
\end{align}
where
\begin{align}
\mvec{F} = (\mvec{v}-\mvec{u}_s)f_s + \frac{T_s}{m_s} \gv f_s,
\end{align}
and we have dropped the velocity independent collision frequency $\nu_s$ for notational convenience without loss of generality.
Because we have already proved the collisionless component of the VM-FP system of equations does not change the entropy of the system, Proposition~\ref{prop:collisionlessEntropy}, we need only consider the contribution of the Fokker--Planck equation to the evolution of the total entropy,
\begin{align}
\pfrac{\mathcal{S}_s}{t} = -\int_K \gv \cdot \mvec{F} [\ln (f_s) + 1] \dz.
\end{align}
We can integrate the flux by parts and use our boundary conditions in velocity space to obtain
\begin{align}
    \pfrac{\mathcal{S}_s}{t} = \int_K \frac{1}{f_s} \gv f_s \cdot \mvec{F} \dz. \label{eq:collision-entropy-1}
\end{align}
We now substitute
\begin{align}
\gv f_s = \frac{m_s}{T_s} [\mvec{F}-(\mvec{v}-\mvec{u}_s)f_s],
\end{align}
into \eqr{\ref{eq:collision-entropy-1}} to obtain,
\begin{align}
\pfrac{\mathcal{S}_s}{t} = \int_K \frac{m_s}{T_s} \left [ \frac{|\mvec{F}|^2}{f_s} - (\mvec{v} - \mvec{u}_s) \cdot \mvec{F} \right ] \dz.
\end{align}
Using the definition of $\mvec{F}$, the second term in this equation becomes
\begin{align}
\int_K (\mvec{v} - \mvec{u}_s) \cdot \mvec{F} \dz & = \int_K (|\mvec{v}|^2-2\mvec{u}_s \cdot \mvec{v} + |\mvec{u}_s|^2) f_s + \frac{T_s}{m_s} (\mvec{v}-\mvec{u}_s) \cdot \gv f_s \dz \notag \\
 & = \int_\Omega \frac{2}{m_s} \mathcal{E}_s - 2 n_s |\mvec{u}_s|^2 + n_s |\mvec{u}_s|^2 - 3 n_s \frac{T_s}{m_s} \dx = 0,
\end{align}
where we have used integration by parts on the $\gv f_s$ term. Hence,
\begin{align} \label{eq:entropy2}
\pfrac{\mathcal{S}_s}{t} = \int_K \frac{m_s}{T_s} \frac{1}{f_s}|\mvec{F}|^2 \dz
\geq 0,
\end{align}
as long as $f_s \geq 0$\footnote{And $\nu_s > 0$ of course. If the collision frequency was not positive definite, that would be a real problem!}.
Given the preceding discussion, we can also define a velocity integrated entropy density,
\begin{align}
    s_s(\mvec{x}, t) = -\int_{K\setminus\Omega} f_s(\mvec{x}, \mvec{v}, t) \ln(f_s(\mvec{x}, \mvec{v}, t)) \dv,
\end{align}
which is a monotonically increasing function,
\begin{align}
    \pfrac{s_s(\mvec{x}, t)}{t} = \int_{K\setminus\Omega} \frac{m_s}{T_s(\mvec{x}, t)} \frac{1}{f_s(\mvec{x}, \mvec{v}, t)}|\mvec{F}(\mvec{x}, \mvec{v}, t)|^2 \dv \geq 0,
\end{align}
since the Fokker--Planck operator only involves derivatives in velocity space.
In other words, the collision operator leads to non-decreasing entropy at each point in configuration space, and further mixing in configuration space is required to attain a global maximum entropy state.
We might be unsurprised by this statement, as the entropy increase in velocity space corresponds to the second of Bogoliubov's timescales, while the entropy increase in all of phase space corresponds to the third of Bogoliubov's timescales.
\end{proof}
\textbf{Proof of Corollary~\ref{coro:HTheorem} (The maximum entropy solution to the Fokker--Planck collision operator is the Maxwellian velocity distribution.)}
\begin{proof}
By Proposition~\ref{prop:collisionEntropy}, we know that the entropy is a monotonically increasing function. But, if the entropy is a monotonically increasing function in time, and the entropy is a well-defined quantity, i.e., \eqr{\ref{eq:totalEntropyDef}} is not a divergent integral, then the extremum of the entropy must necessarily maximize the entropy.
Thus, we need only find when
\begin{align}
    \pfrac{\mathcal{S}_s}{t} = 0. \label{eq:entropyExtremum}
\end{align}
The time evolution of the entropy vanishes when
\begin{align}
    \mvec{F} = 0,
\end{align}
i.e.,
\begin{align}
    \gv f_s = -\frac{m_s}{T_s}(\mvec{v}-\mvec{u}_s)f_s.
\end{align}
Solving for the distribution function $f_s$, we find
\begin{align}
    f_s = A \exp \left (-m_s \frac{|\mvec{v} - \mvec{u}_s|^2}{2 T_s} \right ),
\end{align}
where $A$ is some constant of integration.
To find the constant of integration, we exploit the requirement that the integral over velocity space of the distribution function must by definition give the density,
\begin{align}
    n_s = \int_{K\setminus\Omega} A \exp \left (-m_s \frac{|\mvec{v} - \mvec{u}_s|^2}{2 T_s} \right ) \dv,
\end{align}
which means
\begin{align}
    A = n_s \left (\frac{m_s}{2 \pi T_s} \right )^{\frac{3}{2}},
\end{align}
where the integral over each velocity direction naturally gives a factor of $\sqrt{2 \pi T_s/m_s}$.
\end{proof}
\textbf{Further discussions of the Maxwellian velocity distribution and its connection to thermodynamic equilibrium.}

\eqr{\ref{eq:entropyExtremum}} is often referred to as the principle of detailed balance.
To give ourselves physical intuition for what it means for the time evolution of the entropy to vanish, we must consider what we mean by the plasma being in thermodynamic equilibrium.
A useful way to define equilibrium is that every process ongoing in the plasma is exactly compensated by its reverse, e.g., every Coulomb collision a particle in the plasma experiences is exactly balanced by an equal and opposite Coulomb collision.
The contribution of Coulomb collisions to the plasma's dynamics would then vanish.
But the contribution of Coulomb collisions vanishing was exactly the requirement for entropy production to disappear.
Inevitably, the velocity distribution function for which Coulomb collisions are ``in balance'' defines our equilibrium state and the state of maximum entropy.

There are additional subtleties worth mentioning; for example, we have used the total entropy vanishing to derive the Maxwellian as the maximum entropy distribution, but the plasma is free to be a different Maxwellian at each point in configuration space since the density, flow, and temperature may vary in space.
In this case, the entropy density can be maximized at a given configuration space location, but the total entropy may not yet be maximized.
For example, a spatially varying Maxwellian may itself be unstable and drive the system to a still higher entropy state.

We wish to make one additional note about the interconnection between the Maxwellian velocity distribution, the Fokker--Planck equation, and the entropy.
The Maxwellian velocity distribution is actually the naturally arising weight function when considering additional properties of the Fokker--Planck operator in \eqr{\ref{eq:collisionalComponent}}.
For example, we can show that the Fokker--Planck operator is self-adjoint, i.e., for arbitrary functions $g(\mvec{x},\mvec{v}, t)$, $f(\mvec{x},\mvec{v}, t)$, 
\begin{align}
\left ( g,  \pfrac{f^c}{t} \right )_{f_M} = \left ( f,  \pfrac{g^c}{t} \right )_{f_M}, \label{eq:lbo-adjoint}
\end{align}
with the inner product defined as
\begin{align}
( f, g )_{f_M} = 
\int_{K\setminus\Omega} \frac{1}{f_M}f g \dv. \label{eq:maxwellInnerProduct}
\end{align}
Note that $(\cdot, \cdot)_{f_M}$ is a bilinear operator taking two arguments, defined by the integral equation in \eqr{\ref{eq:maxwellInnerProduct}}.
Here, we consider only the integrals over velocity space for simplicity and $f_M$ is the Maxwellian for which the collision operator vanishes.
Note that we have dropped the species subscript.
Integrating \eqr{\ref{eq:lbo-adjoint}} by parts we get
\begin{align}
    \left ( g, \pfrac{f^c}{t} \right )_{f_M} =
    -\int_{K\setminus\Omega}
    \gv\left(\frac{g}{f_M}\right)
    \cdot
    \left[
      (\mvec{v}-\mvec{u}) f + \frac{T}{m} \gv f
    \right]
    \dv.
\end{align}
We have the identity
\begin{align}
\frac{T}{m} f_M \gv\left(\frac{f}{f_M}\right) = (\mvec{v}-\mvec{u}) f + \frac{T}{m} \gv f.
\end{align}
Using this identity leads to
\begin{align}
      \left ( g, \pfrac{f^c}{t} \right )_{f_M} =
    - \frac{T}{m} \int_{K\setminus\Omega}  
    f_M \gv\left(\frac{g}{f_M}\right) \cdot \gv\left(\frac{f}{f_M}\right)
    \dv.
    \label{eq:selfad}
\end{align}
This equation is symmetric in $f$ and $g$ from which the self-adjoint property follows.

As an aside, the self-adjoint property indicates that the eigenvalues of the operator are all real and hence all solutions are damped.
In other words, the Fokker--Planck operator in the VM-FP system of equations does not support any oscillatory modes.
One can show that the eigenfunctions of the operator \eqr{\ref{eq:collisionalComponent}} are simply the multi-dimensional tensor Hermite functions~\citep{Grant:1967,Hammett:1993,Harris:2004,Anderson:2007a,Patarroyo:2019} and each mode is damped proportional to the mode number.

We can use the self-adjoint property to discuss the behavior of the distribution function squared, $f^2$, at least in this norm with the Maxwellian weight.
If we set $g=f$ in \eqr{\ref{eq:selfad}} we get
\begin{align}
\int_{K\setminus\Omega}
\frac{f}{f_M} \pfrac{f^c}{t}
\dv
& =
\frac{d}{dt} \int_{K\setminus\Omega} \frac{1}{2}\frac{\left(f^c\right)^2}{f_M} \dv \notag \\
& = -\frac{T}{m} \int_{K\setminus\Omega} f_M \gv\left(\frac{f}{f_M}\right) \cdot  \gv\left(\frac{f}{f_M}\right) \dv \leq 0,
\end{align}
which shows that the Fokker--Planck operator will decay $f^2/f_M$ integrated over velocity space.
But what about $f^2$, the $L^2$ norm, without the Maxwellian weight?

We previously discussed the $L^2$ norm of the collisionless component of the VM-FP system of equations in Proposition~\ref{prop:collisionlessL2}, showing it is a conserved quantity in the evolution of the distribution function from the collisionless part of the VM-FP system of equations.
We proceed in a similar fashion to Proposition~\ref{prop:collisionlessL2}, but now with the Fokker--Planck equation,
\begin{align}
    \frac{d}{dt}\int_{K\setminus\Omega} \frac{1}{2} f^2 \dv = 
    -\int_{K\setminus\Omega} \gv f\cdot \left [ (\mvec{v}-\mvec{u})f + \frac{T}{m} \gv f \right ] \dv,
\end{align}
where we have already integrated by parts once and used our velocity space boundary conditions to eliminate the surface term. We now write the first term as
\begin{align}
    \gv f\cdot (\mvec{v}-\mvec{u}) f = \gv\bigg( \frac{1}{2}f^2 \bigg) \cdot (\mvec{v}-\mvec{u}) = \mvec{v} \cdot \gv\bigg( \frac{1}{2}f^2 \bigg) - \gv\cdot\bigg( \mvec{u}\frac{1}{2}f^2 \bigg).
\end{align}
The second term is a total derivative and will vanish on upon the use of the divergence theorem and our velocity space boundary conditions. This procedure leaves
\begin{align}
    \frac{d}{dt}\int_{K\setminus\Omega} \frac{1}{2} f^2 \dv = 
    -\int_{K\setminus\Omega} \mvec{v} \cdot \gv\bigg( \frac{1}{2}f^2 \bigg)
    + \frac{T}{m} |\gv f|^2 \dv.
\end{align}
Performing integration by parts on the first term we obtain
\begin{align}
    \frac{d}{dt}\int_{K\setminus\Omega} \frac{1}{2} f^2 \dv = 
    \int_{K\setminus\Omega} \frac{3}{2}f^2 - \frac{T}{m} |\gv f|^2 \dv.
\end{align}
For a Maxwellian, the right-hand side vanishes,
\begin{align}
    \frac{d}{dt}\int_{K\setminus\Omega} \frac{1}{2} f_M^2 \dv & = \int_{K\setminus\Omega} \frac{3}{2}f_M^2 - \frac{T}{m} \left ( -\frac{m (\mvec{v} - \mvec{u})}{T} f_M \right )^2 \dv, \notag \\
    & = \int_{K\setminus\Omega} \frac{3}{2}f_M^2 - \frac{m}{T}|\mvec{v} - \mvec{u}|^2 f_M^2 \dv = 0,
\end{align}
but one can construct perturbations on the Maxwellian that may change the sign. To see this, perform a perturbation around a Maxwellian $f = f_M + \delta f$ to get the variation,
\begin{align}
    \delta \frac{d}{dt} \int_{K\setminus\Omega} \frac{1}{2} f^2 \dv
    & =
    \int_{K\setminus\Omega} \left( 3 f_M - 2 \frac{T}{m} \nabla^2_{\mvec{v}} f_M \right) \delta f \dv, \notag \\
    & =
    \int_{K\setminus\Omega} \left(3 - \frac{2 m |\mvec{v} - \mvec{u}|^2}{T}\right)f_M \delta f \dv.
\end{align}
Clearly, $\delta f$ can be of any sign. This result shows that the $L^2$ norm is not monotonic and the Maxwellian is not the extremum of the $L^2$ norm. Physically, as the drag velocity $\mvec{v}-\mvec{u}$ is compressible, the contribution from the drag term cannot be turned into a total derivative. The compressibility of the drag term is in contrast to the collisionless case, in which the phase-space velocity is incompressible and hence the phase-space integrated $f^2$ is constant.

We have focused on these additional properties of the Fokker--Planck collision operator---the operator is self-adjoint and decays $f^2/f_M$, but not $f^2$---to make the connection between the Maxwellian velocity distribution and the entropy production of the operator even more explicit.
Proposition~\ref{prop:collisionEntropy}, that the VM-FP system of equations obeys the Second Law of Thermodynamics, and Corollary~\ref{coro:HTheorem}, that the VM-FP system of equations obeys Boltzmann's H-theorem, are inseparable, and Corollary~\ref{coro:HTheorem} naturally follows from Proposition~\ref{prop:collisionEntropy}.
The fact that the Maxwellian velocity distribution is then a natural weight function for discussing additional properties of the collision operator in the VM-FP system of equations should thus be unsurprising, and we cannot avoid including this weight function when discussing the behavior of quantities such as the distribution function squared, $f^2$.

We will conclude this discussion with one final way to think about the connection between the Maxwellian velocity distribution, entropy production, and the $1/f_M$ weighting of the inner product.
$1/f_M$ naturally arises when measuring how much a distribution function deviates away from a Maxwellian in terms of entropy.
In other words, writing $f=f_M + \delta f$, then the entropy $S[f] = - \int_K f \ln(f) \thinspace \dv$ as a functional of $f$ can be written as,
\begin{align}
    S[f_M+\delta f] = S[f_M] - (1/2) \int (\delta f)^2 / f_M \dv + \ldots,
\end{align}
through second order.
This expansion is consistent with the result that any small deviation, $\delta f \ll f_M$, away from a Maxwellian is a state of lower entropy.
Note that, to derive this, we have made use of
\begin{align}
    \int \mvec{v}^p\delta f \dv = 0 \quad \textrm{ for } p=0, 1, 2,
\end{align}
because the Maxwellian $f_M$ has the same zeroth through second moments as $f$ by construction.
In other words, any finite zeroth through second moments in $\delta f$ could just be absorbed into the Maxwellian $f_M$, and $f_M$ redefined.
This norm for $\delta f$ is equivalent to a norm on the total $f$, plus a constant, since
\begin{align}
    \int f^2 / f_M \dv = \int (f_M + \delta f)^2 / f_M  \dv = n + \int (\delta f)^2 / f_M \dv,
\end{align}
where the density $n = \int  f \dv$ is conserved by the collision operator.
This result that 
\begin{align}
    S[f_M+\delta f] = {\rm constant} - (1/2) \int  f^2 / f_M  \dv  + \ldots,
\end{align} 
shows a relationship between the collision operator causing the entropy to be never decreasing and the $1/f_M$-weighted norm to be never increasing.


\renewcommand{\baselinestretch}{1}
\small\normalsize

\newpage
\addcontentsline{toc}{chapter}{Bibliography}
\bibliographystyle{abbrvnat}
\bibliography{thesis_abbrev.bib,thesis-chapter1.bib,thesis-chapter2.bib,thesis-chapter3.bib,thesis-chapter4.bib,thesis-chapter5.bib}

\begin{thebibliography}{196}
\providecommand{\natexlab}[1]{#1}
\providecommand{\url}[1]{\texttt{#1}}
\expandafter\ifx\csname urlstyle\endcsname\relax
  \providecommand{\doi}[1]{doi: #1}\else
  \providecommand{\doi}{doi: \begingroup \urlstyle{rm}\Url}\fi

\bibitem[ite(2020)]{iter}
What is {ITER}?, February, 20th, 2020.
\newblock URL \url{https://www.iter.org/proj/inafewlines}.

\bibitem[Abel et~al.(2012)Abel, Hahn, and Kaehler]{Abel:2012}
T.~Abel, O.~Hahn, and R.~Kaehler.
\newblock {Tracing the dark matter sheet in phase space}.
\newblock \emph{Mon.~Not.~Roy.~Astron.~Soc.}, 427\penalty0 (1):\penalty0
  61--76, 11 2012.
\newblock ISSN 0035-8711.
\newblock \doi{10.1111/j.1365-2966.2012.21754.x}.
\newblock URL \url{https://doi.org/10.1111/j.1365-2966.2012.21754.x}.

\bibitem[Abramowitz and Stegun(1985)]{Abramowitz:1985}
M.~Abramowitz and I.~Stegun.
\newblock Handbook of mathematical functions.
\newblock \emph{National Bureau of Standards Applied Mathematics Series},
  55:\penalty0 589--626, 1985.

\bibitem[{Anagnostopoulos} and {Kaliabetsos}(1994)]{Anagnostopoulos:1994}
G.~C. {Anagnostopoulos} and G.~D. {Kaliabetsos}.
\newblock {Shock drift acceleration of energetic (E greater than or equal to 50
  keV) protons and (E greater than or equal to 37 keV/n) alpha particles at the
  Earth's bow shock as a source of the magnetosheath energetic ion events}.
\newblock \emph{J.~Geophys.~Res.}, 99:\penalty0 2335--2349, feb 1994.
\newblock \doi{10.1029/93JA02698}.

\bibitem[{Anagnostopoulos} et~al.(2009){Anagnostopoulos}, {Tenentes}, and
  {Vassiliadis}]{Anagnostopoulos:2009}
G.~C. {Anagnostopoulos}, V.~{Tenentes}, and E.~S. {Vassiliadis}.
\newblock {MeV ion event observed at 0950 UT on 4 May 1998 at a
  quasi-perpendicular bow shock region: New observations and an alternative
  interpretation on its origin}.
\newblock \emph{J.~Geophys.~Res.}, 114:\penalty0 A09101, sep 2009.
\newblock \doi{10.1029/2009JA014088}.

\bibitem[Anderson and O'Neil(2007{\natexlab{a}})]{Anderson:2007a}
M.~W. Anderson and T.~M. O'Neil.
\newblock {Eigenfunctions and eigenvalues of the Dougherty collision operator}.
\newblock \emph{Phys.~Plasmas}, 14\penalty0 (5):\penalty0 052103,
  2007{\natexlab{a}}.

\bibitem[Anderson and O'Neil(2007{\natexlab{b}})]{Anderson:2007b}
M.~W. Anderson and T.~M. O'Neil.
\newblock {Collisional damping of plasma waves on a pure electron plasma
  column}.
\newblock \emph{Phys. Plasmas}, 14\penalty0 (11):\penalty0 112110,
  2007{\natexlab{b}}.
\newblock \doi{10.1063/1.2807220}.
\newblock URL \url{https://doi.org/10.1063/1.2807220}.

\bibitem[Arnold and Awanou(2011)]{Arnold:2011}
D.~N. Arnold and G.~Awanou.
\newblock {The Serendipity Family of Finite Elements}.
\newblock \emph{Found.~Comput.~Math.}, 11\penalty0 (3):\penalty0 337--344, Mar.
  2011.

\bibitem[Arnold et~al.(2002)Arnold, Boffi, and Falk]{Arnold:2002}
D.~N. Arnold, D.~Boffi, and R.~Falk.
\newblock Approximation by quadrilateral finite elements.
\newblock \emph{Math. Comp.}, 71\penalty0 (239):\penalty0 909--922, 2002.
\newblock ISSN 0025-5718.
\newblock \doi{10.1090/S0025-5718-02-01439-4}.

\bibitem[Atkins and Shu(1998)]{Atkins:1998}
H.~L. Atkins and C.-W. Shu.
\newblock Quadrature-free implementation of discontinuous galerkin method for
  hyperbolic equations.
\newblock \emph{AIAA Journal}, 36\penalty0 (5):\penalty0 775--782, 1998.
\newblock \doi{10.2514/2.436}.

\bibitem[{Ball} and {Melrose}(2001)]{Ball:2001}
L.~{Ball} and D.~B. {Melrose}.
\newblock {Shock Drift Acceleration of Electrons}.
\newblock \emph{Publ. Astron. Soc. Aust.}, 18:\penalty0 361--373, 2001.
\newblock \doi{10.1071/AS01047}.

\bibitem[Balsara and K\"appeli(2017)]{Balsara:2017}
D.~S. Balsara and R.~K\"appeli.
\newblock Von neumann stability analysis of globally divergence-free rkdg
  schemes for the induction equation using multidimensional riemann solvers.
\newblock \emph{J.~Comp.~Phys.}, 336:\penalty0 104--127, 2017.
\newblock ISSN 0021-9991.
\newblock \doi{http://dx.doi.org/10.1016/j.jcp.2017.01.056}.
\newblock URL
  \url{http://www.sciencedirect.com/science/article/pii/S0021999117300724}.

\bibitem[Barbas and Velarde(2015)]{Barbas:2015}
A.~Barbas and P.~Velarde.
\newblock Development of a godunov method for maxwell's equations with adaptive
  mesh refinement.
\newblock \emph{J.~Comp.~Phys.}, 300:\penalty0 186--201, 2015.
\newblock ISSN 0021-9991.
\newblock \doi{http://dx.doi.org/10.1016/j.jcp.2015.07.048}.
\newblock URL
  \url{http://www.sciencedirect.com/science/article/pii/S002199911500501X}.

\bibitem[Belova et~al.(1997)Belova, Denton, and Chan]{Belova:1997}
E.~Belova, R.~Denton, and A.~Chan.
\newblock Hybrid simulations of the effects of energetic particles on
  low-frequency mhd waves.
\newblock \emph{J.~Comp.~Phys.}, 136\penalty0 (2):\penalty0 324--336, 1997.
\newblock ISSN 0021-9991.
\newblock \doi{https://doi.org/10.1006/jcph.1997.5738}.
\newblock URL
  \url{http://www.sciencedirect.com/science/article/pii/S0021999197957387}.

\bibitem[Berger et~al.(1958)Berger, Newcomb, Dawson, Frieman, Kulsrud, and
  Lenard]{Berger:1958}
J.~M. Berger, W.~A. Newcomb, J.~M. Dawson, E.~A. Frieman, R.~M. Kulsrud, and
  A.~Lenard.
\newblock Heating of a confined plasma by oscillating electromagnetic fields.
\newblock \emph{Phys.~Fluids}, 1\penalty0 (4):\penalty0 301--307, 1958.
\newblock \doi{10.1063/1.1705888}.

\bibitem[Bernard et~al.(2019)Bernard, Shi, Gentle, Hakim, Hammett,
  Stoltzfus-Dueck, and Taylor]{Bernard:2019}
T.~N. Bernard, E.~L. Shi, K.~W. Gentle, A.~Hakim, G.~W. Hammett,
  T.~Stoltzfus-Dueck, and E.~I. Taylor.
\newblock Gyrokinetic continuum simulations of plasma turbulence in the texas
  helimak.
\newblock \emph{Phys.~Plasmas}, 26\penalty0 (4):\penalty0 042301, 2019.
\newblock \doi{10.1063/1.5085457}.
\newblock URL \url{https://doi.org/10.1063/1.5085457}.

\bibitem[Bernard et~al.(2020)Bernard, Stoltzfus-Dueck, Gentle, Hakim, Hammett,
  and Shi]{Bernard:2020}
T.~N. Bernard, T.~Stoltzfus-Dueck, K.~W. Gentle, A.~Hakim, G.~W. Hammett, and
  E.~L. Shi.
\newblock Investigating shear flow through continuum gyrokinetic simulations of
  limiter biasing in the texas helimak.
\newblock 2020.
\newblock URL \url{https://arxiv.org/abs/2002.02114}.

\bibitem[Birdsall and Langdon(1990)]{birdsallbook}
C.~Birdsall and A.~B. Langdon.
\newblock \emph{{Plasma Physics Via Computer Simulation}}.
\newblock Institute of Physics Publishing, 1990.

\bibitem[Bowers et~al.(2009)Bowers, Albright, Yin, Daughton, Roytershteyn,
  Bergen, and Kwan]{Bowers:2009}
K.~J. Bowers, B.~J. Albright, L.~Yin, W.~Daughton, V.~Roytershteyn, B.~Bergen,
  and T.~J.~T. Kwan.
\newblock Advances in petascale kinetic plasma simulation with {VPIC} and
  roadrunner.
\newblock \emph{Journal of Physics: Conference Series}, 180:\penalty0 012055,
  jul 2009.
\newblock \doi{10.1088/1742-6596/180/1/012055}.
\newblock URL \url{https://doi.org/10.1088%2F1742-6596%2F180%2F1%2F012055}.

\bibitem[{Braginskii}(1965)]{Braginskii:1965}
S.~I. {Braginskii}.
\newblock {Transport Processes in a Plasma}.
\newblock \emph{Reviews of Plasma Physics}, 1:\penalty0 205, Jan 1965.

\bibitem[Brenner and Scott(2008)]{BrennerScott:2008}
S.~Brenner and R.~Scott.
\newblock \emph{{The Mathematical Theory of Finite Element Methods}},
  volume~15.
\newblock Springer-Verlag New York, 2008.
\newblock \doi{10.1007/978-0-387-75934-0}.

\bibitem[Bret(2009)]{Bret:2009}
A.~Bret.
\newblock Weibel, two--stream, filamentation, oblique, bell, buneman... which
  one grows faster?
\newblock \emph{Astrophys.~J.}, 699\penalty0 (2):\penalty0 990--1003, jun 2009.
\newblock \doi{10.1088/0004-637x/699/2/990}.
\newblock URL \url{https://doi.org/10.1088%2F0004-637x%2F699%2F2%2F990}.

\bibitem[{Cagas}(2018)]{CagasThesis:2018}
P.~{Cagas}.
\newblock \emph{{Continuum Kinetic Simulations of Plasma Sheaths and
  Instabilities}}.
\newblock PhD thesis, Virginia Polytechnic Institute and State University,
  2018.

\bibitem[Cagas et~al.(2017{\natexlab{a}})Cagas, Hakim, Juno, and
  Srinivasan]{Cagas:2017a}
P.~Cagas, A.~Hakim, J.~Juno, and B.~Srinivasan.
\newblock Continuum kinetic and multi-fluid simulations of classical sheaths.
\newblock \emph{Phys.~Plasmas}, 24\penalty0 (2):\penalty0 022118,
  2017{\natexlab{a}}.
\newblock \doi{10.1063/1.4976544}.
\newblock URL \url{https://doi.org/10.1063/1.4976544}.

\bibitem[Cagas et~al.(2017{\natexlab{b}})Cagas, Hakim, Scales, and
  Srinivasan]{Cagas:2017b}
P.~Cagas, A.~Hakim, W.~Scales, and B.~Srinivasan.
\newblock Nonlinear saturation of the weibel instability.
\newblock \emph{Phys.~Plasmas}, 24\penalty0 (11):\penalty0 112116,
  2017{\natexlab{b}}.
\newblock \doi{10.1063/1.4994682}.
\newblock URL \url{https://doi.org/10.1063/1.4994682}.

\bibitem[Cagas et~al.(2020)Cagas, Hakim, and Srinivasan]{Cagas:2020}
P.~Cagas, A.~Hakim, and B.~Srinivasan.
\newblock Plasma-material boundary conditions for discontinuous galerkin
  continuum-kinetic simulations, with a focus on secondary electron emission.
\newblock \emph{J.~Comp.~Phys.}, 406:\penalty0 109215, 2020.
\newblock ISSN 0021-9991.
\newblock \doi{https://doi.org/10.1016/j.jcp.2019.109215}.
\newblock URL
  \url{http://www.sciencedirect.com/science/article/pii/S0021999119309209}.

\bibitem[Califano et~al.(1998)Califano, Pegoraro, Bulanov, and
  Mangeney]{Califano:1998}
F.~Califano, F.~Pegoraro, S.~V. Bulanov, and A.~Mangeney.
\newblock Kinetic saturation of the weibel instability in a collisionless
  plasma.
\newblock \emph{Phys.~Rev.~E}, 57:\penalty0 7048--7059, Jun 1998.
\newblock \doi{10.1103/PhysRevE.57.7048}.
\newblock URL \url{https://link.aps.org/doi/10.1103/PhysRevE.57.7048}.

\bibitem[Camporeale et~al.(2016)Camporeale, Delzanno, Bergen, and
  Moulton]{Camporeale:2016}
E.~Camporeale, G.~L. Delzanno, B.~K. Bergen, and J.~D. Moulton.
\newblock {On the velocity space discretization for the Vlasov-Poisson system:
  Comparison between implicit Hermite spectral and Particle-in-Cell methods}.
\newblock \emph{Comp.~Phys.~Comm.}, 198:\penalty0 47--58, 2016.
\newblock ISSN 0010--4655.
\newblock URL
  \url{http://www.sciencedirect.com/science/article/pii/S0010465515003409}.

\bibitem[{Caprioli} and {Spitkovsky}(2014{\natexlab{a}})]{Caprioli:2014a}
D.~{Caprioli} and A.~{Spitkovsky}.
\newblock {Simulations of Ion Acceleration at Non-relativistic Shocks. I.
  Acceleration Efficiency}.
\newblock \emph{Astrophys.~J.}, 783\penalty0 (2):\penalty0 91, Mar
  2014{\natexlab{a}}.
\newblock \doi{10.1088/0004-637X/783/2/91}.

\bibitem[{Caprioli} and {Spitkovsky}(2014{\natexlab{b}})]{Caprioli:2014b}
D.~{Caprioli} and A.~{Spitkovsky}.
\newblock {Simulations of Ion Acceleration at Non-relativistic Shocks. II.
  Magnetic Field Amplification}.
\newblock \emph{Astrophys.~J.}, 794\penalty0 (1):\penalty0 46, Oct
  2014{\natexlab{b}}.
\newblock \doi{10.1088/0004-637X/794/1/46}.

\bibitem[{Caprioli} and {Spitkovsky}(2014{\natexlab{c}})]{Caprioli:2014c}
D.~{Caprioli} and A.~{Spitkovsky}.
\newblock {Simulations of Ion Acceleration at Non-relativistic Shocks. III.
  Particle Diffusion}.
\newblock \emph{Astrophys.~J.}, 794\penalty0 (1):\penalty0 47, Oct
  2014{\natexlab{c}}.
\newblock \doi{10.1088/0004-637X/794/1/47}.

\bibitem[Chan and Wilcox(2019)]{Chan:2019}
J.~Chan and L.~C. Wilcox.
\newblock On discretely entropy stable weight-adjusted discontinuous galerkin
  methods: curvilinear meshes.
\newblock \emph{J.~Comp.~Phys.}, 378:\penalty0 366--393, 2019.
\newblock ISSN 0021-9991.
\newblock \doi{https://doi.org/10.1016/j.jcp.2018.11.010}.
\newblock URL
  \url{http://www.sciencedirect.com/science/article/pii/S0021999118307289}.

\bibitem[Chan et~al.(2017)Chan, Hewett, and Warburton]{Chan:2017}
J.~Chan, R.~J. Hewett, and T.~Warburton.
\newblock {Weight-Adjusted Discontinuous Galerkin Methods: Wave Propagation in
  Heterogeneous Media}.
\newblock \emph{SIAM J.~Sci.~Comput.}, 39\penalty0 (6):\penalty0 A2935--A2961,
  2017.
\newblock \doi{10.1137/16M1089186}.
\newblock URL \url{https://doi.org/10.1137/16M1089186}.

\bibitem[Chen et~al.(2019)Chen, Klein, and Howes]{Chen:2019}
C.~H.~K. Chen, K.~G. Klein, and G.~G. Howes.
\newblock Evidence for electron landau damping in space plasma turbulence.
\newblock \emph{Nature Comm.}, 10\penalty0 (1):\penalty0 740, 2019.
\newblock \doi{10.1038/s41467-019-08435-3}.
\newblock URL \url{https://doi.org/10.1038/s41467-019-08435-3}.

\bibitem[Cheng et~al.(2013{\natexlab{a}})Cheng, Parker, Chen, and
  Uzdensky]{ChengJianhua:2013}
J.~Cheng, S.~E. Parker, Y.~Chen, and D.~A. Uzdensky.
\newblock A second-order semi-implicit $\delta f$ method for hybrid simulation.
\newblock \emph{J.~Comp.~Phys.}, 245:\penalty0 364--375, 2013{\natexlab{a}}.
\newblock ISSN 0021-9991.
\newblock \doi{https://doi.org/10.1016/j.jcp.2013.03.017}.
\newblock URL
  \url{http://www.sciencedirect.com/science/article/pii/S0021999113001952}.

\bibitem[Cheng et~al.(2011)Cheng, Gamba, Majorana, and Shu]{Cheng:2011}
Y.~Cheng, I.~M. Gamba, A.~Majorana, and C.~W. Shu.
\newblock {A brief survey of the discontinuous Galerkin method for the
  Boltzmann-Poisson equations}.
\newblock \emph{SeMA J.}, 54\penalty0 (1):\penalty0 47--64, 2011.

\bibitem[Cheng et~al.(2013{\natexlab{b}})Cheng, Gamba, and
  Morrison]{Cheng:2013}
Y.~Cheng, I.~M. Gamba, and P.~J. Morrison.
\newblock {Study of conservation and recurrence of Runge--Kutta discontinuous
  Galerkin schemes for Vlasov--Poisson systems}.
\newblock \emph{J.~Sci.~Comp.}, 56\penalty0 (2):\penalty0 319--349, Jan.
  2013{\natexlab{b}}.

\bibitem[Cheng et~al.(2014{\natexlab{a}})Cheng, Christlieb, and
  Zhong]{Cheng:2014a}
Y.~Cheng, A.~J. Christlieb, and X.~Zhong.
\newblock {Energy-conserving discontinuous Galerkin methods for the
  Vlasov-Ampere system}.
\newblock \emph{J.~Comp.~Phys.}, 256\penalty0 (0):\penalty0 630--655,
  2014{\natexlab{a}}.
\newblock ISSN 0021-9991.
\newblock \doi{http://dx.doi.org/10.1016/j.jcp.2013.09.013}.

\bibitem[Cheng et~al.(2014{\natexlab{b}})Cheng, Gamba, Li, and
  Morrison]{Cheng:2014b}
Y.~Cheng, I.~M. Gamba, F.~Li, and P.~J. Morrison.
\newblock {Discontinuous Galerkin Methods for the Vlasov-Maxwell Equations}.
\newblock \emph{SIAM Journal on Numerical Analysis}, 52\penalty0 (2):\penalty0
  1017--1049, 2014{\natexlab{b}}.
\newblock \doi{10.1137/130915091}.

\bibitem[Cockburn and Dawson(2000)]{Cockburn:2000}
B.~Cockburn and C.~Dawson.
\newblock Some extensions of the local discontinuous galerkin method for
  convection-diffusion equations in multidimensions.
\newblock \emph{The Proceedings of the Conference on the Mathematics of Finite
  Elements and Applications: MAFELAP X}, pages 225--238, 01 2000.

\bibitem[Cockburn and Shu(1998{\natexlab{a}})]{Cockburn:1998a}
B.~Cockburn and C.-W. Shu.
\newblock {The local discontinuous Galerkin method for time-dependent
  convection-diffusion systems}.
\newblock \emph{SIAM J.~Numer.~Anal.}, 35\penalty0 (6):\penalty0 2440--2463,
  1998{\natexlab{a}}.

\bibitem[Cockburn and Shu(1998{\natexlab{b}})]{Cockburn:1998b}
B.~Cockburn and C.-W. Shu.
\newblock The {R}unge-{K}utta discontinuous {G}alerkin method for conservation
  laws {V}: multidimensional systems.
\newblock \emph{J.~Comp.~Phys.}, 141\penalty0 (2):\penalty0 199--224,
  1998{\natexlab{b}}.
\newblock \doi{10.1006/jcph.1998.5892}.

\bibitem[Cockburn and Shu(2001)]{Cockburn:2001}
B.~Cockburn and C.-W. Shu.
\newblock Runge-{K}utta discontinuous {G}alerkin methods for
  convection-dominated problems.
\newblock \emph{Journal of scientific computing}, 16\penalty0 (3):\penalty0
  173--261, 2001.

\bibitem[Colella and Woodward(1984)]{Colella:1984}
P.~Colella and P.~R. Woodward.
\newblock The piecewise parabolic method (ppm) for gas-dynamical simulations.
\newblock \emph{J.~Comp.~Phys.}, 54\penalty0 (1):\penalty0 174--201, 1984.
\newblock ISSN 0021-9991.
\newblock \doi{https://doi.org/10.1016/0021-9991(84)90143-8}.
\newblock URL
  \url{http://www.sciencedirect.com/science/article/pii/0021999184901438}.

\bibitem[Coppa et~al.(1996)Coppa, Lapenta, Dellapiana, Donato, and
  Riccardo]{Coppa:1996}
G.~Coppa, G.~Lapenta, G.~Dellapiana, F.~Donato, and V.~Riccardo.
\newblock Blob method for kinetic plasma simulation with variable-size
  particles.
\newblock \emph{J.~Comp.~Phys.}, 127\penalty0 (2):\penalty0 268--284, 1996.
\newblock ISSN 0021-9991.
\newblock \doi{https://doi.org/10.1006/jcph.1996.0174}.
\newblock URL
  \url{http://www.sciencedirect.com/science/article/pii/S0021999196901746}.

\bibitem[Daughton(2003)]{Daughton:2003}
W.~Daughton.
\newblock Electromagnetic properties of the lower-hybrid-drift instability in a
  thin current sheet.
\newblock \emph{Phys.~Plasmas}, 10\penalty0 (8):\penalty0 3103--3119, 2003.
\newblock \doi{10.1063/1.1594724}.
\newblock URL \url{https://doi.org/10.1063/1.1594724}.

\bibitem[Davidson et~al.(1977)Davidson, Gladd, Wu, and Huba]{Davidson:1977}
R.~C. Davidson, N.~T. Gladd, C.~S. Wu, and J.~D. Huba.
\newblock Effects of finite plasma beta on the lower-hybrid drift instability.
\newblock \emph{Phys.~Fluids}, 20\penalty0 (2):\penalty0 301--310, 1977.
\newblock \doi{10.1063/1.861867}.
\newblock URL \url{https://aip.scitation.org/doi/abs/10.1063/1.861867}.

\bibitem[Dawson(1962)]{Dawson:1962}
J.~Dawson.
\newblock One-dimensional plasma model.
\newblock \emph{Phys.~Fluids}, 5\penalty0 (4):\penalty0 445--459, 1962.
\newblock \doi{10.1063/1.1706638}.
\newblock URL \url{https://aip.scitation.org/doi/abs/10.1063/1.1706638}.

\bibitem[Dawson(1983)]{Dawson:1983}
J.~M. Dawson.
\newblock Particle simulation of plasmas.
\newblock \emph{Rev. Mod. Phys.}, 55:\penalty0 403--447, Apr 1983.
\newblock \doi{10.1103/RevModPhys.55.403}.
\newblock URL \url{https://link.aps.org/doi/10.1103/RevModPhys.55.403}.

\bibitem[Denton and Kotschenreuther(1995)]{Denton:1995}
R.~E. Denton and M.~Kotschenreuther.
\newblock $\delta f$ algorithm.
\newblock \emph{J.~Comp.~Phys.}, 119\penalty0 (2):\penalty0 283--294, 1995.
\newblock ISSN 0021-9991.
\newblock \doi{https://doi.org/10.1006/jcph.1995.1136}.
\newblock URL
  \url{http://www.sciencedirect.com/science/article/pii/S0021999185711369}.

\bibitem[Dougherty(1964)]{Dougherty:1964}
J.~P. Dougherty.
\newblock {Model Fokker-Planck Equation for a Plasma and Its Solution}.
\newblock \emph{Phys.~Fluids}, 7\penalty0 (11):\penalty0 1788--1799, 1964.
\newblock \doi{10.1063/1.2746779}.
\newblock URL \url{https://aip.scitation.org/doi/abs/10.1063/1.2746779}.

\bibitem[Durran(2010)]{Durran:2010}
D.~Durran.
\newblock \emph{Numerical methods for fluid dynamics: With applications to
  geophysics}, volume~32.
\newblock Springer Science \& Business Media, 2010.

\bibitem[Fehn et~al.(2019)Fehn, Wall, and Kronbichler]{Fehn:2019}
N.~Fehn, W.~A. Wall, and M.~Kronbichler.
\newblock A matrix-free high-order discontinuous galerkin compressible
  navier-stokes solver: A performance comparison of compressible and
  incompressible formulations for turbulent incompressible flows.
\newblock \emph{Int.~J.~Numer.~Methods Fluids}, 89\penalty0 (3):\penalty0
  71--102, 2019.
\newblock \doi{10.1002/fld.4683}.
\newblock URL \url{https://onlinelibrary.wiley.com/doi/abs/10.1002/fld.4683}.

\bibitem[Fischer and Mullen(2001)]{Fischer:2001}
P.~Fischer and J.~Mullen.
\newblock {Filter-based stabilization of spectral element methods}.
\newblock \emph{Comptes Rendus de l'Acad\'emie des Sciences - Series I -
  Mathematics}, 332\penalty0 (3):\penalty0 265--270, 2001.
\newblock ISSN 0764-4442.
\newblock \doi{https://doi.org/10.1016/S0764-4442(00)01763-8}.
\newblock URL
  \url{http://www.sciencedirect.com/science/article/pii/S0764444200017638}.

\bibitem[Flad and Gassner(2017)]{Flad:2017}
D.~Flad and G.~Gassner.
\newblock {On the use of kinetic energy preserving DG-schemes for large eddy
  simulation}.
\newblock \emph{J.~Comp.~Phys.}, 350:\penalty0 782--795, 2017.
\newblock ISSN 0021-9991.
\newblock \doi{https://doi.org/10.1016/j.jcp.2017.09.004}.
\newblock URL
  \url{http://www.sciencedirect.com/science/article/pii/S002199911730654X}.

\bibitem[Flad et~al.(2016)Flad, Beck, and Munz]{Flad:2016}
D.~Flad, A.~Beck, and C.-D. Munz.
\newblock {Simulation of underresolved turbulent flows by adaptive filtering
  using the high order discontinuous Galerkin spectral element method}.
\newblock \emph{J.~Comp.~Phys.}, 313:\penalty0 1--12, 2016.
\newblock ISSN 0021-9991.
\newblock \doi{https://doi.org/10.1016/j.jcp.2015.11.064}.
\newblock URL
  \url{http://www.sciencedirect.com/science/article/pii/S002199911500827X}.

\bibitem[Fonseca et~al.(2003)Fonseca, Silva, Tonge, Mori, and
  Dawson]{Fonseca:2003}
R.~A. Fonseca, L.~O. Silva, J.~W. Tonge, W.~B. Mori, and J.~M. Dawson.
\newblock Three-dimensional weibel instability in astrophysical scenarios.
\newblock \emph{Phys.~Plasmas}, 10\penalty0 (5):\penalty0 1979--1984, 2003.
\newblock \doi{10.1063/1.1556605}.
\newblock URL \url{https://doi.org/10.1063/1.1556605}.

\bibitem[Fonseca et~al.(2008)Fonseca, Martins, Silva, Tonge, Tsung, and
  Mori]{Fonseca:2008}
R.~A. Fonseca, S.~F. Martins, L.~O. Silva, J.~W. Tonge, F.~S. Tsung, and W.~B.
  Mori.
\newblock One-to-one direct modeling of experiments and astrophysical
  scenarios: pushing the envelope on kinetic plasma simulations.
\newblock \emph{Plasma Phys.~Con.~Fus.}, 50\penalty0 (12):\penalty0 124034, nov
  2008.
\newblock \doi{10.1088/0741-3335/50/12/124034}.
\newblock URL \url{https://doi.org/10.1088%2F0741-3335%2F50%2F12%2F124034}.

\bibitem[Francisquez et~al.(2020)Francisquez, Bernard, Zhu, Hakim, Rogers, and
  Hammett]{Francisquez:2020}
M.~Francisquez, T.~N. Bernard, B.~Zhu, A.~Hakim, B.~N. Rogers, and G.~W.
  Hammett.
\newblock Fluid \& gyrokinetic turbulence in open field-line, helical plasmas.
\newblock 2020.
\newblock URL \url{https://arxiv.org/abs/2002.11136}.

\bibitem[Fried(1959)]{Fried:1959}
B.~D. Fried.
\newblock Mechanism for instability of transverse plasma waves.
\newblock \emph{Phys.~Fluids}, 2\penalty0 (3):\penalty0 337--337, 1959.
\newblock \doi{10.1063/1.1705933}.
\newblock URL \url{https://aip.scitation.org/doi/abs/10.1063/1.1705933}.

\bibitem[Gassner(2013)]{Gassner:2013a}
G.~J. Gassner.
\newblock {A skew-symmetric discontinuous Galerkin spectral element
  discretization and its relation to SBP-SAT finite difference methods}.
\newblock \emph{SIAM J.~Sci.~Comput.}, 35\penalty0 (3):\penalty0 A1233--A1253,
  2013.
\newblock \doi{10.1137/120890144}.
\newblock URL \url{https://doi.org/10.1137/120890144}.

\bibitem[Gassner(2014)]{Gassner:2014}
G.~J. Gassner.
\newblock A kinetic energy preserving nodal discontinuous galerkin spectral
  element method.
\newblock \emph{Int.~J.~Numer.~Methods Fluids}, 76\penalty0 (1):\penalty0
  28--50, 2014.
\newblock \doi{10.1002/fld.3923}.
\newblock URL \url{https://onlinelibrary.wiley.com/doi/abs/10.1002/fld.3923}.

\bibitem[Gassner and Beck(2013)]{Gassner:2013b}
G.~J. Gassner and A.~D. Beck.
\newblock On the accuracy of high-order discretizations for underresolved
  turbulence simulations.
\newblock \emph{Theore.~Comput.~Fluid Dyn.}, 27\penalty0 (3):\penalty0
  221--237, 2013.
\newblock \doi{10.1007/s00162-011-0253-7}.
\newblock URL \url{https://doi.org/10.1007/s00162-011-0253-7}.

\bibitem[Gassner et~al.(2016{\natexlab{a}})Gassner, Winters, and
  Kopriva]{Gassner:2016a}
G.~J. Gassner, A.~R. Winters, and D.~A. Kopriva.
\newblock {A well balanced and entropy conservative discontinuous Galerkin
  spectral element method for the shallow water equations}.
\newblock \emph{Appl.~Math.~Comput.}, 272:\penalty0 291--308,
  2016{\natexlab{a}}.
\newblock ISSN 0096-3003.
\newblock \doi{https://doi.org/10.1016/j.amc.2015.07.014}.
\newblock URL
  \url{http://www.sciencedirect.com/science/article/pii/S0096300315009261}.

\bibitem[Gassner et~al.(2016{\natexlab{b}})Gassner, Winters, and
  Kopriva]{Gassner:2016b}
G.~J. Gassner, A.~R. Winters, and D.~A. Kopriva.
\newblock {Split form nodal discontinuous Galerkin schemes with
  summation-by-parts property for the compressible Euler equations}.
\newblock \emph{J.~Comp.~Phys.}, 327:\penalty0 39--66, 2016{\natexlab{b}}.
\newblock ISSN 0021-9991.
\newblock \doi{https://doi.org/10.1016/j.jcp.2016.09.013}.
\newblock URL
  \url{http://www.sciencedirect.com/science/article/pii/S0021999116304259}.

\bibitem[Germaschewski et~al.(2016)Germaschewski, Fox, Abbott, Ahmadi, Maynard,
  Wang, Ruhl, and Bhattacharjee]{Germaschewski:2016}
K.~Germaschewski, W.~Fox, S.~Abbott, N.~Ahmadi, K.~Maynard, L.~Wang, H.~Ruhl,
  and A.~Bhattacharjee.
\newblock The plasma simulation code: A modern particle-in-cell code with
  patch-based load-balancing.
\newblock \emph{J.~Comp.~Phys.}, 318:\penalty0 305--326, 2016.
\newblock ISSN 0021-9991.
\newblock \doi{https://doi.org/10.1016/j.jcp.2016.05.013}.
\newblock URL
  \url{http://www.sciencedirect.com/science/article/pii/S0021999116301413}.

\bibitem[{Grant} and {Feix}(1967)]{Grant:1967}
F.~C. {Grant} and M.~R. {Feix}.
\newblock {Transition between Landau and Van Kampen Treatments of the Vlasov
  Equation}.
\newblock \emph{Phys.~Fluids}, 10:\penalty0 1356--1357, June 1967.
\newblock \doi{10.1063/1.1762288}.
\newblock URL \url{https://aip.scitation.org/doi/abs/10.1063/1.1762288}.

\bibitem[Greco et~al.(2012)Greco, Valentini, Servidio, and
  Matthaeus]{Greco:2012}
A.~Greco, F.~Valentini, S.~Servidio, and W.~H. Matthaeus.
\newblock {Inhomogeneous kinetic effects related to intermittent magnetic
  discontinuities}.
\newblock \emph{Phys.~Rev.~E}, 86\penalty0 (6):\penalty0 1--6, 2012.
\newblock ISSN 15393755.
\newblock \doi{10.1103/PhysRevE.86.066405}.
\newblock URL \url{https://link.aps.org/doi/10.1103/PhysRevE.86.066405}.

\bibitem[Greengard and Rokhlin(1987)]{GreengardRokhlin:1987}
L.~Greengard and V.~Rokhlin.
\newblock A fast algorithm for particle simulations.
\newblock \emph{J.~Comp.~Phys.}, 73\penalty0 (2):\penalty0 325--348, 1987.
\newblock ISSN 0021-9991.
\newblock \doi{https://doi.org/10.1016/0021-9991(87)90140-9}.
\newblock URL
  \url{http://www.sciencedirect.com/science/article/pii/0021999187901409}.

\bibitem[Guennebaud et~al.(2010)Guennebaud, Jacob, et~al.]{eigen}
G.~Guennebaud, B.~Jacob, et~al.
\newblock Eigen v3.
\newblock http://eigen.tuxfamily.org, 2010.

\bibitem[Guo and Chan(2020)]{Guo:2020}
K.~Guo and J.~Chan.
\newblock Bernstein-b\'ezier weight-adjusted discontinuous galerkin methods for
  wave propagation in heterogeneous media.
\newblock \emph{J.~Comp.~Phys.}, 400:\penalty0 108971, 2020.
\newblock ISSN 0021-9991.
\newblock \doi{https://doi.org/10.1016/j.jcp.2019.108971}.
\newblock URL
  \url{http://www.sciencedirect.com/science/article/pii/S002199911930676X}.

\bibitem[Hahn and Angulo(2015)]{Hahn:2015}
O.~Hahn and R.~E. Angulo.
\newblock {An adaptively refined phase–space element method for cosmological
  simulations and collisionless dynamics}.
\newblock \emph{Mon.~Not.~Roy.~Astron.~Soc.}, 455\penalty0 (1):\penalty0
  1115--1133, 11 2015.
\newblock ISSN 0035-8711.
\newblock \doi{10.1093/mnras/stv2304}.
\newblock URL \url{https://doi.org/10.1093/mnras/stv2304}.

\bibitem[{Hakim}(2008)]{Hakim:2008}
A.~{Hakim}.
\newblock Extended {MHD} modelling with the ten-moment equations.
\newblock \emph{J. Fusion Energy}, 27\penalty0 (1-2):\penalty0 36--43, 2008.
\newblock ISSN 0164-0313.
\newblock \doi{10.1007/s10894-007-9116-z}.
\newblock URL \url{http://dx.doi.org/10.1007/s10894-007-9116-z}.

\bibitem[Hakim and Juno(2020)]{HakimJuno:2020}
A.~Hakim and J.~Juno.
\newblock {Alias-free, matrix-free, and quadrature-free discontinuous Galerkin
  algorithms for (plasma) kinetic equations}.
\newblock \emph{Supercomputing (submitted)}, 2020.
\newblock URL \url{https://arxiv.org/abs/2004.09019}.

\bibitem[{Hakim} et~al.(2006){Hakim}, {Loverich}, and {Shumlak}]{Hakim:2006}
A.~{Hakim}, J.~{Loverich}, and U.~{Shumlak}.
\newblock {A high resolution wave propagation scheme for ideal Two-Fluid plasma
  equations}.
\newblock \emph{J.~Comp.~Phys.}, 219:\penalty0 418--442, Nov. 2006.
\newblock \doi{10.1016/j.jcp.2006.03.036}.

\bibitem[Hakim et~al.(2019)Hakim, Francisquez, Juno, and Hammett]{Hakim:2019}
A.~Hakim, M.~Francisquez, J.~Juno, and G.~W. Hammett.
\newblock Conservative discontinuous galerkin schemes for nonlinear
  fokker-planck collision operators.
\newblock \emph{J.~Plasma Phys. (submitted)}, 2019.
\newblock URL \url{https://arxiv.org/abs/1903.08062}.

\bibitem[Hakim et~al.(2014)Hakim, Hammett, and Shi]{Hakim:2014}
A.~H. Hakim, G.~W. Hammett, and E.~L. Shi.
\newblock {On discontinuous Galerkin discretizations of second-order
  derivatives}.
\newblock 2014.
\newblock URL \url{http://arxiv.org/abs/1405.5907}.

\bibitem[{Hammett} et~al.(1993){Hammett}, {Beer}, {Dorland}, {Cowley}, and
  {Smith}]{Hammett:1993}
G.~W. {Hammett}, M.~A. {Beer}, W.~{Dorland}, S.~C. {Cowley}, and S.~A. {Smith}.
\newblock {Developments in the gyrofluid approach to Tokamak turbulence
  simulations}.
\newblock \emph{Plasma Phys.~Con.~Fus.}, 35:\penalty0 973--985, Aug. 1993.
\newblock \doi{10.1088/0741-3335/35/8/006}.
\newblock URL \url{https://doi.org/10.1088%2F0741-3335%2F35%2F8%2F006}.

\bibitem[Harris(2004)]{Harris:2004}
S.~Harris.
\newblock \emph{{An Introduction to the Theory of the Boltzmann Equation}}.
\newblock Dover Publications, 2004.

\bibitem[{Helander} and {Sigmar}(2005)]{Helander:2005}
P.~{Helander} and D.~J. {Sigmar}.
\newblock \emph{{Collisional Transport in Magnetized Plasmas}}.
\newblock Cambridge University Press, Oct. 2005.

\bibitem[Hesthaven and Warburton(2004)]{Hesthaven:2004}
J.~Hesthaven and T.~Warburton.
\newblock High-order nodal discontinuous {G}alerkin methods for the {M}axwell
  eigenvalue problem.
\newblock \emph{Philosophical Transactions of the Royal Society of London A:
  Mathematical, Physical and Engineering Sciences}, 362\penalty0
  (1816):\penalty0 493--524, 2004.

\bibitem[Hesthaven and Warburton(2007)]{Hesthaven:2007}
J.~Hesthaven and T.~Warburton.
\newblock \emph{Nodal discontinuous {G}alerkin methods: algorithms, analysis,
  and applications}.
\newblock Springer Science \& Business Media, 2007.

\bibitem[Hindenlang et~al.(2012)Hindenlang, Gassner, Altmann, Beck,
  Staudenmaier, and Munz]{Hindenlang:2012}
F.~Hindenlang, G.~J. Gassner, C.~Altmann, A.~Beck, M.~Staudenmaier, and C.-D.
  Munz.
\newblock {Explicit discontinuous Galerkin methods for unsteady problems}.
\newblock \emph{Computers \& Fluids}, 61:\penalty0 86--93, 2012.
\newblock ISSN 0045-7930.
\newblock \doi{https://doi.org/10.1016/j.compfluid.2012.03.006}.
\newblock URL
  \url{http://www.sciencedirect.com/science/article/pii/S004579301200093X}.

\bibitem[Hirose and Alexeff(1972)]{Hirose:1972}
A.~Hirose and I.~Alexeff.
\newblock Electrostatic instabilities driven by currents perpendicular to an
  external magnetic field.
\newblock \emph{Nuc.~Fus.}, 12\penalty0 (3):\penalty0 315--323, may 1972.
\newblock \doi{10.1088/0029-5515/12/3/005}.
\newblock URL \url{https://doi.org/10.1088%2F0029-5515%2F12%2F3%2F005}.

\bibitem[Hirvijoki and Adams(2017)]{Hirvijoki:2017}
E.~Hirvijoki and M.~F. Adams.
\newblock {Conservative discretization of the Landau collision integral}.
\newblock \emph{Phys.~Plasmas}, 24\penalty0 (3):\penalty0 032121--8, Mar. 2017.
\newblock URL \url{https://doi.org/10.1063/1.4979122}.

\bibitem[Hirvijoki et~al.(2018)Hirvijoki, Burby, and Kraus]{Hirvijoki:2018}
E.~Hirvijoki, J.~W. Burby, and M.~Kraus.
\newblock {Energy--, momentum--, density--, and positivity--preserving
  spatio--temporal discretizations for the nonlinear Landau collision operator
  with exact H--theorems}.
\newblock \emph{arXiv.org}, Apr. 2018.
\newblock URL \url{https://arxiv.org/abs/1804.08546}.

\bibitem[Hockney(1971)]{Hockney:1971}
R.~Hockney.
\newblock Measurements of collision and heating times in a two-dimensional
  thermal computer plasma.
\newblock \emph{J.~Comp.~Phys.}, 8\penalty0 (1):\penalty0 19--44, 1971.
\newblock \doi{https://doi.org/10.1016/0021-9991(71)90032-5}.
\newblock URL
  \url{http://www.sciencedirect.com/science/article/pii/0021999171900325}.

\bibitem[Hockney(1968)]{Hockney:1968}
R.~W. Hockney.
\newblock Characteristics of noise in a two-dimensional computer plasma.
\newblock \emph{Phys.~Fluids}, 11\penalty0 (6):\penalty0 1381--1383, 1968.
\newblock \doi{10.1063/1.1692116}.
\newblock URL \url{https://aip.scitation.org/doi/abs/10.1063/1.1692116}.

\bibitem[Howes et~al.(2017)Howes, Klein, and Li]{Howes:2017}
G.~G. Howes, K.~G. Klein, and T.~C. Li.
\newblock Diagnosing collisionless energy transfer using field–particle
  correlations: Vlasov–poisson plasmas.
\newblock \emph{J.~Plasma Phys.}, 83\penalty0 (1):\penalty0 705830102, 2017.
\newblock \doi{10.1017/S0022377816001197}.
\newblock URL
  \url{https://www.cambridge.org/core/article/diagnosing-collisionless-energy-transfer-using-fieldparticle-correlations-vlasovpoisson-plasmas/12921EF6C25C03AC133C892D5A2E4B89}.

\bibitem[{Howes} et~al.(2018){Howes}, {McCubbin}, and {Klein}]{Howes:2018}
G.~G. {Howes}, A.~J. {McCubbin}, and K.~G. {Klein}.
\newblock {Spatially localized particle energization by Landau damping in
  current sheets produced by strong Alfv{\'e}n wave collisions}.
\newblock \emph{J.~Plasma Phys.}, 84\penalty0 (1):\penalty0 905840105, feb
  2018.
\newblock \doi{10.1017/S0022377818000053}.

\bibitem[Hu and Krommes(1994)]{Hu:1994}
G.~Hu and J.~A. Krommes.
\newblock Generalized weighting scheme for $\delta f$ particle simulation
  method.
\newblock \emph{Phys.~Plasmas}, 1\penalty0 (4):\penalty0 863--874, 1994.
\newblock \doi{10.1063/1.870745}.
\newblock URL \url{https://doi.org/10.1063/1.870745}.

\bibitem[{Juno} et~al.(2018){Juno}, {Hakim}, {TenBarge}, {Shi}, and
  {Dorland}]{Juno:2018}
J.~{Juno}, A.~{Hakim}, J.~{TenBarge}, E.~{Shi}, and W.~{Dorland}.
\newblock {Discontinuous Galerkin algorithms for fully kinetic plasmas}.
\newblock \emph{J.~Comp.~Phys.}, 353:\penalty0 110--147, Jan. 2018.
\newblock \doi{10.1016/j.jcp.2017.10.009}.
\newblock URL
  \url{http://www.sciencedirect.com/science/article/pii/S0021999117307477}.

\bibitem[Juno et~al.(2020)Juno, Swisdak, TenBarge, Skoutnev, and
  Hakim]{Juno:2020}
J.~Juno, M.~Swisdak, J.~M. TenBarge, V.~Skoutnev, and A.~Hakim.
\newblock Noise-induced magnetic field saturation in kinetic simulations.
\newblock 2020.
\newblock URL \url{https://arxiv.org/abs/2004.07255}.

\bibitem[Kapidani and Sch{\"o}berl(2020)]{Kapidani:2020}
B.~Kapidani and J.~Sch{\"o}berl.
\newblock {A matrix-free Discontinuous Galerkin method for the time dependent
  Maxwell equations in unbounded domains}.
\newblock 2020.
\newblock URL \url{https://arxiv.org/abs/2002.08733}.

\bibitem[K\"aser and Dumbser(2006)]{Kaser:2006}
M.~K\"aser and M.~Dumbser.
\newblock {An arbitrary high-order discontinuous Galerkin method for elastic
  waves on unstructured meshes — I. The two-dimensional isotropic case with
  external source terms}.
\newblock \emph{Geophys.~J.~Int.}, 166\penalty0 (2):\penalty0 855--877, 08
  2006.
\newblock ISSN 0956-540X.
\newblock \doi{10.1111/j.1365-246X.2006.03051.x}.

\bibitem[Kates-Harbeck et~al.(2016)Kates-Harbeck, Totorica, Zrake, and
  Abel]{KatesHarbeck:2016}
J.~Kates-Harbeck, S.~Totorica, J.~Zrake, and T.~Abel.
\newblock Simplex-in-cell technique for collisionless plasma simulations.
\newblock \emph{J.~Comp.~Phys.}, 304:\penalty0 231--251, 2016.
\newblock ISSN 0021-9991.
\newblock \doi{https://doi.org/10.1016/j.jcp.2015.10.017}.
\newblock URL
  \url{http://www.sciencedirect.com/science/article/pii/S0021999115006816}.

\bibitem[Kato and Takabe(2008)]{Kato:2008}
T.~N. Kato and H.~Takabe.
\newblock Nonrelativistic collisionless shocks in unmagnetized electron-ion
  plasmas.
\newblock \emph{Astrophys.~J.~Lett.}, 681\penalty0 (2):\penalty0 L93--L96, jun
  2008.
\newblock \doi{10.1086/590387}.
\newblock URL \url{https://doi.org/10.1086%2F590387}.

\bibitem[Kazimura et~al.(1998)Kazimura, Sakai, Neubert, and
  Bulanov]{Kazimura:1998}
Y.~Kazimura, J.~I. Sakai, T.~Neubert, and S.~V. Bulanov.
\newblock Generation of a small-scale quasi-static magnetic field and fast
  particles during the collision of electron-positron plasma clouds.
\newblock \emph{Astrophys.~J.~Lett.}, 498\penalty0 (2):\penalty0 L183--L186,
  may 1998.
\newblock \doi{10.1086/311316}.
\newblock URL \url{https://doi.org/10.1086%2F311316}.

\bibitem[{Kempf}(2012)]{Kempf:2012}
Y.~{Kempf}.
\newblock \emph{{Numerical and physical validation of Vlasiator -- A new
  hybrid-Vlasov space plasma simulation code}}.
\newblock PhD thesis, University of Helsinki, 2012.

\bibitem[Kempf et~al.(2013)Kempf, Pokhotelov, {Von Alfthan}, Vaivads, Palmroth,
  and Koskinen]{Kempf:2013}
Y.~Kempf, D.~Pokhotelov, S.~{Von Alfthan}, A.~Vaivads, M.~Palmroth, and
  H.~E.~J. Koskinen.
\newblock {Wave dispersion in the hybrid-Vlasov model: Verification of
  Vlasiator}.
\newblock \emph{Phys.~Plasmas}, 20\penalty0 (11):\penalty0 1--9, 2013.
\newblock ISSN 1070664X.
\newblock \doi{10.1063/1.4835315}.
\newblock URL \url{https://doi.org/10.1063/1.4835315}.

\bibitem[Kirby and Karniadakis(2003)]{Kirby:2003}
R.~M. Kirby and G.~E. Karniadakis.
\newblock {De-aliasing on non-uniform grids: algorithms and applications}.
\newblock \emph{J.~Comp.~Phys.}, 191\penalty0 (1):\penalty0 249--264, 2003.
\newblock ISSN 0021-9991.
\newblock \doi{https://doi.org/10.1016/S0021-9991(03)00314-0}.
\newblock URL
  \url{http://www.sciencedirect.com/science/article/pii/S0021999103003140}.

\bibitem[Klein(2017)]{Klein:2017a}
K.~G. Klein.
\newblock Characterizing fluid and kinetic instabilities using field-particle
  correlations on single-point time series.
\newblock \emph{Phys.~Plasmas}, 24\penalty0 (5):\penalty0 055901, 2017.
\newblock \doi{10.1063/1.4977465}.
\newblock URL \url{https://doi.org/10.1063/1.4977465}.

\bibitem[Klein and Howes(2016)]{Klein:2016}
K.~G. Klein and G.~G. Howes.
\newblock Measuring collisionless damping in heliospheric plasmas using
  field--particle correlations.
\newblock \emph{Astrophys.~J.~Lett.}, 826\penalty0 (2):\penalty0 L30, jul 2016.
\newblock \doi{10.3847/2041-8205/826/2/l30}.
\newblock URL \url{https://doi.org/10.3847%2F2041-8205%2F826%2F2%2Fl30}.

\bibitem[Klein et~al.(2017)Klein, Howes, and TenBarge]{Klein:2017b}
K.~G. Klein, G.~G. Howes, and J.~M. TenBarge.
\newblock Diagnosing collisionless energy transfer using field–particle
  correlations: gyrokinetic turbulence.
\newblock \emph{J.~Plasma Phys.}, 83\penalty0 (4):\penalty0 535830401, 2017.
\newblock \doi{10.1017/S0022377817000563}.
\newblock URL
  \url{https://www.cambridge.org/core/article/diagnosing-collisionless-energy-transfer-using-fieldparticle-correlations-gyrokinetic-turbulence/DDDF5DBB5E4A5CF140F8204A69D9A0A7}.

\bibitem[{Klein} et~al.(2020){Klein}, {Howes}, {TenBarge}, and
  {Valentini}]{Klein:2020}
K.~G. {Klein}, G.~G. {Howes}, J.~M. {TenBarge}, and F.~{Valentini}.
\newblock Diagnosing collisionless energy transfer using field-particle
  correlations: Alfv\'en-ion cyclotron turbulence.
\newblock \emph{J.~Plasma Phys.}, 2020.
\newblock submitted.

\bibitem[Klimontovich(1967)]{Klimontovich:1967}
Y.~L. Klimontovich.
\newblock \emph{{The Statistical Theory of Non-Equilibrium Processes in a
  Plasma}}.
\newblock MIT Press, 1967.

\bibitem[Kopriva(2018)]{Kopriva:2018}
D.~A. Kopriva.
\newblock Stability of overintegration methods for nodal discontinuous galerkin
  spectral element methods.
\newblock \emph{J.~Sci.~Comput.}, 76\penalty0 (1):\penalty0 426--442, 2018.
\newblock \doi{10.1007/s10915-017-0626-1}.
\newblock URL \url{https://doi.org/10.1007/s10915-017-0626-1}.

\bibitem[Koutschan et~al.(2012)Koutschan, Lehrenfeld, and
  Sch{\"o}berl]{Koutschan:2012}
C.~Koutschan, C.~Lehrenfeld, and J.~Sch{\"o}berl.
\newblock \emph{{Computer Algebra Meets Finite Elements: An Efficient
  Implementation for Maxwell's Equations}}, pages 105--121.
\newblock Springer Vienna, 2012.
\newblock ISBN 978-3-7091-0794-2.
\newblock \doi{10.1007/978-3-7091-0794-2_6}.
\newblock URL \url{https://doi.org/10.1007/978-3-7091-0794-2_6}.

\bibitem[Krommes(2007)]{Krommes:2007}
J.~A. Krommes.
\newblock Nonequilibrium gyrokinetic fluctuation theory and sampling noise in
  gyrokinetic particle-in-cell simulations.
\newblock \emph{Phys.~Plasmas}, 14\penalty0 (9):\penalty0 090501, 2007.
\newblock \doi{10.1063/1.2759879}.
\newblock URL \url{https://doi.org/10.1063/1.2759879}.

\bibitem[{Kulsrud}(2005)]{Kulsrud:2005}
R.~M. {Kulsrud}.
\newblock \emph{{Plasma physics for astrophysics}}.
\newblock Princeton University Press, 2005.

\bibitem[Kumar et~al.(2015)Kumar, Eichler, and Gedalin]{Kumar:2015}
R.~Kumar, D.~Eichler, and M.~Gedalin.
\newblock Electron heating in a relativistic, weibel-unstable plasma.
\newblock \emph{Astrophys.~J.}, 806\penalty0 (2):\penalty0 165, jun 2015.
\newblock \doi{10.1088/0004-637x/806/2/165}.
\newblock URL \url{https://doi.org/10.1088%2F0004-637x%2F806%2F2%2F165}.

\bibitem[Kunz et~al.(2014)Kunz, Stone, and Bai]{Kunz:2014b}
M.~W. Kunz, J.~M. Stone, and X.-N. Bai.
\newblock {Pegasus: A new hybrid-kinetic particle-in-cell code for
  astrophysical plasma dynamics}.
\newblock \emph{J.~Comp.~Phys.}, 259:\penalty0 154--174, 2014.
\newblock ISSN 0021-9991.
\newblock \doi{https://doi.org/10.1016/j.jcp.2013.11.035}.
\newblock URL
  \url{http://www.sciencedirect.com/science/article/pii/S0021999113007973}.

\bibitem[Landau(1936)]{Landau:1936}
L.~Landau.
\newblock {Kinetic equation for the Coulomb effect}.
\newblock \emph{Phys. Z. Sowjetunion}, 10:\penalty0 154, 1936.

\bibitem[Langdon(1979)]{Langdon:1979}
A.~B. Langdon.
\newblock Kinetic theory for fluctuations and noise in computer simulation of
  plasma.
\newblock \emph{Phys.~Fluids}, 22\penalty0 (1):\penalty0 163--171, 1979.
\newblock \doi{10.1063/1.862452}.
\newblock URL \url{https://aip.scitation.org/doi/abs/10.1063/1.862452}.

\bibitem[Langdon and Birdsall(1970)]{Langdon:1970}
A.~B. Langdon and C.~K. Birdsall.
\newblock Theory of plasma simulation using finite-size particles.
\newblock \emph{Phys.~Fluids}, 13\penalty0 (8):\penalty0 2115--2122, 1970.
\newblock \doi{10.1063/1.1693209}.
\newblock URL \url{https://aip.scitation.org/doi/abs/10.1063/1.1693209}.

\bibitem[Lapenta(2012)]{Lapenta:2012}
G.~Lapenta.
\newblock {Particle simulations of space weather}.
\newblock \emph{J.~Comp.~Phys.}, 231\penalty0 (3):\penalty0 795--821, 2012.
\newblock ISSN 0021--9991.
\newblock \doi{https://doi.org/10.1016/j.jcp.2011.03.035}.
\newblock URL
  \url{http://www.sciencedirect.com/science/article/pii/S0021999111001860}.

\bibitem[Lapenta and Brackbill(2002)]{Lapenta:2002}
G.~Lapenta and J.~U. Brackbill.
\newblock Nonlinear evolution of the lower hybrid drift instability: Current
  sheet thinning and kinking.
\newblock \emph{Phys.~Plasmas}, 9\penalty0 (5):\penalty0 1544--1554, 2002.
\newblock \doi{10.1063/1.1464149}.
\newblock URL \url{https://doi.org/10.1063/1.1464149}.

\bibitem[Lapenta et~al.(2003)Lapenta, Brackbill, and Daughton]{Lapenta:2003}
G.~Lapenta, J.~U. Brackbill, and W.~S. Daughton.
\newblock The unexpected role of the lower hybrid drift instability in magnetic
  reconnection in three dimensions.
\newblock \emph{Phys.~Plasmas}, 10\penalty0 (5):\penalty0 1577--1587, 2003.
\newblock \doi{10.1063/1.1560615}.
\newblock URL \url{https://doi.org/10.1063/1.1560615}.

\bibitem[Laroussi and Roth(1989)]{Laroussi:1989}
M.~Laroussi and J.~R. Roth.
\newblock {Theory of first-order plasma heating by collisional magnetic
  pumping}.
\newblock \emph{Phys.~Fluids B}, 1\penalty0 (5):\penalty0 1034--1041, May 1989.

\bibitem[Lax(1954)]{Lax:1954}
P.~D. Lax.
\newblock Weak solutions of nonlinear hyperbolic equations and their numerical
  computation.
\newblock \emph{Communications on Pure and Applied Mathematics}, 7\penalty0
  (1):\penalty0 159--193, 1954.
\newblock \doi{10.1002/cpa.3160070112}.

\bibitem[Lazar et~al.(2009)Lazar, Schlickeiser, Wielebinski, and
  Poedts]{Lazar:2009}
M.~Lazar, R.~Schlickeiser, R.~Wielebinski, and S.~Poedts.
\newblock Cosmological effects of weibel-type instabilities.
\newblock \emph{Astrophys.~J.}, 693\penalty0 (2):\penalty0 1133--1141, mar
  2009.
\newblock \doi{10.1088/0004-637x/693/2/1133}.
\newblock URL \url{https://doi.org/10.1088%2F0004-637x%2F693%2F2%2F1133}.

\bibitem[Lemons et~al.(2009)Lemons, Winske, Daughton, and
  Albright]{Lemons:2009}
D.~S. Lemons, D.~Winske, W.~Daughton, and B.~Albright.
\newblock {Small-angle Coulomb collision model for particle-in-cell
  simulations}.
\newblock \emph{J.~Comp.~Phys.}, 228\penalty0 (5):\penalty0 1391--1403, 2009.
\newblock \doi{https://doi.org/10.1016/j.jcp.2008.10.025}.
\newblock URL
  \url{http://www.sciencedirect.com/science/article/pii/S0021999108005615}.

\bibitem[Lenard and Bernstein(1958)]{Lenard:1958}
A.~Lenard and I.~B. Bernstein.
\newblock {Plasma oscillations with diffusion in velocity space}.
\newblock \emph{Phys.~Rev.}, 112\penalty0 (5):\penalty0 1456--1459, 1958.
\newblock \doi{10.1103/PhysRev.112.1456}.
\newblock URL \url{https://link.aps.org/doi/10.1103/PhysRev.112.1456}.

\bibitem[Li et~al.(2019)Li, Howes, Klein, Liu, and TenBarge]{Li:2019}
T.~C. Li, G.~G. Howes, K.~G. Klein, Y.-H. Liu, and J.~M. TenBarge.
\newblock Collisionless energy transfer in kinetic turbulence: field--particle
  correlations in fourier space.
\newblock \emph{J.~Plasma Phys.}, 85\penalty0 (4):\penalty0 905850406, 2019.
\newblock \doi{10.1017/S0022377819000515}.
\newblock URL
  \url{https://www.cambridge.org/core/article/collisionless-energy-transfer-in-kinetic-turbulence-fieldparticle-correlations-in-fourier-space/34EE643879FF9D6797092AD2BE62CD26}.

\bibitem[Lichko et~al.(2017)Lichko, Egedal, Daughton, and Kasper]{Lichko:2017}
E.~Lichko, J.~Egedal, W.~Daughton, and J.~Kasper.
\newblock Magnetic pumping as a source of particle heating and power-law
  distributions in the solar wind.
\newblock \emph{Astrophys.~J.}, 850\penalty0 (2):\penalty0 L28, nov 2017.
\newblock \doi{10.3847/2041-8213/aa9a33}.
\newblock URL \url{https://doi.org/10.3847%2F2041-8213%2Faa9a33}.

\bibitem[Lockard and Atkins(1999)]{Lockard:1999}
D.~Lockard and H.~Atkins.
\newblock \emph{Efficient implementations of the quadrature-free discontinuous
  Galerkin method}.
\newblock 1999.
\newblock \doi{10.2514/6.1999-3309}.
\newblock URL \url{https://arc.aiaa.org/doi/abs/10.2514/6.1999-3309}.

\bibitem[Mandell et~al.(2020)Mandell, Hakim, Hammett, and
  Francisquez]{Mandell:2020}
N.~R. Mandell, A.~Hakim, G.~W. Hammett, and M.~Francisquez.
\newblock Electromagnetic full- $f$ gyrokinetics in the tokamak edge with
  discontinuous galerkin methods.
\newblock \emph{J.~Plasma Phys.}, 86\penalty0 (1):\penalty0 905860109, 2020.
\newblock \doi{10.1017/S0022377820000070}.
\newblock URL \url{https://doi.org/10.1017/S0022377820000070}.

\bibitem[Marchandise et~al.(2006)Marchandise, Remacle, and
  Chevaugeon]{Marchandise:2006}
E.~Marchandise, J.-F. Remacle, and N.~Chevaugeon.
\newblock {A quadrature-free discontinuous Galerkin method for the level set
  equation}.
\newblock \emph{J.~Comp.~Phys.}, 212\penalty0 (1):\penalty0 338--357, 2006.
\newblock ISSN 0021-9991.
\newblock \doi{https://doi.org/10.1016/j.jcp.2005.07.006}.

\bibitem[Maxima(2019)]{maxima}
Maxima.
\newblock Maxima, a computer algebra system. version 5.43.0, 2019.
\newblock URL \url{http://maxima.sourceforge.net/}.

\bibitem[Medvedev and Loeb(1999)]{Medvedev:1999}
M.~V. Medvedev and A.~Loeb.
\newblock Generation of magnetic fields in the relativistic shock of gamma-ray
  burst sources.
\newblock \emph{Astrophys.~J.}, 526\penalty0 (2):\penalty0 697--706, dec 1999.
\newblock \doi{10.1086/308038}.
\newblock URL \url{https://doi.org/10.1086%2F308038}.

\bibitem[Mengaldo et~al.(2015)Mengaldo, Grazia, Moxey, Vincent, and
  Sherwin]{Mengaldo:2015}
G.~Mengaldo, D.~D. Grazia, D.~Moxey, P.~Vincent, and S.~Sherwin.
\newblock {Dealiasing techniques for high-order spectral element methods on
  regular and irregular grids}.
\newblock \emph{J.~Comp.~Phys.}, 299:\penalty0 56--81, 2015.
\newblock ISSN 0021-9991.
\newblock \doi{https://doi.org/10.1016/j.jcp.2015.06.032}.
\newblock URL
  \url{http://www.sciencedirect.com/science/article/pii/S0021999115004301}.

\bibitem[Meyer et~al.(2014)Meyer, Balsara, and Aslam]{Meyer:2014}
C.~D. Meyer, D.~S. Balsara, and T.~D. Aslam.
\newblock {A stabilized Runge--Kutta--Legendre method for explicit
  super-time-stepping of parabolic and mixed equations}.
\newblock \emph{J.~Comp.~Phys.}, 257:\penalty0 594--626, 2014.
\newblock ISSN 0021-9991.
\newblock \doi{https://doi.org/10.1016/j.jcp.2013.08.021}.
\newblock URL
  \url{http://www.sciencedirect.com/science/article/pii/S0021999113005597}.

\bibitem[Moura et~al.(2017)Moura, Mengaldo, Peir\'o, and Sherwin]{Moura:2017}
R.~Moura, G.~Mengaldo, J.~Peir\'o, and S.~Sherwin.
\newblock {On the eddy-resolving capability of high-order discontinuous
  Galerkin approaches to implicit LES / under-resolved DNS of Euler
  turbulence}.
\newblock \emph{J.~Comp.~Phys.}, 330:\penalty0 615--623, 2017.
\newblock ISSN 0021-9991.
\newblock \doi{https://doi.org/10.1016/j.jcp.2016.10.056}.
\newblock URL
  \url{http://www.sciencedirect.com/science/article/pii/S0021999116305642}.

\bibitem[{Ng}(2019)]{NgThesis:2019}
J.~{Ng}.
\newblock \emph{{Fluid closures for the modelling of reconnection and
  instabilities in magnetotail current sheets}}.
\newblock PhD thesis, Princeton University, 2019.

\bibitem[{Ng} et~al.(2015){Ng}, {Huang}, {Hakim}, {Bhattacharjee}, {Stanier},
  {Daughton}, {Wang}, and {Germaschewski}]{Ng:2015}
J.~{Ng}, Y.-M. {Huang}, A.~{Hakim}, A.~{Bhattacharjee}, A.~{Stanier},
  W.~{Daughton}, L.~{Wang}, and K.~{Germaschewski}.
\newblock {The island coalescence problem: scaling of reconnection in extended
  fluid models including higher-order moments}.
\newblock \emph{Phys.~Plasmas}, 22:\penalty0 112104, Nov. 2015.

\bibitem[Ng et~al.(2019)Ng, Hakim, Juno, and Bhattacharjee]{Ng:2019}
J.~Ng, A.~Hakim, J.~Juno, and A.~Bhattacharjee.
\newblock Drift instabilities in thin current sheets using a two-fluid model
  with pressure tensor effects.
\newblock \emph{J.~Geophys.~Res.}, 124\penalty0 (5):\penalty0 3331--3346, 2019.
\newblock \doi{10.1029/2018JA026313}.

\bibitem[Nguyen~van yen et~al.(2010)Nguyen~van yen, del Castillo-Negrete,
  Schneider, Farge, and Chen]{vanyenNguyen:2010}
R.~Nguyen~van yen, D.~del Castillo-Negrete, K.~Schneider, M.~Farge, and
  G.~Chen.
\newblock Wavelet-based density estimation for noise reduction in plasma
  simulations using particles.
\newblock \emph{J.~Comp.~Phys.}, 229\penalty0 (8):\penalty0 2821--2839, 2010.
\newblock ISSN 0021-9991.
\newblock \doi{https://doi.org/10.1016/j.jcp.2009.12.010}.
\newblock URL
  \url{http://www.sciencedirect.com/science/article/pii/S0021999109006810}.

\bibitem[Nguyen~van yen et~al.(2011)Nguyen~van yen, Sonnendr\"ucker, Schneider,
  and Farge]{vanyenNguyen:2011}
R.~Nguyen~van yen, E.~Sonnendr\"ucker, K.~Schneider, and M.~Farge.
\newblock Particle-in-wavelets scheme for the 1d vlasov-poisson equations.
\newblock \emph{ESAIM: Proc.}, 32:\penalty0 134--148, 2011.
\newblock \doi{10.1051/proc/2011017}.
\newblock URL \url{https://doi.org/10.1051/proc/2011017}.

\bibitem[Nicholson(1983)]{Nicholson:1983}
D.~Nicholson.
\newblock \emph{{Introduction to Plasma Theory}}.
\newblock John Wiler \& Sons, 1983.

\bibitem[Nishikawa et~al.(2003)Nishikawa, Hardee, Richardson, Preece, Sol, and
  Fishman]{Nishikawa:2003}
K.-I. Nishikawa, P.~Hardee, G.~Richardson, R.~Preece, H.~Sol, and G.~J.
  Fishman.
\newblock Particle acceleration in relativistic jets due to weibel instability.
\newblock \emph{Astrophys.~J.}, 595\penalty0 (1):\penalty0 555--563, sep 2003.
\newblock \doi{10.1086/377260}.

\bibitem[Nishikawa et~al.(2005)Nishikawa, Hardee, Richardson, Preece, Sol, and
  Fishman]{Nishikawa:2005}
K.-I. Nishikawa, P.~Hardee, G.~Richardson, R.~Preece, H.~Sol, and G.~J.
  Fishman.
\newblock Particle acceleration and magnetic field generation in
  electron-positron relativistic shocks.
\newblock \emph{Astrophys.~J.}, 622\penalty0 (2):\penalty0 927--937, apr 2005.
\newblock \doi{10.1086/428394}.

\bibitem[Okuda(1972)]{Okuda:1972}
H.~Okuda.
\newblock Verification of theory for plasma of finite-size particles.
\newblock \emph{Phys.~Fluids}, 15\penalty0 (7):\penalty0 1268--1274, 1972.
\newblock \doi{10.1063/1.1694076}.
\newblock URL \url{https://aip.scitation.org/doi/abs/10.1063/1.1694076}.

\bibitem[Okuda and Birdsall(1970)]{Okuda:1970}
H.~Okuda and C.~K. Birdsall.
\newblock Collisions in a plasma of finite--size particles.
\newblock \emph{Phys.~Fluids}, 13\penalty0 (8):\penalty0 2123--2134, 1970.
\newblock \doi{10.1063/1.1693210}.
\newblock URL \url{https://aip.scitation.org/doi/abs/10.1063/1.1693210}.

\bibitem[Palmroth et~al.(2018)Palmroth, Ganse, Pfau-Kempf, Battarbee, Turc,
  Brito, Grandin, Hoilijoki, Sandroos, and von Alfthan]{Palmroth:2018}
M.~Palmroth, U.~Ganse, Y.~Pfau-Kempf, M.~Battarbee, L.~Turc, T.~Brito,
  M.~Grandin, S.~Hoilijoki, A.~Sandroos, and S.~von Alfthan.
\newblock Vlasov methods in space physics and astrophysics.
\newblock \emph{Living Reviews in Computational Astrophysics}, 4\penalty0
  (1):\penalty0 1, 2018.
\newblock \doi{10.1007/s41115-018-0003-2}.
\newblock URL \url{https://doi.org/10.1007/s41115-018-0003-2}.

\bibitem[Parker and Lee(1993)]{Parker:1993}
S.~E. Parker and W.~W. Lee.
\newblock A fully nonlinear characteristic method for gyrokinetic simulation.
\newblock \emph{Phys.~Fluids B}, 5\penalty0 (1):\penalty0 77--86, 1993.
\newblock \doi{10.1063/1.860870}.
\newblock URL \url{https://doi.org/10.1063/1.860870}.

\bibitem[{Paschmann} et~al.(1982){Paschmann}, {Sckopke}, {Bame}, and
  {Gosling}]{Paschmann:1982}
G.~{Paschmann}, N.~{Sckopke}, S.~J. {Bame}, and J.~T. {Gosling}.
\newblock {Observations of gyrating ions in the foot of the nearly
  perpendicular bow shock}.
\newblock \emph{Geophys.~Res.~Lett.}, 9\penalty0 (8):\penalty0 881--884, Aug
  1982.
\newblock \doi{10.1029/GL009i008p00881}.

\bibitem[Patarroyo(2019)]{Patarroyo:2019}
K.~Y. Patarroyo.
\newblock {A digression on Hermite polynomials}.
\newblock \emph{arXiv.org}, Jan. 2019.
\newblock URL \url{https://arxiv.org/abs/1901.01648}.

\bibitem[Perrone et~al.(2012)Perrone, Valentini, Servidio, Dalena, and
  Veltri]{Perrone:2013}
D.~Perrone, F.~Valentini, S.~Servidio, S.~Dalena, and P.~Veltri.
\newblock {Vlasov Simulations of Multi-Ion Plasma Turbulence in the Solar
  Wind}.
\newblock \emph{Astrophys.~J.}, 762\penalty0 (2):\penalty0 99, dec 2012.
\newblock \doi{10.1088/0004-637x/762/2/99}.
\newblock URL \url{https://doi.org/10.1088%2F0004-637x%2F762%2F2%2F99}.

\bibitem[Pezzi et~al.(2015)Pezzi, Valentini, and Veltri]{Pezzi:2015}
O.~Pezzi, F.~Valentini, and P.~Veltri.
\newblock {Collisional relaxation : Landau versus Dougherty}.
\newblock \emph{J.~Plasma Phys.}, 81\penalty0 (1):\penalty0 305810107, 2015.
\newblock \doi{10.1017/S0022377814000877}.
\newblock URL \url{https://doi.org/10.1017/S0022377814000877}.

\bibitem[Pezzi et~al.(2019)Pezzi, Perrone, Servidio, Valentini, Sorriso-Valvo,
  and Veltri]{Pezzi:2019}
O.~Pezzi, D.~Perrone, S.~Servidio, F.~Valentini, L.~Sorriso-Valvo, and
  P.~Veltri.
\newblock Proton---proton collisions in the turbulent solar wind: Hybrid
  boltzmann---maxwell simulations.
\newblock \emph{Astrophys.~J.}, 887\penalty0 (2):\penalty0 208, dec 2019.
\newblock \doi{10.3847/1538-4357/ab5285}.
\newblock URL \url{https://doi.org/10.3847%2F1538-4357%2Fab5285}.

\bibitem[Pokhotelov et~al.(2013)Pokhotelov, {Von Alfthan}, Kempf, Vainio,
  Koskinen, and Palmroth]{Pokhotelov:2013}
D.~Pokhotelov, S.~{Von Alfthan}, Y.~Kempf, R.~Vainio, H.~E.~J. Koskinen, and
  M.~Palmroth.
\newblock {Ion distributions upstream and downstream of the Earth's bow shock:
  First results from Vlasiator}.
\newblock \emph{Ann.~Geophys.}, 31\penalty0 (12):\penalty0 2207--2212, 2013.
\newblock ISSN 09927689.
\newblock \doi{10.5194/angeo-31-2207-2013}.
\newblock URL \url{https://www.ann-geophys.net/31/2207/2013/}.

\bibitem[Press et~al.(2007)Press, Teukolsky, Vetterling, and
  Flannery]{NumRecipes:2007}
W.~H. Press, S.~A. Teukolsky, W.~T. Vetterling, and B.~P. Flannery.
\newblock \emph{Numerical Recipes 3rd Edition: The Art of Scientific
  Computing}.
\newblock Cambridge University Press, USA, 3 edition, 2007.
\newblock ISBN 0521880688.

\bibitem[Pusztai et~al.(2018)Pusztai, TenBarge, Csap\'o, Juno, Hakim, Yi, and
  F\"ul\"op]{Pusztai:2018}
I.~Pusztai, J.~M. TenBarge, A.~N. Csap\'o, J.~Juno, A.~Hakim, L.~Yi, and
  T.~F\"ul\"op.
\newblock Low mach-number collisionless electrostatic shocks and associated ion
  acceleration.
\newblock \emph{Plasma Phys.~Con.~Fus.}, 60\penalty0 (3):\penalty0 035004,
  2018.

\bibitem[Pusztai et~al.(2020)Pusztai, Juno, Brandenburg, TenBarge, Hakim,
  Francisquez, and Sundstr\"om]{Pusztai:2020}
I.~Pusztai, J.~Juno, A.~Brandenburg, J.~M. TenBarge, A.~H. Hakim,
  M.~Francisquez, and A.~Sundstr\"om.
\newblock {Dynamo in weakly collisional non-magnetized plasmas impeded by
  Landau damping of magnetic fields}.
\newblock 2020.
\newblock URL \url{https://arxiv.org/abs/2001.11929}.

\bibitem[Reed and Hill(1973)]{ReedHill:1973}
W.~Reed and T.~Hill.
\newblock Triangular mesh methods for the neutron transport equation.
\newblock 10 1973.

\bibitem[Rosenbluth et~al.(1957)Rosenbluth, MacDonald, and
  Judd]{Rosenbluth:1957}
M.~N. Rosenbluth, W.~M. MacDonald, and D.~L. Judd.
\newblock {Fokker-planck equation for an inverse-square force}.
\newblock \emph{Phys.~Rev.}, 107\penalty0 (1):\penalty0 1--6, 1957.
\newblock \doi{10.1103/PhysRev.107.1}.
\newblock URL \url{https://link.aps.org/doi/10.1103/PhysRev.107.1}.

\bibitem[Roytershteyn and Delzanno(2018)]{Roytershteyn:2018}
V.~Roytershteyn and G.~L. Delzanno.
\newblock Spectral approach to plasma kinetic simulations based on hermite
  decomposition in the velocity space.
\newblock \emph{Frontiers in Astron.~Space~Sci.}, 5:\penalty0 27, 2018.
\newblock \doi{10.3389/fspas.2018.00027}.
\newblock URL
  \url{https://www.frontiersin.org/article/10.3389/fspas.2018.00027}.

\bibitem[Roytershteyn et~al.(2012)Roytershteyn, Daughton, Karimabadi, and
  Mozer]{Roytershteyn:2012}
V.~Roytershteyn, W.~Daughton, H.~Karimabadi, and F.~S. Mozer.
\newblock Influence of the lower-hybrid drift instability on magnetic
  reconnection in asymmetric configurations.
\newblock \emph{Phys.~Rev.~Lett.}, 108:\penalty0 185001, May 2012.
\newblock \doi{10.1103/PhysRevLett.108.185001}.
\newblock URL \url{https://link.aps.org/doi/10.1103/PhysRevLett.108.185001}.

\bibitem[Roytershteyn et~al.(2019)Roytershteyn, Boldyrev, Delzanno, Chen,
  Gro{\v{s}}elj, and Loureiro]{Roytershteyn:2019}
V.~Roytershteyn, S.~Boldyrev, G.~L. Delzanno, C.~H.~K. Chen, D.~Gro{\v{s}}elj,
  and N.~F. Loureiro.
\newblock Numerical study of inertial kinetic-alfv{\'{e}}n turbulence.
\newblock \emph{Astrophys.~J.}, 870\penalty0 (2):\penalty0 103, jan 2019.
\newblock \doi{10.3847/1538-4357/aaf288}.
\newblock URL \url{https://doi.org/10.3847%2F1538-4357%2Faaf288}.

\bibitem[{Schekochihin} et~al.(2005){Schekochihin}, {Cowley}, {Kulsrud},
  {Hammett}, and {Sharma}]{Schekochihin:2005}
A.~A. {Schekochihin}, S.~C. {Cowley}, R.~M. {Kulsrud}, G.~W. {Hammett}, and
  P.~{Sharma}.
\newblock {Plasma Instabilities and Magnetic Field Growth in Clusters of
  Galaxies}.
\newblock \emph{Astrophys.~J.}, 629:\penalty0 139--142, Aug 2005.
\newblock \doi{10.1086/431202}.

\bibitem[Schlickeiser and Shukla(2003)]{Schlickeiser:2003}
R.~Schlickeiser and P.~K. Shukla.
\newblock Cosmological magnetic field generation by the weibel instability.
\newblock \emph{Astrophys.~J.~Lett.}, 599\penalty0 (2):\penalty0 L57--L60, dec
  2003.
\newblock \doi{10.1086/381246}.
\newblock URL \url{https://doi.org/10.1086%2F381246}.

\bibitem[{Sckopke} et~al.(1983){Sckopke}, {Paschmann}, {Bame}, {Gosling}, and
  {Russell}]{Sckopke:1983}
N.~{Sckopke}, G.~{Paschmann}, S.~J. {Bame}, J.~T. {Gosling}, and C.~T.
  {Russell}.
\newblock {Evolution of ion distributions across the nearly perpendicular bow
  shock - Specularly and non-specularly reflected-gyrating ions}.
\newblock \emph{J.~Geophys.~Res.}, 88:\penalty0 6121--6136, aug 1983.
\newblock \doi{10.1029/JA088iA08p06121}.

\bibitem[Servidio et~al.(2014)Servidio, Osman, Valentini, Perrone, Califano,
  Chapman, Matthaeus, and Veltri]{Servidio:2014}
S.~Servidio, K.~T. Osman, F.~Valentini, D.~Perrone, F.~Califano, S.~Chapman,
  W.~H. Matthaeus, and P.~Veltri.
\newblock {Proton Kinetic Effects in Vlasov and Solar Wind Turbulence}.
\newblock \emph{Astrophys.~J.}, 781\penalty0 (2):\penalty0 L27, 2014.
\newblock ISSN 2041--8205.
\newblock \doi{10.1088/2041-8205/781/2/L27}.
\newblock URL \url{https://doi.org/10.1088%2F2041-8205%2F781%2F2%2Fl27}.

\bibitem[Shi(2017)]{Shi2017thesis}
E.~L. Shi.
\newblock \emph{{Gyrokinetic Continuum Simulation of Turbulence in
  Open-Field-Line Plasmas}}.
\newblock PhD thesis, Princeton University, 2017.

\bibitem[{Shi} et~al.(2015){Shi}, {Hakim}, and {Hammett}]{Shi:2015}
E.~L. {Shi}, A.~H. {Hakim}, and G.~W. {Hammett}.
\newblock {A gyrokinetic one-dimensional scrape-off layer model of an
  edge-localized mode heat pulse}.
\newblock \emph{Phys.~Plasmas}, 22\penalty0 (2):\penalty0 022504, Feb. 2015.
\newblock \doi{10.1063/1.4907160}.

\bibitem[Shi et~al.(2019)Shi, Hammett, Stoltzfus-Dueck, and Hakim]{Shi:2019}
E.~L. Shi, G.~W. Hammett, T.~Stoltzfus-Dueck, and A.~Hakim.
\newblock Full-f gyrokinetic simulation of turbulence in a helical
  open-field-line plasma.
\newblock \emph{Phys.~Plasmas}, 26\penalty0 (1):\penalty0 012307, 2019.
\newblock \doi{10.1063/1.5074179}.

\bibitem[Shu(2002)]{Shu:2002}
C.-W. Shu.
\newblock A survey of strong stability preserving high order time
  discretizations.
\newblock \emph{Collected lectures on the preservation of stability under
  discretization}, 109:\penalty0 51--65, 2002.

\bibitem[Shukla et~al.(2020)Shukla, Chan, de~Hoop, and Jaiswal]{Shukla:2020}
K.~Shukla, J.~Chan, M.~V. de~Hoop, and P.~Jaiswal.
\newblock {A weight-adjusted discontinuous Galerkin method for the poroelastic
  wave equation: Penalty fluxes and micro-heterogeneities}.
\newblock \emph{J.~Comp.~Phys.}, 403:\penalty0 109061, 2020.
\newblock ISSN 0021-9991.
\newblock \doi{https://doi.org/10.1016/j.jcp.2019.109061}.
\newblock URL
  \url{http://www.sciencedirect.com/science/article/pii/S0021999119307661}.

\bibitem[Silva et~al.(2003)Silva, Fonseca, Tonge, Dawson, Mori, and
  Medvedev]{Silva:2003}
L.~O. Silva, R.~A. Fonseca, J.~W. Tonge, J.~M. Dawson, W.~B. Mori, and M.~V.
  Medvedev.
\newblock Interpenetrating plasma shells: Near-equipartition magnetic field
  generation and nonthermal particle acceleration.
\newblock \emph{Astrophys.~J.~Lett.}, 596\penalty0 (1):\penalty0 L121--L124,
  sep 2003.
\newblock \doi{10.1086/379156}.

\bibitem[Skoutnev et~al.(2019)Skoutnev, Hakim, Juno, and
  TenBarge]{Skoutnev:2019}
V.~Skoutnev, A.~Hakim, J.~Juno, and J.~M. TenBarge.
\newblock Temperature-dependent saturation of weibel-type instabilities in
  counter-streaming plasmas.
\newblock \emph{Astrophys.~J.~Lett.}, 872\penalty0 (2):\penalty0 L28, feb 2019.
\newblock \doi{10.3847/2041-8213/ab0556}.

\bibitem[Sod(1978)]{Sod:1978}
G.~A. Sod.
\newblock {A survey of several finite difference methods for systems of
  nonlinear hyperbolic conservation laws}.
\newblock \emph{J.~Comp.~Phys.}, 27\penalty0 (1):\penalty0 1--31, 1978.
\newblock ISSN 0021-9991.
\newblock \doi{https://doi.org/10.1016/0021-9991(78)90023-2}.
\newblock URL
  \url{http://www.sciencedirect.com/science/article/pii/0021999178900232}.

\bibitem[{Stadel}(2001)]{Stadel:2001}
J.~G. {Stadel}.
\newblock \emph{{Cosmological N-body simulations and their analysis}}.
\newblock PhD thesis, University of Washington, Jan 2001.

\bibitem[Sundstr\"om et~al.(2019)Sundstr\"om, Juno, TenBarge, and
  Pusztai]{Sundstrom:2019}
A.~Sundstr\"om, J.~Juno, J.~M. TenBarge, and I.~Pusztai.
\newblock Effect of a weak ion collisionality on the dynamics of kinetic
  electrostatic shocks.
\newblock \emph{J.~Plasma Phys.}, 85\penalty0 (1):\penalty0 905850108, 2019.
\newblock \doi{10.1017/S0022377819000023}.

\bibitem[Taitano et~al.(2015)Taitano, Chac{\'o}n, Simakov, and
  Molvig]{Taitano:2015}
W.~T. Taitano, L.~Chac{\'o}n, A.~N. Simakov, and K.~Molvig.
\newblock {A mass, momentum, and energy conserving, fully implicit, scalable
  algorithm for the multi-dimensional, multi-species
  Rosenbluth---Fokker---Planck equation}.
\newblock \emph{J.~Comp.~Phys.}, 297\penalty0 (C):\penalty0 357--380, Sept.
  2015.

\bibitem[Takamoto et~al.(2018)Takamoto, Matsumoto, and Kato]{Takamoto:2018}
M.~Takamoto, Y.~Matsumoto, and T.~N. Kato.
\newblock Magnetic field saturation of the ion weibel instability in
  interpenetrating relativistic plasmas.
\newblock \emph{Astrophys.~J.~Lett.}, 860\penalty0 (1):\penalty0 L1, jun 2018.
\newblock \doi{10.3847/2041-8213/aac6d6}.
\newblock URL \url{https://doi.org/10.3847%2F2041-8213%2Faac6d6}.

\bibitem[Totorica et~al.(2018)Totorica, Fiuza, and Abel]{Totorica:2018}
S.~R. Totorica, F.~Fiuza, and T.~Abel.
\newblock A new method for analyzing and visualizing plasma simulations using a
  phase-space tessellation.
\newblock \emph{Phys.~Plasmas}, 25\penalty0 (7):\penalty0 072109, 2018.
\newblock \doi{10.1063/1.5037348}.
\newblock URL \url{https://doi.org/10.1063/1.5037348}.

\bibitem[{Valentini} et~al.(2007){Valentini}, {Tr{\'a}vn{\'{\i}}{\v c}ek},
  {Califano}, {Hellinger}, and {Mangeney}]{Valentini:2007}
F.~{Valentini}, P.~{Tr{\'a}vn{\'{\i}}{\v c}ek}, F.~{Califano}, P.~{Hellinger},
  and A.~{Mangeney}.
\newblock {A hybrid-Vlasov model based on the current advance method for the
  simulation of collisionless magnetized plasma}.
\newblock \emph{J.~Comp.~Phys.}, 225:\penalty0 753--770, July 2007.
\newblock \doi{10.1016/j.jcp.2007.01.001}.
\newblock URL
  \url{http://www.sciencedirect.com/science/article/pii/S0021999107000022}.

\bibitem[Valentini et~al.(2010)Valentini, Iazzolino, and
  Veltri]{Valentini:2010}
F.~Valentini, A.~Iazzolino, and P.~Veltri.
\newblock {Numerical study of ion-cyclotron resonant interaction via
  hybrid-Vlasov simulations}.
\newblock \emph{Phys.~Plasmas}, 17\penalty0 (5), 2010.
\newblock ISSN 1070664X.
\newblock \doi{10.1063/1.3420278}.
\newblock URL \url{https://doi.org/10.1063/1.3420278}.

\bibitem[Valentini et~al.(2016)Valentini, Perrone, Stabile, Pezzi, Servidio,
  {De Marco}, Marcucci, Bruno, Lavraud, {De Keyser}, Consolini, Brienza,
  Sorriso-Valvo, Retin{\`{o}}, Vaivads, Salatti, and Veltri]{Valentini:2016}
F.~Valentini, D.~Perrone, S.~Stabile, O.~Pezzi, S.~Servidio, R.~{De Marco},
  F.~Marcucci, R.~Bruno, B.~Lavraud, J.~{De Keyser}, G.~Consolini, D.~Brienza,
  L.~Sorriso-Valvo, A.~Retin{\`{o}}, A.~Vaivads, M.~Salatti, and P.~Veltri.
\newblock {Differential kinetic dynamics and heating of ions in the turbulent
  solar wind}.
\newblock \emph{New J.~Phys.}, 18\penalty0 (12):\penalty0 1--17, 2016.
\newblock ISSN 1367-2630.
\newblock \doi{10.1088/1367-2630/18/12/125001}.
\newblock URL \url{https://doi.org/10.1088%2F1367-2630%2F18%2F12%2F125001}.

\bibitem[{van Leer}(1979)]{VanLeer:1979}
B.~{van Leer}.
\newblock {Towards the Ultimate Conservative Difference Scheme. V. A
  Second-Order Sequel to Godunov's Method}.
\newblock \emph{J.~Comp.~Phys.}, 32\penalty0 (1):\penalty0 101--136, Jul 1979.
\newblock \doi{10.1016/0021-9991(79)90145-1}.

\bibitem[van Leer and Lo(2007)]{VanLeer:2007}
B.~van Leer and M.~Lo.
\newblock {A discontinuous Galerkin method for diffusion based on recovery}.
\newblock In \emph{18th AIAA Comput. Fluid Dyn. Conf.}, number AIAA 2007-4083,
  Miami, FL, 2007. American Institute of Aeronautics.
\newblock ISBN 1563478994.

\bibitem[van Leer and Nomura(2005)]{VanLeer:2005}
B.~van Leer and S.~Nomura.
\newblock {Discontinuous Galerkin for Diffusion}.
\newblock In \emph{17th AIAA Comput. Fluid Dyn. Conf.}, number AIAA 2005-5109,
  Toronto, Ontario, Canada, 2005. American Institute of Aeronautics.
\newblock ISBN 978-1-62410-053-6.
\newblock \doi{10.2514/6.2005-5108}.

\bibitem[Vencels et~al.(2016)Vencels, Delzanno, Manzini, Markidis, Peng, and
  Roytershteyn]{Vencels:2016}
J.~Vencels, G.~L. Delzanno, G.~Manzini, S.~Markidis, I.~B. Peng, and
  V.~Roytershteyn.
\newblock {SpectralPlasmaSolver}: a spectral code for multiscale simulations of
  collisionless, magnetized plasmas.
\newblock \emph{J.~Phys.~Conference Series}, 719:\penalty0 012022, may 2016.
\newblock \doi{10.1088/1742-6596/719/1/012022}.
\newblock URL \url{https://doi.org/10.1088%2F1742-6596%2F719%2F1%2F012022}.

\bibitem[{Wang} et~al.(2015){Wang}, {Hakim}, {Bhattacharjee}, and
  {Germaschewski}]{Wang:2015}
L.~{Wang}, A.~H. {Hakim}, A.~{Bhattacharjee}, and K.~{Germaschewski}.
\newblock {Comparison of multi-fluid moment models with particle-in-cell
  simulations of collisionless magnetic reconnection}.
\newblock \emph{Phys.~Plasmas}, 22\penalty0 (1):\penalty0 012108, Jan. 2015.
\newblock \doi{10.1063/1.4906063}.

\bibitem[Wang et~al.(2019)Wang, Hakim, Ng, Dong, and Germaschewski]{Wang:2019}
L.~Wang, A.~Hakim, J.~Ng, C.~Dong, and K.~Germaschewski.
\newblock Exact and locally implicit source term solvers for multifluid-maxwell
  systems.
\newblock 2019.
\newblock URL \url{https://arxiv.org/abs/1909.04125}.

\bibitem[Weibel(1959)]{Weibel:1959}
E.~S. Weibel.
\newblock Spontaneously growing transverse waves in a plasma due to an
  anisotropic velocity distribution.
\newblock \emph{Phys.~Rev.~Lett.}, 2:\penalty0 83--84, Feb 1959.
\newblock \doi{10.1103/PhysRevLett.2.83}.
\newblock URL \url{https://link.aps.org/doi/10.1103/PhysRevLett.2.83}.

\bibitem[Wettervik et~al.(2017)Wettervik, DuBois, Siminos, and
  F{\"u}l{\"o}p]{Wettervik:2017}
B.~S. Wettervik, T.~C. DuBois, E.~Siminos, and T.~F{\"u}l{\"o}p.
\newblock Relativistic vlasov-maxwell modelling using finite volumes and
  adaptive mesh refinement.
\newblock \emph{The European Phys.~J.~D}, 71\penalty0 (6):\penalty0 157, 2017.
\newblock \doi{10.1140/epjd/e2017-80102-2}.
\newblock URL \url{https://doi.org/10.1140/epjd/e2017-80102-2}.

\bibitem[{Wilson III} et~al.(2010){Wilson III}, {Cattell}, {Kellogg}, {Goetz},
  {Kersten}, {Kasper}, {Szabo}, and {Wilber}]{Wilson:2010}
L.~B. {Wilson III}, C.~A. {Cattell}, P.~J. {Kellogg}, K.~{Goetz}, K.~{Kersten},
  J.~C. {Kasper}, A.~{Szabo}, and M.~{Wilber}.
\newblock {Large-amplitude electrostatic waves observed at a supercritical
  interplanetary shock}.
\newblock \emph{J.~Geophys.~Res.}, 115:\penalty0 A12104, dec 2010.
\newblock \doi{10.1029/2010JA015332}.

\bibitem[{Wilson III} et~al.(2012){Wilson III}, {Koval}, {Szabo}, {Breneman},
  {Cattell}, {Goetz}, {Kellogg}, {Kersten}, {Kasper}, {Maruca}, and
  {Pulupa}]{Wilson:2012}
L.~B. {Wilson III}, A.~{Koval}, A.~{Szabo}, A.~{Breneman}, C.~A. {Cattell},
  K.~{Goetz}, P.~J. {Kellogg}, K.~{Kersten}, J.~C. {Kasper}, B.~A. {Maruca},
  and M.~{Pulupa}.
\newblock {Observations of electromagnetic whistler precursors at supercritical
  interplanetary shocks}.
\newblock \emph{Geophys.~Res.~Lett.}, 39:\penalty0 L08109, apr 2012.
\newblock \doi{10.1029/2012GL051581}.

\bibitem[{Wilson III} et~al.(2014{\natexlab{a}}){Wilson III}, {Sibeck},
  {Breneman}, {Le Contel}, {Cully}, {Turner}, {Angelopoulos}, and
  {Malaspina}]{Wilson:2014a}
L.~B. {Wilson III}, D.~G. {Sibeck}, A.~W. {Breneman}, O.~{Le Contel},
  C.~{Cully}, D.~L. {Turner}, V.~{Angelopoulos}, and D.~M. {Malaspina}.
\newblock {Quantified Energy Dissipation Rates in the Terrestrial Bow Shock: 1.
  Analysis Techniques and Methodology}.
\newblock \emph{J.~Geophys.~Res.}, 119\penalty0 (8):\penalty0 6455--6474, sep
  2014{\natexlab{a}}.
\newblock \doi{10.1002/2014JA019929}.

\bibitem[{Wilson III} et~al.(2014{\natexlab{b}}){Wilson III}, {Sibeck},
  {Breneman}, {Le Contel}, {Cully}, {Turner}, {Angelopoulos}, and
  {Malaspina}]{Wilson:2014b}
L.~B. {Wilson III}, D.~G. {Sibeck}, A.~W. {Breneman}, O.~{Le Contel},
  C.~{Cully}, D.~L. {Turner}, V.~{Angelopoulos}, and D.~M. {Malaspina}.
\newblock {Quantified Energy Dissipation Rates in the Terrestrial Bow Shock: 2.
  Waves and Dissipation}.
\newblock \emph{J.~Geophys.~Res.}, 119\penalty0 (8):\penalty0 6475--6495, sep
  2014{\natexlab{b}}.
\newblock \doi{10.1002/2014JA019930}.

\bibitem[Winters et~al.(2018)Winters, Moura, Mengaldo, Gassner, Walch, Peir\'o,
  and Sherwin]{Winters:2018}
A.~R. Winters, R.~C. Moura, G.~Mengaldo, G.~J. Gassner, S.~Walch, J.~Peir\'o,
  and S.~J. Sherwin.
\newblock {A comparative study on polynomial dealiasing and split form
  discontinuous Galerkin schemes for under-resolved turbulence computations}.
\newblock \emph{J.~Comp.~Phys.}, 372:\penalty0 1--21, 2018.
\newblock ISSN 0021-9991.
\newblock \doi{https://doi.org/10.1016/j.jcp.2018.06.016}.
\newblock URL
  \url{http://www.sciencedirect.com/science/article/pii/S0021999118303942}.

\bibitem[Wu and Qin(2018)]{Wu:2018}
W.~Wu and H.~Qin.
\newblock Reducing noise for pic simulations using kernel density estimation
  algorithm.
\newblock \emph{Phys.~Plasmas}, 25\penalty0 (10):\penalty0 102107, 2018.
\newblock \doi{10.1063/1.5038039}.
\newblock URL \url{https://doi.org/10.1063/1.5038039}.

\bibitem[Yee(1966)]{Yee:1966}
K.~S. Yee.
\newblock {Numerical Solution of Initial Boundary Value Problems involving
  Maxwell's Equations in Isotropic Media}.
\newblock \emph{IEEE Trans. Antenna Propaga.}, 14:\penalty0 302--307, 1966.

\bibitem[Yoon et~al.(2002)Yoon, Lui, and Sitnov]{Yoon:2002}
P.~H. Yoon, A.~T.~Y. Lui, and M.~I. Sitnov.
\newblock Generalized lower-hybrid drift instabilities in current-sheet
  equilibrium.
\newblock \emph{Phys.~Plasmas}, 9\penalty0 (5):\penalty0 1526--1538, 2002.
\newblock \doi{10.1063/1.1466822}.
\newblock URL \url{https://doi.org/10.1063/1.1466822}.

\bibitem[Zeiler et~al.(2002)Zeiler, Biskamp, Drake, Rogers, Shay, and
  Scholer]{Zeiler:2002}
A.~Zeiler, D.~Biskamp, J.~F. Drake, B.~N. Rogers, M.~A. Shay, and M.~Scholer.
\newblock Three-dimensional particle simulations of collisionless magnetic
  reconnection.
\newblock \emph{J.~Geophys.~Res.}, 107\penalty0 (A9):\penalty0 SMP 6--1--SMP
  6--9, 2002.
\newblock \doi{10.1029/2001JA000287}.
\newblock URL
  \url{https://agupubs.onlinelibrary.wiley.com/doi/abs/10.1029/2001JA000287}.

\end{thebibliography}

\end{document}